\documentclass[11pt,a4paper,twoside,openright,titlepage]{book}
\def\arxivmode{1}

\usepackage{amsmath}
\usepackage{amssymb}
\usepackage{amsthm}
\usepackage{array}
\usepackage[german,english]{babel}
\usepackage[backend=biber,style=bibstyle]{biblatex}
\usepackage{bm}
\usepackage{caption}
\usepackage{csquotes}
\usepackage{dsfont}
\usepackage[shortlabels]{enumitem}
\usepackage{fancyhdr}
\usepackage[a4paper,twoside,outer=2.5cm,inner=4.0cm,top=4.0cm,bottom=4.0cm,headheight=0.5cm]{geometry}
\usepackage{longtable}
\usepackage{mathrsfs}
\usepackage{mathtools}
\usepackage[sc]{mathpazo}
\usepackage{nicefrac}
\usepackage{subcaption}
\usepackage[most]{tcolorbox}
\usepackage{xcolor}
\usepackage{hyperref}
\usepackage[capitalize]{cleveref}
\usepackage{algpseudocode}
\usepackage{algorithm}

\addbibresource{refs.bib}
\DeclareFieldFormat[article]{citetitle}{\mkbibemph{#1}}
\DeclareFieldFormat[article]{title}{\mkbibemph{#1}}

\fancyhead[RE]{\slshape\leftmark}
\fancyhead[LO]{\slshape\rightmark}
\fancyhead[LE,RO]{}
\fancyfoot[LE,RO]{\thepage}
\fancyfoot[C]{}

\pagestyle{fancy}

\algdef{SE}[REPEAT]{Repeat}{EndRepeat}[1]{\algorithmicrepeat\ #1 \textbf{times}}{\algorithmicend}

\newtcbtheorem[number within=chapter]{theorem}{Theorem}
{
    label type=theorem,
    arc=0.5mm, 
    bottomtitle=0.5mm,
    boxrule=0mm,
    colbacktitle=black!20!white, 
    colback=black!5!white, 
    coltitle=black, 
    fonttitle=\bfseries, 
    left=2.5mm,
    right=3.5mm,
    toptitle=0.75mm, 
}{thm}
\newtcbtheorem[use counter from=theorem]{proposition}{Proposition}
{
    label type=proposition,
    arc=0.5mm, 
    bottomtitle=0.5mm,
    boxrule=0mm,
    colbacktitle=black!20!white, 
    colback=black!5!white, 
    coltitle=black, 
    fonttitle=\bfseries, 
    left=2.5mm,
    right=3.5mm,
    toptitle=0.75mm, 
}{prop}
\newtcbtheorem[use counter from=theorem]{lemma}{Lemma}
{
    label type=lemma,
    theorem style=plain,
    enhanced,
    arc=0.5mm, 
    boxrule=0mm,
    colframe=blue!50!black,
    colback=black!5!white, 
    coltitle=black, 
    fonttitle=\bfseries,
    left=2.5mm,
    right=3.5mm
}{lem}
\newtcbtheorem[use counter from=theorem]{corollary}{Corollary}
{
    label type=corollary,
    theorem style=plain,
    enhanced,
    arc=0.5mm, 
    boxrule=0mm,
    colframe=blue!50!black,
    colback=black!5!white, 
    coltitle=black, 
    fonttitle=\bfseries,
    left=2.5mm,
    right=3.5mm
}{cor}
\newtcbtheorem[use counter from=theorem]{definition}{Definition}
{
    label type=definition,
    theorem style=plain,
    enhanced,
    arc=0.5mm, 
    boxrule=0mm,
    colframe=blue!50!black,
    colback=black!5!white, 
    coltitle=black, 
    fonttitle=\bfseries,
    left=2.5mm,
    right=3.5mm
}{def}
\newtcbtheorem[use counter from=theorem]{remark}{Remark}
{
    label type=remark,
    theorem style=plain,
    breakable,
    enhanced,
    arc=0.5mm, 
    boxrule=0mm,
    colframe=blue!50!black,
    colback=black!5!white, 
    coltitle=black, 
    fonttitle=\bfseries,
    left=2.5mm,
    right=3.5mm
}{rem}

\setcounter{secnumdepth}{3}

\theoremstyle{plain}

\theoremstyle{definition}
\newtheorem{example}[tcb@cnt@theorem]{Example}
\NewCommandCopy{\proofqedsymbol}{\qedsymbol}

\AtBeginEnvironment{proof}{\renewcommand{\qedsymbol}{\proofqedsymbol}}
\AtBeginEnvironment{example}{%
  \pushQED{\qed}\renewcommand{\qedsymbol}{$\triangle$}%
}
\AtEndEnvironment{example}{\popQED\endexample}


\makeatletter
\newcommand*{\resetnamehighlights}{\let\nhcbx@highlightlist\@empty}
\resetnamehighlights
\newcommand*{\highlightnames}{%
  \@ifstar{\resetnamehighlights\highlightnames@add}{\highlightnames@add}}
\newcommand*{\highlightnames@add@inner}[2]{%
  \listeadd{#1}{\the\numexpr#2\relax}}
\newcommand*{\highlightnames@add}{%
  \forcsvlist{\highlightnames@add@inner\nhcbx@highlightlist}}
\newcommand*{\mkbibhighlightnthname}[1]{%
  \xifinlist{\the\value{listcount}}{\nhcbx@highlightlist}
    {\mkbibbold{#1}}
    {#1}}
\makeatother
\DeclareNameWrapperFormat{highlightnthname}{%
  #1}
\DeclareCiteCommand{\fullcite}
  {\usebibmacro{prenote}}
  {\usedriver
     {\DeclareNameWrapperAlias{author}{highlightnthname}%
      \DeclareNameWrapperAlias{editor}{highlightnthname}%
      \DeclareNameWrapperAlias{translator}{highlightnthname}%
      \DeclareNameWrapperAlias{sortname}{highlightnthname}}
     {\thefield{entrytype}}}
  {\multicitedelim}
  {\usebibmacro{postnote}}

\captionsetup{labelfont=bf,indention=0.5cm}


\newcommand*{\cA}{\mathcal{A}}
\newcommand*{\cB}{\mathcal{B}}
\newcommand*{\cD}{\mathcal{D}}
\newcommand*{\cE}{\mathcal{E}}
\newcommand*{\cF}{\mathcal{F}}
\newcommand*{\cG}{\mathcal{G}}
\newcommand*{\cH}{\mathcal{H}}
\newcommand*{\cI}{\mathcal{I}}
\newcommand*{\cJ}{\mathcal{J}}

\newcommand*{\cN}{\mathcal{N}}

\newcommand*{\cU}{\mathcal{U}}

\newcommand*{\lino}{\mathrm{L}} 
\newcommand*{\supo}{\mathrm{T}} 
\newcommand*{\herm}{\mathrm{H}} 
\newcommand*{\pos}{\mathrm{P}} 
\newcommand*{\uni}{\mathrm{U}} 
\newcommand*{\dops}{\mathrm{D}} 
\newcommand*{\hp}{\mathrm{HP}} 
\newcommand*{\cp}{\mathrm{CP}} 
\newcommand*{\tp}{\mathrm{TP}} 
\newcommand*{\hptp}{\mathrm{HPTP}} 
\newcommand*{\cptp}{\mathrm{CPTP}} 
\newcommand*{\cptni}{\mathrm{CPTN}} 
\newcommand*{\nonsig}{\mathrm{NS}} 
\newcommand*{\schan}{\mathrm{SC}} 
\newcommand*{\qi}{\mathrm{QI}} 
\newcommand*{\povm}{\mathrm{POVM}} 
\newcommand*{\ds}{S} 
\newcommand*{\fs}{\mathfrak{F}} 
\newcommand*{\qrt}{\mathfrak{O}} 
\newcommand*{\fqi}{\mathfrak{I}} 
\newcommand*{\fsc}{\mathfrak{S}} 
\newcommand*{\qids}{\mathrm{chan}^*} 
\newcommand*{\qichan}{\mathrm{chan}} 
\newcommand{\instr}[1]{\mathfrak{I}_{#1}} 
\newcommand*{\sep}{\mathrm{SEP}} 
\newcommand*{\ppt}{\mathrm{PPT}} 
\newcommand*{\lo}{\mathrm{LO}} 
\newcommand*{\locc}{\mathrm{LOCC}} 
\newcommand*{\sepc}{\mathrm{SEPC}} 
\newcommand*{\sepp}{\mathrm{SEPP}} 
\newcommand*{\pptc}{\mathrm{PPTC}} 
\newcommand*{\seppsc}{\mathrm{SEPPSC}} 
\newcommand*{\pptsc}{\mathrm{PPTSC}} 
\newcommand*{\ebc}{\mathrm{EBC}} 
\newcommand*{\mpc}{\mathrm{MPC}} 
\newcommand*{\ebpsc}{\mathrm{EBPSC}} 
\newcommand*{\stab}{\mathrm{STAB}} 
\newcommand*{\stabops}{\mathrm{SO}} 
\newcommand*{\csp}{\mathrm{CSP}} 
\newcommand*{\gammareg}{\overline{\gamma}} 
\newcommand*{\gammasreg}{\gamma^{\infty}} 
\newcommand*{\gammabb}{\gamma^{\mathrm{BB}}} 
\newcommand*{\maxlogneg}{LN_{\mathrm{max}}} 
\DeclareMathOperator{\im}{Im}
\DeclareMathOperator{\re}{Re}
\DeclareMathOperator{\conv}{conv}
\DeclareMathOperator{\aconv}{aconv}
\DeclareMathOperator{\spn}{span}
\DeclareMathOperator{\sgn}{sgn}
\DeclareMathOperator{\tr}{tr}
\DeclareMathOperator{\dimension}{dim}
\DeclareMathOperator{\diag}{diag}
\DeclareMathOperator{\cnot}{CNOT}
\DeclareMathOperator{\swap}{SWAP}
\DeclareMathOperator{\iswap}{iSWAP}
\DeclareMathOperator{\toffoli}{Toffoli}
\DeclareMathOperator{\unot}{UNOT}
\newcommand{\ceil}[1]{\lceil#1\rceil}

\newcommand{\id}{\mathds{1}}
\newcommand{\idchan}{\mathrm{id}}
\newcommand{\abs}[1]{\left\lvert#1\right\rvert}
\newcommand{\norm}[1]{\left\lVert#1\right\rVert}
\newcommand{\dnorm}[1]{\left\lVert#1\right\rVert_{\Diamond}}
\newcommand{\opnorm}[1]{\left\lVert#1\right\rVert_{\infty}}
\newcommand{\indsupo}[1]{\mathrm{Ad}\left[#1\right]} 
\newcommand{\choi}[1]{J_{#1}}

\providecommand{\ket}[1]{\left\lvert#1\right\rangle}
\providecommand{\bra}[1]{\left\langle#1\right\rvert}
\providecommand{\braket}[2]{\left\langle\smash{#1}\middle\vert\smash{#2}\right\rangle}
\providecommand{\braopket}[3]{\left\langle\smash{#1}\middle\vert\smash{#2}\middle\vert\smash{#3}\right\rangle}
\providecommand{\ketbra}[2]{\left\lvert \smash{#1}\middle\rangle\!\middle\langle\smash{#2}\right\rvert}
\providecommand{\proj}[1]{\ketbra{#1}{#1}}
\providecommand{\symplip}[2]{[\![ #1 , #2 ]\!]}
\newcommand{\transpose}{T}
\newcommand{\etal}{\emph{et al.}}
\newcommand{\loewnergeq}{\succcurlyeq}
\newcommand{\loewnerleq}{\preccurlyeq}
\newcommand{\walshhad}{\bm{\mathrm{W}}}
\newcommand{\reference}{Ref.}

\definecolor{CPlightgreen}{rgb}{0.55, 0.71, 0.0}
\definecolor{CPdarkgreen}{rgb}{0.0, 0.5, 0.0}
\definecolor{CPlightblue}{rgb}{0.0, 0.5, 1.0}
\definecolor{CPdarkblue}{rgb}{0.01, 0.28, 1.0}
\definecolor{CPlightred}{rgb}{1.0, 0.13, 0.32}
\definecolor{CPdarkred}{rgb}{0.8, 0.0, 0.0}
\definecolor{CPviolet}{rgb}{0.16, 0.11, 0.57}
\definecolor{CPyellow}{rgb}{1.0, 0.88, 0.21}
\definecolor{CPorange}{rgb}{0.93, 0.57, 0.13}

\begin{document}

\begin{titlepage}
  \vspace*{-2.5cm}
  \begin{center}
    {\large DISS.~ETH NO.~30914 \par}
    \vspace*{2.0cm}
    {\huge {\textbf{Simulating quantum circuits with restricted quantum computers}} \par}
     \vspace*{2cm}
     {\large A thesis submitted to attain the degree of \par}
     \vspace*{1cm}
     {\large DOCTOR OF SCIENCES \\ (Dr. sc. ETH Zurich) \par}
     \vspace*{2cm}
     {\large presented by \par}
     \vspace*{1cm}
     {\large Christophe Piveteau \par}
     \vspace*{1cm}
     {\large born on 23.10.1996 \par}
     \vspace*{2cm}
     {\large accepted on the recommendation of \par}
     \vspace*{1.0cm}
     {\large
     Prof. Dr. Renato Renner, examiner \\
     Dr. Joseph M. Renes, co-examiner \\
     Prof. Dr. David Gross, co-examiner \\
     Prof. Dr. Aram Harrow, co-examiner
     \par}
     \vspace*{2.6cm}
     {\large 2025 \par}
  \end{center}
\end{titlepage}

\chapter*{Acknowledgements}
I would like to thank my doctoral advisor Renato Renner for granting me the outstanding opportunity to perform my doctoral studies in the quantum information theory group at ETH Zurich.
The extraordinary and highly stimulating environment has been instrumental for my academic development and has sharpened my scientific skills over the last four years.
I especially appreciate the freedom I was given to work on a wide variety of different topics of my choice.

I would like to convey special thanks to Joe Renes who has been a phenomenal supervisor.
No matter the circumstances, he always found time for me, and he always supported me in every way necessary. 
I enjoyed our plentiful discussions about science and non-scientific topics, which have made my doctoral studies so much more enjoyable.
Joe is an amazing researcher to collaborate with, and he is a great inspiration for the kind of scientist I aspire to become.
Furthermore, I would like to express my deepest gratitude to David Sutter and Christopher Chubb, who both served as mentors to me. 
I am truly thankful for their guidance that has helped me navigate the scientific and non-scientific aspects of academic life.

I also want to thank my other collaborators Lukas Schmitt, Lukas Brenner, Tho\-mas Dubach, Stefan Wörner, Kristan Temme, Sergey Bravyi and Jay Gambetta for the many fascinating research projects that I could partake in.
My gratitude also extends to the present and past members of the quantum information theory group for the supportive and friendly atmosphere.
I especially appreciate the numerous and highly productive discussions with Lukas Schmitt and David Sutter that were instrumental in the development of this thesis.

I also want to thank both of my sisters as well as my parents for their constant support and encouragement.
Finally, and most importantly, I would like to thank my partner Anna for her unwavering support and the willingness to put up with the ups and downs that are an inevitable part the PhD journey.
I am immensely grateful to have shared this period of my life with her.
\vspace*{\fill}
\begin{flushright}
Christophe Piveteau\\
Zurich, January 2025
\end{flushright}

\chapter*{Abstract}
It is one of the most fundamental objectives in quantum information science to understand the boundary between the computational power of classical and quantum computers.
One possible avenue to explore this boundary is to identify classes of quantum circuits that can be efficiently simulated on a classical computer.
In recent years, this idea has been extended one step further:
Instead of simulating a general quantum circuit with a classical device, new schemes have emerged to simulate them on a \emph{quantum} device that is restricted in some manner.
As such, these techniques allow us to study how the restrictions impact the computational power of the device.
One such technique is called \emph{quasiprobability simulation} (QPS) and it estimates the result of a quantum circuit with a Monte Carlo procedure that randomly replaces circuit elements with ones that can be executed on the restricted quantum device.

The main focus of this thesis is dedicated to the QPS-based simulation of nonlocal quantum computation using local quantum operations.
On the practical side, this enables the simulation of large quantum circuits using multiple smaller quantum devices --- a procedure that is sometimes called \emph{circuit knitting}.
We uncover a rich mathematical formalism with many connections to the resource theory of entanglement.
We characterize the optimal simulation overhead for a broad range of practically relevant nonlocal states and channels and we explicitly provide achieving protocols.
Moreover, we also investigate the utility of classical communication between the local parties.
Our results address both the single-shot and asymptotic regime.

Furthermore, this thesis also presents a comprehensive overview of recent developments of QPS.
Besides the simulation of nonlocal computation, we also investigate the simulation of magic computation with a Clifford device, the simulation noise-free computation with a noisy device as well as the simulation of nonphysical operations.
We frame QPS in a quantum resource theoretic framework, which highlights similarities that arise in the different instantiations of the technique.
Furthermore, we study the importance of classical side information in the QPS procedure and how it impacts the overhead and expressibility of QPS.

Overall, this thesis provides a self-contained introduction for researchers interested in learning about circuit knitting and QPS.

\begin{otherlanguage}{german}
\chapter*{Zusammenfassung}
Es ist eines der grundlegendsten Ziele der Quanteninformationswissenschaft, die Grenze zwischen der Rechenleistung klassischer und Quantencomputer zu verstehen. Ein möglicher Ansatz, diese Grenze zu erforschen, besteht darin, Klassen von Quantenschaltkreisen zu identifizieren, die effizient auf einem klassischen Computer simuliert werden können. In den letzten Jahren wurde diese Idee um einen Schritt erweitert: Anstatt einen allgemeinen Quantenschaltkreis mit einem klassischen Gerät zu simulieren, sind neue Ansätze entstanden, um diese auf einem \emph{eingeschränkten Quantencomputer} zu simulieren. Solche Techniken erlauben es uns, zu untersuchen, wie diese Einschränkungen die Rechenleistung des Geräts beeinflussen. Eine solche Technik nennt sich \emph{quasiprobability simulation} (QPS), bei der das Ergebnis eines Quantenschaltkreises mithilfe eines Monte-Carlo-Verfahrens geschätzt wird, das Schaltelemente zufällig durch solche ersetzt, die auf dem eingeschränkten Quantencomputer ausführbar sind.

Der Hauptfokus dieser Arbeit liegt auf der QPS-basierten Simulation nichtlokaler Quantenrechnungen unter Verwendung lokaler Quantenoperationen. Auf der praktischen Seite ermöglicht dies die Simulation grosser Quantenschaltkreise mithilfe mehrerer kleinerer Quantencomputer – ein Verfahren, das manchmal als \emph{circuit knitting} bezeichnet wird. Wir entdecken einen reichen mathematische Formalismus mit zahlreichen Verbindungen zur Ressourcentheorie der Verschränkung. Wir charakterisieren den optimalen Simulationsaufwand für eine breite Palette praktisch relevanter nichtlokaler Zustände und Kanäle und stellen explizite Protokolle vor, die diesen Simulationsaufwand erreichen. Darüber hinaus untersuchen wir den Nutzen klassischer Kommunikation zwischen den lokalen Parteien. Unsere Ergebnisse behandeln sowohl das single-shot als auch das asymptotische Regime.

Des Weiteren bietet diese Arbeit einen umfassenden Überblick über aktuelle Entwicklungen im Bereich der QPS. Neben der Simulation nichtlokaler Berechnungen untersuchen wir auch die Simulation von ``magischen'' Berechnungen mit einem Clifford-Computer, die Simulation rauschfreier Berechnungen mit einem fehlerbehafteten Gerät sowie die Simulation nichtphysikalischer Operationen. Wir rahmen QPS in einem ressourcentheoretischen Framework ein, das die Gemeinsamkeiten zwischen den verschiedenen Varianten dieser Technik hervorhebt. Ausserdem untersuchen wir die Bedeutung klassischer Zusatzinformationen im QPS-Verfahren und deren Einfluss auf den Simulationsaufwand und die Ausdruckskraft von QPS.

Insgesamt bietet diese Arbeit eine Einführung für Forschende, die sich für circuit knitting und QPS interessieren.
\end{otherlanguage}

\setcounter{tocdepth}{3}
\tableofcontents

\chapter{Introduction}
Over the last few decades, the study of quantum information theory has produced remarkable insights into how the physical laws of quantum mechanics impact the fundamental limits of information storage, processing and transmission.
The perhaps most prominent instance is the concept of quantum computing, which harnesses the quantum mechanical nature of quantum bits (qubits) to perform calculations in a fundamentally different way than a conventional classical computer.
The construction of a large-scale fault-tolerant quantum computer is one of the greatest scientific and technological challenges of our time, with vast efforts across academia and industry currently dedicated to this goal.

It is a widely held belief that quantum computers can solve certain problems exponentially faster than their classical counterparts.
Most famously, Peter Shor discovered a quantum algorithm in 1994 that efficiently computes the prime factorization of a number~\cite{shor1997_polynomial}.
This task is believed to be so difficult for classical computers that many cryptographic protocols fundamentally rely on its hardness.
Since Shor's discovery, many other tasks have been found with presumed polynomial or superpolynomial speedups.

To better understand the boundary between the computational power of classical and quantum computing, there are broadly two approaches that can be taken.
On one hand, one can attempt to find evidence for the hardness of simulating certain classes of quantum algorithms.
Efforts in this direction include results on boson sampling~\cite{aaronson2011_computational} and instantaneous quantum polynomial (IQP) circuits~\cite{bremner2011_classical,bremner2016_average}.
On the converse side, one can identify classes of quantum circuits that can be efficiently simulated on classical computers.
The prime example thereof is the famous Gottesman-Knill theorem, which states that any quantum circuit involving only Clifford operations can be efficiently classically simulated~\cite{gottesman1998_heisenberg,aaronson2004_improved}.
This result can be extended to quantum circuits containing only a small (i.e. logarithmic in system size) number of non-Clifford gates~\cite{pashayan2015_estimating,bravyi2016_improved}.
Other efficiently simulatable classes include, for example, circuits with low entanglement~\cite{vidal2003_efficient} and matchgate circuits~\cite{terhal2002_classical,valiant2001_quantum,josza2008_matchgates}.
These various insights provide us with an understanding of certain features that must be present in a quantum algorithm to be out of reach for classical computers.

In this thesis, we will consider an extension of this idea that takes it one step further.
Instead of simulating a quantum computer with a classical computer, we will consider the task of simulating it with some notion of a ``restricted'' quantum computer.
The utility of this line of work is twofold.
On one hand, it provides us with a more fine-grained understanding of the separation of classical and quantum computers by studying intermediate computational models that lie between the two.
On the other hand, these simulation algorithms can potentially be of practical interest from a technological point of view.
Near-term quantum devices will likely suffer from severe limitations, such as high noise levels and a limited number of (logical) qubits, for the foreseeable future.
Understanding how these restrictions affect their computational power is of significant interest.
More concretely, these simulation algorithms can identify classes of quantum circuits that can already be simulated on near-term devices.
Quantum circuits in this class, which also lie outside the reach of classical computers, constitute potential candidates for demonstrating a quantum advantage.

Our main workhorse in this thesis will be a technique which we refer to as \emph{quasiprobability simulation} (QPS).
Its roots lie in the simulation of near-Clifford quantum circuits with classical computers, but in recent years, it has been increasingly realized that it can also be extended to simulate quantum circuits with quantum computers.
Consider a restricted quantum computer that can only realize some limited non-universal ``instruction set'' $\ds$, i.e., the circuits that our device can execute may only contain operations that lie in $\ds$.
To simulate a circuit that contains a non-supported operation $\cE\notin\ds$, we first need to find a so-called \emph{quasiprobability decomposition} (QPD)
\begin{equation}\label{eq:qpd}
  \cE = \sum_{i=1}^m a_i \cF_i \quad\quad\text{for some}\quad a\in\mathbb{R}^m, \cF_i\in\ds \, .
\end{equation}
Given this QPD, we can simulate $\cE$ by probabilistically replacing it with the operations $\cF_i$ and appropriately post-processing the final measurement outcome of the circuit, in a fashion reminiscent of quantum Monte Carlo algorithms.
This allows for an unbiased estimation of the expectation value of an observable at the end of the circuit, albeit with an increased sampling overhead.
The increased number of required shots is characterized by the 1-norm $\norm{a}_1=\sum_i \abs{a_i}$ of the QPD.
For a circuit with $m$ instances of the unsupported gate $\cE$, the number of circuit executions is increased by $\norm{a}_1^{2m}$.
Clearly, when performing QPS, one should use the best possible QPD that exhibits the smallest possible overhead.
This smallest achievable 1-norm is such an important quantity, that we endow it with its own symbol $\gamma_{\ds}(\cE)$, and we call it the \emph{quasiprobability extent} of $\cE$ w.r.t. $\ds$.

In this thesis, we will explore four different notions of a restricted quantum computer, which essentially amount to choosing different decomposition sets $\ds$.
In each setting, one of the central objectives will be to characterize the corresponding quasiprobability extent $\gamma_{\ds}$.
\begin{description}
  \item[Simulating nonlocal computation with local operations]
  In this instance of QPS, we consider the simulation of operations $\cE$, which act nonlocally across two parties, using only local operations (and possibly classical communication).
  This can be used to simulate nonlocal entangling gates across two isolated quantum computers that cannot exchange any quantum information.
  Put differently, we will investigate the overhead for simulating large-scale quantum computation using smaller computers of bounded size.
  The sampling overhead is exponential in the amount of nonlocal interaction between the individual parties.

  On a more practical note, this setting has potential technological implications.
  Near-term quantum computers are likely to suffer from a small number of available (logical) qubits.
  If a desired quantum circuit is almost local, we can simulate it using two or more smaller quantum devices in a process sometimes called \emph{circuit knitting}.
  This can be practically useful, provided that the smaller local quantum computations are out of reach of classical simulation.

  \item[Simulating noise-free computation with a noisy quantum computer]
  The most significant obstacle in experimentally realizing large-scale robust quantum computers is noise, which leads to erroneous results in the computation.
  Due to decoherence from interactions with the environment, quantum information is inherently more prone to noise than its classical counterpart.
  The universally accepted approach to remedy effects of noise and decoherence in quantum computing is the use of \emph{quantum error correction} (QEC)~\cite{shor1995_scheme,steane1996_error,calderbank1996_good}.
  The celebrated threshold theorem guarantees that errors in a quantum computation can be suppressed efficiently to any level for arbitrarily long circuits, provided that the physical noise rate lies below some threshold value~\cite{shor1996_faulttolerant,knill1998_resilient,aharonov2008_faulttolerant}.
  The low noise thresholds and the qubit overhead needed to implement error correction protocols are extremely challenging.

  As an alternative approach to remove noise on an imperfect computer that doesn't fulfill the stringent requirements of QEC, one can utilize QPS to simulate ideal noise-free gates\footnote{In fact, one could also view QEC itself as a simulation algorithm of noise-free computation using a noisy quantum computer.}.
  In this case, the decomposition set $\ds$ contains noisy operations that the device can physically execute.
  This is an instance of a \emph{quantum error mitigation} technique, a family of recently proposed methods to remove the effect of noise on near-term devices in a hardware-friendly manner.
  QPS-based error mitigation has been a central component for some of the most impressive hardware demonstrations in recent times~\cite{kim2023_evidence}.

  \item[Simulating magic computation with a Clifford quantum computer]
  In this setting, we consider the decomposition set $\ds$ which only contains Clifford operations.
  Note that such a quantum computer is essentially equally powerful to a classical computer, due to the Gottesman-Knill theorem.
  Clifford operations are not universal and must be complemented with additional non-classical (also called ``magic'') states or channels to achieve universality.

  Non-Clifford operations, such as the T gate, can be simulated using QPS.
  This results in a classical simulation algorithm for Clifford+T circuits with a runtime that is polynomial in the circuit size, but exponential in the number of T gates.

  \item[Simulating non-physical computation]
  Finally, we will also consider QPS w.r.t. the ``maximal'' decomposition set $\ds$ which contains all physically implementable (i.e. CPTP) maps.
  We will characterize the quasiprobability extent $\gamma_{\cptp}(\cE)$ of non-physical maps $\cE$ that may not be completely positive or trace-preserving.
  In a sense, this can be understood as simulating a computer that is even more powerful than a full-blown ideal quantum computer.
  Interestingly, this can have certain practical applications, which we will briefly explore.
\end{description}
The four settings above amount to choosing different decomposition sets $\ds$, and for each we will strive to evalute the quasiprobability extent for some target operation $\cE$ that lies outside $\ds$.
This setup is very reminiscent of a \emph{quantum resource theory} (QRT) which aims to characterize a certain quantum phenomenon by dividing the set of all quantum states and channels into ``free'' and ``resourceful'' ones.
If a QRT captures some quantum phenomenon that is difficult to realize for a quantum computer, then it is very natural to pick $\ds$ to be the free channels of that theory.
In fact, the simulation of nonlocal and magic computation that we outlined above precisely correspond to the QRTs of entanglement and magic.
Arguably, the simulation of non-physical operations can also be attributed to the trivial resource theory where all states and channels are free.
In this thesis, we will explore this connection between QPS and QRTs.
We will observe how certain techniques and properties involving the quasiprobability extent can be traced back to more general features of the underlying QRT.

One might naturally assume that the decomposition set $\ds$ should always be chosen to be a subset of the physical quantum channels, i.e., only contain CPTP maps.
However, this cannot capture the possibility of having intermediate measurements that produce classical side information which is then included in the QPS post-processing.
For instance, it is possible to use completely positive trace-non-increasing maps as part of $\ds$ by weighting a sample with zero if a certain measurement outcome occurs.
More generally, it is even possible to simulate non-completely positive maps using negative weights.
In much of the QPS literature, this trick is not utilized, or at least not to its full extent.
In many instances, the implications of using classical side information were previously poorly understood.
We aim to rectify this shortcoming in this thesis by introducing a general framework that captures the utilization of classical side information in its full generality.
Particular emphasis is placed on understanding in which settings the side information provides an advantage in terms of the sampling overhead of QPS.

Another aspect that we will investigate in this thesis is the asymptotic cost of performing QPS.
As we will see, the quasiprobability extent is sub-multiplicative $\gamma_{\ds}(\cE\otimes \cF)\leq \gamma_{\ds}(\cE)\gamma_{\ds}(\cF)$ and in many instances even \emph{strictly} sub-multiplicative.
This has direct practical consequences, as simulating multiple parallel operations together can be strictly cheaper than simulating them individually.
As such, we will be interested in the regularized extent $\smash{\scriptstyle\gammareg_{\ds}(\cE) \coloneqq \lim\limits_{n\rightarrow\infty} \gamma_{\ds}(\cE^{\otimes n})^{1/n}}$.
Interestingly, an even lower asymptotic cost can be achieved by considering some \emph{approximate} notion of QPS with a vanishing asymptotic error $\smash{\scriptstyle\gammasreg_{\ds}(\cE) \coloneqq \lim\limits_{\epsilon\rightarrow 0^+}\liminf\limits_{n\rightarrow\infty} \gamma_{\ds}^{\epsilon}(\cE^{\otimes n})^{1/n}}$.
Here, $\gamma_{\ds}^{\epsilon}$ is the \emph{smoothed} quasiprobability extent which captures the minimal sampling overhead if we allow for an error of at most $\epsilon$ in the QPS.
Evaluating $\gammareg_{\ds}$ is generally very difficult, and $\gammasreg_{\ds}$ even more so.
Nevertheless, we will remarkably find some explicit examples where $\gammareg_{\ds}$ is strictly larger than $\gammasreg_{\ds}$, meaning that approximate QPS is asymptotically more efficient.

We briefly highlight that the largest focus of this thesis will be on the QPS of non-local operations.
This theory exhibits a rich mathematical structure and we establish many connections to previously known results in the study of entanglement.
We investigate the importance of classical communication between the smaller quantum devices for the task of circuit knitting and explicitly characterize the quasiprobability extent for a wide range of nonlocal states and channels, both in the single-shot and asymptotic regime.

\section{A brief history of quasiprobability simulation}
The notion of quasiprobability distribution has a long history in quantum physics.
The Wigner function, introduced by Eugene Wigner in 1932, is a widely used representation of quantum states of continuous-variable systems.
It can be understood as a distribution on the phase space of the system, though it can also exhibit negative values (hence the \emph{quasi} in quasiprobability).
The negativity in the Wigner function is strongly linked to non-classical properties, such as nonlocality~\cite{anatole2004_negativity}.
As such, it is a natural question to ask which states are ``classical'' in the sense that they exhibit a non-negative Wigner function.
This problem is solved by Hudson's theorem, which states that a pure state has a non-negative Wigner function if and only if it is a Gaussian state.

Following the rise of quantum information theory, considerable interest has been directed towards studying the finite-dimensional counterparts of the Wigner function.
In particular, odd-dimensional qudit systems exhibit a well-behaved Wigner function which is non-negative for a pure state if and only if it is a stabilizer state~\cite{gross2006_hudson}.
Stabilizer states are thus endowed with some notion of being ``classical'', analogous to Gaussian states.
This insight can be translated to the setting of computation, as it was soon realized that any odd-dimensional qudit quantum circuit can be efficiently classically simulated, as long as its elements (initial states, gates and measurements) exhibit non-negative Wigner functions~\cite{veitch2012_negative,mari2012_positive}.
This can be considered a generalization of the Gottesman-Knill theorem~\cite{gottesman1998_heisenberg,aaronson2004_improved}.

In a seminal work in \reference~\cite{pashayan2015_estimating}, Pashayan, Wallman and Bartlett extended this result and demonstrated a classical simulation algorithm for quantum circuits where the runtime is governed by the amount of negativity in the Wigner function of the circuit elements\footnote{This negativity is related to the magic monotone called \emph{mana}, introduced in \reference~\cite{veitch2014_resource}.}.
This can be considered the first instance of QPS, though it relies on the Wigner representation of the circuit elements (or more generally a representation w.r.t. to some fixed frame and dual frame), which has no well-behaved analog for qubits.
Only later work by Howard and Campbell applied QPS in the qubit setting~\cite{howard2017_application} by representing magic states as a QPD of stabilizer states.
As such, they defined and investigated the quasiprobability extent of states for the first time.
Shortly thereafter, Bennink \etal~proposed a generalization to simulate non-Clifford qubit channel with stabilizer channels~\cite{bennink2017_unbiased}, providing the first qubit instance of QPS of channels (rather than states).

Until that point, all discussed references exclusively focused on \emph{classical} simulation of quantum circuits.
In \reference~\cite{temme2017_errormitigation}, Temme, Bravyi and Gambetta came to the pivotal insight that the QPS technique can also be used to simulate quantum circuits using a quantum computer itself.
More precisely, they considered using QPS to simulate the noise-free execution of a circuit with an imperfect noisy device.
Since then, other such applications have been proposed and explored, such as the simulation of non-local computation~\cite{mitarai2021_constructing,piveteau2024_circuit}
and unphysical computation~\cite{jiang2021_physical,regula2021_operational}.
They will be introduced in extended details throughout the thesis.

\ifdefined\arxivmode
\section{Outline}
The preliminaries in~\Cref{chap:preliminaries} can be read as needed and are mostly intended to specify the notation and definitions used throughout the thesis.
In~\Cref{chap:qpsim}, we formally introduce QPS and the general framework for including classical side information.
We also study various properties of the quasiprobability extent, especially in the context of general QRTs.
While~\Cref{chap:qpsim} lays the crucial foundation for this thesis, the subsequent ~\Cref{chap:nonphysical,chap:nonlocal,chap:magic,chap:noisefree} can in principle be read independently and in any order, as they treat four separate applications of QPS.
We summarize some important insights from this thesis as well as open questions in~\Cref{chap:conclusion}.

In the author's opinion, the most important contributions of this thesis lie in~\Cref{chap:nonlocal}.

\else 

\section{Outline and suggested reading order}
The preliminaries in~\Cref{chap:preliminaries} can be read as needed and are mostly intended to specify the notation and definitions used throughout the thesis.
In~\Cref{chap:qpsim}, we formally introduce QPS and the general framework for including classical side information.
We also study various properties of the quasiprobability extent, especially in the context of general QRTs.
While~\Cref{chap:qpsim} lays the crucial foundation for this thesis, the subsequent ~\Cref{chap:nonphysical,chap:nonlocal,chap:magic,chap:noisefree} can in principle be read independently and in any order, as they treat four separate applications of QPS.
We summarize some important insights from this thesis as well as open questions in~\Cref{chap:conclusion}.

In the author's opinion, the strongest contributions of this thesis lie in~\Cref{chap:nonlocal}.
As such, we suggest following reading order for a reader that is interested in the most important results.
\begin{itemize}[noitemsep]
  \item \Cref{sec:motivational_example,sec:qpsim} for an introduction to QPS (\Cref{sec:motivational_example} can be skipped if the reader has previous experience with QPS).
  \item \Cref{sec:qps_intermediate_measurements}, which is crucial to establish the formalism of QPS with classical side information.
  \item The introduction of~\Cref{sec:gamma_qrt} as well as the paragraph \emph{Asymptotic simulation cost}.

  In general, it might also be useful to briefly skim through the~\Cref{sec:qps_basic_properties,sec:gamma_qrt}.
  These results will be regularly referred to from subsequent chapters and can also be read when needed.
  \item The introduction of~\Cref{chap:nonlocal}.
  \item \Cref{sec:gate_cut_side_info}.
  \item \Cref{sec:gamma_nonlocal_states}.
  \item \Cref{sec:gamma_nonlocal_channels} up to (and including) \Cref{sec:gamma_kaklike}.
  \item \Cref{sec:submult}.
  \item The introduction of \Cref{sec:timelike_cuts}.
  \item \Cref{sec:wire_cut_side_info}.
  \item \Cref{sec:id_wire_cut}.
\end{itemize}
After~\Cref{chap:nonlocal}, we suggest continuing to~\Cref{chap:nonphysical}.

\fi 

\section{Overview of contributions and published papers}
To provide a cohesive treatment of the subject, this thesis contains not solely contributions by the author, but also some results from other researchers.
Every main chapter in this thesis includes an ``overview of contribution'' section which precisely outlines which parts are due to the author which are not.
The current section serves as an additional brief overview of where the main contributions are located.
We also highlight that this thesis contains a large amount of novel contributions which have not been published before, since they were either deemed not interesting enough to warrant writing a paper or because they will be included in an upcoming publication.

In~\Cref{chap:qpsim}, the main novelty lies in the establishment of a general framework to describe QPS with classical side information.
We briefly summarize the contributions in the subsequent four chapters, in decreasing order of importance.
\begin{itemize}[noitemsep]
  \item Almost the entirety of~\Cref{chap:nonlocal} is novel work by the author, with a large part being previously unpublished.
  \item QPDs of nonphysical operations, which are treated in~\Cref{chap:nonphysical}, were originally introduced by the author.
  However, the chapter also contains more advanced results which were subsequently found by other researchers.
  \item The general discussion of QEM in~\Cref{chap:noisefree} summarizes various techniques in the field that were not introduced by the author.
  However, the~\Cref{sec:qemqec} is dedicated to the author's publication in \reference~\cite{piveteau2021_errormitigation} discussing the interplay between QEM and QEC.
  \item Finally, \Cref{chap:magic} contains almost no contributions by the authors and is mostly included for completeness.
\end{itemize}
This thesis includes the content of following publications by the author.
\begin{itemize}[noitemsep]
  \item \cite{piveteau2024_circuit}: \highlightnames*{1}\fullcite{piveteau2024_circuit}
  \item \cite{brenner2023_optimal}: \highlightnames*{2}\fullcite{brenner2023_optimal}
  \item \cite{schmitt2024_cutting}: \highlightnames*{2}\fullcite{schmitt2024_cutting}
  \item \cite{piveteau2021_errormitigation}: \highlightnames*{1}\fullcite{piveteau2021_errormitigation}
\end{itemize}
The following publication is also about QPS and its application to QEM.
\begin{itemize}[noitemsep]
  \item \cite{piveteau2022_quasiprobability}: \highlightnames*{1}\fullcite{piveteau2022_quasiprobability}
\end{itemize}
As the main result is somewhat heuristic in nature and doesn't fit nicely with the rest of the thesis, we refrain from introducing it in detail.
However, some of its technical results will be included where relevant.

Finally, we note three further publications by the author in the domain of quantum error correction.
They are not covered in this thesis in any form.
\begin{itemize}[noitemsep]
  \item \cite{piveteau2022_bpqm}: \highlightnames*{1}\fullcite{piveteau2022_bpqm}
  \item \cite{pfister2022_bpqm}: \highlightnames*{2}\fullcite{pfister2022_bpqm}
  \item \cite{piveteau2024_tnd}: \highlightnames*{1}\fullcite{piveteau2024_tnd}
\end{itemize}
\resetnamehighlights


\chapter{Preliminaries}\label{chap:preliminaries}
In this chapter, we briefly summarize the notation and conventions that we use throughout this manuscript.
Furthermore, we review some crucial prerequisites from basic quantum theory, quantum computing and quantum resource theory.
An experienced reader may directly skip to~\Cref{chap:qpsim}.

\section{Mathematical notation}
Throughout this manuscript, we will always restrict ourselves to finite-dimensional quantum systems.
A list of used mathematical symbols is depicted in~\Cref{tab:notation}.
\begin{center}
\begin{longtable*}{p{3.0cm}p{10.5cm}}
  \multicolumn{2}{l}{\textbf{General}} \\
  \hline
  $\mathbb{C},\mathbb{R},\mathbb{N}$& complex, real and natural numbers where $\mathbb{N}$ starts with $1$ \\
  $A,B,C,\dots$         & typical names for finite-dimensional quantum systems \\
  $\cH_A,\cH_B,\cH_C,\dots$ & Hilbert spaces corresponding to quantum systems $A,B,C,\dots$ \\
  $AB$                  & joint quantum system corresponding to $\cH_A\otimes\cH_B$ \\
  $\dimension(A)$       & dimension of Hilbert space $\cH_A$ \\
  \footnotesize{$\spn_{\mathbb{R}}(X),\spn_{\mathbb{C}}(X)$}& span of set $X$ over the real and complex number fields \\
  $\sgn(x)$             & sign function $\sgn(x)=1$ for $x\geq 0$ and $\sgn(x)=-1$ for $x<0$ \\
  $\log(x)$             & logarithm of $x$ to the base $2$ \\
  $\ln(x)$              & natural logarithm of $x$ \\
  $\re[z],\im[z]$       & real and imaginary parts of complex number $z\in\mathbb{C}$ \\
  $\conv(X)$            & convex hull of the set $X$ \\
  $\aconv(X)$           & absolute convex hull of the set $X$ \\
  $\diag(\lambda)$      & diagonal $n\times n$ matrix with entries $\lambda\in\mathbb{C}^n$ \\
 \hline
\end{longtable*}
\begin{longtable*}{p{3.0cm}p{10.5cm}}
  \multicolumn{2}{l}{\textbf{Operators}} \\
  \hline
  $\lino(A,B)$          & set of linear operators from $A$ to $B$ \\
  $\herm(A)$            & set of Hermitian operators on $A$ \\
  $\pos(A)$             & set of positive semi-definite operators on $A$ \\
  $\uni(A)$             & set of unitary operators on $A$ \\
  $\dops(A)$            & set of density operators on $A$ \\
  $\rho,\sigma,\tau$    & typical names for density operators \\
  $\supo(A\rightarrow B)$& set of linear superoperators $\lino(\lino(A),\lino(B))$ \\
  $\cE,\cF,\cG$         & typical names for superoperators \\
  $\hp(A\rightarrow B)$ & set of Hermitian-preserving superoperators from $A$ to $B$ \\
  $\cp(A\rightarrow B)$ & set of completely positive superoperators from $A$ to $B$ \\
  $\tp(A\rightarrow B)$ & set of trace-preserving superoperators from $A$ to $B$ \\
  $\hptp(A\rightarrow B)$ & set of Hermitian-preserving trace-preserving superoperators from $A$ to $B$ \\
  $\cptp(A\rightarrow B)$ & set of quantum channels from $A$ to $B$ \\
  $\cptni(A\rightarrow B)$ & set of completely positive trace-non-increasing maps from $A$ to $B$ \\
  $\qi(A\rightarrow B)$ & set of quantum instruments from $A$ to $B$ \\
  $\povm(A)$            & set of POVMs on $A$ \\
  $\cI,\cJ$             & typical names for quantum instruments \\
  $\schan(A,B\rightarrow C,D)$     & set of superchannels from $\cptp(A\rightarrow B)$ to $\cptp(C\rightarrow D)$ \\
  $\id_A$               & identity map from $A$ to $A$ \\
  $\idchan_A$           & identity channel on $A$ (map from $\lino(A)$ to $\lino(A)$) \\
  $\choi{\cE}$          & Choi operator of superoperator $\cE\in\supo(A\rightarrow B)$ \\
  $O^{\dagger}$         & conjugate transpose of operator $O$ \\
  $\abs{O}$             & absolute value $\sqrt{O^{\dagger}O}$ of operator $O$ \\
  $O_1\otimes O_2$      & tensor product of operators $O_1$ and $O_2$ \\
  $\tr[O]$              & trace of operator $O$ \\
  $\tr_A[O]$            & partial trace of operator $O$ over subsystem $A$ \\
  $\transpose_A$        & transpose map on system $A$ \\
  $\indsupo{O}$         & linear superoperator $X\mapsto OXO^{\dagger}$ induced by operator $O$ \\
  $\langle O_1,O_2\rangle$ & Hilbert-Schmidt inner product $\frac{1}{\dimension(A)}\tr[O_1^{\dagger}O_2]$ for $O_i\in\lino(A)$ \\
  $O_1\loewnergeq O_2$  & Loewner order on Hermitian operators $O_1,O_2\in\herm(A)$ \\
  \hline
\end{longtable*}
\begin{longtable*}{p{3.0cm}p{10.5cm}}
  \multicolumn{2}{l}{\textbf{Norms}} \\
  \hline
  $\norm{x}_p$          & $p$-norm $\left(\sum_i\abs{x_i}^p\right)^{1/p}$ of vector $x\in\mathbb{C}^n$ \\
  $\norm{O}_p$          & Schatten $p$-norm of operator $O\in\lino(A,B)$ \\
  $\opnorm{O}$          & spectral norm of operator $O\in\lino(A,B)$ \\
  $\dnorm{\cE}$         & diamond norm of superoperator $\cE\in\supo(A\rightarrow B)$ \\
  \hline
\end{longtable*}
\begin{longtable*}{p{3.0cm}p{10.5cm}}
  \multicolumn{2}{l}{\textbf{Quantum resource theories}} \\
  \hline
  $\fs(A)$              & set of free states on $A$ \\
  $\qrt(A\rightarrow B)$ & set of free channels from $A$ to $B$ \\
  $\fqi(A\rightarrow B)$ & set of free quantum instruments from $A$ to $B$ \\
  $\fsc(A,B\rightarrow C,D)$ & {\small set of free superchannels from $\cptp(A\rightarrow B)$ to $\cptp(C\rightarrow D)$} \\
  \hline
\end{longtable*}
\begin{longtable*}{p{3.5cm}p{10.0cm}}
  \multicolumn{2}{l}{\textbf{Entanglement theory}} \\
  \hline
  $\sep(A;B)$           & set of separable states across $A$ and $B$ \\
  $\ppt(A;B)$           & set of states across $A$ and $B$ with positive partial transpose \\
  $\lo(A;B\rightarrow A';B')$      & set of local operations \\
  \small{$\locc(A;B\rightarrow A';B')$} & set of local operations with classical communication \\
  \small{$\sepc(A;B\rightarrow A';B')$} & set of separable channels \\
  \small{$\pptc(A;B\rightarrow A';B')$} & set of PPT channels \\
  $\ebc(A\rightarrow A')$& set of entanglement breaking channels \\
  \hline
\end{longtable*}
\begin{longtable}{p{3.0cm}p{10.5cm}}
  \multicolumn{2}{l}{\textbf{Abbreviations}} \\
  \hline
  PPT                   & positive partial transpose \\
  SDP                   & semidefinite program \\
  QPS                   & quasiprobability simulation \\
  QPD                   & quasiprobability decomposition \\
  QRT                   & quantum resource theory \\
  QEM                   & quantum error mitigation \\
  QEC                   & quantum error correction \\
  \hline
  \caption{Summary of symbols used throughout this thesis.}
  \label{tab:notation}
\end{longtable}
\end{center}
For the above sets of operators, whenever the input and output systems are identical, we only specify one system as the argument.
For instance, the set of endomorphisms is $\lino(A)\coloneqq\lino(A,A)$ and similarly $\cptp(A)\coloneqq\cptp(A\rightarrow A)$, $\qi(A)\coloneqq\qi(A\rightarrow A)$ and so forth.

Whenever we write the composition $\cE_2\circ\cE_1$ of two superoperators which do not have matching image and support spaces, the operations are implicitly meant to be tensored with the identity channel.
For instance, if $\cE_1\in\supo(A\rightarrow BC)$ and $\cE_2\in\supo(BD\rightarrow E)$, then
\begin{equation}
  \cE_2\circ\cE_1\coloneqq (\cE_2\otimes\idchan_C)(\cE_1\otimes\idchan_D) \in \supo(AD\rightarrow CE) \, .
\end{equation}

\section{Elementary quantum gates}
We assume basic familiarity with the quantum circuit formalism and refer the reader to~\cite{nielsen2010_quantum} for a detailed introduction.
Here, we briefly summarize the most important gates that we will encounter.

Each of the three Pauli matrices
\begin{equation}
  X = \begin{pmatrix}0 & 1 \\ 1 & 0\end{pmatrix} \quad
  Y = \begin{pmatrix}0 & -i \\ i & 0\end{pmatrix} \quad
  Z = \begin{pmatrix}1 & 0 \\ 0 & -1\end{pmatrix} \, .
\end{equation}
induces a corresponding single-qubit gate.
The $n$-fold tensor products of the Pauli matrices and the identity matrix generate the $n$-qubit \emph{Pauli group}
\begin{equation}
  \mathrm{P}_n \coloneqq \{ i^c Q_1\otimes Q_2\otimes \dots \otimes Q_n | c\in\{0,1,2,3\}, Q_i\in \{\id,X,Y,Z\} \} \, .
\end{equation}
Note that the phase $i^c$ is of no importance when considering an $n$-qubit Pauli operator as a gate.
Based on the Pauli group, we can define the \emph{Clifford hierarchy}, an increasing sequence of sets of $n$-qubit gates recursively defined as
\begin{equation}
  \mathrm{CH}_n^{(1)} \coloneqq \mathrm{P}_n , \quad
  \mathrm{CH}_n^{(k)} \coloneqq \{U\in \uni(\mathbb{C}^{2^n}) | \forall Q\in \mathrm{P}_n: UQU^{\dagger}\in \mathrm{CH}_n^{(k-1)} \} \, .
\end{equation}

The second level of the hierarchy $\mathrm{CH}_n^{(2)}$ is called the $n$-qubit \emph{Clifford group} and plays a central role in the theory of quantum computation.
Important single-qubit Clifford gates include the Hadamard and phase gates
\begin{equation}
  \mathrm{H} = \frac{1}{\sqrt{2}}\begin{pmatrix}1 & 1 \\ 1 & -1\end{pmatrix} \quad
  \mathrm{S} = \begin{pmatrix}1 & 0 \\ 0 & i\end{pmatrix}
\end{equation}
and all two-qubit Clifford gates are equivalent to either the $\cnot$, $\swap$ or $\iswap$ gate
{\small
\begin{equation}
  \cnot = \begin{pmatrix}1&0&0&0\\ 0&1&0&0 \\ 0&0&0&1 \\ 0&0&1&0\end{pmatrix} \quad 
  \swap = \begin{pmatrix}1&0&0&0\\ 0&0&1&0 \\ 0&1&0&0 \\ 0&0&0&1\end{pmatrix} \quad
  \iswap = \begin{pmatrix}1&0&0&0\\ 0&0&i&0 \\ 0&i&0&0 \\ 0&0&0&1\end{pmatrix}
\end{equation}
}
up to local Clifford unitaries.
The Clifford unitaries are generated by the H, S and CNOT gates.
The well-known Gottesman-Knill theorem~\cite{gottesman1998_heisenberg,aaronson2004_improved} states that there exists a polynomial-time classical simulation algorithm for circuits consisting only of Clifford gates as well as preparations and measurements of qubits in the computational basis.
Unsurprisingly, the Clifford gates are not universal, i.e., it is not possible to approximate every unitary arbitrarily well using only Clifford gates.
To reach universality, we need at least one gate from $\mathrm{CH}_n^{(3)}$, as the third level does not exhibit a group structure as the first two levels do.
Two typical examples of third-level gates that can each achieve universality when paired with the Clifford gates are the $\mathrm{T}$ gate and the $\toffoli$ (controlled-controlled-not) gates
\begin{equation}
  \mathrm{T} = \begin{pmatrix}1 & 0 \\ 0 & e^{i\pi / 4}\end{pmatrix} \quad
  \toffoli = \proj{00}\otimes\id + \proj{01}\otimes\id + \proj{10}\otimes\id + \proj{11}\otimes X \, .
\end{equation}

\section{Quantum channels and instruments}
A state of a quantum system $A$ can most generally be described by a density operator $\rho\in\dops(A)$, i.e., a non-negative operator with unit trace $\tr[\rho]=1$.
One of the fundamental postulates of quantum mechanics states that the time evolution of an isolated system is described by $\rho \mapsto U\rho U^{\dagger}$ for some unitary operator $U\in\uni(A)$.
When the system can also interact with the environment, then the evolution can take a more general form of a quantum channel.
\begin{definition}{}{}
  A superoperator $\cE\in\supo(A\rightarrow B)$ is called \emph{trace-preserving} if
  \begin{equation}
    \forall \omega\in\lino(A) : \tr[\cE(\omega)] = \tr[\omega]
  \end{equation}
  and it is called \emph{completely positive} if $\cE\otimes\idchan_E \in \supo(AE\rightarrow BE)$ is positive for any system $E$.
  We say $\cE$ is a \emph{quantum channel} if it is both completely positive and trace-preserving.
\end{definition}
\sloppy The sets of completely positive superoperators, trace-preserving superoperators and quantum channels from $A$ to $B$ are denoted by $\cp(A\rightarrow B)$, $\tp(A\rightarrow B)$ and $\cptp(A\rightarrow B)$.
Quantum channels precisely represent the most general class of superoperators which always map density operators to density operators, even when only applied to a subsystem.

Note that there exist superoperators which are positive, but not completely positive.
A canonical example is the transpose operator $T_A\in\supo(A)$ which is defined as
\begin{equation}
  T\left(\sum_{i,j} \alpha_{i,j} \ketbra{b_i}{b_j} \right) \coloneqq \sum_{i,j} \alpha_{i,j} \ketbra{b_j}{b_i}
\end{equation}
with respect to some basis $\{\ket{b_i}\}_i$ of $A$.

Whenever the input Hilbert space is of unit dimension $\cH_A=\mathbb{C}$, the quantum channels in $\cptp(A\rightarrow B)$ can be understood as a deterministic preparation of some state on $B$.
Therefore, we can identify the set of quantum channels $\cptp(A\rightarrow B)$ with the set of quantum states $\dops(B)$.

There exists a variety of representations of quantum channels.
We summarize the ones that will be relevant throughout this manuscript.
\begin{proposition}{Choi-Jamiołkowski representation}{}
  The map from $\supo(A\rightarrow B)$ to $\lino(AB)$ defined by
  \begin{equation}
    \cE \mapsto \choi{\cE}\coloneqq (\idchan_A\otimes\cE)(\proj{\Psi}_{AA'})
  \end{equation}
  is an isomorphism, where $A'$ is a copy of $A$, $\cE$ is understood to be acting on $A'$ and $\ket{\Psi}\coloneqq \dimension(A)^{-1/2}\sum_{i=1}^{\dimension(A)}\ket{i}\otimes\ket{i}$ is the maximally entangled state w.r.t. some fixed orthonormal basis.
\end{proposition}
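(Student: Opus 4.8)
The plan is to show that $\Phi\colon\cE\mapsto\choi{\cE}$ is a linear bijection between the vector spaces $\supo(A\rightarrow B)=\lino(\lino(A),\lino(B))$ and $\lino(AB)$, by writing down its inverse explicitly. Linearity of $\Phi$ is immediate, since $\cE\mapsto\idchan_A\otimes\cE$ is linear in $\cE$ and evaluation on the fixed operator $\proj{\Psi}_{AA'}$ preserves linearity. Moreover the two spaces have the same dimension, namely $\dimension(A)^2\dimension(B)^2$, so establishing a one-sided inverse already suffices; nonetheless I would exhibit a genuine two-sided inverse for a self-contained argument.

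First I would expand $\proj{\Psi}_{AA'}=\dimension(A)^{-1}\sum_{i,j}\ketbra{i}{j}_A\otimes\ketbra{i}{j}_{A'}$, which yields the convenient form $\choi{\cE}=\dimension(A)^{-1}\sum_{i,j}\ketbra{i}{j}_A\otimes\cE(\ketbra{i}{j})$. From this expression the operators $\cE(\ketbra{i}{j})$ appear in front of linearly independent ``blocks'' $\ketbra{i}{j}_A$ on the $A$-factor, so $\Phi$ is already visibly injective; but to pin down the inverse I would define $\Psi\colon\lino(AB)\rightarrow\supo(A\rightarrow B)$ by $\bigl(\Psi(J)\bigr)(\omega)\coloneqq\dimension(A)\,\tr_A\bigl[(\omega^{\transpose}\otimes\id_B)\,J\bigr]$, with the transpose taken in the same orthonormal basis $\{\ket{i}\}$ used to define $\ket{\Psi}$.

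Next I would verify the two compositions. Using the elementary identity $\braopket{i}{\omega}{j}=\braopket{j}{\omega^{\transpose}}{i}$ together with the expansion above, $\dimension(A)\,\tr_A[(\omega^{\transpose}\otimes\id_B)\choi{\cE}]$ collapses to $\cE\bigl(\sum_{i,j}\braopket{i}{\omega}{j}\ketbra{i}{j}\bigr)=\cE(\omega)$, so $\Psi\circ\Phi$ is the identity on $\supo(A\rightarrow B)$. Conversely, writing $J=\sum_{i,j,m,n}J_{im,jn}\,\ketbra{i}{j}_A\otimes\ketbra{m}{n}_B$, a short computation gives $\bigl(\Psi(J)\bigr)(\ketbra{i}{j})=\dimension(A)\sum_{m,n}J_{im,jn}\ketbra{m}{n}$, and substituting this into the Choi expansion returns $J$; hence $\Phi\circ\Psi$ is the identity on $\lino(AB)$. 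Therefore $\Phi$ is a linear isomorphism with inverse $\Psi$.

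I do not expect a genuine obstacle: this is a standard fact of linear algebra, and the only real care is in the bookkeeping of conventions — in particular that the $\dimension(A)^{-1/2}$ normalization of $\ket{\Psi}$ forces the prefactor $\dimension(A)$ in $\Psi$, and that the transpose appearing in $\Psi$ must be taken with respect to the exact basis fixing $\ket{\Psi}$, since a different basis would recover $\Phi^{-1}$ only up to a fixed unitary conjugation.
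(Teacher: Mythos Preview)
Your proof is correct. The paper states this proposition as a standard preliminary fact without proof, so there is nothing to compare against; your explicit construction of the inverse $\Psi(J)(\omega)=\dimension(A)\,\tr_A[(\omega^{\transpose}\otimes\id_B)J]$ and verification of both compositions is a perfectly good self-contained argument.
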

A major advantage of the Choi representation is that it directly captures some important properties of the superoperator $\cE$.
For instance,
\begin{equation}
  \cE \text{ is Hermitian-preserving } \Longleftrightarrow \choi{\cE} \text{ is Hermitian} \, ,
\end{equation}
\begin{equation}
  \cE \text{ is completely positive } \Longleftrightarrow \choi{\cE} \loewnergeq 0
\end{equation}
and
\begin{equation}
  \cE \text{ is trace-preserving } \Longleftrightarrow \tr_{B}[\choi{\cE}] = \frac{\id_A}{\dimension(A)} \, .
\end{equation}

Another important representation is provided by the Stinespring dilation theorem.
This result states that any quantum channel can be considered to be a unitary evolution by sufficiently enlarging the space.
\begin{proposition}{Stinespring representation}{}
  Let $\cE\in \cptp(A\rightarrow B)$.
  Then there exists a system $E$, and an isometry $V\in\lino(A,BE)$ such that
  \begin{equation}
    \forall\omega\in\lino(A): \cE(\omega) = \tr_E[V\omega V^{\dagger}] \, .
  \end{equation}
\end{proposition}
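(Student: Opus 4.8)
The plan is to reduce the statement to the existence of a Kraus (operator-sum) representation of $\cE$, and then to bundle the Kraus operators together into a single isometry acting on an enlarged output space. So first I would establish the Kraus form. Since $\cE$ is completely positive, the criterion recorded after the Choi--Jamio\l kowski proposition gives $\choi{\cE}\loewnergeq 0$. A positive semidefinite operator on the finite-dimensional space $\cH_A\otimes\cH_B$ admits a spectral decomposition $\choi{\cE}=\sum_{k=1}^{r}\ketbra{v_k}{v_k}$ with $r\leq\dimension(A)\dimension(B)$ and $\ket{v_k}\in\cH_A\otimes\cH_B$. Using the linear isomorphism between $\cH_A\otimes\cH_B$ and $\lino(A,B)$ induced by (the unnormalized) maximally entangled vector, each $\ket{v_k}$ corresponds to a unique operator $K_k\in\lino(A,B)$; one then checks that the superoperator $\omega\mapsto\sum_k K_k\omega K_k^{\dagger}$ has Choi operator exactly $\sum_k\ketbra{v_k}{v_k}=\choi{\cE}$, so since the Choi map is an isomorphism we get $\cE(\omega)=\sum_{k=1}^{r}K_k\omega K_k^{\dagger}$ for all $\omega\in\lino(A)$. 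Feeding this decomposition into the trace-preservation criterion $\tr_B[\choi{\cE}]=\id_A/\dimension(A)$ yields the completeness relation $\sum_{k=1}^{r}K_k^{\dagger}K_k=\id_A$.

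Next I would assemble the dilation. Let $E$ be a system with $\dimension(E)=r$ and fix an orthonormal basis $\{\ket{k}\}_{k=1}^{r}$ of $\cH_E$. Define $V\in\lino(A,BE)$ by $V\coloneqq\sum_{k=1}^{r}K_k\otimes\ket{k}$, i.e. $V\ket{\psi}=\sum_k (K_k\ket{\psi})\otimes\ket{k}$. Then $V^{\dagger}V=\sum_{k,l}K_k^{\dagger}K_l\braket{k}{l}=\sum_k K_k^{\dagger}K_k=\id_A$, so $V$ is an isometry, and for any $\omega\in\lino(A)$,
\begin{equation}
  \tr_E\!\left[V\omega V^{\dagger}\right]=\tr_E\!\left[\sum_{k,l}K_k\omega K_l^{\dagger}\otimes\ketbra{k}{l}\right]=\sum_{k,l}K_k\omega K_l^{\dagger}\braket{l}{k}=\sum_{k=1}^{r} K_k\omega K_k^{\dagger}=\cE(\omega),
\end{equation}
which is exactly the claimed identity. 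As a byproduct the construction shows one may always take $\dimension(E)\leq\dimension(A)\dimension(B)$.

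The hard part will not be the second step, which is purely mechanical, but the first: justifying the passage from ``$\choi{\cE}\loewnergeq 0$'' to an operator-sum representation. This rests on (i) the spectral theorem in finite dimensions, to decompose $\choi{\cE}$ into rank-one positive pieces, and (ii) careful bookkeeping of the vectorization isomorphism $\cH_A\otimes\cH_B\cong\lino(A,B)$, to make sure the operators $K_k$ read off from the eigenvectors reproduce $\cE$ itself and not its transpose or complex conjugate --- this is precisely where the normalization of $\ket{\Psi}$ fixed in the Choi--Jamio\l kowski proposition must be tracked. If one prefers to avoid this bookkeeping entirely, an alternative is a direct GNS-type construction: pick any purification-like vector and define $V$ on a spanning set of $\cH_A$ so that $\braket{\phi}{V^\dagger V\,\psi}=\braket{\phi}{\psi}$ holds by complete positivity, then extend by linearity; but the Kraus route above is the most economical given the representations already available in the excerpt.
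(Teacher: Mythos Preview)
Your proof is correct and follows the standard route: extract Kraus operators from the spectral decomposition of the Choi operator, then stack them into an isometry using an ancillary basis of $E$. The paper, however, does not actually prove this proposition --- it is stated without proof in the preliminaries chapter as a well-known representation result, alongside the Choi--Jamio\l kowski and operator-sum representations (also unproven there). So there is no paper proof to compare against; your argument supplies what the paper simply assumes. One minor remark: in the paper's ordering the operator-sum representation is stated \emph{after} Stinespring, so strictly speaking your first step derives that later proposition as well rather than citing it.
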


Next, we introduce the representation of completely positive linear operators based on Kraus operators.
\begin{proposition}{Operator-sum representation}{}
  Let $\cE\in \cp(A\rightarrow B)$.
  Then there exists a finite family of operators $E_i\in\lino(A,B)$, $i=1,\dots,r$ such that
  \begin{equation}
    \forall\omega\in\lino(A): \cE(\omega) = \sum_{i=1}^r E_i\omega E_i^{\dagger} \, .
  \end{equation}
  The map $\cE$ is trace-preserving if and only if $\sum_{i=1}^r E_i^{\dagger}E_i = \id_A$.
\end{proposition}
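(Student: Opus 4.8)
The plan is to deduce the statement from the Choi--Jamio\l kowski isomorphism established above. Since $\cE\in\cp(A\rightarrow B)$, its Choi operator is positive semidefinite, $\choi{\cE}\loewnergeq 0$. As $\cH_A\otimes\cH_B$ is finite-dimensional, a spectral decomposition of $\choi{\cE}$ yields vectors $\ket{v_1},\dots,\ket{v_r}\in\cH_A\otimes\cH_B$ (not necessarily normalized, with $r$ at most the rank of $\choi{\cE}$, hence $r\leq\dimension(A)\dimension(B)$) such that $\choi{\cE}=\sum_{i=1}^r\proj{v_i}$. In particular this family is finite, which handles the finiteness claim.

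Next I would convert the $\ket{v_i}$ into Kraus operators via the standard vectorization correspondence. The linear map $\lino(A,B)\rightarrow\cH_A\otimes\cH_B$, $E\mapsto(\idchan_A\otimes E)\ket{\Psi}$, sends the basis operator $\ketbra{k}{l}$ to $\dimension(A)^{-1/2}\ket{l}_A\otimes\ket{k}_B$, so it is a bijection between spaces of equal dimension $\dimension(A)\dimension(B)$. Hence there exist unique $E_i\in\lino(A,B)$ with $(\idchan_A\otimes E_i)\ket{\Psi}=\ket{v_i}$ for each $i$.

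To prove $\cE(\omega)=\sum_{i=1}^r E_i\omega E_i^{\dagger}$, I would compare Choi operators and invoke injectivity of the Choi map. The Choi operator of $\omega\mapsto E_i\omega E_i^{\dagger}=\indsupo{E_i}(\omega)$ equals $(\idchan_A\otimes E_i)\proj{\Psi}(\idchan_A\otimes E_i)^{\dagger}=\proj{v_i}$, so by linearity the superoperator $\omega\mapsto\sum_i E_i\omega E_i^{\dagger}$ has Choi operator $\sum_i\proj{v_i}=\choi{\cE}$ and therefore coincides with $\cE$. For the trace-preservation characterization I would note that for all $\omega\in\lino(A)$ one has $\tr[\cE(\omega)]=\sum_i\tr[E_i\omega E_i^{\dagger}]=\tr[(\sum_i E_i^{\dagger}E_i)\omega]$ by cyclicity of the trace; this equals $\tr[\omega]$ for every $\omega$ if and only if $\sum_i E_i^{\dagger}E_i=\id_A$, since the trace pairing on $\lino(A)$ is nondegenerate.

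The argument is essentially routine. The only steps requiring genuine care are fixing the conventions in the vectorization correspondence so that the normalization $\dimension(A)^{-1/2}$ in $\ket{\Psi}$ is absorbed consistently and the Choi operators match exactly, and observing that a finite-rank positive semidefinite operator is a finite sum of rank-one projectors --- which is precisely what makes the Kraus family finite. (Alternatively one could route the proof through a Stinespring-type dilation of $\cE$, but that would first require extending the dilation proposition above from the trace-preserving to the general completely positive case, so the Choi route is cleaner here.)
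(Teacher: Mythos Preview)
Your proof is correct and follows the standard Choi-based derivation of the Kraus representation: diagonalize the positive semidefinite Choi operator, identify each rank-one summand with an operator $E_i$ via the vectorization bijection, and invoke injectivity of the Choi isomorphism; the trace-preservation equivalence then drops out from cyclicity and nondegeneracy of the trace pairing. The paper states this proposition as a standard preliminary result without proof, so there is no argument to compare against.
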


Sometimes it can be useful to study completely positive maps that are trace-non-increasing, i.e., $\tr[\cE(\rho)]\leq \tr[\rho]$.
These are characterized by their Choi operator fulfilling $\tr_B[\choi{\cE}]\loewnerleq\frac{\id_A}{\dimension(A)}$, respectively by their Kraus operators fulfilling $\sum_{i=1}^r E_i^{\dagger}E_i \loewnerleq \id_A$.

Finally, we mention one last representations that is applicable to $n$-qubit superoperators $\cE\in\supo(A)$ for $\cH_A = (\mathbb{C}^{2})^{\otimes n}$.
The \emph{Pauli transfer matrix} $M\in\mathbb{C}^{4^n\times 4^n}$ of $\cE$ is defined as 
\begin{equation}
  M_{ij}\coloneqq \langle Q_i, \cE(Q_j)\rangle = \frac{1}{2^n}\tr\left[Q_i\cE(Q_j)\right]
\end{equation}
where $\{Q_1,Q_2,\dots,Q_{4^n}\}$ are the $n$-qubit Pauli strings $\{\id,X,Y,Z\}^{\otimes n}$ in some fixed order.
As such, the Pauli transfer matrix of a superoperator can be understood as its matrix representation in terms of the Pauli strings, which form an orthonormal basis of the space of $2^n\times 2^n$ matrices.

In quantum information theory we are often interested in measurements, i.e., processes which produce classical information from a quantum state.
The most general measurement process can be modelled by a quantum channel $\cE_{A\rightarrow BX}\in\cptp(A\rightarrow BX)$ which maps any state $\rho_A\in\dops(A)$ to a so-called \emph{classical-quantum} state of the form
\begin{equation}
  \sum_i p_i \sigma^{(i)}_B\otimes \proj{i}_X
\end{equation}
where $p_i$ is a probability distribution $p_i\geq 0,\sum_i p_i=1$, $\ket{i}$ is an orthonormal basis of $X$ and $\sigma^{(i)}_B\in\dops(B)$.
As such, a generalized measurement can always be written in the form
\begin{equation}
  \cE_{A\rightarrow BX}(\rho) = \sum_i p_i(\rho) \sigma^{(i)}(\rho) \otimes \proj{i}
\end{equation}
where both $p_i(\rho)$ and $\sigma^{(i)}(\rho)$ are functions of $\rho$.
If we define $\cE_i(\rho)\coloneqq p_i(\rho)\sigma^{(i)}(\rho)$ we can equivalently write
\begin{equation}
  \cE_{A\rightarrow BX}(\rho) = \sum_i \cE_i(\rho) \otimes \proj{i} \, .
\end{equation}
Since $\cE\in\cp(A\rightarrow BX)$, the maps $\cE_i$ must also be completely positive.
Furthermore, $\sum_i\cE_i = \tr_X\circ \cE_{A\rightarrow BX}$ which is trace-preserving.
Any generalized quantum measurement is fully characterized by a family $(\cE_i)_i$ that fulfills these two properties.
\begin{definition}{}{}
  A quantum instrument from system $A$ to system $B$ is a countable family $\cI=(\cE_i)_i$ of non-zero operators $\cE_i\in\cp(A\rightarrow B)$ such that $\sum_i\cE_i\in\tp(A\rightarrow B)$.
  We denote the set of such quantum instruments by $\qi(A\rightarrow B)$.
\end{definition}
A quantum instrument is therefore a mathematical abstraction of a general quantum measurement process.
The defined objects are sometimes also called \emph{discrete} quantum instruments due to their countable nature.
In this manuscript, we will never consider non-discrete instruments.

We say that a quantum instrument $\cI'$ is a coarse-graining of another instrument $\cI$ if it corresponds to the same physical measurement process, except that we merge some of the classical measurement outcomes (i.e., we consider them to be the same outcome).
More mathematically, this is captured as follows.
\begin{definition}{}{qi_coarse_graining}
  Let $\cI=(\cE_i)_{i\in\Theta}$ and $\cI'=(\cF_j)_{j\in\Theta'}$ be quantum instruments from $A$ to $B$.
  We say that $\cI'$ is a coarse-graining of $\cI$ if there exists a map $f:\Theta\rightarrow\Theta'$ such that
  \begin{equation}
    \cF_j = \sum_{i:f(i)=j} \cE_i\, .
  \end{equation}
\end{definition}

For a quantum instrument $\cI=(\cE_i)_i\in\qi(A\rightarrow B)$, we define its associated marginal channel as
\begin{equation}
  \qichan[\cI] \coloneqq \sum_i \cE_i \in \cptp(A\rightarrow B) \, .
\end{equation}
This channel essentially corresponds to $\tr_X\circ\cE_{A\rightarrow BX}$ from before and can also be understood as the ``fully coarse-grained'' instrument of $\cI$.
Similarly, if we consider a set $Q\subset\qi(A\rightarrow B)$ of quantum instruments, we define
\begin{equation}\label{eq:qichandef}
  \qichan[Q] \coloneqq \{\qichan[\cI] | \cI \in Q\} \subset \cptp(A\rightarrow B) \, .
\end{equation}

Oftentimes, one is purely interested in the distribution of the outcomes of a measurement process and not in the precise post-measurement state.
Put differently, the only relevant quantity in this case is the quantum-to-classical channel
\begin{align}
  (\tr_B\circ \cE_{A\rightarrow BX})(\rho)
  &= \sum_i \tr[\cE_i(\rho)] \proj{i} \\
  &= \sum_i \tr[\sum_j E_j^{(i)}\rho (E_j^{(i)})^{\dagger}] \proj{i} \\
  &= \sum_i \tr[M_i\rho] \proj{i}
\end{align}
where we inserted the operator-sum representation $\cE_i(\rho) = \sum_j E_j^{(i)}\rho (E_j^{(i)})^{\dagger}$ and defined the positive operators $M_i\coloneqq \sum_j(E_j^{(i)})^{\dagger}E_j^{(i)}\in\pos(A)$ which fulfill $\sum_i M_i=\id_A$.
So to compute the statistics of the measurement outcomes, is suffices to only specify the operators $M_i$.
\begin{definition}{}{}
  A positive operator-valued measure (POVM) on the system $A$ is a countable family $(M_i)_i$ of positive operators $M_i\in\pos(A)$ that fulfills $\sum_i M_i=\id_A$.
  We denote the set of POVMs on $A$ by $\povm(A)$.
\end{definition}

\section{Quantum resource theories}\label{sec:qrt}
The framework of quantum resource theories (QRT) is used to study and characterize various phenomena in quantum mechanics, such as entanglement, coherence, magic and many more.
The idea is to consider a restricted setting where operations can only be accomplished if they do not incur the quantum phenomena in question.
As such, every QRT has a characteristic set of ``free channels'', which constitutes the $\cptp$ maps which can be realized under the restrictions of the resource theory.
Similarly, states are categorized into ``free states'' and ``resource states''.
Some of the typical goals of a QRT include the quantification of the amount of resource contained in a state as well as the convertibility between states under free operations.

For a detailed introduction to the framework of QRTs, we refer the reader to \reference~\cite{chitambar2019_quantum,gour2024_resources,gour2020_dynamical}.
Note that there are some technical variations in how resource theories are defined in literature.
We will mostly follow the conventions from~\cite{gour2024_resources}.

Mathematically speaking, a QRT is in essence an object that specifies which quantum channels between two systems $A$ and $B$ are considered ``free''.
\begin{definition}{}{qrt}
  A \emph{quantum resource theory} is a mapping $\qrt$ that takes two quantum systems $A,B$ to a set of quantum channels $\qrt(A\rightarrow B)\subset \cptp(A\rightarrow B)$ that fulfills following condition:
  \begin{enumerate}
    \item \emph{Doing nothing is free:} For any system $A$, $\idchan_A\in\qrt(A\rightarrow A)$.
    \item \emph{Composition is free:} For any $\cE\in\qrt(A\rightarrow B)$ and $\cF\in\qrt(B\rightarrow C)$ we also have $\cF\circ\cE\in \qrt(A\rightarrow C)$.
    \item \emph{Discarding is free:} For any system $A$, $\tr\in\qrt(A\rightarrow \mathbb{C})$.
  \end{enumerate}
\end{definition}
Every QRT has an associated set of free states which do not contain any resource.
\begin{definition}{}{}
  The free states of a QRT $\qrt$ on a system $A$ are $\fs(A)\coloneqq \qrt(\mathbb{C}\rightarrow A)$.
\end{definition}
The following two properties are direct consequences of properties 2 and 3 of a QRT.
\begin{lemma}{Golden rule of quantum resource theories}{qrt_golden_rule}
  Consider a QRT $\qrt$.
  If $\cE\in\qrt(A\rightarrow B)$ and $\rho\in\fs(A)$, then $\cE(\rho)\in\fs(B)$.
\end{lemma}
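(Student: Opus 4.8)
The statement to prove is the "Golden rule": if $\cE\in\qrt(A\rightarrow B)$ is free and $\rho\in\fs(A)$ is a free state, then $\cE(\rho)\in\fs(B)$. The plan is to simply compose the two free objects and invoke the definitions. By definition of free states, $\rho\in\fs(A) = \qrt(\mathbb{C}\rightarrow A)$, so $\rho$ is (the Choi/output of) a free channel $\cR_\rho\in\qrt(\mathbb{C}\rightarrow A)$ that prepares $\rho$ from the trivial system $\mathbb{C}$.

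First I would apply property 2 of \Cref{def:qrt} (composition is free) to the pair $\cR_\rho\in\qrt(\mathbb{C}\rightarrow A)$ and $\cE\in\qrt(A\rightarrow B)$, which yields $\cE\circ\cR_\rho\in\qrt(\mathbb{C}\rightarrow B)$. Then, again by the definition of free states, $\qrt(\mathbb{C}\rightarrow B) = \fs(B)$, and since the channel $\cE\circ\cR_\rho$ prepares exactly the state $\cE(\rho)$ from $\mathbb{C}$, we conclude $\cE(\rho)\in\fs(B)$. This uses the identification of $\cptp(\mathbb{C}\rightarrow B)$ with $\dops(B)$ mentioned earlier in the excerpt, so the composed channel $\cE\circ\cR_\rho$ and the density operator $\cE(\rho)$ are literally the same object under this identification.

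There is essentially no obstacle here: the only thing to be careful about is making the identification between "free state $\rho$" and "free preparation channel" explicit, and noting that property 1 (doing nothing is free) is not even needed — only properties 2 and 3 are invoked in the lemma statement, and in fact this particular consequence only needs property 2 together with the definition of $\fs$. The hardest part, if any, is purely bookkeeping: ensuring that the composition $\cE\circ\cR_\rho$ is interpreted with the trivial input system $\mathbb{C}$ so that no tensoring with an identity channel is implicitly introduced, which is consistent with the composition convention set up in the preliminaries.
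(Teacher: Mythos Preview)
Your proof is correct and matches the paper's approach: the paper simply states that this lemma is a direct consequence of the QRT axioms (composition of free channels is free), without spelling out the details. Your observation that only property~2 of \Cref{def:qrt} is needed here (property~3 being required for the companion \Cref{lem:qrt_free_state_prep}) is accurate and slightly sharper than the paper's blanket attribution of both lemmas to properties~2 and~3.
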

\begin{lemma}{}{qrt_free_state_prep}
  Consider a QRT $\qrt$ and let $\rho\in\fs(B)$.
  For any physical system $A$ the map $\cE\in\cptp(A\rightarrow B)$ defined by $\cE(\sigma)\coloneqq \tr[\sigma]\rho$ is in $\qrt(A\rightarrow B)$.
\end{lemma}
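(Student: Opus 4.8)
The plan is to prove that the map $\cE(\sigma)\coloneqq\tr[\sigma]\rho$ is free by exhibiting it as a composition of operations that are manifestly free under the axioms of a QRT. First I would write $\cE$ as the composition $\cE = \cE_2\circ\cE_1$, where $\cE_1\coloneqq\tr\in\cptp(A\rightarrow\mathbb{C})$ is the discarding channel and $\cE_2\in\cptp(\mathbb{C}\rightarrow B)$ is the channel that prepares the state $\rho$ (i.e.\ $\cE_2(c)=c\rho$ for $c\in\mathbb{C}=\lino(\mathbb{C})$). One checks the elementary identity $(\cE_2\circ\cE_1)(\sigma) = \cE_2(\tr[\sigma]) = \tr[\sigma]\rho = \cE(\sigma)$, so the factorization is correct.

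Next I would argue that both factors are free. The map $\cE_1=\tr$ lies in $\qrt(A\rightarrow\mathbb{C})$ by axiom~3 (``discarding is free'') in~\Cref{def:qrt}. The map $\cE_2$ lies in $\qrt(\mathbb{C}\rightarrow B)$ because, by definition of the free states, $\fs(B)=\qrt(\mathbb{C}\rightarrow B)$, and $\rho\in\fs(B)$ by hypothesis; identifying a channel in $\cptp(\mathbb{C}\rightarrow B)$ with the state it prepares (as discussed just after the Choi--Jamio\l kowski proposition), $\cE_2$ is exactly the element of $\qrt(\mathbb{C}\rightarrow B)$ corresponding to $\rho$. Finally, by axiom~2 (``composition is free''), $\cE_2\circ\cE_1\in\qrt(A\rightarrow B)$, which is the claim.

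There is essentially no hard step here; the only point requiring a little care is the bookkeeping around the trivial system $\mathbb{C}$, namely the identification of $\cptp(\mathbb{C}\rightarrow B)$ with $\dops(B)$ and, correspondingly, of $\qrt(\mathbb{C}\rightarrow B)$ with $\fs(B)$, so that ``$\rho$ is a free state'' literally means ``the preparation channel for $\rho$ is a free channel.'' Once that identification is in place, the result is immediate from the three defining axioms, and the lemma can be viewed as a companion to~\Cref{lem:qrt_golden_rule}: the golden rule says free channels map free states to free states, while this lemma says any free state can be freely prepared from scratch (and hence from any input system, by first discarding).
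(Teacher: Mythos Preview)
Your proof is correct and matches the paper's approach exactly: the paper simply states that this lemma is a direct consequence of properties~2 and~3 of~\Cref{def:qrt}, and your factorization $\cE = \rho \circ \tr$ through the trivial system $\mathbb{C}$ is precisely how those two properties combine to yield the claim.
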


Most QRTs studied in practice also fulfill additional properties when considered on multipartite systems.
\begin{definition}{}{qrt_tp_structure}
  We say a QRT $\qrt$ has a \emph{tensor product structure} if
  \begin{enumerate}
    \item \emph{Completely free operations:} For any $\cE\in\qrt(A\rightarrow B)$ and any system $C$, one has $\cE\otimes\idchan_C\in\qrt(AC\rightarrow BC)$.
    \item \emph{Freedom of relabelling:} Let $A=A_1A_2\dots A_n$ and $B=B_1B_2\dots B_n$ be $n$-fold copies of identical systems and let $P_{\pi}^A\in\cptp(A)$, $P_{\pi^{-1}}^B\in\cptp(B)$ be permutation channels corresponding to a permutation $\pi$ of $n$ elements.
    Then $\forall\cE\in\qrt(A\rightarrow B)$ we have $P^{B}_{\pi^{-1}}\circ\cE\circ P^{A}_{\pi}\in \qrt(A\rightarrow B)$.
  \end{enumerate}
\end{definition}

Every QRT naturally induces a set of free quantum instruments.
\begin{definition}{}{free_qi}
  The set of free instruments $\fqi(A\rightarrow B)\subset\qi(A\rightarrow B)$ associated to a QRT $\qrt$ is defined to be all quantum instruments $(\cE_i)_i$ which either consist of a single element in $\qrt(A\rightarrow B)$ or for which the map
  \begin{equation}
    \rho_A \mapsto \sum_i \cE_i(\rho_A)_B \otimes \proj{i}_X
  \end{equation}
  is a free channel in $\qrt(A\rightarrow BX)$ for some system $X$ with orthonormal basis $\{\ket{i} | i=1,2,\dots\}$.
\end{definition}
A QRT with tensor product structure is fully characterized by its set of free instruments.
\begin{lemma}{}{free_chan_from_qi}
    For a QRT $\qrt$ with tensor product structure, one has
    \begin{equation}
        \qichan[\fqi(A\rightarrow B)]=\qrt(A\rightarrow B) \, .
    \end{equation}
\end{lemma}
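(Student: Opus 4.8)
The plan is to prove the set equality $\qichan[\fqi(A\rightarrow B)]=\qrt(A\rightarrow B)$ by establishing the two inclusions separately, using \Cref{def:free_qi} of free instruments, \Cref{def:qrt_tp_structure} of the tensor product structure, and the basic closure properties of a QRT from \Cref{def:qrt}.

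First I would prove the inclusion $\qichan[\fqi(A\rightarrow B)]\subseteq\qrt(A\rightarrow B)$. Take any free instrument $\cI=(\cE_i)_i\in\fqi(A\rightarrow B)$. If $\cI$ consists of a single element, that element is in $\qrt(A\rightarrow B)$ by definition, and $\qichan[\cI]$ equals that element, so we are done. Otherwise, by \Cref{def:free_qi}, the channel $\cE_{A\rightarrow BX}\colon\rho\mapsto\sum_i\cE_i(\rho)\otimes\proj{i}_X$ lies in $\qrt(A\rightarrow BX)$ for some system $X$. Now observe that $\qichan[\cI]=\sum_i\cE_i=\tr_X\circ\cE_{A\rightarrow BX}$. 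Since the partial trace $\tr_X$ can be written as $\idchan_B\otimes\tr_X$ and $\tr\in\qrt(X\rightarrow\mathbb{C})$ (property 3 of a QRT), the completely-free-operations axiom (property 1 of \Cref{def:qrt_tp_structure}) gives $\tr_X = \idchan_B\otimes\tr \in\qrt(BX\rightarrow B)$; composing with $\cE_{A\rightarrow BX}\in\qrt(A\rightarrow BX)$ via property 2 of \Cref{def:qrt} yields $\qichan[\cI]\in\qrt(A\rightarrow B)$.

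Next I would prove the reverse inclusion $\qrt(A\rightarrow B)\subseteq\qichan[\fqi(A\rightarrow B)]$. Given any $\cE\in\qrt(A\rightarrow B)$, consider the trivial single-element instrument $\cI=(\cE)$. By \Cref{def:free_qi}, an instrument consisting of a single element of $\qrt(A\rightarrow B)$ is automatically a free instrument, so $\cI\in\fqi(A\rightarrow B)$, and $\qichan[\cI]=\cE$. Hence $\cE\in\qichan[\fqi(A\rightarrow B)]$. This direction is essentially immediate and requires no tensor product structure.

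I do not expect any serious obstacle here; the only subtle point is handling the partial-trace map cleanly in the first inclusion, i.e.\ justifying that $\tr_X$ is a free channel. The cleanest route is to invoke property 1 of \Cref{def:qrt_tp_structure} (appending $\idchan_B$ to the free channel $\tr$) rather than arguing about $\tr_X$ directly. One should also be careful to treat the single-element case of the instrument separately, since \Cref{def:free_qi} treats it as a special case and one cannot necessarily assume the associated classical-quantum channel is free in that degenerate situation. Aside from these bookkeeping matters, both inclusions follow directly from the definitions and the defining axioms of a QRT with tensor product structure.
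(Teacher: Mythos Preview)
Your proposal is correct and follows essentially the same approach as the paper's proof: both directions are handled identically, using the single-element instrument for $\qrt\subseteq\qichan[\fqi]$ and composing the classical-quantum channel with $\idchan_B\otimes\tr_X$ for the reverse inclusion. If anything, your treatment is slightly more careful than the paper's, since you explicitly handle the single-element case of \Cref{def:free_qi} and spell out why $\tr_X$ is free via the tensor product structure, whereas the paper's proof glosses over both of these points.
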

Recall the definition of $\qichan$ in~\Cref{eq:qichandef}.
\begin{proof}
    For any $\cE\in\qrt(A\rightarrow B)$, the quantum instrument with a single component $(\cE)$ is an element of $\fqi(A\rightarrow B)$.
    Thus, $\cE=\qichan[(\cE)]\in \qichan[\fqi(A\rightarrow B)]$.
    Conversely, let $\cI=(\cE_i)_i$ be a free instrument in $\fqi(A\rightarrow B)$.
    There exists a free channel $\tilde{\cE}(\rho) = \sum_i \cE_i(\rho)\otimes \proj{i}_X$, $\tilde{\cE}\in\qrt(A\rightarrow BX)$.  
    Therefore, $\qichan[\cI]=\sum_i\cE_i = (\idchan_B\otimes tr_X)\circ \tilde{\cE} \in \qrt(A\rightarrow B)$.
\end{proof}
Consider a general quantum measurement $\cE$ of the form
\begin{equation}
  \rho \mapsto \sum_i \cE_i(\rho) \otimes \proj{i}_X
\end{equation}
which happens to be a free channel in some QRT $\qrt$, i.e.,  $\cE\in\qrt(A\rightarrow BX)$.
The golden rule of QRTs implies that for any free state $\rho\in\fs(A)$ the output $\cE(\rho)$ is also free.
However, it doesn't rule out the possibility that a post-selected quantum state $\frac{1}{\tr[\cE_i(\rho)]}\cE_i(\rho)$ could possibly be non-free.
Put differently, $\cE$ can in general create resource states in a probabilistic fashion.

It turns out that this is not possible for the vast majority of QRTs that are studied in literature, i.e., measurement maps that probabilistically produce resource states are never free in these theories.
This insight is captured by following axiom.
\begin{definition}{Axiom of free instruments}{axiom_fqi}
  We say a QRT $\qrt$ fulfills the \emph{axiom of free instruments} if for all free states $\rho\in\fs(A)$ and for all free instruments $\cI=(\cE_i)_i\in\fqi(A\rightarrow B)$ one has 
  \begin{equation}
    \forall i: \cE_i(\rho) / \tr[\cE_i(\rho)] \in \fs(B) \, .
  \end{equation}
\end{definition}

Now we turn our attention to measuring the amount of resource contained in a state or channel.
Functions that quantify a resource are typically called \emph{resource measures} or \emph{resource monotones}.
The exact meanings of these two terms in literature can vary, and in the following we present the definitions that we will use throughout this manuscript.

\begin{definition}{}{}
  A \emph{resource measure} for a QRT $\qrt$ is a map $M: \bigcup_{\cH}\dops(\cH) \rightarrow \mathbb{R}_{\geq 0}$ fulfilling the two properties:
  \begin{enumerate}
    \item \emph{Weak monotonicity:} $\forall\cE\in\qrt(A\rightarrow B), \forall \rho\in\dops(A): M(\cE(\rho))\leq M(\rho)$.
    \item \emph{Normalization:} $M(1)=0$ where $1$ is the only state of the Hilbert space $\mathbb{C}$.
  \end{enumerate}
\end{definition}
Here, $\cup_{\cH}$ denotes the union over all Hilbert spaces.
Note that any resource measure $M$ fulfills $\rho\in\fs(A) \Rightarrow M(\rho)=0$ for any $\rho\in\dops(A)$.
\begin{definition}{}{}
  A resource measure $M$ for a QRT $\qrt$ is called a \emph{resource monotone} if it additionally fulfills
  \begin{enumerate}
    \item \emph{Strong monotonicity:} For any $\rho\in\dops(A)$ and $(\cE_i)_i\in\fqi(A\rightarrow B)$
    \begin{equation}
      M(\rho) \geq \sum_i \tr[\cE_i(\rho)] M\left(\frac{\cE_i(\rho)}{\tr[\cE_i(\rho)]}\right) \, .
    \end{equation}
    \item \emph{Convexity:} For any finite ensemble of states $(p_i,\rho_i)_i$ with $\rho_i\in\dops(A)$
    \begin{equation}
      M\left(\sum_i p_i\rho_i\right) \leq \sum_i p_i M(\rho_i) \, .
    \end{equation}
  \end{enumerate}
\end{definition}
Beyond the above two definitions, we list some further properties that a desirable measure or monotone might fulfill.
\begin{itemize}
  \item \emph{Faithfulness:} For $\rho\in\dops(A): \rho\in\fs(A) \Leftrightarrow M(\rho)=0$
  \item \emph{Sub-additivity:} $\forall\rho\in \dops(A),\sigma\in\dops(B): M(\rho\otimes\sigma) \leq M(\rho) + M(\sigma)$.
  \item \emph{Additivity:} $\forall\rho\in \dops(A),\sigma\in\dops(B): M(\rho\otimes\sigma) = M(\rho) + M(\sigma)$.
\end{itemize}

While \emph{(static) resource theories} content themselves with the study of non-free states, the broader field of \emph{dynamic resource theories} goes beyond that by also quantifying the resources of general non-free channels.
In that case, we additionally require some notion of free transformations that map free channels back to free channels.

We call a \emph{superchannel} $\Theta$ a linear map from $\supo(A\rightarrow B)$ to $\supo(C\rightarrow D)$ that maps quantum channels back to quantum channels, i.e., $\Theta(\cptp(A\rightarrow B))\subset \cptp(C\rightarrow D)$.
We denote the set of all such superchannels $\schan(A,B\rightarrow C,D)$.
It was shown in \reference~\cite{chiribella2008_transforming} that any superchannel can be written in the form
\begin{equation}
  \Theta(\cE_{A\rightarrow B}) = \cF^{\mathrm{post}}_{BE\rightarrow D}\circ \left(\cE_{A\rightarrow B}\otimes \idchan_E\right)\circ \cF^{\mathrm{pre}}_{C\rightarrow AE}
\end{equation}
for some ancillary system $E$ and pre- and post-processing maps $\cF^{\mathrm{pre}}_{C\rightarrow AE}\in\cptp(C\rightarrow AE)$ and $\cF^{\mathrm{post}}_{BE\rightarrow D}\in\cptp(BE\rightarrow D)$.

One possible choice of free superchannels would be all such maps $\Theta$ where the pre- and post-processing can be realized using free channels.
In order to remain more general, we will instead choose to characterize the set of free superchannels in a more axiomatic fashion (following the approach in \reference~\cite{gour2020_dynamical}), more akin to how we introduced the free channels of a QRT.
\begin{definition}{}{}
  A \emph{dynamical quantum resource theory} is a mapping $\fsc$ that takes four quantum systems $A,B,C,D$ to a set of superchannels $\fsc(A,B\rightarrow C,D)\subset \schan(A,B\rightarrow C,D)$ and fulfills following conditions
  \begin{enumerate}
    \item $\fsc$ induces a QRT $\qrt$ with tensor product structure $\qrt(A\rightarrow B)\coloneqq \fsc(\mathbb{C},A\rightarrow \mathbb{C},B)$.
    \item $\id_{\supo(A\rightarrow B)}\in \fsc(A,B\rightarrow A,B)$ for all systems $A,B$.
    \item If $\Theta_1\in\fsc(A,B\rightarrow C,D)$ and $\Theta_2\in\fsc(C,D\rightarrow E,F)$ then $\Theta_2\circ\Theta_1\in\fsc(A,B\rightarrow E,F)$.
    \item $\forall\Theta\in\fsc(A,B\rightarrow C,D)$ and any systems $E,F$: $\Theta\otimes\id_{\supo(E\rightarrow F)} \in \fsc(AE,BF\rightarrow CE,DF)$.
  \end{enumerate}
\end{definition}
The golden rule can straightforwardly be translated to dynamical QRTs.
\begin{lemma}{}{dynamical_qrt_golden_rule}
  Consider a dynamical QRT $\fsc$.
  Then
  \begin{equation}
    \forall\cE\in\qrt(A\rightarrow B),\Theta\in\fsc(A,B\rightarrow C,D): \Theta(\cE)\in\qrt(C\rightarrow D) \, .
  \end{equation}
\end{lemma}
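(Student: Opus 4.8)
The plan is to reduce the claim about dynamical QRTs to the already-established golden rule for (static) QRTs, namely Lemma~\ref{lem:qrt_golden_rule}. The key observation is that the set $\qrt(A\rightarrow B)$ in a dynamical QRT $\fsc$ is \emph{defined} via $\qrt(A\rightarrow B)\coloneqq\fsc(\mathbb{C},A\rightarrow\mathbb{C},B)$, and that this $\qrt$ is a genuine QRT with tensor product structure by axiom~1 of a dynamical QRT. So the target statement $\Theta(\cE)\in\qrt(C\rightarrow D)$ is equivalent to $\Theta(\cE)\in\fsc(\mathbb{C},C\rightarrow\mathbb{C},D)$, and the hypothesis $\cE\in\qrt(A\rightarrow B)$ is equivalent to $\cE\in\fsc(\mathbb{C},A\rightarrow\mathbb{C},B)$.

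First I would view $\cE$ itself as a free superchannel: a channel in $\qrt(A\rightarrow B)=\fsc(\mathbb{C},A\rightarrow\mathbb{C},B)$ can be regarded as a superchannel that takes a state-preparation channel (an element of $\cptp(\mathbb{C}\rightarrow A)$, i.e.\ a state on $A$) and outputs a state-preparation channel on $B$; more precisely, $\cE$ induces the superchannel $\cE\circ(-)\in\schan(\mathbb{C},A\rightarrow\mathbb{C},B)$ given by precomposition, and the statement $\cE\in\fsc(\mathbb{C},A\rightarrow\mathbb{C},B)$ says exactly that this superchannel is free. Next I would compose: by axiom~3 of a dynamical QRT (composition of free superchannels is free), the composite $\Theta\circ(\cE\circ(-))$ lies in $\fsc(\mathbb{C},A\rightarrow\mathbb{C},D)$. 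But one must be careful about domains: $\cE\circ(-)$ outputs elements of $\cptp(\mathbb{C}\rightarrow B)$ while $\Theta$ acts on $\supo(A\rightarrow B)$. The cleanest route is instead to feed $\cE$ directly into $\Theta$ and use axiom~1 together with Lemma~\ref{lem:qrt_golden_rule}: since $\qrt$ induced by $\fsc$ is a QRT, and the action of $\Theta$ on the "input state" slot is governed by a free channel, the golden rule for the \emph{static} theory gives the conclusion. Concretely, consider the superchannel $\Theta$ restricted to the subset of channels of the form $\cptp(\mathbb{C}\cdot A\rightarrow\mathbb{C}\cdot B)$ — i.e.\ tensoring the trivial input — which is captured by axiom~1 identifying $\qrt(A\rightarrow B)$ with $\fsc(\mathbb{C},A\rightarrow\mathbb{C},B)$; then $\Theta(\cE)\in\fsc(\mathbb{C},C\rightarrow\mathbb{C},D)=\qrt(C\rightarrow D)$.

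The main obstacle I anticipate is purely bookkeeping: matching up the trivial-system slots so that "$\Theta$ applied to the superchannel $\cE$" and "$\Theta$ composed with $\cE$ as superchannels with $\mathbb{C}$ in the input register" refer to the same object. One has to check that plugging $\mathbb{C}$ into the input systems $A,C$ of $\fsc(A,B\rightarrow C,D)$ genuinely yields the map $\Theta|_{\qrt(A\rightarrow B)}\colon\qrt(A\rightarrow B)\to\qrt(C\rightarrow D)$ — essentially that restricting a superchannel to state-preparation channels commutes with the identification $\cptp(\mathbb{C}\rightarrow X)\cong\dops(X)$. Once that identification is in place, the result is immediate from axiom~1 of a dynamical QRT: it directly guarantees that $\Theta(\cE)\in\fsc(\mathbb{C},C\rightarrow\mathbb{C},D)=\qrt(C\rightarrow D)$ whenever $\cE\in\fsc(\mathbb{C},A\rightarrow\mathbb{C},B)=\qrt(A\rightarrow B)$, so the proof is essentially a one-line unwinding of definitions once the slot-matching is made precise. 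I would present it as: "This is immediate from the identification $\qrt(A\rightarrow B)=\fsc(\mathbb{C},A\rightarrow\mathbb{C},B)$ in axiom~1 of \Cref{def:dynamical_qrt_definition}, since restricting $\Theta$ to the state-preparation channels $\cptp(\mathbb{C}\rightarrow A)\cong\dops(A)$ yields precisely a superchannel in $\fsc(\mathbb{C},C\rightarrow\mathbb{C},D)$ acting on $\cE$."
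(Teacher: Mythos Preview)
The paper gives no explicit proof; it only remarks that the static golden rule ``can straightforwardly be translated to dynamical QRTs.'' So there is no paper proof to compare against beyond that one sentence. Your proposal is in the right spirit --- invoke the identification $\qrt(A\rightarrow B)=\fsc(\mathbb{C},A\rightarrow\mathbb{C},B)$ and mimic the static argument via composition of free superchannels --- but it does not close.

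The gap is exactly where you flag it as ``bookkeeping,'' and it is not just bookkeeping. With $\hat\cE\in\fsc(\mathbb{C},A\rightarrow\mathbb{C},B)$ the output pair is $(\mathbb{C},B)$, while $\Theta\in\fsc(A,B\rightarrow C,D)$ has input pair $(A,B)$; axiom~3 therefore cannot be applied to the pair $(\hat\cE,\Theta)$. Your fallback, that axiom~1 ``directly guarantees'' $\Theta(\cE)\in\fsc(\mathbb{C},C\rightarrow\mathbb{C},D)$, is unjustified: axiom~1 only asserts that the induced $\qrt$ is a QRT with tensor product structure --- it says nothing about how an arbitrary free superchannel $\Theta\in\fsc(A,B\rightarrow C,D)$ acts on free channels. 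And the phrase ``restricting $\Theta$ to state-preparation channels $\cptp(\mathbb{C}\rightarrow A)$'' does not type-check, since $\Theta$ acts on $\cptp(A\rightarrow B)$, not on $\dops(A)$.

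The direct analog of the static proof would instead identify $\cE$ with the \emph{constant} superchannel $\Phi_\cE\in\schan(\mathbb{C},\mathbb{C}\rightarrow A,B)$ sending $\id_{\mathbb{C}}\mapsto\cE$; then axiom~3 yields $\Theta\circ\Phi_\cE=\Phi_{\Theta(\cE)}\in\fsc(\mathbb{C},\mathbb{C}\rightarrow C,D)$, exactly mirroring $\cE\circ\rho\in\qrt(\mathbb{C}\rightarrow B)$ in the static case. The remaining work is to reconcile this with the paper's chosen identification $\qrt(C\rightarrow D)=\fsc(\mathbb{C},C\rightarrow\mathbb{C},D)$, i.e.\ to check that the two natural embeddings of $\cptp(C\rightarrow D)$ into superchannels pick out the same free set. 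Your proposal does not supply this step, and it is precisely what you would need to make the argument go through.
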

Finally, we also introduce the analog notion of a resource measure for a dynamical QRT.
\begin{definition}{}{}
  A \emph{dynamical resource measure} for a dynamical QRT $\fsc$ is a map $M: \bigcup_{(\cH_A,\cH_B)}\cptp(A\rightarrow B) \rightarrow \mathbb{R}_{\geq 0}$ fulfilling the following two properties:
  \begin{enumerate}
    \item \emph{Monotonicity:} $\forall\Theta\in\fsc(A,B\rightarrow C,D), \forall\cE\in\qrt(A\rightarrow B): M(\Theta(\cE)) \leq M(\cE) $. 
    \item \emph{Normalization:} $M(\idchan_A)=0$ for all systems $A$.
  \end{enumerate}
\end{definition}
Here, $\cup_{(\cH_A,\cH_B)}$ denotes the union over all pairs of two Hilbert spaces.
Note that these two properties again imply that $M(\cE)=0$ for any free channel $\cE$.

\chapter{Quasiprobability simulation framework}\label{chap:qpsim}
\section{Motivational example: non-Clifford simulation}\label{sec:motivational_example}
We now explore a first application of the quasiprobability simulation (QPS) technique as an illustrative example.
Consider the following setup: suppose we have access to a quantum computer that is only capable of executing Clifford operations.
Specifically, this means the computer can take the description of some quantum circuit containing only Clifford gates along with qubit preparations and measurements in the computational basis, execute the circuit, and generate samples from its output distribution.
Such a computer is not universal, i.e., it cannot approximate any unitary with arbitrary precision.
In fact, the celebrated Gottesman-Knill theorem states that any such circuit can be efficiently simulated classically~\cite{gottesman1998_heisenberg,aaronson2004_improved}.

\begin{figure}
  \centering
  \includegraphics{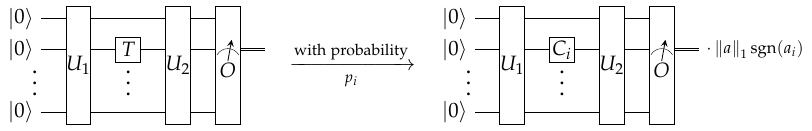}
  \caption{QPS of a Clifford circuit with one T gate. The T gate is probabilistically replaced with a Clifford gate $C_i$ and the measurement outcome of the circuits is correspondingly multiplied by the factor $\norm{a}_1\sgn(a_i)$.}
  \label{fig:qps_example_single_t}
\end{figure}

Now, the objective is to use this restricted quantum computer to simulate a general quantum circuit that may include both Clifford and non-Clifford gates.
For simplicity, we will assume that the only non-Clifford operations are T gates which are enough to elevate the Clifford group to a universal gate set.
We first illustrate the QPS technique on the simplest case which is a circuit containing only a single T gate, as depicted on the left of~\Cref{fig:qps_example_single_t}.
The state at the end of the $n$-qubit circuit is
\begin{equation}\label{eq:qps_example_circ}
  \omega \coloneqq \left(\indsupo{U_2}\circ \indsupo{\mathrm{T}}\circ \indsupo{U_1}\right)(\proj{0}^{\otimes n})
\end{equation}
where $U_1$ and $U_2$ are the Clifford unitaries representing all Clifford gates that occur before and after the T gate.
Recall that $\indsupo{U}(\rho)\coloneqq U\rho U^\dagger$ is the channel induced by the unitary $U$.
Let us assume that we are interested in estimating the expectation value $\tr[O\omega]$ of some observable $O$.
If we know a decomposition of the T gate channel in terms of Clifford gates, i.e.,
\begin{equation}\label{eq:T_qpd}
  \indsupo{\mathrm{T}} = \sum_{i=1}^m a_i \indsupo{C_i}
\end{equation}
for some coefficients $a\in\mathbb{R}^m$ and one-qubit Clifford gates $C_i$, then we can simply rewrite the expectation value as
\begin{align}
  \tr[O\omega]
  &= \sum_{i=1}^m a_i \tr\left[O \left(\indsupo{U_2}\circ \indsupo{C_i} \circ \indsupo{U_1}\right)(\proj{0})^{\otimes n}\right] \\
  \label{eq:expval_monte_carlo}
  &= \sum_{i=1}^m p_i \tr\left[O \left(\indsupo{U_2}\circ \indsupo{C_i} \circ \indsupo{U_1}\right)(\proj{0})^{\otimes n}\right] \cdot \sgn(a_i)\norm{a}_1
\end{align}
where $p_i\coloneqq \abs{a_i}/\norm{a}_1$ is a probability distribution $p_i\geq 0$ and $\sum_i p_i = 1$.
As depicted in~\Cref{fig:qps_example_single_t}, this probabilistic mixture provides us with an explicit Monte Carlo method to evaluate $\tr[O\omega]$ while only having access to a restricted Clifford quantum computer.
For every execution of the circuit, the T gate is randomly replaced by $C_i$ according to the probability $p_i$ and one keeps track of the sampled value of $i$.
At the end of the circuit, the measurement outcome of the observable is weighted by $\norm{a}_1\sgn(a_i)$.
Repeating this sampling procedure many times and then averaging the (weighted) measurement outcomes, \Cref{eq:expval_monte_carlo} tells us that this procedure will exactly retrieve $\tr[O\omega]$ in the limit of infinitely many shots.

Note that decompositions as in~\Cref{eq:T_qpd} do exist, for example
\begin{equation}\label{eq:optimal_T_qpd}
  \indsupo{\mathrm{T}} = \frac{1}{2}\idchan + \frac{1}{\sqrt{2}}\indsupo{S} - \frac{(\sqrt{2}-1)}{2}\indsupo{Z} \, .
\end{equation}
While the procedure outlined above allows us to retrieve the expectation value of the observable, it will typically require more shots to reach the same confidence compared to natively running the original circuit on a quantum computer that does support T gates.
The reason for this is that the circuit output is scaled up by a factor of $\pm\norm{a}_1$ (note that generally $\norm{a}_1$ will be $\geq 1$), so the variance increases by $\norm{a}_1^2$.
The number of shots required to reach a certain confidence thus also roughly increases with $\norm{a}_1^2$.
A rigorous statement on this sampling overhead will be made in~\Cref{sec:qpsim}.

To finish this exposition, let us consider how this technique can be applied to a circuit that contains more than one T gate.
The main idea here is that the quasiprobability decompositions of the individual T gates can be combined into one large quasiprobability decomposition of the total circuit.
For instance, if we have 3 T gates, then the channel describing the total circuit is\footnote{Note that our notation here is ambiguous about which qubits the T gates act on. They can in principle act on different qubits.}
\begin{equation}
  \indsupo{U_4}\circ \indsupo{\mathrm{T}} \circ \indsupo{U_3} \circ \indsupo{\mathrm{T}} \circ \indsupo{U_2} \circ \indsupo{\mathrm{T}} \circ \indsupo{U_1}
\end{equation}
where the $U_i$ are Clifford unitaries.
Inserting the decomposition from~\Cref{eq:T_qpd}, the channel of the total circuit can be decomposed into Clifford channels
\begin{equation}
  \sum\limits_{i_1,i_2,i_3} p_{i_1}p_{i_2}p_{i_3} \cdot \indsupo{U_4C_{i_3}U_3C_{i_2}U_2C_{i_1}U_1} \cdot \sgn(a_1a_2a_3)\norm{a}_1^3 \, .
\end{equation}
The Monte Carlo sampling procedure in this case works as follows: For every instance of the T gate in the circuit, we separately and independently sample a random Clifford gate $C_i$ to replace it.
The sampled indices ($i_1$, $i_2$ and $i_3$ in the example) are stored and at the end of the circuit we weight the measurement outcome correspondingly by either $+\norm{a}_1^3$ or $-\norm{a}_1^3$.

Notice how the 1-norms of the individual decompositions combine in a multiplicative fashion, so the 1-norm of the total circuit QPD is given by $\norm{a}_1^t$ where $t$ is the number of T gates.
The sampling overhead of the technique thus scales as $\mathcal{O}(\norm{a}_1^{2t})$.

Using the above QPS technique on top of a Gottesman-Knill simulator for Clifford circuits, one obtains a classical algorithm to simulate Clifford+T circuits with a runtime that remains efficient for a small (logarithmic) number of T gates.
It is not too surprising that QPS entails an exponential sampling overhead --- a polynomial overhead would have implied that quantum computers could be simulated efficiently with classical computers.
We also note that there exist asymptotically faster (but still exponential) classical simulation algorithms for Clifford+T circuits.
Though they usually utilize strong simulation of Clifford circuits and as such cannot be used to simulate Clifford+T circuits on a Clifford \emph{quantum} computer.
A more in-depth discussion will be presented in~\Cref{chap:magic}.

\section{Quasiprobability simulation}\label{sec:qpsim}
Consider a quantum computation described by some quantum circuit $C$ followed by the measurement of an observable $O\in\herm(A)$, which is assumed to be normalized $\opnorm{O}\leq 1$ without loss of generality.
Let us denote by $\omega\in\herm(A)$ the state that $C$ prepares.
Our ultimate goal is to estimate the expectation value $\tr[O\omega]$.

Furthermore, assume that we have access to a restricted quantum computer that cannot directly execute the circuit $C$, because it contains operations that it cannot physically implement\footnote{Note however, that the restricted quantum computer must be able to realize the measurement of the observable $O$.}.
Instead, assume that we know a certain family of circuits $C_1,\dots,C_m$ that the computer can execute, and let us denote by $\sigma_i$ the state prepared by $C_i$.
These circuits are carefully chosen such that they provide a quasiprobability decomposition (QPD) of the desired target state
\begin{equation}
  \omega = \sum_{i=1}^m a_i \sigma_i
\end{equation}
for some $a\in\mathbb{R}^m$.
\Cref{alg:qpsim} provides an unbiased estimate $\hat{O}$ of $\tr[O\omega]$ with an error that can be made arbitrarily small by increasing the number of circuit executions $n_s\in\mathbb{N}$ (also called number of \emph{shots}).

\begin{algorithm}
  \begin{algorithmic}
    \Require descriptions of $O$ and $C_i$, $a\in\mathbb{R}^m$, number of shots $n_s$
    \Ensure estimate $\hat{O}$ of $\tr[O\omega]$
    \State $r \gets $ empty list
    \Repeat{$n_s$}
      \State $i \gets$ sample random number in $\{1,\dots,m\}$ with probability $p_i=\abs{a_i}/\norm{a}_1$
      \State $o \gets$ execute circuit $C_i$ and measure $O$
      \State $o' \gets o\cdot \sgn(a_i)\norm{a}_1$
      \State append $o'$ to $r$
    \EndRepeat
    \State \textbf{return} $\mathrm{mean}(r)$
  \end{algorithmic}
  \caption{Quasiprobability simulation algorithm}
  \label{alg:qpsim}
\end{algorithm}

\begin{proposition}{QPS sampling overhead}{qps_overhead}
  Consider the setup described above for~\Cref{alg:qpsim}.
  Let $\epsilon > 0$ and $\delta\in (0,1]$.
  If $n_s \geq 2\left(\frac{\norm{a}_1}{\epsilon}\right)^2\ln\left(\frac{2}{\delta}\right)$ then $\abs{\hat{O}-\tr[O\omega]}\leq \epsilon$ with probability at least $1-\delta$.
\end{proposition}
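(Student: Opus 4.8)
The plan is to recognize the estimator $\hat{O}$ as a sample mean of i.i.d.\ bounded random variables and apply Hoeffding's inequality. First I would set up the probability space: let $I$ be a random index in $\{1,\dots,m\}$ drawn with probability $p_i = \abs{a_i}/\norm{a}_1$, and conditioned on $I=i$, let the measurement of $O$ on the state $\sigma_i$ produce a random outcome $o$. The single-shot estimator is the random variable $Z \coloneqq o\cdot\sgn(a_I)\norm{a}_1$, and $\hat{O} = \frac{1}{n_s}\sum_{k=1}^{n_s} Z_k$ where $Z_1,\dots,Z_{n_s}$ are i.i.d.\ copies of $Z$.

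The key steps are: (i) \emph{Unbiasedness.} Since the observable measurement on $\sigma_i$ has conditional expectation $\tr[O\sigma_i]$, I compute
\begin{equation}
  \mathbb{E}[Z] = \sum_{i=1}^m p_i \,\tr[O\sigma_i]\,\sgn(a_i)\norm{a}_1 = \sum_{i=1}^m a_i \tr[O\sigma_i] = \tr\!\left[O\sum_{i=1}^m a_i\sigma_i\right] = \tr[O\omega],
\end{equation}
using $p_i\sgn(a_i)\norm{a}_1 = a_i$ and linearity of the trace. Hence $\mathbb{E}[\hat O] = \tr[O\omega]$. (ii) \emph{Boundedness.} Because $\opnorm{O}\leq 1$, any measurement outcome $o$ of $O$ lies in $[-1,1]$ (the outcomes are eigenvalues of $O$, or more generally lie in the numerical range, which is contained in $[-\opnorm{O},\opnorm{O}]$). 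Therefore $\abs{Z} \leq \norm{a}_1$ almost surely, i.e.\ $Z \in [-\norm{a}_1, \norm{a}_1]$, an interval of width $2\norm{a}_1$. (iii) \emph{Concentration.} Hoeffding's inequality for the mean of $n_s$ i.i.d.\ variables supported on an interval of length $L = 2\norm{a}_1$ gives, for any $\epsilon>0$,
\begin{equation}
  \Pr\!\left[\,\abs{\hat O - \tr[O\omega]} > \epsilon\,\right] \;\leq\; 2\exp\!\left(-\frac{2 n_s \epsilon^2}{L^2}\right) \;=\; 2\exp\!\left(-\frac{n_s \epsilon^2}{2\norm{a}_1^2}\right).
\end{equation}
(iv) \emph{Solving for $n_s$.} Requiring the right-hand side to be at most $\delta$ is equivalent to $n_s \geq \frac{2\norm{a}_1^2}{\epsilon^2}\ln\!\left(\frac{2}{\delta}\right)$, which is exactly the stated bound; then the complementary event $\abs{\hat O - \tr[O\omega]}\leq\epsilon$ holds with probability at least $1-\delta$.

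I do not expect any serious obstacle here; this is a textbook application of Hoeffding's inequality. The only point that needs a sentence of care is justifying that the measurement outcome of a (possibly non-projective) observable $O$ with $\opnorm{O}\leq 1$ is confined to $[-1,1]$ — this follows since the outcomes are eigenvalues of $O$ when $O$ is measured projectively, and the reported value is an eigenvalue bounded in modulus by $\opnorm{O}$. One should also state explicitly that the $n_s$ shots are independent (each circuit execution samples $I$ afresh and runs independently), so that Hoeffding applies to genuinely i.i.d.\ summands. Everything else is arithmetic.
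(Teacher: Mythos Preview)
Your proposal is correct and follows essentially the same approach as the paper: both compute $\mathbb{E}[Z]=\tr[O\omega]$ via $p_i\sgn(a_i)\norm{a}_1=a_i$, bound $\abs{Z}\le\norm{a}_1$ using $\opnorm{O}\le 1$, apply Hoeffding's inequality to obtain $2\exp(-n_s\epsilon^2/(2\norm{a}_1^2))$, and solve for $n_s$. Your write-up is slightly more explicit about justifying the range of the measurement outcome, but otherwise the arguments are identical.
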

This is a direct consequence of Hoeffding's inequality (see~\Cref{app:hoeffding}).
\begin{proof}
  Denote by $X_1,\dots,X_{n_s}$ the i.i.d. random variables describing the weighted outcome of the $n_s$ different shots.
  The output of the QPS algorithms is thus $\hat{O}=\frac{1}{n_s}\sum_j X_j$.
  First, notice that
  \begin{equation}
    \mathbb{E}[\hat{O}] = \mathbb{E}[X_j] = \sum_i p_i \tr[O\sigma_i]\sgn(a_i)\norm{a}_1 = \tr[O\omega] \, .
  \end{equation}
  Since the $|X_i|$ are bounded by $\norm{a}_1$, Hoeffding's inequality (\Cref{thm:hoeffding}) states
  \begin{equation}
    \mathbb{P}\left[\abs{\hat{O}-\tr[O\omega]}\geq \epsilon\right] \leq 2\exp\left(-\frac{n_s\epsilon^2}{2\norm{a}_1^2}\right) \, .
  \end{equation}
  The right-hand side is $\leq \delta$ if and only if
  \begin{equation}
    n_s \geq \frac{2\norm{a}_1^2}{\epsilon^2}\ln\left(\frac{2}{\delta}\right) \, .
  \end{equation}
\end{proof}

\begin{remark}{Tightness of~\Cref{prop:qps_overhead}}{qps_overhead_tightness}
  Notice that~\Cref{prop:qps_overhead} only provides us with an upper bound for the required number of shots to achieve a certain confidence on the result.
  Indeed, for some circuits and decompositions, one might get lucky and the required number of shots could be much smaller.
  Consider for instance the extreme example where the observable $O$ is zero - clearly, a single execution of the circuit is sufficient to retrieve the desired expectation value with certainty.

  Still, there is a notion in which the statement is tight, because there do exist instances for which the number of shots described in~\Cref{prop:qps_overhead} is required, at least up to constant factors.
  More concretely, consider an observable $O$ with eigenvalues $\pm 1$ as well as a state $\omega$ such that $\tr[O\omega]=0$.
  In this case, the QPS estimator outputs one of the values $+\norm{a}_1$ or $-\norm{a}_1$ every shot, with probability $1/2$ each.
  By~\Cref{lem:hoeffding_tightness}, the required number of shots to estimate the true expectation value, up to error $\epsilon$ and with probability at least $1-\delta$, must be at least $\frac{1}{4}\left(\frac{\norm{a}_1}{\epsilon}\right)^2\ln\left(\frac{2}{15\delta}\right)$.

  We note that lower sampling overheads can sometimes be achieved when additional structure about the problem is given.
  For example, in \reference~\cite{harrow2024_optimal} a quadratic improvement is demonstrated for the task of circuit knitting under the condition that $O$ fulfills some sparsity condition.
\end{remark}

To minimize the sampling overhead, one should use a QPD of the circuit $C$ which exhibits the smallest possible 1-norm $\smash{\norm{a}_1}$.
Finding this optimal QPD is very difficult and generally intractable.
Instead, the standard approach consists of constructing the overall circuit QPD from decompositions of the individual components in the circuit\footnote{Interestingly, we will encounter some instances where we will go beyond this standard approach later in the thesis.}.
We illustrated this in the example from~\Cref{sec:motivational_example} where QPDs of individual T gates were combined.
To describe the technique more generally, let us write the circuit $C$ as a sequence of $n$ superoperators, which could e.g. represent gates or state preparations\footnote{Recall that states can be seen as channels (superoperators) with trivial input space.}
\begin{equation}
  \omega = \cE_n\circ \cE_{n-1} \circ \dots \circ \cE_2 \circ \cE_1 \, .
\end{equation}
Each circuit element $\cE_i$ is then decomposed into operations $\cE_i = \sum_{j} a_j^{(i)} \cF_j^{(i)}$ that our quantum computer can physically execute.
This in turn provides us with a valid QPD of the total circuit
\begin{align}\label{eq:qpd_circuit_elements}
  \omega 
  &= \sum_{j_1,j_2,\dots,j_n} a_{j_1}^{(1)}a_{j_2}^{(2)}\cdots a_{j_n}^{(n)} \left( \cF^{(n)}_{j_n} \circ \cF^{(n-1)}_{j_{n-1}} \circ \dots \circ \cF^{(2)}_{j_2} \circ \cF^{(1)}_{j_1} \right) \\
  &= \sum_{j_1,j_2,\dots,j_n} p_{j_1}^{(1)}p_{j_2}^{(2)}\cdots p_{j_n}^{(n)} \left( \cF^{(n)}_{j_n} \circ \cF^{(n-1)}_{j_{n-1}} \circ \dots \circ \cF^{(2)}_{j_2} \circ \cF^{(1)}_{j_1} \right) \nonumber\\
  & \quad\quad\quad\quad\quad \cdot \norm{a^{(1)}}_1\norm{a^{(2)}}_1\cdots \norm{a^{(n)}}_1 \sgn\left(\prod_k a_{j_k}^{(k)}\right)
\end{align}
where $p_{j}^{(i)}\coloneqq \abs{a_{j}^{(i)}} / \norm{a^{(i)}}_1$.

Note that \Cref{prop:qps_overhead} only makes a statement about the number of circuit executions in QPS.
There are a few other considerations that also impact the overall runtime of the technique, such as the size of the circuits $C_i$ and the complexity of sampling from the probability distribution.
If we construct the circuit QPD from individual circuit elements, as previously described, then both of these issues can be addressed.
First, the complexity of the circuits $C_i$ is then identical to the complexity of $C$ (assuming that the operations $\smash{\cF_j^{(i)}}$ can be implemented with constant overhead).
Furthermore, the probability distribution factorizes into separate distributions $\smash{(p_{j}^{(i)})_j}$ for every circuit element, which can be independently (and hence efficiently) sampled.
This justifies our focus on the number of circuit executions to measure the complexity of the QPS method.

As illustrated in~\Cref{eq:qpd_circuit_elements}, the 1-norm of the QPD we constructed for our circuit is given by the product of the 1-norms of the decompositions $\smash{\cE_i = \sum_{j} a_j^{(i)} \cF_j^{(i)}}$ of the individual circuit elements.
It is therefore advantageous to use QPDs of these circuit elements that have the smallest possible 1-norm.
Finding such optimal QPDs for small circuit elements is more tractable than for the entire circuit itself (albeit often still very difficult).
Given a fixed set $\ds\subset\hp(A\rightarrow B)$ that represents the operations that our restricted computer can perform, we define the \emph{quasiprobability extent} of $\cE\in\hp(A\rightarrow B)$ to be the smallest achievable 1-norm.
\begin{definition}{}{gamma}
  For a decomposition set $\ds\subset\hp(A\rightarrow B)$ and a target operation $\cE\in\hp(A\rightarrow B)$, the quasiprobability extent of $\cE$ w.r.t. $\ds$ is defined as
  \begin{equation}\label{eq:gamma}
    \gamma_{\ds}(\cE) \coloneqq \inf \left\{ \sum_{i=1}^m\abs{a_i} \,\middle\vert\, \cE = \sum\limits_{i=1}^m a_i \cF_i  \text{ where } m\geq 1, \cF_i\in \ds \textnormal{ and } a_i \in \mathbb{R} \right\} \, .
  \end{equation}
\end{definition}
If no valid decomposition exists, $\gamma_{\ds}$ is defined to be infinite.
\begin{example}
  If we pick $\ds$ to be all Clifford operations and $\cE=\indsupo{T}$ to be the T gate channel, then the QPD in~\Cref{eq:optimal_T_qpd} happens to be optimal, i.e., $\gamma_{\ds}(\indsupo{T}) = \sqrt{2}$.
  A proof of this fact will be presented in~\Cref{chap:magic}.
\end{example}

One aspect of~\Cref{def:gamma} that immediately stands out is that we allow for both $\cE$ and the elements of $\ds$ to possibly be non-CPTP maps.
Indeed, the quasiprobability simulation algorithm works with arbitrary linear superoperators and can be used to estimate expectation values of circuits $C$ that contain nonphysical operations in them.
We will encounter multiple settings throughout this thesis where simulating nonphysical target operations $\cE$ can be very useful.
Regarding the nonphysicalness of the decomposition set $\ds$, we will elaborate in~\Cref{sec:qps_intermediate_measurements} that more general Hermitian-preserving maps can capture processes that include intermediate measurements and post-processing, such as completely positive trace-non-increasing maps.

\begin{remark}{real versus complex coefficients}{}
  The QPS technique can straightforwardly be extended to also work with QPDs that include complex coefficients $a_i$ instead of real ones.
  The main advantage of QPDs with complex coefficients is that they are more expressive: For a Hermitian-preserving decomposition set $\ds\subset\hp(A\rightarrow B)$, $\spn_{\mathbb{R}}(\ds)$ is itself a subset of $\hp(A\rightarrow B)$.
  In comparison, $\spn_{\mathbb{C}}(\ds)$ can generally contain all of $\supo(A,B)$.

  In this work, we refrain from discussing complex QPDs due to two observations:
  \begin{enumerate}
    \item There is no physically meaningful situation known to us where we would be interested in a target operation $\cE\in \supo(A\rightarrow B)\setminus \hp(A\rightarrow B)$.
    \item For a target operation $\cE\in\hp(A\rightarrow B)$, the quasiprobability extent remains unaffected by the choice of allowing complex coefficients.
  \end{enumerate}
  To see why the second point holds, consider a QPD $\cE=\sum_ja_j\cF_j$ for $\cE,\cF_j\in\hp(A\rightarrow B)$ and $a_j\in\mathbb{C}$.
  By considering the QPD in the Choi representation $\choi{\cE}=\sum_j \re[a_j]\choi{\cF_j} + i\sum_j\im[a_j]\choi{\cF_j}$, we see that the skew-Hermitian component $i\sum_j\im[a_j]\choi{\cF_j}$ must be zero as $\choi{\cE}$ is Hermitian.
  Therefore, $\cE=\sum_i\re[a_i]\cF_i$ is also a valid QPD with at most the same 1-norm.
\end{remark}

Note that QPS can also be used to estimate the outcome probabilities of a circuit, since they can be phrased as the expectation value of some observable.
For instance, if at the end of a circuit $n$ qubits are measured in the computational basis yielding an outcome $x\in \{0,1\}^n$, then the probability $p(x)$ of obtaining $x$ is given by the expectation value of $O=\proj{x}$.
Unfortunately, when the distribution $p(x)$ is very flat (i.e. all probabilities are exponentially small in $n$) this is only of little use, as QPS only efficiently provides estimates of expectation values up to an \emph{absolute error} $\epsilon$, which would need to be exponentially small in $n$ to properly assess the structure of the distribution.
This is in stark contrast to some classical simulation algorithms which can efficiently estimate probability amplitudes up to a \emph{multiplicative error}, which can then be used to sample from the output distribution using the chain rule for conditional probabilities\footnote{The idea here is to sample one bit at a time in a sequential fashion, where each bit is sampled according to its (binary) marginal that is conditioned on the values of the previous bits. This technique is sometimes called the ``qubit-by-qubit'' algorithm in the context of stabilizer rank simulators~\cite{bravyi2022_simulate,bravyi2019_simulation}.}

There are still cases where the output distribution (or at least important aspects thereof) can be fully captured with QPS.
For instance, when the output distribution is sufficiently sparse, it becomes possible to accurately sample from it using QPS~\cite{pashayan2020_estimation}.
Similarly, when $n$ is very small (asymptotically speaking, i.e. constant or logarithmic in the problem size), then estimates up to an absolute error are sufficient to retrieve the distribution.
The extreme example thereof are decision problems, where $n=1$.
Therefore, QPS is applicable to circuits corresponding to languages in the complexity class BQP.
More practically speaking, many applications of quantum computing are anyway primarily concerned with evaluating expectation values.
This is the case for variational algorithms or for estimating physical observables in the context of Hamiltonian simulation of physical systems.
Furthermore, other quantum algorithms, like Shors's factoring algorithm and quantum phase estimation, can be decomposed in a series of decision problems~\cite[Appendix H]{suzuki2022_qem}.

In general, there are many notions of what it means to ``simulate'' a quantum circuit, and the differences between them are often subtle.
We refer readers that are further interested in this matter to the comprehensive treatment in \reference~\cite{pashayan2020_estimation}.
In the language of these authors, QPS constitutes a ``poly-box''.
Under certain complexity theoretic assumptions, it can be proven that poly-boxes provide a strictly weaker simulation compared to ``$\epsilon$-simulators'', which arguably constitute the operationally most meaningful notion of simulation\footnote{It is statistically impossible to distinguish an agent that can execute a quantum circuit from another agent which only possesses an $\epsilon$-simulator thereof.}.


\smallskip
We end this section by briefly highlighting a special instance where the input space of a circuit element $\cE\in\hp(A\rightarrow B)$ is trivial $A=\mathbb{C}$.
In this case, $\cE$ is a state preparation channel and can be directly identified with a state $\rho\in\herm(B)$.
Put differently, QPS cannot only be used to simulate certain gates in a circuit, but also to simulate the preparation of states that are used in the circuit.
The quasiproability extent of a state $\rho\in\herm(B)$ w.r.t. some decomposition set $\ds\subset\herm(B)$ is analogously defined as
\begin{equation}
  \gamma_{\ds}(\rho) \coloneqq \inf \left\{ \sum_{i=1}^m\abs{a_i} \,\middle\vert\, \rho = \sum\limits_{i=1}^m a_i \tau_i  \text{ where } m\geq 1, \tau_i\in \ds \textnormal{ and } a_i \in \mathbb{R} \right\} \, .
\end{equation}
\begin{example}\label{ex:state_qpd_nonclifford}
  We briefly illustrate the concept of state QPDs with the Clifford+T setting from~\Cref{sec:motivational_example}.
  It is a widely known trick that a T gate can be realized using a Clifford circuit that consumes a single-qubit magic state
  \begin{equation}
    \ket{H} \coloneqq \frac{1}{\sqrt{2}}\left(\ket{0} + e^{i\pi / 4} \ket{1} \right) 
  \end{equation}
  as depicted in~\Cref{fig:magic_state_injection}.
  Using this gadget, the action of an arbitrary Clifford+T circuit $\mathcal{C}$ with $m$ T gates can be rewritten as 
  \begin{equation}
    \mathcal{C}(\proj{0}^{\otimes n}) = \mathcal{C}'(\proj{0}^{\otimes n}\otimes\proj{H}^m) 
  \end{equation}
  for some appropriately chosen Clifford circuit $\mathcal{C}'$.

  Using a QPD of the magic state
  \begin{equation}\label{eq:magic_state_qpd}
    \proj{H} = \frac{1}{2}\proj{+} + \frac{1}{\sqrt{2}}\proj{i+} - \frac{(\sqrt{2}-1)}{2}\proj{-}
  \end{equation}
  into stabilizer states (i.e. states which are reachable from a computational basis state using Clifford gates), we can write the original circuit $\mathcal{C}$ as a quasiprobabilistic mixture of Clifford circuits and hence apply QPS.

  In fact, this state-based approach for Clifford+T simulation is historically more commonly considered in literature compared to the channel-based approach~\cite{heinrich2019_robustness,howard2017_application}.
  Note also that the QPD in~\Cref{eq:magic_state_qpd} has the same 1-norm of $\sqrt{2}$ as the channel-based one in~\Cref{eq:optimal_T_qpd}.
\end{example}
\begin{figure}
  \centering
  \includegraphics{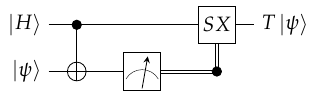}
  \caption{The magic state injection circuit realizes a T gate using a magic state and Clifford operations. The $X$ and $S$ gate on the upper qubit are only performed if the measurement outcome is $1$.}
  \label{fig:magic_state_injection}
\end{figure}

\section{QPS with classical side information}\label{sec:qps_intermediate_measurements}
The optimal simulation overhead for some target operation $\cE\in\supo(A\rightarrow B)$ strongly depends on the decomposition set $\ds$ with respect to which we consider its QPD.
At first, one might expect that $\ds$ always has to be chosen as some subset of $\cptp(A\rightarrow B)$, since these are precisely the physically realizable maps.
In this section, we will see that it can be very useful to allow $\ds$ to also contain certain trace-non-increasing maps and even certain non-completely positive maps.
This should be understood as a mathematical method to capture the idea that intermediate measurements can produce classical side information during the circuit execution, as will be explained below.
We will first consider two explanatory examples before explaining the trick in full generality.

As an example, suppose we want to simulate the one-qubit trace-non-increasing superoperator $\cE=\indsupo{\proj{0}}$ which maps $\rho\mapsto \braopket{0}{\rho}{0}\cdot \proj{0}$.
Clearly, no valid QPD of $\indsupo{\proj{0}}$ w.r.t. a decomposition set $\ds\subset\cptp(A\rightarrow B)$ can exist, because any linear combination of trace-preserving maps is itself proportional to a trace-preserving map.
Yet still, there is a way to simulate $\indsupo{\proj{0}}$ using CPTP maps.
To see this, notice that the action of $\indsupo{\proj{0}}$ can be written as
\begin{equation}
  \indsupo{\proj{0}}(\rho) = (\id\otimes \bra{0}_X)\cdot \mathcal{M}(\rho) \cdot (\id\otimes \ket{0}_X) \, .
\end{equation}
where the CPTP map $\mathcal{M}$ denotes a measurement in the computational basis
\begin{equation}
  \mathcal{M}(\rho) \coloneqq \sum_{i=0,1} \indsupo{\proj{i}}(\rho) \otimes \proj{i}_X
\end{equation}
that stores the outcome in a classical ancilla register $X$.
Consider now that we want to estimate the expectation value of an observable $O$ on $\indsupo{\proj{0}}(\rho_{\mathrm{in}})$ for some input state $\rho_{\mathrm{in}}$
\begin{equation}
  \tr\left[O\cdot \indsupo{\proj{0}}(\rho_{\mathrm{in}})\right]
  = 
  \tr\left[(O\otimes\proj{0}_X)\cdot \mathcal{M}(\rho_{\mathrm{in}})\right] \, .
\end{equation}
Notice how we replaced $\indsupo{\proj{0}}$ with $\mathcal{M}$ by extending the observable to also act on the classical register $X$.
Put differently, our simulation procedure operates as follows: replace $\indsupo{\proj{0}}$ by a computational basis measurement, and in case the measurement result corresponds to $\proj{1}$, then weight the final measurement outcome by zero.
Note that this is not post-selection, as the ``discarded'' runs are still taken into account when averaging over multiple runs, they just have value zero.

It is illustrative to briefly consider a second example.
This time, we want to simulate a target operation $\cE=\indsupo{\proj{0}}-\indsupo{\proj{1}}$ which is not even completely positive.
Here again, if we only care about the expectation value of some final observable $O$, we can simulate $\cE$ by replacing it with a computational basis measurement
\begin{equation}
  \tr\left[O\cdot (\indsupo{\proj{0}}-\indsupo{\proj{1}})(\rho_{\mathrm{in}})\right]
  = 
  \tr\left[(O\otimes(\proj{0}_X-\proj{1}_X))\cdot \mathcal{M}(\rho_{\mathrm{in}})\right]
  \, .
\end{equation}
and by weighting the circuit outcome by either $+1$ or $-1$ depending on the intermediate measurement outcome.

The main insights of the previous two examples is that \emph{for the purpose of estimating some expectation value} one can use classical side information provided by intermediate measurements to simulate non-CPTP maps by weighting the final observable measurement by some factor depending on the intermediate measurement outcome.
The most general operation that produces classical information is described by a quantum instrument.
Given a quantum instrument $(\cG_i)_i$ that our computer can physically realize, this technique allows us to effectively perform any operation $\sum_i \beta_i \cG_i$ for some weighting factors $\beta_i\in\mathbb{R}$.

For completeness, we state the modified quasiprobability simulation algorithm with classical side information.
Consider a given target circuit $C$ preparing a state $\omega\in\herm(A)$.
Furthermore, consider that we have a family of circuits that can be executed on the restricted quantum computer $C_1,\dots,C_m$ where $C_i$ prepares the classical-quantum state $\sigma_i=\sum_j \tau_j^{(i)}\otimes \proj{j}$.
These circuits shall form a valid QPD in the sense that $\omega=\sum_i a_i (\sum_j \beta_j^{(i)}\tau_j)$ for some $a\in\mathbb{R}^m$ and $\beta_{j}^{(i)}\in\mathbb{R}$.
\Cref{alg:qpsim_intermediate_measurements} produces an estimate $\hat{O}$ of $\tr[O\omega]$.
\begin{algorithm}
  \begin{algorithmic}
    \Require descriptions of $O$ and $C_i$, $a\in\mathbb{R}^m$, $\beta_j^{(i)}\in\mathbb{R}$, number of shots $n_s$
    \Ensure estimate $\hat{O}$ of $\tr[O\omega]$
    \State $r \gets $ empty list
    \Repeat{$n_s$}
      \State $i \gets$ sample random number in $\{1,\dots,m\}$ with probability $p_i=\abs{a_i}/\norm{a}_1$
      \State execute Circuit $C_i$ to prepare CQ state $\sum_j\tau_j^{(i)}\otimes\proj{j}_X$
      \State $j\gets$ value of classical register $X$
      \State $o \gets$ measurement outcome of observable $O$ on $\tau_j^{(i)}$
      \State $o' \gets o\cdot \sgn(a_i)\norm{a}_1\beta_j^{(i)}$
      \State append $o'$ to $r$
    \EndRepeat
    \State \textbf{return} $\mathrm{mean}(r)$
  \end{algorithmic}
  \caption{QPS algorithm with classical side information}
  \label{alg:qpsim_intermediate_measurements}
\end{algorithm}

It is easy to check that $\hat{O}$ is an unbiased estimator for the true expectation value as
\begin{align}
  \tr[O\cdot\omega]
  &= \sum_i a_i \tr\left[ O\cdot \left(\sum_j\beta_j^{(i)}\tau_j^{(i)}\right) \right]  \\
  &= \sum_i p_i \tr\left[ \left( O\otimes \sum_k \beta_k^{(i)}\proj{k}\right)\cdot \left(\sum_j \tau_j^{(i)}\otimes \proj{j}\right) \right] \norm{a}_1\sgn(a_i) \, .
\end{align}
In order for~\Cref{prop:qps_overhead} to still apply, we must also make sure that we don't scale up the outcome of the estimator to a magnitude larger than $\norm{a}_1$.
For this reason, we restrict ourselves coefficients of magnitude $\abs{\smash{\beta_j^{(i)}}}\leq 1$.

In summary, given a quantum computer that can realize some set of quantum instruments $Q$, the associated decomposition set which captures the capabilities of that computer is given as follows.
\begin{definition}{}{}
  Let $Q\subset\qi(A\rightarrow B)$ be a set of quantum instruments.
  We define
  \begin{equation}
    \qids[Q] \coloneqq \{ \sum_i \beta_i \cE_i | (\cE_i)_i\in Q, \beta_i\in [-1,1] \} \, .
  \end{equation}
\end{definition}
The notation is meant to mirror the definition in~\Cref{eq:qichandef}.

The two introductory examples in this section have shown that classical side information can enable the simulation of circuits containing nonphysical superoperators that would be out of reach otherwise, i.e., 
\begin{equation}
  \gamma_{\qids[\qi]}(\indsupo{\proj{0}})=\gamma_{\qids[\qi]}(\indsupo{\proj{0}}-\indsupo{\proj{1}})\leq 1
\end{equation}
whereas
\begin{equation}
\gamma_{\cptp}(\indsupo{\proj{0}}) = \gamma_{\cptp}(\indsupo{\proj{0}}-\indsupo{\proj{1}}) = \infty \, .
\end{equation}
An arguably even more interesting question is, whether the utilization of classical side information can also help to lower the quasiprobability extent to simulate certain physical CPTP maps.
This is one of the key questions that we will investigate in this thesis, and interestingly, the answer will be different depending on the considered setting.

We finish this chapter by providing a simple mathematical characterization of $\qids[Q]$.
For this, we introduce following notions for sets of quantum instruments.
\begin{definition}{}{}
  For a set of quantum instruments $Q\subset \qi(A\rightarrow B)$ we say that
  \begin{itemize}
    \item $Q$ is \emph{coarse-grainable} if $\forall \cI\in Q$, any coarse-graining of $\cI$ is also in $Q$.
    \item $Q$ is \emph{trivially fine-grainable} if $\forall (\cE_i)_i\in Q$ we have $((1-p_i)\cE_i)_i\cup (p_i\cE_i)_i \in Q$ for any choice of probabilities $p_i\in [0,1]$.
    \item $Q$ is \emph{closed under mixture} if for any two instruments $(\cE_i)_i$ and $(\cF_j)_j$ in $Q$, their mixture $(p\cE_i)_i\cup ((1-p)\cF_j)_j$ is also in $Q$, for any $p\in [0,1]$.
  \end{itemize}
\end{definition}
These properties are extremely natural when describing the set of quantum instruments $Q$ that a hypothetical quantum computer can implement, and all three properties should hold in most physically relevant settings.
Indeed, $Q$ being coarse-grainable simply means that one can discard classical information and treat different classical outcomes as the same.
Trivial fine-grainability essentially means that one can generate a random bit with the probability depending on the observed measurement outcome.
Similarly, when $Q$ is closed under mixture, then randomly choosing between two quantum instruments is itself also a quantum instrument in $Q$.

At first glance, the trick of using classical side information might seem to add a lot of complexity to the process of finding QPDs, as elements in the decomposition set can now consist of a possibly infinite sum of quantum instrument maps.
Fortunately, for a sufficiently nice underlying set of quantum instruments $Q$, it turns out that we can restrict ourselves to only considering the two-element instruments in $Q$ and coefficients $\beta_i=\pm 1$.
\begin{lemma}{}{characterization_expanded_decomposition_set}
  If $Q\subset\qi(A\rightarrow B)$ is coarse-grainable and trivially fine-grainable, then 
  \begin{equation}
    \qids[Q] = \{ \cE^+ - \cE^- | (\cE^+,\cE^-)\in Q\} \, .
  \end{equation}
\end{lemma}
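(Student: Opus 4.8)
The plan is to prove the two inclusions separately. The direction $\{\cE^+-\cE^-\mid (\cE^+,\cE^-)\in Q\}\subseteq\qids[Q]$ is immediate from the definition of $\qids[Q]$: a two-element instrument $(\cE^+,\cE^-)\in Q$ gives $\cE^+-\cE^- = \beta_1\cE^+ + \beta_2\cE^-$ with $\beta_1=1$, $\beta_2=-1$, both in $[-1,1]$. So the entire content is the forward inclusion $\qids[Q]\subseteq\{\cE^+-\cE^-\mid(\cE^+,\cE^-)\in Q\}$, i.e., given $(\cE_i)_i\in Q$ and $\beta_i\in[-1,1]$, I must exhibit a \emph{single} two-element instrument $(\cE^+,\cE^-)\in Q$ with $\sum_i\beta_i\cE_i = \cE^+-\cE^-$.

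\textbf{Key steps for the forward inclusion.} First, split each coefficient as $\beta_i = \beta_i^+ - \beta_i^-$ with $\beta_i^+,\beta_i^-\in[0,1]$ (e.g. $\beta_i^+=\max(\beta_i,0)$, $\beta_i^-=\max(-\beta_i,0)$), so $\sum_i\beta_i\cE_i = \sum_i\beta_i^+\cE_i - \sum_i\beta_i^-\cE_i$. Next, use trivial fine-grainability: applied to the instrument $(\cE_i)_i$ with the probability choice $p_i = \beta_i^+$, it splits each component $\cE_i$ into the pair $(\beta_i^+\cE_i,\,(1-\beta_i^+)\cE_i)$, producing an instrument in $Q$ whose components are $\{\beta_i^+\cE_i\}_i\cup\{(1-\beta_i^+)\cE_i\}_i$. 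Applying trivial fine-grainability again to this instrument — now splitting the $(1-\beta_i^+)\cE_i$ components with probability $\beta_i^-/(1-\beta_i^+)$ (and splitting the $\beta_i^+\cE_i$ components trivially with probability $0$, or just leaving them) — yields an instrument in $Q$ that contains, among its components, the families $\{\beta_i^+\cE_i\}_i$ and $\{\beta_i^-\cE_i\}_i$ (together with a leftover family $\{(1-\beta_i^+-\beta_i^-)\cE_i\}_i$; note $\beta_i^++\beta_i^-=|\beta_i|\le 1$ so these are genuine CP maps). Then apply coarse-grainability: merge all components of the form $\beta_i^+\cE_i$ into one outcome $\cE^+\coloneqq\sum_i\beta_i^+\cE_i$, merge all $\beta_i^-\cE_i$ into one outcome $\cE^-\coloneqq\sum_i\beta_i^-\cE_i$, and merge the leftover family $\{(1-\beta_i^+-\beta_i^-)\cE_i\}_i$ into the residual outcome $\cG\coloneqq\sum_i(1-|\beta_i|)\cE_i$. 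This gives a three-element instrument $(\cE^+,\cE^-,\cG)\in Q$; one final coarse-graining step merging $\cE^-$ with $\cG$ — wait, that would change the difference — instead one should observe $\cE^+-\cE^- = \sum_i\beta_i\cE_i$ directly, and if a strict two-element instrument is required, absorb $\cG$ by merging it into whichever of $\cE^+,\cE^-$ one likes while simultaneously adjusting: actually the cleanest fix is to note that coarse-graining $(\cE^+,\cE^-,\cG)$ into $(\cE^+ , \cE^-+\cG)$ or $(\cE^++\cG,\cE^-)$ changes the represented difference, so instead one keeps three elements but observes that the statement as written asks for a two-element instrument — here one should instead split differently so that $\cG$ is itself zero, which happens precisely when $\sum_i|\beta_i|$-type normalization works out, \emph{or} one reads the Lemma's RHS as permitting $\cE^-$ to already include $\cG$ provided we re-split.

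\textbf{The main obstacle} is exactly this bookkeeping with the leftover mass $\cG=\sum_i(1-|\beta_i|)\cE_i$: it is generally nonzero, so a naive three-outcome construction does not immediately collapse to the two-outcome form $(\cE^+,\cE^-)$ demanded by the Lemma. The resolution is to \emph{first} coarse-grain $(\cE_i)_i$ down to the single-component (trivial) instrument $\mathcal{F}\coloneqq\sum_i\cE_i$, which is in $Q$ by coarse-grainability, and then \emph{fine-grain} $\mathcal{F}$ directly — but this loses the per-component structure needed to extract the right $\beta_i$-weighted sums, so one cannot coarse-grain first. The correct order is: fine-grain twice (as above) to isolate the families $\{\beta_i^+\cE_i\}_i$, $\{\beta_i^-\cE_i\}_i$, $\{(1-|\beta_i|)\cE_i\}_i$, then coarse-grain the first family to $\cE^+$ and coarse-grain the \emph{other two families together} — but weighted correctly: since we want $\cE^+ - \cE^- = \sum\beta_i\cE_i = \cE^+ - \sum_i\beta_i^-\cE_i$, we genuinely need $\cE^- = \sum_i\beta_i^-\cE_i$ exactly, with the residual $\cG$ as a \emph{third} outcome. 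Thus the Lemma as literally stated forces us to handle $\cG$: the honest move is to coarse-grain $(\cE^+,\cG,\cE^-)$ by merging $\cE^+$ with $\cG$ into a new outcome $\widetilde{\cE^+}\coloneqq\cE^++\cG = \sum_i(\beta_i^++1-|\beta_i|)\cE_i = \sum_i(1-\beta_i^-)\cE_i$, giving the two-element instrument $(\widetilde{\cE^+},\cE^-)\in Q$, and then checking $\widetilde{\cE^+}-\cE^- = \sum_i(1-\beta_i^-)\cE_i - \sum_i\beta_i^-\cE_i = \sum_i(1-2\beta_i^-)\cE_i$ — which is \emph{not} $\sum_i\beta_i\cE_i$ in general, so this fails too. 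The genuine fix, which I would pursue carefully in the writeup, is: apply trivial fine-grainability to split the \emph{marginal channel} $\qichan[(\cE_i)_i]=\sum_i\cE_i$ into $(\cE^+,\text{rest})$ where $\cE^+ = \sum_i\beta_i^+\cE_i$ is obtained by a single fine-graining with probabilities chosen component-wise — this requires first coarse-graining $(\cE_i)_i$ appropriately then fine-graining, iterating — ultimately producing $(\cE^+, \cE^- + \cG)$ and then a \emph{further} fine-graining of the second outcome into $(\cE^-,\cG)$ followed by coarse-graining $\cG$ back onto $\cE^+$; tracking that this yields exactly $(\sum_i\beta_i^+\cE_i + \sum_i(1-|\beta_i|)\cE_i,\ \sum_i\beta_i^-\cE_i)$ whose difference is $\sum_i(\beta_i^+ + 1 - |\beta_i| - \beta_i^-)\cE_i = \sum_i(1-2\beta_i^-)\cE_i$ again — so the residual must instead be split \emph{proportionally} between $\cE^+$ and $\cE^-$ to preserve the difference: assign fraction $\tfrac12$ of each $(1-|\beta_i|)\cE_i$ to $\cE^+$ and $\tfrac12$ to $\cE^-$, giving $\cE^+=\sum_i(\beta_i^+ + \tfrac12(1-|\beta_i|))\cE_i$, $\cE^-=\sum_i(\beta_i^-+\tfrac12(1-|\beta_i|))\cE_i$, and indeed $\cE^+-\cE^- = \sum_i(\beta_i^+-\beta_i^-)\cE_i = \sum_i\beta_i\cE_i$, while $\cE^++\cE^- = \sum_i\cE_i = \qichan[(\cE_i)_i]$ is trace-preserving, so $(\cE^+,\cE^-)$ is a legitimate two-element instrument — obtainable from $(\cE_i)_i$ by fine-graining each component $\cE_i$ with probability $\tfrac{\beta_i^+ + \tfrac12(1-|\beta_i|)}{1}$ and then coarse-graining, hence in $Q$. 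This is the construction I would write up in detail.
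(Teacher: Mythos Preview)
Your final construction --- fine-graining each $\cE_i$ with probability $p_i = \beta_i^+ + \tfrac{1}{2}(1-|\beta_i|)$ and then coarse-graining into two outcomes --- is correct and produces exactly the two-element instrument $\bigl(\cE^+ + \tfrac{1}{2}\cE^0,\ \cE^- + \tfrac{1}{2}\cE^0\bigr)$ that the paper constructs. The paper reaches it by the same ``split the residual $\cE^0=\cG$ in half'' idea, just organised more directly: fine-grain with $p_i=|\beta_i|$, coarse-grain to the three-element instrument $(\cE^+,\cE^-,\cE^0)$, then fine-grain $\cE^0$ into two equal halves and coarse-grain one half onto each of $\cE^\pm$ --- so you could skip the dead ends in your write-up and go straight to this.
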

\begin{proof}
  Consider an element $\cE = \sum_i \beta_i \cE_i \in \qids[Q]$ where $(\cE_i)_i\in Q$, $\abs{\beta_i}\leq 1$.
  By the trivially fine-grainable property of $Q$, we can also express $\cE$ as a weighted sum of another instrument in $Q$ using only the coefficients $+1,-1$ and $0$:
  \begin{equation}
    \cE = \sum_i \sgn(\beta_i) \left(\abs{\beta_i} \cE_i\right) + \sum_i 0\cdot \left((1-\abs{\beta_i}) \cE_i\right)
  \end{equation}
  We now coarse-grain this instrument into a three-element instrument $(\cE^+,\cE^-,\cE^0)$
  \begin{align}
    \cE^+ &\coloneqq \sum_{i : \beta_i \geq 0} \abs{\beta_i}\cE_i \\
    \cE^- &\coloneqq \sum_{i : \beta_i < 0} \abs{\beta_i}\cE_i \\
    \cE^0 &\coloneqq \sum_{i} (1-\abs{\beta_i})\cE_i
  \end{align}
  which is still in $Q$ by the coarse-grainability property.
  We thus have
  \begin{equation}
    \cE = \cE^+ - \cE^- + 0\cdot \cE^0 \, .
  \end{equation}
  Finally, we use the fine-grainability and coarse-grainability properties once more to argue that $(\cE^+,\cE^-,\frac{1}{2}\cE^0,\frac{1}{2}\cE^0)$ and thus $(\cE^++\frac{1}{2}\cE^0,\cE^-+\frac{1}{2}\cE^0)$ are quantum instruments in $Q$.
  The desired statement thus follows from
  \begin{equation}
    \cE = (\cE^++\frac{1}{2}\cE^0) - (\cE^-+\frac{1}{2}\cE^0) \, .
  \end{equation}
\end{proof}

\section{Basic properties of the quasiprobability extent}\label{sec:qps_basic_properties}
In this section, we explore some central mathematical properties of the quasiprobability extent.
They will be repeatedly used in later chapters, so we collect them here for convenient reference.
The first one is straightforward, but provides an important tool for finding lower and upper bounds.
\begin{lemma}{}{gamma_ds_bound}
  If $\ds\subset\ds'\subset\hp(A\rightarrow B)$ then $\forall\cE\in\hp(A\rightarrow B): \gamma_{\ds}(\cE) \geq \gamma_{\ds'}(\cE)$.
\end{lemma}
\begin{proof}
  This follows directly from the definition of $\gamma$, as any QPD of $\cE$ w.r.t. $\ds$ is also a valid QPD w.r.t. $\ds'$.
\end{proof}

\paragraph{Convex and absolutely convex decomposition sets}
In most practical instances, the decomposition set $\ds$ is convex.
Clearly, convexity is a very physical property: if our computer has the capabilities to physically realize the operations $\cF_1$ and $\cF_2$, then it will likely be able to also realize $(1-p)\cF_1+p\cF_2$ for some $p\in [0,1]$ by randomly choosing one of the two with the according probability.
Convex decomposition sets have the useful property that we can without loss of generality only consider QPDs with at most two elements.
\begin{lemma}{}{two_element_qpd}
  For a convex decomposition set $\ds\subset\hp(A\rightarrow B)$ and $\cE\in\hp(A\rightarrow B)$ one has
  \begin{equation}\label{eq:two_element_qpd}
    \gamma_{\ds}(\cE) = \inf \Big\{ a^+ + a^- \Big\vert \cE = a^+\cF^+ - a^-\cF^- \text{ where } \cF^{\pm}\in\ds \textnormal{ and } a^{\pm}\geq 0 \Big\} \, .
  \end{equation}
\end{lemma}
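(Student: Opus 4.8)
The plan is to show the two quantities are equal by establishing both inequalities. The direction ``$\geq$'' in \Cref{eq:two_element_qpd} (i.e.\ the infimum on the right is at most $\gamma_{\ds}(\cE)$) is the substantive one and uses convexity; the direction ``$\leq$'' is immediate since any two-element QPD $\cE=a^+\cF^+-a^-\cF^-$ is in particular a valid QPD in the sense of \Cref{def:gamma} with $1$-norm $a^++a^-$.

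For the main direction, I would start from an arbitrary QPD $\cE=\sum_{i=1}^m a_i\cF_i$ with $\cF_i\in\ds$, $a_i\in\mathbb{R}$, and $\sum_i\abs{a_i}$ arbitrarily close to $\gamma_{\ds}(\cE)$. Split the index set into $I_+=\{i:a_i\geq 0\}$ and $I_-=\{i:a_i<0\}$, and set $a^+\coloneqq\sum_{i\in I_+}a_i$ and $a^-\coloneqq\sum_{i\in I_-}\abs{a_i}$, so that $a^++a^-=\sum_i\abs{a_i}$. If $a^+>0$, define $\cF^+\coloneqq\frac{1}{a^+}\sum_{i\in I_+}a_i\cF_i$, which is a convex combination of elements of $\ds$ and hence lies in $\ds$ by convexity; similarly define $\cF^-\in\ds$ from the negative part when $a^->0$. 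Then $\cE=a^+\cF^+-a^-\cF^-$ is a two-element QPD with $1$-norm $a^++a^-=\sum_i\abs{a_i}$, so the right-hand infimum is $\leq\sum_i\abs{a_i}$; taking the infimum over all original QPDs gives the claim.

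The only real obstacle is the bookkeeping of degenerate cases where one of $I_+$, $I_-$ is empty (equivalently $a^+=0$ or $a^-=0$). If, say, $a^-=0$, then $\cE$ is already a nonnegative multiple of a single element of $\ds$; one can pad the decomposition by writing $\cE=a^+\cF^+-0\cdot\cF^-$ for any fixed $\cF^-\in\ds$ (note $\ds$ is nonempty whenever a QPD exists, and if no QPD exists both sides are $+\infty$), so the claimed form is still available. Handling the case $\cE=0$ (take $a^+=a^-=0$) and the case where no decomposition exists (both sides infinite) completes the argument. No serious analytic difficulty arises: the infimum is not claimed to be attained, so no compactness is needed, and the reduction is purely algebraic once convexity of $\ds$ is invoked.
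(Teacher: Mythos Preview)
Your proof is correct and follows essentially the same approach as the paper: group the positive and negative terms of an arbitrary QPD, use convexity of $\ds$ to form $\cF^{\pm}$ as convex combinations, and observe that the resulting two-element decomposition has the same $1$-norm. Your treatment is slightly more thorough in that you explicitly address the degenerate cases ($a^+=0$ or $a^-=0$, $\cE=0$, no QPD exists), which the paper leaves implicit.
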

\begin{proof}
  Any QPD $\cE=\sum_ia_i\cF_i$ with $\cF_i\in\ds$ induces a two-element QPD with the same 1-norm by grouping together all positive and negative terms
  \begin{equation}
    \cE = a^+\cF^+ - a^-\cF^-
  \end{equation}
  where 
  \begin{equation}
    a^{+}\coloneqq \sum_{i:a_i\geq 0}a_i \, ,
    \quad\quad
    a^{-}\coloneqq -\sum_{i:a_i<0}a_i
  \end{equation}
  and
  \begin{equation}
    \cF^{+}\coloneqq \frac{\sum_{i:a_i\geq 0}a_i\cF_i}{a^+} \, ,
    \quad\quad
    \cF^{-}\coloneqq -\frac{\sum_{i:a_i<0}a_i\cF_i}{a^-} \, ,
  \end{equation}
  where $\cF^{\pm}\in\ds$ due to convexity and $a^++a^- = \sum_i\abs{a_i}$.
\end{proof}
In~\Cref{sec:qps_intermediate_measurements} we saw that it can be meaningful to think of the set of quantum instruments $Q$ that a computer can realize, in order to capture the notion of classical side information.
When $Q$ is closed under mixture, then the induced decomposition set has an even stronger property than convexity.
\begin{definition}{}{}
  A subset $X$ of a real vector space $V$ is called \emph{absolutely convex} if it is convex and balanced (i.e. $\forall x\in X,\forall \lambda\in [-1,1]: \lambda x\in X$).
\end{definition}
It is an easy exercise to verify that absolute convexity is equivalent to being closed under \emph{absolutely convex mixtures}: $\sum \lambda_i x_i$ for $x_i\in X$ and $\sum_i\abs{\lambda_i}\leq 1$.
\begin{lemma}{}{decompset_absolutely_convex}
  If $Q\subset\qi(A\rightarrow B)$ is closed under mixture, then $\qichan[Q]$ is convex and $\qids[Q]$ is absolutely convex.
\end{lemma}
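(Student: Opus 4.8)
The plan is to verify each closure property directly from the definitions, using "closed under mixture" as the only nontrivial ingredient; everything else is bookkeeping of index sets and coefficient magnitudes.

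First I would treat $\qichan[Q]$. Given $\cG_1 = \qichan[\cI]$ and $\cG_2 = \qichan[\cJ]$ with $\cI = (\cE_i)_i, \cJ = (\cF_j)_j \in Q$, and $p \in [0,1]$, I would form the mixture $\mathcal{K} \coloneqq (p\cE_i)_i \cup ((1-p)\cF_j)_j$, which lies in $Q$ by hypothesis. Its marginal channel is $\qichan[\mathcal{K}] = \sum_i p\cE_i + \sum_j (1-p)\cF_j = p\cG_1 + (1-p)\cG_2$, so the convex combination lies in $\qichan[Q]$. Since $p$ was arbitrary, $\qichan[Q]$ is convex.

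Next I would show $\qids[Q]$ is balanced and convex. For balancedness: if $\cE = \sum_i \beta_i\cE_i$ with $(\cE_i)_i \in Q$, $|\beta_i|\le 1$, and $\lambda\in[-1,1]$, then $\lambda\cE = \sum_i (\lambda\beta_i)\cE_i$ with $|\lambda\beta_i|\le 1$, which is again of the defining form, so $\lambda\cE\in\qids[Q]$. For convexity: take $\cE = \sum_{i}\beta_i\cE_i$ and $\cF = \sum_j\gamma_j\cF_j$ in $\qids[Q]$ with $|\beta_i|,|\gamma_j|\le 1$, fix $p\in[0,1]$, and again form the mixed instrument $\mathcal{K} = (p\cE_i)_i\cup((1-p)\cF_j)_j \in Q$. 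Distributing $p$ and $1-p$ into the instrument components,
\[ p\cE + (1-p)\cF = \sum_i \beta_i(p\cE_i) + \sum_j \gamma_j((1-p)\cF_j), \]
which exhibits $p\cE+(1-p)\cF$ as a weighted sum of the components of $\mathcal{K}\in Q$ with coefficients $\beta_i,\gamma_j$, all of magnitude at most one. Hence $p\cE+(1-p)\cF\in\qids[Q]$, and together with balancedness this gives that $\qids[Q]$ is absolutely convex. (Alternatively, one can argue closure under absolutely convex mixtures in one shot by iterating the mixture hypothesis, which is equivalent to absolute convexity as noted above.)

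I do not expect a genuine obstacle here; the lemma is elementary. The only points deserving a moment of care are (i) that the interval $[-1,1]$ in the definition of $\qids[Q]$ is exactly what keeps the rescaled coefficients $\lambda\beta_i$ and the re-indexed coefficients $\beta_i,\gamma_j$ admissible, and (ii) the degenerate weights $p\in\{0,1\}$, where the "mixture" would have zero components that are not strictly valid instrument elements — but there the combination reduces to one of the two original objects and lies in the set trivially, so these cases can be excluded at the outset.
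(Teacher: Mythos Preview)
Your proof is correct and follows essentially the same approach as the paper: form the mixed instrument to get convexity of $\qichan[Q]$ and of $\qids[Q]$, and rescale the weights $\beta_i \mapsto \lambda\beta_i$ for balancedness. The paper is terser (it just says the convexity of $\qids[Q]$ ``follows from a similar argument''), so your write-up is in fact more explicit on that point, and your remark about the degenerate endpoints $p\in\{0,1\}$ is a nice extra bit of care.
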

\begin{proof}
  We start with the convexity of $\qichan[Q]$.
  For any two $\cE=\qichan[\cI],\cF=\qichan[\cI']$ with $\cI=(\cE_i)_i$, $\cI'=(\cF_j)_j$, we also have $(1-p)\cE+p\cF = \qichan[\cJ]$ where $\cJ\coloneqq ((1-p)\cE_i)_i \cup (p\cF_j)_j$ is also in $Q$.
  The convexity of $\qids[Q]$ follows from a similar argument.
  It remains to show that $\qids[Q]$ is balanced:
  \begin{equation}
    \lambda\cdot\sum_i\beta_i\cE_i = \sum_i(\lambda\beta_i)\cE_i \in \qids[Q]
  \end{equation}
  for $\abs{\lambda}\leq 1$, $\abs{\beta_i}\leq 1$ and $(\cE_i)_i\in Q$.
\end{proof}

When a decomposition set is absolutely convex, we can even assume without loss of generality that our QPDs only have a single term in them.
\begin{lemma}{}{one_element_qpd}
  For an absolutely convex decomposition set $\ds\subset\hp(A\rightarrow B)$ and $\cE\in\hp(A\rightarrow B)$
  \begin{equation}\label{eq:one_element_qpd}
    \gamma_{\ds}(\cE) = \inf\{\kappa\geq 0 : \cE = \kappa\cF \text{ for some } \cF\in\ds \} \, .
  \end{equation}
\end{lemma}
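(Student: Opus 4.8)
The plan is to deduce this directly from \Cref{lem:two_element_qpd}, which applies here since every absolutely convex set is in particular convex. The extra ingredient is balancedness of $\ds$: it lets us absorb the minus sign appearing in the two-element form into a decomposition element, turning a two-term QPD into a one-term one.

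First I would prove the inequality $\gamma_{\ds}(\cE)\ge \inf\{\kappa\ge 0 : \cE=\kappa\cF,\ \cF\in\ds\}$. By \Cref{lem:two_element_qpd}, for any $\eta>0$ there is a decomposition $\cE = a^{+}\cF^{+} - a^{-}\cF^{-}$ with $\cF^{\pm}\in\ds$, $a^{\pm}\ge 0$ and $a^{+}+a^{-}\le \gamma_{\ds}(\cE)+\eta$. Since $\ds$ is balanced, $-\cF^{-}\in\ds$, so $\cE = a^{+}\cF^{+} + a^{-}(-\cF^{-})$. Put $\kappa \coloneqq a^{+}+a^{-}$. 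If $\kappa>0$, then $\cF \coloneqq \tfrac{a^{+}}{\kappa}\cF^{+} + \tfrac{a^{-}}{\kappa}(-\cF^{-})$ is a convex combination of two elements of $\ds$, hence $\cF\in\ds$ by convexity, and $\cE = \kappa\cF$; thus the right-hand side is at most $\kappa \le \gamma_{\ds}(\cE)+\eta$. Letting $\eta\to 0$ gives the claim. (The only degenerate case is $\kappa=0$, i.e. $\cE=0$, where $\cE=0\cdot\cF$ for any $\cF\in\ds$ and the infimum on the right is $0$; and if no QPD of $\cE$ exists at all, both sides are $+\infty$ by convention.)

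For the reverse inequality, any equality $\cE=\kappa\cF$ with $\kappa\ge 0$ and $\cF\in\ds$ is itself a valid single-term QPD of $\cE$ in the sense of \Cref{def:gamma}, with $1$-norm exactly $\kappa$; hence $\gamma_{\ds}(\cE)\le \kappa$, and taking the infimum over all admissible $\kappa$ yields $\gamma_{\ds}(\cE)\le \inf\{\kappa\ge 0 : \cE=\kappa\cF,\ \cF\in\ds\}$. Combining the two inequalities gives the stated identity. I do not anticipate a real obstacle here; the only points needing a careful sentence are the trivial case $\cE=0$ and the convention when the infimum ranges over an empty set.
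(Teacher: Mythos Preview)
Your proof is correct and follows essentially the same idea as the paper's: collapse an arbitrary QPD to a single term using absolute convexity. The paper does this in one shot---from any QPD $\cE=\sum_i a_i\cF_i$ it sets $\kappa=\sum_i|a_i|$ and $\cF=\sum_i\frac{a_i}{\kappa}\cF_i\in\ds$---whereas you first pass through \Cref{lem:two_element_qpd} and then combine the two terms; this is a minor rerouting, not a different argument.
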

\begin{proof}
  Similar to the proof in~\Cref{lem:two_element_qpd}, we only need to show that any QPD $\cE=\sum_ia_i\cF_i$ with $\cF_i\in\ds$ induces another QPD with only one element and the same 1-norm.
  Define $\kappa\coloneqq \sum_i\abs{a_i}$ and $\cF\coloneqq \sum_i \frac{a_i}{\kappa}\cF_i$.
  Clearly, $\cE=\kappa\cF$ and $\cF\in\ds$ by the absolute convexity.
\end{proof}
Besides the previous physical motivation, there is another reason why in many instances one can assume without loss of generality that the decomposition set $\ds$ is either convex or absolutely convex.
If we only care about computing the quasiprobability extent, we can replace $\ds$ by its convex hull or its absolute convex hull.
\begin{lemma}{}{gamma_hulls}
  For any $S\subset\hp(A\rightarrow B)$ and $\cE\in\hp(A\rightarrow B)$ one has
  \begin{equation}
    \gamma_{\ds}(\cE) = \gamma_{\conv(\ds)}(\cE) = \gamma_{\aconv(\ds)}(\cE)
  \end{equation}
  where the \emph{convex hull} and \emph{absolute convex hull} of $\ds$ are defined as
  \begin{align}
    \conv(\ds) &\coloneqq \left\{\sum_{i=1}^n \lambda_i x_i \middle\vert n\in\mathbb{N}, x_i\in \ds, \lambda_i\in [0,1], \sum_{i=1}^m\lambda_i = 1 \right\} \\
    \aconv(\ds) &\coloneqq \left\{\sum_{i=1}^n \lambda_i x_i \middle\vert n\in\mathbb{N}, x_i\in \ds, \lambda_i\in [-1,1], \sum_{i=1}^m\abs{\lambda_i} \leq 1 \right\} \, .
  \end{align}
\end{lemma}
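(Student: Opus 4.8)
The plan is to prove the three claimed equalities by a pair of mutual inequalities, exploiting the monotonicity of $\gamma$ under enlargement of the decomposition set (\Cref{lem:gamma_ds_bound}) together with the elementary fact that a QPD with respect to a convex combination of elements of $\ds$ can be flattened into a QPD with respect to $\ds$ itself. Concretely, since $\ds\subset\conv(\ds)\subset\aconv(\ds)$, \Cref{lem:gamma_ds_bound} immediately gives
\begin{equation}
  \gamma_{\ds}(\cE) \geq \gamma_{\conv(\ds)}(\cE) \geq \gamma_{\aconv(\ds)}(\cE) \, ,
\end{equation}
so it suffices to establish the single reverse inequality $\gamma_{\ds}(\cE) \leq \gamma_{\aconv(\ds)}(\cE)$, which then forces all three quantities to coincide.

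For that reverse inequality, I would start from an arbitrary QPD of $\cE$ with respect to $\aconv(\ds)$, say $\cE = \sum_{i=1}^m a_i \cG_i$ with $\cG_i\in\aconv(\ds)$ and $\sum_i\abs{a_i}$ within $\varepsilon$ of the infimum. By definition of the absolute convex hull, each $\cG_i$ can be written as $\cG_i = \sum_{k} \lambda_{i,k} \cF_{i,k}$ with $\cF_{i,k}\in\ds$ and $\sum_k\abs{\lambda_{i,k}}\leq 1$. Substituting, we obtain
\begin{equation}
  \cE = \sum_{i=1}^m \sum_k (a_i\lambda_{i,k}) \cF_{i,k} \, ,
\end{equation}
which is a genuine QPD of $\cE$ with respect to $\ds$, and its $1$-norm is
\begin{equation}
  \sum_{i,k} \abs{a_i\lambda_{i,k}} = \sum_{i=1}^m \abs{a_i}\sum_k\abs{\lambda_{i,k}} \leq \sum_{i=1}^m \abs{a_i} \, .
\end{equation}
Hence $\gamma_{\ds}(\cE)\leq \sum_i\abs{a_i}\leq \gamma_{\aconv(\ds)}(\cE)+\varepsilon$, and letting $\varepsilon\to 0^+$ completes the argument. (If no QPD with respect to $\aconv(\ds)$ exists, then $\gamma_{\aconv(\ds)}(\cE)=\infty$ and the inequality is vacuous; conversely one should note that a finite QPD over $\aconv(\ds)$ always unpacks to a \emph{finite} sum over $\ds$, since each $\cG_i$ is by definition a finite combination, so the "$m\geq 1$ finite" requirement in \Cref{def:gamma} is respected.)

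There is no serious obstacle here — the only point requiring a little care is bookkeeping: making sure the double sum remains finite and that the triangle-inequality step $\sum_{i,k}\abs{a_i\lambda_{i,k}} = \sum_i\abs{a_i}\sum_k\abs{\lambda_{i,k}}$ is an equality (it is, since $\abs{\cdot}$ is multiplicative) followed by the bound $\sum_k\abs{\lambda_{i,k}}\leq 1$. One might also remark that the same proof, restricted to convex combinations with nonnegative coefficients summing to $1$, handles the middle equality $\gamma_{\ds}(\cE)=\gamma_{\conv(\ds)}(\cE)$ directly, but since $\conv(\ds)\subset\aconv(\ds)$ this is already subsumed by the chain of inequalities above.
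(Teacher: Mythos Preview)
Your proposal is correct and follows essentially the same approach as the paper: use \Cref{lem:gamma_ds_bound} for one chain of inequalities, then unpack an arbitrary QPD over $\aconv(\ds)$ into one over $\ds$ with no larger $1$-norm via $\sum_{i,k}\abs{a_i\lambda_{i,k}}=\sum_i\abs{a_i}\sum_k\abs{\lambda_{i,k}}\leq\sum_i\abs{a_i}$. The paper's version is slightly terser (it omits the explicit $\varepsilon$-argument and the remarks on finiteness and the vacuous case), but the substance is identical.
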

Note that this statement does not rule out the possibility that classical side information can reduce the quasiprobability extent, as $\aconv(\qichan[Q])$ is generally a strict subset of $\qids[Q]$ for some $Q\subset\qi(A\rightarrow B)$.
\begin{proof}
  By~\Cref{lem:gamma_ds_bound} we know $\gamma_{\ds}(\cE) \geq \gamma_{\conv(\ds)}(\cE) \geq \gamma_{\aconv(\ds)}(\cE)$ so it suffices to show that $\gamma_{\aconv(\ds)}(\cE) \geq \gamma_{\ds}(\cE)$.
  We argue this by showing that any QPD of $\cE$ w.r.t. $\aconv(\ds)$ induces a QPD of $\cE$ w.r.t. $\ds$ with at most the same 1-norm of its coefficients.
  Let $\cE=\sum_i a_i\cF_i$ be such a QPD with $\smash{\cF_i=\sum_j\lambda^{(i)}_j\cG^{(i)}_j}$ where $\smash{\cG^{(i)}_j\in\ds}$.
  We thus have 
  \begin{equation}
    \cE = \sum_{i,j} (a_i\lambda^{(i)}_j) \cG^{(i)}_j
  \end{equation}
  which is a valid QPD with 1-norm
  \begin{equation}
    \sum_{i,j} \abs{a_i\lambda^{(i)}_j} = \sum_i \abs{a_i}\sum_j\abs{\lambda^{(i)}_j} \leq \sum_i \abs{a_i} .
  \end{equation}
\end{proof}

\paragraph{Relation to robustness}
Motivated by a class of resource measures studied in many QRTs, we define the \emph{robustness} (sometimes also called \emph{total robustness} to differentiate it from related measures) of a given operation $\cE$ w.r.t. a decomposition set $\ds$ as follows.
\begin{definition}{}{robustness}
  The robustness of $\cE\in\hp(A\rightarrow B)$ w.r.t. $\ds\subset\hp(A\rightarrow B)$ is defined as
  \begin{equation}\label{eq:robustness}
    R_{\ds}(\cE) \coloneqq \inf \{ t\geq 0 | \exists\cF\in\ds \text{ s.t. } \frac{\cE+t\cF}{\tr[\choi{\cE}]+t}\in\ds \} \, .
  \end{equation}
\end{definition}
Intuitively, the robustness quantifies the distance of $\cE$ to $\ds$ by measuring how much of a ``free operation'' (where we call operations from $\ds$ free) can be mixed with $\cE$ before it becomes itself ``free''.
For convex decomposition sets consisting of quantum channels, and more generally trace-preserving maps, the robustness is directly related to the quasiprobability extent.
\begin{lemma}{}{robustness1}
  Consider $\ds\subset \hptp(A,B)$ convex and $\cE\in\hp(A\rightarrow B)$. Then
  \begin{equation}
    \gamma_{\ds}(\cE) = \tr[\choi{\cE}] + 2 R_{\ds}(\cE)
  \end{equation}
\end{lemma}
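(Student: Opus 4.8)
The plan is to combine the two-element reduction of the quasiprobability extent from \Cref{lem:two_element_qpd} (which applies here since $\ds$ is convex) with the single structural fact that every element of $\ds$ is trace-preserving. Indeed, for $\cF\in\ds\subset\hptp(A,B)$ one has $\tr[\choi{\cF}]=\tr\big[\tr_B[\choi{\cF}]\big]=\tr[\id_A/\dimension(A)]=1$, so taking the trace of the Choi operators in a two-element QPD $\cE=a^+\cF^+-a^-\cF^-$ (using linearity of the Choi map) forces $a^+-a^-=\tr[\choi{\cE}]=:c$. Hence $a^++a^-=c+2a^-$, and minimizing the $1$-norm over two-element QPDs is the same as minimizing $a^-$ subject to the existence of a suitable $\cF^+\in\ds$.

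Next I would set up the correspondence with \Cref{def:robustness}. In one direction, a two-element QPD $\cE=a^+\cF^+-a^-\cF^-$ satisfies $a^+=c+a^-$, so
\begin{equation}
  \frac{\cE+a^-\cF^-}{c+a^-}=\frac{a^+\cF^+}{a^+}=\cF^+\in\ds\,,
\end{equation}
which shows that $t:=a^-$ is admissible in \Cref{def:robustness}; therefore $R_{\ds}(\cE)\leq a^- = (a^++a^--c)/2$, i.e.\ $a^++a^-\geq c+2R_{\ds}(\cE)$, and taking the infimum over all two-element QPDs (which equals $\gamma_{\ds}(\cE)$ by \Cref{lem:two_element_qpd}) gives $\gamma_{\ds}(\cE)\geq c+2R_{\ds}(\cE)$. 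Conversely, for any admissible $t\geq 0$ with witness $\cF\in\ds$ and $\cF':=(\cE+t\cF)/(c+t)\in\ds$, rearranging yields the QPD $\cE=(c+t)\cF'-t\cF$ with nonnegative coefficients $c+t$ and $t$ (using $c+t\geq 0$), so $\gamma_{\ds}(\cE)\leq (c+t)+t=c+2t$; infimizing over admissible $t$ gives $\gamma_{\ds}(\cE)\leq c+2R_{\ds}(\cE)$. The two bounds together give the claimed identity, with both sides equal to $+\infty$ precisely when $\cE$ has no QPD in $\ds$.

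The only genuine obstacle is bookkeeping around degenerate cases. The argument quietly uses that the denominators $\tr[\choi{\cE}]+t$ are positive; this is automatic whenever $\tr[\choi{\cE}]\geq 0$ — in particular in the main case where $\cE$ is trace-preserving and $c=1$ — but the regime $\tr[\choi{\cE}]<0$ (where the ratio in \Cref{def:robustness} can degenerate to $0/0$) should be checked by hand, as should the equivalence ``no admissible $t$'' $\Leftrightarrow$ ``no two-element QPD'' needed to extend the identity to the value $+\infty$. I would dispatch these by observing that, away from the degeneracies, $t\leftrightarrow a^-$ is an infimum-preserving reparametrization between robustness witnesses and two-element QPDs, exactly as exploited above.
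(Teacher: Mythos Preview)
Your proof is correct and follows essentially the same approach as the paper: both establish a bijection between two-element QPDs $(a^+,a^-,\cF^+,\cF^-)$ and robustness witnesses $(t,\cF)$ via the trace constraint $a^+-a^-=\tr[\choi{\cE}]$, yielding the identification $t\leftrightarrow a^-$. Your treatment is slightly more explicit about the degenerate cases (the sign of $\tr[\choi{\cE}]+t$ and the $+\infty$ regime), which the paper handles only via the remark that both sides are infinite when $\cE$ is not proportional to a trace-preserving map.
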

Note that in this setting, $\gamma_{\ds}(\cE)=R_{\ds}(\cE)=\infty$ if $\cE$ is not itself proportional to a trace-preserving map.
\begin{proof}
  We prove this by showing that any valid two-element (see~\Cref{lem:two_element_qpd}) QPD $(a^{\pm},\cF^{\pm})$ induces a valid pair $(t,\cF)$ with $\tr[\choi{\cE}]+2t=a^++a^-$ and vice versa.

  First, let $t,\cF$ be valid.
  They thus fulfill $(\tr[\choi{\cE}]+t)\cG = \cE + t\cF$ for some $\cG\in\ds$.
  Therefore, we get a valid QPD $\cE = (\tr[\choi{\cE}]+t)\cG - t\cF$.

  Conversely, let us assume $\cE=a^+\cF^+ - a^-\cF^-$ is a valid QPD.
  Applying the trace on the Choi representation of both sides of this equation yields $\tr[\choi{\cE}]=a^+-a^-$ or $a^+=\tr[\choi{\cE}]+a^-$.
  Therefore, $\frac{\cE+a^-\cF^-}{\tr[\choi{\cE}]+a^-}=\cF^+\in\ds$.
\end{proof}
Notice that~\Cref{lem:robustness1} doesn't apply for (non-trivial) absolutely convex decomposition sets $\ds$, as they always contain non-trace-preserving maps.
Fortunately, the relation to the robustness can still be established in physically relevant settings.
\begin{lemma}{}{robustness2}
  Let $Q\subset\qi(A\rightarrow B)$ be closed under mixture and $\cE\in\hp(A\rightarrow B)$.
  Then
  \begin{equation}
    \gamma_{\qids[Q]}(\cE) = \tr[\choi{\cE}] + 2 R_{\qids[Q]}(\cE) \, .
  \end{equation}
\end{lemma}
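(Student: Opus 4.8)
The plan is to adapt the proof of \Cref{lem:robustness1} to the absolutely convex setting. First I would record that $\ds\coloneqq\qids[Q]$ is absolutely convex by \Cref{lem:decompset_absolutely_convex}, hence convex and balanced, and that every $\cF\in\ds$ satisfies $\abs{\tr[\choi{\cF}]}\leq1$: writing $\cF=\sum_i\beta_i\cE_i$ with $(\cE_i)_i$ an instrument, each $\choi{\cE_i}\loewnergeq0$, $\sum_i\tr[\choi{\cE_i}]=1$ and $\abs{\beta_i}\leq1$. Consequently any QPD of $\cE$ has $1$-norm at least $\abs{\tr[\choi{\cE}]}$, so $\gamma_\ds(\cE)\geq\abs{\tr[\choi{\cE}]}$. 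As in \Cref{lem:robustness1} the strategy is to match feasible $t\geq0$ in the robustness problem of \Cref{def:robustness} with two-element QPDs of $\cE$ (\Cref{lem:two_element_qpd} applies since $\ds$ is convex); throughout I read that problem with the implicit requirement $\tr[\choi{\cE}]+t>0$, so the denominator is a genuine positive mixture weight, which also controls the degenerate cases below.

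For $\gamma_\ds(\cE)\leq\tr[\choi{\cE}]+2R_\ds(\cE)$, given any feasible $t$ with $\cG\coloneqq\tfrac{\cE+t\cF}{\tr[\choi{\cE}]+t}\in\ds$ for some $\cF\in\ds$, I would rearrange to $\cE=(\tr[\choi{\cE}]+t)\,\cG-t\,\cF$. Since $\tr[\choi{\cE}]+t>0$, this is a two-element QPD of $1$-norm $(\tr[\choi{\cE}]+t)+t=\tr[\choi{\cE}]+2t$, and taking the infimum over feasible $t$ gives the bound.

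The reverse inequality is where the new ingredient enters: unlike in \Cref{lem:robustness1}, elements of $\ds$ need not be trace-preserving, so one cannot simply take $t=a^-$ from a QPD $\cE=a^+\cF^+-a^-\cF^-$; one must first redistribute the weights. Starting from such a two-element QPD with $\cF^\pm\in\ds$, $a^\pm\geq0$, put $\tau^\pm\coloneqq\tr[\choi{\cF^\pm}]\in[-1,1]$, so $\tr[\choi{\cE}]=a^+\tau^+-a^-\tau^-$. Split $a^+=s_++t_+$ with $s_+\coloneqq\tfrac{a^+(1+\tau^+)}{2}$, $t_+\coloneqq\tfrac{a^+(1-\tau^+)}{2}$ and $a^-=s_-+t_-$ with $s_-\coloneqq\tfrac{a^-(1-\tau^-)}{2}$, $t_-\coloneqq\tfrac{a^-(1+\tau^-)}{2}$, all nonnegative. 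With $b^+\coloneqq s_++s_-=\tfrac{(a^++a^-)+\tr[\choi{\cE}]}{2}$ and $b^-\coloneqq t_++t_-=\tfrac{(a^++a^-)-\tr[\choi{\cE}]}{2}$, the maps $\cG\coloneqq\tfrac{1}{b^+}\big(s_+\cF^+-s_-\cF^-\big)$ and $\cF\coloneqq\tfrac{1}{b^-}\big(t_-\cF^--t_+\cF^+\big)$ are convex combinations of $\cF^+$ and $-\cF^-$ (respectively $\cF^-$ and $-\cF^+$), hence lie in $\ds$ since $\ds$ is convex and balanced, and one checks $b^+\cG-b^-\cF=\cE$ with $b^++b^-=a^++a^-$ and $b^+-b^-=\tr[\choi{\cE}]$. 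The last identity gives $b^+=\tr[\choi{\cE}]+b^-$, so $\cG=\tfrac{\cE+b^-\cF}{\tr[\choi{\cE}]+b^-}\in\ds$ shows $t=b^-$ is feasible; thus $R_\ds(\cE)\leq b^-=\tfrac{(a^++a^-)-\tr[\choi{\cE}]}{2}$, and infimizing over QPDs via \Cref{lem:two_element_qpd} yields $\tr[\choi{\cE}]+2R_\ds(\cE)\leq\gamma_\ds(\cE)$.

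It remains to treat the degenerate cases $b^+=0$ or $b^-=0$, i.e.\ $\tr[\choi{\cE}]=\mp(a^++a^-)$, where one of $\cG,\cF$ is undefined. If $b^+=0$, then directly $\cE=b^-(-\cF)$ with $-\cF\in\ds$, so $\gamma_\ds(\cE)\leq b^-=-\tr[\choi{\cE}]$; together with $\gamma_\ds(\cE)\geq\abs{\tr[\choi{\cE}]}$ and the fact that $t=b^-+\varepsilon$ is feasible for every $\varepsilon>0$ (take $\cF$ itself as the mixed-in map), this pins down $\gamma_\ds(\cE)=-\tr[\choi{\cE}]=\tr[\choi{\cE}]+2R_\ds(\cE)$, and the case $b^-=0$ is analogous with $t=0$. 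I expect the weight-redistribution step in the third paragraph to be the only genuinely new part of the argument relative to \Cref{lem:robustness1}, and hence the main point to get right.
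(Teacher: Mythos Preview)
Your proof is correct and handles edge cases carefully, but the reverse-inequality step is more elaborate than necessary. The paper exploits \Cref{lem:one_element_qpd} rather than \Cref{lem:two_element_qpd}: since $\ds=\qids[Q]$ is absolutely convex, every QPD collapses to a \emph{single} element $\cE=\kappa\cF$ with $\cF\in\ds$ and $\kappa\geq0$. Writing $\kappa=\tr[\choi{\cE}]+2r$ (the paper first shows $\kappa\geq\norm{\choi{\cE}}_1\geq\tr[\choi{\cE}]$, so $r\geq0$), one checks in one line that
\[
\frac{\cE+r(-\cF)}{\tr[\choi{\cE}]+r}=\frac{(\kappa-r)\cF}{\kappa-r}=\cF\in\ds,
\]
where $-\cF\in\ds$ by balancedness; hence $t=r$ is directly feasible for the robustness. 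Your weight-redistribution scheme reaches the same feasible value $t=b^-$ but by manufacturing new maps $\cG,\cF$ from a two-element QPD, which is correct yet circuitous---it essentially re-derives the absorption step of \Cref{lem:one_element_qpd} inside the argument. The paper's route also makes the degenerate analysis nearly vanish (only $\kappa+\tr[\choi{\cE}]=0$ could cause trouble, and the paper glosses over it). For absolutely convex decomposition sets, reaching for the one-element form first is the cleaner instinct.
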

\begin{proof}
  First, we note that for any $\cG=\sum_j\beta_j\cG_j\in\qids[Q]$ we have
  \begin{equation}
    \norm{\choi{\cG}}_1 = \norm{\sum_j\beta_j\choi{\cG_j}}_1 \leq \norm{\sum_j\abs{\beta_j}\choi{\cG_j}}_1 = \tr\left[ \sum_j\abs{\beta_j}\choi{\cG_j} \right] \leq  \tr\left[ \sum_j\choi{\cG_j} \right] = 1
  \end{equation}
  where we used that $\norm{X-Y}_1\leq\norm{X}_1+\norm{Y}_1 = \norm{X+Y}_1$ for two positive operators $X,Y$.
  Therefore, for any valid QPD $\cE=\sum_ia_i\cF_i$ with $\cF_i\in \qids[Q]$ one has
  \begin{align}
    \norm{a}_1
    &\geq \sum_i \abs{a_i}\norm{\choi{\cF_i}}_1 \\
    &= \sum_i \norm{a_i\choi{\cF_i}}_1 \\
    &\geq \norm{\sum_i a_i\choi{\cF_i}}_1 \\
    &= \norm{\choi{\cE}}_1 \\
    &\geq \tr[\choi{\cE}]
  \end{align}
  and hence $\gamma_{\qids[Q]}(\cE)\geq \tr[\choi{\cE}]$.
  We can thus write $\gamma_{\qids[Q]}(\cE) = \tr[\choi{\cE}] + 2r$ for some $r\geq 0$.
  It remains to show that this $r$ precisely corresponds to $R_{\qids[Q]}(\cE)$.

  The proof proceeds analogously to the one of~\Cref{lem:robustness1}.
  On one side, consider an arbitrary valid QPD $\cE=\kappa\cF$ with $\cF\in\qids[Q]$ (recall~\Cref{lem:one_element_qpd}).
  We can write $\kappa = \tr[\choi{\cE}] + 2r$ for some $r\geq 0$.
  Hence, $\smash{\frac{\cE + r(-\cF)}{\tr[\choi{\cE}]+r}=\cF\in\qids[Q]}$ where $(-\cF)$ is also in $\qids[Q]$ due to the absolute convexity (see~\Cref{lem:decompset_absolutely_convex}).
  This means that $(r,(-\cF))$ is a feasible point for the robustness.

  Conversely, let $t,\cF$ be an arbitrary feasible point for the definition of the robustness.
  They thus fulfill $(\tr[\choi{\cE}]+t)\cG = \cE + t\cF$ for some $\cG\in\qids[Q]$.
  Therefore, we get a valid QPD $\cE = (\tr[\choi{\cE}]+t)\cG - t\cF$ with 1-norm $\tr[\choi{\cE}]+2t$.
\end{proof}

\paragraph{Function properties of the quasiprobability extent}
We now explore a few properties of $\gamma_{\cE}$ as a function.
\begin{lemma}{}{gamma_convexity}
  For any $\ds\subset\hp(A\rightarrow B)$, $\cE_i\in\hp(A\rightarrow B)$ for $i=1,\dots,m$ and $a\in\mathbb{R}^m$ one has
  \begin{equation}\label{eq:gamma_convexity}
    \gamma_{\ds}\left(\sum_{i=1}^m a_i\cE_i\right) \leq \sum_{i=1}^m \abs{a_i}\gamma_{\ds}(\cE_i)
  \end{equation}
\end{lemma}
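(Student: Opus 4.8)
The plan is to prove the inequality directly from the definition of the quasiprobability extent by concatenating near-optimal decompositions of the individual operations $\cE_i$.

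First I would dispose of the degenerate cases. If $a_i = 0$ for some index $i$, that term contributes nothing to the left-hand side, and it may be dropped from the right-hand side as well (with the convention $0\cdot\infty = 0$ in case $\gamma_{\ds}(\cE_i)$ is infinite); so without loss of generality assume $a_i \neq 0$ for every $i$. If moreover $\gamma_{\ds}(\cE_{i_0}) = \infty$ for some $i_0$, then the right-hand side equals $\infty$ and there is nothing to prove. Hence it suffices to handle the case where every $\gamma_{\ds}(\cE_i)$ is finite.

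Now fix $\epsilon > 0$. By \Cref{def:gamma}, for each $i$ choose a QPD $\cE_i = \sum_{j=1}^{m_i} b_j^{(i)} \cF_j^{(i)}$ with $\cF_j^{(i)} \in \ds$, $b_j^{(i)}\in\mathbb{R}$, and $\sum_{j=1}^{m_i} \abs{b_j^{(i)}} \leq \gamma_{\ds}(\cE_i) + \epsilon$. Substituting these into the linear combination gives
\[ \sum_{i=1}^m a_i\cE_i = \sum_{i=1}^m\sum_{j=1}^{m_i} \bigl(a_i b_j^{(i)}\bigr)\,\cF_j^{(i)} , \]
which is a valid QPD of $\sum_i a_i\cE_i$ with respect to $\ds$, since each $\cF_j^{(i)}\in\ds$. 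Its 1-norm satisfies
\[ \sum_{i,j}\abs{a_i b_j^{(i)}} = \sum_{i=1}^m \abs{a_i}\sum_{j=1}^{m_i}\abs{b_j^{(i)}} \leq \sum_{i=1}^m \abs{a_i}\bigl(\gamma_{\ds}(\cE_i)+\epsilon\bigr) . \]
Taking the infimum in the definition of $\gamma_{\ds}$ over all valid QPDs, this yields $\gamma_{\ds}\bigl(\sum_i a_i\cE_i\bigr) \leq \sum_i \abs{a_i}\bigl(\gamma_{\ds}(\cE_i)+\epsilon\bigr)$, and letting $\epsilon\to 0^+$ gives the claim.

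There is no real obstacle here: the only care needed is in the bookkeeping of the infimum (the optimal QPD need not be attained, hence the $\epsilon$) and in the edge cases of vanishing coefficients or infinite extent. If one additionally knew $\ds$ to be absolutely convex, a slightly slicker route would invoke \Cref{lem:one_element_qpd} to write $\cE_i = \kappa_i\cF_i$ with $\kappa_i$ close to $\gamma_{\ds}(\cE_i)$ and then decompose $\sum_i a_i\kappa_i\cF_i$, but the direct argument above requires no such assumption on $\ds$.
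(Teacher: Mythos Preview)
Your proof is correct and follows essentially the same approach as the paper: combine QPDs of the individual $\cE_i$ into a QPD of $\sum_i a_i\cE_i$ and bound the resulting 1-norm. Your version is in fact slightly more careful than the paper's, which omits the $\epsilon$-bookkeeping for the infimum and handles the degenerate case via the contrapositive (arguing that if the left-hand side is infinite then some $\cE_i$ must lie outside $\spn_{\mathbb{R}}(\ds)$).
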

In particular, this result shows that $\gamma_{\ds}$ is a convex function on the span of the decomposition set $\spn_{\mathbb{R}}(\ds)$.
\begin{proof}
  If the left-hand side is infinite, then there must be at least one $\cE_i$ for which $\gamma_{\ds}(\cE_i)=\infty$ (as $\sum_i a_i\cE_i$ lies in the span of the $\cE_i$), so the statement trivially holds.
  We can thus assume without loss of generality that all involved quantities exhibit a finite quasiprobability extent.

  The proof follows from the insight that a collection of QPDs for each of the $\cE_i$
  \begin{equation}
     \cE_i = \sum_{j=1}^{m_i} b_j^{(i)} \cF_j^{(i)}
  \end{equation}
  induces a QPD of $\sum_ia_i\cE_i$
  \begin{equation}
    \sum_i a_i\cE_i = \sum_{i=1}^m\sum_{j=1}^{m_i} a_ib_j^{(i)} \cF_j^{(i)}
  \end{equation}
  which has coefficients with a 1-norm of $\sum_{i}\abs{a_i}\cdot\left(\sum_j\abs{b_j^{(i)}}\right)$.
\end{proof}
\begin{lemma}{}{gamma_homogeneous}
  For any $\ds\subset\hp(A\rightarrow B)$ the quasiprobability extent $\gamma_{\ds}$ is absolutely homogeneous, i.e.,
  \begin{equation}
    \forall\lambda\in\mathbb{R},\forall \cE\in\hp(A\rightarrow B): \gamma_{\ds}(\lambda \cE) = \abs{\lambda}\gamma_{\ds}(\cE) \, .
  \end{equation}
\end{lemma}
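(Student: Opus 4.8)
The plan is to reduce the claim to the sub-homogeneity bound $\gamma_{\ds}(\lambda\cE)\le|\lambda|\,\gamma_{\ds}(\cE)$, which is already essentially the one-term case of \Cref{lem:gamma_convexity}, and then upgrade it to an equality by a rescaling/substitution trick.

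First I would dispose of the degenerate case $\lambda=0$. Taking any $\cF\in\ds$ (the decomposition set being nonempty in every setting of interest), the trivial QPD $0=0\cdot\cF$ shows $\gamma_{\ds}(0)=0=|0|\,\gamma_{\ds}(\cE)$, where the usual convention $0\cdot\infty=0$ covers the possibility $\gamma_{\ds}(\cE)=\infty$.

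Now assume $\lambda\neq 0$. Applying \Cref{lem:gamma_convexity} with the single term $a_1=\lambda$, $\cE_1=\cE$ immediately gives $\gamma_{\ds}(\lambda\cE)\le|\lambda|\,\gamma_{\ds}(\cE)$. (Equivalently, and more directly: any QPD $\cE=\sum_i a_i\cF_i$ with $\cF_i\in\ds$ induces the QPD $\lambda\cE=\sum_i(\lambda a_i)\cF_i$ of $1$-norm $|\lambda|\sum_i|a_i|$, and one takes the infimum over all such QPDs.) For the reverse inequality I would invoke this same bound with $\cE$ replaced by $\lambda\cE$ and $\lambda$ replaced by $\lambda^{-1}$, obtaining
\[
  \gamma_{\ds}(\cE)=\gamma_{\ds}\!\bigl(\lambda^{-1}(\lambda\cE)\bigr)\le|\lambda|^{-1}\gamma_{\ds}(\lambda\cE),
\]
i.e. $|\lambda|\,\gamma_{\ds}(\cE)\le\gamma_{\ds}(\lambda\cE)$. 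Combining the two inequalities yields the claimed equality.

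There is no real obstacle here: the argument is a one-line rescaling of coefficients together with its inverse. The only subtlety worth flagging is the interplay of $\lambda=0$ with the possibility $\gamma_{\ds}(\cE)=\infty$, which is handled by the convention noted above.
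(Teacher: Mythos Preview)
Your proof is correct and follows essentially the same approach as the paper: establish the sub-homogeneity inequality $\gamma_{\ds}(\lambda\cE)\le|\lambda|\,\gamma_{\ds}(\cE)$ by rescaling coefficients in a QPD, then obtain the reverse inequality by substituting $\lambda\cE$ for $\cE$ and $\lambda^{-1}$ for $\lambda$. The only cosmetic difference is that the paper first reduces to $\lambda>0$ via the observation $\gamma_{\ds}(\cE)=\gamma_{\ds}(-\cE)$, whereas you work with $|\lambda|$ directly through \Cref{lem:gamma_convexity}.
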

\begin{proof}
  If $\lambda=0$ or $\cE\notin\spn_{\mathbb{R}}(\ds)$, then the statement is trivial.
  For the remaining case, it suffices to consider $\lambda>0$ as clearly $\gamma_{\ds}(\cE)=\gamma_{\ds}(-\cE)$.
  Showing $\forall\lambda>0,\forall\cE: \gamma_{\ds}(\lambda \cE)\leq \lambda\gamma_{\ds}(\cE)$ is sufficient in that regard, as this then implies
  \begin{equation}
    \lambda \gamma_{\ds}(\cE) = \lambda \gamma_{\ds}(\frac{1}{\lambda}\lambda \cE) \leq \gamma_{\ds}(\lambda \cE) \, .
  \end{equation}
  We can easily see $\gamma_{\ds}(\lambda \cE)\leq \lambda\gamma_{\ds}(\cE)$ by verifying that every QPD $\cE=\sum_i a_i\cF_i$ induces a QPD of $\lambda \cE$ with a 1-norm of the coefficients that is exactly $\lambda$ multiplied by the 1-norm of the original QPD.
\end{proof}
\begin{corollary}{}{}
  For any bounded decomposition set $\ds\subset\hp(A\rightarrow B)$, the quasiprobability extent $\gamma_{\ds}$ is a norm on $\spn_{\mathbb{R}}(\ds)$.
\end{corollary}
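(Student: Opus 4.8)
The plan is to check the three defining properties of a norm on the finite-dimensional real vector space $V \coloneqq \spn_{\mathbb{R}}(\ds)$: absolute homogeneity, the triangle inequality, and positive definiteness, along with the fact that $\gamma_{\ds}$ takes finite values on $V$. Two of these are already in hand. Absolute homogeneity is precisely \Cref{lem:gamma_homogeneous}. The triangle inequality $\gamma_{\ds}(\cE_1+\cE_2)\leq\gamma_{\ds}(\cE_1)+\gamma_{\ds}(\cE_2)$ is the special case $m=2$, $a_1=a_2=1$ of \Cref{lem:gamma_convexity}. Finiteness on $V$ is immediate from the definition of the span: any $\cE\in\spn_{\mathbb{R}}(\ds)$ is a real linear combination of elements of $\ds$, so it admits at least one QPD and $\gamma_{\ds}(\cE)<\infty$. (I implicitly use $\ds\neq\emptyset$, which is harmless: if $\ds=\emptyset$ then $V=\{0\}$ and the claim is vacuous.)

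The only place boundedness is needed is the nontrivial half of positive definiteness, $\gamma_{\ds}(\cE)=0\Rightarrow\cE=0$; the converse is trivial, since $0=0\cdot\cF$ for any $\cF\in\ds$ is a QPD of $1$-norm zero. To prove it, I would fix a reference norm on the finite-dimensional space $\hp(A\rightarrow B)$ — concretely $\cF\mapsto\norm{\choi{\cF}}_1$, which is a norm because the Choi map is a linear isomorphism carrying $\hp(A\rightarrow B)$ onto $\herm(AB)$ and $\norm{\cdot}_1$ is a norm on $\herm(AB)$. Since all norms on a finite-dimensional space are equivalent, boundedness of $\ds$ supplies a finite constant $M\coloneqq\sup_{\cF\in\ds}\norm{\choi{\cF}}_1$. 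For any QPD $\cE=\sum_{i=1}^m a_i\cF_i$ with $\cF_i\in\ds$, the triangle inequality for $\norm{\cdot}_1$ gives $\norm{\choi{\cE}}_1\leq\sum_i\abs{a_i}\,\norm{\choi{\cF_i}}_1\leq M\sum_i\abs{a_i}$, and since this holds for every QPD of $\cE$ we obtain $\norm{\choi{\cE}}_1\leq M\,\gamma_{\ds}(\cE)$. Hence $\gamma_{\ds}(\cE)=0$ forces $\choi{\cE}=0$, and therefore $\cE=0$ by injectivity of the Choi map.

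Assembling these pieces shows that $\gamma_{\ds}$ restricted to $\spn_{\mathbb{R}}(\ds)$ is finite-valued, absolutely homogeneous, satisfies the triangle inequality, and vanishes only at $0$ — that is, it is a norm. There is no substantial obstacle here: the argument is a direct combination of \Cref{lem:gamma_homogeneous}, \Cref{lem:gamma_convexity}, and the short estimate above, with the boundedness hypothesis used solely to rule out a nontrivial kernel.
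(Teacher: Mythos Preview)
Your proposal is correct and follows essentially the same approach as the paper: sub-additivity and absolute homogeneity come from \Cref{lem:gamma_convexity} and \Cref{lem:gamma_homogeneous}, and positive definiteness is obtained by bounding a reference norm of $\cE$ by a constant (coming from the boundedness of $\ds$) times the $1$-norm of any QPD, hence by $\gamma_{\ds}(\cE)$. The paper phrases the last step with a sequence of QPDs whose $1$-norms tend to zero, whereas you pass directly to the infimum; the content is identical.
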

As an example, we will see in~\Cref{chap:nonphysical} that choosing $\ds$ to be the set of CPTP maps results into the quasiprobability extent being the well-known diamond norm.
\begin{proof}
  A norm must fulfill three properties: It is \emph{sub-additive} ($\gamma_{\ds}(\cE + \cF) \leq \gamma_{\ds}(\cE) + \gamma_{\ds}(\cF)$, which follows from~\Cref{lem:gamma_convexity}), \emph{absolutely homogeneous} (see~\Cref{lem:gamma_homogeneous}) and \emph{point-separating} ($\gamma_{\ds}(\cE)=0 \Leftrightarrow \cE=0$).
  It remains to show the last property.
  Consider some $\cE\in\hp(A\rightarrow B)$ such that $\gamma_{\ds}(\cE)=0$.
  Therefore, there exists a sequence of QPDs $\cE=\sum_i a_i^{(j)}\cF_i^{(j)}$ for $j\in\mathbb{N}$ such that $\lim\limits_{j\rightarrow\infty}\norm{a^{(j)}}_1=0$.
  Since for all $j$ we have
  \begin{equation}
    \norm{\cE}\leq \sum_i\abs{a^{(j)}_i}\norm{\cF_i^{(j)}}\leq C\norm{a^{(j)}}_1
  \end{equation}
  for some constant $C\geq 0$, we have $\norm{\cE}=0$ and thus $\cE=0$ (here $\norm{\cdot}$ denotes some arbitrary norm).
\end{proof}

The following statements will depend on a choice of basis $\{\mathcal{B}_1,\mathcal{B}_2,\dots,\mathcal{B}_m\}\subset\ds$ of $\spn_{\mathbb{R}}(\ds)$.
The gram matrix $G$ of such a basis is defined w.r.t. the Hilbert-Schmidt inner product of the associated Choi operators $G_{ij}\coloneqq\langle\choi{\mathcal{B}_i},\choi{\mathcal{B}_j}\rangle$.
\begin{lemma}{}{gamma_upper_bound}
  Consider some $\ds\subset\hp(A\rightarrow B)$ and a basis of $\spn_{\mathbb{R}}(\ds)$ contained in $\ds$ with Gram matrix $G$.
  For any $\cE\in\hp(A\rightarrow B)$ one has
  \begin{equation}
    \gamma_{\ds}(\cE)\leq \sqrt{\frac{\dimension(A)\dimension(B)}{\sigma_{\mathrm{min}}(G)}}\norm{\choi{\cE}}_2
  \end{equation}
  where $\sigma_{\mathrm{min}}(G)$ is the smallest eigenvalue of $G$.
\end{lemma}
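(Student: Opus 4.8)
The plan is to turn the unique expansion of $\cE$ in the chosen basis into an explicit QPD and then control its $1$-norm through the Gram matrix. First I would write $\cE=\sum_{i=1}^m c_i\mathcal{B}_i$ for the uniquely determined coefficient vector $c\in\mathbb{R}^m$; this is legitimate as soon as $\cE\in\spn_{\mathbb{R}}(\ds)$, and if $\cE\notin\spn_{\mathbb{R}}(\ds)$ then $\gamma_{\ds}(\cE)=\infty$, so the inequality is only to be read in the (typical) situation $\spn_{\mathbb{R}}(\ds)=\hp(A\rightarrow B)$. Since each $\mathcal{B}_i\in\ds$, this expansion is itself a valid QPD, hence $\gamma_{\ds}(\cE)\leq\norm{c}_1$, and the whole task reduces to bounding $\norm{c}_1$ by $\norm{\choi{\cE}}_2$.

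For the second step I would pass to Choi operators. Set $D\coloneqq\dimension(A)\dimension(B)$ and $v_i\coloneqq\choi{\mathcal{B}_i}$, which are Hermitian, so $G_{ij}=\langle v_i,v_j\rangle=\tfrac{1}{D}\tr[v_iv_j]$ is a real symmetric matrix; it is positive definite because the $\mathcal{B}_i$, and hence (the Choi map being an isomorphism) the $v_i$, are linearly independent. Linearity of the Choi map gives $\choi{\cE}=\sum_i c_iv_i$, and therefore
\begin{equation}
  \norm{\choi{\cE}}_2^2 \;=\; \tr\!\left[\choi{\cE}^2\right] \;=\; \sum_{i,j}c_ic_j\,\tr[v_iv_j] \;=\; D\,c^{\transpose}Gc \;\geq\; D\,\sigma_{\mathrm{min}}(G)\,\norm{c}_2^2 \, ,
\end{equation}
so that $\norm{c}_2\leq\norm{\choi{\cE}}_2\big/\!\sqrt{D\,\sigma_{\mathrm{min}}(G)}$.

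Finally I would convert $\norm{c}_2$ into $\norm{c}_1$ by the Cauchy--Schwarz inequality, $\norm{c}_1\leq\sqrt m\,\norm{c}_2$, and absorb the factor $\sqrt m$ by a dimension count: the Choi operators of Hermitian-preserving maps $A\rightarrow B$ are precisely the Hermitian operators on $AB$, so $m=\dimension_{\mathbb{R}}\spn_{\mathbb{R}}(\ds)\leq\dimension_{\mathbb{R}}\herm(AB)=D^2$, whence $\sqrt m\leq D$. Chaining the three estimates,
\begin{equation}
  \gamma_{\ds}(\cE)\;\leq\;\norm{c}_1\;\leq\;\sqrt m\,\norm{c}_2\;\leq\;D\cdot\frac{\norm{\choi{\cE}}_2}{\sqrt{D\,\sigma_{\mathrm{min}}(G)}}\;=\;\sqrt{\frac{\dimension(A)\dimension(B)}{\sigma_{\mathrm{min}}(G)}}\,\norm{\choi{\cE}}_2 \, ,
\end{equation}
which is the claimed bound. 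There is no genuine obstacle here; the only point requiring care is this last step, namely that the a priori lossy factor $\sqrt m$ coming from Cauchy--Schwarz is in fact harmless because $m\leq(\dimension(A)\dimension(B))^2$, together with keeping the normalization of the Hilbert--Schmidt inner product $\langle\cdot,\cdot\rangle$ (which carries a $1/D$) consistent with the unnormalized Schatten $2$-norm.
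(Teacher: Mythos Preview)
Your proof is correct and follows essentially the same route as the paper's: expand $\cE$ in the basis, bound $\norm{c}_2$ via the smallest eigenvalue of the Gram matrix, apply Cauchy--Schwarz to pass to $\norm{c}_1\leq\sqrt{m}\,\norm{c}_2$, and absorb $\sqrt{m}$ using $m\leq(\dimension(A)\dimension(B))^2$. Your explicit remark that the bound is vacuous when $\cE\notin\spn_{\mathbb{R}}(\ds)$ and your careful tracking of the $1/D$ normalization in $\langle\cdot,\cdot\rangle$ are points the paper's proof leaves implicit.
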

For superoperators derived from a set of quantum instruments $\cE\in\qids[Q]$, we always have $\smash{\norm{\choi{\cE}}_2\leq\norm{\choi{\cE}}_1\leq 1}$.
\begin{proof}
  Let us write the basis as $\{\mathcal{B}_1,\mathcal{B}_2,\dots,\mathcal{B}_m\}$ and denote the expansion of $\cE$ into the basis by $\cE=\sum_i \lambda_i \mathcal{B}_i$ where $\lambda\in\mathbb{R}^m$.
  Notice that
  \begin{equation}
    \frac{\norm{\choi{\cE}}_2^2}{\dimension(AB)} = \frac{\tr[\choi{\cE}^{\dagger}\choi{\cE}]}{\dimension(AB)} = \langle \choi{\cE},\choi{\cE}\rangle = \sum_{i,j}\lambda_i\lambda_j G_{ij} \geq \sigma_{\mathrm{min}}(G)\norm{\lambda}_2^2 \, .
  \end{equation}
  The desired statement follows from
  \begin{equation}
    \gamma_{\ds}(\cE) \leq \norm{\lambda}_1 \leq \sqrt{m}\norm{\lambda}_2 \leq \frac{\sqrt{m}}{\sqrt{\dimension(AB)\sigma_{\mathrm{min}}(G)}}\norm{\choi{\cE}}_2 \leq \sqrt{\frac{\dimension(AB)}{\sigma_{\mathrm{min}}(G)}}\norm{\choi{\cE}}_2
  \end{equation}
  where we used the Cauchy-Schwarz inequality.
\end{proof}
As a simple consequence, this allows us to prove the Lipschitz continuity of the quasiprobability extent.
\begin{lemma}{}{gamma_lipschitz}
  Consider some $\ds\subset\hp(A\rightarrow B)$ and a basis of $\spn_{\mathbb{R}}(\ds)$ contained in $\ds$ with Gram matrix $G$.
  Then $\gamma_{\ds}$ is Lipschitz continuous on $\spn_{\mathbb{R}}(\ds)$
  \begin{equation}
	\forall \cF,\cG\in \spn_{\mathbb{R}}(S): 
    \abs{\gamma_{\ds}(\cE)-\gamma_{\ds}(\cF)}\leq \sqrt{\frac{\dimension(A)\dimension(B)}{\sigma_{\mathrm{min}}(G)}} \norm{\choi{\cE}-\choi{\cF}}_2
  \end{equation}
  where $\sigma_{\mathrm{min}}(G)$ is the smallest eigenvalue of $G$.
\end{lemma}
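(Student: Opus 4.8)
The plan is to deduce this from the fact that $\gamma_{\ds}$ is a seminorm on $\spn_{\mathbb{R}}(\ds)$ together with the explicit upper bound of~\Cref{lem:gamma_upper_bound}. First I would establish the reverse triangle inequality. For $\cE,\cF\in\spn_{\mathbb{R}}(\ds)$, write $\cE=(\cE-\cF)+\cF$ and apply sub-additivity --- the special case $m=2$, $a=(1,1)$ of~\Cref{lem:gamma_convexity} --- to get $\gamma_{\ds}(\cE)\leq\gamma_{\ds}(\cE-\cF)+\gamma_{\ds}(\cF)$, hence $\gamma_{\ds}(\cE)-\gamma_{\ds}(\cF)\leq\gamma_{\ds}(\cE-\cF)$. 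Swapping the roles of $\cE$ and $\cF$ and invoking $\gamma_{\ds}(\cF-\cE)=\gamma_{\ds}(\cE-\cF)$ (absolute homogeneity with $\lambda=-1$, \Cref{lem:gamma_homogeneous}) yields $\abs{\gamma_{\ds}(\cE)-\gamma_{\ds}(\cF)}\leq\gamma_{\ds}(\cE-\cF)$.

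Next, since $\cE,\cF\in\spn_{\mathbb{R}}(\ds)$ we have $\cE-\cF\in\spn_{\mathbb{R}}(\ds)$, so in particular $\gamma_{\ds}(\cE-\cF)<\infty$ and the hypotheses of~\Cref{lem:gamma_upper_bound} (the same decomposition set $\ds$ and the same basis of $\spn_{\mathbb{R}}(\ds)$ contained in $\ds$ with Gram matrix $G$) apply verbatim to the operation $\cE-\cF$. This gives
\begin{equation}
  \gamma_{\ds}(\cE-\cF)\leq \sqrt{\frac{\dimension(A)\dimension(B)}{\sigma_{\mathrm{min}}(G)}}\,\norm{\choi{\cE-\cF}}_2 .
\end{equation}
Finally, by linearity of the Choi--Jamiołkowski isomorphism $\choi{\cE-\cF}=\choi{\cE}-\choi{\cF}$, and combining this with the reverse triangle inequality from the previous paragraph yields the claimed bound.

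There is essentially no hard step here: both ingredients are already established in the excerpt, and the argument is the standard deduction of Lipschitz continuity of a seminorm from its sub-additivity. The only points requiring a word of care are that $\cE-\cF$ genuinely lies in $\spn_{\mathbb{R}}(\ds)$ so that~\Cref{lem:gamma_upper_bound} is applicable (and the extent finite), and that the Gram matrix $G$ appearing in the statement is that of the fixed basis used in~\Cref{lem:gamma_upper_bound}.
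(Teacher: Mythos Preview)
Your proof is correct and essentially identical to the paper's own argument: both establish the reverse triangle inequality via sub-additivity (\Cref{lem:gamma_convexity}) and then bound $\gamma_{\ds}(\cE-\cF)$ using \Cref{lem:gamma_upper_bound}. The paper phrases it with $\Delta\coloneqq\cF-\cE$ but the steps are the same.
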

\begin{proof}
  Define $\Delta\coloneqq\cF - \cE$.
  By~\Cref{lem:gamma_convexity} we have
  \begin{equation}
    \gamma_{\ds}(\cF) = \gamma_{\ds}(\cE + \Delta) \leq \gamma_{\ds}(\cE) + \gamma_{\ds}(\Delta)
    \quad\text{and}\quad
    \gamma_{\ds}(\cE) \leq \gamma_{\ds}(\cF) +  \gamma_{\ds}(\Delta) \, .
  \end{equation}
  It thus suffices to apply~\Cref{lem:gamma_upper_bound} and obtain
  \begin{equation}
    \gamma_{\ds}(\Delta) \leq \sqrt{\frac{\dimension(AB)}{\sigma_{\mathrm{min}}(G)}}\norm{\choi{\Delta}}_2 \, .
  \end{equation}
\end{proof}

\begin{example}\label{ex:modifiedendobasis}
  For single-qubit Pauli operators $Q_1,Q_2\in \mathrm{P}_1$ we define
  \begin{equation}
    R_{Q_1} \coloneqq \frac{1}{\sqrt{2}}\left( \id + iQ_1 \right)
    \, \quad
    R_{Q_1,Q_2} \coloneqq \frac{1}{\sqrt{2}}\left( Q_1 + Q_2 \right)
  \end{equation}
  \begin{equation}
    \pi_{Q_1} \coloneqq \frac{1}{2}\left( \id + Q_1 \right)
    \, \quad
    \pi_{Q_1,Q_2} \coloneqq \frac{1}{2}\left( Q_1 + iQ_2 \right) \, .
  \end{equation}
  Consider the basis $B=\{\mathcal{B}_1,\dots,\mathcal{B}_{16}\}$ of the 16-dimensional space of single-qubit Hermitian-preserving operators depicted in~\Cref{tab:modifiedendobasis}.
  Note that $R_{Q_1}$ is a Clifford unitary, since for $Q_3\in P_1$ we have
  \begin{equation}
    R_{Q_1}Q_3R_{Q_1}^{\dagger} = \begin{cases}Q_3 & \text{ if $[Q_1,Q_3]=0$} \\ iQ_1Q_3 & \text{ else}\end{cases} \, .
  \end{equation}
  Similarly, $R_{Q_1,Q_2}$ is also a Clifford unitary whenever $Q_1$ and $Q_2$ anti-commute
  \begin{equation}
    R_{Q_1,Q_2}Q_3R_{Q_1,Q_2}^{\dagger} = \begin{cases}
        Q_3 & \text{ if $[Q_1,Q_3]=0$ and $[Q_2,Q_3]=0$} \\
        -Q_3 & \text{ if $[Q_1,Q_3]\neq 0$ and $[Q_2,Q_3]\neq 0$} \\
        Q_1Q_2Q_3 & \text{ if $[Q_1,Q_3]\neq 0$ and $[Q_2,Q_3]=0$} \\
        -Q_1Q_2Q_3 & \text{ if $[Q_1,Q_3]=0$ and $[Q_2,Q_3]\neq 0$}
        \end{cases} \, .
  \end{equation}
  Therefore, the $\mathcal{B}_i$ can all be implemented using single-qubit Clifford gates, measurements in the computational basis, classical randomness and classical side information (recall that $\mathcal{M}\coloneqq\indsupo{\proj{0}}-\indsupo{\proj{1}}$ can be realized with classical side information as discussed in~\Cref{sec:qps_intermediate_measurements}).
  
  Remarkably, the Choi matrices of these 16 operators are pairwise orthogonal.
  The Frobenius norm of the Choi matrices is $1$ for the first four elements and $1/\sqrt{2}$ for the remaining 12 elements.
  The associated Gram matrix is therefore diagonal with smallest eigenvalue $1/8$.

  The basis $B$ can be extended to an orthogonal basis of an arbitrary $n$-qubit Hilbert space by simply taking the tensor products of the basis elements.
  The smallest eigenvalue of the Gram matrix is then $2^{-3n}$.
  As such,~\Cref{lem:gamma_lipschitz} implies that for any decomposition set $\ds$ containing $B^{\otimes n}$, the quasiprobability extent $\gamma_{\ds}$ is Lipschitz continuous with Lipschitz constant $2^{5n/2}$. 
  \begin{table}[ht]
  \centering
  \begin{tabular}{| r | l l l|} 
    \hline
     1 & $\idchan$ & & \\  
     2 & $\indsupo{X}$ & & \\
     3 & $\indsupo{Y}$ & & \\
     4 & $\indsupo{Z}$ & & \\
     5 & $\frac{1}{2}(\indsupo{R_X}-\indsupo{R_{-X}})$ & & \\
     6 & $\frac{1}{2}(\indsupo{R_Y}-\indsupo{R_{-Y}})$ & & \\
     7 & $\frac{1}{2}(\indsupo{R_Z}-\indsupo{R_{-Z}})$ & & \\
     8 & $\frac{1}{2}(\indsupo{R_{Y,Z}}-\indsupo{R_{Y,-Z}})$ & & \\
     9 & $\frac{1}{2}(\indsupo{R_{Z,X}}-\indsupo{R_{Z,-X}})$ & & \\
     10 & $\frac{1}{2}(\indsupo{R_{X,Y}}-\indsupo{R_{X,-Y}})$ & & \\
     11 & $\indsupo{\pi_{X}}-\indsupo{\pi_{-X}}$ & = & $\indsupo{R_Z^3R_X^3}\circ\mathcal{M}\circ\indsupo{R_XR_Z}$ \\
     12 & $\indsupo{\pi_{Y}}-\indsupo{\pi_{-Y}}$ & = & $\indsupo{R_X}\circ\mathcal{M}\circ\indsupo{R_X^3}$ \\
     13 & $\indsupo{\pi_{Z}}-\indsupo{\pi_{-Z}}$ & = & $\mathcal{M}$ \\
     14 & $\indsupo{\pi_{Y,Z}}-\indsupo{\pi_{Z,Y}}$ & = & $\indsupo{R_Z^3R_X^3}\circ\mathcal{M}\circ\indsupo{R_X^3R_Z}$ \\
     15 & $\indsupo{\pi_{Z,X}}-\indsupo{\pi_{X,Z}}$ & = & $\indsupo{R_X}\circ\mathcal{M}\circ\indsupo{R_X^3R_Z^2}$ \\
     16 & $\indsupo{\pi_{X,Y}}-\indsupo{\pi_{Y,X}}$ & = & $\mathcal{M}\circ\indsupo{R_X^2}$ \\
     \hline
  \end{tabular}
  \caption{16 Clifford operations constituting a basis of the 16-dimensional space of Hermitian-preserving single-qubit superoperators. Here, $\mathcal{M}\coloneqq\indsupo{\proj{0}}-\indsupo{\proj{1}}$ as in~\Cref{sec:qps_intermediate_measurements}.}
  \label{tab:modifiedendobasis}
  \end{table}
\end{example}

\paragraph{Compact decomposition sets}
We will encounter many settings, where the infimum of the quasiprobability extent is always achieved.
\begin{lemma}{}{compact_ds}
  If $\ds$ is compact, then the infimum in the quasiprobability extent (see~\Cref{eq:gamma,eq:one_element_qpd,eq:two_element_qpd}) is achieved on $\spn_{\mathbb{R}}(\ds)$.
\end{lemma}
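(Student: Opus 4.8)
The plan is to reduce the statement to the compactness of $\aconv(\ds)$, combine that with \Cref{lem:gamma_hulls} and \Cref{lem:one_element_qpd} to get an attained single-element decomposition over $\aconv(\ds)$, and then translate that decomposition back into the forms appearing in \Cref{eq:gamma}, \Cref{eq:two_element_qpd} and \Cref{eq:one_element_qpd}.

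First I would dispose of trivial cases: if $\cE\notin\spn_{\mathbb{R}}(\ds)$ then $\gamma_{\ds}(\cE)=+\infty$ and there is nothing to attain, and if $\ds=\emptyset$ then $\spn_{\mathbb{R}}(\ds)=\{0\}$ and $\gamma_{\ds}\equiv+\infty$; so assume $\cE\in\spn_{\mathbb{R}}(\ds)$ and $\ds\neq\emptyset$, whence $\gamma_{\ds}(\cE)<\infty$. The central structural fact is that $\aconv(\ds)$ is compact. Indeed, $\spn_{\mathbb{R}}(\ds)$ is finite-dimensional (all spaces here are), the set $\ds\cup(-\ds)$ is compact inside it (a finite union of compacts), and $\aconv(\ds)=\conv(\ds\cup(-\ds))$; by Carath\'eodory's theorem, with $d$ the dimension of $\spn_{\mathbb{R}}(\ds)$, this convex hull equals the image of the compact set $\Delta\times(\ds\cup(-\ds))^{d+1}$ (where $\Delta$ is the standard $d$-simplex) under the continuous map $(\mu,(x_i)_i)\mapsto\sum_i\mu_i x_i$, hence is compact.

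Next I would apply \Cref{lem:gamma_hulls} to get $\gamma_{\ds}(\cE)=\gamma_{\aconv(\ds)}(\cE)$, and then \Cref{lem:one_element_qpd} --- legitimate since $\aconv(\ds)$ is absolutely convex --- to obtain $\gamma_{\ds}(\cE)=\inf\{\kappa\geq 0 : \cE=\kappa\cF \text{ for some }\cF\in\aconv(\ds)\}$. Write $\kappa^{*}\coloneqq\gamma_{\ds}(\cE)$. If $\kappa^{*}=0$, boundedness of $\aconv(\ds)$ forces $\cE=0$, which is trivially of the form $0\cdot\cF$. If $\kappa^{*}>0$, pick $\kappa_n\downarrow\kappa^{*}$; since every $\kappa_n>\kappa^{*}$ admits some $\cF'\in\aconv(\ds)$ with $\cE=\kappa'\cF'$ for $\kappa'\in[\kappa^{*},\kappa_n)$, and $(\kappa'/\kappa_n)\cF'\in\aconv(\ds)$ by balancedness, one may choose $\cF_n\in\aconv(\ds)$ with $\cE=\kappa_n\cF_n$; then $\cF_n=\cE/\kappa_n\to\cE/\kappa^{*}$, and closedness of $\aconv(\ds)$ gives $\cE/\kappa^{*}\in\aconv(\ds)$. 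In either case there is $\cF^{*}\in\aconv(\ds)$ with $\cE=\kappa^{*}\cF^{*}$, so the infimum in the single-element formulation over $\aconv(\ds)$ is attained.

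Finally I would pull this back to the original decomposition sets. Writing $\cF^{*}=\sum_i\lambda_i\cG_i$ with $\cG_i\in\ds$ and $\sum_i\abs{\lambda_i}\leq 1$, the decomposition $\cE=\sum_i(\kappa^{*}\lambda_i)\cG_i$ has $1$-norm $\kappa^{*}\sum_i\abs{\lambda_i}\leq\kappa^{*}=\gamma_{\ds}(\cE)$, hence exactly $\gamma_{\ds}(\cE)$, so the infimum in \Cref{eq:gamma} is attained. For \Cref{eq:two_element_qpd} with $\ds$ convex, feeding this optimal decomposition through the positive/negative grouping from the proof of \Cref{lem:two_element_qpd} yields an optimal two-element QPD, and for \Cref{eq:one_element_qpd} with $\ds$ already absolutely convex one has $\aconv(\ds)=\ds$, so $\cF^{*}$ itself lies in $\ds$. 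The only step that needs genuine care is the compactness of $\aconv(\ds)$; everything downstream is bookkeeping.
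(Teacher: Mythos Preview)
Your proof is correct and follows essentially the same approach as the paper: reduce to $\aconv(\ds)$ via \Cref{lem:gamma_hulls}, use that the absolute convex hull of a compact set is compact, and conclude that the single-element infimum in \Cref{lem:one_element_qpd} is attained. The paper compresses your sequence argument into a one-liner (``optimizing a continuous function over a compact set'') and leaves the pull-back to \Cref{eq:gamma,eq:two_element_qpd,eq:one_element_qpd} implicit, but the substance is the same.
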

\begin{proof}
  Without loss of generality, we can assume $\ds$ is absolutely convex, otherwise we can consider its absolute convex hull (see~\Cref{lem:gamma_hulls}, the quasiprobability extent is clearly achieved on $\ds$ if and only if it is achieved on $\aconv(\ds)$).
  For any $\cE\in\spn_{\mathbb{R}}(\ds)$ there must exist a QPD $\cE=\tau\cF$ for some $\cF\in\aconv(\ds)$.
  Hence, we can write
  \begin{equation}
    \gamma_{\ds}(\cE) = \inf\limits_{\kappa\in [0,\tau],\cG\in\aconv(\ds)}\{\kappa | \cE = \kappa\cG \} \, .
  \end{equation}
  Since the absolute convex hull of a compact set is again compact, we are optimizing a continuous function over a compact set, and as such the infimum must be achieved.
\end{proof}

The next property shows that the quasiprobability extent can provide a faithful criterion to determine whether an element is part of the decomposition set or not.
\begin{lemma}{}{gamma_faithful}
  Let $\ds\subset\hp(A\rightarrow B)$ be compact and $\cE\in\hp(A\rightarrow B)$.
  Then $\ds$ is absolutely convex if and only if $\gamma_{\ds}$ fulfills
  \begin{equation}
    \gamma_S(\cE) \leq 1 \Longleftrightarrow \cE\in S \, .
  \end{equation}
\end{lemma}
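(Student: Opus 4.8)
The plan is to prove this biconditional about $\ds$ by handling its two directions separately, understanding the displayed equivalence to be asserted for every $\cE\in\hp(A\rightarrow B)$. The main ingredients I would use are: the single-term reformulation of \Cref{lem:one_element_qpd}, namely $\gamma_{\ds}(\cE)=\inf\{\kappa\ge 0:\cE=\kappa\cF,\ \cF\in\ds\}$ for absolutely convex $\ds$; the fact that compactness forces this infimum to be attained on $\spn_{\mathbb{R}}(\ds)$ (\Cref{lem:compact_ds}); and the equivalence, recalled right after the definition of absolute convexity, between $\ds$ being absolutely convex and $\ds$ being closed under absolutely convex mixtures $\sum_i\lambda_ix_i$ with $x_i\in\ds$ and $\sum_i\abs{\lambda_i}\le 1$. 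I would also dismiss the degenerate case $\ds=\emptyset$ up front.

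For the direction assuming $\ds$ absolutely convex, I would first observe that $\cE\in\ds$ immediately gives the one-term QPD $\cE=1\cdot\cE$, so $\gamma_{\ds}(\cE)\le 1$. For the converse, assuming $\gamma_{\ds}(\cE)\le 1$ puts $\cE$ in $\spn_{\mathbb{R}}(\ds)$, so \Cref{lem:compact_ds} provides $\kappa^{*}=\gamma_{\ds}(\cE)\le 1$ and $\cF\in\ds$ with $\cE=\kappa^{*}\cF$. Since $\ds$ is balanced and nonempty it contains $0$, and since it is convex with $\kappa^{*}\in[0,1]$, the combination $\cE=\kappa^{*}\cF+(1-\kappa^{*})\cdot 0$ lies in $\ds$. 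This settles $\gamma_{\ds}(\cE)\le 1\Leftrightarrow\cE\in\ds$.

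For the reverse direction, assuming the displayed equivalence holds for all $\cE$, I would take an arbitrary absolutely convex mixture $\cE\coloneqq\sum_i\lambda_ix_i$ with $x_i\in\ds$ and $\sum_i\abs{\lambda_i}\le 1$, note that this expression is itself a QPD of $\cE$ with respect to $\ds$ with $1$-norm $\sum_i\abs{\lambda_i}\le 1$, hence $\gamma_{\ds}(\cE)\le 1$, hence $\cE\in\ds$ by assumption; so $\ds$ is closed under absolutely convex mixtures and therefore absolutely convex. I expect the only genuinely delicate point to be the boundary case $\gamma_{\ds}(\cE)=1$ in the forward direction, where concluding $\cE\in\ds$ really requires $\ds$ to be closed (an open absolutely convex set would violate faithfulness on its boundary) — this is exactly what compactness delivers via \Cref{lem:compact_ds}, and everything else is routine manipulation of the definitions.
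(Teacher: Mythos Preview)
Your proof is correct and follows essentially the same approach as the paper: both directions use the same ideas (the one-element QPD form from \Cref{lem:one_element_qpd} together with \Cref{lem:compact_ds} for the forward direction, and the observation that an absolutely convex mixture is already a QPD with $1$-norm at most $1$ for the reverse). The only cosmetic differences are that the paper invokes \Cref{lem:gamma_convexity} rather than stating the QPD directly in the reverse direction, and you spell out $\kappa^{*}\cF\in\ds$ via $0\in\ds$ plus convexity, whereas balancedness alone already gives $\kappa^{*}\cF\in\ds$ for $\kappa^{*}\in[0,1]$.
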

\begin{proof}
  Consider $\ds$ to be absolutely convex.
  Clearly $\cE\in\ds\Rightarrow\gamma_{\ds}(\cE)\leq 1$ and $\gamma_{\ds}(\cE)\leq 1$ implies by~\Cref{lem:one_element_qpd} and~\Cref{lem:compact_ds} that $\cE\in\ds$.

  Conversely, consider a $\ds$ such that $\gamma_{\ds}$ fulfills the faithfulness criterion.
  Then, for any absolutely convex combination $\sum_i a_i\cF_i$ with $\sum_i\abs{a_i}\leq 1$ and $\cF_i\in\ds$ we have $\gamma_{\ds}(\sum_ia_i\cF_i)\leq \sum_i\abs{a_i}\gamma_{\ds}(\cF_i)\leq 1$ by~\Cref{lem:gamma_convexity}.
  Hence, $\sum_ia_i\cF_i$ is an element of $\ds$.
\end{proof}
If we can additionally assume $\cE$ to be trace-preserving, then a tighter statement can be made.
\begin{lemma}{}{gamma_faithful2}
  Let $\ds\subset\hp(A\rightarrow B)$ be convex and compact.
  Furthermore, let either $\ds\subset\tp(A\rightarrow B)$ or $\ds=\qids[Q]$ for some $Q\subset\qi(A\rightarrow B)$ that is closed under mixture.
  For any $\cE\in\hptp(A\rightarrow B)$ we have
  \begin{equation}
    \gamma_{\ds}(\cE) = 1 \Longleftrightarrow \cE\in\ds \, .
  \end{equation}
\end{lemma}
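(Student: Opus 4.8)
The plan is to reduce the claimed equivalence to the identity $\gamma_{\ds}(\cE)=\tr[\choi{\cE}]+2R_{\ds}(\cE)$ already available, and then to the single fact that $R_{\ds}(\cE)=0$ forces $\cE\in\ds$ when $\ds$ is compact. First I would invoke \Cref{lem:robustness1} in the case $\ds\subset\tp(A\rightarrow B)$ and \Cref{lem:robustness2} in the case $\ds=\qids[Q]$; the hypotheses of the present lemma ($\ds$ convex, respectively $Q$ closed under mixture) are precisely what those two statements require, and both give $\gamma_{\ds}(\cE)=\tr[\choi{\cE}]+2R_{\ds}(\cE)$. Since $\cE\in\hptp(A\rightarrow B)$ is trace-preserving, $\tr_B[\choi{\cE}]=\id_A/\dimension(A)$ and hence $\tr[\choi{\cE}]=1$, so $\gamma_{\ds}(\cE)=1+2R_{\ds}(\cE)$. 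Thus $\gamma_{\ds}(\cE)=1$ is equivalent to $R_{\ds}(\cE)=0$, and it remains only to show $R_{\ds}(\cE)=0\iff\cE\in\ds$.

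The direction $\cE\in\ds\Rightarrow R_{\ds}(\cE)=0$ is immediate: taking $t=0$ in \Cref{def:robustness} (with, say, $\cF=\cE$) gives $\cE/\tr[\choi{\cE}]=\cE\in\ds$, so $R_{\ds}(\cE)\le 0$, and since the robustness is non-negative, $R_{\ds}(\cE)=0$. For the converse, suppose $R_{\ds}(\cE)=0$. Then for each $n\in\mathbb{N}$ there are $t_n\in[0,1/n)$ and $\cF_n\in\ds$ with $\cG_n\coloneqq(\cE+t_n\cF_n)/(1+t_n)\in\ds$ (using $\tr[\choi{\cE}]=1$). Compactness of $\ds$ makes $\{\cF_n\}_n$ bounded, so $t_n\cF_n\to 0$ and hence $\cG_n\to\cE$; compactness also makes $\ds$ closed, so $\cE\in\ds$, completing the argument.

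I do not expect a serious obstacle here; the one point that needs care is that $R_{\ds}(\cE)=0$ does not by itself place $\cE$ in $\ds$, because the infimum in \Cref{def:robustness} need not be attained — this is exactly why compactness is assumed, and the minimizing-sequence argument above is what converts $R_{\ds}(\cE)=0$ into membership. As a cosmetic alternative for the case $\ds=\qids[Q]$ one could skip the robustness identity altogether and combine \Cref{lem:one_element_qpd} with \Cref{lem:compact_ds} (both apply since $\qids[Q]$ is absolutely convex and compact by \Cref{lem:decompset_absolutely_convex}), yielding $\cE=\gamma_{\ds}(\cE)\,\cF=\cF\in\ds$ directly; but channeling both cases through \Cref{lem:robustness1,lem:robustness2} keeps the proof uniform.
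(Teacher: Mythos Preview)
Your proof is correct. The approach differs slightly from the paper's in how the converse direction $\gamma_{\ds}(\cE)=1\Rightarrow\cE\in\ds$ is handled. You reduce uniformly to $R_{\ds}(\cE)=0$ via \Cref{lem:robustness1,lem:robustness2} and then run a minimizing-sequence argument using compactness. The paper instead splits cases: for $\ds=\qids[Q]$ it invokes \Cref{lem:gamma_faithful} (essentially your ``cosmetic alternative''), and for $\ds\subset\tp$ it uses \Cref{lem:two_element_qpd,lem:compact_ds} to obtain an \emph{attained} two-element QPD $\cE=a^+\cF^+-a^-\cF^-$ with $a^++a^-=1$, then takes the trace of the Choi representation (all terms are TP) to force $a^+-a^-=1$, hence $a^-=0$ and $\cE=\cF^+\in\ds$. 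Your route is more uniform and avoids the case split; the paper's route is slightly more direct in the TP case since it sidesteps the sequential limit by exploiting that the infimum is attained. Both rely on compactness in an essential way, just at different points.
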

\begin{proof}
  If $\cE\in\ds$ then $\gamma_{\ds}(\cE)\leq 1$.
  The other inequality $\gamma_{\ds}(\cE)\geq 1$ follows directly from~\Cref{lem:robustness1} or~\Cref{lem:robustness2} and the non-negativity of the robustness.
  
  Conversely, let $\gamma_{\ds}(\cE)=1$.
  If $\ds=\qids[Q]$, then~\Cref{lem:gamma_faithful} applies.
  Otherwise, by~\Cref{lem:two_element_qpd,lem:compact_ds} there exists a QPD $\cE=a^+\cF^+-a^-\cF^-$ with $a^++a^-=1$.
  Applying the trace on the Choi representation of both sides, we get $a^+-a^-=1$ and therefore $a^+=1,a^-=0$.
  Hence, $\cE\in\ds$.
\end{proof}

\section{The quasiprobability extent in quantum resource theories}\label{sec:gamma_qrt}
QPS lends itself very well to being used in a resource theoretic context.
A QRT separates all channels into a set of ``free'' and ``non-free'' operations.
As such, it is natural to study how expensive it is to simulate a non-free operation using free operations.
Most applications of QPS we will consider in this thesis can be framed in the context of some QRT.
For instance, the near-Clifford simulation procedure introduced in~\Cref{sec:motivational_example} is tightly connected to the resource theory of magic.
The non-free resources in this case are the magic states and the non-Clifford operations.

More mathematically, for some QRT $\qrt$ we will now study the quasiprobability extent $\gamma_{\qrt^{\star}}$ for
\begin{equation}\label{eq:qrt_extended_ds}
  \qrt^{\star}(A\rightarrow B) \coloneqq \qids[\fqi(A\rightarrow B)]
\end{equation}
where $\fqi$ are the free quantum instruments associated to $\qrt$.
This choice of decomposition set allows us to capture QPS with classical side information, as discussed in~\Cref{sec:qps_intermediate_measurements}.

Generally, we will consider convex QRTs, i.e., resource theories where the set of free channels forms a convex set.
This property can also be characterized in different ways.
\begin{lemma}{}{qrt_convexity}
  Let $\qrt$ be a QRT in which classical processing is free.
  The following are equivalent.
  \begin{enumerate}[(a)]
    \item $\qrt(A\rightarrow B)$ is convex for all $A,B$.
    \item $\fqi(A\rightarrow B)$ is closed under mixture for all $A,B$.
    \item $\qrt^{\star}(A\rightarrow B)$ is absolutely convex for all $A,B$.
  \end{enumerate}
\end{lemma}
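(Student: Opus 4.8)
The plan is to prove the three implications $(a)\Rightarrow(b)\Rightarrow(c)\Rightarrow(a)$, chaining them into a cycle. Throughout, I will use that ``classical processing is free'' means, in particular, that classical-quantum registers can be copied, relabelled, discarded, and that a classically-controlled choice between free channels is free; this is exactly what is needed to manipulate the instruments underlying $\fqi$ and $\qrt^{\star}$. I will also lean on \Cref{lem:decompset_absolutely_convex}, which already tells us that $Q$ closed under mixture implies $\qids[Q]$ absolutely convex, so $(b)\Rightarrow(c)$ is essentially a citation.

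For $(a)\Rightarrow(b)$: take two free instruments $\cI=(\cE_i)_i$ and $\cI'=(\cF_j)_j$ in $\fqi(A\rightarrow B)$ and $p\in[0,1]$; I must show the mixture $(p\cE_i)_i\cup((1-p)\cF_j)_j$ is free. The idea is to realize the mixed instrument as the marginal of a free channel $A\rightarrow BX$. Using \Cref{def:free_qi}, $\cI$ and $\cI'$ correspond to free channels $\tilde\cE,\tilde\cE'\in\qrt(A\rightarrow BX)$ writing the outcome into a register $X$. A classically-controlled mixture that first prepares a free bit with bias $p$ (free, since classical processing is free and constant classical states are free by \Cref{lem:qrt_free_state_prep}) and then applies $\tilde\cE$ or $\tilde\cE'$ accordingly, recording both the control bit and the outcome, is a composition of free channels and hence free by property 2 of a QRT together with the tensor-product/relabelling structure; convexity (a) is what lets us also collapse the control bit into $X$ if desired, but more importantly convexity guarantees $\qrt(A\rightarrow BX)$ is closed under the convex combination $p\tilde\cE+(1-p)\tilde\cE'$, which is precisely the channel whose marginal-instrument is the claimed mixture. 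So the mixture lies in $\fqi$.

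For $(c)\Rightarrow(a)$: assume $\qrt^{\star}(A\rightarrow B)=\qids[\fqi(A\rightarrow B)]$ is absolutely convex; I want $\qrt(A\rightarrow B)$ convex. Take $\cE,\cF\in\qrt(A\rightarrow B)$ and $p\in[0,1]$. Each single-element instrument $(\cE)$, $(\cF)$ lies in $\fqi$, so $\cE,\cF\in\qrt^{\star}$. By absolute convexity, $p\cE+(1-p)\cF\in\qrt^{\star}$, i.e. it can be written as $\sum_i\beta_i\cG_i$ for some free instrument $(\cG_i)_i\in\fqi$ and $|\beta_i|\le1$. The remaining task is to upgrade this back to membership in $\qrt(A\rightarrow B)$: since $p\cE+(1-p)\cF$ is trace-preserving (a convex combination of channels), I invoke \Cref{lem:gamma_faithful2} — noting $\qrt^{\star}$ is of the form $\qids[Q]$ — to conclude $\gamma_{\qrt^{\star}}(p\cE+(1-p)\cF)=1$ forces $p\cE+(1-p)\cF\in\qrt^{\star}$ is realized by a genuine free \emph{channel}, namely the marginal $\qichan$ of a free instrument, which by \Cref{lem:free_chan_from_qi} lies in $\qrt(A\rightarrow B)$. (One must check the compactness hypothesis of \Cref{lem:gamma_faithful2} or argue directly; the cleaner route, which I'd actually take, is to observe that $p\cE+(1-p)\cF\in\qids[Q]$ together with trace-preservation means the two-element instrument representing it has $\beta$-coefficients summing to $1$ with all $|\beta_i|\le1$, hence coincides with its own marginal channel $\qichan[\cI]\in\qrt(A\rightarrow B)$ by \Cref{lem:free_chan_from_qi}.)

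The main obstacle I anticipate is the last step, $(c)\Rightarrow(a)$: absolute convexity of $\qrt^{\star}$ a priori only gives a \emph{quasiprobability} decomposition with signed coefficients, and one must argue that a trace-preserving element of $\qids[\fqi]$ is in fact a \emph{free channel} and not merely a signed combination of instrument components. This is exactly the content bridging $\qids$ and $\qichan$, and getting the bookkeeping right — that trace-preservation plus $\sum|\beta_i|\le1$ pins down $\sum\beta_i=1$ and positivity of each piece — is where care is needed; everything else is routine use of the QRT axioms and the already-established lemmas.
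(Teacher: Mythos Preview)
Your cycle $(a)\Rightarrow(b)\Rightarrow(c)\Rightarrow(a)$ and the arguments for the first two implications match the paper exactly, including the citation of \Cref{lem:decompset_absolutely_convex} for $(b)\Rightarrow(c)$.

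For $(c)\Rightarrow(a)$ your first route via \Cref{lem:gamma_faithful2} is a dead end even without the compactness worry: that lemma would only return $p\cE+(1-p)\cF\in\qrt^{\star}$, which you already have by hypothesis, not membership in $\qrt$. Your ``cleaner route'' is the paper's actual argument, but your phrasing is slightly off: the constraint is $|\beta_i|\le1$ for each $i$ (not $\sum|\beta_i|\le1$), and the conclusion is $\beta_i=1$ for each nonzero $\cG_i$ (not $\sum\beta_i=1$). The paper makes this precise via a Loewner sandwich: writing $p\cE+(1-p)\cF=\sum_i\beta_i\cG_i$ with $(\cG_i)_i\in\fqi$ and taking $\tr_B$ of the Choi representation gives
\[
\frac{\id_A}{\dimension(A)}=\sum_i\beta_i\,\tr_B[\choi{\cG_i}]\loewnerleq\sum_i\tr_B[\choi{\cG_i}]=\frac{\id_A}{\dimension(A)},
\]
where the inequality uses $\tr_B[\choi{\cG_i}]\loewnergeq0$ and $\beta_i\le1$, and the right-hand equality uses that $\sum_i\cG_i$ is trace-preserving. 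Equality throughout forces $\beta_i=1$ whenever $\cG_i\ne0$, so $p\cE+(1-p)\cF=\qichan[(\cG_i)_i]\in\qrt(A\rightarrow B)$ by \Cref{lem:free_chan_from_qi}.
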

\begin{proof}
  To show (a) $\Rightarrow$ (b), recall that $(\cE_i)_i,(\cF_j)_j\in\fqi(A\rightarrow B)$ implies that there exist free channels of the form $\sum_i\cE_i\otimes\proj{i}_{X_1}\in\qrt(A\rightarrow BX_1)$ and $\sum_j\cF_j\otimes\proj{j}_{X_2}\in\qrt(A\rightarrow BX_2)$.
  Taking a convex mixture of these two measurement channels corresponds exactly to taking a mixture of the two original instruments.
  To achieve this, appropriate classical processing with padding is required on $X_1$ and $X_2$ such that the two measurement channels operate on the same spaces.

  The statement (b) $\Rightarrow$ (c) was already proven in~\Cref{lem:decompset_absolutely_convex}.

  For the final direction (c) $\Rightarrow$ (a), consider two channels $\cE,\cF\in\qrt(A\rightarrow B)$.
  We know that any convex mixture $(1-p)\cE +p\cF$ lies in $\qrt^{\star}(A\rightarrow B)$, and we now need to show that it also lies in $\qrt(A\rightarrow B)$.
  By the definition of $\qrt^{\star}(A\rightarrow B)$, we can write
  \begin{equation}
    (1-p)\cE + p\cF = \sum_i \beta_i \cG_i
  \end{equation}
  for some $\beta_i\in [-1,1]$ and $(\cG_i)_i\in\fqi(A\rightarrow B)$.
  Taking the Choi isomorphism on both sides, as well as the partial trace $\tr_B$ thereof, we obtain
  \begin{equation}
    \frac{1}{\dimension(A)}\id = \sum_i \beta_i \tr_B[\choi{\cG_i}] \, .
  \end{equation}
  At the same time, we know that $\sum_i\cG_i\in\cptp(A\rightarrow B)$, so
  \begin{equation}
    \frac{1}{\dimension(A)}\id = \sum_i \beta_i \tr_B[\choi{\cG_i}] \loewnerleq \sum_i \tr_B[\choi{\cG_i}] = \frac{1}{\dimension(A)}\id \, .
  \end{equation}
  This implies that for all $i$ we have $\beta_i=1$, unless $\cG_i$ is itself zero.
  Therefore, $(1-p)\cE + p\cF\in\qichan[\fqi(A\rightarrow B)]$, which is equal to $\qrt(A\rightarrow B)$ by~\Cref{lem:free_chan_from_qi}.
\end{proof}
In the context of QPS, convexity of a QRT is very natural, as the QPS procedure itself requires the ability to randomly pick and apply a channel during the circuit execution.

The set of free states of a QRT is denoted by $\fs(A)$ for some given system $A$.
It can be understood to be the free channels from the trivial Hilbert space $\mathbb{C}$ to $A$, i.e., $\fs(A)\coloneqq\qrt(\mathbb{C}\rightarrow A)\subset\dops(A)$.
As such, one might a priori expect that the simulation of non-free states $\omega\in\dops(A)$ should involve QPDs w.r.t. $\qrt^{\star}(\mathbb{C},A)$ to fully harness the utility of classical side information.
However, it turns out that classical side information does not provide any benefit for simulating states.
\begin{lemma}{}{negativity_state_qpd}
  Let $\qrt$ be a QRT that fulfills the axiom of free instruments and $A$ be some system.
  Then $\aconv(\qrt^{\star}(\mathbb{C},A)) = \aconv(\qrt(\mathbb{C},A))$ and hence
  \begin{equation}
    \forall\omega\in\herm(A): \gamma_{\qrt^{\star}(\mathbb{C},A)}(\omega) = \gamma_{\qrt(\mathbb{C},A)}(\omega) = \gamma_{\fs(A)}(\omega) \, .
  \end{equation}
\end{lemma}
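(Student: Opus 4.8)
The plan is to show $\aconv(\qrt^{\star}(\mathbb{C},A)) = \aconv(\qrt(\mathbb{C},A))$, from which the statement about the quasiprobability extent follows immediately by \Cref{lem:gamma_hulls}, since $\gamma_{\ds}$ only depends on $\aconv(\ds)$ (and $\qrt(\mathbb{C},A)=\fs(A)$ by definition). The inclusion $\aconv(\qrt(\mathbb{C},A))\subseteq\aconv(\qrt^{\star}(\mathbb{C},A))$ is trivial, because any free channel $\cE\in\qrt(\mathbb{C}\rightarrow A)$ is in particular a one-element free instrument, hence lies in $\qrt^{\star}(\mathbb{C},A)$. So the whole content is in the reverse inclusion: every element of $\qrt^{\star}(\mathbb{C},A)$ lies in $\aconv(\fs(A))$.

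First I would take an arbitrary $\cE\in\qrt^{\star}(\mathbb{C},A)$, so $\cE = \sum_i \beta_i \cE_i$ with $(\cE_i)_i\in\fqi(\mathbb{C}\rightarrow A)$ and $\beta_i\in[-1,1]$. Since the input space is trivial, each component $\cE_i$ is a (subnormalized) state preparation: identify $\cE_i$ with $\omega_i\coloneqq \cE_i(1)\in\pos(A)$, and likewise $\cE$ with the Hermitian operator $\omega\coloneqq\sum_i\beta_i\omega_i$. Because $(\cE_i)_i$ is a free instrument, there is some system $X$ with orthonormal basis $\{\ket{i}\}$ such that the measurement channel $\tilde{\cE}\colon 1 \mapsto \sum_i \omega_i\otimes\proj{i}_X$ lies in $\qrt(\mathbb{C}\rightarrow AX) = \fs(AX)$. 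Now I invoke the axiom of free instruments: applied to the free state $\tilde{\cE}(1)\in\fs(AX)$, viewed itself (trivially) as the output of the trivial free instrument, or more directly — apply the axiom to the free instrument $(\cE_i)_i$ acting on the unique free state $1\in\fs(\mathbb{C})$. The axiom gives that each post-measurement state $\omega_i/\tr[\omega_i]$ lies in $\fs(A)$, for every $i$ with $\tr[\omega_i]\neq 0$ (the zero components can be dropped). So each $\omega_i = \tr[\omega_i]\,\rho_i$ with $\rho_i\in\fs(A)$ and $\tr[\omega_i]\geq 0$.

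Therefore $\omega = \sum_i \beta_i\tr[\omega_i]\,\rho_i$ is an absolutely convex combination of free states, provided $\sum_i |\beta_i|\tr[\omega_i] \leq 1$; and indeed $\sum_i |\beta_i|\tr[\omega_i] \leq \sum_i \tr[\omega_i] = \tr\!\big[\sum_i\omega_i\big] = 1$, since $\sum_i\cE_i\in\tp(\mathbb{C}\rightarrow A)$ is trace-preserving, so $\tr[\sum_i\omega_i]=\tr[1]=1$. Hence $\omega\in\aconv(\fs(A))$, completing the reverse inclusion and thus the equality of the two absolute convex hulls. The last equality $\gamma_{\qrt(\mathbb{C},A)}(\omega)=\gamma_{\fs(A)}(\omega)$ is just unfolding the definition $\fs(A)=\qrt(\mathbb{C}\rightarrow A)$.

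The only genuinely delicate point — the \emph{main obstacle} — is making sure the axiom of free instruments is applied in the right form: the axiom as stated in \Cref{def:axiom_fqi} quantifies over free instruments $\cI=(\cE_i)_i\in\fqi(A\rightarrow B)$ and free input states $\rho\in\fs(A)$, and here the input system is $\mathbb{C}$ with its unique (automatically free) state $1$, so one must check that $\fqi(\mathbb{C}\rightarrow A)$ is the object one expects and that $1\in\fs(\mathbb{C})$ — both of which are immediate from \Cref{def:free_qi} and the fact that $\idchan_{\mathbb{C}}$ is free. Everything else is bookkeeping with traces and the trace-preservation constraint on instruments.
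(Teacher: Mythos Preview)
Your proof is correct and follows essentially the same approach as the paper: identify the instrument components with subnormalized states, apply the axiom of free instruments to conclude each normalized component is free, and then check that the resulting combination is absolutely convex via $\sum_i|\beta_i|\tr[\omega_i]\leq\sum_i\tr[\omega_i]=1$. You are in fact slightly more explicit than the paper in verifying the absolute-convexity bound and in justifying that $1\in\fs(\mathbb{C})$ so the axiom applies.
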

\begin{proof}
  The second statement follows from~\Cref{lem:gamma_hulls}.
  Since $\qrt(\mathbb{C},A)\subset\qrt^{\star}(\mathbb{C},A)$, it suffices to show that $\qrt^{\star}(\mathbb{C},A)\subset \aconv(\qrt(\mathbb{C},A))$.
  Consider an element $\sum_i\beta_i\cE_i \in \qrt^{\star}(\mathbb{C},A)$.
  The $\cE_i$ can be identified with positive operators $\rho_i \in \herm(A)$ which fulfill $\tr\left[\sum_i \rho_i\right]=1$.
  By the axiom of free instruments, all $\rho_i / \tr[\rho_i]$ must be free states $\rho_i / \tr[\rho_i] \in \fs(A)$, respectively $\cE_i / \tr[\rho_i]\in \qrt(\mathbb{C},A)$, which implies that $\sum_i\beta_i\tr[\rho_i](\cE_i/\tr[\rho_i]) \in \aconv(\qrt(\mathbb{C},A))$.
  Note that this proof would not work for channels (i.e. $A\neq \mathbb{C}$) as the $\cE_i$ can generally not be normalized by a scalar in order to obtain a valid quantum channel.
\end{proof}
For notational convenience, we will usually omit the involved systems in the decomposition set when they are clear from context.
For instance, we simply write $\gamma_{\qrt^{\star}}(\cE)$ and $\gamma_{\fs}(\omega)$, as the systems are implicitly specified by $\cE\in\hp(A\rightarrow B)$ and $\omega\in\herm(A)$.
The rest of this section is dedicated to studying various properties of the quasiprobability extent $\gamma_{\qrt^{\star}}$.
While we will focus on proving these properties for $\gamma_{\qrt^{\star}}$, many of them do also hold for $\gamma_{\qrt}$\footnote{In fact, these properties for $\gamma_{\qrt}$ are rather direct consequences of the definition of a QRT. Showing them for $\gamma_{\qrt^{\star}}$ requires a bit more work.}.

\paragraph{Chaining property}
Recall that the free operations of a QRT are by definition closed under composition.
We first show that the same holds for $\qrt^{\star}(A\rightarrow B)$ if the QRT additionally has a tensor product structure.
\begin{lemma}{}{ds_closed_under_concat}
  Let $\qrt$ be a QRT with tensor product structure and $\cE\in\qrt^{\star}(A\rightarrow B$), $\cF\in\qrt^{\star}(B\rightarrow C)$.
  Then $\cF\circ\cE\in\qrt^{\star}(A\rightarrow C)$.
\end{lemma}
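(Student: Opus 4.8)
The plan is to reduce the statement to the composition law for free quantum instruments, which in turn follows from the composition property of free channels together with the tensor product structure of the QRT. First I would write $\cE = \sum_i \beta_i \cE_i$ with $(\cE_i)_i \in \fqi(A\rightarrow B)$, $\beta_i \in [-1,1]$ and $\cF = \sum_j \gamma_j \cF_j$ with $(\cF_j)_j \in \fqi(B\rightarrow C)$, $\gamma_j \in [-1,1]$. Then $\cF\circ\cE = \sum_{i,j} \beta_i\gamma_j (\cF_j\circ\cE_i)$, and since $|\beta_i\gamma_j|\leq 1$, it suffices to show that the family $(\cF_j\circ\cE_i)_{(i,j)}$ (after discarding any zero components and relabelling the index set) is a free quantum instrument in $\fqi(A\rightarrow C)$.

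The key step is to build the ``measure-then-measure'' channel and verify it is free. By definition of $\fqi$, there exist systems $X_1, X_2$ and free channels
\begin{equation}
  \tilde{\cE}_{A\rightarrow BX_1}(\rho) = \sum_i \cE_i(\rho)\otimes\proj{i}_{X_1} \in \qrt(A\rightarrow BX_1), \qquad
  \tilde{\cF}_{B\rightarrow CX_2}(\sigma) = \sum_j \cF_j(\sigma)\otimes\proj{j}_{X_2} \in \qrt(B\rightarrow CX_2).
\end{equation}
Using the tensor product structure (completely free operations), $\tilde{\cF}\otimes\idchan_{X_1}\in\qrt(BX_1\rightarrow CX_2X_1)$, and since composition is free in any QRT, the map
\begin{equation}
  \rho \mapsto (\tilde{\cF}\otimes\idchan_{X_1})\circ\tilde{\cE}\,(\rho) = \sum_{i,j} (\cF_j\circ\cE_i)(\rho)\otimes\proj{j}_{X_2}\otimes\proj{i}_{X_1}
\end{equation}
lies in $\qrt(A\rightarrow CX_2X_1)$. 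Identifying $X_2X_1$ with a single classical register $X$ via its orthonormal basis $\{\ket{j}\otimes\ket{i}\}_{(i,j)}$ exhibits this map in precisely the form required by \Cref{def:free_qi}, so $(\cF_j\circ\cE_i)_{(i,j)}\in\fqi(A\rightarrow C)$ (dropping zero components, which does not affect membership). Hence $\cF\circ\cE = \sum_{(i,j)} (\beta_i\gamma_j)(\cF_j\circ\cE_i) \in \qids[\fqi(A\rightarrow C)] = \qrt^{\star}(A\rightarrow C)$.

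The main obstacle, and the only place where tensor product structure is genuinely needed, is the step ``$\tilde{\cF}\otimes\idchan_{X_1}$ is free'': a priori $\tilde{\cF}$ is only guaranteed free as a channel $B\rightarrow CX_2$, and one must append an untouched classical register $X_1$ carrying the first measurement outcome. The ``completely free operations'' clause of \Cref{def:qrt_tp_structure} is exactly what licenses this. A minor technical point to handle carefully is the edge case where one of the instruments $(\cE_i)_i$ or $(\cF_j)_j$ is a singleton consisting of a single free channel (the first alternative in \Cref{def:free_qi}); this case is even easier since then $\tilde{\cE}$ (or $\tilde{\cF}$) can be taken to be that free channel itself with trivial $X_1$ (or $X_2$), and the argument goes through verbatim. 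I would also remark that one should relabel the composite index set and remove vanishing terms so that the resulting family genuinely consists of non-zero completely positive maps, matching the definition of a quantum instrument.
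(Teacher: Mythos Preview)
Your proof is correct and follows essentially the same approach as the paper: write $\cE$ and $\cF$ as weighted sums over free instruments, use the tensor product structure to show that $(\tilde{\cF}\otimes\idchan_{X_1})\circ\tilde{\cE}$ is a free channel, conclude that $(\cF_j\circ\cE_i)_{i,j}$ is a free instrument, and hence that $\cF\circ\cE\in\qrt^{\star}(A\rightarrow C)$. Your additional remarks on the singleton case in \Cref{def:free_qi} and on discarding zero components are careful touches that the paper's proof glosses over.
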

\begin{proof}
  We can write $\cE=\sum_i\alpha_i\cE_i$ and $\cF=\sum_j\beta_j\cF_j$ for $(\cE_i)_i\in\fqi(A\rightarrow B)$, $(\cF_j)_j\in\fqi(B\rightarrow C)$ and $\alpha_i,\beta_j\in [-1,1]$.
  By the definition of $\fqi$, there exist free channels $\tilde{\cE}\in\qrt(A\rightarrow BX_1)$ and $\tilde{\cF}\in\qrt(B\rightarrow CX_2)$ given by
  \begin{equation}
    \tilde{\cE}(\rho)=\sum_i\cE_i(\rho)\otimes\proj{i}
    \, , \quad
    \tilde{\cF}(\rho)=\sum_j\cF_j(\rho)\otimes\proj{j} \, .
  \end{equation}
  By the tensor product structure, $(\tilde{\cF}\otimes\idchan_{X_1})\circ \tilde{\cE}$ is also a free operation in $\qrt(A\rightarrow CX_1X_2)$, which implies that the quantum instrument $(\cF_j\circ \cE_i)_{i,j}$ is free.
  Therefore, $\cF\circ\cE = \sum_{i,j}\alpha_i\beta_j \cF_j\circ\cE_i$ is an element of $\qrt^{\star}(A\rightarrow C)$.
\end{proof}
As a direct consequence, the quasiprobability extent fulfills the \emph{chaining property}.
\begin{lemma}{}{gamma_chaining}
  For any QRT $\qrt$ with tensor product structure and any $\cE\in\hp(A\rightarrow B),\cF\in\hp(B\rightarrow C)$ one has
  \begin{equation}
    \gamma_{\qrt^{\star}}(\cF\circ \cE) \leq \gamma_{\qrt^{\star}}(\cE) \gamma_{\qrt^{\star}}(\cF) \, .
  \end{equation}
\end{lemma}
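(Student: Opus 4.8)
The statement is $\gamma_{\qrt^{\star}}(\cF\circ\cE)\le\gamma_{\qrt^{\star}}(\cE)\,\gamma_{\qrt^{\star}}(\cF)$, and the natural strategy is to take near-optimal QPDs of $\cE$ and $\cF$ individually and compose them into a QPD of $\cF\circ\cE$ whose $1$-norm is bounded by the product. The key enabling fact has essentially just been established: \Cref{lem:ds_closed_under_concat} tells us that $\qrt^\star$ is closed under composition for a QRT with tensor product structure, so composing two decomposition-set elements stays inside the decomposition set. This is exactly what makes the multiplicativity of $1$-norms go through.

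Concretely, let me first dispose of the degenerate case: if either $\gamma_{\qrt^\star}(\cE)$ or $\gamma_{\qrt^\star}(\cF)$ is infinite, there is nothing to prove, so assume both are finite. Fix $\varepsilon>0$ and pick QPDs $\cE=\sum_{i=1}^{m}a_i\cE_i$ with $\cE_i\in\qrt^\star(A\rightarrow B)$ and $\sum_i|a_i|\le\gamma_{\qrt^\star}(\cE)+\varepsilon$, and $\cF=\sum_{j=1}^{n}b_j\cF_j$ with $\cF_j\in\qrt^\star(B\rightarrow C)$ and $\sum_j|b_j|\le\gamma_{\qrt^\star}(\cF)+\varepsilon$. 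Then by bilinearity of composition,
\begin{equation}
  \cF\circ\cE \;=\; \sum_{i=1}^{m}\sum_{j=1}^{n} a_i b_j\,(\cF_j\circ\cE_i) \, ,
\end{equation}
and by \Cref{lem:ds_closed_under_concat} each $\cF_j\circ\cE_i\in\qrt^\star(A\rightarrow C)$, so this is a genuine QPD of $\cF\circ\cE$ with respect to $\qrt^\star(A\rightarrow C)$. Its $1$-norm is
\begin{equation}
  \sum_{i,j}|a_i b_j| \;=\; \Bigl(\sum_i |a_i|\Bigr)\Bigl(\sum_j |b_j|\Bigr) \;\le\; \bigl(\gamma_{\qrt^\star}(\cE)+\varepsilon\bigr)\bigl(\gamma_{\qrt^\star}(\cF)+\varepsilon\bigr) \, .
\end{equation}
Hence $\gamma_{\qrt^\star}(\cF\circ\cE)\le(\gamma_{\qrt^\star}(\cE)+\varepsilon)(\gamma_{\qrt^\star}(\cF)+\varepsilon)$ for every $\varepsilon>0$, and letting $\varepsilon\to 0^+$ gives the claim.

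There is no real obstacle here — the only subtlety worth a sentence is making sure that the composition $\cF_j\circ\cE_i$ is well-defined and lands in $\hp(A\rightarrow C)$, which is immediate since $\cE_i$ maps into $B$ and $\cF_j$ acts on $B$, and that the QPDs need not be optimal (the infimum in \Cref{def:gamma} need not be attained), which is why I work with $\varepsilon$-approximate decompositions rather than optimal ones; if one additionally knows the relevant decomposition sets are compact (\Cref{lem:compact_ds}), one could take $\varepsilon=0$ directly, but the $\varepsilon$-argument avoids needing that hypothesis.
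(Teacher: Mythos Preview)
Your proof is correct and follows essentially the same approach as the paper: compose near-optimal QPDs of $\cE$ and $\cF$ and invoke \Cref{lem:ds_closed_under_concat} to ensure each $\cF_j\circ\cE_i$ lies in $\qrt^\star(A\rightarrow C)$. The paper's proof is a one-line sketch of exactly this argument; your version is simply more explicit about the $\varepsilon$-approximation needed because the infimum in \Cref{def:gamma} may not be attained.
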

To prove this, notice that any two QPDs of $\cE$ and $\cF$ can be merged into a QPD of $\cF\circ\cE$.
As an important special case, one can choose $A$ to be the trivial Hilbert space, which gives us a relation between the quasiprobability extent of states and of channels.
\begin{corollary}{}{gamma_state_from_channel}
  Let $\qrt$ be a QRT with tensor product structure, $\omega\in\herm(A)$ and $\cE\in\hp(A\rightarrow B)$.
  Then
  \begin{equation}
    \gamma_{\qrt^{\star}}(\cE(\omega)) \leq \gamma_{\qrt^{\star}}(\cE) \gamma_{\qrt^{\star}}(\omega) \, .
  \end{equation}
\end{corollary}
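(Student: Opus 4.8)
The plan is to obtain this as a direct specialization of the chaining property (\Cref{lem:gamma_chaining}) by treating the state $\omega$ as a state-preparation map with trivial input space.

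First I would recall the standard identification of a Hermitian operator $\omega\in\herm(A)$ with a Hermitian-preserving superoperator in $\hp(\mathbb{C}\rightarrow A)$, namely the map sending the unique state of $\mathbb{C}$ to $\omega$. Under this identification the symbol $\gamma_{\qrt^{\star}}(\omega)$ was \emph{defined} to be the quasiprobability extent $\gamma_{\qrt^{\star}(\mathbb{C}\rightarrow A)}$ of the associated map, and the state $\cE(\omega)\in\herm(B)$ corresponds to the composition $\cE\circ\omega\in\hp(\mathbb{C}\rightarrow B)$.

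Then I would invoke \Cref{lem:gamma_chaining} with the input system taken to be the trivial Hilbert space $\mathbb{C}$, the intermediate system $A$, and the output system $B$, and with the roles of the two maps played by $\omega\in\hp(\mathbb{C}\rightarrow A)$ and $\cE\in\hp(A\rightarrow B)$. This immediately yields
\begin{equation}
  \gamma_{\qrt^{\star}}(\cE(\omega)) = \gamma_{\qrt^{\star}}(\cE\circ\omega) \leq \gamma_{\qrt^{\star}}(\omega)\,\gamma_{\qrt^{\star}}(\cE) \, ,
\end{equation}
which is the claim. The tensor product structure hypothesis needed by \Cref{lem:gamma_chaining} is exactly the one assumed in the corollary, so there is no real obstacle; the only point requiring a moment of care is the bookkeeping identifying states with channels out of $\mathbb{C}$.

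If one instead wished to make the argument self-contained without citing the chaining lemma, one would merge a QPD $\omega=\sum_i a_i\tau_i$ with $\tau_i\in\qrt^{\star}(\mathbb{C}\rightarrow A)$ and a QPD $\cE=\sum_j b_j\cF_j$ with $\cF_j\in\qrt^{\star}(A\rightarrow B)$ into $\cE(\omega)=\sum_{i,j} a_ib_j\,\cF_j(\tau_i)$, use \Cref{lem:ds_closed_under_concat} to see that each $\cF_j(\tau_i)\in\qrt^{\star}(\mathbb{C}\rightarrow B)$, and read off the $1$-norm $\sum_{i,j}\abs{a_ib_j} = \norm{a}_1\norm{b}_1$, then take the infimum over the two QPDs.
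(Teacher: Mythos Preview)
Your proposal is correct and matches the paper's approach exactly: the paper introduces this corollary by stating that one obtains it from \Cref{lem:gamma_chaining} as the special case where the input system is the trivial Hilbert space~$\mathbb{C}$, which is precisely the identification of $\omega$ with a preparation map in $\hp(\mathbb{C}\rightarrow A)$ that you spell out.
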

Another consequence is that the quasiprobability extent is invariant under reversible free operations.
\begin{corollary}{}{gamma_invariant_reversible}
  Consider a QRT $\qrt$ with tensor product structure.
  Let $\cU_1\in\qrt(A),\cU_2\in\qrt(B)$ be unitary channels s.t. their inverses $\cU_1^{-1}$, $\cU_2^{-1}$ are also free.
  For any $\cE\in\hp(A\rightarrow B)$ one has
  \begin{equation}
    \gamma_{\qrt^{\star}}(\cU_2\circ \cE) = \gamma_{\qrt^{\star}}(\cE) = \gamma_{\qrt^{\star}}(\cE\circ \cU_1) \, .
  \end{equation}
\end{corollary}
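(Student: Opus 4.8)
The plan is to derive both equalities from the chaining property (\Cref{lem:gamma_chaining}) together with the elementary observation that every free channel has quasiprobability extent at most $1$. For that observation: if $\cF\in\qrt(A\rightarrow B)$, then the single-element instrument $(\cF)$ lies in $\fqi(A\rightarrow B)$, so $\cF = 1\cdot\cF\in\qids[\fqi(A\rightarrow B)] = \qrt^{\star}(A\rightarrow B)$, and the trivial one-term QPD $\cF=1\cdot\cF$ shows $\gamma_{\qrt^{\star}}(\cF)\leq 1$. Note this uses the tensor product structure only implicitly; what is genuinely needed is that $\qrt$ has a tensor product structure so that \Cref{lem:gamma_chaining} applies, which is part of the hypothesis.

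First I would treat $\gamma_{\qrt^{\star}}(\cU_2\circ\cE)$. Applying \Cref{lem:gamma_chaining} with $\cF=\cU_2$ gives $\gamma_{\qrt^{\star}}(\cU_2\circ\cE)\leq\gamma_{\qrt^{\star}}(\cU_2)\,\gamma_{\qrt^{\star}}(\cE)\leq\gamma_{\qrt^{\star}}(\cE)$, since $\cU_2$ is free. For the reverse inequality, I would write $\cE = \cU_2^{-1}\circ(\cU_2\circ\cE)$ and apply \Cref{lem:gamma_chaining} once more: $\gamma_{\qrt^{\star}}(\cE)\leq\gamma_{\qrt^{\star}}(\cU_2^{-1})\,\gamma_{\qrt^{\star}}(\cU_2\circ\cE)\leq\gamma_{\qrt^{\star}}(\cU_2\circ\cE)$, where this step crucially uses that $\cU_2^{-1}$ is free, which holds by hypothesis. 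Combining the two inequalities gives $\gamma_{\qrt^{\star}}(\cU_2\circ\cE)=\gamma_{\qrt^{\star}}(\cE)$.

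The argument for $\cE\circ\cU_1$ is entirely symmetric: \Cref{lem:gamma_chaining} yields $\gamma_{\qrt^{\star}}(\cE\circ\cU_1)\leq\gamma_{\qrt^{\star}}(\cU_1)\,\gamma_{\qrt^{\star}}(\cE)\leq\gamma_{\qrt^{\star}}(\cE)$, and writing $\cE = (\cE\circ\cU_1)\circ\cU_1^{-1}$ with $\cU_1^{-1}$ free gives the reverse inequality $\gamma_{\qrt^{\star}}(\cE)\leq\gamma_{\qrt^{\star}}(\cE\circ\cU_1)$. I do not anticipate any real obstacle: the proof is a twofold application of sub-multiplicativity together with cancellation by the inverse. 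The two points to be careful about are (i) invoking the freeness of the \emph{inverses} $\cU_1^{-1},\cU_2^{-1}$, not merely of $\cU_1,\cU_2$, for the reverse inequalities, and (ii) that \Cref{lem:gamma_chaining} needs the tensor product structure, which is assumed. If desired, one could sharpen $\gamma_{\qrt^{\star}}(\cU_i)\leq 1$ to $\gamma_{\qrt^{\star}}(\cU_i)=1$ via \Cref{lem:gamma_faithful2} (or \Cref{lem:robustness2}), but this is not required for the statement.
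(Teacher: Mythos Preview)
Your proposal is correct and matches the paper's own proof essentially line for line: both apply \Cref{lem:gamma_chaining} once with $\cU_2$ to get $\gamma_{\qrt^{\star}}(\cU_2\circ\cE)\leq\gamma_{\qrt^{\star}}(\cE)$, then once with $\cU_2^{-1}$ on $\cE=\cU_2^{-1}\circ(\cU_2\circ\cE)$ for the reverse inequality, and treat the $\cU_1$ case symmetrically. Your explicit justification that free channels have extent at most $1$ is a detail the paper leaves implicit.
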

\begin{proof}
  We only show the first equality, the other is analogous.
  On one hand
  \begin{equation}
    \gamma_{\qrt^{\star}}( \cU_2\circ\cE ) \leq \gamma_{\qrt^{\star}}(\cU_2) \gamma_{\qrt^{\star}}(\cE) \leq \gamma_{\qrt^{\star}}(\cE)
  \end{equation}
  and conversely
  \begin{equation}
    \gamma_{\qrt^{\star}}(\cE) = \gamma_{\qrt^{\star}}( \cU_2^{-1} \circ \cU_2 \circ \cE) \leq \gamma_{\qrt^{\star}}(\cU_2^{-1}) \gamma_{\qrt^{\star}}(\cU_2 \circ \cE) \leq \gamma_{\qrt^{\star}}(\cU_2 \circ \cE) \, .
  \end{equation}
\end{proof}

%
%

\paragraph{Properties under tensor product}
Here, we investigate properties of the quasiprobability extent that are induced by the tensor product structure of the QRT.
First, we note that the decomposition set is closed under tensor product with the identity.
\begin{lemma}{}{ds_closed_under_tp}
  Let $\qrt$ be a QRT with tensor product structure, $\cE\in\qrt^{\star}(A\rightarrow B)$ and $C$ some system.
  Then $(\cE\otimes\idchan_C) \in \qrt^{\star}(AC\rightarrow BC)$.
\end{lemma}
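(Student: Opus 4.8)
The plan is to reduce the claim to the previously-established fact that a QRT with tensor product structure has "completely free operations" at the level of channels, and then lift this to the level of the expanded decomposition set $\qrt^{\star}$ by keeping track of the classical side-information register. First I would take an element $\cE \in \qrt^{\star}(A\rightarrow B)$ and write it, by definition of $\qrt^{\star}(A\rightarrow B) = \qids[\fqi(A\rightarrow B)]$, as $\cE = \sum_i \beta_i \cE_i$ where $(\cE_i)_i \in \fqi(A\rightarrow B)$ and $\beta_i \in [-1,1]$. By the definition of free instruments (\Cref{def:free_qi}), there exists a system $X$ with orthonormal basis $\{\ket{i}\}$ such that the measurement channel $\tilde{\cE}: \rho \mapsto \sum_i \cE_i(\rho) \otimes \proj{i}_X$ lies in $\qrt(A\rightarrow BX)$.

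Next I would invoke property 1 of the tensor product structure (\Cref{def:qrt_tp_structure}, "completely free operations"): since $\tilde{\cE} \in \qrt(A\rightarrow BX)$, we have $\tilde{\cE} \otimes \idchan_C \in \qrt(AC \rightarrow BXC)$. Reordering the output tensor factors (which is free, e.g.\ by the freedom of relabelling, or simply because swapping systems is a reversible free operation available in any QRT with tensor product structure — more carefully, one composes with the relabelling/permutation channel to move $X$ to the end) gives a free channel $\rho \mapsto \sum_i (\cE_i \otimes \idchan_C)(\rho) \otimes \proj{i}_X$ in $\qrt(AC \rightarrow BCX)$. This witnesses that the quantum instrument $(\cE_i \otimes \idchan_C)_i$ is a free instrument in $\fqi(AC \rightarrow BC)$. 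Consequently $\cE \otimes \idchan_C = \sum_i \beta_i (\cE_i \otimes \idchan_C)$ is an absolutely-convex-type combination of a free instrument with coefficients in $[-1,1]$, i.e.\ it lies in $\qids[\fqi(AC\rightarrow BC)] = \qrt^{\star}(AC\rightarrow BC)$, as desired.

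The only mild subtlety — and the step I expect to require the most care — is the bookkeeping around the classical register: after tensoring with $\idchan_C$ the classical outcome register $X$ sits in the "wrong" tensor slot, and one must argue that moving it to the end (so that the channel has the exact form $\rho \mapsto \sum_i \cF_i(\rho)_{BC} \otimes \proj{i}_X$ required by \Cref{def:free_qi}) is itself a free operation. This follows because permutation/SWAP channels between subsystems are always free in a QRT with tensor product structure (composition of free channels is free, and one can realize the swap; alternatively it is an instance of freedom of relabelling applied after padding). I would state this explicitly but not belabor it. Everything else is a direct unwinding of definitions, mirroring the proof of \Cref{lem:ds_closed_under_concat}.
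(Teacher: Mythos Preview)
Your proof is correct and follows essentially the same approach as the paper's: write $\cE=\sum_j\beta_j\cE_j$ with $(\cE_j)_j\in\fqi(A\rightarrow B)$, tensor the associated free measurement channel with $\idchan_C$ using the tensor product structure, deduce that $(\cE_j\otimes\idchan_C)_j\in\fqi(AC\rightarrow BC)$, and conclude. You are in fact more careful than the paper about the bookkeeping of the classical register $X$ after tensoring, which the paper's proof silently elides.
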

\begin{proof}
  Let $\cE=\sum_j\beta_j\cE_j$ for $(\cE_j)_j\in\fqi(A\rightarrow B)$ and $\beta_j\in[-1,1]$.
  Therefore, the channel $\rho\mapsto \sum_j\cE_j(\rho)\otimes\proj{j}$ is free.
  Taking the tensor product of this channel with $\idchan_C$ yields another free channel (by the tensor product structure) and therefore $(\cE_j\otimes\idchan_C)_j\in\fqi(AC\rightarrow BC)$.
  Hence, $(\cE\otimes\idchan_C)=\sum_j\beta_j(\cE_j\otimes\idchan_C)$ is in $\qrt^{\star}(AC\rightarrow BC)$.
\end{proof}
This result can be directly applied to the quasiprobability extent.
\begin{lemma}{}{gamma_tp_idchan}
  Let $\qrt$ be a QRT with tensor product structure and $\cE\in\hp(A\rightarrow B)$.
  For any system $C$
  \begin{equation}
    \gamma_{\qrt^{\star}}(\cE\otimes\idchan_C) = \gamma_{\qrt^{\star}}(\cE) \, .
  \end{equation}
\end{lemma}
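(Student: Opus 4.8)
The plan is to prove the two inequalities $\gamma_{\qrt^{\star}}(\cE\otimes\idchan_C)\le\gamma_{\qrt^{\star}}(\cE)$ and $\gamma_{\qrt^{\star}}(\cE)\le\gamma_{\qrt^{\star}}(\cE\otimes\idchan_C)$ separately; if either side turns out to be $+\infty$ the two inequalities still force equality, so there is nothing special to do in that case.

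The first inequality is immediate from \Cref{lem:ds_closed_under_tp}: given any QPD $\cE=\sum_i a_i\cF_i$ with $\cF_i\in\qrt^{\star}(A\rightarrow B)$, that lemma gives $\cF_i\otimes\idchan_C\in\qrt^{\star}(AC\rightarrow BC)$, so $\cE\otimes\idchan_C=\sum_i a_i(\cF_i\otimes\idchan_C)$ is a valid QPD w.r.t.\ $\qrt^{\star}(AC\rightarrow BC)$ of the same $1$-norm. Taking the infimum over all QPDs of $\cE$ yields $\gamma_{\qrt^{\star}}(\cE\otimes\idchan_C)\le\gamma_{\qrt^{\star}}(\cE)$.

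For the reverse inequality I would realize $\cE$ by sandwiching $\cE\otimes\idchan_C$ between a free append channel and a free discard channel. Fix a free state $\rho_C\in\fs(C)$, let $\mathcal{A}\in\cptp(A\rightarrow AC)$ be $\omega\mapsto\omega\otimes\rho_C$, and let $\mathcal{T}\coloneqq\idchan_B\otimes\tr_C\in\cptp(BC\rightarrow B)$. One checks that $\cE=\mathcal{T}\circ(\cE\otimes\idchan_C)\circ\mathcal{A}$. Both $\mathcal{A}$ and $\mathcal{T}$ are free channels of $\qrt$: $\mathcal{T}$ because discarding is free and tensoring a free channel with $\idchan_B$ is a completely free operation (\Cref{def:qrt_tp_structure}), and $\mathcal{A}$ because the preparation map $1\mapsto\rho_C$ lies in $\qrt(\mathbb{C}\rightarrow C)=\fs(C)$ and tensoring it with $\idchan_A$ is again a completely free operation. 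Since a free channel $\cG$ gives the single-element free instrument $(\cG)\in\fqi$, hence $\cG\in\qrt^{\star}$, we get $\gamma_{\qrt^{\star}}(\mathcal{A})\le 1$ and $\gamma_{\qrt^{\star}}(\mathcal{T})\le 1$. Applying the chaining property (\Cref{lem:gamma_chaining}) twice then gives
\[
  \gamma_{\qrt^{\star}}(\cE)\ \le\ \gamma_{\qrt^{\star}}(\mathcal{T})\,\gamma_{\qrt^{\star}}(\cE\otimes\idchan_C)\,\gamma_{\qrt^{\star}}(\mathcal{A})\ \le\ \gamma_{\qrt^{\star}}(\cE\otimes\idchan_C),
\]
which together with the first part proves the claim.

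The argument is short, and the only non-cosmetic ingredients are \Cref{lem:ds_closed_under_tp}, the chaining property \Cref{lem:gamma_chaining}, and the availability of a free state on $C$. The one point that requires care — the closest thing to an obstacle — is ensuring the auxiliary append/discard maps contribute a factor of \emph{exactly} $1$ (not merely a finite constant) to the chaining bound; this is precisely why the sandwiching must use genuine free channels of the QRT rather than arbitrary CPTP padding, and it is where the tensor product structure (completely free operations) is needed rather than just the bare QRT axioms.
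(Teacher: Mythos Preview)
Your proof is correct and takes essentially the same approach as the paper: both directions are handled identically, using \Cref{lem:ds_closed_under_tp} for one inequality and sandwiching $\cE\otimes\idchan_C$ between a free state-appending map and a free partial trace, then applying the chaining property \Cref{lem:gamma_chaining}, for the other.
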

\begin{proof}
  For the side $\leq$, \Cref{lem:ds_closed_under_tp} implies that any QPD $\cE=\sum_ia_i\cF_i$ directly induces a valid QPD $\cE\otimes\idchan_C = \sum_ia_i \cF_i\otimes\idchan_C$.
  For the converse side $\geq$, we use~\Cref{lem:gamma_chaining} and
  \begin{equation}
    \cE = (\idchan_B\otimes \tr_E) \circ (\cE\otimes\idchan_E) \circ (\idchan_A\otimes \cG)
  \end{equation}
  where $\cG\in\cptp(\mathbb{C}\rightarrow E)$ is a state preparation channel that prepares some arbitrary free state.
  Since tracing out and preparing free states are free operations (see~\Cref{def:qrt,lem:qrt_free_state_prep}), so are $(\idchan_B\otimes \tr_E)$ and $(\idchan_A\otimes \cF)$ by the completely free operations property.
\end{proof}
\begin{remark}{}{combining_qpds}
  In~\Cref{sec:qpsim,sec:qps_intermediate_measurements}, we showed that the standard strategy for finding QPDs of large circuits consists of combining QPDs of the constant-sized building blocks that make up that circuit.
  The results in~\Cref{lem:ds_closed_under_concat,lem:ds_closed_under_tp} justify why this is generally possible.
  More precisely, they imply that if we each have a QPD of two superoperators $\cE\in\hp(A\rightarrow BC)$ and $\cF\in\hp(CD\rightarrow E)$, then they can be combined into a QPD of $(\cF\circ\cE)\in\hp(AD\rightarrow BE)$.
  By repeatedly applying this procedure on two subsequent building blocks in a circuit, we can combine all individual QPDs into one large QPD of the total circuit.
\end{remark}

Together,~\Cref{lem:gamma_chaining,lem:gamma_tp_idchan} imply that the quasiprobability extent is sub-multiplicative.
\begin{corollary}{}{gamma_submult}
  Let $\qrt$ be a QRT with tensor product structure and $\cE\in\hp(A\rightarrow B),\cF\in\hp(C\rightarrow D)$.
  Then
  \begin{equation}
    \gamma_{\qrt^{\star}}(\cE\otimes\cF) \leq \gamma_{\qrt^{\star}}(\cE)\gamma_{\qrt^{\star}}(\cF)
  \end{equation}
\end{corollary}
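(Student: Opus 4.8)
The plan is to derive the corollary by combining the chaining property (\Cref{lem:gamma_chaining}) with the invariance under tensoring in an identity channel (\Cref{lem:gamma_tp_idchan}), which is exactly the strategy outlined in \Cref{rem:combining_qpds}. The key observation is that the parallel composition $\cE\otimes\cF$ of $\cE\in\hp(A\rightarrow B)$ and $\cF\in\hp(C\rightarrow D)$ can be written as a sequential composition of two maps, each of which is one of the original operations tensored with an identity channel on the untouched system:
\begin{equation}
  \cE\otimes\cF = (\idchan_B\otimes\cF)\circ(\cE\otimes\idchan_C) \, .
\end{equation}
Here the first factor acts as $\cE$ on $A$ while leaving $C$ alone, producing a state on $BC$, and the second factor acts as $\cF$ on $C$ while leaving $B$ alone, producing the final state on $BD$.

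From there the argument is immediate. First I would apply \Cref{lem:gamma_chaining} to the above decomposition, noting that $\qrt$ has a tensor product structure by hypothesis, to obtain
\begin{equation}
  \gamma_{\qrt^{\star}}(\cE\otimes\cF) \leq \gamma_{\qrt^{\star}}(\cE\otimes\idchan_C)\,\gamma_{\qrt^{\star}}(\idchan_B\otimes\cF) \, .
\end{equation}
Then I would apply \Cref{lem:gamma_tp_idchan} twice — once to $\cE\otimes\idchan_C$, giving $\gamma_{\qrt^{\star}}(\cE\otimes\idchan_C)=\gamma_{\qrt^{\star}}(\cE)$, and once to $\idchan_B\otimes\cF$ (after reordering the tensor factors, which costs nothing by the freedom-of-relabelling part of the tensor product structure, or one can simply invoke \Cref{lem:gamma_tp_idchan} in the symmetric form), giving $\gamma_{\qrt^{\star}}(\idchan_B\otimes\cF)=\gamma_{\qrt^{\star}}(\cF)$. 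Chaining these equalities yields $\gamma_{\qrt^{\star}}(\cE\otimes\cF)\leq\gamma_{\qrt^{\star}}(\cE)\gamma_{\qrt^{\star}}(\cF)$, which is the claim.

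There is no real obstacle here; the corollary is a formal consequence of the two preceding lemmas. The only point requiring a line of care is the bookkeeping of tensor-factor ordering in the second application of \Cref{lem:gamma_tp_idchan}, since that lemma as stated appends the identity on the right ($\cE\otimes\idchan_C$) whereas in our decomposition the identity on $B$ sits on the left of $\cF$. This is harmless: permuting tensor factors is a reversible free operation in any QRT with tensor product structure, so by \Cref{cor:gamma_invariant_reversible} the extent is unchanged, or alternatively one simply notes that \Cref{lem:gamma_tp_idchan} holds verbatim with the identity on either side by an identical proof. I would mention this in one sentence and not belabor it.
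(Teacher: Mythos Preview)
Your proof is correct and follows exactly the approach the paper intends: the corollary is stated immediately after \Cref{lem:gamma_chaining,lem:gamma_tp_idchan} with the remark that together they imply sub-multiplicativity, and you have spelled out precisely that derivation via $\cE\otimes\cF=(\idchan_B\otimes\cF)\circ(\cE\otimes\idchan_C)$. Your handling of the left/right placement of the identity in the second application of \Cref{lem:gamma_tp_idchan} is appropriate and more careful than the paper itself, which leaves this implicit.
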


\paragraph{Properties from free classical computation}
Whenever the classical output of a measurement can be further processed using free operations, the induced decomposition set of the QRT can be written in a simpler form.
\begin{corollary}{}{free_classical_computation}
  Let $\qrt$ be a QRT with tensor product structure where classical computation channels (possibly involving classical randomness) are free.
  Then, for all systems $A$ and $B$ one has
  \begin{equation}
    \qrt^{\star}(A\rightarrow B) = \{ \cE^+ - \cE^- | (\cE^+,\cE^-)\in\fqi(A\rightarrow B)\} \, .
  \end{equation}
\end{corollary}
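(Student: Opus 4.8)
The plan is to show the two inclusions, the nontrivial one being that every element of $\qrt^{\star}(A\rightarrow B)$ can be written as a difference $\cE^+ - \cE^-$ with $(\cE^+,\cE^-)$ a free \emph{two-element} instrument. The reverse inclusion is immediate: if $(\cE^+,\cE^-)\in\fqi(A\rightarrow B)$, then $\cE^+-\cE^-=\beta_+\cE^+ + \beta_-\cE^-$ with $\beta_\pm=\pm1\in[-1,1]$, so it lies in $\qids[\fqi(A\rightarrow B)]=\qrt^{\star}(A\rightarrow B)$ by definition.

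For the forward inclusion I would invoke \Cref{lem:characterization_expanded_decomposition_set}, which already gives $\qids[Q]=\{\cE^+-\cE^- \mid (\cE^+,\cE^-)\in Q\}$ whenever $Q$ is coarse-grainable and trivially fine-grainable. So the real content is to verify that $\fqi(A\rightarrow B)$ possesses these two closure properties, using the hypotheses that $\qrt$ has a tensor product structure and that classical computation channels (possibly randomized) are free. For coarse-grainability: given $(\cE_i)_i\in\fqi(A\rightarrow B)$ with associated free channel $\tilde\cE(\rho)=\sum_i\cE_i(\rho)\otimes\proj{i}_X\in\qrt(A\rightarrow BX)$, any coarse-graining is obtained by postcomposing $\tilde\cE$ with a (deterministic) classical relabelling channel $\proj{i}_X\mapsto\proj{f(i)}_{X'}$ acting on the register, tensored with $\idchan_B$; since that channel is free and composition with the identity on $B$ is free (completely free operations), the result is again a free channel, hence the coarse-grained instrument is in $\fqi$. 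For trivial fine-grainability: given $(\cE_i)_i$ and probabilities $p_i$, I need $((1-p_i)\cE_i)_i\cup(p_i\cE_i)_i\in\fqi$; this is realized by postcomposing $\tilde\cE$ with the \emph{randomized} classical channel that, conditioned on outcome $i$, flips a biased coin with bias $p_i$ and appends its value to a second register — a randomized classical computation channel, hence free — again tensored with $\idchan_B$.

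The main obstacle, such as it is, lies in the bookkeeping for trivial fine-grainability: one must be careful that the conditional coin-flip depending on the already-measured value $i$ genuinely counts as a ``classical computation channel (possibly involving classical randomness)'' in the sense assumed, and that the resulting $2$-register classical–quantum output, after suitable relabelling of the outcome index set, exactly matches the instrument $((1-p_i)\cE_i)_i\cup(p_i\cE_i)_i$. Once the two closure properties are in hand, \Cref{lem:characterization_expanded_decomposition_set} applies verbatim and yields the claimed equality; I would also note in passing that convexity of $\qrt$ (equivalently, $\fqi$ closed under mixture, by \Cref{lem:qrt_convexity}) is not needed for this particular corollary, only the tensor product structure and freeness of classical processing.
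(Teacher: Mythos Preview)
Your proposal is correct and follows essentially the same approach as the paper: invoke \Cref{lem:characterization_expanded_decomposition_set} after verifying that $\fqi(A\rightarrow B)$ is coarse-grainable (via a free classical relabelling on the outcome register) and trivially fine-grainable (via a free conditional coin-flip appended to the register). Your write-up is in fact more careful than the paper's brief justification, making explicit the use of the tensor product structure to tensor the classical processing with $\idchan_B$, and correctly observing that convexity of $\qrt$ is not needed here.
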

This is a direct consequence from~\Cref{lem:characterization_expanded_decomposition_set}.
Indeed, any resource theory in which classical computation is free, the coarse-graining map (called $f$ in~\Cref{def:qi_coarse_graining}) can be freely implemented on the classical measurement outcome register.
As such, a coarse-graining of a free instrument is again itself a free instrument.
Similarly, $\fqi(A\rightarrow B)$ is also trivially fine-grainable, since one can freely process a measurement result by extending the classical register with a random classical bit.

\paragraph{The quasiprobability extent as a resource measure}\label{par:gamma_static_monotone}
Even though the definition of the quasiprobability extent is operationally motivated by the simulation overhead of QPS, we have remarkably encountered many properties of it that are desirable for a measure to quantify resources in a QRT.
In this paragraph, we briefly summarize these insights and formally frame the quasiprobability extent as a resource measure.

Consider a QRT that fulfills the axiom of free instruments.
We start by considering $\gamma_{\fs}(\rho)$ (which is equal to $\gamma_{\qrt^{\star}}(\rho)$ by~\Cref{lem:negativity_state_qpd}) as a measure to quantify the amount of non-free resource in a state $\rho$.
To satisfy the commonly used conventions for normalization and sub-additivity (as opposed to sub-multiplicativity), we will consider the \emph{log-quasiprobability extent} $\log(\gamma_{\fs}(\rho))$.
\begin{proposition}{}{log_gamma_monotone}
  Let $\qrt$ be a QRT with tensor product structure that fulfills the axiom of free instruments.
  The log-quasiprobability extent $\log\gamma_{\fs}$ is a resource measure that is additionally
  \begin{itemize}[noitemsep]
    \item quasi-convex
    \item sub-additive
    \item strongly monotonous if $\qrt$ is convex and closed
    \item faithful if $\qrt$ is convex and closed.
  \end{itemize}
\end{proposition}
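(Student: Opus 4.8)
All five items reduce quickly to results already established in \Cref{sec:qps_basic_properties,sec:gamma_qrt}, except strong monotonicity, which needs one genuinely new (but short) argument. Throughout, for a state $\rho$ and an instrument $(\cE_i)_i$ write $p_i\coloneqq\tr[\cE_i(\rho)]$ and $\rho_i\coloneqq\cE_i(\rho)/p_i$. Note $\log\gamma_{\fs}$ is real-valued (in every theory we consider $\fs(A)$ spans $\herm(A)$, so $\gamma_{\fs}(\rho)<\infty$) and non-negative, since applying $\tr$ to any QPD $\rho=\sum_k a_k\tau_k$ of a state gives $1=\sum_k a_k\le\sum_k|a_k|$, i.e.\ $\gamma_{\fs}(\rho)\ge 1$ (equivalently \Cref{lem:robustness2}). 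I would first dispatch the ``resource measure'' part: normalization is immediate since $\mathbb{C}$ carries the single state $1\in\fs(\mathbb{C})$, so $\gamma_{\fs}(1)=1$ and $\log\gamma_{\fs}(1)=0$; for weak monotonicity, take $\cE\in\qrt(A\rightarrow B)$ and an (almost) optimal QPD $\rho=\sum_k a_k\tau_k$ with $\tau_k\in\fs(A)$, observe by the golden rule (\Cref{lem:qrt_golden_rule}) that $\cE(\tau_k)\in\fs(B)$, so $\cE(\rho)=\sum_k a_k\cE(\tau_k)$ is a valid QPD of $\cE(\rho)$ with the same $1$-norm, giving $\gamma_{\fs}(\cE(\rho))\le\gamma_{\fs}(\rho)$ and, after applying the increasing map $\log$, the claim. (One can instead quote \Cref{cor:gamma_state_from_channel} together with $\gamma_{\qrt^{\star}}(\cE)\le 1$ and \Cref{lem:negativity_state_qpd}.)

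Quasi-convexity and sub-additivity then follow by composing the already-known convexity properties of $\gamma_{\fs}$ with $\log$. For quasi-convexity, \Cref{lem:gamma_convexity} gives $\gamma_{\fs}(\sum_i p_i\rho_i)\le\sum_i p_i\gamma_{\fs}(\rho_i)\le\max_i\gamma_{\fs}(\rho_i)$, hence $\log\gamma_{\fs}(\sum_i p_i\rho_i)\le\max_i\log\gamma_{\fs}(\rho_i)$. For sub-additivity, \Cref{cor:gamma_submult} applied with trivial input spaces together with \Cref{lem:negativity_state_qpd} yields $\gamma_{\fs}(\rho\otimes\sigma)\le\gamma_{\fs}(\rho)\gamma_{\fs}(\sigma)$, and taking logarithms gives $\log\gamma_{\fs}(\rho\otimes\sigma)\le\log\gamma_{\fs}(\rho)+\log\gamma_{\fs}(\sigma)$.

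The main step is strong monotonicity. Fix $\rho\in\dops(A)$ and $(\cE_i)_i\in\fqi(A\rightarrow B)$, and choose an optimal QPD $\rho=\sum_{k=1}^m a_k\tau_k$ with $\tau_k\in\fs(A)$ and $\sum_k|a_k|=\gamma_{\fs}(\rho)$ (the infimum is attained because $\fs(A)$ is compact when $\qrt$ is closed, \Cref{lem:compact_ds}; alternatively work up to $\epsilon$), and we may drop any term with $a_k=0$. Put $r_{k,i}\coloneqq\tr[\cE_i(\tau_k)]\ge 0$. Since $\sum_i\cE_i$ is trace-preserving and $\tau_k$ a state, $\sum_i r_{k,i}=1$ for each $k$; and since $\tr[\sum_i\cE_i(\rho)]=\tr[\rho]=1$, $(p_i)_i$ is a probability distribution. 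The axiom of free instruments (\Cref{def:axiom_fqi}), applied to the free state $\tau_k$, gives $\sigma_{k,i}\coloneqq\cE_i(\tau_k)/r_{k,i}\in\fs(B)$ whenever $r_{k,i}>0$, so for each $i$ with $p_i>0$,
\[
  \rho_i \;=\; \frac{1}{p_i}\sum_k a_k\,\cE_i(\tau_k) \;=\; \sum_k \frac{a_k r_{k,i}}{p_i}\,\sigma_{k,i}
\]
is a QPD of $\rho_i$ w.r.t.\ $\fs(B)$, hence $\gamma_{\fs}(\rho_i)\le\tfrac{1}{p_i}\sum_k|a_k|r_{k,i}$ (this is $>0$ since $p_i>0$ forces some $r_{k,i}>0$). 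Using concavity of $\log$ (Jensen) and then $\sum_i r_{k,i}=1$,
\[
\begin{aligned}
  \sum_{i:\,p_i>0} p_i\log\gamma_{\fs}(\rho_i)
  &\le \sum_{i:\,p_i>0} p_i\log\!\Big(\tfrac{1}{p_i}\textstyle\sum_k|a_k|r_{k,i}\Big) \\
  &\le \log\!\Big(\textstyle\sum_{i,k}|a_k|r_{k,i}\Big)
  \;=\; \log\!\Big(\textstyle\sum_k|a_k|\Big)
  \;=\; \log\gamma_{\fs}(\rho),
\end{aligned}
\]
which is the strong monotonicity inequality for $\log\gamma_{\fs}$; the only subtlety is the bookkeeping of the $p_i=0$ branches (which are simply omitted). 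This is the one place where something beyond a one-line deduction is needed, the crux being precisely the transition from the multiplicative/$1$-norm accounting in each measurement branch to the additive inequality, effected by Jensen.

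Finally, faithfulness is a direct application of \Cref{lem:gamma_faithful2}: when $\qrt$ is convex and closed, $\fs(A)=\qrt(\mathbb{C}\rightarrow A)$ is a convex, compact subset of $\dops(A)\subset\tp(\mathbb{C}\rightarrow A)$, so that lemma gives $\gamma_{\fs}(\rho)=1\Leftrightarrow\rho\in\fs(A)$ for all $\rho\in\dops(A)$; since $\log$ is strictly increasing with $\log 1=0$, this is exactly $\log\gamma_{\fs}(\rho)=0\Leftrightarrow\rho\in\fs(A)$. I expect no obstacle here beyond keeping track of which hypotheses feed where: the axiom of free instruments is what makes $\gamma_{\qrt^{\star}}$ collapse to $\gamma_{\fs}$ on states (\Cref{lem:negativity_state_qpd}) and is what powers the strong-monotonicity step, tensor product structure is what gives sub-multiplicativity (\Cref{cor:gamma_submult}), and convexity plus closedness are needed only for faithfulness.
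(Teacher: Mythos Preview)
Your proof is correct and follows essentially the same strategy as the paper. The only cosmetic difference is in the strong-monotonicity step: the paper uses the two-element decomposition $\rho=(1+t)\rho^+-t\rho^-$ (available by convexity) rather than your general $m$-term QPD, and obtains the linear inequality $\sum_i p_i\,\gamma_{\fs}(\rho_i)\le\gamma_{\fs}(\rho)$ directly---note that your own branch bounds $p_i\,\gamma_{\fs}(\rho_i)\le\sum_k|a_k|r_{k,i}$ already sum to $\gamma_{\fs}(\rho)$, so the Jensen step is unnecessary.
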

\begin{proof}
  The log-quasiprobability extent is clearly non-negative: For any $\rho\in\dops(A)$ with QPD $\rho=\sum_ia_i\sigma_i$ one has
  \begin{equation}
    1 = \norm{\rho}_1 \leq \sum_i \abs{a_i} \norm{\sigma_i}_1 = \norm{a}_1 \, .
  \end{equation}
  Normalization follows directly from $1\in\fs(\mathbb{C})$.
  We have already proven weak monotonicity (\Cref{cor:gamma_state_from_channel}), quasi-convexity (\Cref{lem:gamma_convexity}), sub-additivity (\Cref{cor:gamma_submult}) and faithfulness (\Cref{lem:gamma_faithful2}).
  It only remains to show strong monotonicity.

  Let $\rho\in\dops(A)$ and $(\cE_i)_i\in\fqi(A\rightarrow B)$ be a free quantum instrument.
  Let us write the optimal QPD of $\rho$ as $\rho=(1+t)\rho^+ - t\rho^-$ (recall~\Cref{lem:one_element_qpd}).
  This allows us to write
  \begin{equation}
    \frac{\cE_i(\rho)}{\tr[\cE_i(\rho)]}
    = \frac{1}{\tr[\cE_i(\rho)]}\left( (1+t)\cE_i(\rho^+) - t\cE_i(\rho^-) \right)
    = (1+\tilde{t}_i)\frac{\cE_i(\rho^+)}{\tr[\cE_i(\rho^+)]} - \tilde{t}_i\frac{\cE_i(\rho^-)}{\tr[\cE_i(\rho^-)]}
  \end{equation}
  where $\tilde{t}_i\coloneqq \frac{\tr[\cE_i(\rho^-)]}{\tr[\cE_i(\rho)]}t$.
  Hence, we have 
  \begin{equation}
    \sum_i \tr[\cE_i(\rho)]\gamma_{\fs}\left(\frac{\cE_i(\rho)}{\tr[\cE_i(\rho)]}\right)
    \leq \sum_i \tr[\cE_i(\rho)] (1 + 2\tilde{t}_i)
    \leq 1 + 2\sum_i \tr[\cE_i(\rho^-)] t
    = \gamma_{\fs}(\rho) \, .
  \end{equation}
\end{proof}
While it is remarkable that the quasiprobability extent has strong enough properties to be an interesting resource measure, one should not forget that these properties even hold for general Hermitian operators and not just states.

For a convex resource theory, the quasiprobability extent of $\omega\in\herm(A)$ is directly related to the robustness $\gamma_{\fs}(\rho)=1+2R_{\fs}(\omega)$ (see~\Cref{lem:robustness1}).
In QRT literature, it is more common to instead consider the robustness as a resource measure.
Most of the properties of the quasiprobability extent directly translate.
The only major difference is that the robustness lacks the sub-additive behavior, but it is convex instead of quasi-convex.
\begin{corollary}{}{robustness_monotone}
  Let $\qrt$ be a convex QRT with tensor product structure that fulfills the axiom of free instruments.
  The robustness $R_{\fs}$ is a resource monotone that is additionally faithful if $\qrt$ is closed.
\end{corollary}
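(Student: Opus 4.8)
The plan is to leverage the identity $\gamma_{\fs}(\omega) = \tr[\choi{\omega}] + 2R_{\fs}(\omega)$ from~\Cref{lem:robustness1} (here $\tr[\choi{\omega}] = \tr[\omega] = 1$ for states, so $\gamma_{\fs}(\rho) = 1 + 2R_{\fs}(\rho)$), which reduces every required property of $R_{\fs}$ to the corresponding property of $\gamma_{\fs}$ already established in~\Cref{prop:log_gamma_monotone}. I would first verify the two defining properties of a resource monotone, then faithfulness.

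\emph{Weak monotonicity and normalization.} Normalization is immediate: $R_{\fs}(1) = \tfrac12(\gamma_{\fs}(1)-1) = 0$ since $\gamma_{\fs}(1)=1$. For weak monotonicity, if $\cE\in\qrt(A\rightarrow B)$ and $\rho\in\dops(A)$, then by~\Cref{cor:gamma_state_from_channel} (with the trivial input system) we have $\gamma_{\fs}(\cE(\rho))\leq\gamma_{\fs}(\rho)$, and since $\cE(\rho)$ is again a state, $R_{\fs}(\cE(\rho)) = \tfrac12(\gamma_{\fs}(\cE(\rho))-1)\leq\tfrac12(\gamma_{\fs}(\rho)-1) = R_{\fs}(\rho)$.

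\emph{Strong monotonicity and convexity.} For a free instrument $(\cE_i)_i\in\fqi(A\rightarrow B)$ and $\rho\in\dops(A)$, strong monotonicity of $\log\gamma_{\fs}$ (shown in~\Cref{prop:log_gamma_monotone}) gives $\sum_i\tr[\cE_i(\rho)]\,\gamma_{\fs}\!\big(\cE_i(\rho)/\tr[\cE_i(\rho)]\big)\leq\gamma_{\fs}(\rho)$; substituting $\gamma_{\fs} = 1 + 2R_{\fs}$ on both sides and using $\sum_i\tr[\cE_i(\rho)]=1$ yields $\sum_i\tr[\cE_i(\rho)]\,R_{\fs}\!\big(\cE_i(\rho)/\tr[\cE_i(\rho)]\big)\leq R_{\fs}(\rho)$. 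For convexity, given an ensemble $(p_i,\rho_i)_i$, \Cref{lem:gamma_convexity} gives $\gamma_{\fs}(\sum_i p_i\rho_i)\leq\sum_i p_i\gamma_{\fs}(\rho_i)$; subtracting $1 = \sum_i p_i$ and dividing by $2$ gives $R_{\fs}(\sum_i p_i\rho_i)\leq\sum_i p_i R_{\fs}(\rho_i)$. (This is where the behavior diverges from $\log\gamma_{\fs}$: the linear relation $\gamma_{\fs}=1+2R_{\fs}$ transfers convexity of $\gamma_{\fs}$ directly, rather than only quasi-convexity.)

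\emph{Faithfulness.} When $\qrt$ is convex and closed, \Cref{lem:gamma_faithful2} (applicable since $\fs(A)=\qrt(\mathbb{C}\rightarrow A)$ is convex and compact, and states are trace-preserving maps from $\mathbb{C}$) gives $\gamma_{\fs}(\rho)=1\iff\rho\in\fs(A)$. Hence $R_{\fs}(\rho)=0\iff\gamma_{\fs}(\rho)=1\iff\rho\in\fs(A)$, which is exactly faithfulness.

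The only mild subtlety — and the one point I would be careful about — is making sure that \Cref{lem:robustness1} genuinely applies here: it requires the decomposition set to be a convex subset of $\hptp$, and for states the relevant set is $\fs(A)=\qrt(\mathbb{C}\rightarrow A)\subset\dops(A)\subset\hptp(\mathbb{C}\rightarrow A)$, which is convex precisely because $\qrt$ is assumed convex. Once that is in place, the corollary is essentially a transcription of~\Cref{prop:log_gamma_monotone} through the affine change of variables $\gamma_{\fs}=1+2R_{\fs}$, so there is no substantial obstacle; the work is purely bookkeeping.
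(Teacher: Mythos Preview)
Your proposal is correct and follows exactly the approach the paper indicates: the corollary is stated without proof, with the surrounding text noting that ``most of the properties of the quasiprobability extent directly translate'' via the affine relation $\gamma_{\fs}=1+2R_{\fs}$ from \Cref{lem:robustness1}, and that convexity (rather than quasi-convexity) is what emerges for $R_{\fs}$. One minor point of precision: you invoke ``strong monotonicity of $\log\gamma_{\fs}$'' to obtain the linear inequality $\sum_i\tr[\cE_i(\rho)]\gamma_{\fs}(\cE_i(\rho)/\tr[\cE_i(\rho)])\leq\gamma_{\fs}(\rho)$, but strong monotonicity of the \emph{logarithm} does not by itself imply this linear bound --- rather, it is the \emph{proof} of \Cref{prop:log_gamma_monotone} that establishes the linear inequality for $\gamma_{\fs}$ directly (from which the log version then follows by Jensen), so you should cite that computation rather than the stated conclusion.
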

\begin{remark}{}{}
  Our definition of the robustness (sometimes called \emph{total robustness}) should not be confounded with the so-called \emph{global robustness}~\cite{steiner2003_generalized,gour2024_resources} which characterizes the robustness w.r.t. arbitrary (possibly non-free) noise.
  The global robustness happens to be equivalent to the max-relative entropy~\cite{datta2009_max},~\cite[Lemma 10.1]{gour2024_resources}.
\end{remark}
\begin{remark}{}{}
  Note that in literature, one sometimes encounters a resource measure called \emph{log-robustness} which is defined as $\log(1+R_{\fs})$ as opposed to $\log(\gamma_{\fs})=\log(1+2R_{\fs})$\cite{gour2024_resources}.
  Our quantity $\log(\gamma_{\fs})$ can in some sense be argued to be more natural, as its sub-additivity directly mirrors the sub-multiplicativity of $\gamma_{\fs}$.
\end{remark}

We now turn our attention to quantifying the amount of non-free resource in a channel instead of a state.
In complete analogy with~\Cref{prop:log_gamma_monotone}, we summarize as follows.
\begin{proposition}{}{log_gamma_dynamic_monotone}
  For any dynamical QRT, the log-quasiprobability extent $\log(\gamma_{\qrt^{\star}})$ is a dynamical resource measure that is additionally quasi-convex, sub-additive and faithful if $\qrt^{\star}$ is convex and closed.
\end{proposition}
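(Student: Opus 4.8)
The plan is to verify the five requirements of $M\coloneqq\log\gamma_{\qrt^{\star}}$ separately, leaning on \Cref{sec:gamma_qrt} for all but one of them. Three are immediate. For any channel $\cE\in\cptp(A\to B)$ and any QPD $\cE=\sum_i a_i\cF_i$ with $\cF_i\in\qrt^{\star}(A\to B)$ one has $\norm{\choi{\cF_i}}_1\le 1$ (exactly as in the proof of \Cref{lem:robustness2}), so, using that $\cE$ is a channel and hence $\norm{\choi{\cE}}_1=\tr[\choi{\cE}]=1$, $1\le\sum_i\abs{a_i}\,\norm{\choi{\cF_i}}_1\le\norm{a}_1$; therefore $\gamma_{\qrt^{\star}}(\cE)\ge1$, which gives $M\ge0$, and since $\idchan_A\in\qrt^{\star}(A\to A)$ forces $\gamma_{\qrt^{\star}}(\idchan_A)\le1$ we obtain the normalization $M(\idchan_A)=0$ (the same bound gives $M(\cE)=0$ for every free $\cE$). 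Quasi-convexity is $\log$ applied to $\gamma_{\qrt^{\star}}(\lambda\cE+(1-\lambda)\cF)\le\lambda\gamma_{\qrt^{\star}}(\cE)+(1-\lambda)\gamma_{\qrt^{\star}}(\cF)\le\max\{\gamma_{\qrt^{\star}}(\cE),\gamma_{\qrt^{\star}}(\cF)\}$, which follows from \Cref{lem:gamma_convexity}; sub-additivity is $\log$ applied to the sub-multiplicativity of \Cref{cor:gamma_submult}, which applies because the QRT underlying a dynamical QRT has tensor product structure.

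The real work is monotonicity: for $\Theta\in\fsc(A,B\to C,D)$ I want $\gamma_{\qrt^{\star}}(\Theta(\cE))\le\gamma_{\qrt^{\star}}(\cE)$ for all $\cE\in\hp(A\to B)$. Since $\Theta$ is linear, I would reduce this to the inclusion $\Theta\!\left(\qrt^{\star}(A\to B)\right)\subseteq\qrt^{\star}(C\to D)$: granting it, applying $\Theta$ termwise to any QPD $\cE=\sum_k a_k\cF_k$ w.r.t. $\qrt^{\star}(A\to B)$ yields a QPD $\Theta(\cE)=\sum_k a_k\Theta(\cF_k)$ w.r.t. $\qrt^{\star}(C\to D)$ of the same $1$-norm (the case $\gamma_{\qrt^{\star}}(\cE)=\infty$ being trivial). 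To establish the inclusion I would take $\cF=\sum_i\beta_i\cE_i\in\qrt^{\star}(A\to B)$ with $(\cE_i)_i\in\fqi(A\to B)$, $\beta_i\in[-1,1]$. If $(\cE_i)_i$ is a single-element instrument $(\cE_1)$ with $\cE_1\in\qrt(A\to B)$, then $\Theta(\cE_1)\in\qrt(C\to D)$ by the golden rule \Cref{lem:dynamical_qrt_golden_rule}. Otherwise the measurement channel $\hat{\cE}\colon\rho\mapsto\sum_i\cE_i(\rho)\otimes\proj{i}_X$ is free, $\hat{\cE}\in\qrt(A\to BX)$. Letting $\cG_i\in\cptp(\mathbb{C}\to X)$ be the preparation of $\proj{i}$, one has $\hat{\cE}=\sum_i\cE_i\otimes\cG_i$ under the identification $\supo(A\to BX)\cong\supo(A\to B)\otimes\supo(\mathbb{C}\to X)$, so by bilinearity of the superchannel tensor product $(\Theta\otimes\id_{\supo(\mathbb{C}\to X)})(\hat{\cE})=\sum_i\Theta(\cE_i)\otimes\cG_i\colon\rho\mapsto\sum_i\Theta(\cE_i)(\rho)\otimes\proj{i}_X$. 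By the fourth axiom in the definition of a dynamical QRT (with input ancilla $\mathbb{C}$ and output ancilla $X$), $\Theta\otimes\id_{\supo(\mathbb{C}\to X)}\in\fsc(A,BX\to C,DX)$, so the golden rule gives $(\Theta\otimes\id_{\supo(\mathbb{C}\to X)})(\hat{\cE})\in\qrt(C\to DX)$. Being a quantum channel, its Choi operator $\sum_i\choi{\Theta(\cE_i)}\otimes\proj{i}$ is positive, which by orthogonality of the $\proj{i}$ forces each $\choi{\Theta(\cE_i)}\loewnergeq0$; together with $\sum_i\Theta(\cE_i)=\Theta\big(\sum_i\cE_i\big)\in\cptp(C\to D)$ this shows $(\Theta(\cE_i))_i$ is a quantum instrument (after discarding zero components), and its freeness is precisely the statement that the above channel lies in $\qrt(C\to DX)$, so $(\Theta(\cE_i))_i\in\fqi(C\to D)$. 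In either case $\Theta(\cF)=\sum_i\beta_i\Theta(\cE_i)\in\qids[\fqi(C\to D)]=\qrt^{\star}(C\to D)$.

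For faithfulness, I would assume $\qrt^{\star}$ convex and closed and first note that $\qrt^{\star}(C\to D)=\qids[\fqi(C\to D)]$ is always balanced (scaling all $\beta_i$ by $\lambda\in[-1,1]$), so convexity makes it absolutely convex; it is also bounded ($\norm{\choi{\cF}}_1\le1$) and closed, hence compact. Then \Cref{lem:gamma_faithful} gives $\gamma_{\qrt^{\star}}(\cE)\le1\iff\cE\in\qrt^{\star}(C\to D)$, and combining with $\gamma_{\qrt^{\star}}(\cE)\ge1$ for channels, $\gamma_{\qrt^{\star}}(\cE)=1\iff\cE\in\qrt^{\star}(C\to D)\cap\cptp(C\to D)$. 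It remains to identify $\qrt^{\star}(C\to D)\cap\cptp(C\to D)=\qrt(C\to D)$: the inclusion $\supseteq$ is clear, and conversely if $\cE=\sum_i\beta_i\cE_i$ is trace-preserving then applying $\tr_D$ to the Choi operators yields $\tfrac{1}{\dimension(C)}\id=\sum_i\beta_i\tr_D[\choi{\cE_i}]\loewnerleq\sum_i\tr_D[\choi{\cE_i}]=\tfrac{1}{\dimension(C)}\id$, forcing $\beta_i=1$ whenever $\cE_i\ne0$, so $\cE=\sum_i\cE_i\in\qichan[\fqi(C\to D)]=\qrt(C\to D)$ by \Cref{lem:free_chan_from_qi} (the argument already used in the proof of \Cref{lem:qrt_convexity}). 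Hence $M(\cE)=0\iff\cE$ is free.

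The hard part is the monotonicity step. Unlike the static case of \Cref{prop:log_gamma_monotone}, an axiomatically defined free superchannel $\Theta$ need not split into free pre- and post-processing, so one cannot push a QPD through it by brute force; the key realization is that applying $\Theta$ to a free instrument $(\cE_i)_i$ coincides with applying the free superchannel $\Theta\otimes\id_{\supo(\mathbb{C}\to X)}$ to the associated free measurement channel $\hat{\cE}$, after which the golden rule for dynamical QRTs closes the argument. Everything else is a quick consequence of properties of $\gamma_{\qrt^{\star}}$ already established in \Cref{sec:gamma_qrt}.
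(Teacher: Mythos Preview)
Your proof is correct and follows essentially the same approach as the paper: the key monotonicity step---showing $\Theta(\qrt^{\star}(A\to B))\subseteq\qrt^{\star}(C\to D)$ by applying the free superchannel $\Theta\otimes\id_{\supo(\mathbb{C}\to X)}$ to the measurement channel $\hat{\cE}$ associated with a free instrument and invoking the dynamical golden rule---is exactly what the paper does. You are somewhat more careful than the paper in handling the single-element-instrument case of \Cref{def:free_qi}, in verifying that $(\Theta(\cE_i))_i$ is a genuine quantum instrument, and in spelling out the identification $\qrt^{\star}\cap\cptp=\qrt$ for faithfulness (where the paper simply cites \Cref{lem:gamma_faithful2}), but these are elaborations rather than a different route.
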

\begin{proof}
  This result is an amalgamation of~\Cref{lem:gamma_convexity,cor:gamma_submult,lem:gamma_faithful2}.
  The only new property to show is weak monotonicity, which directly follows from proving that for any $\Theta\in\fsc(A,B\rightarrow C,D)$ one has $\Theta(\qrt^{\star}(A\rightarrow B)) \subset \qrt^{\star}(C\rightarrow D)$.
  For this purpose, let $\cE=\sum_j\beta_j\cE_j\in\qrt^{\star}(A\rightarrow B)$.
  Denote the associated free measurement channel by $\tilde{\cE}=\sum_j\cE_j\otimes\proj{j}\in\qrt(A\rightarrow BX)$.
  Since $\Theta\otimes\id_{\supo(\mathbb{C}\rightarrow X)}$ is itself also free,~\Cref{lem:dynamical_qrt_golden_rule} implies that
  \begin{equation}
    (\Theta\otimes\id_{\supo(\mathbb{C}\rightarrow X)})(\tilde{\cE}) = \sum_j \Theta(\cE_j)\otimes\proj{j}
  \end{equation}
  is a free channel in $\qrt(C\rightarrow DX)$.
  Therefore, $(\Theta(\cE_j))_j$ is a free instrument in $\fqi(C\rightarrow D)$ and $\Theta(\cE)=\sum_j\beta_j\Theta(\cE_j)$ lies in $\qrt^{\star}(C\rightarrow D)$.
\end{proof}
For a convex QRT, the quasiprobability extent of a channel $\cE\in\cptp(A\rightarrow B)$ is related to the robustness by $\gamma_{\qrt^{\star}}(\cE)=1+2R_{\qrt^{\star}}(\cE)$ (see~\Cref{lem:robustness2}).
As such, $R_{\qrt^{\star}}(\cE)$ is also a dynamical resource measure that inherits the properties of $\log(\gamma_{\qrt^{\star}})$.

\paragraph{Asymptotic simulation cost}
In~\Cref{cor:gamma_submult}, we observed that the quasiprobability extent exhibits sub-multiplicative behavior.
In fact, we will later see that it is oftentimes even \emph{strictly} sub-multiplicative.
This has drastic implications for the sampling overhead of QPS: It can be significantly more efficient to jointly simulate multiple operations in parallel instead of optimally simulating them individually.

When simulating the $n$-fold copy $\cE^{\otimes n}$ of the superoperator $\cE$, we can capture the asymptotic \emph{effective} sampling overhead per instance by regularizing the quasiprobability extent.
\begin{definition}{}{}
  Let $\qrt$ be a QRT with tensor product structure.
  We define the regularized quasiprobability extent of $\cE\in\hp(A\rightarrow B)$ as 
  \begin{equation}
    \gammareg_{\qrt^{\star}}(\cE) \coloneqq \lim\limits_{n\rightarrow\infty} \gamma_{\qrt^{\star}}(\cE^{\otimes n})^{1/n} \, .
  \end{equation}
\end{definition}
Note that this regularization is equivalent to the ``additive'' regularization of the log-quasiprobability extent
  \begin{equation}
    \log\gammareg_{\qrt^{\star}}(\cE) = \lim\limits_{n\rightarrow\infty} \frac{1}{n} \log\gamma_{\qrt^{\star}}(\cE^{\otimes n}) \, .
  \end{equation}

\begin{example}\label{ex:magic_state_submult}
  We again return to the Clifford+T setting from~\Cref{sec:motivational_example}.
  In~\Cref{ex:state_qpd_nonclifford}, we have already shown how using the magic state injection circuit allows an arbitrary Clifford+T circuit $\mathcal{C}$ can be rewritten as
  \begin{equation}
    \mathcal{C}(\proj{0}^{\otimes n}) = \mathcal{C}'(\proj{0}^{\otimes n}\otimes\proj{H}^m) 
  \end{equation}
  for some appropriately chosen Clifford circuit $\mathcal{C}'$.
  Notice that this reduction from gate to state allowed us to assume that all magic states are prepared at the very beginning of the circuit, and more importantly, in parallel.
  The quasiprobability extent of magic states is strictly sub-multiplicative: As previously mentioned $\gamma_{\fs}(\proj{H})=\sqrt{2}=1.414$ whereas the regularized quasiprobability extent is estimated to be $\gammareg_{\fs}(\proj{H})\approx 1.283\pm 0.002$~\cite{heinrich2019_robustness}.
  As a consequence, the simulation cost for a Clifford+T circuit can be decreased from $\mathcal{O}(\gamma_{\fs}(\proj{H})^{2m})$ to $\mathcal{O}(\gammareg_{\fs}(\proj{H})^{2m})$.

  A more detailed discussion, including a precise definition of the decomposition set, will follow in~\Cref{chap:magic}.
\end{example}

Interestingly, $\gammareg$ is arguably not the most meaningful quantity to operationally quantify the asymptotic simulation cost of QPS.
In general, one can also consider \emph{approximate} QPS with an asymptotically vanishing error.
To capture the minimum overhead for the simulation of some superoperator $\cE$ when we allow for some error of magnitude $\epsilon$, we introduce the \emph{smoothed} quasiprobability extent.
\begin{definition}{}{smoothed_gamma}
  Let $\qrt$ be a QRT, $\cE\in\cptp(A\rightarrow B)$ and $\epsilon\geq 0$.
  The smoothed quasiprobability extent of $\cE$ is defined as
  \begin{equation}
    \gamma_{\qrt^{\star}}^{\epsilon}(\cE) \coloneqq \min\limits_{\cF\in B_{\epsilon}(\cE)} \gamma_{\qrt^{\star}}(\cF)
  \end{equation}
  where $B_{\epsilon}(\cE) \coloneqq \{\cF\in\cptp(A\rightarrow B) | \frac{1}{2}\dnorm{\cE-\cF}\leq \epsilon\}$.
\end{definition}

\begin{remark}{}{}
  Note that for technical reasons, we chose to smooth only over CPTP maps instead of general HPTP operations.
  Therefore, we also restricted $\cE$ to be CPTP itself.
  In later sections, this will allow us to directly relate $\gamma_{\qrt^{\star}}^{\epsilon}$ to other smoothed resource measures of QRTs, which are typically also defined by smoothing over physical operations.
  In principle, one could also define the smoothed quasiprobability extent without the CPTP constraint.
\end{remark}
The asymptotic overhead with a vanishing error is thus captured by following quantity.
\begin{definition}{}{}
  Let $\qrt$ be a QRT with tensor product structure.
  We define the \emph{asymptotic quasiprobability extent} of some $\cE\in\hp(A\rightarrow B)$ as 
  \begin{equation}
    \gammasreg_{\qrt^{\star}}(\cE) \coloneqq \lim\limits_{\epsilon\rightarrow 0^+}\liminf\limits_{n\rightarrow\infty} \gamma_{\qrt^{\star}}^{\epsilon}(\cE^{\otimes n})^{1/n} \, .
  \end{equation}
\end{definition}
Notice that we use a $\liminf$ since it is a priori unclear if the sequence $\gamma_{\qrt^{\star}}^{\epsilon}(\cE^{\otimes n})^{1/n}$ converges.
It is however bounded by 
\begin{equation}
  \gammasreg_{\qrt^{\star}}(\cE) \leq \lim\limits_{\epsilon\rightarrow 0^+}\liminf\limits_{n\rightarrow\infty} \gamma_{\qrt^{\star}}(\cE^{\otimes n})^{1/n} = \gammareg_{\qrt^{\star}}(\cE) \, .
\end{equation}
In the study of QPS for various QRTs, it is an important task to understand the relation between $\gamma_{\qrt^{\star}}$, $\gammareg_{\qrt^{\star}}$ and $\gammasreg_{\qrt^{\star}}$.
We note that for many QRTs, computing the quasiprobability extent of an operation is a very difficult task and often only possible in restricted cases.
In extension, computing the regularized quasiprobability extent is typically harder, and for the asymptotic quasiprobability extent it is even harder.

For the resource theory of entanglement in~\Cref{chap:nonlocal}, we will find an explicit expression for $\gammareg_{\qrt^{\star}}$ and $\gammasreg_{\qrt^{\star}}$ for pure entangled states.
Remarkably, we will observe that $\gammasreg_{\qrt^{\star}}$ can be strictly smaller, meaning that erroneous QPS with vanishingly small error is asymptotically more efficient than exact QPS.

\section{Convex programming for finite decomposition sets}\label{sec:qps_conv_programming}
In this section, we will consider decomposition sets that are finite $\ds=\{\cF_1,\dots,\cF_M\}$ (or equivalently convex polytopes, by~\Cref{lem:gamma_hulls}).
Such situations occur naturally in some settings, for example in the resource theory of magic (see~\Cref{chap:magic}) or for probabilistic error cancellation (see~\Cref{chap:noisefree}).
Even when the decomposition set is infinite, it can be useful to pick a finite subset to obtain upper bounds to the quasiprobability extent through~\Cref{lem:gamma_ds_bound}.

For a finite decomposition set $\ds\subset\hp(A\rightarrow B)$ with $M$ elements, the quasiprobability extent of $\cE\in\hp(A\rightarrow B)$ is given by
\begin{equation}
  \gamma_{\ds}(\cE) = \left \lbrace
  \begin{array}{r l}
    \min \limits_{a\in \mathbb{R}^M} &
    \norm{a}_1 \\
    \textnormal{s.t.} &
    \cE = \sum_i a_i\cF_i
  \end{array} \right . \, .
\end{equation}
Through a standard trick, this optimization problem can be brought into the form of a linear program by introducing a slack variable $b\in\mathbb{R}^M$
\begin{equation}\label{eq:gamma_lp}
  \gamma_{\ds}(\cE) = \left \lbrace
  \begin{array}{r l}
    \min \limits_{a,b\in \mathbb{R}^M} &
    \sum_i b_i \\
    \textnormal{s.t.} &
    \cE = \sum_i a_i\cF_i \quad\text{and}\quad \forall i: -a_i \leq b_i, a_i\leq b_i
  \end{array} \right .
\end{equation}
as the optimizer $(a^*,b^*)$ must clearly fulfill $b^*_i=\abs{a^*_i}$.
~\Cref{eq:gamma_lp} can be solved numerically through the usage of readily-available linear program solvers.

The smoothed quasiprobability extent of a channel $\cE\in\cptp(A\rightarrow B)$ w.r.t. a finite decomposition set
\begin{equation}\label{eq:gamma_smoothed}
  \gamma_{\ds}^{\epsilon}(\cE) = \left \lbrace
  \begin{array}{r l}
    \min \limits_{a\in \mathbb{R}^M} &
    \norm{a}_1 \\
    \textnormal{s.t.} &
    \frac{1}{2}\dnorm{\cE - \sum_i a_i\cF_i}\leq \epsilon \\
    & \sum_i a_i\choi{\cF_i} \loewnergeq 0, \\
    & \tr_{B}\left[\sum_i a_i\choi{\cF_i}\right] = \frac{\id_A}{\dimension(A)} \\
  \end{array} \right .
\end{equation}
can also straightforwardly be cast into a convex program, because the diamond norm can be expressed as a SDP~\cite{watrous2009_sdp,watrous2013_simplersdp}.
For this purpose, we use following formulation of the channel distinguishability.
\begin{proposition}{\cite{watrous2009_sdp}}{diamond_sdp}
  Let $\Delta\coloneqq\cG_1-\cG_2$ be the difference of two quantum channels $\cG_1,\cG_2\in\cptp(A\rightarrow B)$.
  Then the diamond norm of $\Delta$ is expressible as the semidefinite program
  \begin{equation}
    \dnorm{\Delta} = \left \lbrace
    \begin{array}{r l}
      \min \limits_{\lambda\in\mathbb{R},X\in\pos(AB)} &
      2\lambda \\
      \textnormal{s.t.} &
      \lambda\id_A \loewnergeq \tr_{B}[X] , \\
      & X\loewnergeq \choi{\Delta}\dimension(A)
    \end{array} \right . \, .
  \end{equation}
\end{proposition}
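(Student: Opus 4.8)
The plan is to recognize the right-hand side as a semidefinite program, pass to its Lagrange dual, and identify that dual with a standard primal characterization of the diamond norm of $\Delta \coloneqq \cG_1 - \cG_2$. Note first that $\Delta$ is Hermitian-preserving and, since $\cG_1,\cG_2$ are trace-preserving, trace-annihilating, i.e. $\tr\circ\Delta = 0$. Write $J \coloneqq \dimension(A)\,\choi{\Delta} \in \herm(AB)$ for the unnormalized Choi operator, so that the constraint in the claim reads $X \loewnergeq J$; here the first tensor factor $A$ of $J$ plays the role of a reference system $R \cong A$ and the second is the output $B$.

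Step 1: a primal SDP for $\dnorm{\Delta}$. I would start from the standard fact that the diamond norm is attained on pure inputs with a reference of the same dimension, $\dnorm{\Delta} = \max\{\norm{(\idchan_R\otimes\Delta)(\proj{\psi})}_1 : \ket{\psi} \in \cH_R\otimes\cH_A,\ \norm{\psi} = 1\}$. For each such $\psi$ the operator $Y \coloneqq (\idchan_R\otimes\Delta)(\proj{\psi})$ is Hermitian with $\tr[Y] = 0$ because $\tr\circ\Delta = 0$, and any traceless Hermitian $Y$ satisfies $\norm{Y}_1 = 2\tr[Y_+] = 2\max\{\tr[Y\Pi] : 0 \loewnerleq \Pi \loewnerleq \id_{RB}\}$ — this is where the factor $2$ in the target objective originates. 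Writing $\ket{\psi} = (M\otimes\id_A)\ket{\Gamma}$ with $M$ acting on $R$, $\ket{\Gamma} = \sum_i\ket{i}\ket{i}$ and $\tr[M^{\dagger}M] = 1$, one gets $(\idchan_R\otimes\Delta)(\proj{\psi}) = (M\otimes\id_B)\,J\,(M^{\dagger}\otimes\id_B)$ and reduced state $\tau \coloneqq \tr_A[\proj{\psi}] = MM^{\dagger} \in \dops(R)$. Introducing $P \coloneqq (M^{\dagger}\otimes\id_B)\,\Pi\,(M\otimes\id_B)$ as the new optimization variable, a routine argument (using $M = \sqrt{\tau}$ for the converse direction, and a support/limiting argument if $\tau$ is singular) shows that as $(\Pi,M)$ ranges over all admissible pairs, $(P,\tau)$ ranges exactly over $\{(P,\tau) : \tau \in \dops(R),\ 0 \loewnerleq P \loewnerleq \tau\otimes\id_B\}$, and therefore
\begin{equation}
  \dnorm{\Delta} = \max\bigl\{\, 2\tr[PJ] \;:\; \tau \in \dops(R),\ 0 \loewnerleq P \loewnerleq \tau\otimes\id_B \,\bigr\}.
\end{equation}
I expect this reparametrization — keeping the reference/output tensor factors consistent with the Choi convention of the excerpt and checking both directions of the claimed range for $(P,\tau)$ — to be the most delicate step; the rest is bookkeeping.

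Step 2: duality and attainment. Writing $v$ for the right-hand side of the claimed identity, I would compute the Lagrange dual of $v = \min\{2\lambda : \lambda\id_A \loewnergeq \tr_B[X],\ X \loewnergeq J,\ X \in \pos(AB)\}$: attaching a multiplier $\rho \loewnergeq 0$ to $\lambda\id_A - \tr_B[X] \loewnergeq 0$ and $Q \loewnergeq 0$ to $X - J \loewnergeq 0$, the Lagrangian is affine in $\lambda$ and in the Hermitian variable $X$, so boundedness of its infimum forces $\tr[\rho] = 2$ and $\rho\otimes\id_B \loewnergeq Q$, leaving dual objective $\tr[QJ]$; after the substitution $\rho = 2\tau$, $Q = 2P$ this is precisely the maximization of Step 1, so $v$ equals the dual optimal value. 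Strong duality with attainment on both sides then follows from Slater's condition: the minimization is strictly feasible (take $X = c\,\id_{AB}$ with $c$ large and $\lambda = c\,\dimension(B) + 1$, making all three operator inequalities strict), and so is the maximization (take $\tau = \id_A/\dimension(A)$ and $P = \epsilon\,\id_{AB}$ for small $\epsilon > 0$). Hence both optima are attained and coincide, giving $\dnorm{\Delta} = v$, which is the assertion.
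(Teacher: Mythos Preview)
The paper does not supply its own proof of this proposition; it is stated with a citation to Watrous and then used as a black box to build the SDPs for $\gamma_{\ds}^{\epsilon}$ and the budgeted problem. Your argument is correct and is essentially the standard derivation from that reference: establish the primal form $\dnorm{\Delta} = \max\{2\tr[PJ] : \tau\in\dops(A),\ 0\loewnerleq P\loewnerleq \tau\otimes\id_B\}$ via the pure-input/trace-zero observation and the $(\Pi,M)\mapsto(P,\tau)$ reparametrization, then dualize the claimed minimization and match the two using Slater's condition. One small wording point: you call $X$ a ``Hermitian variable'' in the Lagrangian step, but the domain is $X\in\pos(AB)$; your conclusion $\rho\otimes\id_B\loewnergeq Q$ is the correct one for the infimum over $X\loewnergeq 0$, so the computation itself is fine.
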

As such, we obtain the semidefinite program 
\begin{equation}
  \gamma_{\ds}^{\epsilon}(\cE) = \left \lbrace
  \begin{array}{r l}
    \min \limits_{a\in \mathbb{R}^M,\lambda\in\mathbb{R},X\in\pos(AB)} &
    \norm{a}_1 \\
    \textnormal{s.t.} &
    \lambda \leq \epsilon , \\
    & \lambda\id_A \loewnergeq \tr_{B}[X] , \\
    & X\loewnergeq \dimension(A)(\choi{\cE} - \sum_i a_i \choi{\cF_i}) , \\
    & \sum_i a_i\choi{\cF_i} \loewnergeq 0, \\
    & \tr_{B}\left[\sum_i a_i\choi{\cF_i}\right] = \frac{\id_A}{\dimension(A)} \\
  \end{array} \right .
\end{equation}

Another situation that can occur in practice is that we are given a maximal \emph{budget} for the sampling overhead, i.e., a maximum allowed 1-norm $\kappa\geq 0$ for the QPD.
Under this restriction, the approximation error of the best possible QPD is given by
\begin{equation}\label{eq:gamma_budgeted}
  \left \lbrace
  \begin{array}{r l}
    \min \limits_{a\in\mathbb{R}^M} & \norm{\cE - \sum_i a_i\cF_i}_{\diamond} \\
    \textnormal{s.t.} & \sum_i a_i\cF_i\in\cptp(A\rightarrow B), \\
    & \sum\limits_i{|a_i|}\leq \kappa \, .
  \end{array} \right .
\end{equation}
which again can be turned into a semidefinite program using~\Cref{prop:diamond_sdp} and by introducing a slack variable
\begin{equation}
  \begin{array}{r l}
    \min \limits_{a,b\in\mathbb{R}^M,\lambda\in\mathbb{R},X\in\pos(AB)} & 2\lambda \\
    \textnormal{s.t.} 
    & \sum_i b_i \leq \kappa , \\
    & -a_i \leq b_i, a_i\leq b_i , \\
    & \sum_i a_i\choi{\cF_i} \loewnergeq 0 , \\
    & \sum_i a_i \tr_{B}[\choi{\cF_i}] = \frac{\id_A}{\dimension(A)} , \\
    & \lambda\id_A \loewnergeq \tr_{B}[X] , \\
    & X\loewnergeq \dimension(A)(\choi{\cE} - \sum_i a_i \choi{\cF_i})
  \end{array} \, .
\end{equation}

Note that the way how we set up the optimization problems in~\Cref{eq:gamma_smoothed} and~\Cref{eq:gamma_budgeted}, we explicitly required the approximation $\sum_ia_i\cF_i$ of $\cE$ to be a physical $\cptp$ map.
This constraint could also be dropped to allow for potentially cheaper or better approximations, respectively.

\section{Further reading}
QPS was originally introduced by Pashayan, Wallman and Bartlett in the context of classical simulation of non-Clifford quantum circuits~\cite{pashayan2015_estimating}.
However, this original work was not based on general QPDs, but rather on the discrete Wigner representation for odd-dimensional qudit systems (and more general quasiprobability representations w.r.t. some frame and dual frame).
Later work by Howard and Campbell~\cite{howard2017_application} generalized the algorithm to the qubit setting and applied it to the QPS of Clifford+T circuits by simulating magic states with stabilizer states.
General QPS of non-Clifford channels was first treated in \reference~\cite{bennink2017_unbiased}.

Initially, the utility of classical side information was not considered in any of the applications of QPS, be that the simulation of non-Clifford channels~\cite{bennink2017_unbiased}, error mitigation~\cite{temme2017_errormitigation} or nonphysical simulation~\cite{piveteau2022_quasiprobability,jiang2021_physical}.
To our knowledge, the first inclusions of classical side information are due to Endo \etal~\cite{endo2018_practical} in the context of error mitigation, as well as Regula \etal~\cite{regula2021_operational} in the context on nonphysical simulation.
However, both of these works only considered augmenting the decomposition set with completely positive trace-non-increasing maps.
This corresponds to using classical side-information, but with only non-negative weights $\beta_j$ of the individual quantum instruments.
Only later work by Mitarai and Fujii~\cite{mitarai2021_constructing,mitarai2021_overhead} on circuit knitting realized that negative weights were also admissible, though they still didn't properly formulate and characterize the full associated decomposition set.
This was done in later work by Piveteau \etal~\cite{piveteau2024_circuit} and Lowe \etal~\cite{lowe2023_fast}, though both lack the quantum instrument formulation presented here.
The full usage of classical side information, including negative weights and a formalism based on quantum instruments, was later also applied to the setting of nonphysical simulation in \reference~\cite{zhao2023_power}.
The authors consider the decomposition set of \emph{twisted channels} of the form $\sum_i\beta_i\cE_i$ where $\beta_i\in\{+1,-1\}$ and $(\cE_i)_i$ is a general quantum instrument.

For an overview of the history of each individual application of QPS, we refer to the ``further reading'' section of the corresponding chapter.

The (total) robustness was first introduced in the context of entanglement theory by Vidal and Tarrach~\cite{vidal1999_robustness} and later generalized to the context of magic theory~\cite{howard2017_application}, nonphysical operations~\cite{jiang2021_physical,regula2021_operational} and probabilistic error cancellation~\cite{takagi2021_optimal}.
For a comprehensive mathematical treatment of convex resource theories and associated robustness measures, we refer to reader to \reference~\cite{regula2018_convex}.

The linear program for the optimal quasiprobability extent w.r.t. a finite decomposition set is due to \reference~\cite{howard2017_application,temme2017_errormitigation}.
The SDP for the optimal approximation with budgeted 1-norm is due to Piveteau~\etal~\cite{piveteau2022_quasiprobability}.

\section{Contributions}
We briefly summarize the author's contributions in this chapter.
\begin{itemize}
  \item The exposition of QPS in \Cref{sec:motivational_example,sec:qpsim} summarizes a standard technique in literature and is devoid of new contributions besides a few remarks.

  \item The framework in~\Cref{sec:qps_intermediate_measurements}, together with results in subsequent sections, is the first comprehensive treatment of classical side information in general QPS and has not been published before.
  We note that the idea of basing the decomposition set on quantum instruments was proposed in~\reference~\cite{zhao2023_power}.

  \item In~\Cref{sec:qps_basic_properties}, the results on absolutely convex decomposition sets induced by quantum instruments (see~\Cref{lem:decompset_absolutely_convex,lem:one_element_qpd,lem:robustness2,lem:gamma_faithful}) are novel and previously unpublished.
  Similarly, the Lipschitz continuity result (\Cref{lem:gamma_lipschitz}) is new to our knowledge (though it might possibly be related to results in \reference~\cite{schluck2023_continuity}).
  We also highlight that the basis in~\Cref{ex:modifiedendobasis} was found by Thomas Dubach in his master thesis~\cite{dubach2023_masterthesis} which was supervised by the author.

  The remaining results in the section are mostly covered explicitly or implicitly by other authors in previous literature.

  \item Almost all results in~\Cref{sec:gamma_qrt} are new and previously unpublished.
  Though it should be noted that many of them are adaptations of previously know properties of the robustness to the decomposition set $\qrt^{\star}$ (as opposed to $\qrt$).
  We note the results in~\Cref{prop:log_gamma_monotone,cor:robustness_monotone} are well-established in QRT literature.
  We also note that the regularized quasiprobability extent has been previously studied~\cite{heinrich2019_robustness,piveteau2024_circuit}, but the asymptotic quasiprobability extent has to our knowledge never been considered before.

  \item The linear program in~\Cref{sec:qps_conv_programming} is due to \reference~\cite{howard2017_application,temme2017_errormitigation}.
  The semidefinite program for a budgeted QPD was introduced by the author in \reference~\cite{piveteau2022_quasiprobability}.
  The semidefinite program for the smooth quasiprobability extent has previously not been published.
\end{itemize}

\chapter{Simulating non-physical computation}\label{chap:nonphysical}
The set of completely positive trace-preserving (CPTP) superoperators describes all physically admissible ways in which a quantum system can evolve in time.
Nonetheless, in various fields of quantum information, the necessity to study non-CPTP maps occasionally arises.
For instance, positive non-CP maps play an important role in entanglement detection~\cite{peres1996_separability}.
Other such examples include the study of non-Markovian dynamics of open quantum systems~\cite{pechukas1994_reduced,shaji2005_who} and quantum error mitigation~\cite{temme2017_errormitigation,jiang2021_physical}.

This chapter is dedicated to the study of the QPS overhead for simulating non-CPTP maps given one has access to a quantum computer that is capable of running any CPTP operation.
From the point of view of resource theories (as discussed in~\Cref{sec:gamma_qrt}), we will essentially consider the most powerful QRT where the free channels from $A$ to $B$ are precisely all CPTP maps $\qrt(A\rightarrow B)\coloneqq \cptp(A\rightarrow B)$ and the set of free states on $A$ is the set of all density operators $\dops(A)$.
Clearly, the quasiprobability extent of any physical quantum channel or quantum state is $1$, so our focus purely lies on the study of the quasiprobability extent of non-physical (super-)operators.
More concretely, we will consider the quasiprobability extent $\gamma_{\cptp^{\star}}$ w.r.t. the decomposition set $\cptp^{\star}(A\rightarrow B)\coloneqq \qids[\qi(A\rightarrow B)]$ which captures the utilization of classical side information through intermediate measurements (recall the discussion in~\Cref{sec:qps_intermediate_measurements} and~\Cref{eq:qrt_extended_ds}) and how it compares to the quasiprobability extent $\gamma_{\cptp}$ where intermediate measurements are not considered.

The utility of studying this setting is twofold.
On one hand, this ``maximal resource theory'' can be a useful technical tool to bound the behavior of QPS in other resource theories.
For example, for any QRT $\qrt$, $\gamma_{\qrt^{\star}}$ is lower bounded by $\gamma_{\cptp^{\star}}$ due to~\Cref{lem:gamma_ds_bound}.
On the other hand, there are direct practical applications involving the simulation of quantum circuits containing non-physical operations.
We will touch upon a few of these applications towards the end of the chapter.

\section{Utility of classical side information}
In this section, we aim to study the difference between $\gamma_{\cptp^{\star}}$ and $\gamma_{\cptp}$ in order to evaluate how useful it is to consider classical side information in this setting.
We start by mathematically characterizing the decomposition sets.
Clearly, by definition the set of all quantum instruments $\qi(A\rightarrow B)$ is coarse-grainable, trivially fine-grainable and closed under mixture.
So by~\Cref{lem:characterization_expanded_decomposition_set} we can characterize
\begin{equation}\label{eq:char_cptpstar}
  \cptp^{\star}(A\rightarrow B) = \{ \cE^+ - \cE^- | \cE^{\pm}\in\cp(A\rightarrow B), \cE^++\cE^-\in\cptp(A\rightarrow B) \} \, .
\end{equation}
In fact, the trace-preserving condition can be relaxed to trace-non-increasing
\begin{equation}\label{eq:char_cptpstar2}
  \cptp^{\star}(A\rightarrow B) = \{ \cE^+ - \cE^- | \cE^{\pm}\in\cp(A\rightarrow B), \cE^++\cE^-\in\cptni(A\rightarrow B) \}
\end{equation}
since any $\cE^++\cE^-\in\cptni(A\rightarrow B)$ can be completed to a trace-preserving map $(\cE^++\cE^-)+\cG\in\cptp(A\rightarrow B)$ and hence $\cE^+-\cE^- = (\cE^++\frac{1}{2}\cG) - (\cE^-+\frac{1}{2}\cG)$.

We now turn to the question of which superoperators $\cE\in\hp(A\rightarrow B)$ possess a QPD into the sets $\cptp(A\rightarrow B)$ and $\cptp^{\star}(A\rightarrow B)$.
\begin{lemma}{}{cptp_span}
  For any systems $A$ and $B$,
  \begin{equation}\label{eq:cptp_span_1}
    \spn_{\mathbb{R}}\left(\cptp(A\rightarrow B)\right) = \mathbb{R}\hptp(A\rightarrow B) \coloneqq \{r\cdot \cE | r\in\mathbb{R}, \cE\in\hptp(A\rightarrow B)\}
  \end{equation}
  and
  \begin{equation}\label{eq:cptp_span_2}
    \spn_{\mathbb{R}}\left(\cptp^{\star}(A\rightarrow B)\right) = \hp(A\rightarrow B) \, .
  \end{equation}
\end{lemma}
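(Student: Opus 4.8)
The plan is to prove both identities by mutual inclusion, treating the two statements in tandem since the arguments for the reverse inclusions share the same flavor. For \Cref{eq:cptp_span_1}, the inclusion $\spn_{\mathbb{R}}(\cptp(A\rightarrow B)) \subseteq \mathbb{R}\hptp(A\rightarrow B)$ is immediate: a real linear combination $\sum_i r_i \cF_i$ of CPTP maps is Hermitian-preserving, and taking the trace of its Choi operator shows $\tr_B[\choi{\sum_i r_i \cF_i}] = (\sum_i r_i)\frac{\id_A}{\dimension(A)}$, so after dividing by $\sum_i r_i$ (if nonzero; the zero case is trivial) we get an HPTP map scaled by the real number $\sum_i r_i$. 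For the reverse inclusion, I would take an arbitrary $\cE \in \hptp(A\rightarrow B)$ and exhibit it as a real combination of CPTP maps. The natural move is to decompose the Choi operator: $\choi{\cE}$ is Hermitian, so write $\choi{\cE} = \choi{\cE}_+ - \choi{\cE}_-$ with $\choi{\cE}_\pm \loewnergeq 0$ (positive and negative parts). Each $\choi{\cE}_\pm$ corresponds to a completely positive map; the issue is that they need not be trace-preserving. But by adding a suitable completely positive ``filler'' term $\cG$ of the form $\sigma \mapsto \tr[\sigma]\rho_0$ for a fixed state $\rho_0$, scaled appropriately, one can arrange both $\cE_+ + c\cG$ and $\cE_- + c\cG$ to have the right partial-trace normalization (up to a scalar), hence be scalar multiples of CPTP maps. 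Then $\cE = (\cE_+ + c\cG) - (\cE_- + c\cG)$ is a real combination of CPTP maps; a short computation using that $\cE$ is itself trace-preserving fixes the constant $c$.

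For \Cref{eq:cptp_span_2}, the inclusion $\spn_{\mathbb{R}}(\cptp^{\star}(A\rightarrow B)) \subseteq \hp(A\rightarrow B)$ holds because every element of $\cptp^{\star}(A\rightarrow B)$ is of the form $\cE^+ - \cE^-$ with $\cE^\pm$ completely positive (hence Hermitian-preserving), by \Cref{eq:char_cptpstar}, and $\hp$ is a real subspace. The content is the reverse inclusion $\hp(A\rightarrow B) \subseteq \spn_{\mathbb{R}}(\cptp^{\star}(A\rightarrow B))$. Given an arbitrary $\cE \in \hp(A\rightarrow B)$, I would again split $\choi{\cE} = P - N$ into positive and negative parts, giving completely positive maps $\cE^+, \cE^-$ with $\cE = \cE^+ - \cE^-$. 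Now the crucial relaxation from \Cref{eq:char_cptpstar2} is available: it suffices that $\cE^+ + \cE^-$ be \emph{trace-non-increasing} (not trace-preserving) to land in $\cptp^{\star}$, and any completely positive map becomes trace-non-increasing after scaling by a small enough positive constant. Concretely, choose $\lambda > 0$ small enough that $\lambda(\cE^+ + \cE^-) \in \cptni(A\rightarrow B)$; then $\lambda \cE^+ - \lambda \cE^- \in \cptp^{\star}(A\rightarrow B)$ by \Cref{eq:char_cptpstar2}, so $\cE = \frac{1}{\lambda}(\lambda\cE^+ - \lambda\cE^-) \in \spn_{\mathbb{R}}(\cptp^{\star}(A\rightarrow B))$.

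The main obstacle I anticipate is purely bookkeeping in the first identity: getting the filler-term constant exactly right so that \emph{both} $\cE_+ + c\cG$ and $\cE_- + c\cG$ simultaneously have partial trace proportional to $\id_A$, while the difference recovers $\cE$. This requires using the hypothesis $\tr_B[\choi{\cE}] = \frac{\id_A}{\dimension(A)}$ to relate $\tr_B[P]$ and $\tr_B[N]$ --- they differ by $\frac{\id_A}{\dimension(A)}$, but individually they are arbitrary positive operators, so a single scalar filler may not suffice; one might instead need a filler of the form $\sigma \mapsto \omega(\sigma)\rho_0$ where $\omega$ is a positive functional tuned so that $\tr_B[P] + (\text{filler Choi partial trace})$ is a multiple of the identity. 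This is routine but fiddly. The second identity \Cref{eq:cptp_span_2} is genuinely easy once the characterization \Cref{eq:char_cptpstar2} and the positive/negative part decomposition are in hand, so I would present it first as a warm-up.
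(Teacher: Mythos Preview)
Your argument for \Cref{eq:cptp_span_2} is essentially the paper's: both split $\choi{\cE}$ into positive and negative parts and invoke the trace-non-increasing characterization \Cref{eq:char_cptpstar2} (the paper normalizes each part by its own constant, you scale both by a common $\lambda$; this is immaterial).

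For \Cref{eq:cptp_span_1} your route differs from the paper's, and your ``main obstacle'' paragraph correctly identifies why the first version of your argument fails: the positive/negative split $\choi{\cE}=P-N$ destroys the partial-trace structure, and a scalar multiple of the simple filler $\sigma\mapsto\tr[\sigma]\rho_0$ cannot simultaneously fix $\tr_B[P]$ and $\tr_B[N]$ unless they were already proportional to $\id_A$. Your fallback to a general filler $\sigma\mapsto\tr[M\sigma]\rho_0$ with $M^T \propto \mu\id_A-\tr_B[P]$ does work (and then $\tr_B[N]$ is fixed automatically via $\tr_B[P]-\tr_B[N]=\id_A/\dimension(A)$), so the approach is salvageable but, as you say, fiddly. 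The paper sidesteps the whole issue by never taking the positive/negative split: it writes $\cE = c\,\cD - (c-1)\cF$ where $\cD$ is the fully depolarizing channel and $\cF = \frac{c}{c-1}\cD - \frac{1}{c-1}\cE$; since $\choi{\cD}\propto\id_{AB}$, taking $c=\dimension(A)\dimension(B)\lambda$ with $\lambda\geq\sigma_{\max}(\choi{\cE})$ makes $\choi{\cF}\loewnergeq 0$, and trace-preservation is automatic because both $\cD$ and $\cE$ are trace-preserving. This one-line trick avoids the filler bookkeeping entirely.
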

\begin{proof}
  We start by proving~\Cref{eq:cptp_span_1}.
  The inclusion $\subset$ follows directly from the fact that any linear combination of trace-preserving maps is itself proportional to a trace-preserving map.
  Conversely, consider some $r\in\mathbb{R}$ and $\cE\in \hptp(A\rightarrow B)$.
  We will now find a decomposition of $r\cE$ in terms of CPTP maps.
  We can assume without loss of generality that $r=1$, otherwise the decomposition just needs to be scaled accordingly.
  Let's denote by $d_A\coloneqq\dimension(A)$, $d_B\coloneqq\dimension(B)$ and by $\smash{\cD(\rho)\coloneqq\frac{\tr[\rho]}{d_B}\id_B}$ the fully depolarizing channel on $B$.
  Furthermore, define $\lambda\geq \sigma_{\mathrm{max}}(\choi{\cE})$ to be at least the maximum eigenvalue of $\choi{\cE}$.
  We can decompose
  \begin{equation}
    \cE = d_Ad_B\lambda \cdot \cD -(d_Ad_B\lambda -1)\cdot\left( \frac{d_Ad_B\lambda}{d_Ad_B\lambda-1}\cD - \frac{1}{d_Ad_B\lambda-1}\cE \right) \, .
  \end{equation}
  It remains to show that $\left( \frac{d_Ad_B\lambda}{d_Ad_B\lambda-1}\cD - \frac{1}{d_Ad_B\lambda-1}\cE \right)$ is CPTP.
  This is straightforward to check as
  \begin{equation}
    \tr_B\left[ \frac{d_Ad_B\lambda}{d_Ad_B\lambda-1}\choi{\cD} - \frac{1}{d_Ad_B\lambda-1}\choi{\cE} \right] = \left(\frac{d_Ad_B\lambda}{d_Ad_B\lambda-1}-\frac{1}{d_Ad_B\lambda-1}\right)\frac{1}{d_A}\id_A = \frac{1}{d_A}\id_A
  \end{equation}
  and
  \begin{align}
    & \left( \frac{d_Ad_B\lambda}{d_Ad_B\lambda-1}\choi{\cD} - \frac{1}{d_Ad_B\lambda-1}\choi{\cE} \right) \loewnergeq 0 \\
    \Longleftrightarrow & d_Ad_B\lambda \frac{1}{d_Ad_B}\id_{AB} \loewnergeq \choi{\cE} \\
    \Longleftrightarrow & \lambda \geq \sigma_{\mathrm{max}}(\choi{\cE}) \, .
  \end{align}

  We now turn to proving~\Cref{eq:cptp_span_2}.
  The direction $\subset$ again directly follows from the fact that any linear combination of Hermitian-preserving maps is again Hermitian-preserving.
  For the converse direction, consider a superoperator $\cE\in\hp(A\rightarrow B)$.
  We can split up its Choi representation into the difference of two positive operators $\Lambda^{\pm}\loewnergeq 0$
  \begin{equation}
    \choi{\cE} = \Lambda^+ - \Lambda^- = \lambda^+ \frac{\Lambda^+}{\lambda^+} - \lambda^- \frac{\Lambda^-}{\lambda^-}
  \end{equation}
  where we introduced the maximal eigenvalues $\lambda^{\pm}$ of $\dimension(A)\tr_B[\Lambda^{\pm}]$.
  We now observe that $\Lambda^{\pm} / \lambda^{\pm}$ are Choi operators representing completely positive trace-non-increasing maps, as they are positive and $\smash{\tr_B[\frac{\Lambda^{\pm}}{\lambda^{\pm}}] \loewnerleq \frac{1}{\dimension(A)}\id_A}$.
  The proof follows from $\cptni(A\rightarrow B)\subset\cptp^{\star}(A\rightarrow B)$ which is a consequence of~\Cref{eq:char_cptpstar2}.
  Note that we implicitly assumed $\lambda^{\pm}>0$, but the proof can be straightforwardly adapted for the case where $\choi{\cE}$ is positive or negative.
\end{proof}

\Cref{lem:cptp_span} implies that classical side information is crucial in order to simulate Hermitian-preserving maps that do not lie in $\mathbb{R}\hptp(A\rightarrow B)$.
It is natural to ask, whether using intermediate measurements improves the quasiprobability extent for maps that do lie in $\mathbb{R}\hptp(A\rightarrow B)$.
The answer to this question is negative.
\begin{proposition}{}{utility_cptp_interm_mmts}
  For all $\cE\in\mathbb{R}\hptp(A\rightarrow B)$ one has $\gamma_{\cptp}(\cE)=\gamma_{\cptp^{\star}}(\cE)$.
\end{proposition}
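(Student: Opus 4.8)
The plan is to show that allowing classical side information does not reduce the quasiprobability extent for operations in $\mathbb{R}\hptp(A\rightarrow B)$. Since $\cptp(A\rightarrow B)\subset\cptp^{\star}(A\rightarrow B)$, the inequality $\gamma_{\cptp}(\cE)\geq\gamma_{\cptp^{\star}}(\cE)$ is immediate from~\Cref{lem:gamma_ds_bound}, so the whole content lies in proving $\gamma_{\cptp}(\cE)\leq\gamma_{\cptp^{\star}}(\cE)$. Because $\cptp^{\star}(A\rightarrow B)$ is absolutely convex (it equals $\qids[\qi(A\rightarrow B)]$ with $\qi(A\rightarrow B)$ closed under mixture, see~\Cref{lem:decompset_absolutely_convex}), we may invoke~\Cref{lem:one_element_qpd}: there is an optimal single-term QPD $\cE=\kappa\cF$ with $\kappa=\gamma_{\cptp^{\star}}(\cE)$ and $\cF\in\cptp^{\star}(A\rightarrow B)$. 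By the characterization in~\Cref{eq:char_cptpstar}, we can write $\cF=\cF^+-\cF^-$ with $\cF^{\pm}\in\cp(A\rightarrow B)$ and $\cF^++\cF^-\in\cptp(A\rightarrow B)$. Thus $\cE=\kappa\cF^+-\kappa\cF^-$; the obstacle is that $\cF^+$ and $\cF^-$ individually are only completely positive trace-non-increasing, not trace-preserving, so this is a QPD with respect to $\cptni$, not $\cptp$.

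The key step is to ``complete'' this decomposition to a genuine QPD over CPTP maps without increasing the $1$-norm, using the hypothesis $\cE\in\mathbb{R}\hptp(A\rightarrow B)$. First I would note that since $\cE$ is proportional to a trace-preserving map, applying $\tr_B$ to the Choi identity $\choi{\cE}=\kappa\choi{\cF^+}-\kappa\choi{\cF^-}$ gives $\tr_B[\choi{\cE}]=c\,\frac{\id_A}{\dimension(A)}$ for some scalar $c$ (namely $c=\tr[\choi{\cE}]$), while $\tr_B[\choi{\cF^+}]+\tr_B[\choi{\cF^-}]=\frac{\id_A}{\dimension(A)}$. Write $P^{\pm}\coloneqq\dimension(A)\tr_B[\choi{\cF^{\pm}}]\loewnergeq 0$, so $P^++P^-=\id_A$ and $\kappa(P^+-P^-)=c\,\id_A$. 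Now let $\cG\in\cptp(A\rightarrow B)$ be any fixed quantum channel, e.g.\ a replacement channel $\rho\mapsto\tr[\rho]\sigma_0$ for some state $\sigma_0$, and define the ``completed'' maps
\begin{equation}
  \tilde{\cF}^{\pm}(\rho)\coloneqq\cF^{\pm}(\rho)+\cG\big(\sqrt{P^{\mp}}\,\rho\,\sqrt{P^{\mp}}\big)\,.
\end{equation}
Each $\tilde{\cF}^{\pm}$ is completely positive (a sum of CP maps), and a direct check with $\tr_B$ shows $\dimension(A)\tr_B[\choi{\tilde{\cF}^{\pm}}]=P^{\pm}+P^{\mp}=\id_A$, so $\tilde{\cF}^{\pm}\in\cptp(A\rightarrow B)$. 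The difference is $\tilde{\cF}^+-\tilde{\cF}^-=\cF^+-\cF^-+\big(\cG\circ\indsupo{\sqrt{P^-}}-\cG\circ\indsupo{\sqrt{P^+}}\big)$; multiplying by $\kappa$, the correction term is $\cG$ composed with the map $\rho\mapsto\kappa(\sqrt{P^-}\rho\sqrt{P^-}-\sqrt{P^+}\rho\sqrt{P^+})$, which I need to control.

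The main obstacle I anticipate is making the correction term vanish (or at least be harmless) so that $\cE=\kappa\tilde{\cF}^+-\kappa\tilde{\cF}^-$ exactly, which requires that $\kappa(\sqrt{P^-}\rho\sqrt{P^-}-\sqrt{P^+}\rho\sqrt{P^+})=0$ — but this need not hold when $P^+$ and $P^-$ do not commute with enough of the algebra. The cleaner route, which I would pursue, is to exploit that $\kappa(P^+-P^-)=c\,\id_A$ together with $P^++P^-=\id_A$ forces $P^{\pm}$ to be \emph{scalar multiples of the identity}: $P^{\pm}=\tfrac12(1\pm c/\kappa)\id_A$. Hence $\sqrt{P^{\pm}}$ are also scalars, the correction term becomes $\kappa\big(\tfrac12(1-c/\kappa)-\tfrac12(1+c/\kappa)\big)\cG=-c\,\cG$, and the completion must instead be arranged so that the two $-c\,\cG/2$ contributions cancel against each other; concretely, set $\tilde{\cF}^{\pm}\coloneqq\cF^{\pm}+\tfrac12(1\mp c/\kappa)\cG$, which are CPTP and satisfy $\kappa(\tilde{\cF}^+-\tilde{\cF}^-)=\kappa(\cF^+-\cF^-)+\kappa\cdot(-c/\kappa)\cG=\cE-c\,\cG$. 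This still leaves a residual $c\,\cG$, so I would finally absorb it: since $c=\tr[\choi{\cE}]$ is exactly the ``free part'' and $\kappa\geq|c|$ (because $\gamma_{\cptp^{\star}}(\cE)\geq|\tr[\choi{\cE}]|$ by the norm bound used in~\Cref{lem:robustness2}), write $\cE=\kappa\tilde{\cF}^+-\kappa\tilde{\cF}^-+c\,\cG$ and re-group into $\cE=(\kappa-|c|)\cdot(\text{convex combo})\pm|c|\cdot\cG'$, giving a CPTP QPD of $1$-norm at most $(\kappa-|c|)+|c|=\kappa$. Verifying this final regrouping keeps everything CPTP and the $1$-norm bounded by $\kappa$ — handling the sign of $c$ and the edge cases $\kappa=|c|$ or $\kappa=0$ — is the routine but slightly delicate part; once done, $\gamma_{\cptp}(\cE)\leq\kappa=\gamma_{\cptp^{\star}}(\cE)$ follows, completing the proof.
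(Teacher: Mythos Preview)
You correctly isolate the key observation: writing $P^{\pm}=\dimension(A)\tr_B[\choi{\cF^{\pm}}]$, the two constraints $P^++P^-=\id_A$ and $\kappa(P^+-P^-)=c\,\id_A$ force $P^{\pm}=\tfrac12(1\pm c/\kappa)\,\id_A$. But having reached this point you take an unnecessary detour and then stumble at the end. Once $P^{\pm}$ are scalars, the maps $\cF^{\pm}$ are \emph{already} positive multiples of CPTP maps: $\tr_{B}[\choi{\cF^{\pm}}]=\tr[\choi{\cF^{\pm}}]\cdot\tfrac{\id_A}{\dimension(A)}$, so $\cF^{\pm}/\tr[\choi{\cF^{\pm}}]\in\cptp(A\rightarrow B)$. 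No auxiliary channel $\cG$ is needed. The paper simply writes
\[
\cE \;=\; \kappa\tr[\choi{\cF^+}]\cdot\frac{\cF^+}{\tr[\choi{\cF^+}]}\;-\;\kappa\tr[\choi{\cF^-}]\cdot\frac{\cF^-}{\tr[\choi{\cF^-}]}\,,
\]
whose $1$-norm is $\kappa\big(\tr[\choi{\cF^+}]+\tr[\choi{\cF^-}]\big)=\kappa\,\tr[\choi{\cF^++\cF^-}]=\kappa$, since $\cF^++\cF^-$ is trace-preserving. That is the entire argument.

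Your completion-and-regroup plan, by contrast, has a genuine error in the final step. After obtaining $\cE=\kappa\tilde{\cF}^+-\kappa\tilde{\cF}^-+c\,\cG$ with $\tilde{\cF}^{\pm},\cG\in\cptp$, you claim this can be regrouped into a two-term QPD with coefficients $(\kappa-|c|)$ and $\pm|c|$. But any two-term CPTP decomposition $\cE=a^+\cH^+-a^-\cH^-$ must satisfy $a^+-a^-=\tr[\choi{\cE}]=c$; combined with $a^++a^-=\kappa$ this forces $a^{\pm}=(\kappa\pm c)/2$, not $(\kappa-|c|)$ and $|c|$. Your stated coefficients only satisfy the trace constraint in degenerate cases (e.g.\ $\kappa=3c$). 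Moreover, naive regroupings of the three terms $\kappa\tilde{\cF}^+,-\kappa\tilde{\cF}^-,c\,\cG$ into two CPTP terms give $1$-norms like $2\kappa\pm c$, not $\kappa$; any successful regrouping would have to unwind the definitions of $\tilde{\cF}^{\pm}$ and use again that $\cF^{\pm}$ are proportional to CPTP --- at which point you are back to the paper's direct normalization.
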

\begin{proof}
  The proof consists of showing that any QPD of $\cE$ with respect to $\cptp^{\star}$ can be reformulated as a QPD with respect to $\cptp$ without increasing the 1-norm of the coefficients.
  By~\Cref{lem:decompset_absolutely_convex,lem:one_element_qpd} we can without loss of generality assume that the QPD is of the form $\cE = \kappa\cF$ for some $\cF\in\cptp^{\star}(A\rightarrow B)$.
  By~\Cref{eq:char_cptpstar}, we can write $\cF = \cF^+-\cF^-$ for $\cF^{\pm}\in\cp(A\rightarrow B)$ such that $\cF^++\cF^-\in\cptp(A\rightarrow B)$.
  If either $\cF^+=0$ or $\cF^-=0$, then we already have a QPD into $\cptp(A\rightarrow B)$.
  Otherwise, observe that both $\cF^++\cF^-$ and $\cF^+-\cF^-$ are in $\mathbb{R}\hptp(A\rightarrow B)$.
  Clearly, this is only possible if both $\cF^+$ and $\cF^-$ are themselves in $\mathbb{R}\hptp(A\rightarrow B)$.
  The following is therefore a valid QPD of $\cE$ w.r.t. $\cptp(A\rightarrow B)$
  \begin{equation}
    \cE = \kappa\tr[\choi{\cF^+}] \frac{\cF^+}{\tr[\choi{\cF^+}]} - \kappa\tr[\choi{\cF^-}] \frac{\cF^-}{\tr[\choi{\cF^-}]}
  \end{equation}
  and the 1-norm of its coefficients is given by $\kappa(\tr[\choi{\cF^+}] + \tr[\choi{\cF^-}]) = \kappa$.
\end{proof}

\section{Characterization of quasiprobability extent}
In this section, we will derive a simple characterization of $\gamma_{\cptp^{\star}}(\cE)$ for Hermitian-preserving operations $\cE$.
Recall that since $\cptp$ and $\cptp^{\star}$ are convex and absolutely convex (see~\Cref{lem:decompset_absolutely_convex}), we can restrict ourselves to QPDs with one and two elements respectively by~\Cref{lem:two_element_qpd,lem:one_element_qpd}.
Together with the characterization of $\cptp^{\star}$ in~\Cref{eq:char_cptpstar}, we can write the quasiprobability extent of some $\cE\in\mathbb{R}\hptp(A\rightarrow B)$ as
\begin{align}
  \gamma_{\cptp}(\cE) = \left \lbrace
  \begin{array}{r l}
    \min\limits_{a^{\pm}\geq 0, \Lambda^{\pm}\in\pos(AB)} & a^+ + a^- \\
    \textnormal{s.t.} & \choi{\cE} = a^+\Lambda^+ - a^-\Lambda^- \\
    & \tr_B\Lambda^{\pm} = \frac{1}{\dimension(A)}\id_A
  \end{array} \right . 
\end{align}
and similarly for $\cE\in\hp(A\rightarrow B)$ we have
\begin{align}
  \gamma_{\cptp^{\star}}(\cE) = \left \lbrace
  \begin{array}{r l}
    \min\limits_{\kappa\geq 0, \Lambda^{\pm}\in\pos(AB)} & \kappa \\
    \textnormal{s.t.} & \choi{\cE} = \kappa(\Lambda^+ - \Lambda^-) \\
    & \tr_B[\Lambda^+ + \Lambda^-] = \frac{1}{\dimension(A)}\id_A
  \end{array} \right . \, .
\end{align}
Note that we used a minimum instead of an infimum since both $\cptp$ and $\cptp^{\star}$ are closed (recall~\Cref{lem:compact_ds}).
Using a simple variable substitution $a^{\pm}\Lambda^{\pm}\rightarrow \Lambda^{\pm}$, respectively $\kappa\Lambda^{\pm}\rightarrow \Lambda^{\pm}$, the constraints in the above two optimization problems can be recast into the form of linear matrix equalities 
\begin{align}\label{eq:gamma_cptp_sdp}
  \gamma_{\cptp}(\cE) = \left \lbrace
  \begin{array}{r l}
    \min\limits_{a^{\pm}\geq 0, \Lambda^{\pm}\in\pos(AB)} & a^+ + a^- \\
    \textnormal{s.t.} & \choi{\cE} = \Lambda^+ - \Lambda^- \\
    & \tr_B\Lambda^{\pm} = \frac{a^{\pm}}{\dimension(A)}\id_A
  \end{array} \right .
\end{align}
\begin{align}\label{eq:gamma_cptpstar_sdp}
  \gamma_{\cptp^{\star}}(\cE) = \left \lbrace
  \begin{array}{r l}
    \min\limits_{\kappa\geq 0, \Lambda^{\pm}\in\pos(AB)} & \kappa \\
    \textnormal{s.t.} & \choi{\cE} = \Lambda^+ - \Lambda^- \\
    & \tr_B[\Lambda^+ + \Lambda^-] = \frac{\kappa}{\dimension(A)}\id_A
  \end{array} \right . \, .
\end{align}
The resulting optimization problems are now semidefinite programs, which will prove to be very useful.
On one hand, it allows for the efficient numerical evaluation of the quasiprobability extent, at least for small Hilbert spaces.
On the other hand, it allows us to prove various properties of the optimal solution analytically.
We will usually not have the same luxury for other resource theories.

Is will be useful to determine the dual forms to the above SDPs.
\begin{lemma}{}{gamma_cptp_sdp_dual}
  The following SDPs are valid dual SDPs to~\Cref{eq:gamma_cptp_sdp,eq:gamma_cptpstar_sdp}:
  \begin{equation}
  \gamma_{\cptp}(\cE) = \left \lbrace
  \begin{array}{r l}
    \max\limits_{X\in\herm(AB),Y^{\pm}\in\herm(A)} & \tr[\choi{\cE}X] \\
    \textnormal{s.t.} & -Y^-\otimes\id_B\loewnerleq X\loewnerleq Y^+\otimes\id_B \\
	& \tr[Y^{\pm}] = \dimension(A)
  \end{array} \right.
  \end{equation}
  \begin{equation}
  \gamma_{\cptp^\star}(\cE) = \left \lbrace
  \begin{array}{r l}
    \max\limits_{X\in\herm(AB),Y\in\herm(A)} & \tr[\choi{\cE}X] \\
    \textnormal{s.t.} & -Y\otimes\id_B\loewnerleq X\loewnerleq Y\otimes\id_B \\
	& \tr[Y]= \dimension(A)
  \end{array} \right. \, .
  \end{equation}
\end{lemma}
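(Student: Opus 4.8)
The plan is to derive the two dual SDPs directly from the primal forms in \Cref{eq:gamma_cptp_sdp,eq:gamma_cptpstar_sdp} by the standard Lagrangian-duality machinery for semidefinite programs, and then verify strong duality so that equality (rather than just weak duality) holds. I will present only the derivation for $\gamma_{\cptp^\star}$ in detail; the case of $\gamma_{\cptp}$ is entirely analogous (one simply keeps the two trace constraints $\tr_B\Lambda^\pm = \tfrac{a^\pm}{\dimension(A)}\id_A$ separate, which produces two dual variables $Y^\pm\in\herm(A)$ instead of the single $Y$).

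First I would write the primal \Cref{eq:gamma_cptpstar_sdp} in conic form, with primal variables $\kappa\ge 0$ and $\Lambda^\pm\loewnergeq 0$, equality constraints $\Lambda^+-\Lambda^-=\choi{\cE}$ and $\tr_B[\Lambda^++\Lambda^-]=\tfrac{\kappa}{\dimension(A)}\id_A$, and objective $\kappa$. I would introduce a Hermitian Lagrange multiplier $X\in\herm(AB)$ for the first equality and a Hermitian multiplier $Y\in\herm(A)$ for the second, forming the Lagrangian
\begin{equation}
  L = \kappa - \tr\!\big[X(\Lambda^+-\Lambda^--\choi{\cE})\big] - \tr\!\Big[Y\big(\tfrac{\kappa}{\dimension(A)}\id_A - \tr_B[\Lambda^++\Lambda^-]\big)\Big] \, .
\end{equation}
Collecting terms gives $L = \tr[\choi{\cE}X] + \kappa\big(1 - \tfrac{\tr[Y]}{\dimension(A)}\big) - \tr\!\big[\Lambda^+(X - Y\otimes\id_B)\big] + \tr\!\big[\Lambda^-(X + Y\otimes\id_B)\big]$. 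Minimising over $\kappa\ge0$ forces the dual feasibility condition $\tr[Y]=\dimension(A)$ (otherwise the infimum is $-\infty$); minimising over $\Lambda^+\loewnergeq0$ forces $X - Y\otimes\id_B\loewnerleq 0$, i.e.\ $X\loewnerleq Y\otimes\id_B$; minimising over $\Lambda^-\loewnergeq0$ forces $X + Y\otimes\id_B\loewnergeq 0$, i.e.\ $-Y\otimes\id_B\loewnerleq X$. With all three conditions met the remaining terms in $\Lambda^\pm$ and $\kappa$ vanish, leaving $\tr[\choi{\cE}X]$, which is exactly the claimed dual objective. This establishes weak duality $\gamma_{\cptp^\star}(\cE)\ge \max\{\tr[\choi{\cE}X]: -Y\otimes\id_B\loewnerleq X\loewnerleq Y\otimes\id_B,\ \tr[Y]=\dimension(A)\}$, and the same bookkeeping with $Y$ split into $Y^\pm$ handles $\gamma_{\cptp}$.

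The remaining step — and the only place where any genuine care is needed — is to upgrade weak duality to equality, i.e.\ to verify Slater's condition for strong duality. For $\gamma_{\cptp^\star}$ I would exhibit a strictly feasible primal point: taking $\Lambda^+ := \tfrac{t}{\dimension(A)\dimension(B)}\id_{AB} + \Lambda^+_0$ and $\Lambda^- := \tfrac{t}{\dimension(A)\dimension(B)}\id_{AB} + \Lambda^-_0$ for a large enough scalar $t>0$, where $\Lambda^\pm_0\loewnergeq0$ is any fixed pair with $\Lambda^+_0-\Lambda^-_0=\choi{\cE}$ (e.g.\ from the Jordan decomposition of $\choi{\cE}$), both $\Lambda^\pm$ are strictly positive definite, the equality $\Lambda^+-\Lambda^-=\choi{\cE}$ holds, and one sets $\kappa := \dimension(A)\cdot\sigma_{\max}\big(\tr_B[\Lambda^++\Lambda^-]\big)$ — except that we also need $\tr_B[\Lambda^++\Lambda^-]$ to be a scalar multiple of $\id_A$; to arrange this I would instead symmetrise, replacing $\Lambda^+$ by its average over a unitary $1$-design acting on the $A$ factor (which preserves positivity and the value $\Lambda^+-\Lambda^-$ only if the symmetrisation is applied to $\choi{\cE}$ too), so the cleanest route is actually to note that $\cptp^\star$ is compact and the feasible set of \Cref{eq:gamma_cptpstar_sdp} has relative interior, then invoke the general SDP strong-duality theorem. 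The main obstacle is thus purely this technical verification that Slater's condition (primal or dual strict feasibility) holds for every $\cE$; on the dual side it is in fact immediate, since $X=0$, $Y=\id_A$ gives $-\id_A\otimes\id_B\loewnerleq 0\loewnerleq \id_A\otimes\id_B$ with strict inequalities and $\tr[Y]=\dimension(A)$, so the dual is strictly feasible and strong duality follows, yielding the stated equalities. The $\gamma_{\cptp}$ case is identical with $Y^+=Y^-=\id_A$.
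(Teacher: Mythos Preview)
Your approach—standard Lagrangian duality for the primal SDPs—is exactly what the paper does, and your addition of the Slater verification via $X=0$, $Y=\id_A$ (and $Y^\pm=\id_A$) is a useful completion that the paper's proof omits. One small correction: minimising the Lagrangian over $\kappa\ge 0$ yields only $\tr[Y]\le\dimension(A)$ (if $\tr[Y]<\dimension(A)$ the infimum is $0$ at $\kappa=0$, not $-\infty$); equality may then be assumed without loss of generality since enlarging $Y$ only relaxes the Loewner constraints on $X$, which is precisely how the paper handles it after its final variable substitution.
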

\begin{proof}
  In the following, we will denote the unnormalized Hilbert-Schmidt inner product by $\langle A,B\rangle\coloneqq\tr[A^{\dagger}B]$.
  We start with the SDP in~\Cref{eq:gamma_cptp_sdp}.
  The Lagrangian in this case is given by
  \begin{align}
    L(a^{\pm},\Lambda^{\pm};X,Y^{\pm})
    &= (a^++a^-) - \langle X, \choi{\cE}-\Lambda^++\Lambda^-\rangle \nonumber\\
    &- \langle Y^+,\tr_B[\Lambda^+]-\frac{a^+}{\dimension(A)}\id_A\rangle
    - \langle Y^-,\tr_B[\Lambda^-]-\frac{a^-}{\dimension(A)}\id_A\rangle \, .
  \end{align}
  Using the relation $\langle R_{A},\tr_B[S_{AB}]\rangle = \langle R_{A}\otimes\mathds{1}_B,S_{AB}\rangle$, we can rewrite this as
  \begin{align}
    L(a^{\pm},\Lambda^{\pm};X,Y^{\pm})
    &= -\langle X,\choi{\cE}\rangle
    + a^+(1 + \frac{1}{\dimension(A)}\tr[Y^+])
    + a^-(1 + \frac{1}{\dimension(A)}\tr[Y^-]) \nonumber\\
    &+ \langle \Lambda^+, X-Y^+\otimes\id_B \rangle
    + \langle \Lambda^-, -X-Y^-\otimes\id_B \rangle \, .
  \end{align}
  The dual function $g(X,Y^{\pm})\coloneqq\inf_{a^{\pm},\Lambda^{\pm}}L(a^{\pm},\Lambda^{\pm};X,Y^{\pm})$ can clearly only be greater than $-\infty$ if $\pm X\loewnergeq Y^{\pm}\otimes\id_B$ and $\tr[Y^{\pm}]\geq -\dimension(A)$.
  We thus get the dual SDP
  \begin{equation}
  \gamma_{\cptp}(\cE) = \left \lbrace
  \begin{array}{r l}
    \max\limits_{X\in\herm(AB),Y^{\pm}\in\herm(A)} & -\langle X,\choi{\cE}\rangle \\
    \textnormal{s.t.} & Y^+\otimes\id_B\loewnerleq X\loewnerleq -Y^-\otimes\id_B \\
	& \tr[Y^{\pm}]\geq -\dimension(A)
  \end{array} \right.
  \end{equation}
  The desired form of this SDP can be obtained by substituting $X,Y^+,Y^-$ by $-X,-Y^+,-Y^-$ and by realizing that the inequality in the trace can without loss of generality be assumed to be achieved.

  We now turn to the second SDP in~\Cref{eq:gamma_cptp_sdp}.
  The Lagrangian here is given by
  \begin{align}
    L(\kappa,\Lambda^{\pm};X,Y)
    &= \kappa - \langle X, \choi{\cE}-\Lambda^++\Lambda^-\rangle \nonumber\\
    &\quad- \langle Y,\tr_B[\Lambda^++\Lambda^-]-\kappa\frac{1}{\dimension(A)}\id_A\rangle \\
    &= - \langle X,\choi{\cE}\rangle
    + \kappa(1 + \frac{1}{\dimension(A)}\tr[Y]) \nonumber\\
    &\quad+ \langle \Lambda^+, X - Y\otimes\id_B \rangle 
    + \langle \Lambda^-, -X - Y\otimes\id_B \rangle \, .
  \end{align}
  By the same argument as before, we get a dual SDP
  \begin{equation}
  \gamma_{\cptp^\star}(\cE) = \left \lbrace
  \begin{array}{r l}
    \max\limits_{X\in\herm(AB),Y\in\herm(A)} & -\langle X,\choi{\cE}\rangle \\
    \textnormal{s.t.} & -Y\otimes\id_B\loewnerleq X\loewnerleq Y\otimes\id_B \\
	& \tr[Y]\geq -\dimension(A)
  \end{array} \right.
  \end{equation}
  Again, with a suitable substitution and realizing the inequality $\tr[Y]\leq\dimension(A)$ is achieved, we get the desired result.
\end{proof}

In fact, it turns out that the above semidefinite program for $\gamma_{\cptp^{\star}}$ is equivalent to the well-known diamond norm.
\begin{theorem}{\cite[Theorem 3]{zhao2023_power}}{gamma_diamond_norm}
  For any $\cE\in\hp(A\rightarrow B)$ one has $\gamma_{\cptp^{\star}}(\cE)=\dnorm{\cE}$.
\end{theorem}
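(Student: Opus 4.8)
The plan is to prove the two inequalities $\dnorm{\cE}\le\gamma_{\cptp^{\star}}(\cE)$ and $\gamma_{\cptp^{\star}}(\cE)\le\dnorm{\cE}$ separately. (Both sides are finite: $\dnorm{\cdot}$ is always finite and, by \Cref{lem:cptp_span}, $\spn_{\mathbb{R}}(\cptp^{\star}(A\rightarrow B))=\hp(A\rightarrow B)$, so $\gamma_{\cptp^{\star}}(\cE)<\infty$ as well.) For the first inequality I would exploit the quasiprobabilistic structure directly. Since $\cptp^{\star}(A\rightarrow B)$ is absolutely convex, \Cref{lem:one_element_qpd} lets me pick, for any $\kappa>\gamma_{\cptp^{\star}}(\cE)$, a map $\cF\in\cptp^{\star}(A\rightarrow B)$ with $\cE=\kappa\cF$; by \Cref{eq:char_cptpstar} we may write $\cF=\cF^{+}-\cF^{-}$ with $\cF^{\pm}\in\cp(A\rightarrow B)$ and $\cF^{+}+\cF^{-}\in\cptp(A\rightarrow B)$. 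Then for any reference system $R$ and any $\rho\in\dops(RA)$ we have $(\idchan_R\otimes\cF)(\rho)=\sigma^{+}-\sigma^{-}$ with $\sigma^{\pm}\coloneqq(\idchan_R\otimes\cF^{\pm})(\rho)\loewnergeq 0$, hence $\norm{(\idchan_R\otimes\cF)(\rho)}_1\le\tr[\sigma^{+}]+\tr[\sigma^{-}]=\tr[(\idchan_R\otimes(\cF^{+}+\cF^{-}))(\rho)]=1$. Therefore $\dnorm{\cF}\le 1$ and $\dnorm{\cE}=\kappa\,\dnorm{\cF}\le\kappa$; letting $\kappa\downarrow\gamma_{\cptp^{\star}}(\cE)$ gives $\dnorm{\cE}\le\gamma_{\cptp^{\star}}(\cE)$.

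For the reverse inequality I would pass to the dual SDP of \Cref{eq:gamma_cptpstar_sdp} established in \Cref{lem:gamma_cptp_sdp_dual}, namely that $\gamma_{\cptp^{\star}}(\cE)$ equals the maximum of $\tr[\choi{\cE}X]$ over Hermitian $X,Y$ with $-Y\otimes\id_B\loewnerleq X\loewnerleq Y\otimes\id_B$ and $\tr[Y]=\dimension(A)$ (the Löwner constraints already force $Y\loewnergeq 0$), and recognize this as a variational formula for the diamond norm. Recall that for a Hermitian-preserving $\cE$ one has $\dnorm{\cE}=\sup\{\norm{(\idchan_R\otimes\cE)(\proj{\psi})}_1\}$ over normalized pure states $\ket{\psi}_{RA}$ with $\dimension(R)=\dimension(A)$, and that $\norm{H}_1=\max\{\tr[HW]\mid W=W^{\dagger},\ \opnorm{W}\le 1\}$ for Hermitian $H$. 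Writing $\ket{\psi}=(N\otimes\id_A)\ket{\Psi}_{RA}$ for an operator $N$ on $R$ with $\tr[N^{\dagger}N]=\dimension(A)$ — every such pure state arises this way, in the paper's Choi normalization — gives $(\idchan_R\otimes\cE)(\proj{\psi})=(N\otimes\id_B)\choi{\cE}(N^{\dagger}\otimes\id_B)$, so
\[
  \dnorm{\cE}=\sup_{N,W}\tr\bigl[\choi{\cE}\,(N^{\dagger}\otimes\id_B)\,W\,(N\otimes\id_B)\bigr] .
\]
Now I would substitute $X\coloneqq(N^{\dagger}\otimes\id_B)W(N\otimes\id_B)$ and $Y\coloneqq N^{\dagger}N$: then $\tr[Y]=\dimension(A)$, conjugating $-\id\loewnerleq W\loewnerleq\id$ yields $-Y\otimes\id_B\loewnerleq X\loewnerleq Y\otimes\id_B$, and the objective becomes $\tr[\choi{\cE}X]$. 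Conversely, the parametrization is essentially onto: given a dual-feasible $(X,Y)$ with $Y\loewnergeq 0$ strictly positive, the choice $N=Y^{1/2}$, $W=(Y^{-1/2}\otimes\id_B)X(Y^{-1/2}\otimes\id_B)$ reproduces it; a feasible $(X,Y)$ with singular $Y$ is approached by $\bigl((1+\epsilon)^{-1}X,\ (1+\epsilon)^{-1}(Y+\epsilon\id_A)\bigr)$ as $\epsilon\downarrow 0$, which is strictly positive and feasible with objective converging to $\tr[\choi{\cE}X]$. Since the dual feasible set is compact, its maximum is attained and approximable by such points, hence $\gamma_{\cptp^{\star}}(\cE)\le\dnorm{\cE}$, and combined with the first part we get equality.

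I expect the second paragraph to be where the real care is needed: keeping the Choi-operator normalizations straight in the paper's convention (where $\choi{\cE}=(\idchan\otimes\cE)(\proj{\Psi})$ with $\ket{\Psi}$ carrying the factor $\dimension(A)^{-1/2}$), verifying that $\ket{\psi}\mapsto N$ is a genuine bijection onto the pure states with $\tr[N^{\dagger}N]=\dimension(A)$, and disposing of the singular-$Y$ case by the perturbation argument. The first inequality, by contrast, is a short computation once \Cref{eq:char_cptpstar} is invoked. One may alternatively shortcut the second paragraph by simply observing that the displayed dual SDP is, up to the rescaling $X\mapsto X/\dimension(A)$, Watrous's semidefinite program for the diamond norm of a Hermitian-preserving map (\cite{watrous2009_sdp,watrous2013_simplersdp}), which gives the identity at once; the purification computation above is then just a self-contained derivation of that match.
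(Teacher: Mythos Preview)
Your proof is correct and takes essentially the same route as the paper: both pass to the dual SDP of \Cref{lem:gamma_cptp_sdp_dual} and identify it with the diamond norm via the substitution $N=Y^{1/2}$, $W=(Y^{-1/2}\otimes\id_B)X(Y^{-1/2}\otimes\id_B)$, handling singular $Y$ by a density/perturbation argument. The only difference is that the paper runs the substitution as a single chain of equalities giving both directions at once, whereas you supply a separate direct argument for $\dnorm{\cE}\le\gamma_{\cptp^{\star}}(\cE)$ from \Cref{eq:char_cptpstar}; this extra paragraph is not needed but is a pleasant self-contained observation.
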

In turn, this also means that $\gamma_{\cptp}(\cE)=\dnorm{\cE}$ for any $\cE\in\mathbb{R}\hptp(A\rightarrow B)$ by~\Cref{prop:utility_cptp_interm_mmts}.
The standard operational interpretation of the diamond norm is typically stated in terms of the single-shot distinguishability of two quantum channels.
Our result can be considered an alternative and new operational interpretation: The diamond norm measures the non-physicalness of a certain operation by quantifying the cost of simulating it using physical operations.

When specialized to states, \Cref{thm:gamma_diamond_norm} implies 
\begin{equation}
  \forall \omega\in\herm(A) : \gamma_{\dops^{\star}(A)}(\omega) = \norm{\omega}_1 \, .
\end{equation}
This is perhaps not too surprising: By~\Cref{lem:negativity_state_qpd} we know that $\gamma_{\dops^{\star}}=\gamma_{\dops}$ and clearly the optimal QPD w.r.t. $\dops(A)$ of $\omega$ is given by separating its positive and negative eigenspace projectors.

Remarkably,~\Cref{thm:gamma_diamond_norm} thus implies that the quasiprobability extent of a superoperator $\gamma_{\cptp^{\star}}(\cE)$ is equal to the largest extent $\gamma_{\dops}(\cE(\rho))$ optimized over all Hermitian operators $\rho$ (on a possibly extended Hilbert space) with quasiprobability extent $\gamma_{\dops}(\rho)\leq 1$.

\begin{proof}
  This proof essentially follows~\cite[Lemma 1]{regula2021_operational} which could also be derived from~\cite[Lemma 4, Theorem 2]{jencova2014_basenorms}.
  We start with the dual formulation of $\gamma_{\cptp^{\star}}$ from~\Cref{lem:gamma_cptp_sdp_dual}.
  In the following, we will denote the unnormalized Hilbert-Schmidt inner product by $\langle A,B\rangle\coloneqq\tr[A^{\dagger}B]$.
  \begin{align}
    \gamma_{\cptp^{\star}}(\cE)
    &= \max \{ \langle \choi{\cE},X \rangle | -Y\otimes\id_B\loewnerleq X \loewnerleq Y\otimes\id_B, \tr[Y]=\dimension(A) \} \\
    &= \max \{ \langle \choi{\cE},X \rangle | -Y\otimes\id_B\loewnerleq \frac{X}{\dimension(A)} \loewnerleq Y\otimes\id_B, Y\in\dops(A) \} \\
    &= \sup \{ \langle \choi{\cE},X \rangle | -Y\otimes\id_B\loewnerleq \frac{X}{\dimension(A)} \loewnerleq Y\otimes\id_B, Y\in\dops_{>0}(A) \} \\
    &= \sup \{ \langle \sqrt{Y\otimes\id_B}\choi{\cE}\sqrt{Y\otimes\id_B},X \rangle | \id_{AB}\loewnerleq \frac{X}{\dimension(A)} \loewnerleq \id_{AB}, Y\in\dops_{>0}(A) \}
  \end{align}
  where $\dops_{>0}(A)$ denotes the set of full-rank density operators and the last equality used the variable substitution $X\mapsto \sqrt{Y^{-1}\otimes\id_B}X\sqrt{Y^{-1}\otimes\id_B}$.
  The trace norm of a Hermitian matrix $M$ can be expressed as
  \begin{equation}
    \norm{M}_1 = \max\limits_{\opnorm{N}\leq 1} \tr[MN] \, .
  \end{equation}
  Thus, we get
  \begin{align}
    \gamma_{\cptp^{\star}}(\cE)
    &= \sup\limits_{Y\in\dops_{>0}(A)} \dimension(A) \norm{\sqrt{Y\otimes\id_B}\choi{\cE}\sqrt{Y\otimes\id_B}}_1 \\
    &= \max\limits_{Y\in\dops(A)} \dimension(A) \norm{\sqrt{Y\otimes\id_B}\choi{\cE}\sqrt{Y\otimes\id_B}}_1 \\
    &= \max\limits_{Y\in\dops(A)} \norm{(\idchan_A\otimes\cE)\left[ \sqrt{Y}\otimes\id_{A'}\proj{\Psi}\sqrt{Y}\otimes\id_{A'} \right] }_1 \\
    &= \max\limits_{Y\in\dops(A)} \norm{(\idchan_A\otimes\cE)\left[ \id_A\otimes\sqrt{Y}\proj{\Psi}\id_A\otimes\sqrt{Y} \right] }_1 \\
    &= \max\limits_{\phi} \norm{(\idchan_A\otimes\cE)(\phi) }_1 \\
    &= \max\limits_{\rho\in\dops(AA')} \norm{(\idchan_A\otimes\cE)(\rho) }_1 \\
    &= \dnorm{\cE}
  \end{align}
  where we used $(\id_A\otimes Z_{A'})\ket{\Psi} = (Z_A^T\otimes\id_{A'})\ket{\Psi}$ for the unnormalized maximally entangled state $\ket{\Psi}_{AA'}=\sum_i\ket{i}_{A}\otimes\ket{i}_{A'}$ and the fact that any pure state can be seen as the canonical purification $(\id\otimes\sqrt{\rho})\ket{\Psi}$ of some mixed state $\rho$.
\end{proof}
We summarize the results from this section in~\Cref{fig:nonphysical_set_diagram}. 
Note that the diamond norm is multiplicative under the tensor product $\dnorm{\cE\otimes\cF}=\dnorm{\cE}\dnorm{\cF}$, see for example \reference~\cite[Theorem 3.49]{watrous2018_theory}.
As such, the regularized quasiprobability extent remains unchanged $\gammareg_{\cptp^{\star}}=\gamma_{\cptp^{\star}}$.

\begin{figure}
  \centering
  \includegraphics{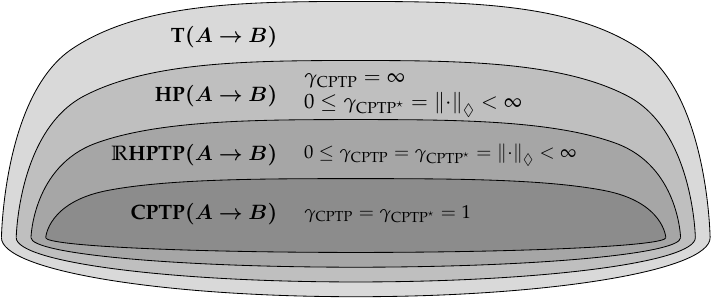}
  \caption{Set diagram representing the relations between the sets of linear superoperators, Hermitian-preserving maps, maps proportional to a trace-preserving map and quantum channels. For each shaded region, the possible ranges of values taken by the quasiprobability extents $\gamma_{\cptp}$ and $\gamma_{\cptp^\star}$ are depicted.}
  \label{fig:nonphysical_set_diagram}
\end{figure}

\section{Applications}\label{sec:nonphysical_applications}
In this section, we investigate a few practical applications of QPS of non-physical operations.
Before we get to the applications themselves, we briefly introduce a technical result on the quasiprobability extent of superoperators with diagonal Pauli transfer matrices.
In the following, we will denote by $\{Q_1,\dots,Q_{4^n}\}$ the Pauli strings $\{\id,X,Y,Z\}^{\otimes n}$ endowed with some arbitrary order.
Furthermore, $\symplip{Q_{\alpha}}{Q_{\beta}}$ denotes the symplectic inner product which is $=0$ if $Q_{\alpha}$ and $Q_{\beta}$ commute and $=1$ if they anti-commute.
\begin{definition}{}{}
  The $n$-qubit symplectic Walsh-Hadamard transformation is the linear map $\walshhad_n:\mathbb{C}^{4^n}\rightarrow\mathbb{C}^{4^n}$ defined by
  \begin{equation}
    \left(\walshhad_n(x)\right)_i \coloneqq \sum_{j=1}^{4^n} (-1)^{\symplip{Q_i}{Q_j}} x_j \, .
  \end{equation}
\end{definition}
By definition, the Walsh-Hadamard transform is a linear and symmetric operator.
Furthermore, it is invertible, and its inverse is proportional to itself $\walshhad_n^2 = 4^n\id$ which can be readily verified from its definition
\begin{align}
  \left( \walshhad_n^2(x) \right)_i 
  &= \sum_{j,k} (-1)^{\symplip{Q_i}{Q_j}}(-1)^{\symplip{Q_j}{Q_k}} x_k \\
  &= \sum_{j,k} (-1)^{\symplip{Q_iQ_k}{Q_j}} x_k \\
  &= \sum_{k} 4^n\delta_{i,k} x_k \\
  &= 4^nx_i
\end{align}
where we used that any Pauli string (except the identity) precisely commutes with half of all Pauli strings.
We refer readers interested in an exposition with more mathematical details to \reference~\cite{flammia2020_efficient}.
\begin{proposition}{}{gamma_cptp_paulisupo}
  Let $\cE\in\hp(A)$ be an $n$-qubit superoperator with diagonal Pauli transfer matrix representation $M=\diag(m_1,\dots,m_{4^n})$ for some $m\in\mathbb{R}^{4^n}$.
  Then
  \begin{equation}
    \gamma_{\cptp^{\star}}(\cE) = \gamma_{\mathcal{P}_n}(\cE) = \frac{1}{4^n}\norm{\walshhad_n(m)}_1
  \end{equation}
  where $\mathcal{P}_n$ denotes the $n$-qubit Pauli channels.
\end{proposition}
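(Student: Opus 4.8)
The plan is to exploit the fact that a superoperator with diagonal Pauli transfer matrix $M = \diag(m_1,\dots,m_{4^n})$ can be written as the linear combination $\cE = \sum_i m_i \,\cP_i$, where $\cP_i$ is the projection onto the $i$-th Pauli string, i.e. $\cP_i(X) = \langle Q_i, X\rangle\, Q_i$. The key observation is that Pauli channels are exactly the channels that are diagonal in the Pauli transfer basis, so $\gamma_{\mathcal{P}_n}(\cE)$ is a purely classical $\ell^1$-minimization over the cone of Pauli channels. I would first establish the identity $\gamma_{\cptp^\star}(\cE) = \gamma_{\mathcal{P}_n}(\cE)$ by two inequalities: the inequality $\gamma_{\cptp^\star}(\cE) \leq \gamma_{\mathcal{P}_n}(\cE)$ is immediate from \Cref{lem:gamma_ds_bound} since $\mathcal{P}_n \subset \cptp \subset \cptp^\star$; for the reverse inequality I would either (a) twirl an optimal $\cptp^\star$-QPD over the Pauli group --- Pauli twirling projects any superoperator onto the diagonal part of its Pauli transfer matrix, maps CPTP maps to Pauli channels, and does not increase the $1$-norm of a QPD --- or (b) directly verify that the dual SDP for $\gamma_{\cptp^\star}$ from \Cref{lem:gamma_cptp_sdp_dual}, when evaluated at a $\cE$ with diagonal Pauli transfer matrix, admits an optimizer $X, Y$ that is itself diagonal in the Pauli basis, reducing it to the Pauli-channel problem. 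Approach (a) via twirling is cleaner and I would use it.

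The remaining task is to show $\gamma_{\mathcal{P}_n}(\cE) = \tfrac{1}{4^n}\norm{\walshhad_n(m)}_1$. An $n$-qubit Pauli channel has the form $\cF = \sum_{k=1}^{4^n} q_k\, \indsupo{Q_k}$ with $q_k \geq 0$ and $\sum_k q_k = 1$; its Pauli transfer matrix is diagonal with entries $(\walshhad_n(q))_i \cdot (\text{appropriate normalization})$, since $\indsupo{Q_k}$ acts on the Pauli string $Q_j$ by the sign $(-1)^{\symplip{Q_k}{Q_j}}$. More precisely, $\cF$ has diagonal entry $m_j = \sum_k (-1)^{\symplip{Q_j}{Q_k}} q_k = (\walshhad_n(q))_j$. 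Extending by linearity to quasiprobabilistic mixtures $\cE = \sum_k a_k\, \indsupo{Q_k}$ (the extremal Pauli channels being exactly the $\indsupo{Q_k}$), I would use \Cref{lem:two_element_qpd}: the decomposition set of Pauli channels is convex (it is the simplex on $\{\indsupo{Q_k}\}$), so an optimal QPD is an affine combination $\cE = \sum_k a_k\, \indsupo{Q_k}$ with $\sum_k a_k = 1$ (normalization, since $\cE$ is trace-preserving when $m_1 = 1$, and the general case follows by homogeneity \Cref{lem:gamma_homogeneous}) and $\gamma_{\mathcal{P}_n}(\cE) = \min \norm{a}_1$. Now $m = \walshhad_n(a)$, and since $\walshhad_n$ is a bijection with $\walshhad_n^{-1} = 4^{-n}\walshhad_n$ (as verified in the excerpt), the coefficient vector $a$ is \emph{uniquely determined}: $a = 4^{-n}\walshhad_n(m)$. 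Hence there is a single feasible $a$ and $\gamma_{\mathcal{P}_n}(\cE) = \norm{a}_1 = \tfrac{1}{4^n}\norm{\walshhad_n(m)}_1$.

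\textbf{Main obstacle.} The one genuinely delicate point is justifying that the optimal $\cptp^\star$-QPD can be assumed Pauli-diagonal, i.e. that passing from $\gamma_{\cptp^\star}$ to $\gamma_{\mathcal{P}_n}$ costs nothing. The cleanest argument: given any QPD $\cE = \sum_i a_i \cF_i$ with $\cF_i \in \cptp^\star(A)$, apply the Pauli-twirl superchannel $\mathcal{T}(\cdot) \coloneqq 4^{-n}\sum_{k} \indsupo{Q_k}\circ (\cdot)\circ \indsupo{Q_k}$ to both sides; one checks $\mathcal{T}(\cE) = \cE$ because $\cE$'s Pauli transfer matrix is already diagonal, $\mathcal{T}(\cF_i) \in \cptp^\star(A)$ with diagonal Pauli transfer matrix (so it is a twisted Pauli channel, and by the characterization of $\cptp^\star$ and convexity it lies in the absolute convex hull of $\{\pm\indsupo{Q_k}\}$), and $\mathcal{T}$ does not increase the $1$-norm. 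This yields a QPD of $\cE$ in the Pauli basis with $1$-norm at most $\norm{a}_1$. I should double-check that a twisted Pauli channel (completely positive part minus completely positive part, summing to a Pauli channel) with diagonal Pauli transfer matrix indeed decomposes into $\{\pm\indsupo{Q_k}\}$ with total weight $1$ --- this is where the $\cptp^\star$ characterization in \Cref{eq:char_cptpstar} and the fact that extreme points of the trace-non-increasing Pauli-diagonal cone are the $\pm\indsupo{Q_k}$ must be invoked carefully --- but once that is in place the $\ell^1$-computation via the Walsh--Hadamard inversion is routine.
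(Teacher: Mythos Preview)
Your proof is correct, but it takes a genuinely different route from the paper's.

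The paper proceeds by invoking \Cref{thm:gamma_diamond_norm} (the identity $\gamma_{\cptp^\star}(\cE)=\dnorm{\cE}$) and then computes the diamond norm of $\cE=\sum_k r_k\,\indsupo{Q_k}$ directly: the triangle inequality gives $\dnorm{\cE}\leq\norm{r}_1$, and plugging in the maximally entangled state gives the matching lower bound because $\{(Q_k\otimes\id)\ket{\Psi}\}_k$ is an orthonormal basis. The Walsh--Hadamard inversion then identifies $r=4^{-n}\walshhad_n(m)$, and the equality $\gamma_{\mathcal{P}_n}(\cE)=\norm{r}_1$ follows as you observe since the $\indsupo{Q_k}$ are linearly independent.

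Your twirling argument bypasses the diamond norm entirely: you reduce an arbitrary $\cptp^\star$-QPD to one supported on $\{\pm\indsupo{Q_k}\}$ by applying the Pauli twirl, which fixes $\cE$ and maps $\cptp^\star$ into itself. This is more self-contained --- it does not require \Cref{thm:gamma_diamond_norm}, which in the paper's development rests on a nontrivial SDP duality calculation. Your hesitation about the ``main obstacle'' is unwarranted: once you write $\cF=\cF^+-\cF^-$ with $(\cF^+,\cF^-)$ a quantum instrument, the twirled maps $\mathcal{T}(\cF^\pm)$ are CP with diagonal Pauli transfer matrix, hence of the form $\sum_k q_k^\pm\,\indsupo{Q_k}$ with $q_k^\pm\geq 0$ (their Choi matrices are diagonal in the Bell basis and PSD), and $\sum_k(q_k^++q_k^-)=1$ since $\cF^++\cF^-$ is trace-preserving. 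So $\cE=\kappa\sum_k(q_k^+-q_k^-)\,\indsupo{Q_k}$ with $1$-norm at most $\kappa$, and you are done. The paper's approach is shorter given that \Cref{thm:gamma_diamond_norm} is already available; yours would work equally well in a treatment that had not yet established that theorem.
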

\begin{proof}
  Consider an $n$-qubit superoperator of the form
  \begin{equation}\label{eq:diagonal_chi_matrix}
    \cF(\rho) \coloneqq \sum_i r_i Q_i\rho Q_i
  \end{equation}
  for some $r\in\mathbb{R}^{4^n}$.
  One can show that $\dnorm{\cF} = \norm{r}_1$ by using the triangle inequality
  \begin{equation}
    \dnorm{\cF} \leq \sum_i \abs{r}_i \dnorm{\indsupo{Q_i}} = \norm{r}_1
  \end{equation}
  and conversely by using the definition of the diamond norm
  \begin{align}
    \dnorm{\cF}
    &\geq \norm{(\cF\otimes\idchan)(\proj{\Psi})}_1 \\
    &= \norm{\sum_i r_i (Q_i\otimes\id)\proj{\Psi}(Q_i\otimes\id)}_1 \\
    &= \sum_i \abs{r_i} \norm{(Q_i\otimes\id)\proj{\Psi}(Q_i\otimes\id)}_1 \\
    &= \norm{r}_1
  \end{align}
  where $\ket{\Psi}$ is the maximally-entangled $2n$-qubit state and we used the fact that the states $\{(Q_i\otimes\id)\ket{\Psi}\}_i$ form an orthonormal basis.
  So our desired statement follows from~\Cref{thm:gamma_diamond_norm} if we can show that $\cE$ must be of the form of~\Cref{eq:diagonal_chi_matrix} with $r=\walshhad_n^{-1}(m)=\walshhad_n(m)/4^n$.

  To this end, we note that there is a one-to-one correspondence between superoperators of the form in~\Cref{eq:diagonal_chi_matrix} and superoperators with diagonal Pauli transfer matrix.
  On one hand, consider $\cF$ as in~\Cref{eq:diagonal_chi_matrix}.
  The associated Pauli transfer matrix reads
  \begin{align}
    M^{(\cF)}_{\alpha,\beta}
    & = \frac{1}{2^n} \tr\left[ Q_{\alpha}\cF(Q_{\beta}) \right] \\
    & = \frac{1}{2^n}\sum_i r_i \tr\left[ Q_{\alpha}Q_i Q_{\beta} Q_i\right] \\
    & = \frac{1}{2^n}\sum_i r_i (-1)^{\symplip{Q_i}{Q_{\beta}} }\tr\left[ Q_{\alpha}Q_{\beta}\right] \\
    & = \delta_{\alpha,\beta} \sum_i r_i (-1)^{\symplip{Q_i}{Q_{\beta}} }
  \end{align}
  and the diagonal elements are precisely given by $\walshhad_n(r)$.
  Conversely, thanks to the invertibility of the Walsh-Hadamard transformation, for every superoperator with diagonal Pauli transfer matrix, one can find a vector $r$ such that it can be written in the form of~\Cref{eq:diagonal_chi_matrix}.
\end{proof}
We now present a few applications of non-physical QPS.

\paragraph{Noise inversion}
Consider following situation: In some part of a quantum computation, a subset of the qubits, denoted by system $A$, undergoes some noise described by a quantum channel $\cN\in\cptp(A)$.
For example, at this point of the circuit the qubits could be sent from one computational device to another through a faulty transmission channel.
To undo the effect of the noise, one would ideally perform the operation $\cN^{-1}$ to reverse the effect $\cN^{-1}\circ\cN(\rho)=\rho$ for any arbitrary input state $\rho$\footnote{Note that here we make an implicit assumption that a (possibly non-$\cptp$) inverse of the channel $\cN$ exists in the first place. This is not necessarily the case, as exemplified by choosing $\cN$ to be the fully depolarizing channel or by choosing differing input and output spaces. For non-invertible channels one could utilize a pseudo-inverse like the Moore-Penrose inverse or the Drazin inverse. See~\cite{cao2022_nisq} for more discussion about pseudoinverses in noise inversion.}.
Unfortunately, $\cN^{-1}$ is not in $\cptp(A)$ unless $\cN$ is unitary, making it impossible to fully recover the quantum information in physically realistic scenarios.
Using QPS, we can however simulate the channel $\cN^{-1}$ with a sampling overhead of $\gamma_{\cptp^{\star}}(\cN^{-1})$.

First, we note that the inverse of a $\cptp$ map is itself trace-preserving.
\begin{lemma}{}{inv_is_tp}
  Let $\cN\in\cptp(A)$ be an invertible quantum channel.
  Then $\cN^{-1}$ is Hermitian-preserving and trace-preserving.
\end{lemma}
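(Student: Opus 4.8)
The claim is that if $\cN\in\cptp(A)$ is invertible, then its inverse $\cN^{-1}$ is both Hermitian-preserving and trace-preserving. Both properties follow from the fact that $\cN$ itself enjoys these properties together with the fact that bijective linear maps restrict/corestrict nicely to subspaces that are preserved. The plan is to argue each property separately and directly.

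First I would establish that $\cN^{-1}$ is trace-preserving. Since $\cN$ is trace-preserving, $\tr[\cN(\omega)]=\tr[\omega]$ for all $\omega\in\lino(A)$. Given any $\omega\in\lino(A)$, invertibility lets us write $\omega=\cN(\eta)$ for $\eta=\cN^{-1}(\omega)$. Then $\tr[\cN^{-1}(\omega)]=\tr[\eta]=\tr[\cN(\eta)]=\tr[\omega]$, which is exactly trace-preservation of $\cN^{-1}$. This step is essentially immediate.

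Next I would establish that $\cN^{-1}$ is Hermitian-preserving, i.e. it maps $\herm(A)$ to $\herm(A)$. Since $\cN$ is completely positive it is in particular positive and hence Hermitian-preserving, so $\cN(\herm(A))\subseteq\herm(A)$. Because $\herm(A)$ is a real vector space of dimension $\dimension(A)^2$ and $\cN$ is a real-linear injective map of this finite-dimensional space into itself, $\cN$ restricts to a bijection of $\herm(A)$ onto itself; equivalently, one can argue via the Choi representation that $\choi{\cN}$ Hermitian implies $\choi{\cN^{-1}}$ Hermitian (using that $\cE\mapsto\choi{\cE}$ is a linear isomorphism and inversion of superoperators corresponds to a well-defined operation on Choi operators). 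Concretely: for $H\in\herm(A)$, write $H=\cN(K)$ with $K=\cN^{-1}(H)$; taking adjoints, $H=H^\dagger=\cN(K)^\dagger=\cN(K^\dagger)$ since $\cN$ is Hermitian-preserving, and by injectivity $K=K^\dagger$, so $\cN^{-1}(H)=K\in\herm(A)$. This shows $\cN^{-1}$ is Hermitian-preserving.

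I do not anticipate a serious obstacle here — the only subtlety is making sure the dimension-counting/surjectivity argument is phrased correctly (an injective linear endomorphism of a finite-dimensional space is automatically surjective), and that "invertible" is understood as invertible as a linear map on $\supo(A\rightarrow A)$ so that $\cN^{-1}$ is genuinely a linear superoperator. Once that is fixed, both properties reduce to the one-line arguments above.
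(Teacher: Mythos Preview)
Your proposal is correct and takes essentially the same approach as the paper. The paper's proof uses the Hermitian/skew-Hermitian decomposition $O=X+iY$ to show that an invertible Hermitian-preserving map cannot send a non-Hermitian operator to a Hermitian one, which is equivalent to your direct argument via $\cN(K)^\dagger=\cN(K^\dagger)$ and injectivity; your additional dimension-counting alternative (injective real-linear endomorphism of the finite-dimensional space $\herm(A)$ is bijective) is a valid and slightly more concise route to the same conclusion.
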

As such, the quasiprobability extent of $\cN^{-1}$ is always well-defined and
\begin{equation}
  \gamma_{\cptp^{\star}}(\cN^{-1}) = \gamma_{\cptp}(\cN^{-1}) = \dnorm{\cN^{-1}} \, .
\end{equation}
\begin{proof}
  The only nontrivial step is to show that a Hermitian-preserving superoperator cannot map a non-Hermitian map onto a Hermitian map.
  An arbitrary operator $O\in\lino(A)$ can always be uniquely decomposed into a Hermitian and skew-Hermitian part $O=X+iY$ where $X,Y\in\herm(A)$.
  By linearity, a Hermitian-preserving superoperator $\cE\in\hp(A)$ will preserve this decomposition
  \begin{equation}
    O \mapsto \cE(X) + i\cE(Y)
  \end{equation}
  as $\cE(X)$ and $i\cE(Y)$ are still Hermitian and skew-Hermitian respectively.
  Due to the invertibility of $\cE$ we have $\cE(Y)=0\Rightarrow Y=0$.
\end{proof}

\begin{example}\label{ex:noise_inv_depol}
  The one-qubit depolarizing channel
  \begin{equation}
    \cN(\rho) = (1-p)\rho + p \tr[\rho]\frac{\id}{2}
  \end{equation}
  has the Pauli transfer matrix representation\footnote{The Pauli order is chosen to be $\id,X,Y,Z$.} $\diag(1, 1-p, 1-p, 1-p)$.
  By using the inverse Walsh-Hadamard transformation we obtain
  \begin{equation}
    \cN^{-1} = \frac{4-p}{4(1-p)}\idchan - \frac{p}{4(1-p)}\indsupo{X} - \frac{p}{4(1-p)}\indsupo{Y} - \frac{p}{4(1-p)}\indsupo{Z} \, .
  \end{equation}
  By~\Cref{prop:gamma_cptp_paulisupo}, the quasiprobability extent for the noise inversion is given by
  \begin{equation}
    \gamma_{\cptp^{\star}}(\cN^{-1}) = \frac{4-p}{4(1-p)} + 3\frac{p}{4(1-p)} = \frac{1+\frac{p}{2}}{1-p}  \, .
  \end{equation}
  Note that the quasiprobability extent converges to $1$ as $p$ approaches zero.
  Conversely, towards $p\rightarrow 1$ the extent diverges, as $\cN$ is not invertible at $p=1$.
\end{example}
\begin{example}
  Consider the one-qubit amplitude damping noise channel
  \begin{equation}
    \cN(\rho)\coloneqq K_0\rho K_0^{\dagger} + K_1\rho K_1^{\dagger}
    \quad\text{where}\quad
    K_0=\begin{pmatrix}1&0\\0&\sqrt{1-p}\end{pmatrix} \, ,
    K_1=\begin{pmatrix}0&\sqrt{p}\\0&0\end{pmatrix}
  \end{equation}
  and $p\in[0,1]$.
  The quasiprobability extent of its inverse can be shown to be
  \begin{equation}
    \gamma_{\cptp^{\star}}(\cN^{-1}) = \dnorm{\cN^{-1}} = \frac{1+p}{1-p} \, .
  \end{equation}
  For a proof, we refer the reader to~\cite{regula2021_operational}.
\end{example}

\paragraph{Entanglement detection}
Positive but not completely positive maps $\cE\in\hp(A)$ provide a simple criterion to detect entanglement in a bipartite state $\rho_{AB}$.
It is easy to show that $(\cE\otimes\idchan_B)(\rho_{AB})$ can only contain negative eigenvalues if $\rho_{AB}$ is entangled.
The canonical example of such an entanglement criterion is provided by choosing $\cE$ to be the transpose map
\begin{equation}
  \transpose_A\left(\sum_{i,j} \alpha_{i,j} \ketbra{b_i}{b_j} \right) \coloneqq \sum_{i,j} \alpha_{j,i} \ketbra{b_i}{b_j} \, .
\end{equation}
which is defined w.r.t. some fixed orthonormal basis $\{\ket{b_i}\}_i$ of $A$.

While useful in theory, it is difficult to translate such an entanglement criterion into an experimentally implementable protocol to detect entanglement.
Using QPS, we can address this problem by simulating the transpose map $\transpose_A$ and then estimate the smallest eigenvalue of the state.
The associated sampling overhead is as follows.
\begin{proposition}{}{}
  For any system $A$ one has $\gamma_{\cptp^{\star}}(\transpose_A) = \dimension(A)$. 
\end{proposition}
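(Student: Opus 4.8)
The plan is to use the characterization $\gamma_{\cptp^{\star}}(\transpose_A) = \dnorm{\transpose_A}$ from~\Cref{thm:gamma_diamond_norm} and then compute the diamond norm of the transpose map directly. So the whole problem reduces to showing $\dnorm{\transpose_A} = \dimension(A)$. Writing $d \coloneqq \dimension(A)$, one direction is a lower bound and the other is an upper bound.

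For the lower bound, I would feed the transpose map half of a maximally entangled state: note that $(\idchan_A \otimes \transpose_{A'})(\proj{\Psi}_{AA'})$ (with $\ket{\Psi}$ normalized) is proportional to the swap operator $\frac{1}{d}\swap_{AA'}$, whose eigenvalues are $\pm 1$ with multiplicities $\binom{d+1}{2}$ and $\binom{d}{2}$. Hence its trace norm is $\frac{1}{d}\cdot d^2 = d$. Since $\dnorm{\transpose_A} = \max_{\rho\in\dops(AA')}\norm{(\idchan\otimes\transpose_A)(\rho)}_1 \geq \norm{(\idchan\otimes\transpose_{A'})(\proj{\Psi})}_1 = d$, we get $\dnorm{\transpose_A}\geq d$. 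For the upper bound, I would argue that the transpose map satisfies $\norm{\transpose_A(X)}_1 = \norm{X}_1$ (it just relabels matrix entries, preserving singular values), so $\transpose_A$ is an isometry in trace norm, hence $\norm{\transpose_A}_1 = 1$ where $\norm{\cdot}_1$ is the induced trace-to-trace norm; but the diamond norm of a positive map is bounded by $\dimension(A)$ times its induced trace norm in general — more directly, for any state $\rho_{AA'}$ one has $\norm{(\idchan\otimes\transpose_{A'})(\rho)}_1 \leq \dimension(A)$ because the partial transpose has trace $1$ and operator norm at most $1$ (its eigenvalues lie in $[-1,1]$), so its trace norm is at most $\dimension(AA')\cdot$(operator norm)$/\dimension(A')$... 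I should be careful here: the cleanest argument is that $\rho^{\Gamma}$ has trace $1$ and $\opnorm{\rho^{\Gamma}}\leq 1$ on the support, hence rank at most $\dimension(AA')$ but more usefully $\norm{\rho^{\Gamma}}_1 \leq \dimension(A)$ since $\rho^{\Gamma}$ is supported on an $AA'$ space and... Actually the standard fact is $\norm{\rho^{T_B}}_1 \leq \dimension(B)$; I would invoke or reprove this via $\norm{\rho^{\Gamma}}_1 \leq \sqrt{\dimension(AA')}\norm{\rho^{\Gamma}}_2 = \sqrt{\dimension(AA')}\norm{\rho}_2 \leq \sqrt{\dimension(AA')}$, which only gives $\sqrt{d\cdot\dimension(A')}$ — not tight enough in general.

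The correct tight upper bound argument: write $\ket{\Psi}$'s contribution and note that for the transpose map specifically, by convexity the maximum in $\dnorm{\transpose_A}$ is attained at a pure state $\rho = \proj{\psi}_{AA'}$, and using the Schmidt decomposition $\ket{\psi} = \sum_i \sqrt{\lambda_i}\ket{i}_A\ket{i}_{A'}$ (after local unitaries, which don't change the trace norm of the output up to conjugation on the untouched side), we get $(\idchan\otimes\transpose_{A'})(\proj{\psi}) = \sum_{i,j}\sqrt{\lambda_i\lambda_j}\,\ketbra{i}{j}_A\otimes\ketbra{j}{i}_{A'}$, whose trace norm I would compute to be $\left(\sum_i\sqrt{\lambda_i}\right)^2 \leq d$ by Cauchy–Schwarz since $\sum_i\lambda_i = 1$ and the sum has at most $d$ terms. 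Equality holds at the flat distribution, recovering the lower bound and pinning $\dnorm{\transpose_A} = d$.

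The main obstacle is getting the upper bound tight: the naive operator-norm or Frobenius-norm bounds are too weak, so the key step is the reduction to pure states (justified by convexity of the trace norm and linearity) followed by the Schmidt-basis computation of $\norm{\sum_{ij}\sqrt{\lambda_i\lambda_j}\ketbra{i}{j}\otimes\ketbra{j}{i}}_1 = (\sum_i\sqrt{\lambda_i})^2$. That eigenvalue/singular-value computation (recognizing the operator as block-structured with the swap-like pattern) is the one genuinely nontrivial piece; everything else — invoking~\Cref{thm:gamma_diamond_norm}, the maximally-entangled-state lower bound, Cauchy–Schwarz — is routine.
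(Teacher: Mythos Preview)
Your proof is correct. The lower bound via the maximally entangled state matches the paper's approach exactly. For the upper bound, however, you take a genuinely different route: you compute $\dnorm{\transpose_A}$ directly by reducing to pure states via convexity, diagonalizing the partial transpose in the Schmidt basis to get $\norm{\rho^{\Gamma}}_1 = (\sum_i\sqrt{\lambda_i})^2$, and bounding this by $d$ via Cauchy--Schwarz. The paper instead works on the primal side of the quasiprobability extent: it writes the Choi operator $\choi{\transpose_A} = \frac{1}{d}\swap_{AA'}$ as the difference $\Lambda^+ - \Lambda^-$ of the (normalized) symmetric and antisymmetric projectors, checks that each satisfies the partial-trace constraint $\tr_{A'}[\Lambda^\pm/\tr\Lambda^\pm] = \id_A/d$, and thereby exhibits an explicit two-term QPD of $\transpose_A$ into CPTP maps with 1-norm $\frac{d+1}{2}+\frac{d-1}{2}=d$. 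Your argument is cleaner if the goal is only to compute the diamond norm; the paper's argument has the advantage of producing an explicit achieving decomposition, which is operationally relevant (e.g.\ for the qubit case the paper notes the decomposition can be taken over Pauli channels).

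One minor remark: your wandering through the failed Frobenius/operator-norm bounds is unnecessary in a final write-up, and the claim ``the trace norm equals $(\sum_i\sqrt{\lambda_i})^2$'' deserves the one-line eigenvalue justification you allude to (diagonal blocks $\lambda_i$, off-diagonal $2\times 2$ blocks with eigenvalues $\pm\sqrt{\lambda_i\lambda_j}$), since that is the only nontrivial computation.
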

Note that for qubit systems, $\transpose_A$ has a diagonal Pauli transfer matrix and hence the decomposition set can be chosen to only contain efficiently implementable Pauli channels (recall~\Cref{prop:gamma_cptp_paulisupo}).
\begin{proof}
  Let's denote $d_A\coloneqq\dimension(A)$.
  By~\Cref{thm:gamma_diamond_norm}, we need to evaluate $\dnorm{\transpose_A}$.
  It is insightful to realize that the Choi representation of $\transpose_A$ is simply the qudit swap operator
  \begin{equation}
    \choi{\transpose_A}
    = \frac{1}{d_A}(\idchan\otimes\transpose)\left(\sum_{i,j}\ketbra{ii}{jj}\right)
    = \frac{1}{d_A}\sum_{i,j}\ket{ij}\bra{ji}
    = \frac{1}{d_A}\swap_{AA'} \, .
  \end{equation}
  Since $\swap_{AA'}$ is both Hermitian and unitary, its eigenvalues are precisely $\pm 1$ and
  \begin{equation}
    \dnorm{\transpose_A} \geq \norm{\choi{\transpose_A}}_1 = \norm{\frac{1}{d_A}\swap_{AA'}}_1 = d_A \, .
  \end{equation}
  For the converse, we briefly discuss the spectral decomposition of $\swap_{AA'}$.
  The $+1$ eigenstates are $\ket{ii}$ and $\frac{\ket{ij}+\ket{ji}}{\sqrt{2}}$ for $i<j$ and the $-1$ eigenstates are $\frac{\ket{ij}-\ket{ji}}{\sqrt{2}}$ for $i<j$.
  The decomposition of $\choi{\transpose_A}$ into the difference of the $+1$ and $-1$ eigenspace projectors
  \begin{equation}
    \choi{\transpose_A}
    = \Lambda^+ - \Lambda^-
    = \frac{d_A+1}{2}\frac{\Lambda^+}{\tr[\Lambda^+]} - \frac{d_A-1}{2}\frac{\Lambda^-}{\tr[\Lambda^-]}
  \end{equation}
  induces a valid QPD of $\transpose_E$ w.r.t. $\cptp$ because
  \begin{align}
    \tr_{A'}\left[\frac{\Lambda^+}{\tr[\Lambda^+]}\right]
    &= \frac{2}{d_A(d_A+1)}\tr_{A'}\left[\sum_{i}\proj{ii}\right] \\
    &\quad + \frac{2}{d_A(d_A+1)}\tr_{A'}\left[\sum_{i<j}\frac{(\ket{ij}+\ket{ji})(\bra{ij}+\bra{ji})}{2}\right] \\
    &= \frac{2}{d_A(d_A+1)} \sum_i \proj{i}
    + \frac{2}{d_A(d_A+1)} \sum_{i<j}\frac{\proj{i}+\proj{j}}{2}  \\
    &= \frac{2}{d_A(d_A+1)}\id_A 
    + \frac{d_A-1}{d_A(d_A+1)} \id_A \\
    &= \frac{\id_{A}}{d_A}
  \end{align}
  and analogously $\tr_{A'}\left[\frac{\Lambda^-}{\tr[\Lambda^-]}\right] = \frac{\id_{A}}{d_A}$.
  Hence, we have
  \begin{equation}
    \gamma_{\cptp^{\star}}\leq \frac{d_A+1}{2} + \frac{d_A-1}{2} = d_A \, .
  \end{equation}
\end{proof}

\paragraph{Quantum universal-NOT gate}
The universal-NOT ($\unot)$ gate is an anti-unitary single-qubit operation that maps a vector on the Bloch sphere to its opposite point 
\begin{equation}
  \alpha\ket{0} + \beta\ket{1} \mapsto \beta^*\ket{0} - \alpha^*\ket{1} \, .
\end{equation}
As the basis-independent quantum analog of the classical NOT gate, the $\unot$ gate and its best CPTP approximations have been extensively studied~\cite{buzek1999_optimal,vanenk_2005,}.
In fact, it has been shown to be closely related to the transpose superoperator and quantum cloning.
Its action as a linear superoperator
\begin{equation}\label{eq:unot_supo}
  q\id + \vec{r}\cdot\vec{\sigma} \mapsto q\id - \vec{r}\cdot\vec{\sigma}
\end{equation}
for $q\in\mathbb{R}$, $\vec{r}\in\mathbb{R}^3$ is trace-preserving but not completely positive.
Here, $\vec{r}\cdot\vec{\sigma}$ is short-hand notation for $r_x X + r_y Y + r_z Z$.
Still, it is possible to simulate a $\unot$ gate using QPS.
In fact,~\Cref{eq:unot_supo} precisely provides us with its Pauli transfer representation $\diag(1,-1,-1,-1)$.
By~\Cref{prop:gamma_cptp_paulisupo}, this implies that $\gamma_{\cptp^{\star}}(\unot) = 2$ with the achieving QPD being
\begin{equation}
  \unot = -\frac{1}{2}\idchan + \frac{1}{2}\indsupo{X} + \frac{1}{2}\indsupo{Y} + \frac{1}{2}\indsupo{Z} \, .
\end{equation}

\paragraph{Quantum cloning}
The famous quantum no-cloning theorem states that there exists no physical quantum process that maps every state $\ket{\psi}$ onto two copies of itself $\ket{\psi}\mapsto\ket{\psi}\otimes\ket{\psi}$.
As such, much research has been devoted to finding the $\cptp$ maps which best approximate the cloning operation.
We refer the reader to~\cite{scarani2005_qcloning} for a comprehensive overview of results on approximate cloning.

Since QPS allows for the simulation of nonphysical quantum operations, it stands that it could be used to effectively realize improved or even perfect cloning machines.
Here, we investigate the arguably simplest setting of state-dependent cloning of two pure states.
More precisely, we consider two pure states $\ket{\psi_0}$ and $\ket{\psi_1}$ and investigate cloning machines $\cE$ which aim to maximize the average fidelity
\begin{equation*}
  \frac{1}{2}\left(\bra{\psi_0}\otimes\bra{\psi_0}\right) \cE\left(\proj{\psi_0}\right) \left( \ket{\psi_0}\otimes\ket{\psi_0}\right)
  +\frac{1}{2}\left(\bra{\psi_1}\otimes\bra{\psi_1}\right) \cE\left(\proj{\psi_1}\right) \left( \ket{\psi_1}\otimes\ket{\psi_1}\right)
  .
\end{equation*}
Without loss of generality, we can assume that the two states are of the form
\begin{equation}
  \ket{\psi_0} = \cos\frac{\theta}{2}\ket{0} + \sin\frac{\theta}{2}\ket{1}
  \, , \quad
  \ket{\psi_1} = \cos\frac{\theta}{2}\ket{0} - \sin\frac{\theta}{2}\ket{1}
\end{equation}
for $\theta\in[0,\pi/2]$ such that $\cos(\theta)$ characterizes the overlap $\braket{\psi_0}{\psi_1}$.

Let us denote by $A$ and $B$ the input 1-qubit system and the output 2-qubit system.
The maximum achievable fidelity $F_{\mathrm{max}}$ with a physical $\cptp$ cloner can be expressed in terms of a simple SDP
\begin{equation}\label{eq:sdp_fmax}
  F_{\mathrm{max}}(\theta) = \left \lbrace
  \begin{array}{r l}
    \max\limits_{\choi{\cE}\in\pos(AB)}
    & \frac{1}{2}\left(\bra{\psi_0}\otimes\bra{\psi_0}\right) \cE\left(\proj{\psi_0}\right) \left( \ket{\psi_0}\otimes\ket{\psi_0}\right) \\
    & +\frac{1}{2}\left(\bra{\psi_1}\otimes\bra{\psi_1}\right) \cE\left(\proj{\psi_1}\right) \left( \ket{\psi_1}\otimes\ket{\psi_1}\right) \\
    \textnormal{s.t.} & \tr_{B}[\choi{\cE}] = \frac{1}{2}\id_A
  \end{array} \right .
\end{equation}
where $\cE(\sigma)$ should be understood as the evolution in terms of the Choi matrix, i.e., $\cE(\sigma)=\dimension(A)\tr_{A}\left[\choi{\cE}(\sigma^T\otimes\id_{B})\right]$.

While no perfect $\cptp$ cloner exists (unless $\theta=0$ or $\theta=\pi /2$), it is easy to see that there always exists a Hermitian-preserving superoperator which achieves optimal cloning.
At this point, it is natural to ask what is the smallest quasiprobability extent $\gamma_{\cptp^{\star}}(\cE)$ over all the maps $\cE\in\hp(A\rightarrow B)$ which achieve perfect cloning.
Again, this can be expressed as a SDP
\begin{equation}\label{eq:sdp_gammamin}
  \gamma_{\mathrm{min}}(\theta) = \left \lbrace
  \begin{array}{r l}
    \max\limits_{\substack{\choi{\cE}\in\herm(AB) \\ Y_0,Y_1\in\pos(AB)}} & \frac{1}{2}\opnorm{\tr_{B}[Y_0]} + \frac{1}{2}\opnorm{\tr_{B}[Y_1]} \\
    \textnormal{s.t.} & \cE(\proj{\psi_0}) = \proj{\psi_0}\otimes\proj{\psi_0}, \\
    & \cE(\proj{\psi_1}) = \proj{\psi_1}\otimes\proj{\psi_1}, \\
    & \begin{pmatrix}Y_0 & -\choi{\cE} \\ - \choi{\cE} & Y_1 \end{pmatrix} \loewnergeq 0
  \end{array} \right .
\end{equation}
where we used~\Cref{thm:gamma_diamond_norm} as well as the SDP characterization of the diamond norm from \reference~\cite{watrous2013_simplersdp}.

The numerical evaluation of the SDPs in~\Cref{eq:sdp_fmax,eq:sdp_gammamin} is depicted in~\Cref{fig:approximate_cloning}.
Interestingly, the minimal quasiprobability extent does not directly correlate with the maximum fidelity of the optimal $\cptp$ cloner.
Indeed, as the overlap between the states approaches $1$, the approximate $\cptp$ cloner becomes more accurate whereas the minimal quasiprobability extent seems to converge towards a value of $2$.
Notably, $\gamma_{\mathrm{min}}(\theta)$ is discontinuous at $\theta=0$, as cloning is trivial for identical states.
Our results show that \emph{exact} cloning is most expensive to simulate for $\theta$ close to zero, even though \emph{approximate} cloning is very easy in this regime (by essentially doing almost nothing).

\begin{figure}
  \centering
  \includegraphics{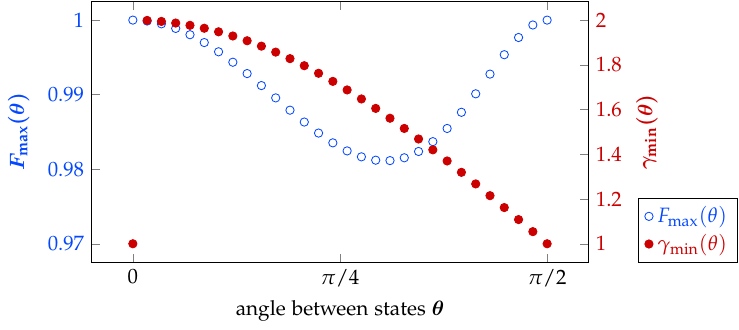}
  \caption{Numerics on state-dependent cloning machines for two pure input states with overlap $\cos\theta$.
  $F_{\mathrm{max}}$ denotes the maximum cloning fidelity for a physical ($\cptp$) cloning machine and $\gamma_{\mathrm{min}}$ denotes the minimal quasiprobability extent of a non-physical optimal cloning machine. The limits $\theta=0$ and $\theta=\pi/2$ correspond to two identical states, respectively two orthogonal states.}
  \label{fig:approximate_cloning}
\end{figure}

\paragraph{Imaginary time evolution}
\textit{Disclaimer:
The idea described in this paragraph was orally conveyed to the author by Ray \etal~\cite{ray2025_imaginary}.
The original authors plan to publish a detailed analysis in the near future.
The current paragraph was written with their explicit permission and all credit for the idea is due to them.}

Imaginary time evolution is a standard technique in computational quantum mechanics to find the ground state of a given Hamiltonian $H$.
Consider evolving a random initial state $\ket{\psi}$ with the operator $e^{-\tau H}$ for some $\tau\geq 0$.
This evolution can be interpreted as ``imaginary time evolution'', since it corresponds to substituting imaginary time $\tau=it$ into the time evolution operator $e^{-iHt}$ derived from the Schrödinger equation (note that we are fixing $\hbar=1$).
By choosing $\tau$ large enough, the ground state component becomes dominant.

More precisely, let $H\ket{b_i}=E_i\ket{b_i}$ be the eigenvectors and eigenvalues of $H$.
The imaginary time evolution expressed in the eigenbasis reads
\begin{equation}
  \ket{\psi} = \sum_i c_i \ket{b_i} \mapsto e^{-\tau H}\ket{\psi} = \sum_i c_ie^{-\tau E_i}\ket{b_i}
\end{equation}
so the ground state component decays exponentially slower than the other components.
The convergence time $\tau_{\mathrm{conv}}$ is governed by the inverse gap $\tau_{\mathrm{conv}}=\mathcal{O}(\frac{1}{\Delta})$ where $\Delta\coloneqq E_1-E_0$.
Using QPS, we can simulate the imaginary time evolution operator and hence estimate the ground state energy
\begin{equation}
  E_0 \approx \frac{\braopket{\psi'}{H}{\psi'}}{\braket{\psi'}{\psi'}} \text{ where } \ket{\psi'}\coloneqq e^{-\tau H}\ket{\psi}
\end{equation}
by estimating the two expectation values $\braopket{\psi'}{H}{\psi'}$ and $\braket{\psi'}{\psi'}$.
Recalling~\Cref{prop:qps_overhead,thm:gamma_diamond_norm}, we can estimate both to an additive error $\epsilon$ with a sampling overhead of $\mathcal{O}\left( \dnorm{\indsupo{e^{-\tau H}}}^2 \epsilon^{-2} \right)$.
As $\tau$ increases, the quasiprobability extent $\dnorm{\indsupo{e^{-\tau H}}}$ decreases, but so does also the required additive error $\epsilon$.
We refrain from a full analysis of the scheme and instead refer to \reference~\cite{ray2025_imaginary}.


\section{Further reading}
Two separate works by Piveteau \etal~\cite{piveteau2022_quasiprobability,piveteau2020_advanced} and Jiang \etal~\cite{jiang2021_physical} independently introduced the idea of decomposing $\mathbb{R}\hptp$ linear maps into CPTP operations.
The former calls the optimal QPD the \emph{channel difference decomposition} and uses it as an intermediate technical results for an error mitigation protocol.
The latter calls the log-quasiprobability extent the \emph{physical implementability}, and shows some basic properties (faithfulness, additivity, unitary invariance, relation to robustness) as well as weak monotonicity under superchannels.
Both express the quasiprobability extent in terms of a semidefinite program, but it was only later work by Regula \etal~\cite{regula2021_operational} that realized that this SDP is equivalent to the diamond norm (at least for $\mathbb{R}\hptp$ maps).
This realization essentially implies that many proofs in the previous papers just re-stated known properties of the diamond norm.
The authors also relate the quasiprobability extent to some advantage that non-physical maps can provide in discrimination-based quantum games.
In addition, this third work also realized that all Hermitian-preserving maps can be simulated if the decomposition set is enlarged from $\cptp$ to the set of all completely positive trace-non-increasing maps $\cptni$.
Note that this does not capture the full extent of using classical side information, as it effectively only allows for non-negative weighting of the quantum instruments.
Indeed, there exist explicit non-TP Hermitian-preserving maps $\cE$ for which $\gamma_{\cptni}(\cE)$ is strictly greater than $\gamma_{\cptp^{\star}}(\cE)$, but still finite.
Only further work by Xuanqiang \etal~\cite{zhao2023_power} incorporated the full power of classical side information and realized that the diamond norm characterization also extends to $\gamma_{\cptp^{\star}}$.

The study of QPDs of non-physical maps is related to the technique of \emph{structural physical approximation} (SPA)~\cite{horodecki2003_limits}.
The underlying motivation is that positive, but not completely positive, maps $\cE\in\hp(A)$ can serve as entanglement witnesses: $\cE\otimes\id_B(\rho_{AB})$ can only have negative eigenvalues if $\rho_{AB}$ is itself entangled.
The most famous instance is the Peres-Horodecki criterion, where $\cE$ is chosen to be the transpose map~\cite{peres1996_separability}.
Unfortunately, such positive non-CP maps are not physically realizable, making it difficult to directly translate these mathematical criteria into experimentally realizable protocols to detect entanglement.
It can be shown that taking a mixture of a non-CP map with the fully-depolarizing channel $\cD(\rho)\coloneqq \id_A / \dimension(A)$ will produce a CPTP map
\begin{equation}
  \tilde{\cE} \coloneqq (1-p)\cE + p\cD \in \cptp(A)
\end{equation}
provided the mixing probability $p$ is chosen large enough.
This map $\tilde{\cE}$ for the smallest admissible $p$ is called a structural physical approximation of $\cE$ and it can be used to physically realize entanglement detection protocols.
We refer the reader to~\cite{horodecki2002_method,korbicz2008_structural,bae2017_designing} for more details.
Clearly, any SPA can be seen as a QPD
\begin{equation}
  \cE = \frac{1}{1-p}\tilde{\cE} - \frac{p}{1-p}\cD \, .
\end{equation}
Therefore, QPDs can be considered a generalization of SPAs which allows for mixing with operations beyond the depolarizing channel to achieve complete positivity.
The minimal admissible probability $p$ which gives a valid SPA is thus an upper bound for the quasiprobability extent and can easily be shown to be related to a variant of the robustness measure from~\Cref{def:robustness}, called the \emph{random robustness}~\cite{vidal1999_robustness,chitambar2019_quantum}.

\section{Contributions}
The idea of decomposing and simulating general $\mathbb{R}\hptp$ maps with CPTP maps was proposed by the author in \reference~\cite{piveteau2020_advanced,piveteau2022_quasiprobability}.
Later work in \reference~\cite{jiang2021_physical} indpendently proposed the same idea.
The SDP formulation of $\gamma_{\cptp}$ is due to the author.
However, the connection to the diamond norm as well as the generalization to $\gamma_{\cptp^{\star}}$ were first introduced by other researchers.

The applications in~\Cref{sec:nonphysical_applications}, including~\Cref{prop:gamma_cptp_paulisupo}, are novel work and have not been published before.
The only exception is the paragraph \emph{imaginary time evolution}, which is due to \reference~\cite{ray2025_imaginary}.

\chapter{Simulating nonlocal computation}\label{chap:nonlocal}
In this chapter, we will study QPS in the context of the QRT of entanglement, where the free operations across two parties are given by local operations with classical communication (LOCC).
In the broadest sense, this means that we will study the quasiprobability extent $\gamma_{\locc^{\star}}(\cE)$ of bipartite nonlocal channels $\cE\in\cptp(AB)$ across two parties $A$ and $B$.
As before, the set $\locc^{\star}$ (as opposed to $\locc$) captures the possibility of utilizing classical side information from intermediate measurements in the sampling procedure (recall the discussion in~\Cref{sec:qps_intermediate_measurements} and~\Cref{eq:qrt_extended_ds}).
In contrast to the QRT of physical operations we encountered in~\Cref{chap:nonphysical}, the characterization of $\gamma_{\locc^{\star}}(\cE)$ will turn out to be extremely difficult, and we will only find closed-form expressions for a few simple cases.
This can be attributed to the very rich structure of the QRT, and we will establish many connections to the study of entanglement.

The main practical application of QPS of nonlocal operations is for the purpose of realizing \emph{circuit knitting} (also called \emph{circuit cutting}) to simulate a \emph{large} quantum computer using \emph{small} quantum computers.
Suppose one wants to run a quantum circuit on a given quantum device, but its number of available qubits is too small to fit the circuit.
The premise of circuit knitting is to subdivide the circuit into smaller partitions which are each small enough to fit on the available quantum computer.
The goal is to reconstruct the outcome of the original circuit by executing the smaller partitions on the quantum computer and performing appropriate classical post-processing.

It is very natural to realize circuit knitting through QPS.
Consider for instance the setup depicted in~\Cref{fig:example_gate_cut}.
The qubits in a circuit are grouped into two systems $A$ and $B$.
Every gate that acts nonlocally across $A$ and $B$ is simulated using LOCC operations via QPS.
This effectively means that we express the original large circuit as a quasiprobabilistic mixture of two sub-circuits which interact with each other only through classical communication.
As such, we can simulate the original quantum circuit with two smaller quantum computers, one executing $A$ and the other executing $B$.
Note that this approach also works when the circuit is divided into more than 2 partitions.

\begin{figure}
  \centering
  \includegraphics{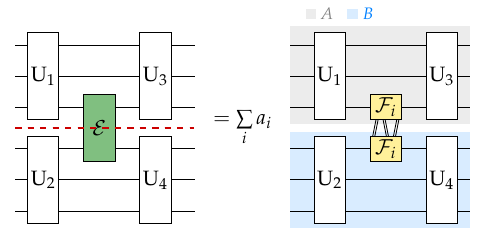}
  \caption{Example of a space-like cut (red dashed line) which partitions a large $6$-qubit circuit into two $3$-qubit sub-circuits. Using a QPD $\cE=\sum_ia_i\cF_i$ of $\cE$ into LOCC operations $\cF_i$, the total circuit can be written as a probabilistic mixture of two separate $3$-qubit circuits $A$ and $B$ which only interact through classical communication.}
  \label{fig:example_gate_cut}
\end{figure}

The partitioning of a circuit in circuit knitting can generally be achieved using two kinds of ``cuts''.
The previous example illustrated \emph{space-like cuts} (sometimes also called \emph{gate cuts}) which horizontally split nonlocal gates into local operations.
There is also a second kind of cut, called \emph{time-like cuts} (or \emph{wire cuts}), which vertically split a qubit wire as depicted on the left side of~\Cref{fig:example_wire_cut}.
A wire cut can be achieved through QPS by decomposing the qubit identity channel into a quasiprobabilistic mixture of \emph{entanglement breaking channels}\footnote{The reason for the name ``entanglement breaking channel'' will become more apparent in~\Cref{sec:timelike_cuts}.} which consist of a measurement followed by a state preparation which may depend on the measurement outcome.
As depicted in~\Cref{fig:example_wire_cut}, this again allows us to write the total circuit as a quasiprobabilistic mixture of two sub-circuits which only communicate via classical communication, and hence can be executed on two smaller quantum devices.

\begin{figure}
  \centering
  \includegraphics{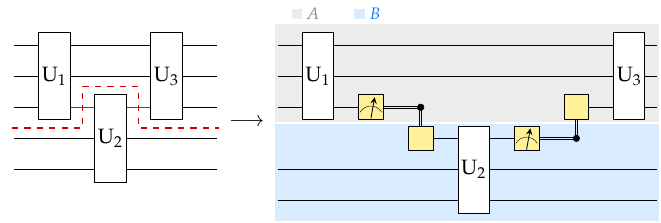}
  \caption{Example of a wire cut (red dashed line) which partitions a large $5$-qubit circuit into two $3$-qubit sub-circuits. Using a QPD of the identity channel into entanglement breaking channels, the total circuit can be written as a probabilistic mixture of two separate $3$-qubit circuits $A$ and $B$ which only interact through classical communication.}
  \label{fig:example_wire_cut}
\end{figure}

In this chapter, we will study both space-like cuts and time-like cuts and their associated quasiprobability extents $\gamma_{\locc^{\star}}$ and $\gamma_{\ebc^{\star}}$ where EBC stands for entanglement breaking channels.
Correspondingly, the chapter is split up into~\Cref{sec:spacelike_cuts,sec:timelike_cuts} treating the two cases.
Below, we briefly outline some specific questions that will be considered in detail.
\begin{description}
  \item[Utility of classical communication]
  In the above descriptions of space-like and time-like cutting, we assumed that the individual sub-circuits can exchange classical communication with each other.
  Interestingly, circuit knitting remains possible even without this capability.
  We will investigate whether classical communication between the sub-circuits can allow for a lower simulation overhead.
  Mathematically speaking, for space-like cuts this amounts to comparing $\gamma_{\locc^{\star}}$ with $\gamma_{\lo^{\star}}$ where LO stands for local operations \emph{without} classical communication.
  Similarly, for time-like cuts, we will also introduce a decomposition set which is based on channels consisting of measurements and state preparations, but in constrast to EBC, the state may not depend on the measurement outcome.

  While the importance of classical communication is academically very interesting, it also has some important practical ramifications for circuit knitting.
  Allowing for classical communication could a priori lead to a lower sampling overhead, since we are extending the decomposition set to a larger set.
  However, this comes at an additional price in terms of practical implementation costs: Without classical communication, the individual sub-circuits could be run sequentially and in principle on the same quantum device.
  However, once classical communication is involved, one will generally require multiple small quantum processors executing the sub-circuits simultaneously.

  Somewhat surprisingly, we will observe quite different results for the settings of space-like and time-like cutting.
  We will show that for a large range of nonlocal gates, their quasiprobability extents $\gamma_{\locc^{\star}}$ and $\gamma_{\lo^{\star}}$ are identical.
  In comparison, time-like cuts are significantly less expensive with classical communication.

  Note that when we are referring to the utilization of classical communication here, we explicitly mean classical communication \emph{during} the circuit execution.
  Circuit knitting inherently requires some form of communication to combine the final measurement results of the individual sub-circuits.
  However, this final communication can be considered part of the post-processing and doesn't entail the same technological requirements as mid-circuit classical communication.

  \item[Utility of classical side information]
  Following the discussion in~\Cref{sec:qps_intermediate_measurements}, we will consider the decomposition sets that capture the utilization of classical side information, e.g., $\locc^{\star}$ instead of $\locc$ and $\ebc^{\star}$ instead of $\ebc$ (recall~\Cref{eq:qrt_extended_ds}).
  It is a natural question to ask whether this is really useful and if we could achieve the same overhead without classical side information.

  Intriguingly, the answer seems to be very different in the settings with and without classical communication.
  For instance, $\gamma_{\locc^{\star}}(\cU)=\gamma_{\locc}(\cU)$ when $\cU$ is a Clifford gate, so classical side information is generally not always required to reach the lowest extent.
  In contrast, we will show in~\Cref{sec:gate_cut_side_info} that intermediate measurements are absolutely crucial in the absence of classical communication.
  Mathematically, this manifests as $\gamma_{\lo}(\cE)=\infty$ for any signalling operation $\cE$ whereas $\gamma_{\lo^{\star}}(\cE)$ is finite for all Hermitian-preserving superoperators $\cE$.
  An analogous result also holds for time-like cuts, as discussed in~\Cref{sec:wire_cut_side_info}.

  \item[Space-like cuts of states]
  Besides just studying space-like cuts of quantum channels, we will also spend significant effort in~\Cref{sec:gamma_nonlocal_states} considering space-like cuts of quantum states, i.e., QPDs of bipartite nonlocal states into separable states (or equivalently product states).
  This can not only be useful as a technical tool for bounding the quasiprobability extent of channels, but also on its own for cutting circuits that contain nonlocal states.

  To illustrate the importance, we highlight that there exist quantum gates that can be realized from a pre-shared entangled state using only LOCC.
  A great example for this is the CNOT gate: The gate-teleportation protocol~\cite{gottesman1999_demonstrating} depicted in~\Cref{fig:cnot_teleportation} allows for the realization of a CNOT gate using only a pre-shared Bell pair $\ket{\Psi}$ and LOCC operations.
  As such, any QPD of a Bell pair into product states $\proj{\Psi}=\sum_i a_i \rho_i\otimes\sigma_i$ naturally induces a QPD of the $\cnot$ gate into $\locc$ channels.
  We note that this is completely analogous to the magic state injection procedure discussed in~\Cref{ex:state_qpd_nonclifford}.

  \item[Scaling of quasiprobability extent under tensor product]
  Another important aspect that we will study in this chapter is the scaling of the quasiprobability extent under the tensor product.
  In many instances, the quasiprobability extent is strictly sub-multiplicative, which has direct ramifications for the asymptotic overhead of circuit knitting.
  Consider for instance the extent of a nonlocal Bell-pair.
  We will later see that it is given by\footnote{We are being intentionally imprecise here by omitting the decomposition set. We defer the formal mathematical discussion to later in the chapter.} $\gamma(\proj{\Psi})=3$, while $\gamma(\proj{\Psi}^{\otimes n}) = 2^{n+1}-1<3^n$ and $\gammareg(\proj{\Psi})=\gammasreg(\proj{\Psi})=2$.
  It is therefore cheaper to quasiprobabilistically simulate the joint preparation of $n$ nonlocal Bell pairs instead of optimally simulating them separately.
  When simulating $c$ nonlocal $\cnot$ gates through the gate teleportation approach (as discussed in~\Cref{fig:cnot_teleportation}), this joint simulation strategy constitutes a large reduction in sampling overhead from $\mathcal{O}(9^c)$ to $\mathcal{O}(4^c)$.

  While the regularized and asymptotic quasiprobability extent for the Bell state $\ket{\Psi}$ are identical, we will show that this is surprisingly not the case for almost all other pure states.
  More precisely, we will show that for a pure bipartite state $\ket{\phi}_{AB}$ one has $\log\gammareg(\proj{\phi})=H_{1/2}(\tr_B[\proj{\phi}])$ and $\log\gammasreg(\proj{\phi})=H(\tr_B[\proj{\phi}])$ where $H$ and $H_{1/2}$ denote the von Neumann entropy and the quantum Rényi-$\nicefrac{1}{2}$ entropy.

  The quasiprobability extent of an $n$-fold copy of some gate $\gamma(\cU^{\otimes n})$ captures the optimal overhead for simulating $n$ instances of the nonlocal gate $\cU$ in \emph{parallel}, i.e., in the same time slice of the circuit.
  This is a strong assumption, as typically the nonlocal gates in a circuit can occur far apart from each other.
  We will see that in some instances, it remains possible to harness the submultiplicative behavior of the quasiprobability extent for distant gates and reduce the sampling overhead.
  We refer to this as the \emph{black box setting}, and it will be discussed in~\Cref{sec:submult}.

\end{description}
\begin{figure}
  \centering
  \includegraphics{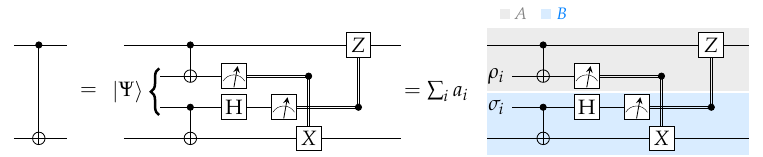}
  \caption{The first equality shows how a nonlocal $\cnot$ gate between two parties can be realized using a $\locc$ protocol that consumes a pre-shared Bell state $\ket{\Psi}=(\ket{00}+\ket{11})/\sqrt{2}$. By writing the Bell pair in terms of a quasiprobabilistic mixture of product state $\proj{\Psi}=\sum_i a_i \rho_i\otimes \sigma_i$, we directly obtain a QPD of the $\cnot$ gate into $\locc$ protocols. The equalities should be understood in terms of superoperators.}
  \label{fig:cnot_teleportation}
\end{figure}
Generally speaking, there is no efficient methodology for evaluating the quasiprobability extent for space-like and time-like cuts of arbitrary channels.
Instead, we will have to content ourselves with various techniques to find lower and upper bounds. 
Fortunately, in some instances, these bounds match.
The most important such examples that we will encounter are
\begin{itemize}
  \item space-like cuts of pure states (see~\Cref{thm:optimal_qpd_pure} for the single-shot result and~\Cref{cor:gamma_reg_pure,cor:gamma_sep_asymptotic} for the asymptotic results), 
  \item space-like cuts of Clifford gates (see~\Cref{cor:gamma_locc_clifford}),
  \item space-like cuts of parallel two-qubit gates (see~\Cref{thm:gamma_kaklike_unitary}) and 
  \item time-like cuts of the identity channel (see~\Cref{thm:gamma_mpc_id,thm:gamma_ebc_id}).
\end{itemize}
Generally, we will find upper bounds by explicitly constructing a QPD with the corresponding 1-norm of its coefficients.
To find lower bounds, we will often (but not always) proceed by relaxing the decomposition set to a larger set, for which the quasiprobability extent is easier to evaluate.
For instance, LOCC has a notoriously complicated structure.
To avoid dealing with its difficulties, we will often consider the relaxations to the sets of separable channels (SEPC) and positive partial transpose channels (PPTC).
Especially the latter exhibits a much simpler characterization of the quasiprobability extent in terms of a semidefinite program.

In the context of space-like cuts, we briefly note that the set LOCC contains protocols that incur an unbounded number of communication rounds.
As such, the time complexity for realizing arbitrary LOCC protocols can be exceedingly high, possibly overshadowing the sampling overhead of QPS in the first place.
A practically more meaningful decomposition set would be based on the set $\locc_r$ with at most $r$ rounds, for some appropriately chosen $r$.
However, in this thesis, we will only consider the unbounded round LOCC setting for two reasons.
First, it is already exceedingly difficult to make statements about $\gamma_{\locc^{\star}}$ without having to worry about the number of communication rounds.
Furthermore, we are fortunate that all protocols that we will encounter (like the ones mentioned in the list above) anyway only involve a small number of rounds.

\section{Space-like cuts}\label{sec:spacelike_cuts}
\subsection{Decomposition sets}\label{sec:spacelike_ds}
In this section, we start by formally introducing all relevant decomposition sets that will play a major role for our study of space-like cuts.
We start by characterizing the sets of states that are important for the resource theory of entanglement.

\begin{definition}{}{}
  For two quantum systems $A,B$, we define the set of separable states $\sep(A;B)$ to be all bipartite states $\rho_{AB}\in\dops(AB)$ of the form
  \begin{equation}
    \rho_{AB} = \sum\limits_{i=1}^m p_i \rho_A^{(i)}\otimes \sigma_B^{(i)}
  \end{equation}
  where $m\in\mathbb{N}$, $p\in\mathbb{R}^m$ is a probability vector and $\rho^{(i)}_A\in\dops(A),\sigma^{(i)}_B\in\dops(B)$.
\end{definition}
The set $\sep(A;B)$ precisely contains all states that are reachable from an initial product state using local operations and classical communication.
By definition, they are the states that do not possess any entanglement, as their correlations can be thought to be of purely classical origin.

Since separability is not expressible as a semidefinite constraint, one often encounters the challenge that optimizing over $\sep(A;B)$ is difficult.
To address this, a common approach is to relax such problems by instead optimizing over the larger set of \emph{positive partial transpose states}.
For this purpose, we define the transpose map $\transpose_A$ for some system $A$ w.r.t. an orthonormal basis $\{\ket{b_1},\dots,\ket{b_n}\}$ by
\begin{equation}
  \transpose_A\left(\sum_{i,j} \alpha_{i,j} \ketbra{b_i}{b_j} \right) \coloneqq \sum_{i,j} \alpha_{j,i} \ketbra{b_i}{b_j} \, .
\end{equation}
\begin{definition}{}{}
  For two quantum systems $A,B$, we define the set of positive partial transpose states $\ppt(A;B)$ to be all bipartite states $\rho_{AB}\in\dops(AB)$ which fulfill $(\idchan_A\otimes\transpose_B)(\rho_{AB}) \loewnergeq 0$.
\end{definition}
We leave it as an exercise to the reader to verify that this definition does not depend on the choice of orthonormal basis or the system on which the transpose is taken.
It is also evident to see that any separable operator has a positive partial transpose, i.e., $\sep(A;B)\subset\ppt(A;B)$.
This gives rise to the well-known Peres-Horodecki criterion: A quantum state that does not exhibit a positive partial transpose must be entangled~\cite{peres1996_separability}.
For small systems, this criterion is faithful.
\begin{lemma}{\cite{horodecki1996_separability}}{ppt_small_system}
  If $\dimension(A)\cdot\dimension(B)\leq 6$ then $\sep(A;B)=\ppt(A;B)$.
\end{lemma}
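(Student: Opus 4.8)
The inclusion $\sep(A;B)\subseteq\ppt(A;B)$ holds for all systems, so the content is the reverse inclusion under the hypothesis $\dimension(A)\dimension(B)\leq 6$; up to swapping the roles of $A$ and $B$ this means one factor is a qubit and the other has dimension at most $3$. The plan is to combine two classical ingredients: the Horodecki positive-map criterion for separability, and the Størmer--Woronowicz theorem that every positive map between matrix algebras of these low dimensions is decomposable.

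\textbf{Ingredient 1 (positive-map criterion).} I would first recall that a state $\rho_{AB}\in\dops(AB)$ is separable if and only if $(\idchan_A\otimes\Lambda)(\rho_{AB})\loewnergeq 0$ for every positive superoperator $\Lambda$ from $B$ to an arbitrary system $C$ (i.e. every $\Lambda$ mapping $\pos(B)$ into $\pos(C)$)~\cite{horodecki1996_separability}. The ``only if'' direction is immediate: if $\rho_{AB}=\sum_i p_i \rho^{(i)}_A\otimes\sigma^{(i)}_B$ then $(\idchan_A\otimes\Lambda)(\rho_{AB})=\sum_i p_i\,\rho^{(i)}_A\otimes\Lambda(\sigma^{(i)}_B)\loewnergeq 0$. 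The ``if'' direction is a separating-hyperplane argument: $\sep(A;B)$ is closed and convex, so a state outside it is strictly separated by some $W\in\herm(AB)$ with $\tr[W\sigma]\geq 0$ for all $\sigma\in\sep(A;B)$ but $\tr[W\rho_{AB}]<0$; translating $W$ through the Choi--Jamio\l kowski isomorphism introduced in the preliminaries yields a positive map $\Lambda$ for which $(\idchan_A\otimes\Lambda)(\rho_{AB})$ fails to be positive.

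\textbf{Ingredient 2 (decomposability in low dimensions).} Next I would invoke the structural fact that every positive map $\Lambda$ between matrix algebras of dimensions $d_1,d_2$ with $d_1 d_2\leq 6$ is \emph{decomposable}, i.e. $\Lambda=\Lambda_1+\Lambda_2\circ\transpose_B$ for some completely positive $\Lambda_1,\Lambda_2$. This is the Størmer and Woronowicz result, which I would cite rather than reprove: establishing the absence of exceptional (indecomposable) positive maps in the $2\times 2$ and $2\times 3$ cases is the genuinely hard input, requiring a nontrivial analysis of the geometry of the positive cones of $2\times 2$ and $2\times 3$ matrices, and is the main obstacle of the whole argument.

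\textbf{Combination.} Given $\rho_{AB}\in\ppt(A;B)$ with $\dimension(A)\dimension(B)\leq 6$ and any positive map $\Lambda$ out of $B$, decompose $\Lambda=\Lambda_1+\Lambda_2\circ\transpose_B$ with $\Lambda_1,\Lambda_2$ completely positive. Then
\[
 (\idchan_A\otimes\Lambda)(\rho_{AB}) = (\idchan_A\otimes\Lambda_1)(\rho_{AB}) + (\idchan_A\otimes\Lambda_2)\!\left((\idchan_A\otimes\transpose_B)(\rho_{AB})\right).
\]
The first summand is $\loewnergeq 0$ because $\rho_{AB}\loewnergeq 0$ and $\Lambda_1$ is completely positive; the second is $\loewnergeq 0$ because $(\idchan_A\otimes\transpose_B)(\rho_{AB})\loewnergeq 0$ by the PPT hypothesis and $\Lambda_2$ is completely positive. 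Hence $(\idchan_A\otimes\Lambda)(\rho_{AB})\loewnergeq 0$ for all positive $\Lambda$, and Ingredient~1 forces $\rho_{AB}\in\sep(A;B)$. Apart from Ingredient~2, everything here is routine once the Choi bookkeeping in Ingredient~1 is carried out carefully.
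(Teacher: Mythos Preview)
Your proof is correct and follows the classical argument from the cited reference~\cite{horodecki1996_separability}. Note, however, that the paper does not actually supply its own proof of this lemma: it is stated with a citation and used as a black box (for instance, to deduce Corollary~\ref{cor:gamma_mixed_small_sys}). So there is nothing to compare against beyond observing that you have reproduced exactly the Horodecki--St{\o}rmer--Woronowicz argument the citation points to.
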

However, for larger systems, the inclusion $\sep(A;B)\subset\ppt(A;B)$ becomes strict.
The following state, due to \reference~\cite{bennett1999_unextendible}, provides an example of an entangled PPT state for $\dimension(A)=\dimension(B)=3$.
\begin{example}\label{ex:entangled_ppt_state}
  Consider the two systems $A,B$ with $\mathcal{H}_A=\mathcal{H}_B=\mathbb{C}^3$ and the five states
  \begin{align*}
    \ket{\psi_1} = \frac{1}{\sqrt{2}}\ket{0}\otimes(\ket{0}-\ket{1})
    & , &
    \ket{\psi_2} = \frac{1}{\sqrt{2}}(\ket{0}-\ket{1})\otimes\ket{2}
    , \\
    \ket{\psi_3} = \frac{1}{\sqrt{2}}\ket{2}\otimes(\ket{1}-\ket{2})
    & , &
    \ket{\psi_4} = \frac{1}{\sqrt{2}}(\ket{1}-\ket{2})\otimes\ket{0}
    , \\ 
    \multicolumn{3}{c}{$\ket{\psi_5} = \frac{1}{3}(\ket{0}+\ket{1}+\ket{2}) \otimes (\ket{0}+\ket{1}+\ket{2})$} \, .
  \end{align*}
  Any product state has non-zero overlap with at least one of these states.
  Therefore, any state in the orthogonal complement of these five states $\mathcal{H}^{\perp}$ must be entangled.
  Denote $\rho\coloneqq \frac{1}{4}\Pi$ where $\Pi$ is the projection onto $\mathcal{H}^{\perp}$.
  On one hand, $\rho$ must be entangled since $\Pi\rho\Pi=\rho$.
  On the other hand, $\rho$ is invariant under partial transposition, because $\rho\propto\id-\sum_{i=1}^5\proj{\psi_i}$ and each $\proj{\psi_i}$ is clearly invariant under partial transposition.
  Therefore, $\rho\in\ppt(A;B)$ but $\rho\notin\sep(A;B)$.
\end{example}

Next, we introduce four resource theories which are called LO, LOCC, SEPC and PPTC.
The QRT LOCC constitutes the cornerstone of the study of entanglement in quantum information theory.
It is operationally defined as the set of channels that can be realized by two\footnote{The four QRTs can straightforwardly be defined for arbitrarily many parties, but we restrict ourselves to the two-party case here.} parties which can perform arbitrary quantum operations on their respective local system, but are limited to classical communication between each other.
As such, it has a very rich structure and far-reaching applications for many entanglement manipulation tasks, such as the preparation, discrimination and transformation of entangled states.
In contrast, LO corresponds to the more restricted setting where the two parties cannot exchange classical communication with each other.
It doesn't have the same operational applications and is mathematically easy to handle, at least compared to LOCC.
For the purposes of QPS, LO captures the setting where the two sub-circuits do not have the capabilities to exchange classical information.

The set LOCC is notoriously difficult to work with, and as such, it is often necessary to resort to relaxations of it.
SEPC and PPTC are two supersets of LOCC that are mathematically easier to handle.
Compared to LO and LOCC, they have no operational meaning for the purposes of QPS and circuit knitting, and instead will serve purely as mathematical tools to bound the quasiprobability extent w.r.t. LOCC.

We now define the four resource theories in full detail.
\begin{description}
\item[Local operations (LO):]
The set of free transformations for this QRT is given by all product channels
\begin{equation}
  \lo(A;B\rightarrow A';B') \coloneqq \{\cE\otimes\cF | \cE\in\cptp(A\rightarrow A'), \cF\in\cptp(B\rightarrow B') \}
\end{equation}
and the associated free instruments are
\begin{equation}
  \instr{\lo}(A;B\rightarrow A';B') \coloneqq \left\{ (\cE_i\otimes\cF_j)_{i,j} \middle\vert 
  \begin{array}{l}
    (\cE_i)_i \in\qi(A\rightarrow A') \\
    (\cF_j)_j\in \qi(B\rightarrow B')
  \end{array}
  \right\} \, .
\end{equation}
For the purposes of QPS, we are interested in the induced decomposition set that accounts for classical side information (recall the discussion in~\Cref{sec:qps_intermediate_measurements})
\begin{equation}
  \lo^{\star}(A;B\rightarrow A';B') \coloneqq \qids[\instr{\lo}(A;B\rightarrow A';B')] \, .
\end{equation}
The free states in the QRT of LO are precisely the product states of the form $\rho_A \otimes \sigma _B$ where $\rho_A\in\dops(A)$ and $\sigma_B\in\dops(B)$.

Despite its very simple definition, LO unfortunately lacks certain structure compared to the other three QRTs (LOCC, SEPC, PPTC) that makes it a bit cumbersome to work with in the context of QPS.
For instance, it is not a convex resource theory, so~\Cref{lem:qrt_convexity,lem:two_element_qpd,lem:one_element_qpd} don't apply, and we generally need to consider QPDs with an unbounded number of terms.
Furthermore, $\instr{\lo}$ is neither coarse-grainable nor trivially fine-grainable\footnote{Note that the two instruments on each bipartition can be individually coarse-grained or fine-grained. However, the combined instrument can generally not, as that would require additional classical communication.}, and hence no simple characterization of $\lo^{\star}$ in the sense of~\Cref{lem:characterization_expanded_decomposition_set} is possible.
In the context of~\Cref{cor:free_classical_computation}, this can be attributed to the fact that general classical computation is not free under LO (only \emph{local} classical computation is).

Some previous work instead considered the decomposition set~\cite{piveteau2024_circuit}
\begin{equation}\label{eq:old_lostar_def}
  \{ \cE\otimes\cF | \cE\in\cptp^{\star}(A), \cF\in\cptp^{\star}(B) \}
\end{equation}
(recall~\Cref{eq:char_cptpstar} for the characterization of $\cptp^{\star}$).
While this set spans the same space as $\lo^{\star}$ (as we will see later), it is a strict subset of $\lo^{\star}(A;B\rightarrow A';B')$ since the weighting based on the measurement outcomes on $A$ and $B$ is done independently.

\item[Local operations with classical communication (LOCC):]
The formal mathematical construction of the set LOCC is quite involved and typically phrased in terms of the associated free instruments $\instr{\locc}$.
Since this formal definition will not be particularly insightful for our considerations, we refrain from reproducing it here and instead refer the interested reader to the excellent exposition in \reference~\cite{chitambar2014_locc}.
It suffices to note that the set of free instruments $\instr{\locc}(A;B\rightarrow A'B')\subset\qi(AB\rightarrow A'B')$ is
\begin{itemize}
  \item coarse-grainable by construction.
  \item trivially fine-grainable, as any protocol can be extended by having one of the two parties sample a random bit with probability depending on all previous measurement outcomes.
  \item closed under mixture, as the two parties can randomly choose between one of two protocols using one round of classical communication at the beginning.
\end{itemize}
By~\Cref{lem:qrt_convexity,lem:characterization_expanded_decomposition_set}, the induced decomposition set $\locc^{\star}(A;B\rightarrow A';B')\coloneqq\qids[\instr{\locc}(A;B\rightarrow A'B')]$ is absolutely convex and can be characterized as
\begin{equation}
  \locc^{\star}(A;B\rightarrow A';B') = \{\cE-\cF | (\cE,\cF)\in\instr{\locc}(A;B\rightarrow A';B') \} \, .
\end{equation}

Under the resource theory of LOCC, the free states on a bipartite system $AB$ are precisely given by $\sep(A;B)$.

\item[Separable channels (SEPC):]
We start by defining the set of separable quantum instruments $\instr{\sepc}(A;B\rightarrow A';B')$ to contain all instruments $(\cE_i)_i\in\qi(AB\rightarrow A'B')$ which have a separable Kraus representation, i.e., for all $i$ we can write
\begin{equation}
  \cE_i (\rho) = \sum_j (M_j\otimes N_j)\rho (M_j^{\dagger}\otimes N_j^{\dagger})
\end{equation}
for some $M_j\in\lino(A,A')$, $N_j\in\lino(B,B')$.
As an alternative characterization, the instruments $\cI=(\cE_i)_i\in\instr{\sepc}$ are precisely those for which the associated Choi representations are separable $(\choi{\cE_i} / \tr[\choi{\cE_i}])\in\sep(AA';BB')$~\cite{cirac2001_entangling}.
Correspondingly, the set of separable channels $\sepc(A;B\rightarrow A';B')$ is defined to be the subset of all quantum channels $\cE\in\cptp(AB\rightarrow A'B')$ that exhibit a separable Choi operator $\choi{\cE}\in\sep(AA';BB')$.
It can also be shown that $\sepc$ precisely constitutes the set of quantum channels which are \emph{completely resource non-generating}, i.e, which cannot generate entanglement even when applied to a subsystem~\cite[Proposition 17]{chitambar2020_entanglement}.

By construction, it can easily be seen that any LOCC instrument is separable, and hence $\instr{\locc}(A;B\rightarrow A';B')\subset\instr{\sepc}(A;B\rightarrow A';B')$.
In fact, it has been shown that $\sepc(A;B\rightarrow A';B')$ is in general strictly larger than $\locc(A;B\rightarrow A';B')$~\cite{bennett1999_quantum}.

It is easy to verify that $\instr{\sepc}$ is coarse-grainable, trivially fine-grainable and closed under mixture.
By~\Cref{lem:qrt_convexity,lem:characterization_expanded_decomposition_set}, the induced decomposition set $\sepc^{\star}(A;B\rightarrow A';B')\coloneqq\qids[\instr{\sepc}(A;B\rightarrow A'B')]$ is absolutely convex and can be characterized as
\begin{equation}\label{eq:char_sepc_star}
  \sepc^{\star}(A;B\rightarrow A';B') = \left\{\cE-\cF \middle\vert
  \begin{array}{l}
    (\cE,\cF)\in\qi(AB\rightarrow A'B'), \\
    \frac{\choi{\cE}}{\tr[\choi{\cE}]},\frac{\choi{\cF}}{\tr[\choi{\cF}]}\in\sep(AA';BB')
  \end{array}
  \right\} \, .
\end{equation}
By the Choi characterization, the free states of this QRT are precisely the separable states $\sep(A;B)$.

\item[Positive partial transpose channels (PPTC):]
SEPC provides a relaxation of the complicated LOCC constraint with a much simpler separability constraint on the Choi matrix.
However, even separability is not a semidefinite constraint, and as such, it is sometimes useful to further relax the problem using the PPT criterion.
We define the PPTC instrument as
\begin{equation}
  \instr{\pptc}(A;B\rightarrow A';B') \coloneqq \{ (\cE_i)_i \in \qi(AB\rightarrow A'B') | \frac{\choi{\cE_i}}{\tr[\choi{\cE_i}]} \in \ppt(AA';BB') \}
\end{equation}
and correspondingly, the set of PPT channels $\pptc(A;B\rightarrow A';B')$ is given by all maps $\cE\in\cptp(AB\rightarrow A'B')$ with PPT Choi representation $\choi{\cE}\in\ppt(AA';BB')$.
Again, it can be shown that PPTC precisely constitutes the set of quantum channels which are completely PPT-preserving, i.e., which cannot create non-PPT states from PPT states, even when only applied to a subsystem~\cite[Proposition 17]{chitambar2020_entanglement}.
The QRT PPTC is sometimes also called the \emph{resource theory of NPT-entanglement}.

It is easy to verify that $\instr{\pptc}$ is coarse-grainable, trivially fine-grainable and closed under mixture.
By~\Cref{lem:qrt_convexity,lem:characterization_expanded_decomposition_set}, the induced decomposition set $\pptc^{\star}(A;B\rightarrow A';B')\coloneqq\qids[\instr{\pptc}(A;B\rightarrow A'B')]$ is absolutely convex and can be characterized as
\begin{equation}\label{eq:char_pptc_star}
  \pptc^{\star}(A;B\rightarrow A';B') = \left\{ \cE-\cF \middle\vert
  \begin{array}{l}
    (\cE,\cF)\in\qi(AB\rightarrow A'B'), \\
    (\idchan_{AA'}\otimes\transpose_{BB'})(\choi{\cE})\loewnergeq 0, \\
    (\idchan_{AA'}\otimes\transpose_{BB'})(\choi{\cF})\loewnergeq 0
  \end{array}
  \right\} \, .
\end{equation}
By the Choi characterization, the free states of this QRT are precisely the PPT states $\ppt(A;B)$.
\end{description}
To simplify the notation, we will omit the output system whenever it is equal to the input system, e.g., $\pptc^{\star}(A;B)\coloneqq \pptc^{\star}(A;B\rightarrow A;B)$ or $\instr{\locc}(A;B)\coloneqq\instr{\locc}(A;B\rightarrow A;B)$.
It is easy to verify that LO, LOCC, SEPC and PPTC are all valid QRTs with tensor product structure.
Thus many properties in~\Cref{sec:qps_basic_properties,sec:gamma_qrt} directly apply.
For instance, the associated quasiprobability extents
\begin{itemize}
  \item fulfill the chaining property (\Cref{lem:gamma_chaining}),
  \item are sub-multiplicative (\Cref{cor:gamma_submult}),
  \item are invariant under local unitaries (\Cref{cor:gamma_invariant_reversible}),
  \item are Lipschitz continuous with $L=\dimension(AB)^{5/2}$ on qubit systems $A=A',B=B'$ (\Cref{ex:modifiedendobasis}).
\end{itemize}
The QRTs LOCC, SEP and PPTC are even convex, so the associated quasiprobability extents are also
\begin{itemize}
  \item directly related to a robustness measure (\Cref{lem:robustness1,lem:robustness2}),
  \item expressible in terms of single-element QPDs (\Cref{lem:one_element_qpd}).
\end{itemize}
Our four QRTs constitute a hierarchy in the sense that
{\small
\begin{equation}\label{eq:qrt_hierarchy}
  \lo(A;B\rightarrow A';B') \subset \locc(A;B\rightarrow A';B') \subset \sepc(A;B\rightarrow A';B') \subset \pptc(A;B\rightarrow A';B')
\end{equation}
}
and correspondingly
{\footnotesize
\begin{equation}\label{eq:qrt_star_hierarchy}
  \lo^{\star}(A;B\rightarrow A';B') \subset \locc^{\star}(A;B\rightarrow A';B') \subset \sepc^{\star}(A;B\rightarrow A';B') \subset \pptc^{\star}(A;B\rightarrow A';B') \, .
\end{equation}
}
In terms of the quasiprobability extent, \Cref{lem:gamma_ds_bound} thus directly implies
\begin{equation}\label{eq:channel_gamma_simple_bounds}
\begin{tikzpicture}
  \def \x{2}
  \def \y{1.3}
  \node at (0,0) {$\gamma_{\mathrm{LO}^{\star}}$}; 
  \node at (0,-\y) {$\gamma_{\mathrm{LO}}$}; 
  \node at (\x,0) {$\gamma_{\mathrm{LOCC}^{\star}}$}; 
  \node at (\x,-\y) {$\gamma_{\mathrm{LOCC}}$}; 
  \node at (2*\x,0) {$\gamma_{\mathrm{SEP}^{\star}}$}; 
  \node at (2*\x,-\y) {$\gamma_{\mathrm{SEP}}$}; 
  \node at (3*\x,0) {$\gamma_{\mathrm{PPT}^{\star}}$}; 
  \node at (3*\x,-\y) {$\gamma_{\mathrm{PPT}}$};  

  \node[rotate=90] at (0,-0.5*\y) {$\geq$};
  \node[rotate=90] at (\x,-0.5*\y) {$\geq$};
  \node[rotate=90] at (2*\x,-0.5*\y) {$\geq$};
  \node[rotate=90] at (3*\x,-0.5*\y) {$\geq$};
  \node at (0.5*\x,0) {$\geq$}; 
  \node at (1.5*\x,-\y) {$\geq$}; 
  \node at (0.5*\x,-\y) {$\geq$}; 
  \node at (1.5*\x,0) {$\geq$};
  \node at (2.5*\x,0) {$\geq$};  
  \node at (2.5*\x,-\y) {$\geq$};
\end{tikzpicture} \, .
\end{equation}
As we proceed further in this chapter, we will obtain a more advanced understanding of these inequalities and whether they are tight or not.
The culminated results will later be summarized in~\Cref{fig:bounds_overview}.
Let us highlight once again that only the two quantities $\gamma_{\lo^{\star}}$ and $\gamma_{\locc^{\star}}$ have practical relevance in the context of circuit knitting.
They precisely characterize the optimal sampling overhead for the two technologically relevant cases when the smaller sub-partitions can exchange classical communication or not.
The main effort is therefore geared towards characterizing these two quantities, and the other ones purely serve as mathematical tools to find bounds.

When specializing from channels to states (i.e. channels with trivial input space), the situation in~\Cref{eq:channel_gamma_simple_bounds} simplifies significantly.
First of all, since LOCC is a resource theory that fulfills the axiom of free instruments, \Cref{lem:negativity_state_qpd} implies that the quasiprobability extent is not impacted by the inclusion of intermediate measurements
\begin{equation}\label{eq:state_qpd_locc}
  \forall\omega\in\herm(AB): \gamma_{\locc^{\star}(\mathbb{C};\mathbb{C}\rightarrow A;B)}(\omega) = \gamma_{\locc(\mathbb{C};\mathbb{C}\rightarrow A;B)}(\omega) = \gamma_{\sep(A;B)}(\omega)
\end{equation}
and we can focus our study purely on the quasiprobability extent w.r.t. the set of separable channels.
In fact, the same also holds for LO and SEPC\footnote{We note that~\Cref{lem:negativity_state_qpd} cannot be invoked for LO, since it doesn't fulfill the axiom of free instruments. Therefore, this statement needs to be re-proven explicitly for LO.}, and an analogous statement holds for PPTC.
\begin{lemma}{}{gammas_overview_nonlocal_state}
  Consider $\omega\in\herm(AB)$.
  If $\ds$ is either $\lo,\lo^{\star},\locc,\locc^{\star},\sepc$ or $\sepc^{\star}$, then
  \begin{equation}
    \gamma_{\ds(\mathbb{C};\mathbb{C}\rightarrow A;B)}(\omega) = \gamma_{\sep(A;B)}(\omega) \, .
  \end{equation}
  If instead $\ds$ is either $\pptc$ or $\pptc^{\star}$, then
  \begin{equation}
    \gamma_{\ds(\mathbb{C};\mathbb{C}\rightarrow A;B)}(\omega) = \gamma_{\ppt(A;B)}(\omega) \, .
  \end{equation}
\end{lemma}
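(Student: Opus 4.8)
The statement asserts that for a bipartite Hermitian operator $\omega\in\herm(AB)$, the quasiprobability extent with respect to any of the six decomposition sets $\lo,\lo^{\star},\locc,\locc^{\star},\sepc,\sepc^{\star}$ (all regarded as state-preparation sets, i.e.\ with trivial input space $\mathbb{C};\mathbb{C}$) equals $\gamma_{\sep(A;B)}(\omega)$, and the analogous collapse holds for $\pptc,\pptc^{\star}$ to $\gamma_{\ppt(A;B)}(\omega)$. The key observation is that in every case the relevant set of \emph{channels with trivial input} is isomorphic (via the Choi identification of $\cptp(\mathbb{C}\rightarrow A B)$ with $\dops(AB)$) to the corresponding set of free states, namely $\sep(A;B)$ for the first four theories and $\ppt(A;B)$ for the last. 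The plan is to reduce all six (respectively two) equalities to this single fact, using the sandwich inequalities already established in the excerpt.

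\textbf{Step 1: the $\qrt^{\star}$ versions reduce to the $\qrt$ versions.} For $\locc^{\star}$ and $\sepc^{\star}$ this is immediate from~\Cref{lem:negativity_state_qpd}, since LOCC and SEPC both fulfill the axiom of free instruments; for $\pptc^{\star}$ the same argument applies verbatim (PPTC also satisfies the axiom of free instruments, its free states being $\ppt(A;B)$), giving $\gamma_{\pptc^{\star}(\mathbb{C};\mathbb{C}\rightarrow A;B)}(\omega)=\gamma_{\ppt(A;B)}(\omega)$. The case $\lo^{\star}$ is the one genuine subtlety, since LO does \emph{not} satisfy the axiom of free instruments, so~\Cref{lem:negativity_state_qpd} cannot be invoked. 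However, when the input space is trivial, an instrument $(\cE_i)_i\in\instr{\lo}(\mathbb{C};\mathbb{C}\rightarrow A;B)$ is just a collection $\cE_i = \cE_i^A\otimes\cE_i^B$ of products of state preparations, i.e.\ each $\cE_i$ corresponds to a positive operator $p_i\,\rho_i\otimes\sigma_i$ with $\sum_i p_i=1$ and $\rho_i\in\dops(A),\sigma_i\in\dops(B)$. Hence any element $\sum_i\beta_i\cE_i\in\lo^{\star}(\mathbb{C};\mathbb{C}\rightarrow A;B)$ with $\abs{\beta_i}\le 1$ is of the form $\sum_i(\beta_i p_i)\,\rho_i\otimes\sigma_i$, which lies in $\aconv(\{\rho\otimes\sigma\}) = \aconv(\sep(A;B))$. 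Conversely every product state is an LO-free state, so $\aconv(\lo^{\star}(\mathbb{C};\mathbb{C}\rightarrow A;B)) = \aconv(\sep(A;B))$, and~\Cref{lem:gamma_hulls} finishes the $\lo^{\star}$ case. This same computation also re-proves the plain $\lo$ case directly.

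\textbf{Step 2: squeeze the remaining plain theories.} It now suffices to show $\gamma_{\locc(\mathbb{C};\mathbb{C}\rightarrow A;B)}(\omega)=\gamma_{\sepc(\mathbb{C};\mathbb{C}\rightarrow A;B)}(\omega)=\gamma_{\sep(A;B)}(\omega)$. By the identification of channels with trivial input with states, $\locc(\mathbb{C};\mathbb{C}\rightarrow A;B)$ is precisely the set of separable states $\sep(A;B)$ (the free states of LOCC), and likewise $\sepc(\mathbb{C};\mathbb{C}\rightarrow A;B)=\sep(A;B)$ (the free states of SEPC, by the Choi characterization). Thus all three sets literally coincide, and the equality of the extents is trivial once this identification is spelled out. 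Alternatively, and perhaps cleaner for the write-up, invoke the chain of inclusions~\Cref{eq:qrt_hierarchy} specialized to $A=A'=\mathbb{C}$ — wait, rather specialized to trivial input — which with~\Cref{lem:gamma_ds_bound} gives $\gamma_{\lo}\ge\gamma_{\locc}\ge\gamma_{\sepc}\ge\gamma_{\pptc}$ and similarly for the starred versions; combining with Step~1 and the fact (just argued) that $\gamma_{\lo(\mathbb{C};\mathbb{C}\rightarrow A;B)}(\omega)=\gamma_{\sep(A;B)}(\omega)$ pins everything down, provided one separately checks that $\sep(A;B)=\sepc(\mathbb{C};\mathbb{C}\rightarrow A;B)$ so that the SEPC extent also equals $\gamma_{\sep(A;B)}(\omega)$ rather than merely being bounded above by it.

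\textbf{Step 3: the PPTC case.} This is handled exactly like the SEPC case: $\pptc(\mathbb{C};\mathbb{C}\rightarrow A;B)$ is the set of states with positive partial transpose, $\ppt(A;B)$, by the Choi characterization of PPT channels, so $\gamma_{\pptc(\mathbb{C};\mathbb{C}\rightarrow A;B)}(\omega)=\gamma_{\ppt(A;B)}(\omega)$ by definition, and the starred version follows from Step~1. The only mild point to be careful about is that $\ppt(AA';BB')$ for $A'=B'=\mathbb{C}$ reduces to $\ppt(A;B)$, i.e.\ that the partial transpose on $BB'=B$ is the same map whether or not one tensors with a trivial factor — this is obvious but worth a phrase.

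\textbf{Main obstacle.} The genuinely non-routine part is the $\lo^{\star}$ case of Step~1: because LO fails the axiom of free instruments, one cannot cite~\Cref{lem:negativity_state_qpd}, and one must instead argue directly that with trivial input the starred LO decomposition set has the same absolute convex hull as $\sep(A;B)$. The argument is short (an instrument on trivial input is just a classical mixture of product states, so negative weights cannot enlarge the hull beyond $\aconv(\sep)$), but it is the one place where the structure of LO — as opposed to LOCC/SEPC/PPTC — actually matters, and it should be stated explicitly rather than swept into an appeal to the general machinery.
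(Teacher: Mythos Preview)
Your proposal is correct and the overall strategy—sandwiching via the hierarchy in~\Cref{eq:channel_gamma_simple_bounds} and identifying the endpoints with $\sep$ (resp.\ $\ppt$)—matches the paper's. The paper organizes the argument slightly more economically: rather than handling the six sets individually, it closes the entire chain $\gamma_{\lo}\geq\dots\geq\gamma_{\sepc^{\star}}$ with the single reverse inequality $\gamma_{\lo}(\omega)\leq\gamma_{\sepc^{\star}}(\omega)$, proved by taking an arbitrary $\sepc^{\star}$ QPD, expanding each instrument component into pure product states, and observing that the resulting coefficients still have $1$-norm at most $\norm{a}_1$. Your modular treatment—invoking~\Cref{lem:negativity_state_qpd} for LOCC/SEPC/PPTC and arguing $\aconv(\lo^{\star}(\mathbb{C};\mathbb{C}\rightarrow A;B))=\aconv(\sep(A;B))$ directly—is equally valid, and your identification of the $\lo^{\star}$ case as the one requiring genuine care (since LO fails the axiom of free instruments) is exactly the point the paper flags in the surrounding text. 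One small notational slip: an instrument in $\instr{\lo}$ is indexed by \emph{pairs} $(i,j)$ with $\cE_i\otimes\cF_j$, not by a single index with a product at each slot; this does not affect your conclusion, but you should write the unwinding accordingly.
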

So to summarize, in the case of states \Cref{eq:channel_gamma_simple_bounds} collapses to
\begin{equation}
  \gamma_{\sep} \geq \gamma_{\ppt} \, .
\end{equation}
\begin{proof}
  By~\Cref{eq:channel_gamma_simple_bounds,eq:state_qpd_locc}, it remains to show $\gamma_{\lo}(\omega)\leq\gamma_{\sepc^{\star}}(\omega)$ to prove the first equation.
  For this purpose, consider a QPD $\omega=\sum_i a_i\sigma_i$ with $\sigma_i\in\sepc^{\star}$.
  This means that we can write $\sigma_i = \sum_j \beta_j^{(i)} \tau_j^{(i)}$ where the $\tau_j^{(i)}$ are sub-normalized separable states that add up to a normalized separable state.
  We write $\tau_j^{(i)}=\sum_k p_{ijk}\proj{f_{ijk}}\otimes\proj{g_{ijk}}$ where $\sum_{j,k}p_{ijk}=1$.
  We obtain a valid QPD of $\omega$ w.r.t. $\lo$ (i.e. w.r.t. to separable states)
  \begin{equation}
    \omega = \sum_{i,j,k} a_i\beta_j^{(i)}p_{ijk} \proj{f_{ijk}}\otimes\proj{g_{ijk}}
  \end{equation}
  with 1-norm $\sum_{ijk} \abs{a_i}\abs{\beta_j^{(i)}}p_{ijk} \leq \sum_{ijk} \abs{a_i}p_{ijk} = \norm{a}_1$.

  For the $\pptc$ part, a very similar proof applies.
  Alternatively, it also follows from the more general statement in~\Cref{prop:negtrick_sepc_pptc} that we will prove later.
\end{proof}

\subsection{Utility of classical side information}\label{sec:gate_cut_side_info}
This section will focus on understanding the utility of classical side information from intermediate measurements for the task of QPS of nonlocal operations.
More mathematically, this means that we will compare $\gamma_{\lo^{\star}}$ with $\gamma_{\lo}$, $\gamma_{\locc^{\star}}$ with $\gamma_{\locc}$, $\gamma_{\sepc^{\star}}$ with $\gamma_{\sepc}$ and also $\gamma_{\pptc^{\star}}$ with $\gamma_{\pptc}$.
Recall that in the special case of states, we already saw previously in~\Cref{lem:gammas_overview_nonlocal_state} that intermediate measurements don't impact the extent.
We now study the more general case of superoperators.

In the LO setting, it turns out that classical side information is absolutely crucial to realize space-like cuts.
In fact, the only operations that we can simulate without intermediate measurements are non-signalling channels, which are essentially useless for the purpose of circuit knitting.
\begin{definition}{}{}
  A bipartite superoperator $\cE\in\hp(AB\rightarrow A'B')$ is called non-signalling if $\tr_{B'}[\cE(\rho)] = \tr_{B'}[\cE(\sigma)]$ for all $\rho_{AB},\sigma_{AB}\in\dops(AB)$ with $\tr_B[\rho_{AB}]=\tr_{B}[\sigma_{AB}]$ and vice versa if we exchange of $A\leftrightarrow B$, $A'\leftrightarrow B'$.
  We denote the set of all non-signalling superoperators by $\nonsig(A;B\rightarrow A;B')$.
\end{definition}
The idea is that if $\rho$ and $\sigma$ are indistinguishable to the party $A$, then the application of $\cE$ cannot change the statistics of any measurement that $A$ performs on its side afterwards.
\begin{lemma}{}{span_lo}
  The set of superoperators that exhibit a QPD under $\lo$ and $\lo^{\star}$ are 
  \begin{align}
    \spn_{\mathbb{R}}(\lo(A;B\rightarrow A';B'))
    &= \mathbb{R}\hptp(A\rightarrow A')\otimes \mathbb{R}\hptp(B\rightarrow B') \label{eq:span_lo_eq1}\\
    &= \nonsig(A;B\rightarrow A';B') \label{eq:span_lo_eq2}
  \end{align}
  and
  \begin{equation}\label{eq:span_lo_eq3}
    \spn_{\mathbb{R}}(\lo^{\star}(A;B\rightarrow A';B')) = \hp(AB\rightarrow A'B') \, .
  \end{equation}
\end{lemma}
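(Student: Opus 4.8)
The plan is to establish the three identities in turn, each time reducing to \Cref{lem:cptp_span} together with two elementary facts about tensor products of subspaces: (i) for any subsets $X,Y$ of the relevant superoperator spaces one has $\spn_{\mathbb{R}}\{\Phi\otimes\Psi\mid \Phi\in X,\Psi\in Y\}=\spn_{\mathbb{R}}(X)\otimes\spn_{\mathbb{R}}(Y)$, where for subspaces $\otimes$ denotes the span of elementwise tensor products; and (ii) for subspaces $V_i\subseteq W_i$ one has $(V_1\otimes W_2)\cap(W_1\otimes V_2)=V_1\otimes V_2$. Both are standard and follow by checking on bases. For \eqref{eq:span_lo_eq1}: since $\lo(A;B\to A';B')=\{\cE\otimes\cF\mid\cE\in\cptp(A\to A'),\cF\in\cptp(B\to B')\}$, fact (i) gives $\spn_{\mathbb{R}}(\lo)=\spn_{\mathbb{R}}(\cptp(A\to A'))\otimes\spn_{\mathbb{R}}(\cptp(B\to B'))$, and \eqref{eq:cptp_span_1} rewrites each factor as $\mathbb{R}\hptp$.

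For \eqref{eq:span_lo_eq2}, the inclusion ``$\subseteq$'' reduces, since $\nonsig$ is cut out by linear conditions, to checking the generators $\Phi\otimes\Psi$ with $\Phi\in\mathbb{R}\hptp(A\to A')$ and $\Psi\in\mathbb{R}\hptp(B\to B')$. Writing $\Psi=s\Psi_0$ with $\Psi_0$ trace-preserving, one computes $\tr_{B'}\bigl[(\Phi\otimes\Psi)(\rho_{AB})\bigr]=s\,\Phi(\tr_B[\rho_{AB}])$, which depends only on $\tr_B[\rho_{AB}]$, and symmetrically for the other direction. For ``$\supseteq$'': given a non-signalling $\cE$, fix a basis $\{\Psi_k\}$ of $\hp(B\to B')$ and write uniquely $\cE=\sum_k\Phi_k\otimes\Psi_k$ with $\Phi_k\in\hp(A\to A')$. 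Non-signalling from $A$ to $B$ means $\tr_{A'}\circ\cE=\Xi\circ(\tr_A\otimes\idchan_B)$ for some $\Xi\in\hp(B\to B')$; evaluating both sides on product inputs $X\otimes Y$ gives $\sum_k\tr[\Phi_k(X)]\,\Psi_k(Y)=\tr[X]\,\Xi(Y)$, and expanding $\Xi$ in the basis $\{\Psi_k\}$ and using linear independence forces $\tr[\Phi_k(X)]=\xi_k\tr[X]$ for all $X$, i.e.\ $\Phi_k\in\mathbb{R}\hptp(A\to A')$. Hence $\cE\in\mathbb{R}\hptp(A\to A')\otimes\hp(B\to B')$; the symmetric argument (using non-signalling from $B$ to $A$ and a basis of $\hp(A\to A')$) gives $\cE\in\hp(A\to A')\otimes\mathbb{R}\hptp(B\to B')$, and fact (ii) yields $\cE\in\mathbb{R}\hptp(A\to A')\otimes\mathbb{R}\hptp(B\to B')$. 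This ``$\supseteq$'' step — converting the two one-directional non-signalling constraints into membership in the tensor-product subspace — is the only place needing a genuine argument rather than bookkeeping, so I expect it to be the main obstacle.

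For \eqref{eq:span_lo_eq3}: ``$\subseteq$'' is immediate since $\lo^{\star}$ consists of Hermitian-preserving maps. For ``$\supseteq$'', I would first record that $\hp(AB\to A'B')=\hp(A\to A')\otimes\hp(B\to B')$, because a Hermitian operator on a tensor-product Hilbert space is a real-linear combination of tensor products of Hermitian operators, applied here to Choi matrices. Next, by \Cref{lem:characterization_expanded_decomposition_set} applied to the coarse-grainable and trivially fine-grainable set $\qi(A\to A')$ (and likewise for $B$), every superoperator of the form $(\cE^+-\cE^-)\otimes(\cF^+-\cF^-)$ with $(\cE^+,\cE^-)\in\qi(A\to A')$ and $(\cF^+,\cF^-)\in\qi(B\to B')$ lies in $\lo^{\star}$: it equals $\sum_{s,t\in\{+,-\}}st\,\cE^s\otimes\cF^t$, where $(\cE^s\otimes\cF^t)_{s,t}\in\instr{\lo}(A;B\to A';B')$ and the coefficients $\pm1$ lie in $[-1,1]$. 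Taking $\spn_{\mathbb{R}}$, using fact (i) together with \eqref{eq:cptp_span_2} (applied to $\cptp^{\star}=\qids[\qi]$ on each side), gives $\spn_{\mathbb{R}}(\lo^{\star})\supseteq\spn_{\mathbb{R}}(\cptp^{\star}(A\to A'))\otimes\spn_{\mathbb{R}}(\cptp^{\star}(B\to B'))=\hp(A\to A')\otimes\hp(B\to B')=\hp(AB\to A'B')$, which finishes the argument.

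Finally, a small amount of care is needed throughout regarding the notation $\mathbb{R}\hptp$, which is only a cone on the level of individual superoperators but is a genuine real subspace once one passes to spans (as guaranteed by \Cref{lem:cptp_span}); all the tensor-product manipulations above are to be read at the level of these spanned subspaces.
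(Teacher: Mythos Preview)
Your proof is correct. For \eqref{eq:span_lo_eq1} and \eqref{eq:span_lo_eq3} you take the same route as the paper --- reduce to \Cref{lem:cptp_span} on each tensor factor --- with somewhat more detail on how spans and tensor products interact. The genuine difference is in \eqref{eq:span_lo_eq2}: the paper does not prove this identity but defers to \cite[Theorem~13]{gutoski2009_properties} and \cite[Theorem~5.1]{cavalcanti2022_decomposing}, whereas you give a self-contained argument. Expanding $\cE$ against a basis of one tensor factor and reading off a trace-scaling constraint on each coefficient from a single non-signalling direction, then intersecting via $(V_1\otimes W_2)\cap(W_1\otimes V_2)=V_1\otimes V_2$, is a clean direct proof that makes the characterisation of $\nonsig$ transparent as two independent linear constraints, one per direction. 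Your closing caveat about reading $\mathbb{R}\hptp$ at the level of spans (equivalently, the trace-scaling Hermitian-preserving maps) is well placed: the literal set $\{r\cE\mid r\in\mathbb{R},\cE\in\hptp\}$ omits nonzero trace-annihilating maps, which your argument may produce when some $\xi_k$ vanishes.
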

Recall the definition of $\mathbb{R}\hptp$ in~\Cref{lem:cptp_span}.
Note that the tensor product here should be understood as the \emph{real} tensor product of two subspaces of the \emph{real} vector space of Hermitian-preserving superoperators.
\begin{proof}
  To show \Cref{eq:span_lo_eq1}, notice on one hand that $\lo(A;B\rightarrow A'B')\subset \mathbb{R}\hptp(A\rightarrow A')\otimes \mathbb{R}\hptp(B\rightarrow B')$.
  Conversely, by~\Cref{lem:cptp_span} any pure tensor in $\mathbb{R}\hptp(A\rightarrow A')\otimes \mathbb{R}\hptp(B\rightarrow B')$ can be written as a linear combination of LO channels.
  For the proof of~\Cref{eq:span_lo_eq2} we refer to~\cite[Theorem 13]{gutoski2009_properties} or~\cite[Theorem 5.1]{cavalcanti2022_decomposing}.

  Next, we consider~\Cref{eq:span_lo_eq3}.
  Clearly, $\spn_{\mathbb{R}}(\lo^{\star})\subset\hp$, so it remains to show the converse.
  Because $\hp(AB\rightarrow A'B')=\hp(A\rightarrow A')\otimes\hp(B\rightarrow B')$, the desired statement follows from~\Cref{lem:cptp_span}. 
\end{proof}

In contrast, classical side information is not required in the setting of LOCC to simulate all CPTP maps.
\begin{lemma}{}{span_locc}
  The set of superoperators that exhibit a QPD w.r.t. $\locc$ and $\locc^{\star}$ are 
  \begin{equation}\label{eq:span_locc_eq1}
    \spn_{\mathbb{R}}(\locc(A;B\rightarrow A';B')) = \mathbb{R}\hptp(AB\rightarrow A'B')
  \end{equation}
  and
  \begin{equation}\label{eq:span_locc_eq2}
    \spn_{\mathbb{R}}(\locc^{\star}(A;B\rightarrow A';B')) = \hp(AA'\rightarrow BB') \, .
  \end{equation}
\end{lemma}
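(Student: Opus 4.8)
Proof proposal.

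\medskip

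The plan is to prove the two equalities separately, the bulk of the work being the first one. For \Cref{eq:span_locc_eq1}, the inclusion $\spn_{\mathbb{R}}(\locc(A;B\rightarrow A';B'))\subset\mathbb{R}\hptp(AB\rightarrow A'B')$ is immediate: every LOCC channel is in particular CPTP, so by \Cref{lem:cptp_span} any real linear combination of them lies in $\mathbb{R}\hptp(AB\rightarrow A'B')$. For the reverse inclusion it suffices, again by \Cref{lem:cptp_span}, to show that an arbitrary $\cU\in\cptp(AB\rightarrow A'B')$ is a real linear combination of LOCC channels. Here I would invoke two-way gate teleportation: there is a fixed LOCC protocol $\Lambda$ (with two rounds of one-way communication) which, when supplied with the input on $AB$ together with two pre-shared maximally entangled states $\Phi^{(1)}$ of Schmidt rank $\dimension(A)$ and $\Phi^{(2)}$ of Schmidt rank $\dimension(A')$ across the cut, outputs exactly $\cU(\rho_{AB})$ — Alice teleports her share $A$ to Bob, Bob applies $\cU$ locally, Bob teleports the $A'$ half back to Alice, every Weyl correction being undone from the communicated outcomes. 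Crucially $\Lambda$ depends on $\cU$ only through a local operation on Bob's side.

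Since the set of separable states contains a ball around the maximally mixed state, it spans $\herm$, so the auxiliary state admits a QPD $\Phi^{(1)}\otimes\Phi^{(2)}=\sum_i c_i\,\alpha_i\otimes\beta_i$ into product states across the two sides (this is revisited in \Cref{sec:gamma_nonlocal_states}). Writing $\mathcal{Q}_i$ for the LO channel $\rho_{AB}\mapsto\rho_{AB}\otimes\alpha_i\otimes\beta_i$, substitution gives $\cU=\sum_i c_i\,(\Lambda\circ\mathcal{Q}_i)$, and each $\Lambda\circ\mathcal{Q}_i$ is a composition of LOCC channels, hence LOCC. This yields $\cptp(AB\rightarrow A'B')\subset\spn_{\mathbb{R}}(\locc)$ and therefore \Cref{eq:span_locc_eq1}.

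For \Cref{eq:span_locc_eq2}, the inclusion $\spn_{\mathbb{R}}(\locc^{\star})\subset\hp(AB\rightarrow A'B')$ holds because every element of $\locc^{\star}$ is by construction a difference of two completely positive maps, hence Hermitian-preserving. The reverse inclusion is then essentially free: by \Cref{eq:span_lo_eq3} of \Cref{lem:span_lo} we already have $\spn_{\mathbb{R}}(\lo^{\star}(A;B\rightarrow A';B'))=\hp(AB\rightarrow A'B')$, and $\lo^{\star}\subset\locc^{\star}$ by \Cref{eq:qrt_star_hierarchy}, so $\hp(AB\rightarrow A'B')\subset\spn_{\mathbb{R}}(\locc^{\star})$. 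The main obstacle is the rigorous bookkeeping in the gate-teleportation argument: checking that the protocol really is an LOCC channel with finitely many rounds, that the resulting $\Lambda$ is a fixed map modulo a local operation on Bob, and that appending a bipartite product state is implementable by LO; the only other non-trivial ingredient, the product-state QPD of $\Phi^{(1)}\otimes\Phi^{(2)}$, is a soft dimension-counting fact.
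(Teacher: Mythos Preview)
Your proof is correct, and for \Cref{eq:span_locc_eq2} it coincides with the paper's argument exactly. For \Cref{eq:span_locc_eq1}, however, you take a genuinely different route. The paper argues as follows: embed $A,B$ into multi-qubit systems, lift $\cE$ to a unitary $U_{ABE}$ by Stinespring dilation, compile $U_{ABE}$ into single-qubit gates and CNOTs, and then decompose each nonlocal CNOT individually via the gate-teleportation gadget of \Cref{fig:cnot_teleportation}. Tracing out $E$ gives the desired LOCC QPD.

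Your approach bypasses all of this. Rather than compiling to a circuit and cutting gate-by-gate, you teleport the whole $A$ register to Bob, apply $\cU$ locally on his side, and teleport $A'$ back --- a fixed two-round LOCC wrapper $\Lambda$ that consumes $\Phi^{(1)}\otimes\Phi^{(2)}$. Substituting a product-state QPD of the resource state then immediately exhibits $\cU$ as a real combination of LOCC maps. This is shorter, avoids the Stinespring/universal-gate-set detour, and works uniformly for any $\cU\in\cptp(AB\rightarrow A'B')$ regardless of whether $A,A',B,B'$ are qubit systems. The paper's route is more in the spirit of circuit knitting (cut each nonlocal gate), and makes the number of terms in the QPD explicit in terms of the CNOT count of a circuit for $\cU$; yours gives a cleaner existence proof with the overhead tied directly to $\dimension(A)\cdot\dimension(A')$.
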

Notice that classical side information is strictly necessary to simulate non-trace-preserving maps.
This is unsurprising, and in fact already implied by the result in~\Cref{lem:cptp_span}.
In this context, one can interpret~\Cref{lem:span_locc} as showing that $\locc$ and $\locc^{\star}$ span the ``largest possible'' spaces achievable by decomposition sets without and with side information.
\begin{proof}
  By~\Cref{lem:cptp_span}, we know that the inclusion $\subset$ must hold for~\Cref{eq:span_locc_eq1,eq:span_locc_eq2}.
  The converse side for~\Cref{eq:span_locc_eq2} follows from~\Cref{eq:span_lo_eq3} and $\lo^{\star}\subset\locc^{\star}$.
  It thus remains to show $\supset$ in~\Cref{eq:span_locc_eq1}.

  By~\Cref{lem:cptp_span}, this can be proven by showing that any $\cE\in\cptp(AB\rightarrow A'B')$ lies in $\spn_{\mathbb{R}}(\locc)$.
  We assume without loss of generality that $A=A'$ and $B=B'$ are multi-qubit systems, since any quantum system can be embedded into a large enough multi-qubit system.
  By the Stinespring dilation theorem, we can thus write $\cE$ as a unitary evolution on some enlarged Hilbert space
  \begin{equation}
    \cE(\rho_{AB}) = \tr_E\left[ U_{ABE} (\rho_{AB}\otimes\proj{0}_E) U_{ABE}^{\dagger}\right]
  \end{equation}
  where $U_{ABE}\in\uni(ABE)$.
  It is well-known that any unitary can be decomposed into CNOT gates and single-qubit unitaries~\cite{barenco1995_elementary}.
  We have already seen that CNOT gates can be decomposed into LOCC operations in~\Cref{fig:cnot_teleportation}.
  By decomposing all CNOT gates in the circuit of $U_{ABE}$ that act nonlocally across $A$ and $BE$, we can write $\indsupo{U_{ABE}}$ as a QPD of channels in $\locc(A;BE\rightarrow A;BE)$.
  By tracing out $E$, these channels induce a QPD of $\cE$ into $\locc(A;B\rightarrow A;B)$.
\end{proof}

As a direct consequence of~\Cref{eq:qrt_hierarchy,eq:qrt_star_hierarchy} as well as~\Cref{lem:cptp_span,lem:span_locc}, the analogous result also holds in the SEPC and PPTC settings.
\begin{corollary}{}{}
  The set of superoperators that exhibit a QPD w.r.t. SEPC and PPTC are
  {\small
  \begin{equation}
    \spn_{\mathbb{R}}(\sepc(A;B\rightarrow A';B')) = \spn_{\mathbb{R}}(\pptc(A;B\rightarrow A';B')) = \mathbb{R}\hptp(AB\rightarrow A'B')
  \end{equation}
  }
  and
  {\small
  \begin{equation}
    \spn_{\mathbb{R}}(\sepc^{\star}(A;B\rightarrow A';B')) = \spn_{\mathbb{R}}(\pptc^{\star}(A;B\rightarrow A';B')) = \hp(AA'\rightarrow BB') \, .
  \end{equation}
  }
\end{corollary}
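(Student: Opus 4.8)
The plan is to prove both equalities by a squeezing argument: the sets $\locc$, $\sepc$, $\pptc$ (and their starred versions) are sandwiched, via the hierarchies in~\Cref{eq:qrt_hierarchy,eq:qrt_star_hierarchy}, between $\locc$ at the bottom and the full sets $\cptp$ respectively $\cptp^{\star}$ at the top, and the real spans of the latter are already identified in~\Cref{lem:span_locc} and~\Cref{lem:cptp_span}. Since these two outer spans coincide, everything in between is forced to coincide as well.

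Concretely, for the first equality I would first invoke $\locc(A;B\rightarrow A';B')\subset\sepc(A;B\rightarrow A';B')\subset\pptc(A;B\rightarrow A';B')$ from~\Cref{eq:qrt_hierarchy} to obtain the nested inclusions of real spans, the bottom one being $\mathbb{R}\hptp(AB\rightarrow A'B')$ by~\Cref{eq:span_locc_eq1}. For the reverse direction, I would observe that $\pptc(A;B\rightarrow A';B')\subset\cptp(AB\rightarrow A'B')$ by definition, so its span is contained in $\spn_{\mathbb{R}}(\cptp(AB\rightarrow A'B'))=\mathbb{R}\hptp(AB\rightarrow A'B')$ by~\Cref{eq:cptp_span_1}. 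Chaining these two inclusions pins every span appearing in that line to $\mathbb{R}\hptp(AB\rightarrow A'B')$.

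The starred case is entirely parallel. From~\Cref{eq:qrt_star_hierarchy} one has $\locc^{\star}\subset\sepc^{\star}\subset\pptc^{\star}$ on the fixed systems, hence the corresponding nesting of spans, whose bottom term is $\hp(AA'\rightarrow BB')$ by~\Cref{eq:span_locc_eq2}. For the upper bound I would note that $\instr{\pptc}(A;B\rightarrow A';B')\subset\qi(AB\rightarrow A'B')$ and that $\qids[\cdot]$ is monotone under inclusion, so $\pptc^{\star}(A;B\rightarrow A';B')\subset\cptp^{\star}(AB\rightarrow A'B')$, whence its span lies in $\hp(AB\rightarrow A'B')$ by~\Cref{eq:cptp_span_2}; alternatively, the characterization in~\Cref{eq:char_pptc_star} already exhibits every element of $\pptc^{\star}$ as a difference of completely positive maps, which is Hermitian-preserving. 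Again the chain collapses, giving the common value $\hp(AA'\rightarrow BB')$ for all three spans. There is no genuine obstacle here; the only care needed is the bookkeeping of tensor-factor relabelings, since $\hp(AB\rightarrow A'B')$ and $\hp(AA'\rightarrow BB')$ denote the same (basis-independent) space of Hermitian-preserving superoperators on the combined system.
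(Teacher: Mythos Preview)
Your proposal is correct and takes essentially the same approach as the paper, which states the corollary as a direct consequence of \Cref{eq:qrt_hierarchy,eq:qrt_star_hierarchy} together with \Cref{lem:cptp_span,lem:span_locc}. Your squeezing argument is precisely the intended one; the remark about $\hp(AA'\rightarrow BB')$ versus $\hp(AB\rightarrow A'B')$ reflects a typo in the paper's statement rather than any genuine relabeling subtlety.
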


In fact, for this case there is a simple argument that the quasiprobability extent of any quantum channel is identical with and without classical side information.

\begin{proposition}{}{negtrick_sepc_pptc}
  For any $\cE\in\mathbb{R}\hptp(AB\rightarrow A'B')$ one has $\gamma_{\pptc}(\cE)=\gamma_{\pptc^{\star}}(\cE)$ and $\gamma_{\sepc}(\cE)=\gamma_{\sepc^{\star}}(\cE)$.
\end{proposition}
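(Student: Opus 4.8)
The plan is to reuse, essentially verbatim, the strategy of \Cref{prop:utility_cptp_interm_mmts} (the $\cptp$ versus $\cptp^{\star}$ case): show that any quasiprobability decomposition of $\cE$ with respect to the \emph{starred} set can be rewritten as one with respect to the \emph{unstarred} set without increasing the coefficient $1$-norm. I would carry out the PPT case in detail and then observe that the SEP case is identical after replacing every occurrence of ``the Choi operator has positive partial transpose'' by ``the normalized Choi operator is separable'', using that both properties survive multiplication of a positive operator by a non-negative scalar.

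Since $\pptc\subset\pptc^{\star}$ and $\sepc\subset\sepc^{\star}$, the inequalities $\gamma_{\pptc}(\cE)\geq\gamma_{\pptc^{\star}}(\cE)$ and $\gamma_{\sepc}(\cE)\geq\gamma_{\sepc^{\star}}(\cE)$ are immediate from \Cref{lem:gamma_ds_bound}, so only the reverse direction needs work. Both $\pptc^{\star}$ and $\sepc^{\star}$ are absolutely convex (recall the discussion around \Cref{eq:char_pptc_star} and \Cref{eq:char_sepc_star}), so by \Cref{lem:one_element_qpd} it suffices to start from an arbitrary single-element decomposition $\cE=\kappa\,\cF$ with $\kappa\geq 0$ and $\cF\in\pptc^{\star}$ (resp. $\sepc^{\star}$) and build from it a $\pptc$- (resp. $\sepc$-) decomposition of $\cE$ of coefficient $1$-norm at most $\kappa$; taking the infimum over such $\kappa$ then yields the claim.

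For the core step I would use the characterization \Cref{eq:char_pptc_star} to write $\cF=\cE^+-\cE^-$ with $(\cE^+,\cE^-)\in\qi(AB\rightarrow A'B')$ and $(\idchan_{AA'}\otimes\transpose_{BB'})(\choi{\cE^{\pm}})\loewnergeq 0$. If one of $\cE^{\pm}$ vanishes then $\cF$ (or $-\cF$) is already a PPT channel, because $\cE^++\cE^-$ is trace-preserving, and we are done. Otherwise, the hypothesis $\cE\in\mathbb{R}\hptp(AB\rightarrow A'B')$ forces $\cF=\cE^+-\cE^-\in\mathbb{R}\hptp(AB\rightarrow A'B')$, while $\cE^++\cE^-\in\cptp(AB\rightarrow A'B')\subset\mathbb{R}\hptp(AB\rightarrow A'B')$ since $(\cE^+,\cE^-)$ is an instrument; adding and subtracting shows that each $\cE^{\pm}$ lies in $\mathbb{R}\hptp(AB\rightarrow A'B')$, i.e. is a completely positive map proportional to a channel. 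Setting $t^{\pm}\coloneqq\tr[\choi{\cE^{\pm}}]>0$, the maps $\cE^{\pm}/t^{\pm}$ are genuine quantum channels whose Choi operators are positive rescalings of $\choi{\cE^{\pm}}$, hence still PPT, so $\cE^{\pm}/t^{\pm}\in\pptc(AB\rightarrow A'B')$; and $\cE^++\cE^-\in\cptp$ pins down $t^++t^-=1$. The decomposition
\begin{equation}
  \cE=\kappa(\cE^+-\cE^-)=(\kappa t^+)\,\frac{\cE^+}{t^+}-(\kappa t^-)\,\frac{\cE^-}{t^-}
\end{equation}
is then a valid $\pptc$-QPD of coefficient $1$-norm $\kappa t^++\kappa t^-=\kappa$, as required. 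The SEP case is word-for-word the same, with separability of $\choi{\cE^{\pm}}/t^{\pm}$ inherited from that of $\choi{\cE^{\pm}}/\tr[\choi{\cE^{\pm}}]$.

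I do not expect a real obstacle here: the argument is bookkeeping around two observations, namely that PPT-ness (resp. separability) of a Choi operator is invariant under scaling by $1/t^{\pm}>0$, and that $\cE^++\cE^-$ being trace-preserving forces $t^++t^-=1$, which is exactly what makes the $1$-norm collapse back to $\kappa$. The only subtle point is that the step ``$\cE^{\pm}\in\mathbb{R}\hptp$'' uses the assumption $\cE\in\mathbb{R}\hptp$ in an essential way; for a general Hermitian-preserving $\cE$ the conclusion would fail, consistent with \Cref{lem:span_locc} showing that the starred sets span a strictly larger space.
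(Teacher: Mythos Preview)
Your proof is correct and follows essentially the same route as the paper's: reduce to a single-element decomposition $\cE=\kappa\cF$ via absolute convexity, split $\cF=\cF^+-\cF^-$ using the instrument characterization of $\pptc^{\star}$ (resp.\ $\sepc^{\star}$), use the $\mathbb{R}\hptp$ hypothesis together with $\cF^++\cF^-\in\cptp$ to force each $\cF^{\pm}$ to lie in $\mathbb{R}\hptp$, and then normalize by $t^{\pm}=\tr[\choi{\cF^{\pm}}]$ to obtain a two-term QPD into $\pptc$ (resp.\ $\sepc$) of $1$-norm exactly $\kappa$. Your observation that $t^++t^-=1$ is the explicit version of the paper's terse ``$\kappa(\tr[\choi{\cF^+}]+\tr[\choi{\cF^-}])=\kappa$''.
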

\begin{proof}
  In the following, $\ds$ may denote either $\sepc$ or $\pptc$.
  Clearly, $\gamma_{\ds}\geq\gamma_{\ds^{\star}}$ by~\Cref{lem:gamma_ds_bound}, so we only need to show the converse.
  Consider a general QPD of $\cE$ w.r.t. $\ds^{\star}$.
  Without loss of generality, we can assume it to only have one element by~\Cref{lem:one_element_qpd}, i.e., $\cE=\kappa\cF$, $\cF\in\ds^{\star}$.
  By the characterization of $\ds^{\star}$ (either~\Cref{eq:char_sepc_star} or~\Cref{eq:char_pptc_star}), we can write $\cF=\cF^+ - \cF^-$.
  If either $\cF^+=0$ or $\cF^-=0$, then we already have a QPD into $\ds$.
  Otherwise, observe that both $\choi{\cF^+}+\choi{\cF^-}$ and $\choi{\cF^+}-\choi{\cF^-}$ are in $\mathbb{R}\hptp$.
  Clearly, this is only possible if both $\choi{\cF^+}$ and $\choi{\cF^-}$ are themselves in $\mathbb{R}\hptp$.
  The following is therefore a valid QPD of $\cE$ w.r.t. $\ds$
  \begin{equation}
    \cE = \kappa\tr[\choi{\cF^+}] \frac{\cF^+}{\tr[\choi{\cF^+}]} - \kappa\tr[\choi{\cF^-}] \frac{\cF^-}{\tr[\choi{\cF^-}]}
  \end{equation}
  and the 1-norm of its coefficients is given by $\kappa(\tr[\choi{\cF^+}] + \tr[\choi{\cF^-}]) = \kappa$.

\end{proof}

In summary, we have already established some interesting observations from rather simple mathematical arguments.
While classical side information is absolutely crucial for space-like cuts in the LO setting, it is completely unnecessary for SEPC and PPTC (unless one wants to simulate non-trace-preserving maps).
The setting LOCC lies somewhere in the middle, and we know that intermediate measurements are not required in order to enable QPS of general CPTP maps.
However, it remains an open question whether there exists some CPTP map $\cE$ for which $\gamma_{\locc^{\star}}(\cE)$ is strictly smaller than $\gamma_{\locc}(\cE)$.

\subsection{Quasiprobability extent of states}\label{sec:gamma_nonlocal_states}
Before we delve into the most general setting by studying the quasiprobability extent of arbitrary channels, we first focus on the comparatively easier task of characterizing the quasiprobability extent $\gamma_{\sep(A;B)}(\rho_{AB})$ for bipartite states $\rho_{AB}\in\dops(AB)$ (recall from~\Cref{lem:gammas_overview_nonlocal_state} that $\gamma_{\locc^{\star}}(\rho)=\gamma_{\lo^{\star}}(\rho)=\gamma_{\sep}(\rho)$).
In many ways, SEP is a well-behaved set and many properties from~\Cref{chap:qpsim} are directly applicable to $\gamma_{\sep}$.
For instance, since SEP is convex and closed, we can write its quasiprobability extent as
\begin{equation}
  \gamma_{\sep}(\rho_{AB}) = \min\{a^++a^- | \rho=a^+\sigma^+-a^-\sigma^-, \sigma^{\pm}\in\sep(A;B),a^{\pm}\geq 0\}
\end{equation}
(see~\Cref{lem:two_element_qpd,lem:compact_ds}).
Furthermore, $\gamma_{\sep}$ is invariant under local unitaries (see~\Cref{cor:gamma_invariant_reversible}) and it can be expressed in terms of a robustness measure
\begin{equation}
  \gamma_{\sep}(\rho_{AB}) = 1 + 2R_{\sep}(\rho_{AB})
\end{equation}
(see~\Cref{lem:robustness1}) where
\begin{equation}
   R_{\sep}(\rho_{AB}) \coloneqq \min\{t\geq 0 | \frac{\rho_{AB}+t\sigma_{AB}}{1+t}\in\sep(A;B) \text{ for some } \sigma_{AB}\in\sep(A;B)\}
\end{equation}
The quantity $R_{\sep}$ is commonly referred to as the \emph{robustness of entanglement} and it was first introduced by Vidal and Tarrach in \reference~\cite{vidal1999_robustness}.
The original motivation for the robustness of entanglement was to capture the amount of (separable) noise that a certain entangled state can tolerate before it becomes itself separable.
Since then, it has found applications in various results on entanglement manipulation~\cite{liu2019_oneshot,brandao2010_generalization,brandao2015_reversible,lami2023_computable}.

Furthermore, we note that $\gamma_{\sep}$ is Lipschitz continuous.
\begin{proposition}{}{gamma_sep_lipschitz}
  Let $\omega,\tau\in\herm(AB)$ and $d_A\coloneqq\dimension(A)$, $d_B\coloneqq \dimension(B)$.
  Then
  \begin{equation}
    \forall \omega,\tau\in\herm(AB):
    \abs{\gamma_{\sep}(\omega)-\gamma_{\sep}(\tau)}
    \leq
    2^{\frac{3}{2}\left( \ceil{\log(d_A)} + \ceil{\log(d_B)} \right)} \norm{\omega-\tau}_2  \, .
  \end{equation}
\end{proposition}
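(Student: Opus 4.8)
The plan is to embed the problem into a qubit setting and then directly invoke the general Lipschitz result from~\Cref{ex:modifiedendobasis}. Recall that in~\Cref{ex:modifiedendobasis} it was shown that for any decomposition set $\ds$ on an $n$-qubit system that contains the orthogonal basis $B^{\otimes n}$ of~\Cref{tab:modifiedendobasis}, the quasiprobability extent $\gamma_{\ds}$ is Lipschitz continuous on $\spn_{\mathbb{R}}(\ds)$ with Lipschitz constant $2^{5n/2}$; this follows from~\Cref{lem:gamma_lipschitz} together with the fact that the Gram matrix of $B^{\otimes n}$ has smallest eigenvalue $2^{-3n}$. The key point is that the basis elements $\mathcal{B}_i$ are all implementable with single-qubit Clifford gates, computational-basis measurements, classical randomness and classical side information, so their tensor products are \emph{local} operations and in particular lie in $\sep$ (viewed as state-preparation maps). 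Hence for qubit systems the set $\sep$ on the relevant space contains the required orthogonal basis.

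First I would reduce to the qubit case. Let $d_A' \coloneqq 2^{\ceil{\log(d_A)}}$ and $d_B' \coloneqq 2^{\ceil{\log(d_B)}}$, so that $A$ embeds isometrically into a multi-qubit system $\tilde A$ of dimension $d_A'$ and $B$ into $\tilde B$ of dimension $d_B'$. Via these isometric embeddings $V_A, V_B$, a bipartite Hermitian operator $\omega \in \herm(AB)$ maps to $\tilde\omega \coloneqq (V_A\otimes V_B)\,\omega\,(V_A\otimes V_B)^\dagger \in \herm(\tilde A\tilde B)$, and $\norm{\tilde\omega - \tilde\tau}_2 = \norm{\omega-\tau}_2$ since conjugation by an isometry preserves the Hilbert–Schmidt norm. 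The second ingredient is that this embedding does not change the separable-extent: $\gamma_{\sep(A;B)}(\omega) = \gamma_{\sep(\tilde A;\tilde B)}(\tilde\omega)$. The inequality $\geq$ holds because any QPD $\omega = \sum_i a_i \sigma_i$ into $\sep(A;B)$ gives $\tilde\omega = \sum_i a_i (V_A\otimes V_B)\sigma_i(V_A\otimes V_B)^\dagger$ with separable terms; the inequality $\leq$ holds by applying the local CPTP maps $X \mapsto V_A^\dagger X V_A + \tr[(\id - V_AV_A^\dagger)X]\,\rho_A^0$ (and similarly on $B$) to a QPD of $\tilde\omega$, where $\rho_A^0$ is any fixed state on $A$ — these are local operations, hence map separable states to separable states, and they invert the embedding on the relevant subspace.

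Then I would apply the qubit result: on the $n$-qubit system $\tilde A\tilde B$ with $n = \ceil{\log(d_A)} + \ceil{\log(d_B)}$, the set $\sep(\tilde A;\tilde B)$ contains $B^{\otimes n}$ (as the basis elements are local operations with classical side information, hence in particular separable), so $\gamma_{\sep(\tilde A;\tilde B)}$ is Lipschitz on $\spn_{\mathbb{R}}(\sep(\tilde A;\tilde B)) = \herm(\tilde A\tilde B)$ with constant $2^{5n/2} = 2^{\frac{5}{2}(\ceil{\log(d_A)}+\ceil{\log(d_B)})}$. Combining with the embedding identities gives the bound $\abs{\gamma_{\sep}(\omega)-\gamma_{\sep}(\tau)} \leq 2^{\frac{5}{2}(\ceil{\log(d_A)}+\ceil{\log(d_B)})}\norm{\omega-\tau}_2$. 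I note the proposition states the sharper exponent $\tfrac{3}{2}$ rather than $\tfrac{5}{2}$; to recover this one must use a dimension-aware version of~\Cref{lem:gamma_upper_bound}, tracking that for state QPDs one gains a factor $\norm{\choi{\cE}}_2 = \norm{\omega}_2$ rather than the worst-case bound, and that the relevant Choi dimension for states is just $d_A'd_B' = 2^n$ rather than $(d_A'd_B')^2$. Concretely, for states the inequality $\gamma_{\sep}(\Delta) \leq \sqrt{m}\,\norm{\lambda}_2 \leq \sqrt{m/\sigma_{\min}(G)}\,\norm{\omega}_2$ with $m = 4^{n}$... — wait, more carefully: for a Hermitian \emph{operator} (not superoperator) on a $2^n$-qubit-labelled space the relevant Gram analysis gives smallest eigenvalue $2^{-n}$ for an orthonormal-up-to-normalization basis, yielding constant $\sqrt{2^n \cdot 2^n} = 2^n$ per the naive count, and a more careful accounting along the lines of the state-specialization of~\Cref{lem:gamma_lipschitz} produces $2^{3n/2}$. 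The main obstacle is exactly this bookkeeping: reproving the state-version of Lemmas~\ref{lem:gamma_upper_bound} and~\ref{lem:gamma_lipschitz} with the tighter dimension count, and confirming that the orthogonal Hermitian basis of $(\mathbb{C}^2)^{\otimes n}$ built from tensor products of $\{\id/\sqrt2, X/\sqrt2, Y/\sqrt2, Z/\sqrt2\}$ (all separable as states) has Gram matrix with smallest eigenvalue $2^{-n}$, which gives the stated exponent $\tfrac{3}{2}(\ceil{\log d_A} + \ceil{\log d_B})$.
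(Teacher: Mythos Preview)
Your embedding argument (reducing to qubit systems and showing $\gamma_{\sep}$ is invariant under the local isometric embedding) is essentially the same as the paper's and is fine. The gap is entirely in the qubit case.

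First, the basis $B^{\otimes n}$ from \Cref{ex:modifiedendobasis} is a basis of \emph{superoperators} in $\hp(A)$, not of Hermitian \emph{operators} in $\herm(A)$. It cannot be used to expand a state $\omega$. You partly notice this and pivot to tensor products of $\{\id/\sqrt2,X/\sqrt2,Y/\sqrt2,Z/\sqrt2\}$, but two things go wrong there. (i) These are not ``separable as states'': they are not positive, hence not states at all. What is needed (and what the paper does) is to show that the basis lies in $\aconv(\sep)$, since by \Cref{lem:gamma_hulls} one may replace $\sep$ by its absolute convex hull. The correct normalization for this is $\tfrac{1}{2^{n+m}}Q_i$, not $\tfrac{1}{\sqrt{2}^{n+m}}Q_i$: a Pauli string $Q_i$ has a product eigenbasis with eigenvalues $\pm1$, so $\tfrac{1}{2^{n+m}}Q_i = \sum_k (\pm\tfrac{1}{2^{n+m}})\proj{e_k}$ is an absolutely convex combination of product states. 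With your $\sqrt{2}$-normalization the coefficients sum to $2^{(n+m)/2}>1$, so the operators do \emph{not} lie in $\aconv(\sep)$ and \Cref{lem:gamma_lipschitz} cannot be invoked. (ii) Your bookkeeping is internally inconsistent: a Gram eigenvalue of $2^{-N}$ together with $\dim=2^N$ would give Lipschitz constant $\sqrt{2^N/2^{-N}}=2^N$, not $2^{3N/2}$.

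The paper's fix is exactly the missing piece: work in $\aconv(\sep)$, use the basis $\tfrac{1}{2^{n+m}}Q_i$, compute its Gram eigenvalue as $2^{-2(n+m)}$, and then \Cref{lem:gamma_lipschitz} with $\dim(AB)=2^{n+m}$ yields $\sqrt{2^{n+m}/2^{-2(n+m)}}=2^{\frac{3}{2}(n+m)}$.
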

\begin{proof}
  We first consider the case where $A$ and $B$ are $n$-qubit and $m$-qubit system.
  By~\Cref{lem:gamma_hulls} we can consider $\gamma_{\aconv(\sep)}$ instead of $\gamma_{\sep}$.
  Denote the $n+m$-qubit Pauli strings by $\{Q_0,\dots,Q_{4^{n+m}-1}\}$.
  Notice that the normalized Pauli matrices $\frac{1}{2^{n+m}}Q_i$ all lie in $\aconv(\sep(A;B))$, since they each have an eigenbasis of product states and eigenvalues $-1$ and $1$.
  Furthermore, they form an orthogonal basis of $\herm(AB)$, so we can apply~\Cref{lem:gamma_lipschitz}.
  The Gram matrix is diagonal and its smallest eigenvalues is $2^{-2(n+m)}$.

  We now turn our attention to the general case.
  Consider embedding $A$ and $B$ into large enough $n$-qubit and $m$-qubit systems $\bar{A}=A\oplus A^{\perp}$ and $\bar{B}=B\oplus B^{\perp}$.
  Define by $\mathrm{Emb}_{A\rightarrow\bar{A}}$ and $\mathrm{Emb}_{B\rightarrow\bar{B}}$ the CPTP maps that naturally embed a state into the larger space.
  Correspondingly, denote by $\mathrm{Proj}_{\bar{A}\rightarrow A}$ and $\mathrm{Proj}_{\bar{B}\rightarrow B}$ the CPTN maps $\rho\mapsto \Pi\rho\Pi$ where $\Pi$ is the projection onto $A$ or $B$ respectively.
  Define the extended state $\omega\in\herm(\bar{A}\bar{B})$ by
  \begin{equation}
    \bar{\omega}\coloneqq (\mathrm{Emb}_{A\rightarrow\bar{A}}\otimes\mathrm{Emb}_{B\rightarrow\bar{B}})(\omega)
  \end{equation}
  which clearly fulfills
  \begin{equation}
    \omega = (\mathrm{Proj}_{\bar{A}\rightarrow A}\otimes\mathrm{Emb}_{\bar{B}\rightarrow B})(\bar{\omega}) \, .
  \end{equation}
  Since both $(\mathrm{Emb}_{A\rightarrow\bar{A}}\otimes\mathrm{Emb}_{B\rightarrow\bar{B}})(\omega)$ and $(\mathrm{Proj}_{\bar{A}\rightarrow A}\otimes\mathrm{Emb}_{\bar{B}\rightarrow B})$ lie in $\locc^{\star}$, the chaining property in~\Cref{cor:gamma_state_from_channel} implies that $\gamma_{\sep}(\omega)=\gamma_{\sep}(\bar{\omega})$.
  Finally, notice that that by appropriately choosing a basis, the embedding does not change the non-zero matrix entries of a matrix. Hence, for two extended states one has
  \begin{equation}
    \norm{\bar{\omega}-\bar{\tau}}_2
    =
    \norm{\omega-\tau}_2 \, .
  \end{equation}
  Hence,
  \begin{align}
    \abs{\gamma_{\sep}(\omega)-\gamma_{\sep}(\tau)}
    &= \abs{\gamma_{\sep}(\bar{\omega})-\gamma_{\sep}(\bar{\tau})} \\
    &\leq 2^{\frac{3}{2}(n+m)}\norm{\bar{\omega}-\bar{\tau}}_2 \\
    &= 2^{\frac{3}{2}(n+m)} \norm{\omega-\tau}_2 \, .
  \end{align}
\end{proof}

While $\gamma_{\sep}$ is the quantity that we aim to characterize due to its operational importance, we will also study the quantity $\gamma_{\ppt}$.
It provides a lower bound to $\gamma_{\sep}$ and it is efficiently computable:
Since $\ppt$ is itself also convex and closed, we can analogously write
\begin{equation}
  \gamma_{\ppt}(\rho_{AB}) = \min\{a^++a^- | \rho=a^+\sigma^+-a^-\sigma^-, \sigma^{\pm}\in\ppt(A;B),a^{\pm}\geq 0\}
\end{equation}
which through a simple substitution can be expressed in the form of a SDP
\begin{equation}\label{eq:gamma_ppt_sdp}
  \gamma_{\ppt}(\rho) = \left \lbrace
  \begin{array}{r l}
    \min\limits_{\sigma^{\pm}\in\pos(AB)} & \tr[\sigma^+] + \tr[\sigma^-] \\
    \textnormal{s.t.} & \rho =\sigma^+ - \sigma^- \\
    & (\idchan_A\otimes\transpose_B)(\sigma^{\pm}) \loewnergeq 0
  \end{array} \right .
\end{equation}
This SDP characterization enables the efficient numerical evaluation of $\gamma_{\ppt}$ and also serves as a valuable analytical tool for proving the optimality of an QPD.
\begin{lemma}{}{gamma_ppt_sdp_dual}
  The following is a valid dual SDP formulation of~\Cref{eq:gamma_ppt_sdp}.
  \begin{equation}
    \gamma_{\ppt}(\rho) = \left \lbrace
    \begin{array}{r l}
      \max\limits_{X\in\herm(AB),Z^{\pm}\in\pos(AB)} & \tr[X\rho] \\
      \textnormal{s.t.} & -\id_{AB} + (\idchan_A\otimes\transpose_B)(Z^-)\loewnerleq X \\
      & X\loewnerleq \id_{AB} - (\idchan_A\otimes\transpose_B)(Z^+)
    \end{array} \right .
  \end{equation}
\end{lemma}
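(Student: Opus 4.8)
The plan is to derive the dual via standard Lagrangian duality applied to the primal SDP in~\Cref{eq:gamma_ppt_sdp}, and then invoke Slater's condition to conclude that there is no duality gap. Throughout, I would denote by $\langle A,B\rangle\coloneqq\tr[A^{\dagger}B]$ the unnormalized Hilbert-Schmidt inner product, and use the fact that the partial transpose $\idchan_A\otimes\transpose_B$ is self-adjoint with respect to it, i.e. $\langle Z,(\idchan_A\otimes\transpose_B)(\sigma)\rangle=\langle(\idchan_A\otimes\transpose_B)(Z),\sigma\rangle$, which is a short direct computation on matrix entries.

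First I would set up the Lagrangian, following the same style as the proof of~\Cref{lem:gamma_cptp_sdp_dual}: keep the conic constraints $\sigma^{\pm}\loewnergeq 0$ as the domain of optimization, introduce a dual variable $X\in\herm(AB)$ for the equality constraint $\rho=\sigma^+-\sigma^-$, and introduce dual variables $Z^{\pm}\in\pos(AB)$ for the two inequality constraints $(\idchan_A\otimes\transpose_B)(\sigma^{\pm})\loewnergeq 0$. This gives
\begin{align}
  L(\sigma^{\pm};X,Z^{\pm})
  &= \langle\id_{AB},\sigma^+\rangle + \langle\id_{AB},\sigma^-\rangle + \langle X,\rho-\sigma^++\sigma^-\rangle \nonumber\\
  &\quad - \langle Z^+,(\idchan_A\otimes\transpose_B)(\sigma^+)\rangle - \langle Z^-,(\idchan_A\otimes\transpose_B)(\sigma^-)\rangle \, .
\end{align}
Using self-adjointness of the partial transpose to move it onto $Z^{\pm}$ and collecting the terms linear in $\sigma^{\pm}$, this rewrites as
\begin{equation}
  L = \langle X,\rho\rangle + \langle\sigma^+,\id_{AB}-X-(\idchan_A\otimes\transpose_B)(Z^+)\rangle + \langle\sigma^-,\id_{AB}+X-(\idchan_A\otimes\transpose_B)(Z^-)\rangle \, .
\end{equation}

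Next I would compute the dual function $g(X,Z^{\pm})\coloneqq\inf_{\sigma^{\pm}\loewnergeq 0}L$. Since $L$ is linear in $\sigma^{\pm}$ and the infimum ranges over the positive semidefinite cone, $g$ is finite exactly when the coefficient operators of $\sigma^+$ and $\sigma^-$ are both positive semidefinite, namely $\id_{AB}-X-(\idchan_A\otimes\transpose_B)(Z^+)\loewnergeq 0$ and $\id_{AB}+X-(\idchan_A\otimes\transpose_B)(Z^-)\loewnergeq 0$; in that case the infimum is attained at $\sigma^{\pm}=0$ and equals $\langle X,\rho\rangle=\tr[X\rho]$, and otherwise it is $-\infty$. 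Maximizing $g$ over $X\in\herm(AB)$, $Z^{\pm}\in\pos(AB)$ and rewriting the two semidefinite conditions as $-\id_{AB}+(\idchan_A\otimes\transpose_B)(Z^-)\loewnerleq X$ and $X\loewnerleq\id_{AB}-(\idchan_A\otimes\transpose_B)(Z^+)$ yields precisely the SDP in the statement.

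Finally, to ensure that the dual optimal value matches $\gamma_{\ppt}(\rho)$ (no duality gap), I would check Slater's condition for the primal: taking $\sigma^-\coloneqq c\,\id_{AB}$ and $\sigma^+\coloneqq\rho+c\,\id_{AB}$ for a sufficiently large constant $c>0$ produces a strictly feasible point, since for large $c$ we have $\sigma^{\pm}\loewnergeq 0$ and $(\idchan_A\otimes\transpose_B)(\sigma^{\pm})\loewnergeq 0$ all holding strictly. Strong duality then follows from standard conic duality theory. The only point requiring care is the sign bookkeeping when transporting the partial transpose onto the dual variables and when rearranging the two semidefinite inequalities into the two-sided form of the statement; the rest is routine.
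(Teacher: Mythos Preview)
Your proof is correct and follows essentially the same Lagrangian duality approach as the paper: introduce $X$ for the equality constraint and $Z^{\pm}$ for the PPT constraints, collect terms, and read off the dual feasibility conditions. Your sign convention for $X$ (using $+\langle X,\rho-\sigma^++\sigma^-\rangle$) lands directly on the stated dual without the final substitution $X\to -X$ that the paper performs, and you additionally verify Slater's condition explicitly, which the paper leaves implicit.
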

\begin{proof}
  In the following, we will denote the unnormalized Hilbert-Schmidt inner product by $\langle A,B\rangle\coloneqq\tr[A^{\dagger}B]$.
  The Lagrangian is given by
  \begin{align}
    L(\sigma^{\pm};X,Z^{\pm}) &= \tr[\sigma^+] + \tr[\sigma^-] - \langle X, \rho-\sigma^++\sigma^-\rangle \nonumber\\
    &- \langle Z^+,(\idchan_A\otimes\transpose_B)(\sigma^+)\rangle - \langle Z^-,(\idchan_A\otimes\transpose_B)(\sigma^-)\rangle \nonumber\\
    &= -\langle X,\rho\rangle
    + \langle \sigma^+, \id_{AB} + X - (\idchan_A\otimes\transpose_B)(Z^+)) \rangle \nonumber\\
    &+ \langle \sigma^-, \id_{AB} - X - (\idchan_A\otimes\transpose_B)(Z^-)) \rangle \, .
  \end{align}
  The dual function $g(X,Z^{\pm})\coloneqq\inf_{\sigma^{\pm}}L(\sigma^{\pm};X,Z^{\pm})$ can only take values greater than $-\infty$ if $\id_{AB}\pm X-(\idchan_A\otimes\transpose_B)(Z^\pm)\geq 0$.
  We thus get the dual SDP
  \begin{equation}
    \gamma_{\ppt}(\rho) = \left \lbrace
    \begin{array}{r l}
      \max\limits_{X,\in\herm(AB)Z^{\pm}\in\pos(AB)} & -\tr[X\rho] \\
      \textnormal{s.t.} & -\id_{AB} + (\idchan_A\otimes\transpose_B)(Z^+) \loewnerleq X \\
      & X \loewnerleq \id_{AB} - (\idchan_A\otimes\transpose_B)(Z^-) \, .
    \end{array} \right .
  \end{equation}
  A simple variable substitution $X\rightarrow -X$ yields the desired form of the dual SDP.
\end{proof}

The rest of this section is structured as follows.
First, we will consider the quasiprobability extent of pure entangled states.
We will see that in this case, it is possible to find a closed-form analytical expression of $\gamma_{\sep}$ in terms of the Schmidt coefficients.
Then, we will turn our attention to mixed states, which are generally more complicated to handle.
We will deal with the one-shot and asymptotic regimes separately form each other, as they will entail some different techniques.

\subsubsection{Pure states}\label{sec:gamma_nonlocal_pure}
Since the quasiprobability extent w.r.t. $\sep$ is invariant under local unitaries, $\gamma_{\sep}(\proj{\psi}_{AB})$ can clearly only depend on the Schmidt coefficients of a bipartite state $\ket{\psi}_{AB}$.
In fact, $\gamma_{\sep}$ takes a simple form and one can even find an expression for an achieving QPD.
\begin{theorem}{\cite{vidal1999_robustness}}{optimal_qpd_pure}
  Let $\ket{\psi}_{AB}$ be a bipartite pure state with Schmidt decomposition
  \vspace*{-1em}
  \begin{equation}
    \ket{\psi}_{AB} = \sum_{j=1}^r u_j \ket{f_j}_A\otimes \ket{g_j}_B \, .
  \end{equation}
  Then
  \vspace*{-1em}
  \begin{equation}
    \gamma_{\sep}(\proj{\psi}_{AB}) = 2\left(\sum_ju_j\right)^2 - 1
  \end{equation}
  and the QPD $\proj{\psi}_{AB} = a^+\sigma^+ - a^-\sigma^-$ achieves the optimal extent,
  where
  \begin{equation}
    a^- \coloneqq (\sum_ju_j)^2-1 \,,\quad a^+\coloneqq (\sum_ju_j)^2 \, ,
  \end{equation}
  \begin{equation}\label{eq:pure_state_qpd_sigma_plus}
    \sigma^-\coloneqq \frac{1}{(\sum_ju_j)^2-1}\sum_{i\neq j}u_iu_j\proj{f_i}_A\otimes\proj{g_j}_B \, ,
  \end{equation}
  \begin{equation}\label{eq:pure_state_qpd_sigma_min}
    \sigma^+\coloneqq \frac{1}{(\sum_ju_j)^2}\left(\proj{\psi}_{AB} + ((\sum_ju_j)^2-1)\sigma^-\right) \, .
  \end{equation}
\end{theorem}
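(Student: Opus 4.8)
The plan is to prove the two matching inequalities $\gamma_{\sep}(\proj{\psi}_{AB})\le 2s^2-1$ and $\gamma_{\sep}(\proj{\psi}_{AB})\ge 2s^2-1$ separately, writing $s\coloneqq\sum_{j=1}^r u_j$ and using $\sum_j u_j^2=1$ throughout. Since $\gamma_{\sep}$ is invariant under local unitaries (\Cref{cor:gamma_invariant_reversible}), we may take $\ket{f_j}=\ket{g_j}=\ket{j}$, so that everything lives inside the $r\times r$ subspace $\cH_A'\otimes\cH_B'$ with $\ket{\psi}=\sum_j u_j\ket{jj}$. The case $r=1$ is a product state with $\gamma_{\sep}=1=2s^2-1$, so we assume $r\ge 2$; then $s^2=1+\sum_{i\ne j}u_iu_j>1$ and the stated operators $\sigma^\pm$ are well defined.

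For the upper bound one checks that the stated $\sigma^\pm$ are genuine separable states and that $\proj{\psi}=a^+\sigma^+-a^-\sigma^-$. The operator $\sigma^-$ is a convex combination of product projectors, hence separable, with $\tr\sigma^-=(s^2-1)^{-1}\sum_{i\ne j}u_iu_j=1$. The non-obvious point is that $\sigma^+\in\sep(A;B)$, which I would prove via the phase-averaging identity
\begin{equation}
  \mathbb{E}_{\vec\theta}\left[\proj{a(\vec\theta)}\otimes\proj{b(\vec\theta)}\right]
  = \proj{\psi}+\sum_{i\ne j}u_iu_j\,\proj{f_i}\otimes\proj{g_j}
  = s^2\,\sigma^+ ,
\end{equation}
where $\ket{a(\vec\theta)}\coloneqq\sum_i\sqrt{u_i}\,e^{i\theta_i}\ket{f_i}$, $\ket{b(\vec\theta)}\coloneqq\sum_i\sqrt{u_i}\,e^{-i\theta_i}\ket{g_i}$, and $\vec\theta$ is uniform on $[0,2\pi)^r$; the identity follows by expanding the product and using that $\mathbb{E}_{\vec\theta}[e^{i(\theta_i-\theta_j-\theta_k+\theta_l)}]$ vanishes unless $\{i,l\}=\{j,k\}$ as multisets, plus an inclusion--exclusion over the overcounted case $i=j=k=l$. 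Hence $\sigma^+$ is an average of product states, so separable, and taking the trace (each $\ket{a(\vec\theta)}$, $\ket{b(\vec\theta)}$ has squared norm $s$) gives $\tr\sigma^+=1$. Finally $a^+\sigma^+-a^-\sigma^-=\bigl(\proj{\psi}+(s^2-1)\sigma^-\bigr)-(s^2-1)\sigma^-=\proj{\psi}$ and $a^++a^-=s^2+(s^2-1)=2s^2-1$, so $\gamma_{\sep}(\proj{\psi}_{AB})\le 2s^2-1$.

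For the lower bound, let $\Pi_-=\tfrac12(\id-F)$ be the projector onto the antisymmetric subspace of $\cH_A'\otimes\cH_B'\cong\mathbb{C}^r\otimes\mathbb{C}^r$ ($F$ the swap), and write $\rho^{\Gamma}\coloneqq(\idchan_A\otimes\transpose_B)(\rho)$. The two ingredients are: (i) $\tr\left[\Pi_-\,(\proj{\psi})^{\Gamma}\right]=-\sum_{i<j}u_iu_j=-\tfrac12(s^2-1)$, since $(\proj{\psi})^{\Gamma}=\sum_{i,j}u_iu_j\ket{ij}\bra{ji}$ acts as $-u_iu_j$ on each antisymmetric vector $(\ket{ij}-\ket{ji})/\sqrt2$; and (ii) for every $\sigma\in\sep(A;B)$ one has $\sigma^{\Gamma}\loewnergeq 0$ and, moreover, $\tr\left[\Pi_-\,\sigma^{\Gamma}\right]\le\tfrac12$, because $\sigma=\sum_k q_k\proj{x_k}\otimes\proj{y_k}$ gives $\sigma^{\Gamma}=\sum_k q_k\proj{x_k}\otimes\proj{\overline{y}_k}$ and the overlap of any product vector with the antisymmetric subspace is at most $\tfrac12$ (after projecting onto $\cH_A'\otimes\cH_B'$). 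Now take any two-element QPD $\proj{\psi}=a^+\sigma^+-a^-\sigma^-$ with $\sigma^\pm\in\sep(A;B)$, $a^\pm\ge 0$ (\Cref{lem:two_element_qpd,lem:compact_ds}); applying $\tr\left[\Pi_-(\cdot)^{\Gamma}\right]$ and using (i)--(ii),
\begin{equation}
  -\tfrac12(s^2-1)=a^+\tr\left[\Pi_-(\sigma^+)^{\Gamma}\right]-a^-\tr\left[\Pi_-(\sigma^-)^{\Gamma}\right]\ge -\tfrac12\,a^- ,
\end{equation}
so $a^-\ge s^2-1$; taking the plain trace gives $a^+-a^-=1$, whence $a^++a^-=1+2a^-\ge 2s^2-1$. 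Taking the infimum over QPDs yields $\gamma_{\sep}(\proj{\psi}_{AB})\ge 2s^2-1$, and combined with the upper bound this proves the formula and shows the displayed $\sigma^\pm$ are optimal.

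The main obstacle is finding the right objects in each direction: for the upper bound, recognizing the phase-average representation of $\sigma^+$, which is exactly what certifies $\sigma^+\in\sep(A;B)$ (a priori one only sees that it is positive semidefinite); and for the lower bound, identifying the antisymmetric projector $\Pi_-$ as the correct dual witness together with the sharp constant $\tr[\Pi_-\sigma^{\Gamma}]\le\tfrac12$ on \emph{separable} states --- replacing this by the weaker estimate that only uses $(\sigma^-)^{\Gamma}\loewnergeq 0$ and $\Pi_-\loewnerleq\id$ (equivalently, relaxing $\sep$ to $\ppt$ in a naive way) yields merely $\gamma\ge s^2$, which is not tight. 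The remaining manipulations --- the normalizations, the inclusion--exclusion in the phase average, and the eigendecomposition of $(\proj{\psi})^{\Gamma}$ --- are routine.
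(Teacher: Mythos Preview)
Your proof is correct. The upper bound is essentially the paper's: both exhibit $\sigma^+$ as a mixture of product states of the form $\ket{a(\vec\theta)}\otimes\ket{b(\vec\theta)}$; the paper uses a finite set of $2^r-1$ carefully chosen phase vectors, while you integrate over the full torus---the computations are equivalent and your continuous version is arguably cleaner.

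For the lower bound the two arguments use the same underlying witness but package it differently. The paper works through the dual of the $\gamma_{\ppt}$ SDP (\Cref{lem:gamma_ppt_sdp_dual}) with the explicit feasible point $X=2\sum_{i,j}\ket{ii}\bra{jj}-\id$, $Z^+=2\Pi_-$, $Z^-=0$; since $X=\id-4\Pi_-^{\Gamma}$, one has $\tr[X\rho]=1-4\tr[\Pi_-\rho^{\Gamma}]$, so your functional $\rho\mapsto\tr[\Pi_-\rho^{\Gamma}]$ is an affine transform of the paper's dual objective. Your direct argument avoids setting up SDP duality, which is nice. One remark: your closing comment that the sharp constant $\tr[\Pi_-\sigma^{\Gamma}]\le\tfrac12$ is a separable-specific property is not quite right---it already holds for every PPT state $\sigma$, since $\tr[\Pi_-\sigma^{\Gamma}]=\tfrac12-\tfrac12\tr[F\sigma^{\Gamma}]=\tfrac12-\tfrac{r}{2}\bra{\Psi_r}\sigma\ket{\Psi_r}\le\tfrac12$ using $F^{\Gamma}=r\proj{\Psi_r}$ and $\sigma\loewnergeq 0$. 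So your argument in fact also establishes $\gamma_{\ppt}(\proj{\psi})\ge 2s^2-1$, matching the paper's \Cref{lem:optimal_qpd_pure_ppt}. The paper's SDP packaging makes this PPT conclusion explicit (and the paper relies on it later, e.g.\ in \Cref{cor:gamma_mixed_small_sys} and the proof of \Cref{thm:gamma_kaklike_unitary}), whereas in your write-up it is hidden.
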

The proof follows from two observations.
First, it can be shown that $\proj{\psi}=a^+\sigma^+-a^-\sigma^-$ is a valid QPD w.r.t. $\sep$.
Secondly, using the SDP formulation of $\gamma_{\ppt}$, one can verify that this QPD also achieves the quasiprobability extent w.r.t. PPT.
Combining these two insights gives us matching lower and upper bounds for $\gamma_{\sep}(\proj{\psi})$, hence finishing the proof.
In the following, we show both of these steps separately.

We briefly note that the proof of the converse statement differs from the original one introduced in \reference~\cite{vidal1999_robustness}.
Our approach based on convex programming simplifies the proof, and to our knowledge has not been proposed in literature before.
\begin{lemma}{}{pure_state_qpd_valid}
  The states $\sigma^{\pm}$ in~\Cref{eq:pure_state_qpd_sigma_plus,eq:pure_state_qpd_sigma_min} are separable.
\end{lemma}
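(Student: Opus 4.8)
The plan is to handle $\sigma^-$ and $\sigma^+$ separately; the first is immediate from the definition of separability, while the second is obtained by a phase-averaging argument. Write $S \coloneqq \sum_{j} u_j$. For $\sigma^-$, the key observation is that $\sum_{i\neq j} u_i u_j = \big(\sum_i u_i\big)^2 - \sum_i u_i^2 = S^2 - 1$, using that $\ket{\psi}_{AB}$ is normalized. Hence
\[
  \sigma^- = \sum_{i\neq j} \frac{u_i u_j}{S^2 - 1}\,\proj{f_i}_A\otimes\proj{g_j}_B
\]
is a convex combination of pure product states with nonnegative weights summing to one, so $\sigma^-\in\sep(A;B)$ directly.

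For $\sigma^+$ I would exhibit it as a (continuous) convex combination of pure product states obtained by averaging over relative phases. For $\phi = (\phi_1,\dots,\phi_r)\in[0,2\pi)^r$ set
\[
  \ket{a(\phi)} \coloneqq \sum_{i=1}^{r}\sqrt{u_i}\,e^{i\phi_i}\ket{f_i}_A,
  \qquad
  \ket{b(\phi)} \coloneqq \sum_{i=1}^{r}\sqrt{u_i}\,e^{-i\phi_i}\ket{g_i}_B .
\]
Expanding $\proj{a(\phi)}\otimes\proj{b(\phi)}$ in the Schmidt bases and integrating, one uses the elementary identity $\tfrac{1}{(2\pi)^r}\int e^{i(\phi_i-\phi_j-\phi_k+\phi_l)}\,d\phi = 1$ exactly when $\{i,l\}=\{j,k\}$ as multisets (equivalently $i=j,k=l$ or $i=k,j=l$), and $=0$ otherwise. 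Collecting the two surviving families of quadruples and subtracting their overlap $i=j=k=l$ yields
\[
  \frac{1}{(2\pi)^r}\int_{[0,2\pi)^r}\proj{a(\phi)}\otimes\proj{b(\phi)}\,d\phi
  = \proj{\psi}_{AB} + \sum_{i\neq j}u_i u_j\,\proj{f_i}_A\otimes\proj{g_j}_B
  = \proj{\psi}_{AB} + (S^2-1)\,\sigma^- ,
\]
which is precisely $S^2\sigma^+$ by the definition of $\sigma^+$. Since $\braket{a(\phi)}{a(\phi)} = \braket{b(\phi)}{b(\phi)} = S$, each $\tfrac{1}{S^2}\proj{a(\phi)}\otimes\proj{b(\phi)}$ is a normalized pure product state, so $\sigma^+ = \tfrac{1}{(2\pi)^r}\int \tfrac{1}{S^2}\proj{a(\phi)}\otimes\proj{b(\phi)}\,d\phi$ is a convex combination of product states and hence separable. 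If a finite decomposition is preferred, the integrand is a trigonometric polynomial of bounded degree in each $\phi_i$, so the average can be replaced by a sum over finitely many equally spaced phases without changing its value.

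The only delicate point is the bookkeeping inside the phase integral: correctly enumerating the surviving quadruples $(i,j,k,l)$ and counting the diagonal $i=j=k=l$ exactly once, since it lies in both surviving families. Everything else is routine algebra. Together with the separability of $\sigma^-$, this confirms the lemma and, in turn, that $\proj{\psi}_{AB} = a^+\sigma^+ - a^-\sigma^-$ is a genuine QPD into $\sep(A;B)$.
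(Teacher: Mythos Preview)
Your proof is correct. Both your argument and the paper's hinge on a phase-averaging construction for $\sigma^+$, but the implementations differ. You average over the full $r$-torus of phases $(\phi_1,\dots,\phi_r)$, which makes the bookkeeping of surviving terms completely transparent: the condition $\{i,l\}=\{j,k\}$ drops out immediately from orthogonality of characters. The paper instead uses a single discrete parameter $j\in\{1,\dots,2^r-1\}$ with the clever phase choice $\phi_k(j)=2\pi j\,(2^{k-1}-1)/(2^r-1)$, so that the vanishing of cross terms reduces to the fact that $2^{a-1}+2^{d-1}-2^{b-1}-2^{c-1}=0$ forces $\{a,d\}=\{b,c\}$ by uniqueness of binary expansions. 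Your route is more elementary and avoids this combinatorial trick; the paper's route yields an explicit separable decomposition with only $2^r-1$ product terms, rather than the $3^r$ (or more) one would obtain by discretizing your integral.
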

\begin{proof}
  Clearly, $\sigma^-$ is separable, so we only need to consider $\sigma^+$.
  We will show that
  \begin{equation}
    \sigma^+ = \frac{1}{2^r-1}\sum\limits_{j=1}^{2^r-1} \proj{\phi_j}\otimes \proj{\tau_j}
  \end{equation}
  where 
  \begin{equation}
    \ket{\phi_j} \coloneqq \frac{1}{\sqrt{\sum_l u_l}} \sum\limits_{k=1}^{r}\sqrt{u_k}\exp\left( 2\pi ij \frac{2^{k-1}-1}{2^r-1}\right) \ket{f_k} \, ,
  \end{equation}
   \begin{equation}
    \ket{\tau_j} \coloneqq \frac{1}{\sqrt{\sum_l u_l}} \sum\limits_{k=1}^{r}\sqrt{u_k}\exp\left( -2\pi ij \frac{2^{k-1}-1}{2^r-1}\right) \ket{g_k} \, .
  \end{equation}
  To show this, we consider the matrix elements of $\sum\limits_{j=1}^{2^r-1} \proj{\phi_j}\otimes \proj{\tau_j}$ in the Schmidt basis
  \begin{align}
    & (\bra{f_a}\otimes\bra{g_b})\sum\limits_{j=1}^{2^r-1} \proj{\phi_j}\otimes \proj{\tau_j}(\ket{f_c}\otimes\ket{g_d}) \nonumber\\
    & \quad = \frac{\sqrt{u_au_bu_cu_d}}{(\sum_lu_l)^2} \sum_{j=1}^{2^r-1} \exp\left( \frac{2\pi i j}{2^r-1} (2^{a-1}+2^{d-1}-2^{b-1}-2^{c-1})\right) \, .
  \end{align}
  Notice that $\abs{2^{a-1}+2^{d-1}-2^{b-1}-2^{c-1}}$ is always smaller than $2^{r}-1$.
  Due to a standard geometric series argument, the matrix elements are zero unless $2^{a-1}+2^{d-1}-2^{b-1}-2^{c-1}=0$.
  This in turn can only occur when either $a=b$ and $c=d$ or otherwise when $a=c$ and $b=d$.
  In summary, the only non-vanishing matrix elements are
  \begin{align}
    &(\bra{f_a}\otimes\bra{g_a})\sum\limits_{j=1}^{2^r-1} \proj{\phi_j}\otimes \proj{\tau_j}(\ket{f_b}\otimes\ket{g_b}) \\
    &= (\bra{f_a}\otimes\bra{g_b})\sum\limits_{j=1}^{2^r-1} \proj{\phi_j}\otimes \proj{\tau_j}(\ket{f_a}\otimes\ket{g_b}) \\
    &= \frac{u_au_b}{(\sum_lu_l)^2}
  \end{align}
  which corresponds precisely to the matrix representation of $\sigma^+$.
\end{proof}
\begin{lemma}{}{optimal_qpd_pure_ppt}
  Let $\ket{\psi}_{AB}$ be a bipartite pure state with Schmidt coefficients $(u_j)_j$.
  Then
  \begin{equation}
    \gamma_{\ppt}(\proj{\psi}_{AB}) = 2\left(\sum_ju_j\right)^2 - 1
  \end{equation}
  and the QPD $\rho=a^+\sigma^+-a^-\sigma^-$ in~\Cref{thm:optimal_qpd_pure} achieves $\gamma_{\ppt}(\proj{\psi}_{AB})$.
\end{lemma}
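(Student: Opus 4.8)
The plan is to establish the value $\gamma_{\ppt}(\proj{\psi}_{AB})=2(\sum_ju_j)^2-1$ by proving matching upper and lower bounds; the ``achieving'' part is then automatic, since the QPD $\proj{\psi}_{AB}=a^+\sigma^+-a^-\sigma^-$ from \Cref{thm:optimal_qpd_pure} is a valid decomposition into $\ppt(A;B)$ states (the $\sigma^\pm$ are separable by \Cref{lem:pure_state_qpd_valid}, hence PPT) and its $1$-norm equals $a^++a^-=2(\sum_ju_j)^2-1$. This explicit QPD simultaneously gives the upper bound $\gamma_{\ppt}(\proj{\psi}_{AB})\leq 2(\sum_ju_j)^2-1$, so all the work is in the matching lower bound.

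For the lower bound I would use convex duality. By local-unitary invariance (\Cref{cor:gamma_invariant_reversible}) we may assume $\dimension(A)=\dimension(B)=r$ and $\ket{f_j}=\ket{g_j}=\ket{j}$, so $\ket{\psi}=\sum_ju_j\ket{jj}$. Let $\ket{\Omega}\coloneqq\tfrac{1}{\sqrt r}\sum_j\ket{jj}$ and propose the dual triple
\begin{equation*}
  X \coloneqq 2r\proj{\Omega}-\id_{AB}\,,\qquad Z^+\coloneqq 2(\id_{AB}-\swap)\,,\qquad Z^-\coloneqq 0\,.
\end{equation*}
I would first verify dual feasibility in the sense of \Cref{lem:gamma_ppt_sdp_dual}: since $(\idchan_A\otimes\transpose_B)(\proj{\Omega})=\tfrac1r\swap$ one checks $\id_{AB}-X=2\id_{AB}-2r\proj{\Omega}=(\idchan_A\otimes\transpose_B)(Z^+)$, so $X\loewnerleq\id_{AB}-(\idchan_A\otimes\transpose_B)(Z^+)$ holds with equality, while $-\id_{AB}+(\idchan_A\otimes\transpose_B)(Z^-)=-\id_{AB}\loewnerleq X$ is just $2r\proj{\Omega}\loewnergeq0$; also $Z^\pm\loewnergeq0$. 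Then I would evaluate the dual objective: $\tr[X\proj{\psi}]=2r\abs{\braket{\Omega}{\psi}}^2-1=2r\cdot\tfrac1r(\sum_ju_j)^2-1=2(\sum_ju_j)^2-1$, which matches the upper bound and closes the argument. (Equivalently, avoiding \Cref{lem:gamma_ppt_sdp_dual}, one can argue directly: write any PPT-QPD in two-element form $\proj{\psi}=a^+\sigma^+-a^-\sigma^-$ via \Cref{lem:two_element_qpd}; then $\braopket{\Omega}{\sigma^+}{\Omega}=\tfrac1r\tr[\swap\,(\idchan_A\otimes\transpose_B)(\sigma^+)]\leq\tfrac1r\tr[\sigma^+]$ because $(\idchan_A\otimes\transpose_B)(\sigma^+)\loewnergeq0$ and $\swap\loewnerleq\id$, whereas $\braopket{\Omega}{\sigma^-}{\Omega}\geq0$ since $\sigma^-\loewnergeq0$; combining, $a^++a^-\geq\tr[X\sigma^+]-\tr[X\sigma^-]=\tr[X\proj{\psi}]=2(\sum_ju_j)^2-1$.)

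The only genuinely nontrivial step is guessing the witness $X\propto\proj{\Omega}-\id$; everything else reduces to the elementary ``singlet fraction'' bound $\braopket{\Omega}{\sigma}{\Omega}\leq1/r$ for normalized PPT states, which is immediate after a partial transpose from $\swap\loewnerleq\id$. I expect this identification of the witness to be the main obstacle: the naive choice (the sign operator of $(\idchan_A\otimes\transpose_B)(\proj{\psi})$, bounded in operator norm) only certifies the weaker bound $(\sum_ju_j)^2$, and the same happens if one tries to use the positivity of $\sigma^\pm$ and of their partial transposes ``one constraint at a time'' — one must exploit the full PPT structure through a single well-chosen operator, and the maximally entangled projector is exactly what works because its partial transpose is $\tfrac1r\swap$.
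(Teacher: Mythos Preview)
Your proof is correct and follows the same dual-SDP strategy as the paper, with the identical witness $X=2r\proj{\Omega}-\id_{AB}=2\sum_{i,j}\ket{ii}\bra{jj}-\id_{AB}$. Your choice $Z^+=2(\id-\swap)$ differs by a factor of two from the paper's $Z^+=\sum_{i<j}(\ket{ij}-\ket{ji})(\bra{ij}-\bra{ji})=\id-\swap$, and in fact your choice is the one that actually makes the constraint $X\loewnerleq\id-(\idchan_A\otimes\transpose_B)(Z^+)$ hold---the paper's derivation inserts a compensating factor of two mid-computation; your parenthetical direct argument via the singlet-fraction bound $\braopket{\Omega}{\sigma}{\Omega}\le 1/r$ for PPT $\sigma$ is a pleasant bonus (with the minor caveat that the chain should read $a^++a^-\ge a^+\tr[X\sigma^+]-a^-\tr[X\sigma^-]=\tr[X\proj{\psi}]$).
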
{}{}
\begin{proof}
  Since the QPD $\rho=a^+\sigma^+-a^-\sigma^-$ is a valid QPD w.r.t. $\sep$ (see~\Cref{lem:pure_state_qpd_valid}), it is automatically also a valid QPD w.r.t. $\ppt$.
  It remains to show its optimality.
  We accomplish this by finding a feasible solution of the dual SDP given in~\Cref{lem:gamma_ppt_sdp_dual} which achieves the same value of its objective.

  With a slight abuse of notation, we write the Schmidt decomposition of $\ket{\psi}_{AB}$ as
  \begin{equation}
    \ket{\psi}_{AB} = \sum_{j=1}^r u_j \ket{j}_A\otimes \ket{j}_B \, .
  \end{equation}
  where $\ket{i}_A$ and $\ket{i}_B$ are orthonormal basis states on $A$ and $B$ respectively.
  We now pick
  \begin{equation}
    X = 2\sum_{i,j}\ket{ii}\bra{jj}_{AB} - \id_{AB}
  \end{equation}
  \begin{equation}
    Z^- = 0
  \end{equation}
  and
  \begin{equation}
    Z^+ = \sum_{i<j}(\ket{ij}_{AB}-\ket{ji}_{AB})(\bra{ij}_{AB}-\bra{ji}_{AB}) \, .
  \end{equation}
  Clearly, $Z^-\loewnergeq$, $Z^+\loewnergeq 0$ and $X\loewnergeq -\id_{AB}$.
  The final constraint can be shown as follows
  \begin{align}
    \id_{AB} - (\idchan_A\otimes\transpose_B)(Z^+) &= \id_{AB} - \sum_{i<j}(\idchan_A\otimes\transpose_B)\left((\ket{ij}-\ket{ji})(\bra{ij}-\bra{ji})\right)\\
    &= \id_{AB} - \sum_{i<j}\left( \proj{ij} + \proj{ji} - \ket{ii}\bra{jj} - \ket{jj}\bra{ii} \right) \\
    &= \id_{AB} - 2\sum_{i\neq j}\left( \proj{ij} - \ket{ii}\bra{jj} \right) \\
    &= \id_{AB} - 2\id_{AB} + 2\sum_i \proj{ii} + 2\sum_{i\neq j} \ket{ii}\bra{jj} \\
    &= -\id_{AB} + 2\sum_{i,j}\ket{ii}\bra{jj} \\
    &= X
  \end{align}
  It remains to evaluate the objective function for this feasible point:
  \begin{align}
    \tr[X\rho] &= 2\sum_{i,j}\tr[\ket{ii}\bra{jj}_{AB} \rho] - \tr[\rho] \\
    & = 2\sum_{i,j}\bra{ii}\rho\ket{jj} - 1 \\
    & = 2\sum_{i,j,k,l}u_ku_l\bra{ii}\ket{kk}\bra{ll}\ket{jj} - 1 \\
    & = 2\left( \sum_iu_i \right)^2 - 1 \, .
  \end{align}
\end{proof}

An immediate consequence of~\Cref{thm:optimal_qpd_pure} is that we can also evaluate the regularized quasiprobability extent for pure states.
\begin{corollary}{}{gamma_reg_pure}
  Let $\ket{\psi}_{AB}$ be a bipartite pure state on $AB$ with Schmidt coefficients $(u_j)_{j=1,\dots,r}$.
  Then
  \begin{equation}
    \gammareg_{\sep}(\proj{\psi}_{AB}) = 1 + R_{\sep}(\proj{\psi}_{AB}) = \left(\sum_j u_j\right)^2 = 2^{H_{1/2}(\tr_B[\proj{\psi}_{AB}])}
  \end{equation}
  where $H_{1/2}$ is the quantum Rényi-$\nicefrac{1}{2}$ entropy.
\end{corollary}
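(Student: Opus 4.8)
The plan is to reduce everything to the single-shot formula from \Cref{thm:optimal_qpd_pure} together with the observation that a tensor power of a pure state is again pure, so that formula applies verbatim at every level $n$.

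First I would record the two elementary identities hidden in the statement. Writing $S \coloneqq \sum_{j=1}^r u_j$, \Cref{thm:optimal_qpd_pure} gives $\gamma_{\sep}(\proj{\psi}_{AB}) = 2S^2 - 1$; combined with the robustness relation $\gamma_{\sep}(\rho_{AB}) = 1 + 2R_{\sep}(\rho_{AB})$ (valid since $\sep(A;B)$ is convex and compact, recall \Cref{lem:robustness1}), this yields $R_{\sep}(\proj{\psi}_{AB}) = S^2 - 1$, hence $1 + R_{\sep}(\proj{\psi}_{AB}) = S^2$. For the entropic expression, $\tr_B[\proj{\psi}_{AB}] = \sum_j u_j^2\,\proj{f_j}$ has eigenvalues $u_j^2$, so $\tr[(\tr_B[\proj{\psi}_{AB}])^{1/2}] = \sum_j u_j = S$, and therefore $2^{H_{1/2}(\tr_B[\proj{\psi}_{AB}])} = 2^{2\log S} = S^2$. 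Thus it only remains to prove $\gammareg_{\sep}(\proj{\psi}_{AB}) = S^2$.

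Next I would compute $\gamma_{\sep}$ of the $n$-fold tensor power exactly. The operator $\proj{\psi}_{AB}^{\otimes n}$ is the rank-one projector onto the pure state $\ket{\psi^{\otimes n}}$ on the bipartition $A^{\otimes n} : B^{\otimes n}$, and from the Schmidt decomposition of $\ket{\psi}$ one reads off that $\ket{\psi^{\otimes n}}$ has Schmidt coefficients $\{u_{j_1}\cdots u_{j_n} : (j_1,\dots,j_n)\in\{1,\dots,r\}^n\}$, whose sum is $(\sum_j u_j)^n = S^n$. Applying \Cref{thm:optimal_qpd_pure} once more (now to this bipartition) gives $\gamma_{\sep}(\proj{\psi}_{AB}^{\otimes n}) = 2S^{2n} - 1$.

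Finally I would take the $n$-th root and the limit: $\gamma_{\sep}(\proj{\psi}_{AB}^{\otimes n})^{1/n} = (2S^{2n}-1)^{1/n} = S^2\,(2 - S^{-2n})^{1/n}$, and since $S \ge 1$ (because $S^2 = \sum_j u_j^2 + \sum_{i\neq j} u_i u_j \ge \sum_j u_j^2 = 1$) we have $1 \le 2 - S^{-2n} \le 2$, so $(2-S^{-2n})^{1/n} \to 1$ and hence $\gamma_{\sep}(\proj{\psi}_{AB}^{\otimes n})^{1/n} \to S^2$, which is $\gammareg_{\sep}(\proj{\psi}_{AB})$ by definition. I do not anticipate a genuine obstacle: the corollary is essentially immediate from \Cref{thm:optimal_qpd_pure}, and the only points needing a line of care are the multiplicativity of $\sum_j u_j$ under tensor powers and checking $S\ge 1$ so that the limit of $(2-S^{-2n})^{1/n}$ is handled correctly (the degenerate case $r=1$, where $S=1$ and all quantities equal $1$, is covered automatically).
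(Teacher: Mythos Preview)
Your proof is correct and follows essentially the same route as the paper: apply \Cref{thm:optimal_qpd_pure} to $\ket{\psi}^{\otimes n}$ using the multiplicativity of the Schmidt-coefficient sum, then take the limit after noting $S\ge 1$. If anything, you fill in slightly more detail (the robustness and Rényi-entropy identities) than the paper does.
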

Notice that this regularized quasiprobability extent has halved distance to $1$ compared to $\gamma_{\sep}(\proj{\psi})=1+2R_{\sep}(\proj{\psi})$.
The quantum Rényi entropy of order $\alpha\in (0,1)\cup(1,\infty)$ is defined as
\begin{equation}
  H_{\alpha}(\rho) \coloneqq \frac{1}{1-\alpha} \log \tr[\rho^{\alpha}]  \, .
\end{equation}
The specialization $H_{1/2}$ is commonly referred to as the \emph{max-Entropy} (also denoted $H_{\mathrm{max}}$), though older literature used to refer to $H_{0}$ by this name.
\begin{proof}
  The Schmidt coefficients of $\ket{\psi}^{\otimes n}$ are given by the collection $\left( \prod\limits_{i=1}^n u_{k_i} \right)_{k\in\mathbb{F}_r^n}$ where $\mathbb{F}_r\coloneqq \{1,\dots,r\}$.
  By~\Cref{thm:optimal_qpd_pure} the quasiprobability extent is thus
  \begin{align}
    \gamma_{\sep}(\proj{\psi}^{\otimes n}) &= 
    2\left(\sum\limits_{k\in\mathbb{F}_r^n} \prod\limits_{i=1}^n u_{k_i}\right)^2 - 1 \\
    &= 2\left(\sum\limits_{i=1}^ru_i \right)^{2n} - 1 \, .
  \end{align}
  Observing that $\sum_iu_i\geq \sqrt{\sum_iu_i^2} = 1$, we see that taking the limit $\lim\limits_{n\rightarrow\infty}(\cdot ^{1/n})$ on both sides precisely gives us the desired statement.
\end{proof}

\begin{example}\label{ex:maximally_entangled_state}
  Consider a two-qudit system $AB$, $\dimension(A)=\dimension(B)=d$ and the maximally entangled state
  \begin{equation}
    \ket{\Psi} = \frac{1}{\sqrt{d}}\sum_{i=1}^d \ket{i}_A\otimes \ket{i}_B
  \end{equation}
  for some choice of orthonormal bases on $A$ and $B$.
  Then
  \begin{equation}
    \gamma_{\sep}(\proj{\Psi}) = 2 \left(\sum_i \frac{1}{\sqrt{d}}\right)^2 - 1 = 2d-1
  \end{equation}
  and
  \begin{equation}
    \gammareg_{\sep}(\proj{\Psi}) = \lim\limits_{n\rightarrow\infty} (2d^n-1)^{1/n} = d \, .
  \end{equation}
  Since any state in $\dops(AB)$ can be reached by applying some $\locc$ protocol on $\proj{\Psi}$, $\gamma_{\sep}(\proj{\Psi})=2^d-1$ is the maximal value that the quasiprobability $\gamma_{\sep}$ takes over all states in $\dops(AB)$ (recall~\Cref{cor:gamma_state_from_channel}).

  For $d=2$, the following QPD achieves the quasiprobability extent $\gamma_{\sep}(\proj{\Psi})=3$
  \begin{align}
    \proj{\Psi} = \frac{1}{2}\big(
      &\proj{0}\otimes\proj{0}
      +\proj{1}\otimes\proj{1} \nonumber\\
      &-\proj{+}\otimes\proj{-}
      -\proj{-}\otimes\proj{+} \nonumber\\
      &+\proj{i+}\otimes\proj{i-}
      +\proj{i-}\otimes\proj{i+}
    \big)
  \end{align}
  where $\ket{\pm}\coloneqq(\ket{0}\pm\ket{1})/\sqrt{2}$ and $\ket{i\pm}\coloneqq(\ket{0}\pm i\ket{1})/\sqrt{2}$.
\end{example}

\subsubsection{Mixed states in the single-shot regime}\label{sec:gamma_sep_singleshot}
Previously, we have observed that the quasiprobability extent of an arbitrary pure state w.r.t. $\sep$ can be characterized by a simple formula involving the corresponding Schmidt coefficients.
Unfortunately, it is highly unlikely that an analogous result exists for general mixed states.
It is known that detecting entanglement is generally NP-hard~\cite{gurvits2003_classical,gharibian2010_strong}, so the computation of any faithful entanglement measure (such as the $\gamma_{\sep}$, recall~\Cref{lem:gamma_faithful2}) must be at least as difficult.
As such, we are fundamentally restricted to finding computationally inefficient characterizations of the quasiprobability extent or computationally efficient lower and upper bounds.

We have already seen that $\gamma_{\ppt}$ is one such lower bound which reaches the optimal value for any pure state.
As a direct consequence of~\Cref{lem:ppt_small_system}, it is also optimal for small systems.
\begin{corollary}{}{gamma_mixed_small_sys}
  Let $\dimension(A)\dimension(B)\leq 6$ and $\omega\in\herm(AB)$.
  Then
  \begin{equation}
    \gamma_{\sep}(\omega) = \gamma_{\ppt}(\omega) \, .
  \end{equation}
\end{corollary}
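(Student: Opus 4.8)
The plan is to combine two facts already established in the excerpt. First, by Lemma~\ref{lem:ppt_small_system} (the faithfulness of the PPT criterion for small systems), whenever $\dimension(A)\dimension(B)\leq 6$ one has the set equality $\sep(A;B)=\ppt(A;B)$. Second, by Lemma~\ref{lem:gammas_overview_nonlocal_state}, the quasiprobability extent of a bipartite Hermitian operator $\omega\in\herm(AB)$ w.r.t.\ any of the decomposition sets $\lo,\lo^{\star},\locc,\locc^{\star},\sepc,\sepc^{\star}$ equals $\gamma_{\sep(A;B)}(\omega)$, while w.r.t.\ $\pptc$ or $\pptc^{\star}$ it equals $\gamma_{\ppt(A;B)}(\omega)$. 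Thus the corollary reduces to the single equality $\gamma_{\sep(A;B)}(\omega)=\gamma_{\ppt(A;B)}(\omega)$.

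First I would observe that the quasiprobability extent $\gamma_{\ds}(\omega)$, as defined in Definition~\ref{def:gamma}, depends on the decomposition set $\ds$ only through the \emph{set} $\ds$ itself: if $\ds_1=\ds_2$ as subsets of $\herm(AB)$, then the two infima in~\Cref{eq:gamma} are taken over identical feasible regions and hence coincide. Since $\sep(A;B)$ and $\ppt(A;B)$ are equal as sets of density operators under the dimension hypothesis, and since $\gamma_{\sep(A;B)}$ and $\gamma_{\ppt(A;B)}$ are by definition the quasiprobability extents with respect to these (identical) sets of states, we immediately get $\gamma_{\sep(A;B)}(\omega)=\gamma_{\ppt(A;B)}(\omega)$ for every $\omega\in\herm(AB)$.

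Putting the pieces together: by Lemma~\ref{lem:gammas_overview_nonlocal_state} we have $\gamma_{\sep}(\omega)=\gamma_{\sep(A;B)}(\omega)$ on the left-hand side (this holds in particular for $\ds=\sep$, which is the statement $\gamma_{\sep}=\gamma_{\sep(A;B)}$ read literally), and $\gamma_{\ppt}(\omega)=\gamma_{\ppt(A;B)}(\omega)$ on the right. Chaining these with the set-equality observation yields $\gamma_{\sep}(\omega)=\gamma_{\ppt}(\omega)$, which is the claim. One minor point worth spelling out in the write-up is that $\gamma_{\sep}$ in the corollary statement is shorthand for $\gamma_{\sep(A;B)}$ and likewise $\gamma_{\ppt}$ for $\gamma_{\ppt(A;B)}$, consistent with the notational conventions announced just before Lemma~\ref{lem:gammas_overview_nonlocal_state}, so no reference to the operationally relevant sets $\locc^{\star}$ etc.\ is even needed — though one could equivalently phrase it that way via Lemma~\ref{lem:gammas_overview_nonlocal_state}.

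There is essentially no obstacle here: the corollary is a direct consequence of Lemma~\ref{lem:ppt_small_system} together with the definition of $\gamma$ and Lemma~\ref{lem:gammas_overview_nonlocal_state}. The only thing requiring (trivial) care is making explicit that equality of decomposition sets forces equality of the associated quasiprobability extents — which is immediate from the fact that Definition~\ref{def:gamma} quantifies over QPDs using elements of the set, with nothing else about the set entering. Hence the proof is a one-line deduction, and I would present it as such rather than reproving any SDP duality or invoking the pure-state formula.
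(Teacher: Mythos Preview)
Your proposal is correct and matches the paper's approach exactly: the paper states this corollary as ``a direct consequence of~\Cref{lem:ppt_small_system}'' without further proof, relying on the trivial fact that equal decomposition sets yield equal quasiprobability extents. Your invocation of \Cref{lem:gammas_overview_nonlocal_state} is unnecessary, since (as you yourself note) $\gamma_{\sep}$ and $\gamma_{\ppt}$ here are already shorthand for $\gamma_{\sep(A;B)}$ and $\gamma_{\ppt(A;B)}$.
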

\begin{example}
  Consider the depolarized Bell state across two qubits $A$ and $B$
  \begin{equation}
    \rho_p \coloneqq (1-p) \proj{\Psi}_{AB} + p \frac{1}{4}\id_{AB} 
  \end{equation}
  where $\ket{\Psi}_{AB}=(\ket{00}+\ket{11})/\sqrt{2}$ and $p\in[0,1]$.
  Because of~\Cref{cor:gamma_mixed_small_sys}, we can numerically evaluate $\gamma_{\sep}(\rho_p)$.
  Up to numerical precision, we observe that
  \begin{equation}
    \gamma_{\sep}(\rho_p) = \max(1, 3(1-p)) \, .
  \end{equation}
\end{example}
However, there are states for which the lower bound does not achieve $\gamma_{\sep}$.
This is exemplified by any entangled PPT state $\rho$, such as the one introduced in~\Cref{ex:entangled_ppt_state}.
Clearly, one has $\gamma_{\ppt}(\rho)=1<\gamma_{\sep}(\rho)$ in this situation.

Conversely, to find upper bounds to $\gamma_{\sep}$, one approach is to simply guess a finite decomposition set $\ds\subset \sep(A;B)$ and use the linear programming approach from~\Cref{sec:qps_conv_programming} to evaluate $\gamma_{\ds}(\rho)$.
By~\Cref{lem:gamma_ds_bound}, $\gamma_{\ds}(\rho)\geq\gamma_{\sep}(\rho)$.
\begin{example}
  Consider the maximally entangled state $\ket{\Psi}_{AB}$ across two systems $A,B$ which are each two-qubit systems.
  We know by~\Cref{thm:optimal_qpd_pure} that $\gamma_{\sep}(\proj{\Psi})=7$.
  We now consider the depolarized version of $\proj{\Psi}$ with strength $p$
  \begin{equation}
    \rho_p \coloneqq (1-p) \proj{\Psi}_{AB} + p \frac{1}{16}\id_{AB} \, .
  \end{equation}
  To obtain an upper bound to the quasiprobability extent, we choose the decomposition basis $\ds$ which contains all product states $\sigma\otimes\tau$ where $\sigma$ and $\tau$ are either one of the four $2$-qubit maximally entangled states, or themselves product states of $\ket{0},\ket{1},\ket{+},\ket{-},\ket{i+}$ or $\ket{i-}$.
  The numerical evaluation of $\gamma_{\ds}(\rho_p)$ and $\gamma_{\ppt}(\rho_p)$ is depicted in~\Cref{fig:two_noisy_bell}.
  Up to numerical precision, we observe
  \begin{equation}
    \gamma_{\ppt}(\rho_p) = \max(1, 7-\frac{15}{2}p) \text{ and } \gamma_{\ds}(\rho_p) = \max(1, 7(1-p)) \, .
  \end{equation}
\end{example}
\begin{figure}
  \centering
  \includegraphics{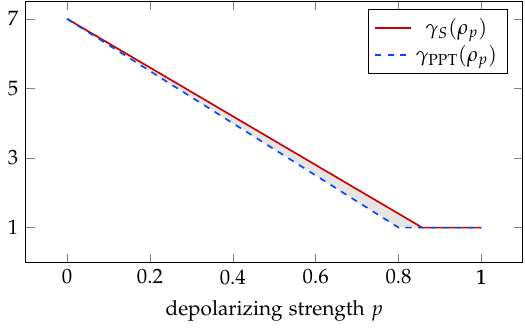}
  \caption{Numerical evaluation of $\gamma_{\ppt}$ and $\gamma_{\ds}$ on a depolarized maximally entangled $4$-qubit state $\rho_p$. The finite decomposition set $\ds$ is heuristically chosen. $\gamma_{\sep}$ must lie in the shaded region.}
  \label{fig:two_noisy_bell}
\end{figure}

In the following, we will introduce a more principled approach to characterize the quasiprobability extent $\gamma_{\sep}$ by expressing it in terms of a converging hierarchy of SDPs.
This approach is based on the well-known Doherty-Parrilo-Spedalieri (DPS) hierarchy of approximations for the set of separable states~\cite{doherty2004_complete}.
We briefly summarize its most important ideas below.

The main workhorse of the DPS hierarchy is the notion of extendibility.
Let $k\in\mathbb{N}$ and $\rho_{AB}\in\dops(AB)$ be a bipartite state.
Let $B_1,\dots,B_k$ be identical copies of the system $B$.
We say $\rho_{AB}$ is \emph{$k$-extendible} w.r.t. $B$ if there exists a state $\tilde{\rho}_{AB_1\dots B_k}\in\dops(AB_1\dots B_k)$
which fulfills following properties:
\begin{itemize}
  \item $\tilde{\rho}$ is invariant under permutations of the subsystems $B_i$, i.e.
  \begin{equation}
    \forall i\neq j: \swap_{B_i,B_j} \tilde{\rho} \swap_{B_i,B_j} = \tilde{\rho}
  \end{equation}
  where for any states $\ket{\psi},\ket{\phi}$ the operation $\swap_{B_i,B_j}$ acts as
  \begin{equation}
    \swap_{B_i,B_j} \ket{\psi}_{B_i}\otimes \ket{\phi}_{B_j} = \ket{\phi}_{B_i} \otimes \ket{\psi}_{B_j} \, .
  \end{equation}
  \item $\rho_{AB}=\tilde{\rho}_{AB_1}\coloneqq \tr_{B_2\dots B_k}[\tilde{\rho}_{AB_1\dots B_k}]$ where the equality should be understood by identifying $B$ and $B_1$.
\end{itemize}
Furthermore, we say $\rho_{AB}$ is \emph{$k$-PPT-extendible} w.r.t. $B$ if additionally 
\begin{itemize}
  \item $\tilde{\rho}$ remains positive under any partial transposition w.r.t. to its $n+1$ sub-systems.
  Because of the permutation invariance, this amounts to
  \begin{equation}
    \left(\idchan_A\otimes \transpose_{B_1} \otimes \idchan_{B_2}\otimes \idchan_{B_3}\otimes \dots \otimes \idchan_{B_k}\right)(\tilde{\rho}) \loewnergeq 0
  \end{equation}
  \begin{equation}
    \left(\idchan_A\otimes \transpose_{B_1} \otimes \transpose_{B_2}\otimes \idchan_{B_3}\otimes \dots \otimes \idchan_{B_k}\right)(\tilde{\rho}) \loewnergeq 0
  \end{equation}
  \begin{center}
    $\vdots$
  \end{center}
  \begin{equation}
    \left(\idchan_A\otimes \transpose_{B_1} \otimes \transpose_{B_2}\otimes \transpose_{B_3}\otimes \dots \otimes \transpose_{B_k}\right)(\tilde{\rho}) \loewnergeq 0
  \end{equation}
\end{itemize}
For $k\geq 1$ we define
\begin{equation}
   \sep_k(A;B) \coloneqq \{\rho\in\dops(AB) | \rho \text{ is $k$-PPT-extendible w.r.t. $B$ } \} \, .
\end{equation}
Evidently, we have $\sep_1(A;B)=\ppt(A;B)$.
It can also be verified that for any $k\geq 1$ one has $\sep_{k+1}\subset \sep_k$.
We leave the proof as an exercise for the reader and refer to \reference~\cite{doherty2004_complete} for more details.
Clearly, any separable state $\sum_i p_i\rho_i\otimes\sigma_i$ is $k$-PPT-extendible for any $k\in\mathbb{N}$ as we can pick the extension $\sum_ip_i\rho_i\otimes\sigma_i^{\otimes k}$.
Furthermore, it turns out that 
\begin{equation}
  \sep(A;B) = \bigcap\limits_{k=1}^{\infty} \sep_k(A;B) \, .
\end{equation}
This is a consequence of the completeness of the k-extendibility criterion for separability.
\begin{proposition}{\cite[Theorem 1]{doherty2004_complete}}{dps_completeness}
  A bipartite state $\rho\in\dops(AB)$ is separable if and only if it is $k$-extendible w.r.t. $B$ for all $k\in\mathbb{N}$.
\end{proposition}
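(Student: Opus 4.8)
The forward direction is immediate and was already noted in the text: if $\rho_{AB}=\sum_i p_i\,\rho_A^{(i)}\otimes\sigma_B^{(i)}$ is separable, then $\tilde\rho_{AB_1\dots B_k}\coloneqq\sum_i p_i\,\rho_A^{(i)}\otimes(\sigma_B^{(i)})^{\otimes k}$ is a permutation-invariant extension with the right marginal for every $k$, so $\rho_{AB}$ is $k$-extendible w.r.t.\ $B$ for all $k\in\mathbb{N}$. All the content lies in the converse, and the plan is to prove it via the \emph{quantum de Finetti theorem}.

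So suppose $\rho_{AB}$ is $k$-extendible w.r.t.\ $B$ for every $k$. For each $N$ I would fix a permutation-invariant extension $\tilde\rho^{(N)}_{AB_1\dots B_N}$ with $\tilde\rho^{(N)}_{AB_1}=\rho_{AB}$, where here $A$ plays the role of a reference system that is \emph{not} permuted. The goal is to show that the single-copy $B$-reduction of such a highly symmetric state is necessarily close to separable, with an error that vanishes as $N\to\infty$. Since this reduction equals the fixed state $\rho_{AB}$ for every $N$, and since $\sep(A;B)$ is a closed (indeed compact) convex subset of the finite-dimensional space $\herm(AB)$, this forces $\rho_{AB}\in\sep(A;B)$.

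Concretely I would invoke the finite quantum de Finetti theorem in its version with a reference system (Christandl \etal, building on earlier work): if $\tilde\rho_{AB_1\dots B_N}$ is invariant under permutations of $B_1,\dots,B_N$, then there exist a probability measure $\mu$ on $\dops(B)$ and, for $\mu$-almost every $\sigma$, a state $\omega^{\sigma}_A\in\dops(A)$ with $\norm{\tilde\rho_{AB_1}-\int \omega^{\sigma}_A\otimes\sigma\,d\mu(\sigma)}_1\le c\,\dimension(B)^2/N$ for an absolute constant $c$ (note the bound is independent of $\dimension(A)$). The integral $\tau_N\coloneqq\int \omega^{\sigma}_A\otimes\sigma\,d\mu(\sigma)$ is by construction separable, so $\norm{\rho_{AB}-\tau_N}_1\le c\,\dimension(B)^2/N$; letting $N\to\infty$ and using closedness of $\sep(A;B)$ finishes the argument. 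An alternative route avoiding quantitative bounds is a compactness argument: for each fixed $n$ the family $\{\tilde\rho^{(N)}_{AB_1\dots B_n}\}_{N\ge n}$ lies in the compact set $\dops(AB_1\dots B_n)$, so a diagonal subsequence yields a consistent exchangeable family $(\sigma^{(n)})_{n\ge1}$ with $\tr_{B_n}[\sigma^{(n)}]=\sigma^{(n-1)}$, each $\sigma^{(n)}$ symmetric in the $B_i$, and $\sigma^{(1)}=\rho_{AB}$; the Hudson--Moody (infinite quantum de Finetti) representation theorem with a reference system then gives $\rho_{AB}=\sigma^{(1)}=\int\omega^{\sigma}_A\otimes\sigma\,d\mu(\sigma)$, again separable.

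The genuine obstacle is the de Finetti theorem itself: the rest of the proof is bookkeeping (passing to $B$-marginals, exploiting permutation-invariance, and using compactness/closedness of $\sep(A;B)$). I would therefore cite the de Finetti result rather than reprove it, since a self-contained derivation via the symmetric-subspace or measure-and-prepare estimate would be a lengthy digression orthogonal to the rest of the chapter. Finally, I would remark that this suffices for the claimed identity $\sep(A;B)=\bigcap_{k\ge1}\sep_k(A;B)$: separability implies $k$-PPT-extendibility for every $k$ (take the same i.i.d.\ extension, which is manifestly PPT under every partial transpose), and conversely $k$-PPT-extendibility for all $k$ implies plain $k$-extendibility for all $k$, hence separability by the proposition.
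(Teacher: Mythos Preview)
Your proposal is correct and aligns with what the paper indicates: the paper does not give its own proof of this proposition but merely cites \cite{doherty2004_complete} and remarks that ``the proof is based on techniques related to the quantum de Finetti theorem.'' You have filled in precisely that argument, invoking the finite de Finetti theorem with a reference system (as in \cite{christandl2007_definetti}, which the paper also cites immediately afterward for the quantitative convergence bound) and using closedness of $\sep(A;B)$ to pass to the limit.
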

The proof is based on techniques related to the quantum de Finetti theorem~\cite{caves2002_unknown}.
In fact, quantum de Finetti theorems can even provide an upper bound on the convergence speed of the hierarchy.
\begin{proposition}{\cite[Theorem II.7']{christandl2007_definetti}}{dps_convergence}
  Let $\rho\in\dops(AB)$ be a $k$-extendible state w.r.t. $B$.
  There exists a state $\sigma\in\sep(A;B)$ such that
  \begin{equation}
    \norm{\rho - \sigma}_1 \leq \frac{2\dimension(B)^2}{k}\, .
  \end{equation}
\end{proposition}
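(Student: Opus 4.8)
The plan is to deduce this bound from a finite (``exponential'') quantum de Finetti theorem of the one-sided type, in which the $A$-system is carried along passively while the de Finetti structure is imposed only on the $B$-systems. By the hypothesis of $k$-extendibility there is a symmetric extension $\tilde\rho_{AB_1\cdots B_k}\in\dops(AB_1\cdots B_k)$, invariant under permutations of $B_1,\dots,B_k$, with $\tr_{B_2\cdots B_k}[\tilde\rho]=\rho_{AB}$ under the identification $B\cong B_1$. It therefore suffices to produce a probability measure $\mu$ on the pure states of $B$ and states $\omega_A^{\theta}\in\dops(A)$ with
\begin{equation}
  \norm{\tilde\rho_{AB_1} - \int d\mu(\theta)\,\omega_A^{\theta}\otimes\proj{\theta}_{B}}_1 \leq \frac{2\dimension(B)^2}{k} \, ,
\end{equation}
since the operator on the right is a convex combination of product states and hence lies in $\sep(A;B)$ by definition.

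The approximating separable state would be produced by a measure-and-prepare channel acting on the surplus copies $B_2,\dots,B_k$. I would use the resolution of identity on the symmetric subspace, $\int \proj{\theta}^{\otimes m}\,d\theta = \dimension(\mathrm{Sym}^m(\cH_B))^{-1}\,\Pi_{\mathrm{sym}}^{(m)}$ with $\dimension(\mathrm{Sym}^m(\cH_B))=\binom{\dimension(B)+m-1}{m}$; for $m=k-1$ this gives an informationally complete POVM $\{\dimension(\mathrm{Sym}^{k-1}(\cH_B))\,\proj{\theta}^{\otimes(k-1)}\,d\theta\}$ on the symmetric subspace of $B_2\cdots B_k$ (extended arbitrarily off it). Let $\Gamma_{B_2\cdots B_k\to B'}$ be the channel that performs this measurement and then prepares $\proj{\theta}$ on a fresh copy $B'$ of $B$, and set $\sigma_{AB'}:=(\idchan_A\otimes\Gamma)(\tilde\rho_{AB_2\cdots B_k})$. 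By construction $\sigma_{AB'}=\int d\theta\,p(\theta)\,\omega_A^{\theta}\otimes\proj{\theta}_{B'}$ with $p(\theta)\,\omega_A^{\theta}=\dimension(\mathrm{Sym}^{k-1}(\cH_B))\,\bra{\theta}^{\otimes(k-1)}\tilde\rho_{AB_2\cdots B_k}\ket{\theta}^{\otimes(k-1)}$, so $\sigma_{AB'}\in\sep(A;B')$ automatically, and it only remains to control the error.

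The error bound $\norm{\tilde\rho_{AB_1}-\sigma_{AB'}}_1$ (identifying $B_1\cong B'$) is where the factor $\dimension(B)^2$ and the $1/k$ decay enter. Using permutation symmetry of $\tilde\rho$ to interchange $B_1$ with one of the measured copies, this is the statement that, for a state on $A$ together with $k-1$ symmetric copies of $B$, the reduced state on a single copy is close to the ``measure all copies and re-prepare the estimate'' output. Because the POVM is the symmetric-subspace (``pretty good'') measurement, its overlaps are governed directly by $\dimension(\mathrm{Sym}^{k-1}(\cH_B))$, and a gentle-measurement/estimation argument on the symmetric subspace — which needs no union bound, hence produces no logarithmic factor — gives a bound of the stated order $O(\dimension(B)^2/k)$, uniformly in the bystander system $A$ (one may reduce to pure $\tilde\rho$ by purifying into $A$). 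Assembling the three steps yields $\norm{\rho_{AB}-\sigma}_1\le 2\dimension(B)^2/k$ with $\sigma:=\sigma_{AB'}\in\sep(A;B)$, which is the claim.

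I expect the last step to be the main obstacle: the conceptual content is standard, but pinning down the explicit constant $2$ — including whether one should measure $k$ or $k-1$ of the copies, and possibly a more symmetric choice of measurement testing the POVM against the identity via a virtual $(k+1)$-th copy — requires some careful symmetric-subspace combinatorics (evaluating $\dimension(\mathrm{Sym}^{k-1}(\cH_B))$, controlling $\int\proj{\theta}^{\otimes k}d\theta$ tested against $\tilde\rho$, and invoking the gentle-measurement lemma) rather than any new idea. I would also record that only ordinary $k$-extendibility is used here, not the PPT conditions in the definition of $\sep_k(A;B)$, so the same $2\dimension(B)^2/k$ bound holds a fortiori for every state in $\sep_k(A;B)$, which is exactly what is needed to control the convergence $\sep_k\downarrow\sep$ of the hierarchy.
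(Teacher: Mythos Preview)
The paper does not supply its own proof of this proposition: it is quoted verbatim as \cite[Theorem II.7']{christandl2007_definetti} and used as a black box to control the convergence of the DPS hierarchy. So there is nothing in the paper to compare your argument against.

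Your outline is in the spirit of the cited reference --- pass to the symmetric extension, use the POVM furnished by the resolution of identity on the symmetric subspace, and compare the single-copy marginal to the resulting measure-and-prepare output. Two remarks. First, the actual proof in Christandl--K\"onig--Mitchison--Renner is more algebraic than ``gentle measurement'': after purifying into $A$, they work directly with the projector onto $\mathrm{Sym}^k(\cH_B)$ and control the trace-norm error by the ratio of symmetric-subspace dimensions $\binom{d+k-1}{k}/\binom{d+k}{k+1}$ (or its analogue), which is where the clean $2d^2/k$ emerges. A gentle-measurement bound would typically introduce a square root and a worse constant, so if you follow your sketch literally you should not expect to land on exactly $2$. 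Second, your bookkeeping about ``interchanging $B_1$ with one of the measured copies'' is muddled: the cleanest formulation measures $k{-}1$ of the $k$ copies and compares the remaining marginal $\tilde\rho_{AB_1}$ directly to the prepared state on $B'$, using only that $\tilde\rho$ is permutation-invariant on the $B$'s; no swap is needed. Your closing observation that only plain $k$-extendibility (not the PPT constraints) is used is correct and matches the paper's remark immediately following the proposition.
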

We note that the PPT constraint in our definition of $\sep_k$ is clearly not necessary for the convergence of the hierarchy.
The PPT constraint makes the hierarchy strictly tighter and as such is often included to get better results for small $k$.
We also note that there is some variation in literature w.r.t. how to the PPT constraint on the extension is defined.
For instance, in \reference~\cite{navascues2009_complete,navascues2009_power} the PPT criterion is not applied to every possible bipartition of $AB_1\dots B_k$, but instead only to one.
In this thesis, we chose to follow the convention from~\cite{doherty2004_complete}.

By definition, the inclusion in the set $\sep_k(A;B)$ exhibits a semidefinite representation.
As such, we can write $\gamma_{\sep_k}(\rho)$ of a state $\rho\in\dops(AB)$ as an SDP
\begin{equation}
  \gamma_{\sep_k}(\rho) = \left \lbrace
  \begin{array}{r l}
    \min\limits_{\sigma^{\pm}\in\pos(AB_1\dots B_k)} & \tr[\sigma^+] + \tr[\sigma^-] \\
    \textnormal{s.t.} & \rho = \tr_{B_2\dots B_k}[\sigma^+] - \tr_{B_2\dots B_k}[\sigma^-] , \\
    & \swap_{B_i,B_j} \sigma^{\pm} \swap_{B_i,B_j} = \sigma^{\pm} , \\
    & \left(\idchan_A\otimes \transpose_{B_1} \otimes \dots \otimes \idchan_{B_k}\right)(\sigma^{\pm}) \loewnergeq 0 , \\
    & \quad\quad \vdots \\
    & \left(\idchan_A\otimes \transpose_{B_1} \otimes \dots \otimes \transpose_{B_k}\right)(\sigma^{\pm}) \loewnergeq 0
  \end{array} \right . 
\end{equation}
where we used the convexity of $\sep_k$, \Cref{lem:two_element_qpd} and the same substitution as for~\Cref{eq:gamma_ppt_sdp}.
We highlight again that the first level of this SDP hierarchy precisely retrieves the $\ppt$ lower bound, i.e., $\gamma_{\sep_1}(\rho)=\gamma_{\ppt}(\rho)$.
As such, the SDP hierarchy can be considered a refinement of the $\ppt$ lower bound, that provably converges in the limit of large $k$.

\begin{theorem}{}{}
  For a bipartite operator $\rho\in\dops(AB)$ the quasiprobability extent $\gamma_{\sep_k}(\rho)$ monotonically approaches $\gamma_{\sep}(\rho)$ from below.
  Furthermore, the convergence speed is bounded by 
  \begin{equation}
    \gamma_{\sep}(\rho) - \gamma_{\sep_k}(\rho) \leq \frac{64\dimension(A)^{3/2}\dimension(B)^{7/2}\min(\dimension(A),\dimension(B))}{k} \, .
  \end{equation}
\end{theorem}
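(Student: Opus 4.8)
The plan is to derive the statement from three ingredients that are already available: the inclusions $\sep(A;B)\subseteq\sep_{k+1}(A;B)\subseteq\sep_k(A;B)$ (which give monotonicity), an \emph{a priori} bound $\gamma_{\sep}(\rho)<2\min(\dimension(A),\dimension(B))$, and the quantitative de Finetti estimate of \Cref{prop:dps_convergence} combined with the Lipschitz continuity of $\gamma_{\sep}$ from \Cref{prop:gamma_sep_lipschitz}, which together convert $\norm{\cdot}_1$-closeness of states into closeness of their quasiprobability extents. Monotonicity and the direction of approach are immediate: by \Cref{lem:gamma_ds_bound} the inclusions give $\gamma_{\sep_k}(\rho)\le\gamma_{\sep_{k+1}}(\rho)\le\gamma_{\sep}(\rho)$, so $(\gamma_{\sep_k}(\rho))_k$ is non-decreasing and bounded above by $\gamma_{\sep}(\rho)$; it then only remains to establish the rate.

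For the a priori bound I would use that, as already noted in \Cref{ex:maximally_entangled_state}, $\gamma_{\sep}$ attains its maximum over $\dops(AB)$ at a maximally entangled state. Concretely, purify $\rho_{AB}$ to a pure state $\ket{\phi}_{ABC}$ and trace out the purifying system $C$; since tracing out a locally held subsystem is a free operation, \Cref{cor:gamma_state_from_channel} gives $\gamma_{\sep}(\rho_{AB})\le\gamma_{\sep}(\proj{\phi}_{(AC)B})$. By \Cref{thm:optimal_qpd_pure} the latter equals $2(\sum_j u_j)^2-1\le 2r-1$ where $r=\operatorname{rank}(\tr_A\rho_{AB})\le\dimension(B)$ is the Schmidt rank; grouping $C$ with $B$ instead yields the analogous bound $2\dimension(A)-1$, hence $\gamma_{\sep}(\rho)\le 2\min(\dimension(A),\dimension(B))-1$.

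For the rate, observe that $\sep_k(A;B)$ is convex and compact, so by \Cref{lem:two_element_qpd} and \Cref{lem:compact_ds} there is an optimal two-term decomposition $\rho=a^+\sigma^+-a^-\sigma^-$ with $\sigma^{\pm}\in\sep_k(A;B)$, $a^{\pm}\ge 0$ and $a^++a^-=\gamma_{\sep_k}(\rho)$. Each $\sigma^{\pm}$ is in particular $k$-extendible w.r.t.\ $B$, so \Cref{prop:dps_convergence} supplies separable states $\tau^{\pm}\in\sep(A;B)$ with $\norm{\sigma^{\pm}-\tau^{\pm}}_1\le 2\dimension(B)^2/k$. Writing $\Delta\coloneqq a^+(\sigma^+-\tau^+)-a^-(\sigma^--\tau^-)$, so that $\rho=(a^+\tau^+-a^-\tau^-)+\Delta$ with $\norm{\Delta}_1\le\gamma_{\sep_k}(\rho)\cdot 2\dimension(B)^2/k$, sub-additivity (\Cref{lem:gamma_convexity}) together with the fact that $a^+\tau^+-a^-\tau^-$ is itself a $\sep$-decomposition of $1$-norm $\le\gamma_{\sep_k}(\rho)$ gives $\gamma_{\sep}(\rho)-\gamma_{\sep_k}(\rho)\le\gamma_{\sep}(\Delta)$. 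Estimating $\gamma_{\sep}(\Delta)$ via \Cref{prop:gamma_sep_lipschitz} (using $\norm{\Delta}_2\le\norm{\Delta}_1$ and $2^{\frac32(\ceil{\log\dimension(A)}+\ceil{\log\dimension(B)})}\le 8\,\dimension(A)^{3/2}\dimension(B)^{3/2}$) and inserting the a priori bound $\gamma_{\sep_k}(\rho)\le\gamma_{\sep}(\rho)<2\min(\dimension(A),\dimension(B))$ yields
\begin{equation}
  \gamma_{\sep}(\rho)-\gamma_{\sep_k}(\rho)\le 8\,\dimension(A)^{3/2}\dimension(B)^{3/2}\cdot\gamma_{\sep_k}(\rho)\cdot\frac{2\dimension(B)^2}{k}\le\frac{32\,\dimension(A)^{3/2}\dimension(B)^{7/2}\min(\dimension(A),\dimension(B))}{k}\,,
\end{equation}
which is the asserted estimate; the constant $64$ in the statement simply absorbs the slack in the $\ceil{\cdot}$'s and in $\gamma_{\sep}(\rho)\le 2\min-1$.

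The genuinely new input here is the quantum de Finetti bound of \Cref{prop:dps_convergence}, which is imported, and the rest is bookkeeping built on tools already established. The one step that needs care — and which I would flag as the crux — is that closing the estimate requires an \emph{a priori}, dimension-dependent bound on $\gamma_{\sep_k}(\rho)$ (equivalently on $\gamma_{\sep}(\rho)$): the perturbation argument only controls $\gamma_{\sep}(\rho)-\gamma_{\sep_k}(\rho)$ in terms of $\gamma_{\sep_k}(\rho)$ itself, so without such a bound one would merely obtain an inequality of the form $\gamma_{\sep}(\rho)-\gamma_{\sep_k}(\rho)\le\varepsilon_k\,\gamma_{\sep_k}(\rho)$ rather than an absolute rate.
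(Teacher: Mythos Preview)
Your proof is correct and follows essentially the same route as the paper: take an optimal two-term $\sep_k$-decomposition, approximate each piece by a separable state via the de~Finetti bound of \Cref{prop:dps_convergence}, and control the resulting perturbation through the Lipschitz estimate of \Cref{prop:gamma_sep_lipschitz} together with the a~priori bound $\gamma_{\sep}(\rho)\le 2\min(\dimension(A),\dimension(B))-1$. Your bookkeeping is in fact slightly tighter than the paper's (you bound $\norm{\Delta}_1$ by $(a^++a^-)\cdot 2\dimension(B)^2/k$ rather than the looser $(a^++a^-)\cdot 4\dimension(B)^2/k$), which is why you land on $32$ instead of $64$.
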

\begin{proof}
  The set $\sep_k$ is convex and compact (because it is the image of a compact set under the partial trace, which is continuous).
  Thus, by~\Cref{lem:two_element_qpd,lem:compact_ds}, the quasiprobability extent of $\rho$ w.r.t. $\sep_k(A;B)$ is achieved by some QPD of the form
  \begin{equation}
    \rho = (1+t)\rho^+ - t\rho^-
  \end{equation}
  for some $\rho^{\pm}\in\sep_k(A;B)$.
  By~\Cref{prop:dps_convergence}, there exist $\sigma^+,\sigma^-\in\sep(A;B)$ such that $\norm{\rho^{\pm}-\sigma^{\pm}}_1\leq 2\dimension(B)^2/k$.
  We can write
  \begin{equation}
    \rho - \Delta = (1+t)\sigma^+ - t\sigma^-
  \end{equation}
  where $\Delta\coloneqq (1+t)(\rho^+-\sigma^+) - t(\rho^--\sigma^-)$.
  Using the Lipschitz continuity of the quasiprobability extent (see~\Cref{prop:gamma_sep_lipschitz}), we have
  \begin{align}
    \gamma_{\sep}(\rho) &\leq \gamma_{\sep}(\rho-\Delta) + L\norm{\Delta}_2 \\
    &\leq (1+2t) + L\norm{\Delta}_2 \\
    &= \gamma_{\sep_k}(\rho) + L\norm{\Delta}_2
  \end{align}
  where $L\leq 2^{\frac{3}{2}(\log\dimension(A) + \log\dimension(B) + 2)} = 8(\dimension(A)\dimension(B))^{3/2}$.
  As such, the desired statement follows from
  \begin{align}
    \norm{\Delta}_2 
    &\leq \norm{\Delta}_1 \\
    &\leq (1+2t)\left(\norm{\rho^+-\sigma^+}_1 + \norm{\rho^--\sigma^-}_1\right) \\
    &\leq \gamma_{\sep_k}(\rho)\frac{4\dimension(B)^2}{k} \\
    &\leq \gamma_{\sep}(\rho)\frac{4\dimension(B)^2}{k} \\
    &\leq 2\min(\dimension(A),\dimension(B))\frac{4\dimension(B)^2}{k}
  \end{align}
  where we used that the maximal quasiprobability extent $\gamma_{\sep}$ is achieved by the maximally entangled state across $A$ and $B$ (recall~\Cref{ex:maximally_entangled_state}).
\end{proof}

We briefly illustrate an example where the SDP hierarchy provides an improvement over the PPT lower bound.
\begin{example}
  We have already seen previously that entangled PPT states provide an example where $\gamma_{\sep}$ is strictly larger than $\gamma_{\ppt}$.
  Taking $\rho$ to be the two-qutrit state from~\Cref{ex:entangled_ppt_state}, we numerically evaluate the SDP hierarchy $\gamma_{\sep_k}(\rho)$ for the first levels $k=1,2,3,4$.
  \begin{center}
  \begin{tabular}{r|cccc}
  & $k=1$ & $k=2$ & $k=3$ & $k=4$ \\
  \hline
  $\gamma_{\sep_k}(\rho)$ & $1.000000$ & $1.111037$ & $1.111037$ & $1.111037$
  \end{tabular}
  \end{center}
  As expected, the first level recovers the trivial bound $\gamma_{\ppt}(\rho)=1$.
  Interestingly, the hierarchy seemingly converges already at the second level.
\end{example}

\subsubsection{Mixed states in the asymptotic regime}\label{sec:gamma_sep_asymptotic}
All previously introduced techniques to find bounds for $\gamma_{\sep}$ rely on linear or semidefinite programming methods where the runtime scales polynomially with the dimension of the involved systems.
Clearly, these approaches are ineffective in computing the regularized and asymptotic quasiprobability extent $\gammareg_{\sep}, \gammasreg_{\sep}$ as the involved system size tends to infinity.
In the following, we explore some more analytical strategies to address this issue.

A natural starting point for such an endeavor are conversion bounds based on the so-called \emph{entanglement cost} and \emph{distillable entanglement}.
\begin{definition}{}{cost_and_distillable}
  Let $\epsilon \geq 0$ and $\rho\in\dops(AB)$. The \emph{single-shot entanglement cost} and \emph{single-shot distillable entanglement} of $\rho$ w.r.t. some set $\ds\subset\cptp$ are defined as
  \begin{equation}
    E_{C,\ds}^{(1),\epsilon}(\rho)\coloneqq \min\limits_{m\in\mathbb{N}} \{\log m | \exists\cE\in\ds : \frac{1}{2}\norm{\rho-\cE(\proj{\Psi_m})}_1 \leq \epsilon \}
  \end{equation}
  \begin{equation}
    E_{D,\ds}^{(1),\epsilon}(\rho)\coloneqq \max\limits_{m\in\mathbb{N}} \{\log m | \exists\cE\in\ds : \frac{1}{2}\norm{\proj{\Psi_m} - \cE(\rho)}_1 \leq \epsilon \}
  \end{equation}
  where $\ket{\Psi_m}$ denotes the maximally entangled state with local dimension $m$.
\end{definition}
Recall from~\Cref{cor:gamma_state_from_channel} that if two bipartite states $\rho,\sigma\in\dops(AB)$ are related by an LOCC map $\cE$ in the sense that $\rho=\cE(\sigma)$, then we know that $\gamma_{\sep}(\rho)\leq \gamma_{\sep}(\sigma)$.
The quantities $E_{C,\locc}^{\epsilon}(\rho)$ and $E_{D,\locc}^{\epsilon}(\rho)$ characterize (up to an error $\epsilon$) how many Bell pairs are required to create $\rho$, respectively, how many Bell pairs can be extracted from $\rho$.
Combined with the aforementioned state conversion bound, this yields following inequality.
\begin{lemma}{}{gamma_single_shot_bound}
  Let $\epsilon\geq 0$. For any bipartite state $\rho\in\dops(AB)$ one has
  \begin{equation}
    2^{E^{(1),\epsilon}_{D,\locc}(\rho)+1}-1 - 4L\epsilon\leq \gamma^{\epsilon}_{\sep}(\rho) \leq 2^{E^{(1),\epsilon}_{C,\locc}(\rho) + 1} - 1 \, .
  \end{equation}
  where $L$ is the Lipschitz constant of $\gamma_{\sep}$ from~\Cref{prop:gamma_sep_lipschitz}.
\end{lemma}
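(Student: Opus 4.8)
The plan is to prove the two inequalities separately, both relying on \Cref{cor:gamma_state_from_channel} (the chaining property for states) together with the Lipschitz continuity of $\gamma_{\sep}$ from \Cref{prop:gamma_sep_lipschitz}. For the upper bound, let $m$ be optimal in the definition of $E^{(1),\epsilon}_{C,\locc}(\rho)$, so there exists an LOCC channel $\cE$ with $\frac{1}{2}\norm{\rho - \cE(\proj{\Psi_m})}_1 \leq \epsilon$. Set $\sigma \coloneqq \cE(\proj{\Psi_m})$; since $\cE\in\locc\subset\locc^{\star}$, \Cref{cor:gamma_state_from_channel} gives $\gamma_{\sep}(\sigma) = \gamma_{\locc^{\star}}(\cE(\proj{\Psi_m})) \leq \gamma_{\locc^{\star}}(\cE)\,\gamma_{\locc^{\star}}(\proj{\Psi_m}) \leq \gamma_{\sep}(\proj{\Psi_m})$, using that LOCC channels have $\gamma_{\locc^{\star}} = 1$ (they are free) and that $\gamma_{\locc^{\star}}(\proj{\Psi_m}) = \gamma_{\sep}(\proj{\Psi_m})$ by \Cref{lem:gammas_overview_nonlocal_state}. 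By \Cref{ex:maximally_entangled_state}, $\gamma_{\sep}(\proj{\Psi_m}) = 2m - 1 = 2^{E^{(1),\epsilon}_{C,\locc}(\rho)+1} - 1$. Since $\sigma$ is an $\epsilon$-approximation of $\rho$ and is itself a (separable-ball-admissible) state — in fact we should take $\sigma\in\dops(AB)$, which it is — we conclude $\gamma^{\epsilon}_{\sep}(\rho) \leq \gamma_{\sep}(\sigma) \leq 2^{E^{(1),\epsilon}_{C,\locc}(\rho)+1} - 1$ directly from the definition of the smoothed quasiprobability extent (smoothing over states within trace distance $\epsilon$). Note the upper bound needs no Lipschitz factor.

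For the lower bound, let $m$ be optimal in $E^{(1),\epsilon}_{D,\locc}(\rho)$, so there is an LOCC channel $\cE$ with $\frac{1}{2}\norm{\proj{\Psi_m} - \cE(\rho)}_1 \leq \epsilon$. Let $\tilde\rho \in \dops(AB)$ be any state with $\frac{1}{2}\norm{\rho - \tilde\rho}_1 \leq \epsilon$ achieving $\gamma^{\epsilon}_{\sep}(\rho) = \gamma_{\sep}(\tilde\rho)$. Then $\cE(\tilde\rho)$ is close to $\proj{\Psi_m}$: by the triangle inequality and the contractivity of the trace norm under channels, $\frac{1}{2}\norm{\proj{\Psi_m} - \cE(\tilde\rho)}_1 \leq \frac{1}{2}\norm{\proj{\Psi_m} - \cE(\rho)}_1 + \frac{1}{2}\norm{\cE(\rho) - \cE(\tilde\rho)}_1 \leq 2\epsilon$. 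By the chaining property, $\gamma_{\sep}(\cE(\tilde\rho)) \leq \gamma_{\sep}(\tilde\rho)$. Now I would use the Lipschitz bound: $\gamma_{\sep}(\proj{\Psi_m}) \leq \gamma_{\sep}(\cE(\tilde\rho)) + L\norm{\proj{\Psi_m} - \cE(\tilde\rho)}_2 \leq \gamma_{\sep}(\cE(\tilde\rho)) + L\norm{\proj{\Psi_m} - \cE(\tilde\rho)}_1 \leq \gamma_{\sep}(\tilde\rho) + 4L\epsilon$, where I bounded the $2$-norm by the $1$-norm. Since $\gamma_{\sep}(\proj{\Psi_m}) = 2^{E^{(1),\epsilon}_{D,\locc}(\rho)+1} - 1$ and $\gamma_{\sep}(\tilde\rho) = \gamma^{\epsilon}_{\sep}(\rho)$, rearranging gives $2^{E^{(1),\epsilon}_{D,\locc}(\rho)+1} - 1 - 4L\epsilon \leq \gamma^{\epsilon}_{\sep}(\rho)$, as desired.

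The main subtlety — and the step I would be most careful with — is keeping track of exactly how the approximation errors compose on the distillable side, since smoothing $\rho$ and the $\epsilon$ already present in the distillation protocol stack up to $2\epsilon$ inside the $1$-norm before the Lipschitz factor is applied, producing the $4L\epsilon$ in the statement. I would also double-check that the definitions of $\gamma^{\epsilon}_{\sep}$ used here (smoothing over states within trace distance $\epsilon$, not diamond-norm) are the intended ones for the state case, and that $\cE(\tilde\rho)$ and $\cE(\proj{\Psi_m})$ genuinely lie in the smoothing ball used to define $\gamma^{\epsilon}_{\sep}$ at the relevant radius — i.e. that the monotonicity direction of $\gamma_{\sep}$ under the trace-norm ball is exploited correctly. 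The remaining ingredients (the value $2m-1$ for the maximally entangled state, contractivity of the trace norm, $\norm{\cdot}_2 \leq \norm{\cdot}_1$) are entirely routine.
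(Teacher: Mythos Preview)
Your proof is correct and follows essentially the same approach as the paper: both use the chaining property (\Cref{cor:gamma_state_from_channel}) with the value $\gamma_{\sep}(\proj{\Psi_m})=2m-1$ for the upper bound, and combine chaining, trace-norm contractivity, and the Lipschitz bound for the lower bound. Your lower-bound argument is in fact slightly more direct than the paper's---you apply the distillation map directly to the achiever $\tilde\rho$ of $\gamma_{\sep}^{\epsilon}(\rho)$, whereas the paper first invokes monotonicity of the smoothed extent under LOCC to pass to $\gamma_{\sep}^{\epsilon}(\cE(\rho))$ and then introduces a separate achiever $\tau$ there; both routes yield the same $4L\epsilon$ loss.
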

Recall the definition of the smoothed quasiprobability extent in~\Cref{def:smoothed_gamma}.
\begin{proof}
  Let us denote $c\coloneqq E_{C,\ds}^{(1),\epsilon}(\rho)$ and $d\coloneqq E_{D,\ds}^{(1),\epsilon}(\rho)$.
  There exist LOCC maps $\cE$ and $\cF$ as well as states $\rho',\rho''$ such that
  \begin{equation}
    \cE(\rho) = \rho' \quad \text{ and } \quad \rho'\approx_{\epsilon}\proj{\Psi_d} \, ,
  \end{equation}
  \begin{equation}
    \cF(\proj{\Psi_c}) = \rho'' \quad \text{ and } \quad \rho''\approx_{\epsilon}\rho \, .
  \end{equation}
  On one hand, we have
  \begin{equation}
    \gamma_{\sep}^{\epsilon}(\rho) \leq \gamma_{\sep}(\rho'') \leq \gamma_{\sep}(\proj{\Psi_c}) \, ,
  \end{equation}
  and conversely,
  \begin{align}
    \gamma_{\sep}^{\epsilon}(\rho)
    &\geq \gamma_{\sep}^{\epsilon}(\rho') \\
    &= \gamma_{\sep}(\tau) \\
    &\geq \gamma_{\sep}(\proj{\Psi_d}) - L\norm{\tau - \proj{\Psi_d}}_2 \\
    &\geq \gamma_{\sep}(\proj{\Psi_d}) - L(\norm{\tau-\rho'}_1 + \norm{\rho' - \proj{\Psi_d}}_1) \\
    &\geq \gamma_{\sep}(\proj{\Psi_d}) - 4L\epsilon \\
  \end{align}
  where $\tau$ is defined to be the achieving state of $\gamma_{\sep}^{\epsilon}(\rho')$.
  We used $\norm{\cE(\rho)-\cE(\sigma)}_1 \leq \norm{\rho-\sigma}_1$ and the Lipschitz continuity of $\gamma_{\sep}$.
  
\end{proof}

One of the most central quantities in entanglement theory are the optimal asymptotic conversion rates, which characterize the optimal number of Bell pairs that a state can be converted to/from in the asymptotic regime.
\begin{definition}{}{}
  The \emph{asymptotic entanglement cost} and \emph{asymptotic distillable entanglement} of a bipartite state $\rho\in\dops(AB)$ w.r.t. $\ds\subset\cptp$ are defined as
  \begin{equation}
    E_{C,\ds}(\rho) \coloneqq \lim\limits_{\epsilon\rightarrow 0^+} \liminf\limits_{n\rightarrow\infty}\frac{1}{n}E_{C,\ds}^{(1),\epsilon}(\rho^{\otimes n})
    \quad\text{and}\quad
    E_{D,\ds}(\rho) \coloneqq \lim\limits_{\epsilon\rightarrow 0^+} \limsup\limits_{n\rightarrow\infty}\frac{1}{n}E_{D,\ds}^{(1),\epsilon}(\rho^{\otimes n}) \, .
  \end{equation}
\end{definition}
Notice that in the above definitions, we allow for approximate state conversion as long as the approximation error can be made to vanish asymptotically.
The reasoning here is that exact conversion might not be possible in the first place.
If we do not allow for such an approximation error, we obtain a slightly different quantity, which is arguably less interesting from an operational point of view.
\begin{definition}{}{}
  The \emph{exact} asymptotic entanglement cost and \emph{exact} asymptotic distillable entanglement of a bipartite state $\rho\in\dops(AB)$ w.r.t. $\ds\subset\cptp$ are defined as
  \begin{equation}
    E_{C,\ds}^{\mathrm{exact}}(\rho) \coloneqq \lim\limits_{n\rightarrow\infty}\frac{1}{n}E_{C,\ds}^{(1),0}(\rho^{\otimes n})
    \quad\text{and}\quad
    E_{D,\ds}^{\mathrm{exact}}(\rho) \coloneqq \limsup\limits_{n\rightarrow\infty}\frac{1}{n}E_{D,\ds}^{(1),0}(\rho^{\otimes n}) \, .
  \end{equation}
\end{definition}

By applying the proper regularization on both of the inequalities in~\Cref{lem:gamma_single_shot_bound}, we obtain the bounds
\begin{equation}\label{eq:gammareg_sep_bound1}
  E^{\mathrm{exact}}_{D,\locc }(\rho)\leq \log\gammareg_{\sep}(\rho) \leq E^{\mathrm{exact}}_{C,\locc}(\rho)
\end{equation}
and
\begin{equation}\label{eq:gammareg_sep_bound2}
  \log\gammasreg_{\sep}(\rho) \leq E_{C,\locc}(\rho) \, .
\end{equation}
Note that the Lipschitz constant $L$ grows too quickly in the system size to allow for a lower bound of $\log\gammasreg_{\sep}(\rho)$ by the asymptotic distillable entanglement.
We will see later that this shortcoming can be fixed.

For arbitrary mixed states, the asymptotic entanglement cost can generally be extremely hard to evaluate.
It can be written as the regularized \emph{entanglement of formation} \smash{$E_{C,\locc}(\rho)=\lim\limits_{n\rightarrow\infty}\frac{1}{n}E_F(\rho^{\otimes n})$}~\cite{hayden2001_asymptotic}.
We omit the exact definition of $E_F$, but only comment that it can be readily be verified to be sub-additive, hence showing $E_{C,\locc}(\rho)\leq E_F(\rho)$.
If the support of a bipartite state is known to form a so-called \emph{entanglement breaking subspace}~\cite{vidal2002_entanglement,zhao2019_additivity}, then the entanglement of formation is even additive, implying that $E_{C,\locc}(\rho)=E_F(\rho)$.
For two-qubit states $\sigma_{AB}$, an explicit expression for $E_F$ is known~\cite{wooters1998_entanglement,wooters2001_entanglement}
\begin{equation}\label{eq:E_F_formula}
  E_F(\sigma_{AB}) = h\left(\frac{1 + \sqrt{1-C^2(\sigma_{AB})}}{2}\right)
\end{equation}
where $h(x)\coloneqq -x\log(x) - (1-x)\log(1-x)$ and
\begin{equation}
  C(\sigma_{AB}) \coloneqq \max\{0,\lambda_1-\lambda_2-\lambda_3-\lambda_4\}
\end{equation}
where $\lambda_1,\lambda_2,\lambda_3$ and $\lambda_4$ are the eigenvalues in decreasing order of the matrix $\abs{\sqrt{\sigma}\sqrt{(Y\otimes Y)\bar{\sigma}(Y\otimes Y)}}$ and $\bar{\rho}$ is the complex conjugate of $\rho$.

While the bounds provided by~\Cref{eq:gammareg_sep_bound1,eq:gammareg_sep_bound2} did give us some interesting insights, they ultimately fall a bit short, because they are generally not tight and the involved quantities are often quite difficult to compute.
In fact, the \emph{extremality theorem} states that the regularization of any ``sufficiently nice'' entanglement measure lies between the distillable entanglement and the entanglement cost~\cite{horodecki2000_limits,donald2002_uniqueness}.
Remarkably, as we will see below, the approach of relating the regularized quasiprobability extent to an entanglement cost is still extremely fruitful --- however we have to shift our attention to the entanglement cost w.r.t. to a different set of channels $\ds$ instead of LOCC.

\begin{definition}{}{}
  A quantum channel $\cE\in\cptp(AB\rightarrow A'B')$ is said to be \emph{separability preserving} if
  \begin{equation}
    \forall \rho\in\sep(A;B): \cE(\rho)\in \sep(A';B') \, .
  \end{equation}
  We denote by $\sepp(A;B\rightarrow A';B')$ the set of all separability preserving channels from $AB$ to $A'B'$.
\end{definition}
Note that $\sepp$ is a strictly larger set than $\sepc$, since the latter only contains \emph{completely} separability preserving channels, i.e., channels that cannot generate any entanglement even when applied to a subsystem.
The SWAP operation is an example of a channel that lies in $\sepp$ but not $\sepc$.
Thanks to its simpler structure, the finite-shot entanglement cost under $\sepp$ has a much simpler characterization.
For this purpose, consider a smoothed version of the robustness of entanglement of a state $\rho\in\dops(AB)$
\begin{equation}
  R_{\sep}^{\epsilon}(\rho) \coloneqq \min\limits_{\sigma\in B_{\epsilon}(\rho)}R_{\sep}(\sigma)
\end{equation}
where $\epsilon\geq 0$ and $B_{\epsilon}(\rho) \coloneqq \{\sigma\in\dops(AB) | \frac{1}{2}\norm{\rho-\sigma}_1 \leq \epsilon\}$.
It is clearly equivalent to the smoothed quasiprobability extent $\gamma^{\epsilon}_{\sep}(\rho_{AB}) = 1+2R^{\epsilon}_{\sep}({\rho_{AB}})$ by~\Cref{lem:robustness1}.
\begin{lemma}{\cite[Theorem 2]{brandao2011_oneshot}, see also~\cite[Equation (13.363)]{gour2024_resources}}{oneshot_sepp_cost}
  Let $\rho\in\dops(AB)$ be a bipartite state and $\epsilon \geq 0$.
  Then
  \begin{equation}
    \log(1 + R_{\sep}^{\epsilon}(\rho)) \leq E_{C,\sepp}^{(1),\epsilon}(\rho) \leq \log(1 + R_{\sep}^{\epsilon}(\rho)) + 1 \, .
  \end{equation}
\end{lemma}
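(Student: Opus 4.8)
The plan is to prove the two inequalities separately, in both directions exploiting the single-shot definition of $E_{C,\sepp}^{(1),\epsilon}$ together with the exact value $\gamma_{\sep}(\proj{\Psi_m})=2m-1$ established in~\Cref{ex:maximally_entangled_state}, equivalently $R_{\sep}(\proj{\Psi_m})=m-1$ via~\Cref{lem:robustness1}. The second recurring ingredient is that a separability-preserving channel $\cE\in\sepp(A_0;B_0\rightarrow A;B)$ maps separable states to separable states by definition, so by pushing an optimal QPD through $\cE$ term by term it cannot increase the quasiprobability extent: $\gamma_{\sep}(\cE(\eta))\leq\gamma_{\sep}(\eta)$ for all $\eta\in\dops(A_0B_0)$ (the infimum is attained by~\Cref{lem:compact_ds}). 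This is weaker than the chaining property, since $\sepp$ is not a QRT, but it is all we need.

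For the lower bound, let $m$ be optimal in the definition of $E_{C,\sepp}^{(1),\epsilon}(\rho)$, so there is $\cE\in\sepp$ with $\sigma\coloneqq\cE(\proj{\Psi_m})$ satisfying $\tfrac12\norm{\rho-\sigma}_1\leq\epsilon$, i.e.\ $\sigma\in B_{\epsilon}(\rho)$. Then $R_{\sep}^{\epsilon}(\rho)\leq R_{\sep}(\sigma)$, and by the monotonicity remark together with $\gamma_{\sep}(\proj{\Psi_m})=2m-1$ we obtain $\gamma_{\sep}(\sigma)\leq 2m-1$, hence $R_{\sep}(\sigma)\leq m-1$. Combining gives $1+R_{\sep}^{\epsilon}(\rho)\leq m$, which is the claimed lower bound after taking logarithms.

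For the upper bound, pick $\sigma\in B_{\epsilon}(\rho)$ achieving $t\coloneqq R_{\sep}^{\epsilon}(\rho)=R_{\sep}(\sigma)$ (the minima exist by compactness of $B_{\epsilon}(\rho)$ and $\sep$ and continuity of $R_{\sep}$, see~\Cref{prop:gamma_sep_lipschitz}), and let $\tau,\omega\in\sep(A;B)$ satisfy $\omega=\tfrac{\sigma+t\tau}{1+t}$. Set $m\coloneqq\ceil{1+t}$, so that $m-1\geq t$ and $\log m\leq\log(1+t)+1$. I would then exhibit the measure-and-prepare channel
\begin{equation}
  \cE(\eta)\coloneqq \tr[\proj{\Psi_m}\eta]\,\sigma + \tr[(\id-\proj{\Psi_m})\eta]\,\xi,
\end{equation}
where $\xi\coloneqq\tfrac{t}{m-1}\tau+\tfrac{m-1-t}{m-1}\mu\in\sep(A;B)$ for an arbitrary separable state $\mu$ (when $m=1$, i.e.\ $\sigma$ already separable, take instead the preparation channel $\cE(\eta)=\tr[\eta]\sigma$). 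This $\cE$ is manifestly CPTP and $\cE(\proj{\Psi_m})=\sigma\in B_{\epsilon}(\rho)$, so the only thing to verify is $\cE\in\sepp$. For a separable state $\zeta$ one has $\tr[\proj{\Psi_m}\zeta]\leq 1/m$ (the standard bound on the overlap of a separable state with a maximally entangled state); since $\sep$ is convex it therefore suffices to check separability of $\cE(\zeta)$ at the two extreme overlap values $0$ and $1/m$. At overlap $0$ the output is $\xi\in\sep$, and at overlap $1/m$ it is $\tfrac1m\sigma+\tfrac{m-1}m\xi=\tfrac{1+t}{m}\,\omega+\tfrac{m-1-t}{m}\,\mu\in\sep$, using $\omega=\tfrac{\sigma+t\tau}{1+t}$ and $t\leq m-1$. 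Hence $E_{C,\sepp}^{(1),\epsilon}(\rho)\leq\log m\leq\log(1+R_{\sep}^{\epsilon}(\rho))+1$.

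The main obstacle is the design of $\cE$ in the upper bound and the check that it is separability-preserving: the delicate point is choosing the ``failure branch'' state $\xi$ so that it exactly absorbs the potentially non-positive correction $-t\tau$ appearing in $\sigma=(1+t)\omega-t\tau$, which works precisely because the overlap of any separable state with $\proj{\Psi_m}$ is at most $1/m\leq 1/(1+t)$. Both directions rest on the exact identity $\gamma_{\sep}(\proj{\Psi_m})=2m-1$, and the additive $+1$ in the statement is purely an artifact of rounding $1+t$ up to the integer $m=\ceil{1+t}$.
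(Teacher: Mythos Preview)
Your proof is correct and follows essentially the same route as the paper. The only cosmetic difference is in the upper bound: the paper takes the failure branch state to be $\tau$ itself (your $\xi$ with $\mu=\tau$), which already works because $\frac{\sigma+(m-1)\tau}{m}\in\sep$ whenever $m-1\geq t$; your extra separable filler $\mu$ is harmless but unnecessary.
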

\begin{proof}
  The following proof works independently of the distance measure that is utilized for the smoothing of the robustness and the error tolerance of the entanglement cost, as long as the same distance is used in both cases.
  To reflect this, we denote the distance measure by $\delta(\rho_1,\rho_2)$.

  We start by proving the lower bound of $E_{C,\sepp}^{(1),\epsilon}(\rho)$.
  Consider a SEPP channel $\cE$ that fulfills $\delta(\cE(\proj{\Psi_M}), \rho) \leq \epsilon$ where $\ket{\Psi_M}$ denotes the maximally entangled state with local dimension $M=2^{E_{C,\sepp}^{(1),\epsilon}(\rho)}$.
  Using the monotonicity of the robustness and the formula for the robustness of a maximally entangled state (see~\Cref{ex:maximally_entangled_state}), we obtain
  \begin{align}
    \log(1 + R^{\epsilon}_{\sep}(\rho)) &\leq \log(1 + R_{\sep}(\cE(\proj{\Psi_M}))) \\
    &\leq \log(1 + R_{\sep}(\proj{\Psi_M})) \\
    &= \log(1 + M - 1) \\
    &= E_{C,\sepp}^{(1),\epsilon}(\rho)
  \end{align}

  We now turn to the upper bound of $E_{C,\sepp}^{(1),\epsilon}(\rho)$.
  Let $\sigma$ be the achieving state in the minimization of $R_{\sep}^{\epsilon}(\rho)$.
  Then, let $\tau$ be the achieving state in the minimization of $R_{\sep}(\sigma)$ such that
  \begin{equation}
    \frac{\sigma + R(\sigma)\tau}{1+R(\sigma)}\in \sep(A;B) \, .
  \end{equation}
  We define $M\coloneqq 1+\ceil{R(\sigma)}$ and
  \begin{equation}
    \kappa\coloneqq \frac{\sigma + (M-1)\tau}{M} = \frac{\sigma + \ceil{R(\sigma)}\tau}{1 + \ceil{R(\sigma)}}
  \end{equation}
  which is also in $\sep(A;B)$, since $M\geq 1+R(\sigma)$.
  Finally, we define the quantum channel
  \begin{equation}
    \cE(X) \coloneqq \tr[\proj{\Psi_M}X]\sigma + (1-\tr[\proj{\Psi_M}X])\tau
  \end{equation}
  where again, $\ket{\Psi_M}$ denotes the maximally entangled state with local dimension $M$.
  First, notice that clearly $\cE(\proj{\Psi_M}) = \sigma$.
  Next, we show that $\cE\in\sepp$.
  For this purpose, let $X$ be a separable state, and define $q\coloneqq \tr[\proj{\Psi_M}X]\cdot M$.
  It holds that $q\in [0,1]$, because $\max_{X\in\sep}\tr[\proj{\Psi_M} X]$ is achieved by a pure product state and
  \begin{equation}
    \tr[\proj{\Psi_M}\cdot \proj{f}\otimes \proj{g}] = \frac{1}{M} \abs{\sum_i \braket{i}{f}\braket{i}{g}} \leq \frac{1}{M}
  \end{equation}
  where $\ket{i}$ is an orthonormal basis and we used the Cauchy-Schwartz inequality.
  With this in mind, we now observe that
  \begin{equation}
    \cE(X) = q\frac{1}{M}\sigma + (1-\frac{q}{M})\tau = q\kappa + (1-q) \tau
  \end{equation}
  is a convex mixture of two separable states and thus itself separable.
  Since $\cE$ is in $\sepp$, we now conclude that
  \begin{align}
    E_{C,\sepp}^{(1),\epsilon}(\rho) \leq \log M &= \log(1 + \ceil{R_{\sep}(\sigma)}) \\
    &\leq \log(1 + R_{\sep}(\sigma)) + 1 \\
    &= \log(1 + R_{\sep}^{\epsilon}(\rho)) + 1 \, .
  \end{align}
\end{proof}

This result allows us to directly relate the regularized quasiprobability extent with the asymptotic entanglement cost under $\sepp$.

\begin{theorem}{}{gamma_sep_is_sepp_cost}
  For any bipartite state $\rho\in\dops(AB)$, the regularized and asymptotic quasiprobability extent w.r.t. $\sep$ can be expressed as
  \begin{equation}
    \log \gammareg_{\sep}(\rho) = E_{C,\sepp}^{\mathrm{exact}}(\rho)
    \quad\text{and}\quad
    \log \gammasreg_{\sep}(\rho) = E_{C,\sepp}(\rho) \, .
  \end{equation}
\end{theorem}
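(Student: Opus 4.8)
The plan is to obtain both identities from the one-shot characterization of the $\sepp$-entanglement cost in terms of the smoothed robustness of entanglement (\Cref{lem:oneshot_sepp_cost}), and then pass to the regularized quantities. First I would record the pointwise two-sided bound obtained by combining \Cref{lem:oneshot_sepp_cost} with the smoothed analogue of \Cref{lem:robustness1}, namely $\gamma^{\epsilon}_{\sep}(\sigma) = 1 + 2R^{\epsilon}_{\sep}(\sigma)$, so that $1 + R^{\epsilon}_{\sep}(\sigma) = \tfrac12(\gamma^{\epsilon}_{\sep}(\sigma)+1)$. For every state $\sigma \in \dops(AB)$ and every $\epsilon \geq 0$ this gives
\begin{equation}
  \log\frac{\gamma^{\epsilon}_{\sep}(\sigma)+1}{2} \;\leq\; E^{(1),\epsilon}_{C,\sepp}(\sigma) \;\leq\; \log\frac{\gamma^{\epsilon}_{\sep}(\sigma)+1}{2} + 1 \, .
\end{equation}
Since $\gamma^{\epsilon}_{\sep}(\sigma) \geq 1$ for every state (recall that $\gamma_{\sep}$ is never smaller than $1$, cf.\ the proof of \Cref{prop:log_gamma_monotone}), one has $\log\frac{\gamma^{\epsilon}_{\sep}(\sigma)+1}{2} \in [\log\gamma^{\epsilon}_{\sep}(\sigma) - 1,\, \log\gamma^{\epsilon}_{\sep}(\sigma)]$. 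Applying the bound to $\sigma = \rho^{\otimes n}$ and dividing by $n$ then yields the clean estimate
\begin{equation}\label{eq:sepp_cost_sandwich}
  \left| \tfrac{1}{n} E^{(1),\epsilon}_{C,\sepp}(\rho^{\otimes n}) - \tfrac{1}{n}\log\gamma^{\epsilon}_{\sep}(\rho^{\otimes n}) \right| \;\leq\; \tfrac{1}{n} \, .
\end{equation}

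For the exact identity I would set $\epsilon = 0$, where $\gamma^{0}_{\sep} = \gamma_{\sep}$. By submultiplicativity of $\gamma_{\sep}$ (\Cref{cor:gamma_submult}) the sequence $n \mapsto \log\gamma_{\sep}(\rho^{\otimes n})$ is subadditive, so $\tfrac{1}{n}\log\gamma_{\sep}(\rho^{\otimes n})$ converges to $\log\gammareg_{\sep}(\rho)$ by Fekete's lemma --- this is precisely the regularized extent. Combined with \Cref{eq:sepp_cost_sandwich} at $\epsilon=0$ and the squeeze theorem, $\tfrac{1}{n} E^{(1),0}_{C,\sepp}(\rho^{\otimes n})$ converges to the same value, which is exactly the statement $E^{\mathrm{exact}}_{C,\sepp}(\rho) = \log\gammareg_{\sep}(\rho)$.

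For the smoothed identity I would first fix $\epsilon > 0$ and take $\liminf$ over $n$ in \Cref{eq:sepp_cost_sandwich}; since $\pm\tfrac{1}{n}$ is a null sequence and $\liminf$ is insensitive to adding null sequences, this gives $\liminf_n \tfrac{1}{n} E^{(1),\epsilon}_{C,\sepp}(\rho^{\otimes n}) = \liminf_n \tfrac{1}{n}\log\gamma^{\epsilon}_{\sep}(\rho^{\otimes n})$. Both sides are monotone non-increasing in $\epsilon$ (enlarging the error tolerance only enlarges the feasible sets in the definitions of $E^{(1),\epsilon}_{C,\sepp}$ and of $\gamma^{\epsilon}_{\sep}$), so the limit $\epsilon \to 0^+$ exists on both sides; it equals $E_{C,\sepp}(\rho)$ on the left by definition, and $\log\gammasreg_{\sep}(\rho)$ on the right, using that the monotone continuous map $\log$ commutes with the $\liminf$ over $n$ and the limit over $\epsilon$. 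This yields $E_{C,\sepp}(\rho) = \log\gammasreg_{\sep}(\rho)$.

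The argument is short precisely because the analytically demanding ingredient --- the one-shot achievability and converse bounds on $E^{(1),\epsilon}_{C,\sepp}$ --- is already supplied by \Cref{lem:oneshot_sepp_cost}. The only delicate points, none of which is a genuine obstacle, are that the additive constants $1$ and $\log 2$ in the one-shot bound must be checked to vanish after normalization by $n$, that the lower bound $\gamma^{\epsilon}_{\sep} \geq 1$ is what makes $\log\frac{\gamma+1}{2}$ comparable to $\log\gamma$ up to $O(1/n)$, and that one must carefully keep $\liminf$ (rather than a genuine $\lim$) in the $\epsilon>0$ case, since the sequences $\gamma^{\epsilon}_{\sep}(\rho^{\otimes n})^{1/n}$ need not converge.
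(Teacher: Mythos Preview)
Your proof is correct and follows essentially the same route as the paper: both derive the result by applying \Cref{lem:oneshot_sepp_cost} to $\rho^{\otimes n}$, observing that the additive $O(1)$ discrepancy between $\log(1+R^{\epsilon}_{\sep})$ and $\log\gamma^{\epsilon}_{\sep}$ vanishes after dividing by $n$, and then passing to the limit (respectively $\liminf$ followed by $\epsilon\to 0^+$). Your handling of the $O(1)$ gap via the elementary inequality $\gamma \leq \gamma+1 \leq 2\gamma$ is in fact slightly cleaner than the paper's mean-value-type estimate $\log(1+2R)-\log(1+R)\leq \frac{R}{\ln(2)(1+R)}$.
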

\begin{proof}
  We can write the regularized quasiprobability extent as
  \begin{equation}
    \log\gammareg_{\sep}(\rho) = \lim\limits_{n\rightarrow\infty} \frac{1}{n}\log\gamma_{\sep}(\rho^{\otimes n})\, .
  \end{equation}
  Note that the sequence $\frac{1}{n}\log\gamma_{\sep}(\rho^{\otimes n})$ converges as $\log\gamma_{\sep}$ is sub-additive.
  The sequence $\frac{1}{n}\log \left(1+R_{\sep}(\rho^{\otimes n})\right)$ must hence also converge to $\log\gammareg_{\sep}(\rho)$ as
  {\small
  \begin{align}
    \abs{ \frac{1}{n}\log\left(1+R_{\sep}(\rho^{\otimes n})\right) - \log\gammareg_{\sep}(\rho) }
    &\leq \frac{\abs{ \log\left(1+R_{\sep}(\rho^{\otimes n})\right) - \log\left( 1+2R_{\sep}(\rho^{\otimes n}) \right) }}{n} \nonumber\\
    &\quad + \abs{ \frac{1}{n}\log\gamma_{\sep}(\rho^{\otimes n}) - \log\gammareg_{\sep}(\rho) } \\
    &\leq \frac{1}{n}\frac{R_{\sep}(\rho^{\otimes n})}{\ln(2)(1+R_{\sep}(\rho^{\otimes n}))} \nonumber\\
    &\quad + \abs{ \frac{1}{n}\log\gamma_{\sep}(\rho^{\otimes n}) - \log\gammareg_{\sep}(\rho) } \\
    &\xrightarrow{n\rightarrow\infty} 0 \, .
  \end{align}
  }
  The statement $\log \gammareg_{\sep}(\rho) = E_{C,\sepp}^{\mathrm{exact}}(\rho)$ follows by taking the appropriate limit (with $\epsilon=0$) on both sides of~\Cref{lem:oneshot_sepp_cost}.

  A similar argument can be used to show
  \begin{align}
    \liminf\limits_{n\rightarrow\infty} \frac{\log\left(1+2R^{\epsilon}_{\sep}(\rho^{\otimes n})\right)}{n}
    &= \liminf\limits_{n\rightarrow\infty} \frac{\log\left(1+R^{\epsilon}_{\sep}(\rho^{\otimes n})\right)}{n} \\
    &= \liminf\limits_{n\rightarrow\infty} \frac{\log\left(1+R^{\epsilon}_{\sep}(\rho^{\otimes n})\right) +1}{n}
  \end{align}
  which then implies $\log \gammasreg_{\sep}(\rho) = E_{C,\sepp}(\rho)$.
\end{proof}
One direct consequence of this result is that the upper bound $\log\gammasreg_{\sep}\leq E_{C,\locc}$ from~\Cref{eq:gammareg_sep_bound2} is actually tight for pure states.
This is a direct consequence of the chain of inequalities
\begin{equation}
  E_{D,\locc}(\rho)\leq E_{D,\sepp}(\rho) \leq E_{C,\sepp}(\rho) \leq E_{C,\locc}(\rho) \, .
\end{equation}
together with the fact that for pure states $E_{D,\locc}$ and $E_{C,\locc}$ coincide and are equal to the entanglement entropy~\cite{bennett1996_concentrating}.
\begin{corollary}{}{gamma_sep_asymptotic}
  For any pure state $\rho_{AB}\in\dops(AB)$, one has
  \begin{equation}
    \log\gammasreg_{\sep}(\rho_{AB}) = H(\rho_A)
  \end{equation}
  where $\rho_A\coloneqq \tr_B[\proj{\psi}_{AB}]$ and $H$ denotes the von Neumann entropy.
\end{corollary}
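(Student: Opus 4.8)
The plan is to derive \Cref{cor:gamma_sep_asymptotic} as a direct corollary of \Cref{thm:gamma_sep_is_sepp_cost}, which tells us that $\log\gammasreg_{\sep}(\rho) = E_{C,\sepp}(\rho)$ for any bipartite state. It therefore suffices to show that for a pure state $\rho_{AB} = \proj{\psi}_{AB}$, the asymptotic entanglement cost under separability-preserving channels equals the entanglement entropy $H(\rho_A)$. The strategy is a sandwiching argument: one shows $E_{C,\sepp}(\rho_{AB})$ is both at most and at least $H(\rho_A)$ by comparing with the better-studied LOCC quantities.

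First I would note the chain of inequalities, valid for any state $\rho$,
\begin{equation}
  E_{D,\locc}(\rho) \leq E_{D,\sepp}(\rho) \leq E_{C,\sepp}(\rho) \leq E_{C,\locc}(\rho) \, .
\end{equation}
The leftmost and rightmost inequalities follow because $\locc \subset \sepp$ (any LOCC protocol preserves separability), so the minimization defining the cost is over a larger set for $\sepp$ (hence smaller or equal value) and the maximization defining distillation is over a larger set (hence larger or equal value). The middle inequality $E_{D,\sepp}(\rho) \leq E_{C,\sepp}(\rho)$ is the standard observation that distillation followed by formation cannot create entanglement for free — one cannot distill more Bell pairs from $\rho^{\otimes n}$ than were needed to make it, up to vanishing corrections; otherwise iterating would give unbounded entanglement generation under a free class of operations, contradicting the golden rule applied to the associated resource theory.

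The crucial input is then the classical result of Bennett, Bernstein, Popescu and Schumacher~\cite{bennett1996_concentrating} that for a \emph{pure} bipartite state, the asymptotic distillable entanglement and entanglement cost under LOCC both coincide with the entanglement entropy: $E_{D,\locc}(\proj{\psi}_{AB}) = E_{C,\locc}(\proj{\psi}_{AB}) = H(\rho_A)$ where $\rho_A = \tr_B[\proj{\psi}_{AB}]$. Plugging this into the chain above collapses it, forcing $E_{C,\sepp}(\proj{\psi}_{AB}) = H(\rho_A)$, and combined with \Cref{thm:gamma_sep_is_sepp_cost} this gives $\log\gammasreg_{\sep}(\rho_{AB}) = H(\rho_A)$ as claimed. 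There is essentially no obstacle here beyond correctly invoking the cited facts; the only point requiring a moment's care is ensuring that the smoothing/regularization conventions used in \Cref{def:cost_and_distillable} for $E_{C,\locc}$ and $E_{D,\locc}$ match those under which the reversibility result of~\cite{bennett1996_concentrating} is stated (asymptotic rates with vanishing error), which they do by construction. One might also explicitly remark that this confirms the upper bound $\log\gammasreg_{\sep}(\rho)\leq E_{C,\locc}(\rho)$ from~\Cref{eq:gammareg_sep_bound2} is tight precisely on pure states.
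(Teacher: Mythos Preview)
Your proposal is correct and follows essentially the same approach as the paper: invoke \Cref{thm:gamma_sep_is_sepp_cost} to get $\log\gammasreg_{\sep}(\rho)=E_{C,\sepp}(\rho)$, then sandwich $E_{C,\sepp}$ between $E_{D,\locc}$ and $E_{C,\locc}$ via the chain $E_{D,\locc}\leq E_{D,\sepp}\leq E_{C,\sepp}\leq E_{C,\locc}$, and finally collapse the chain for pure states using the reversibility result of~\cite{bennett1996_concentrating}. Your version is slightly more detailed in justifying the individual inequalities, but the logic is identical.
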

This is especially remarkable, as we previously showed in~\Cref{cor:gamma_reg_pure} that the (exact) regularized quasiprobability extent $\gammareg_{\sep}$ is given in terms of the Rényi-$\nicefrac{1}{2}$ entanglement entropy instead.
We have thus proven that $\gammareg_{\sep}$ and $\gammasreg_{\sep}$ must strictly differ for almost all pure states.
The difference between the two is depicted in~\Cref{fig:entropy_comparison} for general two-qubit pure states
\begin{figure}
  \centering
  \includegraphics{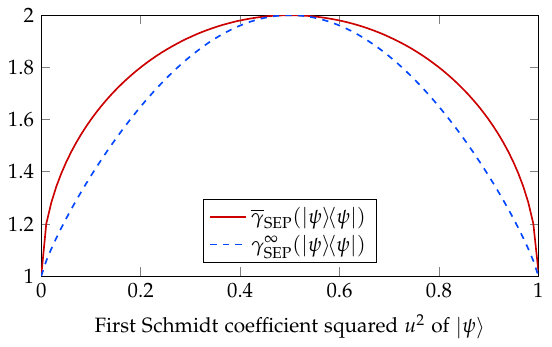}
  \caption{Comparison of $\gammareg_{\sep}(\proj{\psi})$ and $\gammasreg_{\sep}(\proj{\psi})$ for a pure two-qubit state $\ket{\psi}_{AB}$ with Schmidt coefficients $u$ and $\sqrt{1-u^2}$. The two quasiprobability extents are only identical for separable states ($u^2=0,1$) and the Bell state ($u^2=0.5$).}
  \label{fig:entropy_comparison}
\end{figure}

\Cref{thm:gamma_sep_is_sepp_cost} provides a direct connection between the quasiprobability extent and the $\sepp$ entanglement cost.
However, beyond pure states it is unclear whether $E_{C,\sepp}^{\mathrm{exact}}(\rho)$ or $E_{C,\sepp}(\rho)$ can be efficiently computed for interesting states $\rho$.
To address this, we will now relax the free channels even further and turn our attention to computing $\gammareg_{\ppt}$, which is a lower bound for $\gammareg_{\sep}$.
To achieve this goal, we first need to introduce another entanglement measure called $\kappa$-entanglement~\cite{wang2023_exact,wang2020_cost} which turns out to characterize the one-shot entanglement cost under PPTC.
\begin{definition}{}{}
  The $\kappa$-entanglement of a bipartite state $\rho\in\dops(AB)$ is defined as
  \begin{equation}
    E_{\kappa}(\rho) \coloneqq \log \inf_{S\in\pos(AB)} \{ \tr[S] | -(\idchan_A\otimes\transpose_B)(S) \leq (\idchan_A\otimes\transpose_B)(\rho) \leq (\idchan_A\otimes\transpose_B)(S) \} \, .
  \end{equation}
\end{definition}
The $\kappa$-entanglement is a special case of the so-called \emph{$\alpha$-logarithmic negativity} in the limit of $\alpha\rightarrow\infty$~\cite{wang2020_alpha}.
Note that by definition, the $\kappa$ entanglement can be expressed as a semidefinite program.
Analogous to other quantities, we defined the smoothed $\kappa$-entanglement as
\begin{equation}
  E_{\kappa}^{\epsilon}(\rho) \coloneqq \min\limits_{\sigma\in B_{\epsilon}(\rho)}E_{\kappa}(\sigma) \, .
\end{equation}
\begin{lemma}{}{gamma_ppt_kappaent_bound}
  For any bipartite state $\rho\in\dops(AB)$ one has
  \begin{equation}
    \gamma_{\ppt}^{\epsilon}(\rho) \geq 2^{E_{\kappa}^{\epsilon}(\rho)} - 1 \, .
  \end{equation}
\end{lemma}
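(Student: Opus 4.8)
The plan is to prove the pointwise (unsmoothed) inequality $\gamma_{\ppt}(\sigma)\ge 2^{E_{\kappa}(\sigma)}$ for every $\sigma\in\dops(AB)$ and then obtain the smoothed statement by minimizing over the smoothing ball. For the reduction, recall from~\Cref{def:smoothed_gamma} that, since the diamond-norm distance of two state-preparation channels equals the trace distance of the underlying states, $\gamma_{\ppt}^{\epsilon}(\rho)=\inf\{\gamma_{\ppt}(\sigma)\mid\sigma\in B_{\epsilon}(\rho)\}$ with $B_{\epsilon}(\rho)=\{\sigma\in\dops(AB)\mid\tfrac12\norm{\rho-\sigma}_1\le\epsilon\}$, and that $E_{\kappa}^{\epsilon}(\rho)$ is by definition the minimum of $E_{\kappa}$ over this same ball. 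Hence, granting the pointwise bound, for every $\sigma\in B_{\epsilon}(\rho)$ we get $\gamma_{\ppt}(\sigma)\ge 2^{E_{\kappa}(\sigma)}\ge 2^{E_{\kappa}^{\epsilon}(\rho)}\ge 2^{E_{\kappa}^{\epsilon}(\rho)}-1$, using $E_{\kappa}(\sigma)\ge E_{\kappa}^{\epsilon}(\rho)$ and monotonicity of $x\mapsto 2^{x}$; taking the infimum over $\sigma\in B_{\epsilon}(\rho)$ gives $\gamma_{\ppt}^{\epsilon}(\rho)\ge 2^{E_{\kappa}^{\epsilon}(\rho)}-1$. Note this route needs no continuity or attainment argument for $\gamma_{\ppt}$.

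The core step is a direct primal-feasibility argument between the two semidefinite programs. Fix $\sigma\in\dops(AB)$ and abbreviate $\Gamma\coloneqq\idchan_{A}\otimes\transpose_{B}$. Take an optimal solution of the SDP for $\gamma_{\ppt}$ in~\Cref{eq:gamma_ppt_sdp}: operators $\sigma^{+},\sigma^{-}\in\pos(AB)$ with $\sigma=\sigma^{+}-\sigma^{-}$, $\Gamma(\sigma^{\pm})\loewnergeq 0$, and $\tr[\sigma^{+}]+\tr[\sigma^{-}]=\gamma_{\ppt}(\sigma)$. Put $S\coloneqq\sigma^{+}+\sigma^{-}\in\pos(AB)$. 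Then $S+\sigma=2\sigma^{+}$ and $S-\sigma=2\sigma^{-}$, so $\Gamma(S)+\Gamma(\sigma)=2\Gamma(\sigma^{+})\loewnergeq 0$ and $\Gamma(S)-\Gamma(\sigma)=2\Gamma(\sigma^{-})\loewnergeq 0$, i.e. $-\Gamma(S)\loewnerleq\Gamma(\sigma)\loewnerleq\Gamma(S)$. Thus $S$ is feasible for the minimization defining $E_{\kappa}(\sigma)$, whence $2^{E_{\kappa}(\sigma)}\le\tr[S]=\tr[\sigma^{+}]+\tr[\sigma^{-}]=\gamma_{\ppt}(\sigma)$. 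This establishes the pointwise bound, in fact slightly stronger than the version with $-1$ that is needed.

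There is no genuine obstacle: the entire content is the observation that folding a PPT quasiprobability decomposition $\sigma=\sigma^{+}-\sigma^{-}$ into $S=\sigma^{+}+\sigma^{-}$ produces a feasible point of the $E_{\kappa}$ SDP with the same objective value. The only point requiring care is lining up the sign convention of the partial-transpose constraints in the definition of $E_{\kappa}$ with the PPT constraints in~\Cref{eq:gamma_ppt_sdp}; alternatively one could pass through the dual SDPs of~\Cref{lem:gamma_ppt_sdp_dual} and of $E_{\kappa}$, but the primal construction above is more transparent. Finally, the derivation actually yields $\gamma_{\ppt}^{\epsilon}(\rho)\ge 2^{E_{\kappa}^{\epsilon}(\rho)}$, from which the stated inequality follows a fortiori; the weaker form is convenient for direct comparison with the companion upper bounds of shape $2^{(\cdot)+1}-1$ used elsewhere in this chapter.
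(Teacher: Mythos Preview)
Your proof is correct and follows the same primal-feasibility strategy as the paper: from an optimal PPT decomposition of the state, build a feasible $S$ for the $E_{\kappa}$ SDP. The only difference is that you work directly with the unnormalized SDP variables of~\Cref{eq:gamma_ppt_sdp} and take $S=\sigma^{+}+\sigma^{-}$, whereas the paper uses the normalized two-element QPD $\rho=(1+t)\sigma^{+}-t\sigma^{-}$ with $\sigma^{\pm}\in\dops(AB)$ and sets $S=(1+t)(\sigma^{+}+\sigma^{-})$. Your $S$ has trace exactly $\gamma_{\ppt}(\sigma)$, while the paper's has trace $2(1+t)=\gamma_{\ppt}(\rho)+1$; this is precisely why you obtain the sharper pointwise inequality $\gamma_{\ppt}(\sigma)\ge 2^{E_{\kappa}(\sigma)}$ and the paper only gets $\gamma_{\ppt}(\rho)\ge 2^{E_{\kappa}(\rho)}-1$. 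The smoothing step is identical in both proofs, and your closing remark that the $-1$ is actually unnecessary is well taken.
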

\begin{proof}
  We first prove the non-smoothed case $\epsilon = 0$.
  Let $\rho=(1+t)\sigma^+-t\sigma^-$ be the QPD which achieves $\gamma_{\ppt}(\rho)$.
  Due to the positive partial transpose of $\sigma^{\pm}$, we have
  \begin{equation}
    -(1+t)(\idchan_A\otimes\transpose_B)(\sigma^+ + \sigma^-) \loewnerleq (\idchan_A\otimes\transpose_B)(\rho) \loewnerleq (1+t)(\idchan_A\otimes\transpose_B)(\sigma^++\sigma^-) \, .
  \end{equation}
  As such, $S\coloneqq (1+t)(\sigma^+ + \sigma^-)$ is a feasible point for $E_{\kappa}(\rho)$, which implies
  \begin{equation}
    2^{E_{\kappa}(\rho)} \leq (1+t)\tr[\sigma^++\sigma^-] = (1+2t) + 1 = \gamma_{\ppt}(\rho) + 1 \, .
  \end{equation}
  The smoothed case follows immediately
  \begin{align}
    \gamma_{\ppt}^{\epsilon}(\rho)
    &= \min\limits_{\sigma\in B_{\epsilon}(\rho)}\gamma_{\ppt}(\sigma) \\
    &\geq \min\limits_{\sigma\in B_{\epsilon}(\rho)}2^{E_{\kappa}(\sigma)} - 1 \\
    &= 2^{E_{\kappa}^{\epsilon}(\rho)} - 1 \, .
  \end{align}
\end{proof}
The next lemma relates the one-shot entanglement cost under $\pptc$ with the $\kappa$-entanglement, analogously to how the log-robustness is related to the entanglement cost under $\sepp$.
\begin{lemma}{\cite[Theorem 13.25]{gour2024_resources}, see also~\cite[Proposition 2]{wang2020_cost}}{oneshot_pptc_cost}
  For any bipartite state $\rho\in\dops(AB)$ we have
  \begin{equation}
    2^{E_{\kappa}^{\epsilon}(\rho)} - 1 \leq 2^{E_{C,\pptc}^{(1),\epsilon}(\rho)} \leq 2^{E_{\kappa}^{\epsilon}(\rho)} + 1 \, .
  \end{equation}
\end{lemma}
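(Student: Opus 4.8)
The plan is to prove the two inequalities separately, closely following the template of the proof of~\Cref{lem:oneshot_sepp_cost}, with the $\kappa$-entanglement $E_\kappa$ now playing the role that $\log(1+R_{\sep})$ played there and with $\pptc$ in place of $\sepp$. Two preliminary facts are needed. First, $E_\kappa$ is monotone under completely PPT-preserving channels: if $\cN\in\pptc(AB\rightarrow A'B')$ and $S\loewnergeq 0$ is feasible for $E_\kappa(\rho)$, i.e.\ $(\idchan_A\otimes\transpose_B)(S)\pm(\idchan_A\otimes\transpose_B)(\rho)\loewnergeq 0$, then applying the completely positive, trace-preserving twisted map $(\idchan_{A'}\otimes\transpose_{B'})\circ\cN\circ(\idchan_A\otimes\transpose_B)$ to these two positive operators, and using $(\idchan_A\otimes\transpose_B)^2=\idchan$, shows that $\cN(S)$ is feasible for $E_\kappa(\cN(\rho))$ with the same trace, whence $E_\kappa(\cN(\rho))\leq E_\kappa(\rho)$. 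Second, $E_\kappa(\proj{\Psi_M})=\log M$: since $(\idchan_A\otimes\transpose_B)(\proj{\Psi_M})=\tfrac1M\swap$ has eigenvalues $\pm\tfrac1M$, the operator $S=\tfrac1M\id$ is feasible and, by taking traces through $\idchan\otimes\transpose$, clearly optimal, giving $\tr[S]=M$.

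For the lower bound, pick $\cE\in\pptc$ realizing $E_{C,\pptc}^{(1),\epsilon}(\rho)$, so that $\cE(\proj{\Psi_M})\in B_\epsilon(\rho)$ with $M=2^{E_{C,\pptc}^{(1),\epsilon}(\rho)}$. Combining the two facts above,
\begin{equation*}
  E_\kappa^\epsilon(\rho)\leq E_\kappa\!\left(\cE(\proj{\Psi_M})\right)\leq E_\kappa(\proj{\Psi_M})=\log M ,
\end{equation*}
hence $2^{E_\kappa^\epsilon(\rho)}-1\leq M-1\leq 2^{E_{C,\pptc}^{(1),\epsilon}(\rho)}$, which is even slightly stronger than required.

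For the upper bound — the harder direction — let $\sigma\in B_\epsilon(\rho)$ attain $E_\kappa^\epsilon(\rho)$ and let $S\loewnergeq 0$ be optimal in the SDP for $E_\kappa(\sigma)$, so $\tr[S]=2^{E_\kappa(\sigma)}$ and $\pm(\idchan_A\otimes\transpose_B)(\sigma)\loewnerleq(\idchan_A\otimes\transpose_B)(S)$. Set $M\coloneqq\ceil{\tr[S]}\leq 2^{E_\kappa(\sigma)}+1$. It then suffices to construct a completely PPT-preserving channel $\cE$ with $\cE(\proj{\Psi_M})=\sigma$ exactly, since that single channel already witnesses (because $\sigma\in B_\epsilon(\rho)$)
\begin{equation*}
  E_{C,\pptc}^{(1),\epsilon}(\rho)\leq\log M\leq\log\!\left(2^{E_\kappa(\sigma)}+1\right)=\log\!\left(2^{E_\kappa^\epsilon(\rho)}+1\right).
\end{equation*}
The channel is to be built out of $S$: one splits $(\idchan_A\otimes\transpose_B)(\sigma)$ into its positive and negative parts, uses the domination by $(\idchan_A\otimes\transpose_B)(S)$ to pad these pieces up to total weight governed by $M$, and writes down a channel of the schematic form $\omega\mapsto\tr[\proj{\Psi_M}\omega]\,\sigma+(\text{a correction term governed by }S)$, analogous to the explicit construction in~\Cref{lem:oneshot_sepp_cost} but now arranged so that the Choi operator of $\cE$ is manifestly positive \emph{and} has positive partial transpose rather than being separability-preserving.

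The main obstacle is exactly this last step: checking that the explicitly written-down map is trace-preserving, outputs $\sigma$ on $\proj{\Psi_M}$, and — the delicate part — has a PPT Choi operator, which forces one to exploit the operator inequalities satisfied by $S$ in a somewhat more intricate way than in the $\sepp$ case. This construction is precisely the content of~\cite[Proposition~2]{wang2020_cost} (see also~\cite[Theorem~13.25]{gour2024_resources}), so one may alternatively invoke it directly; I would give the short self-contained arguments for the two preliminary facts and the lower bound, and either reproduce or cite the channel construction for the upper bound.
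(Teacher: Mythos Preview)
The paper does not supply its own proof of this lemma; it is stated with citations to \cite[Theorem 13.25]{gour2024_resources} and \cite[Proposition 2]{wang2020_cost} and then used as a black box. Your proposal therefore already goes further than the paper, and the outline you give is the standard one from those references: monotonicity of $E_\kappa$ under $\pptc$ together with $E_\kappa(\proj{\Psi_M})=\log M$ for the lower bound, and an explicit $\pptc$ channel built from an optimal $S$ for the upper bound. Both preliminary facts and the lower-bound argument are correct, and your decision to cite the channel construction for the upper bound matches exactly what the paper does for the whole lemma.

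One small point to tighten: your justification that $S=\tfrac{1}{M}\id$ is optimal for $E_\kappa(\proj{\Psi_M})$ --- ``by taking traces through $\idchan\otimes\transpose$, clearly optimal'' --- is too brief. Tracing a single constraint only yields $\tr[S]\geq 1$. The clean argument is to trace $(\idchan\otimes\transpose)(S)\geq\tfrac{1}{M}\swap$ against the symmetric projector $P_+$ and $(\idchan\otimes\transpose)(S)\geq-\tfrac{1}{M}\swap$ against the antisymmetric projector $P_-$; since $P_++P_-=\id$, adding gives $\tr[S]=\tr[(\idchan\otimes\transpose)(S)]\geq\tfrac{1}{M}\bigl(\tr[P_+]+\tr[P_-]\bigr)=M$.
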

\begin{theorem}{}{regularized_gamma_ppt}
  The regularized and asymptotic quasiprobability extent of a bipartite state $\rho\in\dops(AB)$ w.r.t. $\ppt$ are given by
  \begin{equation}\label{eq:regularized_gamma_ppt_eq1}
    \log \gammareg_{\ppt}(\rho) = \lim\limits_{n\rightarrow\infty}\frac{1}{n}E_{\kappa}(\rho^{\otimes n}) = E_{C,\pptc}^{\mathrm{exact}}(\rho) 
  \end{equation}
  and
  \begin{equation}\label{eq:regularized_gamma_ppt_eq2}
    \log \gammasreg_{\ppt}(\rho) = \lim\limits_{\epsilon\rightarrow 0^+}\lim\limits_{n\rightarrow\infty}\frac{1}{n}E_{\kappa}^{\epsilon}(\rho^{\otimes n}) = E_{C,\pptc}(\rho) \, .
  \end{equation}
\end{theorem}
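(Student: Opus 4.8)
The statement has two equalities in each line; the rightmost equality $\lim_n \frac{1}{n}E_\kappa(\rho^{\otimes n}) = E_{C,\pptc}^{\mathrm{exact}}(\rho)$ (and its smoothed analogue) follows immediately from \Cref{lem:oneshot_pptc_cost} by taking the appropriate regularization: dividing the chain $2^{E_\kappa^\epsilon(\rho)}-1 \leq 2^{E_{C,\pptc}^{(1),\epsilon}(\rho)} \leq 2^{E_\kappa^\epsilon(\rho)}+1$ by $n$ after replacing $\rho$ by $\rho^{\otimes n}$, the additive $\pm 1$ terms wash out in the limit. The only subtlety is that one must first check the relevant sequences converge, which for the exact case ($\epsilon=0$) follows from sub-additivity of $E_\kappa$ (itself a consequence of sub-multiplicativity of the underlying SDP feasible sets under tensor product), and for the smoothed case one must argue with $\liminf/\limsup$ as in the definitions. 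So the real content is the \emph{left} equality $\log\gammareg_{\ppt}(\rho) = \lim_n \frac{1}{n}E_\kappa(\rho^{\otimes n})$.

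For this, the plan is to mirror the proof of \Cref{thm:gamma_sep_is_sepp_cost}. First I would establish the single-shot sandwich $2^{E_\kappa^\epsilon(\rho)} - 1 \leq \gamma_{\ppt}^\epsilon(\rho) \leq \text{(something close to } 2^{E_\kappa^\epsilon(\rho)}\text{)}$. The lower bound is exactly \Cref{lem:gamma_ppt_kappaent_bound}, already proven in the excerpt. For the upper bound, I would construct an explicit QPD of $\rho$ with respect to $\ppt$ whose $1$-norm is controlled by $2^{E_\kappa(\rho)}$: given an optimal feasible $S$ for $E_\kappa(\rho)$ with $\tr[S] = 2^{E_\kappa(\rho)} =: K$, the operators $S + \rho$ and $S - \rho$ are both PPT (by the defining constraint $-\Gamma(S) \leq \Gamma(\rho) \leq \Gamma(S)$ where $\Gamma = \idchan_A\otimes\transpose_B$) and positive semidefinite; normalizing them gives separable-substitute operators $\sigma^\pm \in \ppt(A;B)$ and one writes $\rho = \frac{1}{2}(S+\rho) - \frac{1}{2}(S-\rho)$, i.e. $\rho = \frac{\tr[S+\rho]}{2}\,\hat\sigma^+ - \frac{\tr[S-\rho]}{2}\,\hat\sigma^-$ with $1$-norm $\frac{\tr[S+\rho] + \tr[S-\rho]}{2} = \tr[S] = K$. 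Hence $\gamma_\ppt(\rho) \leq 2^{E_\kappa(\rho)}$, which together with the lower bound pins $\gamma_\ppt(\rho)$ between $2^{E_\kappa(\rho)}-1$ and $2^{E_\kappa(\rho)}$. (Actually this shows $\gamma_\ppt(\rho) = 2^{E_\kappa(\rho)}-1$ exactly, since both are SDPs and one can check strong duality / direct feasibility — but for the regularization I only need the sandwich.) The smoothed version follows by minimizing over $B_\epsilon(\rho)$ on both sides.

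Then the regularization argument: replace $\rho$ by $\rho^{\otimes n}$ in the sandwich $2^{E_\kappa^\epsilon(\rho^{\otimes n})} - 1 \leq \gamma_\ppt^\epsilon(\rho^{\otimes n}) \leq 2^{E_\kappa^\epsilon(\rho^{\otimes n})}$, take $n$-th roots, and pass to the limit. For the exact case, $\frac{1}{n}\log\gamma_\ppt(\rho^{\otimes n})$ converges because $\log\gamma_\ppt$ is sub-additive (\Cref{cor:gamma_submult} applied to the QRT PPTC, or directly), and $\frac{1}{n}\log(2^{E_\kappa(\rho^{\otimes n})} - 1)$ converges to the same limit by the same $\liminf/\limsup$ squeeze used in the proof of \Cref{thm:gamma_sep_is_sepp_cost} — the $-1$ correction is negligible once $E_\kappa(\rho^{\otimes n}) \to \infty$ (and if $\rho$ is PPT then everything is $1$ and the statement is trivial). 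For the smoothed case, the $\liminf$ in the definition of $\gammasreg$ matches the $\liminf$ one gets from the squeeze, and then the outer $\lim_{\epsilon\to 0^+}$ is taken on both sides. Combining with the already-established rightmost equality gives both displays of the theorem.

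\textbf{Main obstacle.} The genuinely delicate point is the smoothed regularization: a priori it is unclear whether $\lim_{\epsilon\to 0^+}\lim_{n\to\infty}\frac{1}{n}E_\kappa^\epsilon(\rho^{\otimes n})$ even exists (the inner limit might only be a $\liminf$), so I would need to be careful to show that the $\liminf$ appearing in $\gammasreg_\ppt$ and the one appearing in $E_{C,\pptc}$ are literally the same object, rather than just asymptotically comparable. This requires tracking the smoothing radius consistently (the $\frac{1}{2}\norm{\cdot}_1 \leq \epsilon$ in $B_\epsilon$ versus the diamond-norm smoothing in $\gamma^\epsilon_\ds$ — but here we are at the level of states, so both reduce to trace distance, and \Cref{lem:robustness1} identifies $\gamma^\epsilon_\ppt(\rho) = 1 + 2R^\epsilon_\ppt(\rho)$, making the comparison clean). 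I expect no obstacle beyond bookkeeping once the single-shot sandwich is in hand; the heavy lifting (\Cref{lem:gamma_ppt_kappaent_bound} and \Cref{lem:oneshot_pptc_cost}) is cited from earlier in the excerpt and from \cite{gour2024_resources,wang2020_cost}.
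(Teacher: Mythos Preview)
Your upper-bound construction has a genuine gap. You claim that for the optimal $S$ in $E_\kappa(\rho)$, the operator $S-\rho$ is positive semidefinite. But the defining constraint of $E_\kappa$ is $-\Gamma(S)\loewnerleq\Gamma(\rho)\loewnerleq\Gamma(S)$ with $\Gamma=\idchan_A\otimes\transpose_B$; this gives $\Gamma(S-\rho)\loewnergeq 0$, \emph{not} $S-\rho\loewnergeq 0$. There is no constraint relating $S$ and $\rho$ on the untransposed side. Concretely, for the maximally entangled state $\proj{\Psi_d}$ one checks that $S=\frac{1}{d}\id$ is optimal (so $2^{E_\kappa}=d$), yet $S-\rho=\frac{1}{d}\id-\proj{\Psi_d}$ has eigenvalue $\frac{1}{d}-1<0$. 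Consequently your ``QPD'' $\sigma^-=(S-\rho)/2$ is not in $\ppt(A;B)$, and the claimed inequality $\gamma_{\ppt}(\rho)\leq 2^{E_\kappa(\rho)}$ is \emph{false}: for $\proj{\Psi_d}$ we have $\gamma_{\ppt}=2d-1>d=2^{E_\kappa}$. Your parenthetical assertion that $\gamma_{\ppt}(\rho)=2^{E_\kappa(\rho)}-1$ exactly is therefore also wrong.

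The paper obtains the missing upper bound by a different route: rather than building a PPT QPD directly from $S$, it uses monotonicity of $\gamma_{\ppt}$ under PPTC channels. If $\cE\in\pptc$ satisfies $\frac{1}{2}\norm{\rho-\cE(\proj{\Psi_M})}_1\leq\epsilon$, then $\gamma_{\ppt}^\epsilon(\rho)\leq\gamma_{\ppt}(\proj{\Psi_M})=2M-1$ (exactly as in \Cref{lem:gamma_single_shot_bound}, using \Cref{lem:optimal_qpd_pure_ppt}). Taking the optimal $M$ gives $\gamma_{\ppt}^\epsilon(\rho)\leq 2\cdot 2^{E_{C,\pptc}^{(1),\epsilon}(\rho)}-1$. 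Combined with your (correct) lower bound from \Cref{lem:gamma_ppt_kappaent_bound} and with \Cref{lem:oneshot_pptc_cost}, this yields the sandwich
\[
\log\bigl(2^{E_\kappa^\epsilon(\rho)}-1\bigr)\;\leq\;\log\gamma_{\ppt}^\epsilon(\rho)\;\leq\;\log\bigl(2\cdot 2^{E_{C,\pptc}^{(1),\epsilon}(\rho)}-1\bigr),
\]
which is loose by a factor of $2$ in the single-shot regime but collapses after regularization. The convergence of $\frac{1}{n}E_\kappa^\epsilon(\rho^{\otimes n})$ (needed to close the squeeze) is cited from \cite[Theorem~13.26]{gour2024_resources}. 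Everything else in your plan---the rightmost equalities via \Cref{lem:oneshot_pptc_cost}, the handling of the $\liminf$, and the smoothing bookkeeping---is correct.
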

\begin{proof}
  We first note that the sequence $\frac{1}{n}E_{\kappa}^{\epsilon}(\rho^{\otimes n})$ converges for any $\epsilon\in[0,1)$, see e.g.~\cite[Theorem 13.26]{gour2024_resources}.
  As such, \Cref{lem:oneshot_pptc_cost} directly implies the second equalities in~\Cref{eq:regularized_gamma_ppt_eq1,eq:regularized_gamma_ppt_eq2}.
  The first inequalities follow by taking $\liminf\limits_{n\rightarrow\infty}\frac{1}{n}(\cdot)$ on each term of
  \begin{equation}
    \log(2^{E_{\kappa}^{\epsilon}(\rho)}-1) \leq \log\gamma_{\ppt}^{\epsilon}(\rho) \leq \log(2^{E_{C,\pptc}^{(1),\epsilon}(\rho)+1}-1) \, .
  \end{equation}
  where the first inequality follows from~\Cref{lem:gamma_ppt_kappaent_bound} and the second one can be derived analogously to the proof of~\Cref{lem:gamma_single_shot_bound} (recall that the quasiprobability extent of the maximally entangled state is equal under SEP and PPT, see~\Cref{lem:optimal_qpd_pure_ppt}).
\end{proof}
It was originally stated in \reference~\cite{wang2020_cost} that the $\kappa$-entanglement is additive, and thus directly equal to the entanglement cost.
However, this was later shown to be wrong~\cite{lami2024_computable,wang2024_errata}, as $E_{\kappa}$ is only sub-additive, and in some cases strictly sub-additive.
In our quest to find efficiently computable lower bounds of $\gammareg_{\sep}$, it thus remains to show that $E_{C,\pptc}^{\mathrm{exact}}(\rho)$ is an efficiently computable quantity.
For some classes of states, it has a closed-form expression.
\begin{proposition}{~\cite{eisert2003_entanglement}}{ecost_zero_bineg}
  Let $\rho\in\dops(AB)$ be a bipartite state with \emph{zero bi-negativity}, i.e.,
  \begin{equation}
    (\idchan_A\otimes\transpose_B)\left(\abs{(\idchan_A\otimes\transpose_B)(\rho)}\right) \loewnergeq 0 \, .
  \end{equation}
  Then $E_{C,\pptc}^{\mathrm{exact}}$ can be written as 
  \begin{equation}\label{eq:exact_expression_ECppt}
    E_{C,\pptc}^{\mathrm{exact}}(\rho) = \log\norm{(\idchan_A\otimes\transpose_B)(\rho)}_1 \, .
  \end{equation}
\end{proposition}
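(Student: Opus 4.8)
The plan is to establish the formula $E_{C,\pptc}^{\mathrm{exact}}(\rho) = \log\norm{(\idchan_A\otimes\transpose_B)(\rho)}_1$ by proving matching upper and lower bounds, each of which simplifies dramatically under the zero-bi-negativity assumption.

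For the \textbf{lower bound}, I would use the characterization from \Cref{thm:regularized_gamma_ppt} that $E_{C,\pptc}^{\mathrm{exact}}(\rho) = \lim_{n\to\infty}\frac{1}{n}E_{\kappa}(\rho^{\otimes n})$, together with a general lower bound on $E_\kappa$ coming from the partial-transpose trace norm. Concretely: for any feasible $S$ in the SDP defining $E_\kappa(\rho)$, the constraint $-(\idchan\otimes\transpose)(S)\loewnerleq (\idchan\otimes\transpose)(\rho)\loewnerleq (\idchan\otimes\transpose)(S)$ forces $(\idchan\otimes\transpose)(S)\loewnergeq \abs{(\idchan\otimes\transpose)(\rho)}$ (spectral comparison of Hermitian operators sandwiching $\pm M$ by $S'$ gives $S'\loewnergeq \abs M$), hence $\tr[S]=\tr[(\idchan\otimes\transpose)(S)]\geq \tr[\abs{(\idchan\otimes\transpose)(\rho)}] = \norm{(\idchan\otimes\transpose)(\rho)}_1$. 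This shows $E_\kappa(\rho)\geq \log\norm{(\idchan\otimes\transpose)(\rho)}_1$ in general, and since the logarithmic negativity is additive under tensor products, regularizing this gives $E_{C,\pptc}^{\mathrm{exact}}(\rho)\geq \log\norm{(\idchan\otimes\transpose)(\rho)}_1$ with no assumption at all.

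For the \textbf{upper bound}, the zero-bi-negativity assumption is exactly what makes the construction go through: I would show that $S \coloneqq (\idchan_A\otimes\transpose_B)\big(\abs{(\idchan_A\otimes\transpose_B)(\rho)}\big)$ is a feasible point of the $E_\kappa(\rho)$ SDP achieving the claimed value, and moreover that the same construction tensorizes exactly so that $E_\kappa(\rho^{\otimes n}) = n\,E_\kappa(\rho)$. Feasibility needs two checks: first $S\loewnergeq 0$, which is precisely the zero-bi-negativity hypothesis; second the sandwich condition, which becomes $-\abs{(\idchan\otimes\transpose)(\rho)}\loewnerleq (\idchan\otimes\transpose)(\rho)\loewnerleq \abs{(\idchan\otimes\transpose)(\rho)}$ after applying $\idchan\otimes\transpose$ (an involution) to all three terms — and this holds trivially for any Hermitian operator. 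Then $\tr[S] = \tr[\abs{(\idchan\otimes\transpose)(\rho)}] = \norm{(\idchan\otimes\transpose)(\rho)}_1$, giving $E_\kappa(\rho)\leq \log\norm{(\idchan\otimes\transpose)(\rho)}_1$; combined with the general lower bound this pins down $E_\kappa(\rho)$ exactly, and since the value already equals the (additive) logarithmic negativity its regularization is itself, so $E_{C,\pptc}^{\mathrm{exact}}(\rho) = \log\norm{(\idchan\otimes\transpose)(\rho)}_1$.

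The main subtlety — rather than a genuine obstacle — is verifying that the zero-bi-negativity property is preserved under taking tensor powers, so that $S^{\otimes n}$ (or the analogously constructed operator for $\rho^{\otimes n}$) is still feasible and the regularization is trivial; this should follow because the partial transpose factorizes across the tensor product as $(\idchan\otimes\transpose)^{\otimes n} = \idchan^{\otimes}\otimes\transpose^{\otimes}$ and the absolute value of a tensor product of Hermitian operators is the tensor product of the absolute values, but it requires a careful bookkeeping of which systems play the role of $A$ and $B$ in the composite. One should also double-check that $E_\kappa$ as defined coincides with (the logarithm of) what the literature calls the $\alpha\to\infty$ logarithmic negativity and invoke its known additivity, or alternatively prove additivity directly from the SDP by a tensor-product feasibility argument in both directions. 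I would cite \reference~\cite{eisert2003_entanglement} and \reference~\cite{wang2020_cost} for the underlying facts and present the feasibility computation explicitly since it is short.
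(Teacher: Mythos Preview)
The paper does not prove this proposition; it is stated with a citation to \reference~\cite{eisert2003_entanglement} and no argument is given. Your route via \Cref{thm:regularized_gamma_ppt} and the $E_\kappa$ SDP is a clean way to derive it inside the paper's own framework, and the overall strategy is sound.

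Two points need attention. First, the operator inequality you invoke is false: $-S'\loewnerleq M\loewnerleq S'$ does \emph{not} imply $S'\loewnergeq\abs{M}$ (take $M=X$ the Pauli matrix and $S'=\diag(2,1/2)$; the sandwich holds but $S'-\abs{M}=S'-\id$ has a negative eigenvalue). What you actually need, and what does hold, is only the trace inequality $\tr[S']\geq\norm{M}_1$: letting $P_\pm$ be the spectral projectors onto the positive/negative parts of $M$, the two sandwich constraints give $\tr[P_+S']\geq\tr[P_+M]$ and $\tr[P_-S']\geq-\tr[P_-M]$, and since $S'=\tfrac12(S'+M)+\tfrac12(S'-M)\loewnergeq 0$ the kernel contributes nonnegatively. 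Second, do not fall back on ``known additivity'' of $E_\kappa$: the paper explicitly records (just after \Cref{thm:regularized_gamma_ppt}) that $E_\kappa$ is \emph{not} additive in general~\cite{lami2024_computable,wang2024_errata}. Your primary route---checking that zero bi-negativity tensorizes so that the feasible point $S$ works for every $\rho^{\otimes n}$---is the correct one and goes through exactly as you outline, since both the partial transpose and the absolute value factor over tensor products.
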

Note that the right-hand side of~\Cref{eq:exact_expression_ECppt} is the definition of the entanglement measure called \emph{entanglement negativity}~\cite{vidal2002_computable,plenio2005_logarithmic}.

For general states with non-zero bi-negativity, the situation is more complicated.
Here, the entanglement negativity only provides a lower bound to $E_{C,\pptc}^{\mathrm{exact}}$~\cite{eisert2003_entanglement}.
Recent work has shown that $E_{C,\pptc}^{\mathrm{exact}}$ can still be computed through the evaluation of two SDP hierarchies that converge from above and below respectively.
We refrain from delving into the details here and simply summarize the main result.
\begin{proposition}{\cite[Theorem 3]{lami2024_computable}}{pptc_cost_efficient_algo}
  Let $\rho\in\dops(AB)$ be a bipartite state.
  There exists a classical algorithm that computes $E_{C,\pptc}^{\mathrm{exact}}(\rho)$ up to additive error $\epsilon$ with runtime
  \begin{equation}
    \mathcal{O}\left((dD)^{6+o(1)}\cdot \mathrm{polylog}(1/\epsilon) \right)
  \end{equation}
  where $D\coloneqq \dimension(A)\dimension(B)$ and $d\coloneqq\min\{\dimension(A),\dimension(B)\}$.
\end{proposition}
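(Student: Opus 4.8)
The plan is to reduce $E_{C,\pptc}^{\mathrm{exact}}(\rho)$ to the optimal value of a \emph{single} semidefinite program whose size is polynomial in $\dimension(A)$ and $\dimension(B)$, and then to invoke the standard polynomial-time complexity of interior-point SDP solvers. By \Cref{thm:regularized_gamma_ppt} we already know $E_{C,\pptc}^{\mathrm{exact}}(\rho) = \lim_{n\to\infty}\frac1n E_{\kappa}(\rho^{\otimes n})$, so naively one would have to evaluate a regularization; the obstruction is precisely that $E_\kappa$ is only sub-additive. The first step is therefore to introduce a ``tempered'' single-letter functional $\widehat E(\rho)$, obtained from the SDP for $E_\kappa$ by adjoining an auxiliary PPT operator (morally a sandwich for the bi-negativity of $\rho$) together with partial-transpose inequalities linking it to the operator $S\in\pos(AB)$ from the definition of $E_\kappa$, arranged so that the feasible set becomes stable under tensor products. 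I would then check that $\widehat E$ is still an SDP with a constant number of Hermitian matrix variables of dimension $\mathcal{O}(\dimension(A)\dimension(B))$, that it reduces to $\log\norm{(\idchan_A\otimes\transpose_B)(\rho)}_1$ on states of zero bi-negativity (consistent with \Cref{prop:ecost_zero_bineg}), and that it lies between the lower bound of \Cref{lem:gamma_ppt_kappaent_bound} and $E_\kappa$.

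Next I would prove that $\widehat E$ equals the regularization, which breaks into two parts. The first is weak additivity $\widehat E(\rho\otimes\sigma)=\widehat E(\rho)+\widehat E(\sigma)$: sub-additivity is immediate by tensoring feasible points, while for super-additivity I would pass to the SDP dual and symmetrise an optimal dual witness over the permutation action on the copies, exploiting the partial-transpose covariance of every constraint — the extra auxiliary variable built into $\widehat E$ is exactly what lets this symmetrisation land back inside the feasible set. The second part is the asymptotic matching $\widehat E(\rho)=\lim_{n\to\infty}\frac1n E_\kappa(\rho^{\otimes n})$: one inequality follows from $\widehat E(\rho)=\frac1n\widehat E(\rho^{\otimes n})\le\frac1n E_\kappa(\rho^{\otimes n})$ combined with additivity, and for the other I would feed $n$ independent copies of an optimal $\widehat E(\rho)$-point into the $E_\kappa(\rho^{\otimes n})$ program after a typical-subspace pinching that suppresses the asymptotically negligible cross terms, giving $E_\kappa(\rho^{\otimes n})\le n\,\widehat E(\rho)+o(n)$. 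Combining these with \Cref{thm:regularized_gamma_ppt} and \Cref{lem:oneshot_pptc_cost} then yields $E_{C,\pptc}^{\mathrm{exact}}(\rho)=\widehat E(\rho)$.

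Finally, for the complexity bound I would argue that, using the Schmidt decomposition of $\rho$, the auxiliary operator in $\widehat E$ may be restricted to a subspace whose size is controlled by $d=\min\{\dimension(A),\dimension(B)\}$, so that the SDP has $\mathrm{poly}(dD)$ real variables and constraint blocks of size $\mathcal{O}(dD)$, where $D=\dimension(A)\dimension(B)$. A standard interior-point solver, with fast matrix multiplication for the linear-algebra steps, then evaluates this SDP, and hence $E_{C,\pptc}^{\mathrm{exact}}(\rho)$ after taking a logarithm, to additive error $\epsilon$ in time $\mathcal{O}((dD)^{6+o(1)}\cdot\mathrm{polylog}(1/\epsilon))$, as claimed.

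The main obstacle is the second step, and within it the asymptotic-matching half: one must design the auxiliary variable in $\widehat E$ delicately enough that $\widehat E$ is at once tensor-stable (hence additive) \emph{and} tight in the limit, whereas the same construction applied to $E_\kappa$ without the auxiliary variable genuinely fails, since $E_\kappa$ is strictly sub-additive on some states. Making the super-additivity symmetrisation and the pinching argument cooperate with the partial-transpose constraints is where essentially all the work lies; the reduction to a fixed-size SDP and the subsequent complexity accounting are then routine.
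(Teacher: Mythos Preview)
The paper does not prove this proposition itself; it cites \cite{lami2024_computable} and describes that work's approach as ``the evaluation of two SDP hierarchies that converge from above and below respectively'', explicitly refraining from details. Your route is therefore different in kind: rather than two converging hierarchies, you aim for a \emph{single} additive SDP $\widehat E$ equal to the regularization of $E_\kappa$.

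The proposal has a genuine gap: you never specify what the auxiliary variable in $\widehat E$ actually is. You only list properties it should satisfy (tensor-stable, reducing to log-negativity on zero-bi-negativity states, sandwiched between the lower bound and $E_\kappa$) and then outline how the additivity and asymptotic-matching proofs would proceed \emph{given} such an object. But constructing that object is the entire content; your super-additivity-via-dual-symmetrisation and matching-via-pinching arguments have nothing concrete to act on. You acknowledge this yourself in the final paragraph. Since $E_\kappa$ is known to be strictly sub-additive on some states, the dual symmetrisation provably fails for the unaugmented program, so everything rests on an auxiliary variable you have not produced. The complexity claim likewise depends on restricting this undefined variable to a subspace of size controlled by $d$ via the Schmidt decomposition of $\rho$, which cannot be checked. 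As written, this is a wish list for a single-letter formula rather than a proof sketch; the fact that the original authors resorted to converging hierarchies is some evidence that no such clean formula is readily available.
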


\begin{example}\label{ex:noisy_bell_state}
  Consider the two noisy $2$-qubit states $\rho_p$ and $\sigma_p$ which are defined to be the depolarized version of the Bell state $\ket{\Psi}$
  \begin{equation}
    \rho_p \coloneqq (1-p) \proj{\Psi}_{AB} + p \frac{1}{4}\id_{AB} 
  \end{equation}
  and of the pure entangled state $\ket{\phi}\coloneqq \frac{1}{3}\ket{00} + \frac{2\sqrt{2}}{3}\ket{11}$
  \begin{equation}
    \sigma_p \coloneqq (1-p) \proj{\phi}_{AB} + p \frac{1}{4}\id_{AB} \, .
  \end{equation}
  In~\Cref{fig:noisy_regularized_states} we depict for $\rho_p$ and $\sigma_p$
  \begin{itemize}
    \item the quasiprobability extent $\gamma_{\sep}$ which is numerically computed through the SDP for $\gamma_{\ppt}$ (recall~\Cref{cor:gamma_mixed_small_sys}).
    \item the exact PPTC entanglement cost $E_{C,\pptc}^{\mathrm{exact}}$, which is numerically computed through~\Cref{prop:ecost_zero_bineg}.
    We note that both $\rho_p$ and $\sigma_p$ have zero bi-negativity for all values of $p\in [0,1]$.
    \item the entanglement of formation $E_F$ which is computed through~\Cref{eq:E_F_formula}.
    \item the regularized and asymptotic quasiprobability extents $\gammareg_{\sep}$ and $\gammasreg_{\sep}$ at $p=0$, which are analytically known through~\Cref{cor:gamma_reg_pure,cor:gamma_sep_asymptotic}.
  \end{itemize}
  The exact PPTC entanglement cost provides a lower bound for $\gammareg_{\sep}$.
  This bound coincides with the analytic expression of $\gammareg_{\sep}$ at $p=0$ and with $\gamma_{\sep}$ at $p=p^*$, where $p^*$ is the smallest probability $p$ at which $\gamma_{\sep}$ reaches the value $1$ ($p=2/3$ for $\rho_p$ and $p\approx 0.56$ for $\sigma_p$).
  Furthermore, we observe (up to numerical precision) that  $E_{C,\pptc}^{\mathrm{exact}}$ linearly interpolates between these two points.
  Remarkably this indicates that $\gammareg_{\sep}=E_{C,\pptc}^{\mathrm{exact}}$ due to the convexity of the quasiprobability extent.

  Finally, we note that the entanglement of formation provides an upper bound to $\gammasreg_{\sep}$ as
  \begin{equation}
    \gammasreg_{\sep} = E_{C,\sepp} \leq E_{C,\locc} = \lim\limits_{n\rightarrow\infty}\frac{1}{n}E_F(\cdot^{\otimes n}) \leq E_F
  \end{equation}
  due to~\Cref{thm:gamma_sep_is_sepp_cost}.
  Interestingly, for $p=0$, this upper bound coincides with the analytical expression of $\gammasreg_{\sep}$ up to numerical precision (note that for the Bell state $\ket{\Psi}$ it also coincides with  $\gammareg_{\sep}$ since the von Neumann and Rényi-$\nicefrac{1}{2}$ entropies take the same value).
  For $p>0$, $\gammasreg_{\sep}$ must lie somewhere in the shaded region.
\end{example}
\begin{figure}
  \centering
  \includegraphics{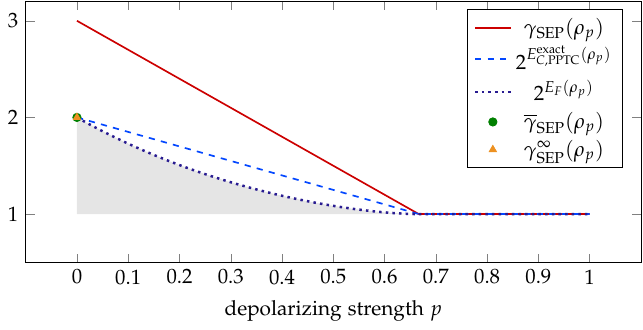}
  \includegraphics{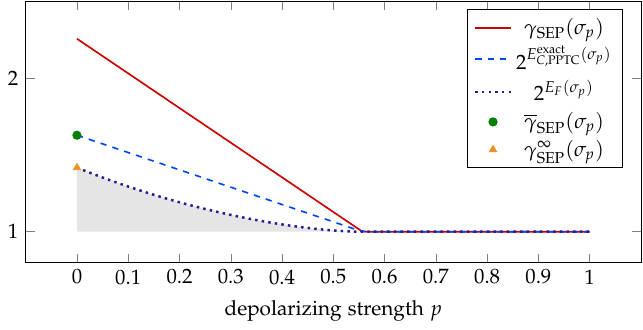}
  \caption{Numerical evaluation of various entanglement quantities on $\rho_p$ (top) and $\sigma_p$ (bottom) which characterize $\gamma_{\sep}$, $\gammareg_{\sep}$ and $\gammasreg_{\sep}$. See main text for explanation. We argue that $\gammareg_{\sep}$ must be equal to $2^{E_{C,\pptc}^{\mathrm{exact}}}$ and that $\gammasreg_{\sep}$ must lie in the shaded region.}
  \label{fig:noisy_regularized_states}
\end{figure}

\subsection{Quasiprobability extent of channels}\label{sec:gamma_nonlocal_channels}
In this section, we study the quasiprobability extent $\gamma_{\lo^\star}(\cE)$ and $\gamma_{\locc^\star}(\cE)$ of a quantum channel $\cE\in\cptp(AB\rightarrow A'B')$.
This is a generalization of the quasiprobability extent of states, which can be considered to be the specialization with trivial input space $A=B=\mathbb{C}$.
At first, one might hope that the problem could be reduced back to evaluating $\gamma_{\sep}$ through the state-channel duality of the Choi isomorphism.
However, the complicated structure of $\locc^{\star}$ and $\lo^{\star}$ Choi states as well as the partial trace constraint make this endeavor impossible.
These difficulties persist even for unitary channels $\cU$: As illustrated by~\Cref{ex:toffoli} further below, $\gamma_{\sep}(\choi{\cU})$ can generally be strictly smaller than $\gamma_{\sepc}(\cU),\gamma_{\locc^{\star}}(\cU)$ and $\gamma_{\lo^{\star}}(\cU)$.

As a result of these difficulties, we don't know an explicit expression for $\gamma_{\lo^\star}(\cU)$ and $\gamma_{\locc^\star}(\cU)$ for all unitary channels $\cU$, which is somewhat disappointing considering we have a complete understanding of $\gamma_{\sep}$ for pure states.
Still, many ideas and techniques that we utilized in the state setting can be (at least partly) generalized to the channel setting.
In this section will explore these generalizations.
We will start off with two specific sub-classes of unitaries for which we do have a characterization of the quasiprobability extent (\Cref{sec:gamma_clifford,sec:gamma_kaklike}).
Then, we will turn our attention to more general techniques to find lower and upper bounds for both unitary and non-unitary channels.

\subsubsection{Clifford gates}\label{sec:gamma_clifford}
In the introduction of this chapter, we have already discussed that a CNOT gate can be realized with an LOCC protocol that consumes a pre-shared Bell pair.
Conversely, a Bell pair can clearly be realized from a product state using a CNOT gate.
In this sense, a Bell pair and a CNOT gate are equally powerful resources under the QRT of LOCC.
This insight can be straightforwardly generalized to arbitrary Clifford gates through the generalized gate teleportation protocol.
\begin{proposition}{Gate teleportation}{gate_teleportation}
  Let $A,A',A''$ be $n$-qubit systems and $B,B',B''$ $m$-qubit systems.
  For any bipartite Clifford unitary $U\in\uni(AB)$ with induced channel $\cU\in\cptp(AB)$, there exists an LOCC protocol $\cE\in\locc(AA'A'';BB'B''\rightarrow A;B)$ such that
  \begin{equation}
    \forall \rho_{AB}\in\dops(AB): \cU(\rho_{AB}) = \cE(\rho_{AB}\otimes (\choi{\cU})_{A'B'A''B''}) \, .
  \end{equation}
\end{proposition}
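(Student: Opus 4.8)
The plan is to prove Proposition~\ref{prop:gate_teleportation} by explicit construction, generalizing the standard Bell-pair gate teleportation gadget (the $n=m=1$, $U=\cnot$ case depicted in~\Cref{fig:cnot_teleportation}) to arbitrary bipartite Clifford unitaries. The key structural fact I would invoke is the well-known one-bit teleportation / gate teleportation identity: for a Clifford unitary $U$, teleporting a state through a resource state $\choi{\cU}$ yields $U$ applied to the input, up to a Pauli correction that can be computed classically and then undone, precisely because conjugating a Pauli by a Clifford gives a Pauli.

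First I would set up the teleportation backbone. On Alice's side, her $n$ qubits $A$ together with the $n$ qubits $A'$ of the resource register get measured in the Bell basis (i.e., apply $\cnot$ from $A$ to $A'$, then Hadamard on $A$, then measure both in the computational basis), producing $2n$ classical bits; analogously Bob does a Bell measurement on $BB'$ producing $2m$ classical bits. After these \emph{local} operations (no communication needed yet), the state on $A''B''$ equals $(P_a \otimes P_b)\, U\, \rho_{AB}\, U^\dagger \,(P_a^\dagger \otimes P_b^\dagger)$, where $P_a \in \mathrm{P}_n$, $P_b \in \mathrm{P}_m$ are Pauli operators determined by the measurement outcomes $a \in \{0,1\}^{2n}$, $b\in\{0,1\}^{2m}$. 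This is the standard computation: $\choi{\cU} = (\idchan \otimes \cU)(\proj{\Psi})$ is, up to normalization, the Choi state, and teleporting $\rho_{AB}$ into its $A'B'$ legs transfers $\rho_{AB}$ to $A''B''$ with a pre-applied Pauli that commutes past $U$.

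Next I would handle the correction. Since $U$ is Clifford, $U^\dagger (P_a \otimes P_b) U =: Q_{ab} \in \mathrm{P}_{n+m}$ is again a Pauli, but in general it is \emph{not} of product form across the $A''$--$B''$ cut, so Alice and Bob cannot individually undo it without communication. This is exactly where one round of classical communication enters: Alice sends her $2n$ bits to Bob and Bob sends his $2m$ bits to Alice (this is one simultaneous round of two-way classical communication, hence within LOCC). Now both parties know $(a,b)$, hence both know $Q_{ab}$. Write $Q_{ab} = (R^{(ab)}_A \otimes S^{(ab)}_B)$ up to a phase --- wait, this factorization need not hold; instead I would observe that \emph{any} Pauli $Q_{ab}$ on $A''B''$ \emph{does} factor as a tensor product of a Pauli on the $A''$ qubits and a Pauli on the $B''$ qubits (Pauli strings are by definition product operators over the individual qubits, and the $A''$/$B''$ bipartition just groups the qubits). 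So Alice applies the $A''$-part and Bob applies the $B''$-part, each conditioned on the shared classical data $(a,b)$; the overall phase is irrelevant since it acts trivially on density operators. Conjugating $U\rho U^\dagger$ by $(P_a\otimes P_b)$ and then by $Q_{ab}^\dagger = U^\dagger(P_a^\dagger\otimes P_b^\dagger)U$ gives exactly $U\rho U^\dagger$, as desired.

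Finally I would assemble these pieces into a single LOCC instrument / channel $\cE \in \locc(AA'A'';BB'B'' \to A;B)$ (relabelling $A'' \to A$, $B'' \to B$ at the end), noting that it consists of local Bell measurements, one round of classical communication, and outcome-dependent local Pauli corrections --- all manifestly LOCC with a bounded (in fact, single) number of rounds, which is worth remarking on given the chapter's earlier comments about round complexity. The main obstacle, or rather the point requiring the most care, is verifying the Pauli-tracking computation cleanly: one must check that the classical function $(a,b) \mapsto Q_{ab}$ is well-defined (independent of phase choices in the Bell measurement), that the resource state $\choi{\cU}$ as normalized in the Choi-Jamio\l{}kowski convention of the preliminaries produces exactly the claimed post-measurement state (the $\dimension$ normalization factors must cancel against the Bell-measurement probabilities), and that grouping the $n+m$ qubits into the $A''$/$B''$ bipartition indeed makes every $(n+m)$-qubit Pauli a local operation for each party. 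These are routine but must be stated precisely; I would do the $(a,b)=(0,0)$ case in detail and then note that the general case follows by the Clifford commutation relation $UQU^\dagger \in \mathrm{P}$ for $Q \in \mathrm{P}$.
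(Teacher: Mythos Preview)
Your approach is exactly the paper's: local Bell measurements pairing each party's input with one leg of $\choi{\cU}$, classical exchange of outcomes, then a Pauli correction that is local precisely because $U$ is Clifford. The one slip is the side on which the teleportation byproduct appears. Teleporting $\rho_{AB}$ through the untouched leg of the Choi state leaves the output register in the state $U(P_a\otimes P_b)\,\rho\,(P_a^\dagger\otimes P_b^\dagger)U^\dagger$, not $(P_a\otimes P_b)\,U\rho U^\dagger\,(P_a^\dagger\otimes P_b^\dagger)$; the required correction is $U(P_a\otimes P_b)U^\dagger$, which is a Pauli string (hence a product across the cut) exactly because $U$ is Clifford --- this is where the Clifford hypothesis is actually consumed. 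With your stated post-state the obvious correction would be the already-product operator $(P_a\otimes P_b)^\dagger$, the Clifford assumption would be superfluous, and your final check --- conjugating $U\rho U^\dagger$ first by $(P_a\otimes P_b)$ and then by $Q_{ab}^\dagger=U^\dagger(P_a^\dagger\otimes P_b^\dagger)U$ --- does not in fact return $U\rho U^\dagger$. Once you place the Pauli inside $U(\cdot)U^\dagger$ and take the correction to be $U(P_a\otimes P_b)U^\dagger$ (the paper writes it as $U(Q_i^n\otimes Q_j^m)U^{\dagger}$), the rest of your argument goes through verbatim.
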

Let us denote the $n$-qubit Pauli operators by $\{Q_0^n,Q_1^n,\dots,Q_{4^n-1}^n\}$.
The generalized $n$-qubit Bell basis is defined as
\begin{equation}
  \ket{\Psi^n_i} \coloneqq (Q_i^n\otimes\id) \frac{1}{\sqrt{2^n}}\sum_{i=0}^{2^n-1}\ket{i}\otimes \ket{i} \, .
\end{equation}
It is an orthonormal basis because $\braopket{\Psi^n_0}{O\otimes\id}{\Psi^n_0} = 2^{-n}\tr[O]$ for any $n$-qubit operator $O$.
The gate teleportation protocol $\cE$ that achieves~\Cref{prop:gate_teleportation} is depicted in~\Cref{fig:gate_teleportation}.
Both parties perform a measurement of the generalized Bell basis on the input systems $A,B$ and the corresponding environment systems $A'',B''$ of the Choi state.
They then apply some correction operation that depends on these two measurement outcomes.

This protocol is essentially an extension of the well-known quantum teleportation protocol (which we will also encounter again later, see~\Cref{fig:state_teleportation}) with the correction operation being ``pushed through'' the unitary $U$.
Whenever $U$ is Clifford, the resulting correction $U(Q_i^n\otimes Q_j^m)U^{\dagger}$ is Pauli and hence locally implementable.
Therefore, $\cE$ is in LOCC.

\begin{figure}
  \centering
  \includegraphics{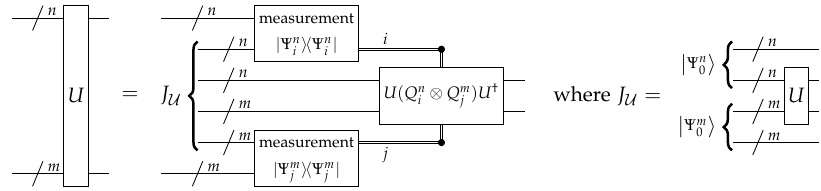}
  \caption{Gate teleportation for a general unitary $U$ with associated channel $\cU$. If $U$ is a Clifford gate, then the correction operation $U(Q_i^n\otimes Q_j^m)U^{\dagger}$ is a Pauli gate and hence locally implementable.}
  \label{fig:gate_teleportation}
\end{figure}

\begin{corollary}{}{gamma_locc_clifford}
  Let $A$ and $B$ be $n$-qubit and $m$-qubit systems.
  For any bipartite Clifford unitary $U\in\uni(AB)$ with induced channel $\cU\in\cptp(AB)$, one has
  \begin{equation}
    \gamma_{\locc^\star}(\cU) = \gamma_{\locc}(\cU) = \gamma_{\sep(AA';BB')}(\choi{\cU}) \, .
  \end{equation}
\end{corollary}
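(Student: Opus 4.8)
The strategy is to establish a chain of inequalities linking the three quantities $\gamma_{\locc^\star}(\cU)$, $\gamma_{\locc}(\cU)$, and $\gamma_{\sep(AA';BB')}(\choi{\cU})$ and show it closes into equalities. Since $\locc \subset \locc^\star$ by~\Cref{eq:qrt_star_hierarchy}, \Cref{lem:gamma_ds_bound} gives $\gamma_{\locc}(\cU) \ge \gamma_{\locc^\star}(\cU)$ for free. So it suffices to prove two things: (i) $\gamma_{\locc^\star}(\cU) \ge \gamma_{\sep(AA';BB')}(\choi{\cU})$, and (ii) $\gamma_{\sep(AA';BB')}(\choi{\cU}) \ge \gamma_{\locc}(\cU)$. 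Both directions exploit the gate-teleportation protocol of~\Cref{prop:gate_teleportation} together with the chaining property~\Cref{lem:gamma_chaining} and its state-channel consequence~\Cref{cor:gamma_state_from_channel}.

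For direction (ii) — the ``channel from state'' bound — I would take any QPD of the Choi state $\choi{\cU} = \sum_i a_i \tau_i$ with $\tau_i \in \sep(AA';BB')$ (equivalently, product states, by~\Cref{lem:gamma_hulls}). Feeding each $\tau_i$ into the LOCC gate-teleportation protocol $\cE$ from~\Cref{prop:gate_teleportation} produces $\cU(\rho) = \sum_i a_i\, \cE(\rho \otimes \tau_i)$, and since $\cE$ applied to a fixed separable (product) resource state is an LOCC channel on the remaining system $AB$, each term $\rho \mapsto \cE(\rho \otimes \tau_i)$ lies in $\locc(AB)$. This yields a QPD of $\cU$ into LOCC channels with the same $1$-norm, so $\gamma_{\locc}(\cU) \le \gamma_{\sep}(\choi{\cU})$. (One technical point to verify: since $\tau_i$ is separable rather than a single product state, $\cE(\cdot \otimes \tau_i)$ is a convex mixture of the locally-applied protocols, hence still LOCC; alternatively just work with product-state QPDs directly.)

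For direction (i) — the ``state from channel'' bound — I would use the reverse: the Choi state is obtained from $\cU$ by a free operation. Concretely, $\choi{\cU} = (\idchan_{A'} \otimes \cU \otimes \idchan_{B'})(\proj{\Psi}_{AA'} \otimes \proj{\Psi}_{BB'})$, i.e. prepare local maximally entangled pairs (a local operation, free under LO and hence under LOCC and $\locc^\star$) and then apply $\cU$. By~\Cref{cor:gamma_state_from_channel} applied in the QRT $\locc^\star$ (which has tensor product structure), $\gamma_{\locc^\star}(\choi{\cU}) \le \gamma_{\locc^\star}(\cU) \cdot \gamma_{\locc^\star}(\text{local Bell pairs}) = \gamma_{\locc^\star}(\cU)$, since the local Bell pairs are a product state across the $A$-side and $B$-side bipartition $AA' \,;\, BB'$ and thus free (extent $1$). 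Finally, by~\Cref{lem:gammas_overview_nonlocal_state}, $\gamma_{\locc^\star}(\choi{\cU}) = \gamma_{\sep(AA';BB')}(\choi{\cU})$. Combining: $\gamma_{\sep}(\choi{\cU}) \le \gamma_{\locc^\star}(\cU) \le \gamma_{\locc}(\cU) \le \gamma_{\sep}(\choi{\cU})$, forcing all to be equal.

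The main obstacle I anticipate is being careful about \emph{which} bipartition the separability in $\gamma_{\sep(AA';BB')}(\choi{\cU})$ refers to, and making sure the free operations used (preparing $\proj{\Psi}_{AA'} \otimes \proj{\Psi}_{BB'}$, tracing out environments in the teleportation, applying local Pauli/Clifford corrections) are genuinely local across that same bipartition. The Choi state lives on $AA'BB'$, with party one holding $AA'$ (or $AA''$ in the teleportation picture) and party two holding $BB'$; the maximally entangled pairs $\proj{\Psi}_{AA'}$ and $\proj{\Psi}_{BB'}$ are each held entirely within one party, so they form a product state across $AA' \,;\, BB'$ — this is the key point that makes the resource free. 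The Clifford hypothesis enters only through~\Cref{prop:gate_teleportation} to guarantee the correction operation $U(Q_i^n \otimes Q_j^m)U^\dagger$ is a local Pauli; everything else in the argument is generic resource-theoretic bookkeeping, so once the bipartition is tracked consistently the proof should go through cleanly.
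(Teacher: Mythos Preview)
Your proposal is correct and follows essentially the same approach as the paper's proof: establish the chain $\gamma_{\sep}(\choi{\cU}) \le \gamma_{\locc^\star}(\cU) \le \gamma_{\locc}(\cU)$ via~\Cref{cor:gamma_state_from_channel} (preparing the Choi state from a local product of Bell pairs), and close the loop with the gate-teleportation protocol~\Cref{prop:gate_teleportation} to show $\gamma_{\locc}(\cU) \le \gamma_{\sep}(\choi{\cU})$. Your treatment is more explicit about the bipartition bookkeeping and the use of~\Cref{lem:gammas_overview_nonlocal_state}, but the underlying argument is identical.
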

\begin{proof}
  A priori, we know by~\Cref{cor:gamma_state_from_channel} that
  \begin{equation}
    \gamma_{\sep}(\choi{\cU}) \leq \gamma_{\locc^\star}(\cU) \leq \gamma_{\locc}(\cU)
  \end{equation}
  since the Choi state can be prepared from a product state and an invocation of $\cU$.
  The converse inequality $\gamma_{\sep}(\choi{\cU})\geq \gamma_{\locc}(\cU)$ follows from~\Cref{prop:gate_teleportation}:
  Any QPD of $\choi{\cU}$ into separable states automatically induces a QPD of $\cU$ into LOCC protocols.
\end{proof}
Using the closed-form formula for $\gamma_{\sep}(\choi{\cU})$ in~\Cref{thm:optimal_qpd_pure}, we can thus efficiently compute the quasiprobability extent for various gates.
Similarly, $\gammareg_{\locc^\star}(\cU) = \gammareg_{\locc}(\cU) = \gammareg_{\sep}(\choi{\cU})$ can be evaluated through~\Cref{cor:gamma_reg_pure}.
\begin{example}\label{ex:clifford_gates}
  Every two-qubit Clifford gate is equivalent to either the CNOT, SWAP or iSWAP gate up to local Clifford gates.
  The Schmidt coefficients of $\choi{\cnot}$ are $\smash{(\frac{1}{\sqrt{2}},\frac{1}{\sqrt{2}})}$ and those of $\choi{\swap}$ and $\choi{\iswap}$ are $\smash{(\frac{1}{2},\frac{1}{2},\frac{1}{2},\frac{1}{2})}$.
  Hence, we get
  \begin{equation}
    \gamma_{\locc^{\star}}(\cnot) = 3 \text{ and } \gammareg_{\locc^{\star}}(\cnot) = 2
  \end{equation}
  and similarly
  \begin{equation}
    \gamma_{\locc^{\star}}(\swap) = \gamma_{\locc^{\star}}(\iswap) = 7 \text{ and } \gammareg_{\locc^{\star}}(\swap) = \gammareg_{\locc^{\star}}(\iswap) = 4 \, .
  \end{equation}
\end{example}
\begin{example}
  Consider the general permutation gate $\mathrm{Perm}_{\pi}$ across $n+m$ qubits for some $\pi\in S_{n+m}$ where $S_{k}$ denotes the permutation group.
  Define the number of ``crossings'' $\mathrm{cross}(\pi)$ to be the number of elements $i\in\{1,\dots,n\}$ such that $\sigma(i)>n$. The Choi state $\choi{\mathrm{Perm}_{\pi}}$ is equivalent to $2\mathrm{cross}(\pi)$ Bell pairs, so we get by~\Cref{ex:maximally_entangled_state}
  \begin{equation}
    \gamma_{\locc^{\star}}(\mathrm{Perm}_{\pi}) = \gamma_{\locc}(\mathrm{Perm}_{\pi}) = 2^{2\mathrm{cross}(\sigma)+1}-1 \, .
  \end{equation}
\end{example}

\subsubsection{Gates with unitary operator Schmidt decomposition}\label{sec:gamma_kaklike}
While the result in~\Cref{cor:gamma_locc_clifford} is elegant, the approach cannot be applied to channels without an equivalence to a state, and more generally, it also cannot make any statement about $\gamma_{\lo^\star}$.
Here, we provide another approach that is applicable to a class of unitaries which exhibit a so-called unitary operator Schmidt decomposition.
Any operator $X\in\lino(AB)$ can be written in terms of its operator Schmidt decomposition~\cite{nielsen1998_thesis}
\begin{equation}
  X = \sum_j u_j A_j\otimes B_j
\end{equation}
where $u_j>0$ and $A_j\in\lino(A),B_j\in\lino(B)$ are orthonormal operators w.r.t. the Hilbert-Schmidt norm, i.e., $\langle A_j,A_k\rangle = \delta_{jk}$ and $\langle B_j,B_k\rangle = \delta_{jk}$.

\begin{theorem}{}{gamma_kaklike_unitary}
  Let $U\in\lino(AB)$ be a bipartite unitary with an operator Schmidt decomposition
  \begin{equation}
    U_{AB} = \sum_j u_j (L_j)_A \otimes (R_j)_B
  \end{equation}
  such that the $L_i$ and $R_i$ are unitary operators on $A$ and $B$ respectively.
  Then, the quasiprobability extent of the induced channel $\cU\in\cptp(AB)$ is
  \begin{equation}
    \gamma_{\locc^\star}(\cU)
    = \gamma_{\lo^\star}(\cU)
    = 2\left(\sum_j u_j\right)^2 - 1 \, .
  \end{equation}
\end{theorem}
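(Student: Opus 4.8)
The plan is to prove matching lower and upper bounds; by the hierarchy~\Cref{eq:qrt_star_hierarchy} and~\Cref{lem:gamma_ds_bound} it suffices to bound $\gamma_{\lo^{\star}}(\cU)$ from above and $\gamma_{\pptc^{\star}}(\cU)$ from below by $2\bigl(\sum_j u_j\bigr)^2-1$.

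\textbf{Lower bound.} The first observation is that the Choi state $\choi{\cU}$ is \emph{pure}, and that across the bipartition separating the two parties' systems its Schmidt coefficients are exactly $(u_j)_j$. Indeed, writing a maximally entangled state on the combined reference--copy systems as a product $\ket{\Psi}\otimes\ket{\Psi}$ (one factor per party) and inserting $U=\sum_j u_j L_j\otimes R_j$, one gets $\choi{\cU}\propto\sum_j u_j\,\bigl((\id\otimes L_j)\ket{\Psi}\bigr)\otimes\bigl((\id\otimes R_j)\ket{\Psi}\bigr)$; since the $L_j$ (and the $R_j$) are unitary and orthonormal with respect to the Hilbert--Schmidt inner product, the vectors $(\id\otimes L_j)\ket{\Psi}$ (and the $(\id\otimes R_j)\ket{\Psi}$) form orthonormal families, so this is a genuine Schmidt decomposition with coefficients $u_j$. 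Hence $\gamma_{\ppt}(\choi{\cU})=2\bigl(\sum_j u_j\bigr)^2-1$ by~\Cref{lem:optimal_qpd_pure_ppt}. Since $\choi{\cU}$ is obtained from a product (hence $\ppt$) state by applying $\cU$ tensored with the identity on the reference systems, \Cref{lem:gamma_tp_idchan,cor:gamma_state_from_channel,lem:gammas_overview_nonlocal_state} give $\gamma_{\pptc^{\star}}(\cU)\geq\gamma_{\pptc^{\star}}(\choi{\cU})=\gamma_{\ppt}(\choi{\cU})=2\bigl(\sum_j u_j\bigr)^2-1$, and this propagates up the hierarchy, yielding the same lower bound for $\gamma_{\locc^{\star}}(\cU)$ and $\gamma_{\lo^{\star}}(\cU)$.

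\textbf{Upper bound.} Here one constructs an explicit QPD of $\indsupo{U}$ into $\lo^{\star}$ operations of $1$-norm $2\bigl(\sum_j u_j\bigr)^2-1$. Expanding gives $\indsupo{U}=\sum_{j,k}u_j u_k\,\Phi_{jk}$ with $\Phi_{jk}(\rho):=(L_j\otimes R_j)\,\rho\,(L_k\otimes R_k)^{\dagger}$. The diagonal part $\sum_j u_j^2\,\Phi_{jj}=\sum_j u_j^2\,(\indsupo{L_j}\otimes\indsupo{R_j})$ is a correlated mixture of product unitary channels, which can be realized in $\lo^{\star}$ (at $1$-norm $\bigl(\sum_j u_j\bigr)^2$) by correlating the outcomes of two local instruments and rescaling, exactly as in the state case. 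For the off-diagonal part the key local building block is that, for unitaries $V,W$ on a single subsystem, each of the maps $\rho\mapsto\tfrac12\bigl(V\rho W^{\dagger}+W\rho V^{\dagger}\bigr)$ and $\rho\mapsto\tfrac{i}{2}\bigl(V\rho W^{\dagger}-W\rho V^{\dagger}\bigr)$ equals $\cE^{+}-\cE^{-}$ for a genuine quantum instrument $(\cE^{+},\cE^{-})$ --- namely ``apply $V$ or $W$ coherently controlled on a fresh $\ket{+}$ (resp.\ $\ket{i+}$) ancilla, then measure the ancilla in the $X$ basis'' --- and hence lies in the local decomposition set; this generalizes the Clifford gadgets $R_{Q_1}$ and $R_{Q_1,Q_2}$ of~\Cref{ex:modifiedendobasis} to arbitrary unitaries. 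Combining such gadgets on the two parties via the polarization identity $\cA\otimes\cB+\cA'\otimes\cB'=\tfrac12\bigl[(\cA+\cA')\otimes(\cB+\cB')+(\cA-\cA')\otimes(\cB-\cB')\bigr]$ realizes each Hermitian-preserving pair $\Phi_{jk}+\Phi_{kj}$ in $\lo^{\star}$, and summing these with coefficients $u_j u_k$ reproduces the off-diagonal part. The remaining (delicate) point is to carry out the assembly so that the total $1$-norm is exactly $2\bigl(\sum_j u_j\bigr)^2-1$: one must fold the byproducts of the off-diagonal gadgets into the diagonal contribution (by re-weighting the same local measurement outcomes) so that the diagonal terms are produced ``for free'', and check that all spurious product-channel byproducts cancel. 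Together with the lower bound this forces all four extents to equal $2\bigl(\sum_j u_j\bigr)^2-1$.

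\textbf{Main obstacle.} The technical heart is this last step of the upper bound: a naive block-wise decomposition, handling the diagonal and off-diagonal parts independently, has $1$-norm strictly larger than $2\bigl(\sum_j u_j\bigr)^2-1$ (roughly $3\bigl(\sum_j u_j\bigr)^2-2$), because the ``nonlocal anticommutators'' $\Phi_{jk}+\Phi_{kj}$ are more expensive to simulate locally than one would hope and the diagonal part then incurs a separate $\Theta\bigl((\sum_j u_j)^2\bigr)$ cost. Showing that the two contributions can always be interleaved so that no $1$-norm is wasted --- and that the resulting QPD mirrors the structure of the optimal pure-state decomposition of~\Cref{thm:optimal_qpd_pure} --- is where the work lies. (For Clifford $U$ the same identity $\gamma_{\locc^{\star}}(\cU)=2\bigl(\sum_j u_j\bigr)^2-1$ also follows from gate teleportation, cf.\ \Cref{cor:gamma_locc_clifford}, but that route is unavailable here, since a generic unitary admitting a unitary operator Schmidt decomposition need not be Clifford.)
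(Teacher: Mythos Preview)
Your overall strategy and the lower bound are correct and match the paper's proof. The local building blocks you describe for the upper bound --- the maps $\rho\mapsto\tfrac12(V\rho W^{\dagger}+W\rho V^{\dagger})$ and $\rho\mapsto\tfrac{i}2(V\rho W^{\dagger}-W\rho V^{\dagger})$ as elements of $\cptp^{\star}$, realized by the controlled-unitary-then-measure gadget --- are exactly the paper's $\mathcal{A}_{j,k}^{0},\mathcal{A}_{j,k}^{\pi/2}$ and $\mathcal{B}_{j,k}^{0},\mathcal{B}_{j,k}^{\pi/2}$, and your polarization identity applied to $a_{jk}\otimes b_{jk}+a_{kj}\otimes b_{kj}$ reproduces verbatim the paper's two-term decomposition $\Phi_{jk}+\Phi_{kj}=2\,\mathcal{A}_{j,k}^{0}\otimes\mathcal{B}_{j,k}^{0}-2\,\mathcal{A}_{j,k}^{\pi/2}\otimes\mathcal{B}_{j,k}^{\pi/2}$, contributing $4u_ju_k$ to the $1$-norm for each pair $j<k$.

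The ``obstacle'' you identify is not real, however, and stems from miscounting the diagonal. You claim the diagonal costs $(\sum_j u_j)^2$, presumably by packaging it as a single $\lo^{\star}$ element via correlated local instruments and post-selection. But a QPD is allowed to have many terms: each $\indsupo{L_j}\otimes\indsupo{R_j}$ is already in $\lo$ (it is a product channel), so $\sum_j u_j^{2}\,\indsupo{L_j}\otimes\indsupo{R_j}$ is itself a valid QPD into $\lo$ with $1$-norm $\sum_j u_j^{2}$. The crucial normalization $\sum_j u_j^{2}=1$ follows directly from unitarity of $U$ together with orthonormality of the operator Schmidt factors (compute $\tr[UU^{\dagger}]$ both ways). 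Hence the diagonal costs exactly $1$, the off-diagonal costs $4\sum_{j<k}u_ju_k=2\bigl[(\sum_j u_j)^{2}-1\bigr]$, and the total is $2(\sum_j u_j)^{2}-1$ with no interleaving, no folding of byproducts, and no cancellations needed. The paper's construction (\Cref{lem:gamma_unitary_upperbound}) is thus entirely direct once $\sum_j u_j^{2}=1$ is noted; there is nothing delicate left to do.
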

This result is applicable to any two-qubit unitary $U$, as Cartan's KAK decomposition implies that it can always be written in the form
\begin{equation}
  U = (V_1\otimes V_2)\exp\left(i\theta_X X\otimes X + i\theta_Y Y\otimes Y + i\theta_Z Z\otimes Z \right)(V_3\otimes V_4)
\end{equation}
for some single-qubit unitaries $V_1,V_2,V_3,V_4$ and angles $\theta_X,\theta_Y,\theta_Z\in\mathbb{R}$ fulfilling $\abs{\theta_Z}\leq\theta_Y\leq\theta_X\leq \pi/4$~\cite{khaneja2000_cartan,tucci2005_introduction}.
Evaluating the exponential function, this directly provides us (up to phases) with a unitary operator Schmidt decomposition
\begin{equation}
  U = (V_1\otimes V_2)\left(u_0\id\otimes\id + u_1X\otimes X + u_2Y\otimes Y + u_3 Z\otimes Z \right)(V_3\otimes V_4)
\end{equation}
for some $u_j\in\mathbb{C}$ fulfilling $\sum_{j=0}^3\abs{u_j}^2 = 1$.
By extension, any tensor product of two-qubit unitaries has a unitary operator Schmidt decomposition allowing us to apply~\Cref{thm:gamma_kaklike_unitary}.
This insight enables us to evaluate the regularized quasiprobability extent.
\begin{corollary}{}{}
  Let $U\in\lino(AB)$ be a bipartite unitary as in~\Cref{thm:gamma_kaklike_unitary}.
  Then,
  \begin{equation}
    \gammareg_{\locc^\star}(\cU) = \gammareg_{\lo^\star}(\cU) = (\sum_j u_j)^2 \, .
  \end{equation}
\end{corollary}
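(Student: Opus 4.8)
The plan is to deduce the corollary directly from the single-shot result in \Cref{thm:gamma_kaklike_unitary} by taking tensor powers and regularizing, in the same spirit as the proof of \Cref{cor:gamma_reg_pure}. The key observation is that the class of unitaries with a \emph{unitary} operator Schmidt decomposition is closed under tensor products, and the Schmidt coefficients multiply. Concretely, if $U_{AB}=\sum_j u_j L_j\otimes R_j$ with $L_j,R_j$ unitary, then $U^{\otimes n}$ (regrouped so that all $A$-type systems are on one side and all $B$-type systems on the other) has operator Schmidt decomposition $\sum_{k\in\mathbb{F}_r^n}\left(\prod_{i=1}^n u_{k_i}\right)\bigotimes_i L_{k_i}\otimes\bigotimes_i R_{k_i}$, and each $\bigotimes_i L_{k_i}$ is again unitary (a tensor product of unitaries), similarly for the $R$-factors; the operators $\bigotimes_i L_{k_i}$ remain orthonormal w.r.t.\ the Hilbert--Schmidt inner product since the $L_j$ are. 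Hence $U^{\otimes n}$ satisfies the hypotheses of \Cref{thm:gamma_kaklike_unitary}.

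Applying \Cref{thm:gamma_kaklike_unitary} to $U^{\otimes n}$ then gives
\begin{equation}
  \gamma_{\locc^\star}(\cU^{\otimes n}) = \gamma_{\lo^\star}(\cU^{\otimes n}) = 2\left(\sum_{k\in\mathbb{F}_r^n}\prod_{i=1}^n u_{k_i}\right)^2 - 1 = 2\left(\sum_{j=1}^r u_j\right)^{2n} - 1 \, ,
\end{equation}
where the last equality is the usual factorization of the sum over product indices. Taking $\lim_{n\to\infty}(\cdot)^{1/n}$ on both sides, and using that $\sum_j u_j \geq \bigl(\sum_j u_j^2\bigr)^{1/2}$; note $\sum_j u_j^2 = \norm{U}_2^2 / (\dimension(A)\dimension(B)) = 1$ since $U$ is unitary (in the normalized Hilbert--Schmidt convention), so $\sum_j u_j \geq 1$ and the $-1$ term is asymptotically negligible. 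This yields $\gammareg_{\locc^\star}(\cU)=\gammareg_{\lo^\star}(\cU)=(\sum_j u_j)^2$. That the limit defining $\gammareg$ exists is guaranteed by sub-multiplicativity of the quasiprobability extent (\Cref{cor:gamma_submult}), exactly as in \Cref{cor:gamma_reg_pure}.

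The main thing to be careful about — though it is more bookkeeping than a genuine obstacle — is justifying that $U^{\otimes n}$ really does fall under \Cref{thm:gamma_kaklike_unitary} after the appropriate permutation of tensor factors: one must argue that permuting local systems is a local unitary and hence does not change $\gamma_{\locc^\star}$ or $\gamma_{\lo^\star}$ (by \Cref{cor:gamma_invariant_reversible}), and that the operator Schmidt decomposition transforms as claimed with unitary Schmidt operators of the regrouped unitary. One also needs to confirm that the $u_j$ appearing here are the (nonnegative real) singular values in the operator Schmidt decomposition, so that $\sum_j u_j \geq 1$ holds; this is immediate from the normalization $\sum_j u_j^2 = 1$. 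Everything else is the same elementary limit argument as in the pure-state case.
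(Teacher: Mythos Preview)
Your proposal is correct and takes essentially the same approach as the paper, which simply states that the proof is completely analogous to that of \Cref{cor:gamma_reg_pure}. Your explicit verification that $U^{\otimes n}$ again has a unitary operator Schmidt decomposition (so that \Cref{thm:gamma_kaklike_unitary} applies at each $n$) and your handling of the limit via $\sum_j u_j\ge 1$ are exactly the details the paper leaves implicit.
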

The proof is completely analogous to the one of~\Cref{cor:gamma_reg_pure}.
Note that the regularization halves the distance to $1$ as
\begin{equation}
  \gamma_{\locc^\star}(\cU) = \gamma_{\lo^\star}(\cU) = 1 + 2 R_{\locc^\star}(\cU) 
\end{equation}
and
\begin{equation}
  \gammareg_{\locc^\star}(\cU) = \gammareg_{\lo^\star}(\cU) = 1 + R_{\locc^\star}(\cU) \, ,
\end{equation}
where $R_{\locc^\star}(\cU) = (\sum_j u_j)^2-1$ is the robustness measure as in~\Cref{lem:robustness2}.

We now focus our attention on the proof of~\Cref{thm:gamma_kaklike_unitary}.
The general strategy consists of finding matching upper and lower bounds for the quasiprobability extents.
The upper bound is provided by explicitly constructing an achieving QPD.
This construction is applicable in a slightly more general setting, so we state it as a separate lemma
\begin{lemma}{}{gamma_unitary_upperbound}
  Let $U\in\lino(AB)$ be a bipartite unitary of the form
  \begin{equation}
    U_{AB} = \sum_j u_j (L_j)_A \otimes (R_j)_B
  \end{equation}
  with induced channel $\cU$, where $u_j>0$, $L_j\in\uni(A)$ and $R_j\in\uni(B)$.
  Then,
  \begin{equation}
    \gamma_{\lo^\star}(\cU) \leq 2(\sum_j u_j)^2 - \sum_j u_j^2 \, .
  \end{equation}
\end{lemma}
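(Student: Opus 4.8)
The plan is to exhibit an explicit quasiprobability decomposition of the induced channel $\cU$ into $\lo^{\star}$ operations whose coefficients have $1$-norm equal to $2(\sum_j u_j)^2-\sum_j u_j^2$. Writing out the action $\cU(\rho)=U\rho U^{\dagger}=\sum_{j,k}u_ju_k\,(L_j\otimes R_j)\rho(L_k\otimes R_k)^{\dagger}$ and using that $(L_j\otimes R_j)\rho(L_k\otimes R_k)^{\dagger}=(\Lambda^A_{jk}\otimes\Lambda^B_{jk})(\rho)$ with $\Lambda^A_{jk}\colon X\mapsto L_jXL_k^{\dagger}$ on $\lino(A)$ and $\Lambda^B_{jk}\colon Y\mapsto R_jYR_k^{\dagger}$ on $\lino(B)$, I would split the sum into the diagonal part $\cD\coloneqq\sum_j u_j^2\,\indsupo{L_j}\otimes\indsupo{R_j}$ and, for each unordered pair $j<k$, the off-diagonal contribution $u_ju_k S_{jk}$ with $S_{jk}\coloneqq\Lambda^A_{jk}\otimes\Lambda^B_{jk}+\Lambda^A_{kj}\otimes\Lambda^B_{kj}$. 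Each summand of $\cD$ lies in $\lo\subset\lo^{\star}$, so $\cD$ contributes $\sum_j u_j^2$ to the $1$-norm with positive coefficients, and the whole task reduces to showing $\gamma_{\lo^{\star}}(S_{jk})\le 4$.

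For this, one first checks on a Hermitian input that $\Lambda^A_{jk}+\Lambda^A_{kj}$ and $i(\Lambda^A_{jk}-\Lambda^A_{kj})$ are Hermitian-preserving (and likewise on $B$), and uses the elementary expansion
\begin{equation}
S_{jk}=\tfrac12\bigl(\Lambda^A_{jk}+\Lambda^A_{kj}\bigr)\otimes\bigl(\Lambda^B_{jk}+\Lambda^B_{kj}\bigr)-\tfrac12\bigl[i(\Lambda^A_{jk}-\Lambda^A_{kj})\bigr]\otimes\bigl[i(\Lambda^B_{jk}-\Lambda^B_{kj})\bigr].
\end{equation}
Since $\Lambda^A_{jk}$ acts by left/right multiplication with unitaries, unitary invariance of the trace norm gives $\dnorm{\Lambda^A_{jk}}=1$, whence $\gamma_{\cptp^{\star}}(\Lambda^A_{jk}\pm\Lambda^A_{kj})=\dnorm{\Lambda^A_{jk}\pm\Lambda^A_{kj}}\le 2$ by \Cref{thm:gamma_diamond_norm} and the triangle inequality (the extra factor $i$ leaves the diamond norm unchanged), and analogously on $B$. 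Combining this with the sub-multiplicativity bound $\gamma_{\lo^{\star}}(\mathcal A\otimes\mathcal B)\le\gamma_{\cptp^{\star}}(\mathcal A)\,\gamma_{\cptp^{\star}}(\mathcal B)$ and \Cref{lem:gamma_convexity} applied to the displayed identity, I obtain $\gamma_{\lo^{\star}}(S_{jk})\le\tfrac12\cdot2\cdot2+\tfrac12\cdot2\cdot2=4$. Assembling the pieces via \Cref{lem:gamma_convexity} then yields $\gamma_{\lo^{\star}}(\cU)\le\sum_j u_j^2+4\sum_{j<k}u_ju_k=\sum_j u_j^2+2\bigl[(\sum_j u_j)^2-\sum_j u_j^2\bigr]=2(\sum_j u_j)^2-\sum_j u_j^2$, as claimed.

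The step I expect to require the most care is the sub-multiplicativity bound $\gamma_{\lo^{\star}}(\mathcal A\otimes\mathcal B)\le\gamma_{\cptp^{\star}}(\mathcal A)\,\gamma_{\cptp^{\star}}(\mathcal B)$, since it is \emph{not} an instance of \Cref{cor:gamma_submult} (which stays within a single QRT and would only bound $\gamma_{\lo^{\star}}$ of tensor products of $\lo^{\star}$ operations). Instead I would argue it directly: taking optimal decompositions $\mathcal A=\sum_i a_i\mathcal F_i$, $\mathcal B=\sum_l b_l\mathcal G_l$ with $\mathcal F_i\in\cptp^{\star}(A)$, $\mathcal G_l\in\cptp^{\star}(B)$, the characterization in \Cref{lem:characterization_expanded_decomposition_set} writes $\mathcal F_i=\mathcal E_i^+-\mathcal E_i^-$ and $\mathcal G_l=\mathcal H_l^+-\mathcal H_l^-$ with $(\mathcal E_i^+,\mathcal E_i^-)$ a quantum instrument on $A$ and $(\mathcal H_l^+,\mathcal H_l^-)$ one on $B$; hence $(\mathcal E_i^s\otimes\mathcal H_l^t)_{s,t\in\{+,-\}}\in\instr{\lo}(A;B)$ and $\mathcal F_i\otimes\mathcal G_l=\sum_{s,t}(st)\,\mathcal E_i^s\otimes\mathcal H_l^t\in\qids[\instr{\lo}]=\lo^{\star}$, so $\mathcal A\otimes\mathcal B=\sum_{i,l}a_ib_l\,\mathcal F_i\otimes\mathcal G_l$ is a valid $\lo^{\star}$-QPD with $1$-norm $\sum_{i,l}|a_i||b_l|=\gamma_{\cptp^{\star}}(\mathcal A)\gamma_{\cptp^{\star}}(\mathcal B)$. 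Everything else is routine linear algebra and bookkeeping of $1$-norms.
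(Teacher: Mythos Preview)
Your proof is correct and follows essentially the same route as the paper's: you split $\cU$ into the diagonal term $\sum_j u_j^2\,\indsupo{L_j}\otimes\indsupo{R_j}$ and the off-diagonal terms $S_{jk}$, and you decompose each $S_{jk}$ into exactly the same two tensor products as the paper (your $\tfrac12(\Lambda^A_{jk}+\Lambda^A_{kj})$ and $\tfrac{i}{2}(\Lambda^A_{jk}-\Lambda^A_{kj})$ are precisely the paper's $\mathcal{A}_{j,k}^0$ and $\mathcal{A}_{j,k}^{\pi/2}$). The sub-multiplicativity step $\gamma_{\lo^\star}(\mathcal A\otimes\mathcal B)\le\gamma_{\cptp^\star}(\mathcal A)\,\gamma_{\cptp^\star}(\mathcal B)$ you spell out is also exactly what the paper uses (it just states $\cptp^\star(A)\otimes\cptp^\star(B)\subset\lo^\star(A;B)$).

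The one genuine difference is how you certify that the local factors have $\gamma_{\cptp^\star}\le 2$. The paper constructs explicit two-outcome instruments with Kraus operators $\tfrac12(L_j\pm e^{-i\phi}L_k)$ and checks trace-preservation directly. You instead invoke \Cref{thm:gamma_diamond_norm} and the triangle inequality $\dnorm{\Lambda^A_{jk}\pm\Lambda^A_{kj}}\le\dnorm{\Lambda^A_{jk}}+\dnorm{\Lambda^A_{kj}}=2$. Your route is slicker as a bound, but the paper's explicit instrument is what it later turns into a concrete one-ancilla circuit (Appendix~B); the diamond-norm shortcut doesn't directly hand you that implementation.
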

We refer to~\Cref{app:optimal_circuits} for an explicit circuit implementation of the achieving QPD that only requires one ancilla qubit on each bipartition.
\begin{proof}
  In the following, we construct an explicit $\lo^{\star}$ QPD of $\cU$ that achieves a 1-norm of $2(\sum_j u_j)^2 - \sum_j u_j^2$.
  The action of the bipartite unitary can be written as
  \begin{align}\label{eq:proto_qpd_kaklike}
    \cU(\rho) &= \sum_j u_j^2 (L_j\otimes R_j)\rho(L_j^{\dagger}\otimes R_j^{\dagger}) \nonumber\\
    &+ \sum_{j<k} u_ju_k \left[(L_j\otimes R_j)\rho(L_k^{\dagger}\otimes R_k^{\dagger}) + (L_k\otimes R_k)\rho(L_j^{\dagger}\otimes R_j^{\dagger})\right] \, .
  \end{align}
  The first term is already a mixture of local channels, so we only need to concentrate our efforts to decompose the second term.
  For this purpose, we introduce the two linear superoperators
  \begin{equation}
    \mathcal{A}_{j,k}^{\phi}(\rho) \coloneqq \frac{1}{2}\left( e^{i\phi} L_j\rho L_k^{\dagger} + e^{-i\phi} L_k\rho L_j^{\dagger} \right)
  \end{equation}
  and 
  \begin{equation}
    \mathcal{B}_{j,k}^{\phi}(\rho) \coloneqq \frac{1}{2}\left( e^{i\phi} R_j\rho R_k^{\dagger} + e^{-i\phi} R_k\rho R_j^{\dagger} \right)
  \end{equation}
  where $\phi\in[0,2\pi)$ is some angle.
  A simple calculation shows that
  \begin{align}
    & 2\left[\mathcal{A}_{j,k}^0\otimes\mathcal{B}_{j,k}^0 - \mathcal{A}_{j,k}^{\pi /2}\otimes\mathcal{B}_{j,k}^{\pi /2} \right](\rho) \nonumber\\
    &= \frac{1}{2}\Big[ (L_j\otimes R_j)\rho(L_k^{\dagger}\otimes R_k^{\dagger}) + (L_j\otimes R_k)\rho(L_k^{\dagger}\otimes R_j^{\dagger}) \nonumber\\
    &\quad\quad + (L_k\otimes R_j)\rho(L_j^{\dagger}\otimes R_k^{\dagger}) + (L_k\otimes R_k)\rho(L_j^{\dagger}\otimes R_j^{\dagger}) \Big] \nonumber\\
    &- \frac{1}{2}\Big[ -(L_j\otimes R_j)\rho(L_k^{\dagger}\otimes R_k^{\dagger}) + (L_j\otimes R_k)\rho(L_k^{\dagger}\otimes R_j^{\dagger}) \nonumber\\
    &\quad\quad + (L_k\otimes R_j)\rho(L_j^{\dagger}\otimes R_k^{\dagger}) - (L_k\otimes R_k)\rho(L_j^{\dagger}\otimes R_j^{\dagger}) \Big] \\
    &= (L_j\otimes R_j)\rho(L_k^{\dagger}\otimes R_k^{\dagger}) + (L_k\otimes R_k)\rho(L_j^{\dagger}\otimes R_j^{\dagger}) \, .
  \end{align}
  As such, we have
  \begin{equation}
    \cU = \sum_j u_j^2 \mathcal{L}_j\otimes\mathcal{R}_j + \sum_{j<k} 2u_ju_k \left(\mathcal{A}_{j,k}^0\otimes\mathcal{B}_{j,k}^0 - \mathcal{A}_{j,k}^{\pi /2}\otimes\mathcal{B}_{j,k}^{\pi /2} \right)
  \end{equation}
  where $\mathcal{L}_j(\rho) = L_j\rho L_j^{\dagger}$ and $\mathcal{R}_j(\rho) = R_j\rho R_j^{\dagger}$.
  To show that this is indeed a valid QPD, it remains to show that $\mathcal{A}_{j,k}^{\phi}\otimes\mathcal{B}_{j,k}^{\phi}\in\lo^{\star}(A;B)$.
  For this, it suffices to show $\mathcal{A}_{j,k}^{\phi}\in\cptp^{\star}(A)$ and $\mathcal{B}_{j,k}^{\phi}\in\cptp^{\star}(B)$ as $\cptp^{\star}(A)\otimes\cptp^{\star}(B)\subset\lo^{\star}(A;B)$ (recall the discussion of~\Cref{eq:old_lostar_def}).
  Using the characterization of $\cptp^{\star}$ in~\Cref{eq:char_cptpstar}, the desired statement follows from the observation that
  \begin{equation}
    \mathcal{A}_{j,k}^{\phi}(\rho) = \mathcal{A}_{j,k}^{+,\phi}(\rho) - \mathcal{A}_{j,k}^{-,\phi}(\rho)
  \end{equation}
  \begin{equation}
    \mathcal{B}_{j,k}^{\phi}(\rho) = \mathcal{B}_{j,k}^{+,\phi}(\rho) - \mathcal{B}_{j,k}^{-,\phi}(\rho)
  \end{equation}
  where 
  \begin{equation}
    \mathcal{A}_{j,k}^{\pm,\phi}(\rho) \coloneqq \frac{1}{4}(L_j\pm e^{-i\phi}L_k)\rho (L_j^{\dagger}\pm e^{i\phi}L_k^{\dagger})
  \end{equation}
  \begin{equation}
    \mathcal{B}_{j,k}^{\pm,\phi}(\rho) \coloneqq \frac{1}{4}(R_j\pm e^{-i\phi}R_k)\rho (R_j^{\dagger}\pm e^{i\phi}R_k^{\dagger}) \, .
  \end{equation}
  Clearly, $\mathcal{A}_{j,k}^{\pm,\phi}$ and $\mathcal{B}_{j,k}^{\pm,\phi}$ are completely positive maps.
  To see that $\mathcal{A}_{j,k}^{+,\phi}+\mathcal{A}_{j,k}^{-,\phi}$ is trace-preserving, observe that
  \begin{align}
    & \frac{1}{4}(L_j^{\dagger}+ e^{i\phi}L_k^{\dagger})(L_j + e^{-i\phi}L_k) + \frac{1}{4}(L_j^{\dagger} - e^{i\phi}L_k^{\dagger})(L_j - e^{-i\phi}L_k) \nonumber\\
    &= \frac{1}{4}( 2L_j^{\dagger}L_j + 2L_k^{\dagger}L_k ) = \id \, .
  \end{align}
  Analogously, $\mathcal{B}_{j,k}^{+,\phi}+\mathcal{B}_{j,k}^{-,\phi}$ is trace-preserving.
  Since we have a valid QPD, we conclude that
  \begin{equation}
    \gamma_{\lo^{\star}}(\cU) \leq \sum_j u_j^2 + 4\sum_{j<k}u_ju_k = 2(\sum_j u_j)^2 - \sum_j u_j^2 \, .
  \end{equation}
\end{proof}
\begin{proof}[Proof of~\Cref{thm:gamma_kaklike_unitary}]
  Notice that
  \begin{align}
    \dimension(AB) &=
    \tr[UU^{\dagger}] \\
    &= \sum_{j,k}u_ju_k\tr[(L_jL_k^{\dagger})\otimes (R_jR_k^{\dagger})] \\
    &= \sum_{j,k}u_ju_k\tr[L_jL_k^{\dagger}]\tr[R_jR_k^{\dagger}] \\
    &= \dimension(AB)\sum_{j,k}u_ju_k\delta_{jk}\delta_{jk} \\
    &= \dimension(AB)\sum_{j}u_j^2
  \end{align}
  and hence $\sum_i u_i^2 = 1$.
  Thanks to~\Cref{lem:gamma_unitary_upperbound} and $\gamma_{\locc^{\star}}\leq \gamma_{\lo^{\star}}$, it remains to show that $\gamma_{\locc^{\star}}\geq 2(\sum_ju_j)^2 - 1$.

  Recall that the Choi state of the unitary channel $\cU$ is defined as
  \begin{equation}
    \choi{\cU} = (\cU\otimes\idchan_{A'B'}) (\proj{\Psi}_{AA'}\otimes\proj{\Phi}_{BB'})
  \end{equation}
  where $A',B'$ are copies of $A,B$ and $\ket{\Psi}_{AA'},\ket{\Phi}_{BB'}$ are maximally entangled states.
  By~\Cref{cor:gamma_state_from_channel}, this implies that
  \begin{equation}
    \gamma_{\locc^{\star}(A;B)}(\cU) \geq \gamma_{\sep(AA';BB')}(\choi{\cU}) \, . 
  \end{equation}
  Since the Choi state is pure $\choi{\cU}=\proj{\choi{\cU}}$, we can express its quasiprobability extent in terms of its Schmidt coefficients $v_j$ (see~\Cref{thm:optimal_qpd_pure})
  \begin{equation}
    \gamma_{\sep(AA';BB')}(\choi{\cU}) = 2(\sum_j v_j)^2 - 1 \, . 
  \end{equation}
  It remains to explicitly evaluate the Schmidt coefficients for the Choi state.
  For some fixed orthonormal bases $\{\ket{i}\}_i$ and $\{\ket{j}\}_j$ of $A$ and $B$ we get
  \begin{align}
    \ket{\choi{U}}_{A'B'AB} &= (U\otimes\idchan_{A'B'})\frac{1}{\sqrt{\dimension(A)\dimension(B)}} \sum_{i=1}^{\dimension(A)} \sum_{j=1}^{\dimension(B)} \ket{i}_{A}\ket{i}_{A'}\ket{j}_B\ket{j}_{B'} \\
    &= \sum_k u_k \frac{1}{\sqrt{\dimension(A)\dimension(B)}} \sum_{i=1}^{\dimension(A)} \sum_{j=1}^{\dimension(B)} (L_k\ket{i}_{A})\ket{i}_{A'}(R_k\ket{j}_B)\ket{j}_{B'} \\
    &= \sum_k u_k \ket{f_k}_{AA'}\ket{g_k}_{BB'}
  \end{align} 
  where
  \begin{align}
    \ket{f_k}_{A'A} &\coloneqq \frac{1}{\sqrt{\dimension(A)}}\sum_{i=1}^{\dimension(A)}(L_k\ket{i}_A)\ket{i}_{A'} \\
    \ket{g_k}_{B'B} &\coloneqq \frac{1}{\sqrt{\dimension(B)}}\sum_{j=1}^{\dimension(B)}(R_k\ket{j}_B)\ket{j}_{B'} \, .
  \end{align}
  It remains to verify that the set of states $\{\ket{f_k}\}_k$ and $\{\ket{g_k}\}_k$ are orthogonal, which then implies that we have found a valid Schmidt decomposition of $\ket{\choi{U}}$ with Schmidt coefficients $u_j$.
  This follows directly from the orthogonality of the unitaries
  \begin{equation}
    \braket{f_j}{f_k} = \frac{1}{\dimension(A)}\sum_{i,i'} \braket{i}{i'}\cdot \bra{i}L_j^{\dagger}L_k\ket{i'} = \frac{1}{\dimension(A)}\tr[L_j^{\dagger}L_k] = \delta_{jk}
  \end{equation}
  and analogously $\braket{g_j}{g_k} = \delta_{jk}$.
\end{proof}

\begin{example}\label{ex:gamma_2qb_gates}
  As an application of~\Cref{thm:gamma_kaklike_unitary}, we list a few important two-qubit gates and their associated quasiprobability extent.
  \begin{description}
    \item[Controlled rotation gate]
    The controlled rotation gate w.r.t. the Bloch sphere vector $\vec{n}\in\mathbb{R}^3$, $\norm{\vec{n}}_2 = 1$ and angle $\theta\in[0, 2\pi)$ is defined as
    \begin{equation}
      \mathrm{CR}_{\vec{n}}(\theta) \coloneqq e^{-i\frac{\theta}{2} \proj{1}\otimes \left(n_1X + n_2Y + n_3Z\right)} \, .
    \end{equation}
    Its Schmidt coefficients are given by $u=(\cos\frac{\theta}{4}, \sin\frac{\theta}{4}, 0, 0)$ and as such one has
    \begin{equation}
      \gamma_{\lo^{\star}}(\mathrm{CR}_{\vec{n}}(\theta)) = \gamma_{\locc^{\star}}(\mathrm{CR}_{\vec{n}}(\theta)) = 1 + 2\sin(\theta /2)
    \end{equation}
    \begin{equation}
      \gammareg_{\lo^{\star}}(\mathrm{CR}_{\vec{n}}(\theta)) = \gammareg_{\locc^{\star}}(\mathrm{CR}_{\vec{n}}(\theta)) = 1 + \sin(\theta /2) \, .
    \end{equation}

    \item[CNOT gate]
    We recover the result from~\Cref{ex:clifford_gates} as a special case of the rotation gate with $\theta=\pi$.

    \item[Two-qubit Pauli rotation gate]
    For $P\in \{X,Y,Z\}$ and an angle $\theta\in[0,2\pi)$ we define
    \begin{equation}
      \mathrm{R}_{PP}(\theta) \coloneqq e^{-i\frac{\theta}{2}P\otimes P} \, .
    \end{equation}
    The associated Schmidt coefficients are $u=(\abs{\cos\frac{\theta}{2}}, \sin\frac{\theta}{2}, 0, 0)$ and as such one has
    \begin{equation}
      \gamma_{\lo^{\star}}(\mathrm{R}_{PP}(\theta)) = \gamma_{\locc^{\star}}(\mathrm{R}_{PP}(\theta)) = 1 + 2\abs{\sin\theta} 
    \end{equation}
    \begin{equation}
      \gammareg_{\lo^{\star}}(\mathrm{R}_{PP}(\theta)) = \gammareg_{\locc^{\star}}(\mathrm{R}_{PP}(\theta)) = 1 + \abs{\sin\theta} \, .
    \end{equation}

    \item[SWAP/iSWAP gate]
    We recover the result from~\Cref{ex:clifford_gates} by using that both the SWAP and iSWAP gates have the Schmidt coefficients $u=(\frac{1}{2},\frac{1}{2},\frac{1}{2},\frac{1}{2})$.
    Note that they have the maximal quasiprobability extent of any two-qubit unitary as the associated Schmidt vector $u\in\mathbb{R}^4$ must always fulfill
    \begin{equation}
      2(\sum_j u_j)^2 - 1 = 2\norm{u}_1^2 - 1 \leq 2 (\sqrt{4}\norm{u}_2)^2 - 1 = 7
    \end{equation}
    where we used the Cauchy-Schwarz inequality.
  \end{description}
\end{example}
\begin{example}\label{ex:cnot_lostar_qpd}
  We briefly illustrate the $\lo^{\star}$ QPD of the CNOT gate that achieves $\gamma_{\lo^{\star}}(\cnot)=3$ throught the construction of~\Cref{thm:gamma_kaklike_unitary}.
  Up to local unitaries, we can equate the CNOT with the unitary
  \begin{equation}
    V\coloneqq \frac{1}{\sqrt{2}} \id\otimes \id + \frac{1}{\sqrt{2}} Z \otimes (-i Z)
  \end{equation}
  as
  \begin{equation}
    \cnot = \left( S^{\dagger}\otimes (HS^{\dagger}) \right) V  \left(\id\otimes H\right)  \, .
  \end{equation}
  We can thus focus our considerations on finding the optimal $\lo^{\star}$ QPD of $V$.
  Following the construction in~\Cref{lem:gamma_unitary_upperbound}, the optimal QPD reads
  \begin{equation}
    \indsupo{V} = \frac{1}{2}\idchan\otimes\idchan + \frac{1}{2}\indsupo{Z}\otimes\indsupo{Z} + \mathcal{A}_{1,2}^{0}\otimes\mathcal{B}_{1,2}^{0} - \mathcal{A}_{1,2}^{\pi / 2}\otimes\mathcal{B}_{1,2}^{\pi /2}
  \end{equation}
  where
  \begin{align}
    \mathcal{A}_{1,2}^{0} &= \indsupo{\proj{0}} - \indsupo{\proj{1}} \, ,
    &
    \mathcal{A}_{1,2}^{\pi / 2} &= \frac{1}{2}\indsupo{S^{\dagger}} - \frac{1}{2}\indsupo{S} \, , \\
    \mathcal{B}_{1,2}^{0} &= \frac{1}{2}\indsupo{S} - \frac{1}{2}\indsupo{S^{\dagger}} \, ,
    &
    \mathcal{B}_{1,2}^{\pi / 2} &= \indsupo{\proj{1}} - \indsupo{\proj{0}} \, .
  \end{align}
\end{example}
We note that~\Cref{lem:gamma_unitary_upperbound} provides us a technique to find upper bounds for the quasiprobability extent of large unitaries that are composed of two-qubit nonlocal gates interleaved with local unitaries.
Consider for example the unitary $U_{\mathrm{tot}}$ depicted in~\Cref{fig:largeunitary_upperbound}.
\begin{figure}
  \centering
  \scalebox{1.2}{
  \includegraphics{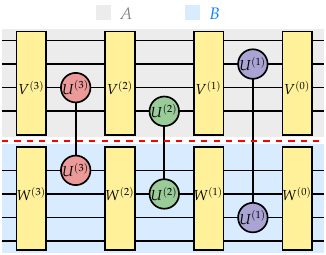}
  }
  \caption{Cut of a large unitary that can be written as the composition of multiple two-qubit unitaries (here denoted by $U^{(1)},U^{(2)},U^{(3)}$) interleaved with local unitaries (denoted by $V^{(i)},W^{(i)}$).}
  \label{fig:largeunitary_upperbound}
\end{figure}
From the chaining property (see~\Cref{lem:gamma_chaining,lem:gamma_tp_idchan}), one can obtain the trivial upper bound
\begin{equation}
  \gamma_{\lo^{\star}}(\cU_{\mathrm{tot}}) \leq \gamma_{\lo^{\star}}(\cU^{(1)}) \gamma_{\lo^{\star}}(\cU^{(2)}) \gamma_{\lo^{\star}}(\cU^{(3)})
\end{equation}
However, following corollary shows that the cost of cutting $U_{\mathrm{tot}}$ is at most the cost of jointly cutting the $U^{(i)}$, instead of cutting them separately
\begin{equation}
  \gamma_{\lo^{\star}}(\cU_{\mathrm{tot}}) \leq \gamma_{\lo^{\star}}(\cU^{(1)} \otimes \cU^{(2)} \otimes \cU^{(3)}) \, .
\end{equation}
which is a better bound due to the strict sub-multiplicativity of $\gamma_{\lo^{\star}}$.
\begin{corollary}{}{}
  Let $U_{\mathrm{tot}}\in\uni(AB)$ be a unitary of the form 
  \begin{equation}
    U_{\mathrm{tot}} = (V_A^{(0)}\otimes W_B^{(0)}) \prod_{i=1}^n \left(U_{AB}^{(i)} (V_A^{(i)}\otimes W_B^{(i)}) \right) 
  \end{equation}
  where $V^{(i)}_A\in\uni(A)$, $W^{(i)}_B\in\uni(B)$ and the $U_{AB}^{(i)}\in\uni(AB)$ act as a two-qubit unitaries across $A$ and $B$.
  We then have
  \begin{equation}\label{eq_parallel-as-upper}
    \gamma_{\lo^{\star}}(\cU_{\mathrm{tot}}) \leq \gamma_{\lo^{\star}}\left(\bigotimes_{i=1}^n \cU^{(i)}\right)
  \end{equation}
  where $\cU_{\mathrm{tot}}$ and $\cU^{(i)}$ denote the induced channels of $U_{\mathrm{tot}}$ and $U^{(i)}$ respectively.
\end{corollary}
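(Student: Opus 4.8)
The plan is to use the achieving $\lo^{\star}$ QPD of each two-qubit gate $\cU^{(i)}$ constructed in the proof of~\Cref{lem:gamma_unitary_upperbound}, and to show that when we substitute all of these QPDs into $U_{\mathrm{tot}}$ simultaneously, the interleaved local unitaries $V^{(i)}_A$ and $W^{(i)}_B$ do not interfere with the structure of the combined QPD. The key observation is that the QPD of~\Cref{lem:gamma_unitary_upperbound} writes $\cU^{(i)} = \sum_j \alpha_j^{(i)} \cP_j^{(i)} \otimes \cQ_j^{(i)}$ where each $\cP_j^{(i)}\in\cptp^{\star}(A)$ and $\cQ_j^{(i)}\in\cptp^{\star}(B)$, with 1-norm exactly $\gamma_{\lo^{\star}}(\cU^{(i)})$ in the optimal case; more importantly, the \emph{indexing set} $\{j\}$ and the coefficients $\alpha_j^{(i)}$ depend only on the operator Schmidt coefficients of $U^{(i)}$, not on any local rotations.

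First I would fix, for each $i$, the optimal $\lo^{\star}$ QPD $\cU^{(i)} = \sum_j \alpha_j^{(i)} (\cP_j^{(i)}\otimes\cQ_j^{(i)})$ as above. Substituting into the composition,
\begin{align}
  \cU_{\mathrm{tot}}
  &= (\cV_A^{(0)}\otimes\cW_B^{(0)}) \circ \bigcirc_{i=1}^n \Big( \sum_{j_i} \alpha_{j_i}^{(i)} (\cP_{j_i}^{(i)}\otimes\cQ_{j_i}^{(i)}) \circ (\cV_A^{(i)}\otimes\cW_B^{(i)}) \Big) \nonumber\\
  &= \sum_{j_1,\dots,j_n} \Big(\prod_{i=1}^n \alpha_{j_i}^{(i)}\Big)\, \big(\tilde{\cP}_{j_1,\dots,j_n} \otimes \tilde{\cQ}_{j_1,\dots,j_n}\big)
\end{align}
where $\tilde{\cP}_{j_1,\dots,j_n} \coloneqq \cV_A^{(0)}\circ\cP_{j_n}^{(n)}\circ\cV_A^{(n)}\circ\cdots\circ\cP_{j_1}^{(1)}\circ\cV_A^{(1)}$ and analogously for $\tilde{\cQ}$. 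Since $\cptp^{\star}(A)$ is closed under composition with unitary channels (which lie in $\cptp(A)\subset\cptp^{\star}(A)$) and under composition among themselves — this follows from~\Cref{lem:ds_closed_under_concat} applied to the trivial-resource-theory-per-party, or more directly from the characterization $\cptp^{\star}(A)=\{\cE^+-\cE^- : \cE^{\pm}\in\cp(A), \cE^++\cE^-\in\cptp(A)\}$ in~\Cref{eq:char_cptpstar} together with the fact that composing with a unitary on either side preserves this form — each $\tilde{\cP}_{j_1,\dots,j_n}\in\cptp^{\star}(A)$ and each $\tilde{\cQ}_{j_1,\dots,j_n}\in\cptp^{\star}(B)$. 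Hence each term lies in $\cptp^{\star}(A)\otimes\cptp^{\star}(B)\subset\lo^{\star}(A;B)$, so this is a valid $\lo^{\star}$ QPD of $\cU_{\mathrm{tot}}$ with 1-norm $\prod_{i=1}^n\big(\sum_{j_i}\abs{\alpha_{j_i}^{(i)}}\big) = \prod_{i=1}^n\gamma_{\lo^{\star}}(\cU^{(i)})$ when the QPDs are optimal.

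The final step is to recognize that this same index set $\{(j_1,\dots,j_n)\}$ with the same product coefficients $\prod_i\alpha_{j_i}^{(i)}$ is precisely the QPD obtained by tensoring the individual optimal QPDs of the $\cU^{(i)}$, so in fact $\gamma_{\lo^{\star}}(\cU_{\mathrm{tot}})\le\prod_i\gamma_{\lo^{\star}}(\cU^{(i)})$; but to get the stronger claim~\eqref{eq_parallel-as-upper}, I would instead start from \emph{any} (not necessarily product-form) $\lo^{\star}$ QPD of the parallel operator $\bigotimes_i\cU^{(i)}$ and argue it induces one of $\cU_{\mathrm{tot}}$ of the same 1-norm. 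Concretely, write $\bigotimes_i\cU^{(i)} = \sum_k \beta_k (\cE_k\otimes\cF_k)$ with $\cE_k\in\cptp^{\star}(A_1\cdots A_n)$, $\cF_k\in\cptp^{\star}(B_1\cdots B_n)$, using the characterization of $\lo^{\star}$ from~\Cref{lem:characterization_expanded_decomposition_set} (noting $\instr{\lo}$ for a single bipartition with all-$A$ wires vs all-$B$ wires \emph{is} coarse-grainable and trivially fine-grainable); then obtain $\cU_{\mathrm{tot}}$ by pre- and post-composing each side with the appropriate local unitaries and the ``re-wiring'' map that feeds the output of gate $i$ into gate $i+1$ — this re-wiring is itself a local operation (a local unitary, in fact just identity rewiring) on each party, hence free. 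By the chaining property~\Cref{lem:gamma_chaining} applied on each bipartition and~\Cref{cor:gamma_invariant_reversible}, the 1-norm is unchanged, giving $\gamma_{\lo^{\star}}(\cU_{\mathrm{tot}})\le\gamma_{\lo^{\star}}(\bigotimes_i\cU^{(i)})$. The main obstacle I anticipate is being careful about this last ``re-wiring'' argument: one must verify that threading the ancilla/output registers of the parallel circuit into a sequential circuit, together with the interleaved local unitaries, genuinely decomposes as a composition of $\lo^{\star}$-maps on each party without requiring cross-party communication — i.e., that the local pre/post-processing maps that convert the parallel layout to the sequential layout are product channels $\cG_A\otimes\cG_B$. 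This is intuitively clear from the circuit picture in~\Cref{fig:largeunitary_upperbound} but deserves a careful statement, since it is exactly the place where a naive argument could accidentally smuggle in classical communication.
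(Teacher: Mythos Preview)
Your re-wiring argument (part~2) has a genuine gap, and the obstacle you flag at the end is not a matter of being careful---it actually fails. An arbitrary element $\cH_m\in\lo^{\star}(A_1\cdots A_n;B_1\cdots B_n)$ appearing in a QPD of $\bigotimes_i\cU^{(i)}$ need not factor over the $n$ ``time-slots'': it is a weighted combination of product instruments on the \emph{entire} $A$-side versus the \emph{entire} $B$-side, with no compositional structure along the index~$i$. You therefore cannot insert the local unitaries $V^{(i)}\otimes W^{(i)}$ between gate $i$ and gate $i{+}1$, because there is no ``between'' in a single black-box call to $\cH_m$. Converting a single parallel call into a sequential one with interleaved operations is precisely the black-box setting of~\Cref{sec:submult}, whose resolution (\Cref{thm:gamma_bb_twoqubit}) requires an elaborate teleportation-style protocol and is much harder than this corollary. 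Also, your claim that $\instr{\lo}$ is coarse-grainable and trivially fine-grainable is explicitly contradicted by the discussion in~\Cref{sec:spacelike_ds}: a coarse-graining of $(\cE_i\otimes\cF_j)_{i,j}$ is generally not of product form.

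The paper's one-line proof avoids all of this by working at the level of the \emph{unitary operator}, not the channel. Each $U^{(i)}$ has a KAK decomposition $U^{(i)}=\sum_k u_k^{(i)} L_k^{(i)}\otimes R_k^{(i)}$ with $L_k^{(i)},R_k^{(i)}$ unitary. Multiplying these out in $U_{\mathrm{tot}}$ and absorbing the $V^{(i)},W^{(i)}$ gives
\[
U_{\mathrm{tot}} = \sum_{\vec{k}}\Big(\prod_i u_{k_i}^{(i)}\Big)\, \tilde L_{\vec{k}}\otimes\tilde R_{\vec{k}}
\]
with $\tilde L_{\vec{k}},\tilde R_{\vec{k}}$ still unitary (as products of unitaries). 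This is exactly the form required by~\Cref{lem:gamma_unitary_upperbound}---note that lemma does \emph{not} require orthonormality, only unitarity of the factors. Applying it and using $\sum_{\vec{k}}\big(\prod_i u_{k_i}^{(i)}\big)^2=1$ yields the bound $2\big(\sum_{\vec{k}}\prod_i u_{k_i}^{(i)}\big)^2-1$, which by~\Cref{thm:gamma_kaklike_unitary} is precisely $\gamma_{\lo^{\star}}\big(\bigotimes_i\cU^{(i)}\big)$. The key idea your approach misses is that at the operator level the interleaved local unitaries are absorbed for free, whereas at the channel/QPD level they obstruct the argument; your part~1 reproduces the QPD from~\Cref{lem:gamma_unitary_upperbound} gate-by-gate but forfeits the cross-gate cancellation that makes the bound match the parallel extent.
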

We note that a better upper bound is not possible without more knowledge about the specific structure of the unitary, as the bound is clearly achieved when the $V^{(i)},W^{(i)}$ are the identity and the $U^{(i)}$ act on different qubits.
Conversely, a meaningful lower bound for $\gamma_{\lo^{\star}}(\cU_{\mathrm{tot}})$ in terms of only the $\gamma_{\lo^{\star}}(\cU^{(i)})$ is also impossible, as generally the entangling gates $U^{(i)}$ could undo the effects of each other leading to $U$ being the identity.
\begin{proof}
  Combining the KAK decompositions of the individual two-qubit unitaries $U^{(i)}$, we immediately obtain a decomposition of $U_{\mathrm{tot}}$ that is of the form as in~\Cref{lem:gamma_unitary_upperbound}.
\end{proof}

\subsubsection{State conversion bounds}\label{sec:sepc_conversion_bound}
Consider a quantum channel $\cE\in\cptp(AB\rightarrow A'B')$.
As a consequence of the chaining property (recall~\Cref{cor:gamma_state_from_channel,lem:gamma_tp_idchan}), we have
\begin{equation}
  \gamma_{\sepc^{\star}}(\cE) \geq \gamma_{\sep(A'\bar{A};B'\bar{B})}\big((\idchan_{\bar{A}\bar{B}}\otimes \cE)(\rho)\big) 
\end{equation}
for any systems $\bar{A},\bar{B}$ and for any separable state $\rho\in\sep(A\bar{A};B\bar{B})$.
This in turn provides us with a lower bound for $\gamma_{\locc^{\star}}(\cE)$ and $\gamma_{\lo^{\star}}(\cE)$.
The power of this approach lies in the reduction to the state setting, as estimating $\gamma_{\sep}$ is often much easier.
The choice of $\rho$ is critical: Ideally, $(\idchan\otimes\cE)(\rho)$ should be as strongly entangled as possible.

We already used this lower bound in the proof of~\Cref{cor:gamma_locc_clifford,thm:gamma_kaklike_unitary}, where we chose $\rho$ to be the product of two maximally entangled states on $A\bar{A}$ and $B\bar{B}$.
In that case, $(\idchan\otimes\cE)(\rho)$ is precisely the Choi operator $\choi{\cE}$.
It is natural to ask whether this choice of $\rho$ is always the best.
The example below shows that this is not the case, even for unitary channels $\cE$.

\begin{example}\label{ex:toffoli}
  Consider the three-qubit Toffoli gate acting on the bipartite system $AB$ where $A$ consists of the two control qubits and $B$ of the target qubit.
  The Toffoli gate is equivalent to a CNOT gate across $A$ and $B$ up to some local operations, i.e., a Toffoli gate can be realized with one invocation of a CNOT gate and vice versa.
  The corresponding circuit identities are depicted in~\Cref{fig:toffoli_cnot_equiv}.
  \begin{figure}
    \centering
    \includegraphics{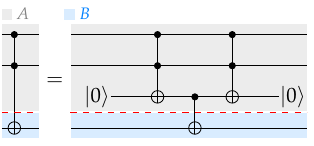}
    \quad \quad \quad
    \raisebox{0.3cm}{\includegraphics{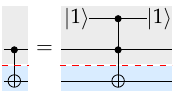}}
    \caption{A $\toffoli$ gate and a $\cnot$ gate are equally powerful in the sense that each of the two can be realized through the other using only local operations. The red dotted line denotes the partition into two subsystems. The ancilla qubits are uncomputed and end up in a well-defined single-qubit state.}
     \label{fig:toffoli_cnot_equiv}
  \end{figure}
  As such, it follows from the chaining property (recall~\Cref{lem:gamma_chaining}) that
  \begin{equation}
    \gamma_{\locc^{\star}}(\toffoli) = \gamma_{\lo^{\star}}(\toffoli) = \gamma_{\locc^{\star}}(\cnot) = 3 \, .
  \end{equation}
  However, the Choi-based lower bound does not achieve this value.
  A simple calculation reveals that the Schmidt coefficients of $\choi{\toffoli}$ are given by $\frac{1}{2}$ and $\frac{\sqrt{3}}{2}$, hence showing that
  \begin{equation}
    \gamma_{\sep}(\choi{\toffoli}) = 1 + \sqrt{3} \approx 2.732
  \end{equation}
  using~\Cref{thm:optimal_qpd_pure}.
  Yet, clearly we can pick a simple state like $\ket{1}\otimes\ket{+}\otimes\ket{0}$ to achieve the quasiprobability extent using the state conversion bound.
  Since the state
  \begin{equation}
    \toffoli \ket{1}\ket{+}\ket{0} = \frac{1}{\sqrt{2}} \left( \ket{10}\otimes\ket{0} + \ket{11}\otimes\ket{1} \right) 
  \end{equation}
  has Schmidt coefficients $(\frac{1}{\sqrt{2}},\frac{1}{\sqrt{2}})$, it achieves a quasiprobability extent of $3$.
\end{example}
Coming back to the general case, the strongest version of the state-based lower bound takes the form
\begin{equation}\label{eq:state_conversion_bound}
  \gamma_{\sepc^{\star}}(\cE) \geq \sup_{\bar{A},\bar{B},\rho\in\sep(A\bar{A};B\bar{B})} \gamma_{\sep}((\idchan\otimes\cE)(\rho)) \, .
\end{equation}
We can restrict the optimization to pure states and systems with equal dimension $\dimension(\bar{A})=\dimension(A)$, $\dimension(B)=\dimension(\bar{B})$, since
\begin{align}
  \sup_{\bar{A},\bar{B},\rho\in\sep(A\bar{A};B\bar{B})} \gamma_{\sep}((\idchan\otimes\cE)(\rho))
  &= \sup_{\bar{A},\bar{B}} \max_{\rho\in\sep(A\bar{A};B\bar{B})} \gamma_{\sep}((\idchan\otimes\cE)(\rho)) \\
  &= \sup_{\bar{A},\bar{B}} \max_{\substack{\ket{\psi}\in\mathcal{H}_{A\bar{A}}\\\ket{\phi}\in\mathcal{H}_{B\bar{B}}}} \gamma_{\sep}((\idchan\otimes\cE)(\proj{\psi}\otimes\proj{\phi})) 
\end{align}
where we used the convexity of $\gamma_{\sep}$.
For any given $\ket{\psi},\ket{\phi}$, we can restrict $\bar{A},\bar{B}$ to be the spaces that are spanned by the corresponding Schmidt basis.

The quantity on the right-hand side of~\Cref{eq:state_conversion_bound} is essentially the \emph{unassisted entangling capacity} introduced in \reference~\cite{campbell2010_optimal}.
The authors also showed that the lower bound cannot be improved by picking an already entangled state $\rho$ and using $\gamma_{\sepc^{\star}}(\cE) \geq \frac{\gamma_{\sep}(\cE(\rho))}{\gamma_{\sep}(\rho)}$, see~\cite[Corollary 1]{campbell2010_optimal}.


\subsubsection{\texorpdfstring{$\pptc$}{PPTC} lower bound}\label{sec:pptc_lower_bound}
One useful strategy to find lower bounds of $\gamma_{\lo^{\star}}$ and $\gamma_{\locc^{\star}}$ is to relax the decomposition set to $\pptc^{\star}$.
In full analogy to the SDP of $\gamma_{\ppt}$ in~\Cref{eq:gamma_ppt_sdp}, we can express $\gamma_{\pptc^{\star}}(\cE)$ for a channel $\cE\in\cptp(AB\rightarrow A'B')$ as an SDP
\begin{equation}\label{eq:gamma_pptc_sdp}
  \gamma_{\pptc^{\star}}(\cE) = \gamma_{\pptc}(\cE) = \left \lbrace
  \begin{array}{r l}
    \min\limits_{\Lambda^{\pm}\in\pos(ABA'B')} & \tr[\Lambda^+] + \tr[\Lambda^-] \\
    \textnormal{s.t.} & \choi{\cE} =\Lambda^+ - \Lambda^-, \\
    & \tr_{A'B'}[\Lambda^{\pm}] = \tr[\Lambda^{\pm}] \frac{1}{\dimension(AB)}\id_{AB}, \\
    & (\idchan_{AA'}\otimes\transpose_{BB'})(\Lambda^{\pm}) \loewnergeq 0
  \end{array} \right . \, .
\end{equation}
where we used~\Cref{prop:negtrick_sepc_pptc} and~\Cref{lem:two_element_qpd}.

\begin{lemma}{}{gamma_pptc_sdp_dual}
  The following is a valid dual SDP of~\Cref{eq:gamma_pptc_sdp}
  \begin{equation}
    \begin{array}{r l}
      \max\limits_{\substack{X\in\herm(ABA'B')\\ Y^{\pm}\in\herm(AB)\\ Z^{\pm}\in\pos(ABA'B')}} & \tr[X\choi{\cE}] \\
      \textnormal{s.t.} 
      & \id_{A'B'}\otimes Y^- + (\idchan_{AA'}\otimes\transpose_{BB'})(Z^-) - \id_{AA'BB'}(1+\frac{\tr[Y^-]}{\dimension(AB)}) \loewnerleq X, \\
      & X \loewnerleq - \id_{A'B'}\otimes Y^+ - (\idchan_{AA'}\otimes\transpose_{BB'})(Z^+) + \id_{AA'BB'}(1+\frac{\tr[Y^+]}{\dimension(AB)})
    \end{array} 
  \end{equation}
\end{lemma}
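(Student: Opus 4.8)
The plan is to follow the same Lagrangian recipe already used for \Cref{lem:gamma_cptp_sdp_dual,lem:gamma_ppt_sdp_dual}. Write $d\coloneqq\dimension(AB)$ and denote the unnormalized Hilbert--Schmidt inner product by $\langle A,B\rangle\coloneqq\tr[A^{\dagger}B]$. I attach to the primal program in \Cref{eq:gamma_pptc_sdp} the multipliers $X\in\herm(ABA'B')$ for the equality $\choi{\cE}=\Lambda^{+}-\Lambda^{-}$, Hermitian multipliers $Y^{\pm}\in\herm(AB)$ for the two equalities $\tr_{A'B'}[\Lambda^{\pm}]=\tr[\Lambda^{\pm}]\,\id_{AB}/d$, and positive multipliers $Z^{\pm}\in\pos(ABA'B')$ for the two constraints $(\idchan_{AA'}\otimes\transpose_{BB'})(\Lambda^{\pm})\loewnergeq 0$. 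The Lagrangian then reads
\begin{align}
  L &= \tr[\Lambda^{+}]+\tr[\Lambda^{-}] - \langle X,\choi{\cE}-\Lambda^{+}+\Lambda^{-}\rangle \nonumber \\
  &\quad - \sum_{s\in\{+,-\}}\Big( \langle Y^{s},\tr_{A'B'}[\Lambda^{s}]-\tfrac{\tr[\Lambda^{s}]}{d}\id_{AB}\rangle + \langle Z^{s},(\idchan_{AA'}\otimes\transpose_{BB'})(\Lambda^{s})\rangle \Big).
\end{align}

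Next I rewrite the coupling terms with the three adjoint identities $\langle Y,\tr_{A'B'}[\Lambda]\rangle=\langle \id_{A'B'}\otimes Y,\Lambda\rangle$, the self-adjointness of the partial transpose $\langle Z,(\idchan_{AA'}\otimes\transpose_{BB'})(\Lambda)\rangle=\langle (\idchan_{AA'}\otimes\transpose_{BB'})(Z),\Lambda\rangle$, and $\tr[\Lambda]\tr[Y]/d=\langle (\tr[Y]/d)\,\id_{AA'BB'},\Lambda\rangle$. Collecting the coefficients of $\Lambda^{+}$ and $\Lambda^{-}$, the Lagrangian takes the form $L=-\langle X,\choi{\cE}\rangle+\langle\Lambda^{+},C^{+}\rangle+\langle\Lambda^{-},C^{-}\rangle$ with
\begin{align}
  C^{+} &= \big(1+\tfrac{\tr[Y^{+}]}{d}\big)\id_{AA'BB'} + X - \id_{A'B'}\otimes Y^{+} - (\idchan_{AA'}\otimes\transpose_{BB'})(Z^{+}), \\
  C^{-} &= \big(1+\tfrac{\tr[Y^{-}]}{d}\big)\id_{AA'BB'} - X - \id_{A'B'}\otimes Y^{-} - (\idchan_{AA'}\otimes\transpose_{BB'})(Z^{-}).
\end{align}
Minimizing over $\Lambda^{\pm}\loewnergeq 0$ yields the dual function $g=-\langle X,\choi{\cE}\rangle$ precisely when $C^{+}\loewnergeq 0$ and $C^{-}\loewnergeq 0$, and $g=-\infty$ otherwise; rearranging these two conditions gives a lower bound on $X$ in terms of $Y^{+},Z^{+}$ and an upper bound in terms of $Y^{-},Z^{-}$.

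The final step is the cosmetic substitution $X\mapsto-X$, exactly as in the proofs of \Cref{lem:gamma_cptp_sdp_dual,lem:gamma_ppt_sdp_dual}: it turns the objective $-\langle X,\choi{\cE}\rangle$ into $\tr[X\choi{\cE}]$ and swaps the two inequalities, so that the condition coming from $C^{-}$ becomes the stated lower bound $\id_{A'B'}\otimes Y^{-}+(\idchan_{AA'}\otimes\transpose_{BB'})(Z^{-})-(1+\tr[Y^{-}]/d)\id_{AA'BB'}\loewnerleq X$ and the one from $C^{+}$ becomes the stated upper bound with $Y^{+},Z^{+}$, reproducing the claimed SDP. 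Weak duality is automatic from this construction, which already makes it a \emph{valid} dual; strong duality (equality of the optimal values) follows from Slater's condition, since the primal is strictly feasible --- take $\Lambda^{-}=t\,\id_{AA'BB'}$ and $\Lambda^{+}=\choi{\cE}+t\,\id_{AA'BB'}$, which satisfies all equality constraints (using $\tr_{A'B'}[\choi{\cE}]=\id_{AB}/d$ and $\tr[\choi{\cE}]=1$) and renders both partial transposes strictly positive for $t$ large. I do not expect a real obstacle here; the only care needed is bookkeeping --- tracking the placement of the tensor factors in $\id_{A'B'}\otimes Y^{\pm}$, handling the two scalar terms $\tr[Y^{\pm}]/d$, and verifying that the sign flip genuinely interchanges $+\leftrightarrow-$ in the two bounds as written.
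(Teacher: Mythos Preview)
Your proof is correct and follows essentially the same Lagrangian derivation as the paper, with the same multipliers, adjoint identities, and final substitution $X\mapsto -X$. The only addition is your explicit Slater argument for strong duality, which the paper's proof omits; your strictly feasible point $\Lambda^{-}=t\,\id$, $\Lambda^{+}=\choi{\cE}+t\,\id$ is valid and a nice touch.
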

The proof is very similar to the one of~\Cref{lem:gamma_ppt_sdp_dual}.
\begin{proof}
  In the following, we will denote the unnormalized Hilbert-Schmidt inner product by $\langle A,B\rangle\coloneqq\tr[A^{\dagger}B]$.
  The Lagrangian is given by
  {\small
  \begin{align}
    L(\Lambda^{\pm};X,Y^{\pm},Z^{\pm}) &= \tr[\Lambda^+] + \tr[\Lambda^-] - \langle X, \choi{\cE} -\Lambda^++\Lambda^-\rangle \nonumber\\
    &- \langle Y^+,\tr_{A'B'}[\Lambda^+]-\frac{\tr[\Lambda^+]}{\dimension(AB)}\id_{AB}\rangle \nonumber\\
    &- \langle Y^-,\tr_{A'B'}[\Lambda^-]-\frac{\tr[\Lambda^-]}{\dimension(AB)}\id_{AB}\rangle \nonumber\\
    &- \langle Z^+,(\idchan_{AA'}\otimes\transpose_{BB'})(\Lambda^+)\rangle - \langle Z^-,(\idchan_{AA'}\otimes\transpose_{BB'})(\Lambda^-)\rangle \\
    &= -\langle X,\choi{\cE}\rangle \nonumber\\ 
    &+ \langle\Lambda^+,\id_{ABA'B'}(1+\frac{\tr[Y^+]}{\dimension(AB)}) + X - \id_{A'B'}\otimes Y^+ - (\idchan_{AA'}\otimes\transpose_{BB'})(Z^+)\rangle \nonumber\\
    &+ \langle\Lambda^-,\id_{ABA'B'}(1+\frac{\tr[Y^-]}{\dimension(AB)}) - X - \id_{A'B'}\otimes Y^- - (\idchan_{AA'}\otimes\transpose_{BB'})(Z^-)\rangle \, .
  \end{align}
  }
  where we used the relation $\langle R_{A},\tr_B[S_{AB}]\rangle = \langle R_{A}\otimes\mathds{1}_B,S_{AB}\rangle$.
  The constraints of the dual SDP thus read
  \begin{equation}
    \id_{ABA'B'}(1+\frac{\tr[Y^{\pm}]}{\dimension(AB)}) \pm X - \id_{A'B'}\otimes Y^{\pm} - (\idchan_{AA'}\otimes\transpose_{BB'})(Z^{\pm}) \loewnergeq 0
  \end{equation}
  and the desired form of the dual SDP is obtained with the substitution $X\rightarrow -X$.
\end{proof}

For a unitary channel $\cU\in\cptp(AB)$, the $\pptc$ lower bound is provably at least as good as the state conversion bound discussed in~\Cref{sec:sepc_conversion_bound}
\begin{align}
  \gamma_{\pptc}(\cU)
  &\geq \max\limits_{\ket{\psi}\text{ product state}} \gamma_{\ppt}((\cU\otimes\idchan_{\bar{A}\bar{B}})(\proj{\psi})) \\
  &= \max\limits_{\ket{\psi}\text{ product state}} \gamma_{\sep}((\cU\otimes\idchan_{\bar{A}\bar{B}})(\proj{\psi}))
\end{align}
where we used that $\gamma_{\sep}$ and $\gamma_{\ppt}$ coincide on pure states (recall~\Cref{lem:optimal_qpd_pure_ppt}).
In the context of~\Cref{ex:toffoli}, this implies that $\gamma_{\pptc}(\toffoli)=3$.
This can readily be verified using a numerical solver.

\begin{example}
  We consider the noisy CNOT channel
  \begin{equation}
    \cE_p(\rho) \coloneqq (1-p)\cnot(\rho) + p\mathcal{D}(\rho)
  \end{equation}
  where $\cD$ is the fully depolarizing two-qubit channel.
  In~\Cref{fig:noisy_cnot}, we depict the numerical evaluation of $\gamma_{\pptc^{\star}}(\cE_p)$ as well as $\gamma_{\ds}(\cE_p)$ where $\ds$ is a heuristically chosen finite decomposition set containing
  \begin{itemize}
    \item the elements $\cF\in\ds_{\cnot}$ where $\ds_{\cnot}$ is the decomposition set achieving the optimal $\lo^{\star}$ QPD of the CNOT gate discussed in~\Cref{ex:cnot_lostar_qpd},
    \item the two-qubit Pauli channels,
    \item the superoperators of the form $\cG\circ\cF$ where $\cF\in\ds_{\cnot}$ and $\cG$ is a two-qubit Pauli channel.
  \end{itemize}
  Since $\ds\subset\lo^{\star}$, we know that $\gamma_{\lo^{\star}}(\cE_p)$ and $\gamma_{\locc^{\star}}(\cE_p)$ must lie in the shaded region.
  We observe that $\gamma_{\ds}$ does not perform better than simply taking the trivial bounds implied by the convexity of the quasiprobability extent.
  Interestingly, $\gamma_{\lo^{\star}}(\cE_p)$ and $\gamma_{\locc^{\star}}(\cE_p)$ are strictly larger than the corresponding extent of the noisy Bell state $\gamma_{\sep}(\rho_p)$ that we considered in~\Cref{ex:noisy_bell_state}.
\end{example}
\begin{figure}
  \centering
  \includegraphics{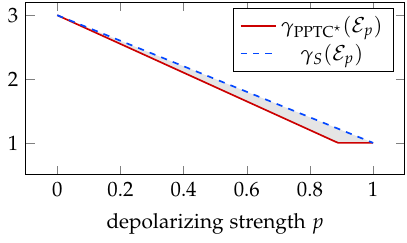}
  \caption{Numerical evaluation of $\gamma_{\pptc^{\star}}(\cE_p)$ and $\gamma_{\ds}(\cE_p)$ for the depolarized CNOT gate $\cE_p$. Here, $\ds$ is a heuristically chosen decomposition set that is contained in $\lo^{\star}$.}
  \label{fig:noisy_cnot}
\end{figure}

\subsubsection{SDP hierarchy for \texorpdfstring{$\gamma_{\sepc}$}{gamma SEPC}}\label{sec:sdp_hierarchy_channels}
In the setting of states, we have previously seen in~\Cref{sec:gamma_sep_singleshot} that $\gamma_{\sep}$ can be expressed in terms of an efficiently converging SDP hierarchy.
Unfortunately, this approach cannot be directly applied to $\gamma_{\locc^{\star}}$.
Instead, we have to content ourselves with using the DPS hierarchy to evaluate the quasiprobability extent w.r.t. to the set of channels with separable Choi operators $\sepc$.
This in turn at least provides a lower bound $\gamma_{\lo^{\star}}\geq \gamma_{\locc^{\star}}\geq \gamma_{\sepc^{\star}}=\gamma_{\sepc}$ thanks to~\Cref{prop:negtrick_sepc_pptc}.
The techniques will mirror closely the approach we previously introduced in~\Cref{sec:gamma_sep_singleshot}.
As such, we will not delve into as many details and rather focus on the differences and difficulties that specifically occur in this setting.

We define an outer approximation of the set $\sepc(A;B\rightarrow A';B')$ as
\begin{equation}
   \sepc_k(A;B\rightarrow A';B') \coloneqq \{\cE\in\cptp(AB) | \choi{\cE}\in \sep_k(AA';BB') \} \, .
\end{equation}
As for $\sep_k$, we can optimize over elements of $\sepc_k$ by considering symmetric extensions in the larger space $AB_1\dots B_kA'B_1'\dots B_k'$ where the $B_i$ and $B_j'$ are copies of $B$ or $B'$ respectively.
Following the same steps in the derivation of $\gamma_{\sep_k}$, we obtain
\begin{equation}
  \gamma_{\sepc_k}(\cE) = \left \lbrace
  \begin{array}{r l}
    \min\limits_{\Lambda^{\pm}\in\pos(AB_1\dots B_kA'B_1'\dots B_k')} & \tr[\Lambda^+] + \tr[\Lambda^-] \\
    \textnormal{s.t.} & \choi{\cE} = \tr_{B_2B_2'\dots B_kB_k'}[\Lambda^+] - \tr_{B_2B_2'\dots B_kB_k'}[\Lambda^-] , \\
    & \tr_{A'B_2\dots B_kB_1'\dots B_k'}[\Lambda^{\pm}] = \frac{1}{\dimension(AB)}\id_{AB_1} , \\
    & \swap_{B_iB_i',B_jB_j'} \Lambda^{\pm} \swap_{B_iB_i',B_jB_j'} = \Lambda^{\pm} , \\
    & \left(\idchan_{AA'}\otimes \transpose_{B_1B_1'} \otimes \dots \otimes \idchan_{B_kB_k'}\right)(\Lambda^{\pm}) \loewnergeq 0 , \\
    & \quad\quad \vdots \\
    & \left(\idchan_{AA'}\otimes \transpose_{B_1B_1'} \otimes \dots \otimes \transpose_{B_kB_k'}\right)(\Lambda^{\pm}) \loewnergeq 0
  \end{array} \right . \, .
\end{equation}
The first level of the hierarchy precisely corresponds to $\sepc_1(A;B\rightarrow A';B')=\pptc(A;B\rightarrow A';B')$.
The completeness of the DPS hierarchy (see~\Cref{prop:dps_completeness}) implies that this outer approximation will also converge, i.e.,
\begin{equation}
  \sepc(A;B\rightarrow A';B') = \bigcap\limits_{k=1}^{\infty} \sepc_k(A;B\rightarrow A';B') \, .
\end{equation}
Unfortunately, the statement about convergence speed in~\Cref{prop:dps_convergence} cannot be directly translated to this setting.
Given the optimal QPD $\cE=a^+\cE^+ - a^-\cE^-$, the de Finetti theorem only guarantees that $\choi{\cE^+}$ and $\choi{\cE^-}$ are close to separable operators, but not necessarily to separable operators that fulfill the partial trace constraint of a trace-preserving channel.
There exist modifications of the de Finetti theorem that allow for imposing linear constraints on the approximating state, such as the result by Berta \etal~in \reference~\cite[Theorem 2.3]{berta2021_semidefinite}.
However, this modified de Finetti theorem can only incorporate linear constraints that separately act on $AA'$ and $BB'$, and is therefore not applicable to our case.
A slightly different modification would be required, which to our knowledge has not been established in previous literature.


\subsubsection{Overview of one-shot bounds}\label{sec:overview_channel_bounds}
Here, we come back to the chain of inequalities we previously introduced in~\Cref{eq:channel_gamma_simple_bounds} and try to complete that picture with a revised understanding of which of the inequalities are strict and which ones are not.
We summarize our results in~\Cref{fig:bounds_overview}, which also includes the Choi lower bounds.
The relations $\gneqq$ and $\lneqq$ indicate that one quantity is generally at least/at most as large as the other and that there exists a specific counter example of a CPTP map $\cE$ for which the two quantities do not coincide.
\begin{figure}
  \centering
  \includegraphics{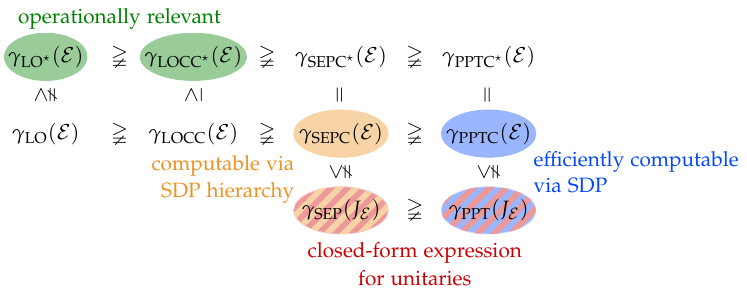}
  \caption{Relations between various quasiprobability extents for some CPTP map $\cE$. The relations $\gneqq$ and $\lneqq$ indicate that one quantity is generally at least/at most as large as the other, and there exists a specific counter example of a quantum channel $\cE$ for which the two quantities do not coincide.}
  \label{fig:bounds_overview}
\end{figure}
The following relations are directly implied by previously shown results.
\begin{itemize}
  \item $\gamma_{\lo}\gneqq \gamma_{\lo^{\star}}$ and $\gamma_{\lo}\gneqq \gamma_{\locc}$ due to~\Cref{lem:span_lo,lem:span_locc}.
  \item $\gamma_{\sepc^{\star}}=\gamma_{\sepc}$ and $\gamma_{\pptc^{\star}}=\gamma_{\pptc}$ due to~\Cref{prop:negtrick_sepc_pptc}.
  \item $\gamma_{\sepc}(\cE)\gneqq\gamma_{\sep}(\choi{\cE})$ and $\gamma_{\pptc}(\cE)\gneqq\gamma_{\ppt}(\choi{\cE})$ by the example of the Toffoli gate.
  Recall from~\Cref{ex:toffoli} that $\gamma_{\sep}(\choi{\toffoli})=\gamma_{\ppt}(\choi{\toffoli})\approx 2.732$.
  At the same time, we also saw that $\gamma_{\pptc}(\toffoli)=\gamma_{\sepc}(\toffoli)=3$.
\end{itemize}
For two decomposition sets $\ds\subsetneqq\ds'\subset\cptp$ such that $\ds$ is convex and compact, we have $1=\gamma_{\ds'}(\cF) < \gamma_{\ds}(\cF)$ for any $\cF\in\ds'\setminus\ds$ due to~\Cref{lem:gamma_faithful2}.
This implies following insights.
\begin{itemize}
  \item $\gamma_{\sepc}\gneqq\gamma_{\pptc}$ because $\sepc$ is convex and compact and $\pptc$ is strictly larger than $\sepc$, due to the existence of PPT entangled states (recall that we view states as a special case of channels).
  \item For the same reason, $\gamma_{\sep}(\choi{\cE})\gneqq\gamma_{\ppt}(\choi{\cE})$.
  \item To see that $\gamma_{\locc}\gneqq\gamma_{\sepc}$, notice that $\locc\subset\overline{\locc}\subset\sepc$ where $\overline{\locc}$ is the closure of $\locc$ and we used that $\sepc$ is closed.
  Since there exist channels in $\sepc$ which do not lie in $\overline{\locc}$ (this has been demonstrated in the context of state discrimination~\cite{bennett1999_quantum,koashi2007_quantum,childs2013_framework} and random distillation~\cite{cui2011_randomly,chitambar2012_increasing}), one thus finds that
  \begin{equation}
    \gamma_{\locc} \geq \gamma_{\overline{\locc}} \gneqq \gamma_{\sepc} \, .
  \end{equation}
\end{itemize}
The inequality $\gamma_{\lo^{\star}}\gneqq \gamma_{\locc^{\star}}$ is a bit more involved and operationally very important, since it implies that classical communication is advantageous for the simulation of certain channels.
\begin{lemma}{}{}
  There exist quantum channels $\cE\in\cptp(AB\rightarrow A'B')$ such that $\gamma_{\lo^{\star}}(\cE) > \gamma_{\locc^{\star}}(\cE)=1$.
\end{lemma}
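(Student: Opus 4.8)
The plan is to exhibit an explicit channel $\cE$ that is itself an element of $\locc^{\star}$ (so that $\gamma_{\locc^{\star}}(\cE)=1$ by \Cref{lem:gamma_faithful2}) but which does \emph{not} lie in $\lo^{\star}$ (so that $\gamma_{\lo^{\star}}(\cE)>1$, again by \Cref{lem:gamma_faithful2}, using that $\lo^{\star}$ is compact and absolutely convex — or at least that its closure is, which suffices for the strict inequality). The natural candidate is a bipartite channel that genuinely signals from $A$ to $B$, for instance the classical identity channel on one bit that reads a bit out of $A$ and writes it into $B$: concretely $\cE(\rho_{AB}) = \sum_{i=0,1}\braopket{i}{\rho_A}{i}\,\proj{i}_{B}\otimes(\text{trace out the rest})$, or even more simply a channel that measures a qubit on Alice's side in the computational basis, sends the classical outcome to Bob, and has Bob prepare $\proj{i}$. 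Such a channel is manifestly in $\locc$ (it is a one-round LOCC protocol: Alice measures, sends the classical bit, Bob prepares), hence in $\locc^{\star}$, hence $\gamma_{\locc^{\star}}(\cE)=\gamma_{\locc}(\cE)=1$.

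First I would verify the LOCC membership carefully and write down the protocol as a product of instruments, appealing to the definition of $\instr{\locc}$ recalled in \Cref{sec:spacelike_ds}. Then the crux: I must show $\cE\notin\lo^{\star}$, equivalently (since $\cE$ is CPTP) that $\gamma_{\lo^{\star}}(\cE)>1$. By \Cref{lem:span_lo}, $\spn_{\mathbb{R}}(\lo^{\star}) = \hp(AB\to A'B')$, so $\cE$ \emph{does} have a finite extent; the point is that it cannot be $1$. The cleanest route is to recall that every element of $\lo^{\star}(A;B\to A';B')$ is a real linear combination $\sum_k\beta_k(\cE_k\otimes\cF_k)$ of product instrument components with $\abs{\beta_k}\le 1$ — in particular $\spn_{\mathbb{R}}(\lo^{\star}) \subseteq \mathbb{R}\hptp(A\to A')\otimes\mathbb{R}\hptp(B\to B') = \nonsig(A;B\to A';B')$ only if we restrict to trace-preserving combinations; but more to the point, any \emph{channel} obtained as an absolutely convex combination of $\lo$ instruments, with the overall trace-preserving constraint, must be non-signalling from $B$ to $A$ and from $A$ to $B$. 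I would make this precise: if $\cE = \sum_k\beta_k(\cE_k\otimes\cF_k)$ with $(\cE_k)_k,(\cF_k)_k$ product instrument families and the whole thing is trace-preserving, then tracing out $B'$ gives $\tr_{B'}\circ\cE = \sum_k\beta_k\,\cE_k\otimes(\tr_{B'}\circ\cF_k)$, and because $(\cF_k)_k$ sums to a trace-preserving map on $B$, the $B$-marginal of the output depends on $\rho_{AB}$ only through $\rho_A$ — i.e. $\cE$ is non-signalling from $A$ to $B$. But my chosen $\cE$ \emph{does} signal from $A$ to $B$ (Bob's output state $\proj{0}$ vs $\proj{1}$ depends on Alice's input), contradiction. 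Hence $\cE\notin\lo^{\star}$, and since $\lo^{\star}$ is closed (being $\qids$ of a closed instrument set; or argue via its absolute convex hull and \Cref{lem:compact_ds}), $\gamma_{\lo^{\star}}(\cE)>1$.

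The main obstacle I anticipate is the non-signalling argument for $\lo^{\star}$ rather than for $\lo$: because $\instr{\lo}$ is neither coarse-grainable nor trivially fine-grainable (as flagged in \Cref{sec:spacelike_ds}), I cannot invoke \Cref{lem:characterization_expanded_decomposition_set} to reduce to two-element instruments, so I must work directly with a general element $\sum_k\beta_k(\cE_k\otimes\cF_k)$ where the $(\cE_k\otimes\cF_k)_k$ is a single \emph{joint} instrument family of product form. I would need to be careful that the product structure of each component, combined with the requirement that $\sum_k\cE_k\otimes\cF_k$ is trace-preserving, forces the $A\to B$ marginal to be input-independent on the $A$-register — this is essentially the statement $\spn_{\mathbb{R}}(\lo)=\nonsig$ from \Cref{eq:span_lo_eq2}, applied to the marginal map $\tr_{B'}\circ\cE$, which lies in $\spn_{\mathbb{R}}(\lo(A;B\to A';\mathbb{C}))$ and is trace-preserving, hence non-signalling. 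Once that reduction is in place the contradiction with signalling is immediate. A fallback, if the direct argument is delicate, is simply to quote \Cref{lem:span_lo}: $\cE$ is a signalling channel, so it is not in $\nonsig = \spn_{\mathbb{R}}(\lo) \supseteq \{$trace-preserving elements of $\lo^{\star}\}$; but a trace-preserving element of $\lo^{\star}$ that is also a channel must lie in $\lo$'s span in the trace-preserving sense — making this last clause airtight is precisely the subtlety to nail down, but it follows the same marginal-tracing computation.
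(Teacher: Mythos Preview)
Your overall strategy --- pick a signalling one-round LOCC channel $\cE$ so that $\gamma_{\locc^{\star}}(\cE)=1$, then argue $\gamma_{\lo^{\star}}(\cE)>1$ --- is exactly what the paper does (the paper phrases it abstractly as taking any $\cE\in\locc\setminus\conv(\lo)$). But the argument you propose for the second half has a real gap.

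Your non-signalling reduction does not go through. You write $\tr_{B'}\circ\cE=\sum_{i,j}\beta_{i,j}\,\cE_i\otimes(\tr\circ\cF_j)$ and then claim this lies in $\spn_{\mathbb{R}}(\lo(A;B\to A';\mathbb{C}))=\nonsig$. But $\lo(A;B\to A';\mathbb{C})=\{\cG\otimes\tr_B:\cG\in\cptp(A\to A')\}$, so its real span consists only of maps of the form $r\,\cG\otimes\tr_B$. The individual summands $\cE_i\otimes(\tr\circ\cF_j)$ are products of CP trace-\emph{non}-increasing maps; in general $\tr\circ\cF_j$ is the functional $\sigma\mapsto\tr[\cF_j^\dagger(\id)\,\sigma]$ with $\cF_j^\dagger(\id)\neq\id$, so these terms are not in $\lo$, and the $\beta_{i,j}$ weights prevent you from recombining them into something of the form $(\cdot)\otimes\tr_B$. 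The sentence ``because $(\cF_k)_k$ sums to a trace-preserving map on $B$'' is precisely where the argument breaks: that sum collapses only if all $\beta_{i,j}$ are equal, which you have not established.

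The paper's fix is a positivity argument you are missing: from $\beta_{i,j}\le 1$ and $\choi{\cE_i\otimes\cF_j}\loewnergeq 0$ one gets $\choi{\cE}\loewnerleq\sum_{i,j}\choi{\cE_i\otimes\cF_j}=\choi{(\sum_i\cE_i)\otimes(\sum_j\cF_j)}$, the Choi of a product channel. Combined with the TP constraint this forces $\cE$ (or, for a general QPD with $\|a\|_1\le 1+\epsilon$, an $\epsilon$-approximation of $\cE$) to lie in $\conv(\lo)$. The paper then uses compactness of $\conv(\lo)$ --- not of $\lo^{\star}$ --- to close the argument. Your compactness handwave (``since $\lo^{\star}$ is closed'') is also unsubstantiated: $\instr{\lo}$ is not closed under mixture, so none of \Cref{lem:qrt_convexity}, \Cref{lem:one_element_qpd}, \Cref{lem:gamma_faithful2} apply, and you still have to deal with the infimum in $\gamma_{\lo^{\star}}$ possibly not being attained. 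The paper's $\epsilon$-approximation argument handles exactly this.
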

We briefly note that we will later even encounter an explicit quantum channel $\cE$ for which $1<\gamma_{\locc^{\star}}(\cE)<\gamma_{\lo^{\star}}(\cE)$ in~\Cref{sec:id_wire_cut}.
\begin{proof}
  Let us consider some $\cE$ that lies in $\locc(A;B\rightarrow A';B')$ but not in $\conv(\lo(A;B\rightarrow A';B'))$.
  As, such, we clearly have $\gamma_{\locc^{\star}}(\cE) = 1$.
  Assume that it would also hold that $\gamma_{\lo^{\star}}(\cE) = 1$.
  This means that for any $\epsilon > 0$ there exists a QPD $\cE=\sum_ia_i\cF_i$ with $\cF_i\in\lo^{\star}$ such that $\norm{a}_1\leq 1+\epsilon$.
  For every $\cF_i$, let $(\cG^{(i)}_j)_j\in\instr{\lo}(A;B\rightarrow A';B')$ be the associated instrument s.t. $\cF_i=\sum_j\beta_j\cG^{(i)}_j$.
  Define the superoperator $\tilde{\cE}\coloneqq \sum_{i,j}\abs{a_i}\cG_{j}^{(i)}$.
  We clearly have
  \begin{equation}
    \choi{\cE} = \sum_{i,j}a_i\beta_j^{(i)}\choi{\cG_j^{(i)}} \loewnerleq \sum_{i,j}\abs{a_i}\choi{\cG_j^{(i)}} = \choi{\tilde{\cE}}
  \end{equation}
  and hence
  \begin{align}
    \norm{\choi{\cE}-\choi{\tilde{\cE}}}_1 
    &= \tr\left[\choi{\tilde{\cE}}-\choi{\cE}\right] \\
    &= \sum_{i,j}\abs{a_i}\tr[\choi{\cG_{j}^{(i)}}] - 1 \\
    &= \norm{a}_1 - 1 \\
    &\leq \epsilon \, .
  \end{align}
  The channel $\frac{1}{\norm{a}_1}\tilde{\cE}$ lies in $\conv(\lo)$ and approximates $\cE$ up to error
  \begin{equation}
    \norm{\choi{\cE}-\frac{1}{\norm{a}_1}\choi{\tilde{\cE}}}_1
    \leq \norm{\choi{\cE} - \choi{\tilde{\cE}}}_1 + \norm{\choi{\tilde{\cE}} - \frac{1}{\norm{a}_1}\choi{\tilde{\cE}}}_1
    \leq \epsilon + \frac{\epsilon}{1+\epsilon} \leq 2\epsilon \, .
  \end{equation}
  In summary, we have shown that any such $\cE$ can be approximated by an element of $\conv(\lo)$ to arbitrary small error.
  But since $\conv(\lo)$ is compact, this would also imply that $\cE\in\conv(\lo)$, which is a contradiction.
\end{proof}
The argument in the proof can be recycled to show a separation between $\gamma_{\locc^{\star}}$ and $\gamma_{\sepc^{\star}}$ since there exist separable channels which can't be arbitrary well approximated by $\locc$ channels (recall the separation between $\overline{\locc}$ and $\sepc$).
More precisely, for any $\cE\in\sepc(AB\rightarrow A'B')\setminus\overline{\locc}(AB\rightarrow A'B')$ we have
\begin{equation}
  \gamma_{\locc^{\star}}(\cE) > \gamma_{\sepc^{\star}}(\cE) = 1 \, .
\end{equation}
It remains unclear, whether there exist quantum channels $\cE$ for which $\gamma_{\locc}(\cE)\neq \gamma_{\locc^{\star}}(\cE)$.
This question amounts to whether classical side information is useful for circuit knitting with classical communication.

\subsubsection{Asymptotic characterization with entanglement cost}\label{sec:asymptotic_channel_extent}
The main achievement of~\Cref{sec:gamma_sep_asymptotic} was to establish a connection between the quasiprobability extent of states and various entanglement cost measures.
In this section, we will replicate this result as much as possible in the setting of channels.
Unfortunately, its unknown if this can be achieved for the decomposition sets $\lo^{\star}$ and $\locc^{\star}$.
As such, we will restrict ourselves to studying $\gammareg_{\sepc}$, $\gammasreg_{\sepc}$, $\gammareg_{\pptc}$ and $\gammasreg_{\pptc}$, which at least provide interesting lower bounds.

Since the results in this section are completely analogous to the previously discussed case for states, we will introduce and summarize them in more brevity.
Furthermore, we note that the dynamic resource theory of entanglement is much less developed for channels compared to states, so not all aspects are equally well understood.

We require some notion of the entanglement cost of a quantum channel, i.e., a characterization of how many Bell pairs are required to realize it.
This is achieved by adapting~\Cref{def:cost_and_distillable} by replacing states with channels, channels with superchannels and the trace distance with the diamond  distance~\cite{gour2020_dynamical}.
\begin{definition}{}{}
  Let $\epsilon \geq 0$ and $\cE\in\cptp(AB\rightarrow A'B')$.
  The \emph{single-shot entanglement cost} of $\cE$ w.r.t. some set of superchannels $\ds$ is defined as
  \begin{equation}
    E_{C,\ds}^{(1),\epsilon}(\cE)\coloneqq \min_{m\in\mathbb{N}} \{\log m | \exists \cF\in\ds : \frac{1}{2}\dnorm{\cE-\cF(\proj{\Psi_m})} \leq \epsilon \}
  \end{equation}
  where $\proj{\Psi_m}$ denotes the maximally entangled state of Schmidt rank $m$ and should be understood as a quantum channel with trivial input space.
\end{definition}
The asymptotic entanglement cost and exact asymptotic entanglement cost are then given by
\begin{equation}
  E_{C,\ds}(\cE) \coloneqq \lim\limits_{\epsilon\rightarrow 0^+} \liminf\limits_{n\rightarrow\infty}\frac{1}{n}E_{C,\ds}^{(1),\epsilon}(\cE^{\otimes n})
  \quad\text{and}\quad
  E_{C,\ds}^{\mathrm{exact}}(\cE) \coloneqq \lim\limits_{n\rightarrow\infty}\frac{1}{n}E_{C,\ds}^{(1),0}(\cE^{\otimes n}) \, .
\end{equation}
We will study the entanglement cost under following two sets of superchannels.
The \emph{separability-preserving superchannels} $\seppsc(AB,A'B'\rightarrow CD,C'D')$ contain all superchannels $\Theta\in\schan(AB,A'B'\rightarrow CD,C'D')$ such that
\begin{equation}
  \forall \cE\in \sepc(A;B\rightarrow A';B') : \Theta(\cE)\in \sepc(C;D\rightarrow C';D') \, .
\end{equation}
Furthermore, the set of \emph{PPT superchannels} $\pptsc(AB,A'B'\rightarrow CD,C'D')$ contains all $\Theta\in\schan(AB,A'B'\rightarrow CD,C'D')$ which have a Choi matrix $\choi{\Theta}$ with positive partial transpose $(\idchan_{A,A',C,C'}\otimes\transpose_{B,B',D,D'})(\choi{\Theta})\loewnergeq 0$.
It can be shown that $\pptsc$ precisely constitutes the set of superchannels which completely preserve PPTC~\cite[Theorem 5]{gour2021_entanglement}.

The one-shot entanglement cost of $\cE\in\cptp(AB\rightarrow A'B')$ w.r.t. $\seppsc$ can be characterized by the smooth robustness
\begin{equation}
  R_{\sepc}^{\epsilon}(\cE) \coloneqq \min\limits_{\cF\in B_{\epsilon}(\cE)}R_{\sepc}(\cF)
\end{equation}
where $\epsilon\geq 0$ and $B_{\epsilon}(\cE) \coloneqq \{\cF\in\cptp(AB\rightarrow A'B') | \frac{1}{2}\dnorm{\cE-\cF} \leq \epsilon\}$.
\begin{lemma}{\cite[Theorem III.1]{kim2021_oneshot}}{oneshot_seppsc_cost}
  Let $\cE\in\cptp(AB\rightarrow A'B')$ and $\epsilon\geq 0$.
  Then
  \begin{equation}
    \log(1+R_{\sepc}^{\epsilon}(\cE)) \leq E_{C,\seppsc}^{(1),\epsilon}(\cE) \leq \log(1+R_{\sepc}^{\epsilon}(\cE)) + 2 \, .
  \end{equation}
\end{lemma}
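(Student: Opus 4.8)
The plan is to mirror, at the dynamical level, the state-level proof of \Cref{lem:oneshot_sepp_cost}, replacing density operators with channels, trace distance with diamond distance, and convex mixtures of states with convex mixtures of channels. Concretely, for the lower bound I would take an optimal separability-preserving superchannel $\Theta\in\seppsc$ realizing the one-shot cost, so that $\Theta(\proj{\Psi_M})$ is $\epsilon$-close (in $\tfrac12\dnorm{\cdot}$) to $\cE$, with $M=2^{E_{C,\seppsc}^{(1),\epsilon}(\cE)}$. Then monotonicity of the robustness under free superchannels (this is the dynamical analog of weak monotonicity; it follows since $R_{\sepc}$ is, up to the relation $\gamma_{\sepc^{\star}}=1+2R_{\sepc}$ from \Cref{lem:robustness2}, essentially the log-quasiprobability extent, whose monotonicity under $\seppsc$ is \Cref{prop:log_gamma_dynamic_monotone}) gives $R_{\sepc}^{\epsilon}(\cE)\le R_{\sepc}(\Theta(\proj{\Psi_M}))\le R_{\sepc}(\proj{\Psi_M})=M-1$, where the last equality is the Choi-state computation $\gamma_{\sep}(\proj{\Psi_M})=2M-1$ from \Cref{ex:maximally_entangled_state} (noting $\proj{\Psi_M}$ as a channel with trivial input has $\gamma_{\sepc}=\gamma_{\sep}$ by \Cref{lem:gammas_overview_nonlocal_state}). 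Taking logs yields $\log(1+R_{\sepc}^{\epsilon}(\cE))\le E_{C,\seppsc}^{(1),\epsilon}(\cE)$.

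For the upper bound I would construct an explicit separability-preserving superchannel. Let $\cF$ be the channel achieving $R_{\sepc}^{\epsilon}(\cE)$, so $\tfrac12\dnorm{\cE-\cF}\le\epsilon$, and let $\cG\in\sepc$ witness $R_{\sepc}(\cF)$, i.e. $\tfrac{\cF+R_{\sepc}(\cF)\,\cG}{1+R_{\sepc}(\cF)}\in\sepc$. Put $M\coloneqq 1+\lceil R_{\sepc}(\cF)\rceil$, so that $\kappa\coloneqq \tfrac{\cF+(M-1)\cG}{M}\in\sepc$ by convexity of $\sepc$. Now define the superchannel
\begin{equation}
  \Theta(\cX) \coloneqq \tr[\proj{\Psi_M}\choi{\cX}]\cdot\cF + \bigl(1-\tr[\proj{\Psi_M}\choi{\cX}]\bigr)\cdot\cG
\end{equation}
acting on a channel $\cX$ with the appropriate trivial-input structure, so that $\Theta(\proj{\Psi_M})=\cF$. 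The key computation is that for any separable channel $\cX\in\sepc$ the coefficient $q\coloneqq M\cdot\tr[\proj{\Psi_M}\choi{\cX}]$ lies in $[0,1]$ — this uses that over separable Choi states the overlap with $\proj{\Psi_M}$ is maximized at a pure product Choi state and bounded by $1/M$ via Cauchy--Schwarz, exactly as in the proof of \Cref{lem:oneshot_sepp_cost} — and hence $\Theta(\cX)=q\kappa+(1-q)\cG$ is a convex mixture of separable channels, so $\Theta\in\seppsc$. Therefore $E_{C,\seppsc}^{(1),\epsilon}(\cE)\le\log M=\log(1+\lceil R_{\sepc}(\cF)\rceil)\le\log(1+R_{\sepc}(\cF))+1$. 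The extra ``$+1$'' here (versus ``$+2$'' in the claim) suggests I should be careful: either the looser bound absorbs a factor coming from the diamond-norm normalization of $\proj{\Psi_M}$ as a subnormalized channel, or there is an additional rounding when passing from $R_{\sepc}^{\epsilon}(\cE)=R_{\sepc}(\cF)$; in any case $\log(1+R_{\sepc}(\cF))+1\le\log(1+R_{\sepc}^{\epsilon}(\cE))+2$, which is all that is needed.

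The main obstacle I anticipate is verifying that $\Theta$ as defined is genuinely a superchannel and genuinely maps $\sepc$ into $\sepc$ in the channel setting, rather than just a linear map on Choi operators: one must check it has the correct causal/trace structure (it does, being an affine combination of fixed channels with coefficients depending linearly and completely-positively on the input), and one must confirm that the overlap bound $\tr[\proj{\Psi_M}\choi{\cX}]\le 1/M$ genuinely holds for all of $\sepc(A;B\to A';B')$ and not merely for states — this follows because $\choi{\cX}/\tr[\choi{\cX}]\in\sep(AA';BB')$ with $\tr[\choi{\cX}]=1$ for a channel, so the state-level argument applies verbatim to the bipartition $(AA')\!:\!(BB')$ with ``local dimension'' $M$. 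A secondary subtlety is making the diamond-distance bookkeeping match the trace-distance bookkeeping: since the smoothing ball $B_\epsilon$ in the definition of $R_{\sepc}^\epsilon$ and in the definition of $E_{C,\seppsc}^{(1),\epsilon}$ both use $\tfrac12\dnorm{\cdot}$, the transfer of the $\epsilon$-tolerance is immediate, exactly as the distance-agnostic proof of \Cref{lem:oneshot_sepp_cost} already notes. Once these points are pinned down, the argument is a routine translation and I would write it in two short paragraphs.
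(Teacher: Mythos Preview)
The paper does not give its own proof of this lemma; it is stated with a citation to \cite[Theorem III.1]{kim2021_oneshot} and then used. Your proposal, which transports the state-level argument of \Cref{lem:oneshot_sepp_cost} to the dynamical setting, is the natural route and is essentially correct.

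Two small points worth tightening. First, your appeal to \Cref{prop:log_gamma_dynamic_monotone} for monotonicity is slightly off target: that proposition concerns the free superchannels $\fsc$ of a dynamical QRT, whereas $\seppsc$ is defined here simply as the superchannels mapping $\sepc$ into $\sepc$, not as the free set of any dynamical QRT introduced in the thesis. The monotonicity you need is nonetheless immediate from the definition of the robustness: if $\cE=(1+t)\cF^+-t\cF^-$ with $\cF^\pm\in\sepc$ and $\Theta\in\seppsc$, then $\Theta(\cE)=(1+t)\Theta(\cF^+)-t\Theta(\cF^-)$ is again a valid witness. Second, your observation about ``$+1$'' versus ``$+2$'' is well taken: your construction of $\Theta$ is a genuine superchannel (it can be realised with $\cF^{\mathrm{pre}}=\idchan_{AB}$ storing the channel input in the memory $E$, and $\cF^{\mathrm{post}}$ measuring the resource state with the POVM $\{\proj{\Psi_M},\id-\proj{\Psi_M}\}$ and applying $\cF$ or $\cG$ accordingly), and the overlap bound $\tr[\proj{\Psi_M}\choi{\cX}]\le 1/M$ holds because $\choi{\cX}$ is a bona fide separable state across the relevant cut. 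So your argument does yield the stronger constant $+1$; the ``$+2$'' in the cited statement is simply looser, and your final inequality $\log(1+R_{\sepc}(\cF))+1\le\log(1+R_{\sepc}^{\epsilon}(\cE))+2$ is unnecessary since $R_{\sepc}(\cF)=R_{\sepc}^{\epsilon}(\cE)$ by choice of $\cF$.
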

As a consequence, we get the statement corresponding to~\Cref{thm:gamma_sep_is_sepp_cost}.
\begin{theorem}{}{}
  For any bipartite channel $\cE\in\cptp(AB\rightarrow A'B')$, the regularized and asymptotic quasiprobability extent w.r.t. $\sepc$ can be expressed as
  \begin{equation}
    \log \gammareg_{\sepc}(\cE) = E_{C,\seppsc}^{\mathrm{exact}}(\cE)
    \quad\text{and}\quad
    \log \gammasreg_{\sepc}(\cE) = E_{C,\seppsc}(\cE) \, .
  \end{equation}
\end{theorem}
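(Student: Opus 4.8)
The plan is to mirror, almost verbatim, the argument used to establish \Cref{thm:gamma_sep_is_sepp_cost} in the state setting, with states replaced by channels, the trace norm by the diamond norm, and \Cref{lem:oneshot_sepp_cost} replaced by its dynamical counterpart \Cref{lem:oneshot_seppsc_cost}. The only structural ingredient we need beyond that lemma is the identity $\gamma_{\sepc^{\star}}(\cE) = \gamma_{\sepc}(\cE) = 1 + 2R_{\sepc}(\cE)$ for $\cE\in\cptp$, which follows from \Cref{prop:negtrick_sepc_pptc} together with \Cref{lem:robustness2} (applied to $Q=\instr{\sepc}$, which is closed under mixture as noted in \Cref{sec:spacelike_ds}), plus the sub-multiplicativity $\log\gamma_{\sepc}(\cE\otimes\cF)\le\log\gamma_{\sepc}(\cE)+\log\gamma_{\sepc}(\cF)$ from \Cref{cor:gamma_submult}, which guarantees that the sequence $\tfrac1n\log\gamma_{\sepc}(\cE^{\otimes n})$ converges (by Fekete's subadditivity lemma) so that the limits defining $\gammareg_{\sepc}$ and $\gammasreg_{\sepc}$ behave as expected.

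First I would fix $\epsilon\ge 0$ and $n\in\mathbb N$ and write the chain
\begin{equation}
  \log\!\big(1+R_{\sepc}^{\epsilon}(\cE^{\otimes n})\big) \;\le\; E_{C,\seppsc}^{(1),\epsilon}(\cE^{\otimes n}) \;\le\; \log\!\big(1+R_{\sepc}^{\epsilon}(\cE^{\otimes n})\big) + 2
\end{equation}
directly from \Cref{lem:oneshot_seppsc_cost}. Dividing by $n$ and letting $n\to\infty$ (first with $\epsilon=0$, then after an outer $\lim_{\epsilon\to 0^+}$ and with $\liminf$), the constant $+2$ is washed out, so
\begin{equation}
  E_{C,\seppsc}^{\mathrm{exact}}(\cE) = \lim_{n\to\infty}\tfrac1n\log\!\big(1+R_{\sepc}(\cE^{\otimes n})\big),
  \qquad
  E_{C,\seppsc}(\cE) = \lim_{\epsilon\to 0^+}\liminf_{n\to\infty}\tfrac1n\log\!\big(1+R_{\sepc}^{\epsilon}(\cE^{\otimes n})\big).
\end{equation}
Next I would replace $1+R_{\sepc}$ by $\gamma_{\sepc} = 1 + 2R_{\sepc}$ (and $1+R_{\sepc}^{\epsilon}$ by $\gamma_{\sepc}^{\epsilon} = 1 + 2R_{\sepc}^{\epsilon}$) inside these limits. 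This costs nothing asymptotically because
\begin{equation}
  0 \le \log\!\big(1+2R_{\sepc}(\sigma)\big) - \log\!\big(1+R_{\sepc}(\sigma)\big) = \log\!\Big(1 + \tfrac{R_{\sepc}(\sigma)}{1+R_{\sepc}(\sigma)}\Big) \le \log 2,
\end{equation}
a bounded quantity, so dividing by $n$ and taking $n\to\infty$ the difference vanishes; the same bound applies pointwise to the smoothed quantities before the $\liminf$. This is exactly the estimate carried out in the proof of \Cref{thm:gamma_sep_is_sepp_cost}, and I would just reproduce it with the diamond norm in place of the trace norm. Combining, $\log\gammareg_{\sepc}(\cE) = \lim_n \tfrac1n\log\gamma_{\sepc}(\cE^{\otimes n}) = E_{C,\seppsc}^{\mathrm{exact}}(\cE)$ and $\log\gammasreg_{\sepc}(\cE) = E_{C,\seppsc}(\cE)$, using \Cref{def:smoothed_gamma} and the definitions of $\gammareg$, $\gammasreg$ (which here are smoothed over $\cptp$ maps in the diamond norm, matching the $B_{\epsilon}$ used in \Cref{lem:oneshot_seppsc_cost}).

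The main obstacle — and really the only nontrivial point — is verifying that the convergence/regularization bookkeeping is legitimate: that $\tfrac1n\log\gamma_{\sepc}(\cE^{\otimes n})$ genuinely converges (Fekete, via \Cref{cor:gamma_submult}), and that interchanging the $\log(1+R)$ versus $\log(1+2R)$ normalizations is harmless in both the exact and the smoothed asymptotic limits (the $\log 2$ bound above). One should also double-check that the smoothing ball $B_{\epsilon}$ in \Cref{def:smoothed_gamma} restricts to $\cptp$ maps and that the one in \Cref{lem:oneshot_seppsc_cost} does too, so that $\gamma_{\sepc}^{\epsilon} = 1+2R_{\sepc}^{\epsilon}$ holds with the \emph{same} minimization domain — this is precisely why \Cref{def:smoothed_gamma} was stated over $\cptp$ rather than $\hptp$, as the accompanying remark explains. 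Everything else is a routine transcription of the state-level proof.
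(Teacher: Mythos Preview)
Your proposal is correct and follows exactly the route the paper intends: the paper writes ``We omit the proof because it is completely analogous'' to that of \Cref{thm:gamma_sep_is_sepp_cost}, and your write-up is precisely that analogy, with \Cref{lem:oneshot_seppsc_cost} in place of \Cref{lem:oneshot_sepp_cost} and the diamond norm replacing the trace norm. One cosmetic remark: to get $\gamma_{\sepc}(\cE)=1+2R_{\sepc}(\cE)$ you can cite \Cref{lem:robustness1} directly (since $\sepc\subset\hptp$ is convex), which is slightly cleaner than going through \Cref{lem:robustness2} and \Cref{prop:negtrick_sepc_pptc}; either route is valid.
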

We omit the proof because it is completely analogous.
To characterize the one-shot entanglement cost of a channel w.r.t. $\pptsc$, we need to define a dynamical resource measure called \emph{max-logarithmic negativity} which was introduced in \reference~\cite{gour2021_entanglement}.
\begin{definition}{}{}
  The max-logarithmic negativity of $\cE\in\cptp(AB\rightarrow A'B')$ is defined as $\maxlogneg(\cE)\coloneqq \max\{ \maxlogneg^{(0)}(\cE), \maxlogneg^{(1)}(\cE) \}$ where
 \begin{equation}
  2^{\maxlogneg^{(0)}(\cE)} \coloneqq \left \lbrace
  \begin{array}{r l}
    \inf\limits_{\Lambda\in\pos(ABA'B')} & \opnorm{\tr_{A'B'}[\Lambda]} \dimension(AB) \\
    \textnormal{s.t.} & - (\idchan_{AA'}\otimes\transpose_{BB'})(\Lambda) \loewnerleq (\idchan_{AA'}\otimes\transpose_{BB'})(\choi{\cE}) \\
    & (\idchan_{AA'}\otimes\transpose_{BB'})(\choi{\cE}) \loewnerleq (\idchan_{AA'}\otimes\transpose_{BB'})(\Lambda) \}
  \end{array} \right . \, ,
  \end{equation}
 \begin{equation}
  2^{\maxlogneg^{(1)}(\cE)} \coloneqq \left \lbrace
  \begin{array}{r l}
    \inf\limits_{\Lambda\in\pos(ABA'B')} & \opnorm{(\idchan_{A}\otimes\transpose_{B})(\tr_{A'B'}[\Lambda])} \dimension(AB) \\
    \textnormal{s.t.} & - (\idchan_{AA'}\otimes\transpose_{BB'})(\Lambda) \loewnerleq (\idchan_{AA'}\otimes\transpose_{BB'})(\choi{\cE}) \\
    & (\idchan_{AA'}\otimes\transpose_{BB'})(\choi{\cE}) \loewnerleq (\idchan_{AA'}\otimes\transpose_{BB'})(\Lambda) \}
  \end{array} \right . \, .
  \end{equation}
\end{definition}
Note that $\maxlogneg^{(0)}$ and $\maxlogneg^{(1)}$ are both defined in terms of an SDP and can hence be efficiently computed.
It can be shown that $\maxlogneg(\cE)$ reduces to the $\kappa$-entanglement when $\dimension(A)=\dimension(B)=1$~\cite{gour2021_entanglement}.

\begin{lemma}{\cite[Lemma 9]{gour2021_entanglement}}{oneshot_pptsc_cost}
  Let $\cE\in\cptp(AB\rightarrow A'B')$ and $\epsilon\geq 0$.
  Then
  \begin{equation}
    2^{\maxlogneg(\cE)} - 1 \leq 2^{E_{C,\pptsc}^{(1),0}(\cE)} \leq 2^{\maxlogneg(\cE)} + 2  \, .
  \end{equation}
\end{lemma}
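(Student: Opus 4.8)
The plan is to mirror the two-sided arguments used for the state-level statements — Lemma~\ref{lem:oneshot_sepp_cost} (robustness versus $\sepp$ cost) and Lemma~\ref{lem:oneshot_pptc_cost} ($\kappa$-entanglement versus $\pptc$ cost) — but carried out one level up, in the dynamical resource theory PPTC. The lower bound should come from monotonicity of the max-logarithmic negativity under PPT superchannels together with its value on a maximally entangled resource; the upper bound should come from an explicit PPT superchannel that interpolates between $\proj{\Psi_m}$ and $\cE$, with $m$ read off from the semidefinite program that defines $\maxlogneg(\cE)$. Throughout I would exploit that $\maxlogneg$ collapses to the $\kappa$-entanglement when the input systems are trivial, so that the state-level facts can be invoked verbatim for the "resource" $\proj{\Psi_m}$.

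For the lower bound, I would first record (this is essentially the content of~\cite{gour2021_entanglement}) that $\cF\mapsto 2^{\maxlogneg(\cF)}$ is monotone under superchannels in $\pptsc$: given $\Theta\in\pptsc$ and a feasible $\Lambda$ for the SDP of $\maxlogneg(\Theta(\cF))$, one pulls it back through the (PPT) Choi operator of $\Theta$ to a feasible point for $\maxlogneg(\cF)$ with the same objective value, in the spirit of the weak-monotonicity argument of Proposition~\ref{prop:log_gamma_dynamic_monotone}. Second, $E_{\kappa}(\proj{\Psi_m})\leq\log m$: the operator $S=\tfrac{1}{m}\id$ is feasible, since $(\idchan\otimes\transpose)(\proj{\Psi_m})=\tfrac{1}{m}\swap$ and $-\id\loewnerleq\swap\loewnerleq\id$, and $\tr[S]=m$. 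Hence, for any exact realization $\cE=\Theta(\proj{\Psi_m})$ with $\Theta\in\pptsc$ and $m=2^{E_{C,\pptsc}^{(1),0}(\cE)}$,
\[
  2^{\maxlogneg(\cE)}\leq 2^{\maxlogneg(\proj{\Psi_m})}=2^{E_\kappa(\proj{\Psi_m})}\leq m ,
\]
so $2^{\maxlogneg(\cE)}-1\leq 2^{E_{C,\pptsc}^{(1),0}(\cE)}$.

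For the upper bound I would generalize the interpolation construction from the proof of Lemma~\ref{lem:oneshot_pptc_cost}. Let $\Lambda$ be an optimizer of whichever of $\maxlogneg^{(0)}(\cE),\maxlogneg^{(1)}(\cE)$ attains the maximum, and put $m\coloneqq 1+\lceil 2^{\maxlogneg(\cE)}\rceil\leq 2^{\maxlogneg(\cE)}+2$. From $\Lambda$ and $\choi{\cE}$ one forms two completely positive maps with Choi operators proportional to $\Lambda+\choi{\cE}$ and $\Lambda-\choi{\cE}$; the two partial-transpose inequalities in the SDP are exactly what makes these PPT-preserving, and the bound on $\opnorm{\tr_{A'B'}[\Lambda]}\dimension(AB)$ (respectively on the partially transposed version) is exactly the "budget" needed so that the superchannel which decides, according to the overlap $\tr[\proj{\Psi_m}\,\choi{\cdot}]$, between these two branches maps every quantum channel to a quantum channel, has a PPT Choi operator, and sends $\proj{\Psi_m}$ to $\cE$ on the nose.

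The delicate step will be this last construction. One must verify that the interpolating object is a genuine superchannel — that it preserves complete positivity and trace preservation of the \emph{input} channel and admits a pre-/post-processing dilation — and simultaneously that its Choi operator is PPT; these two requirements are precisely what forces the appearance of \emph{both} branches $\maxlogneg^{(0)}$ (controlling the output normalization, hence trace preservation of the target channel) and $\maxlogneg^{(1)}$ (controlling the PPT condition) in the definition of $\maxlogneg$. The remaining work is bookkeeping with the ceiling, so that only the additive "$+2$" is lost rather than a multiplicative factor, and checking the trivial-input case to confirm consistency with Lemma~\ref{lem:oneshot_pptc_cost}.
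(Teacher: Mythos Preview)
The paper does not supply its own proof of this lemma; it is quoted verbatim from \cite[Lemma~9]{gour2021_entanglement} and used as a black box. There is therefore no in-paper argument to compare against.

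Your sketch is the right shape and matches the pattern of the state-level Lemmas~\ref{lem:oneshot_sepp_cost} and~\ref{lem:oneshot_pptc_cost}: monotonicity plus the value on $\proj{\Psi_m}$ for the lower bound, and an explicit resource-converting (super)channel read off from the SDP optimizer for the upper bound. The lower bound is essentially complete as written; note only that your pull-back description of monotonicity is phrased backwards (one pushes a feasible $\Lambda$ for the input through $\Theta$ to get one for the output, not the reverse), but the conclusion is unaffected. For the upper bound you have correctly located the crux --- that the interpolation map must simultaneously be a bona fide superchannel (forcing the $\maxlogneg^{(0)}$ constraint via the output trace condition) and have PPT Choi matrix (forcing $\maxlogneg^{(1)}$) --- but the proposal stops short of actually writing down the superchannel and verifying these two properties. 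That verification is the entire content of the cited lemma in \cite{gour2021_entanglement}, and it is not a formality: the construction is more involved than its state-level analogue because one must control both the CP and TP conditions of the \emph{output} channel uniformly over all input channels. If you intend this as a proof rather than a proof outline, that step needs to be filled in.
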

\begin{lemma}{~\cite[Lemma 5]{jing2024_circuit}}{}
  For any $\cE\in\cptp(AB\rightarrow A'B')$ one has
  \begin{equation}
    \gamma_{\pptc}(\cE) \geq 2^{\maxlogneg(\cE)} - 1
  \end{equation}
\end{lemma}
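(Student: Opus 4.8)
The plan is to mirror the argument used in the state setting for~\Cref{lem:gamma_ppt_kappaent_bound}, namely to take an optimal two-element PPTC QPD of $\cE$ and extract from it a feasible point for one of the two SDPs defining $\maxlogneg^{(0)}$ and $\maxlogneg^{(1)}$. First I would invoke~\Cref{prop:negtrick_sepc_pptc} and~\Cref{lem:two_element_qpd} to write the optimal QPD as $\cE = (1+t)\cF^+ - t\cF^-$ with $\cF^{\pm}\in\pptc(A;B\rightarrow A';B')$, so that $\gamma_{\pptc}(\cE) = 1+2t$; both $\choi{\cF^{\pm}}$ are then PPT and trace-normalized in the Choi sense, i.e. $\tr_{A'B'}[\choi{\cF^{\pm}}] = \tfrac{1}{\dimension(AB)}\id_{AB}$. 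Set $\Lambda \coloneqq (1+t)(\choi{\cF^+}+\choi{\cF^-})$, which is positive semidefinite.

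The next step is to verify that $\Lambda$ satisfies the constraints of \emph{both} SDPs in the definition of $\maxlogneg$. Since $\choi{\cE} = \choi{\cF^+}-\choi{\cF^-}$ and $\choi{\cF^+}+\choi{\cF^-} \loewnergeq \pm(\choi{\cF^+}-\choi{\cF^-})$ (a pointwise positivity of the two summands), applying the positive map $\idchan_{AA'}\otimes\transpose_{BB'}$ preserves this Loewner ordering, giving $-(\idchan_{AA'}\otimes\transpose_{BB'})(\Lambda) \loewnerleq (\idchan_{AA'}\otimes\transpose_{BB'})(\choi{\cE}) \loewnerleq (\idchan_{AA'}\otimes\transpose_{BB'})(\Lambda)$, exactly the shared constraint. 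It then remains to bound the two objective functions. For $\maxlogneg^{(0)}$ we compute $\tr_{A'B'}[\Lambda] = (1+t)\bigl(\tfrac{1}{\dimension(AB)}\id_{AB} + \tfrac{1}{\dimension(AB)}\id_{AB}\bigr) = \tfrac{2(1+t)}{\dimension(AB)}\id_{AB}$, so $\opnorm{\tr_{A'B'}[\Lambda]}\dimension(AB) = 2(1+t)$. Thus $2^{\maxlogneg^{(0)}(\cE)} \leq 2(1+t) = (1+2t)+1 = \gamma_{\pptc}(\cE)+1$. For $\maxlogneg^{(1)}$ one notes that $\tr_{A'B'}[\Lambda]$ is proportional to the identity, so its partial transpose $(\idchan_A\otimes\transpose_B)(\tr_{A'B'}[\Lambda])$ equals itself, and the same bound $2^{\maxlogneg^{(1)}(\cE)} \leq 2(1+t)$ follows. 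Taking the maximum gives $2^{\maxlogneg(\cE)} \leq \gamma_{\pptc}(\cE)+1$, i.e.\ the claim.

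I do not expect a serious obstacle here; the argument is a straightforward lift of the state-level proof. The one point requiring a little care is the step ``$\choi{\cF^+}+\choi{\cF^-} \loewnergeq \pm(\choi{\cF^+}-\choi{\cF^-})$'', which is just $2\choi{\cF^{\mp}} \loewnergeq 0$ and so immediate, together with the fact that $\idchan_{AA'}\otimes\transpose_{BB'}$ is a positive (though not completely positive) map and hence monotone for the Loewner order on this fixed space — no complete positivity is needed since we never tensor with an ancilla. A secondary bookkeeping point is that the infimum in each SDP is only required to be $\leq$ the value at our feasible $\Lambda$, so we never need to argue attainment. If one wished to be fully careful about the edge case where $\cF^+$ or $\cF^-$ vanishes (i.e.\ $\cE$ is already proportional to a PPTC channel), the same $\Lambda$ still works and gives the sharper $\maxlogneg(\cE) \leq 0$, consistent with the bound.
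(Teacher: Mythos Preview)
Your overall strategy matches the paper's, but the feasibility step contains a genuine error. You claim that $\idchan_{AA'}\otimes\transpose_{BB'}$ is a positive map and hence preserves the Loewner order. This is false: the transpose $\transpose_{BB'}$ is positive on $\lino(BB')$, but once you tensor with $\idchan_{AA'}$ the resulting partial transpose is \emph{not} positive on $\lino(AA'BB')$ --- the maximally entangled state across $AA'|BB'$ is the standard counterexample. Non-complete-positivity here precisely means that tensoring with the identity destroys positivity, so your remark ``no complete positivity is needed since we never tensor with an ancilla'' has it backwards: the ancilla $AA'$ is already present. Consequently, deducing the sandwich for $(\idchan_{AA'}\otimes\transpose_{BB'})(\choi{\cE})$ from the untransposed inequality $\choi{\cF^+}+\choi{\cF^-}\loewnergeq \pm(\choi{\cF^+}-\choi{\cF^-})$ is not valid.

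There is also a bookkeeping slip: you write $\choi{\cE}=\choi{\cF^+}-\choi{\cF^-}$, but from $\cE=(1+t)\cF^+-t\cF^-$ one has $\choi{\cE}=(1+t)\choi{\cF^+}-t\choi{\cF^-}$. The paper's fix for both issues is to use the PPT property of $\cF^{\pm}$ directly: setting $P^{\pm}\coloneqq(\idchan_{AA'}\otimes\transpose_{BB'})(\choi{\cF^{\pm}})\loewnergeq 0$, one computes
\[
(1+t)(P^++P^-)\mp\bigl((1+t)P^+-tP^-\bigr)\in\{\,(2t+1)P^-,\;2(1+t)P^++P^-\,\}\loewnergeq 0,
\]
which gives the required sandwich $-( \idchan\otimes\transpose)(\Lambda)\loewnerleq(\idchan\otimes\transpose)(\choi{\cE})\loewnerleq(\idchan\otimes\transpose)(\Lambda)$ without ever invoking Loewner monotonicity of the partial transpose. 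With this correction your argument goes through and the objective-value computation you gave is correct.
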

\begin{proof}
  Consider the achieving QPD $\cE=(1+t)\cF^+-t\cF^-$ for $\cF^{\pm}\in\pptc$.
  Since $\choi{\cF^{\pm}}\in\ppt$ we have
  \begin{align}
    -(1+t)(\id_{AA'}\otimes\transpose_{BB'})(\choi{\cF^+}+\choi{\cF^-})
    &\leq (\id_{AA'}\otimes\transpose_{BB'})(\choi{\cE}) \\
    &\leq (1+t)(\id_{AA'}\otimes\transpose_{BB'})(\choi{\cF^+}+\choi{\cF^-}) \, .
  \end{align}
  Hence, $\tilde{\Lambda}\coloneqq (1+t)(\choi{\cF^+}+\choi{\cF^-})$ is a feasible point for $\maxlogneg(\cE)$.
  Notice that
  \begin{equation}
    \frac{2(1+t)}{\dimension(AB)}\id_{AB} = \tr_{A'B'}[\tilde{\Lambda}] = (\idchan_{A}\otimes\transpose_{B})(\tr_{A'B'}[\tilde{\Lambda}])
  \end{equation}
  which implies
  \begin{align}
    2^{\maxlogneg(\cE)} 
    &\leq \max\left(\opnorm{\tr_{A'B'}[\tilde{\Lambda}]}, \opnorm{(\idchan_{A}\otimes\transpose_{B})(\tr_{A'B'}[\tilde{\Lambda}]} \right) \dimension(AB) \\
    &\leq \max\left( \frac{2(1+t)}{\dimension(AB)}, \frac{2(1+t)}{\dimension(AB)} \right) \dimension(AB) \\
    &= (1+2t)+1 \\
    &= \gamma_{\pptc}(\cE) + 1 \, .
  \end{align}
\end{proof}

\begin{theorem}{}{}
  The regularized quasiprobability extent of $\cE\in\cptp(AB\rightarrow A'B')$ w.r.t. $\pptc$ is given by
  \begin{equation}
    \log \gammareg_{\pptc}(\cE) = \lim\limits_{n\rightarrow\infty}\frac{1}{n}\maxlogneg(\cE^{\otimes n}) = E_{C,\pptsc}^{\mathrm{exact}}(\cE) \, .
  \end{equation}
\end{theorem}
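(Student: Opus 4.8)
The plan is to prove the two equalities in the theorem by mirroring, essentially verbatim, the argument used for states in \Cref{thm:regularized_gamma_ppt}. The core observation is that we already have matched one-shot bounds: the lower bound \Cref{lem:oneshot_pptsc_cost} giving $2^{\maxlogneg(\cE)}-1 \leq 2^{E_{C,\pptsc}^{(1),0}(\cE)}$, and the direct lower bound $\gamma_{\pptc}(\cE) \geq 2^{\maxlogneg(\cE)}-1$ from \Cref{lem:oneshot_pptsc_cost}'s preceding lemma. What remains is to sandwich $\log\gamma_{\pptc}(\cE^{\otimes n})$ between two sequences both converging to $E_{C,\pptsc}^{\mathrm{exact}}(\cE)$ after regularization, and to verify the relevant sequences converge at all.

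First I would establish the chain of one-shot inequalities
\begin{equation}
  \log\!\left(2^{\maxlogneg(\cE)}-1\right) \leq \log\gamma_{\pptc}(\cE) \leq \log\!\left(2^{E_{C,\pptsc}^{(1),0}(\cE)+1}-1\right),
\end{equation}
where the left inequality is the lemma just before \Cref{lem:oneshot_pptsc_cost}, and the right inequality follows by the same reasoning as in the proof of \Cref{lem:gamma_single_shot_bound}: given a PPT superchannel $\cF$ and integer $m=2^{E_{C,\pptsc}^{(1),0}(\cE)}$ with $\cF(\proj{\Psi_m})=\cE$ exactly, one has $\gamma_{\pptc}(\cE)=\gamma_{\pptc}(\cF(\proj{\Psi_m}))\leq\gamma_{\pptc}(\proj{\Psi_m})$ by the chaining property (\Cref{lem:gamma_chaining}, noting PPTC is a QRT with tensor product structure and that PPT superchannels map $\pptc^\star$ into $\pptc^\star$), and $\gamma_{\pptc}(\proj{\Psi_m})=\gamma_{\ppt}(\proj{\Psi_m})=2m-1$ by \Cref{lem:optimal_qpd_pure_ppt} together with the fact (\Cref{ex:maximally_entangled_state}) that the Schmidt coefficients of $\ket{\Psi_m}$ are all $1/\sqrt m$. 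Then I substitute $\cE^{\otimes n}$ for $\cE$ in this chain, divide by $n$, and take $n\to\infty$.

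The remaining technical points are purely routine limit arguments, to be handled exactly as in \Cref{thm:regularized_gamma_ppt}: (i) the sequence $\tfrac1n\log\gamma_{\pptc}(\cE^{\otimes n})$ converges because $\log\gamma_{\pptc}$ is sub-additive (\Cref{cor:gamma_submult}) and non-negative; (ii) the sequence $\tfrac1n\maxlogneg(\cE^{\otimes n})$ converges because $\maxlogneg$ is sub-additive (this is stated in \reference~\cite{gour2021_entanglement}, where it is shown $\maxlogneg(\cE_1\otimes\cE_2)\leq\maxlogneg(\cE_1)+\maxlogneg(\cE_2)$) and non-negative, so the limit defining the first equality exists; (iii) the cosmetic ``$-1$'' and ``$+1$'' offsets inside the logarithms wash out under $\tfrac1n$, which is the same $\bigl|\tfrac1n\log(2^{x_n}-1)-\tfrac1n\log(2^{x_n})\bigr|\to 0$ estimate as in the earlier proof, provided $\maxlogneg(\cE^{\otimes n})\to\infty$ or, if it stays bounded, the statement is trivial since then $\gamma_{\pptc}(\cE^{\otimes n})$ is also bounded. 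The second equality $\lim_n\tfrac1n\maxlogneg(\cE^{\otimes n})=E_{C,\pptsc}^{\mathrm{exact}}(\cE)$ follows directly by taking $\tfrac1n$ of each term of the one-shot sandwich in \Cref{lem:oneshot_pptsc_cost} and using the convergence of both sequences.

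I expect no genuine obstacle here, since this theorem is the channel analogue of \Cref{eq:regularized_gamma_ppt_eq1} and all the heavy lifting (the one-shot bounds, the PPT superchannel structure, sub-additivity of $\maxlogneg$) has already been imported from the literature. The only point requiring a moment's care is confirming that PPT superchannels indeed map the decomposition set $\pptc^\star$ into itself, so that the chaining step $\gamma_{\pptc}(\cF(\proj{\Psi_m}))\leq\gamma_{\pptc}(\proj{\Psi_m})$ is legitimate; this is the content of \reference~\cite[Theorem 5]{gour2021_entanglement} identifying $\pptsc$ with the completely-PPTC-preserving superchannels, combined with \Cref{prop:log_gamma_dynamic_monotone} applied to the dynamical QRT PPTC. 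Everything else is the routine regularization bookkeeping already demonstrated in \Cref{thm:regularized_gamma_ppt}.
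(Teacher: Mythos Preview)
Your proposal is correct and follows essentially the same approach as the paper: both establish the one-shot sandwich $\log(2^{\maxlogneg(\cE)}-1)\leq\log\gamma_{\pptc}(\cE)\leq\log(2^{E_{C,\pptsc}^{(1),0}(\cE)+1}-1)$, justify the upper bound via monotonicity of $\gamma_{\pptc}$ under PPT superchannels (\Cref{prop:log_gamma_dynamic_monotone}) together with $\gamma_{\ppt}(\proj{\Psi_m})=2m-1$, and then regularize. You are somewhat more explicit than the paper about the convergence of the involved sequences (via sub-additivity) and about why PPT superchannels preserve $\pptc^\star$, but the argument is otherwise identical.
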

\begin{proof}
  The second equality follows directly from regularizing the statement in \Cref{lem:oneshot_pptsc_cost}.
  The first equality follows by regularizing all terms
  \begin{equation}
    \log(2^{\maxlogneg(\cE)}-1) \leq \log \gamma_{\pptc}(\cE) \leq \log( 2^{E_{C,\pptc}^{(1),0}(\cE)+1}-1 )
  \end{equation}
  where the second inequality follows from the monotonicity of $\gamma_{\pptc}$ in the sense of a dynamical resource monotone (recall~\Cref{prop:log_gamma_dynamic_monotone}).
  More precisely, if we have a PPT superchannel $\Theta$ that realizes $\cE$ from the maximally entangled state $\proj{\Psi_d}$, then $\gamma_{\pptc}(\cE)$ can be at most equal to $\gamma_{\pptc}(\proj{\Psi_d})=\gamma_{\ppt}(\proj{\Psi_d}) = 2d - 1$ where we used~\Cref{lem:optimal_qpd_pure_ppt}.
\end{proof}

Previous work has claimed that $\maxlogneg$ is an additive quantity~\cite{gour2021_entanglement}, which would imply that $\lim\limits_{n\rightarrow\infty}\frac{1}{n}\maxlogneg(\cE^{\otimes n})=\maxlogneg(\cE)$ is efficiently computable.
However, this is wrong by the same counter-example that the $\kappa$-entanglement is not additive.
To our knowledge, there is no known method at this time to efficiently evaluate $\log \gammareg_{\pptc}(\cE)$.
It remains an open question whether the techniques from~\Cref{prop:pptc_cost_efficient_algo} can be adapted from the state setting to the channel setting.

\begin{example}
  We again consider the noisy CNOT channel
  \begin{equation}
    \cE_p(\rho) \coloneqq (1-p)\cnot(\rho) + p\mathcal{D}(\rho)
  \end{equation}
  where $\cD$ is the fully depolarizing two-qubit channel.
  In~\Cref{fig:noisy_cnot_regularized} we depict the numerical evaluation of $\gamma_{\pptc}(\cE_p)$ and $2^{\maxlogneg(\cE_p)}$.
  The latter provides an upper bound for the regularized quasiprobability extent $\gammareg_{\pptc}(\cE_p)$ which must hence lie in the gray region.
  Remarkably, the max-logarithmic negativity coincides with the exact result for $\gammareg_{\pptc}(\cE_p)$ known at $p=0$ (from~\Cref{ex:clifford_gates}), hinting at the possibility that it might be additive for $\cE_p$.
\end{example}
\begin{figure}
  \centering
  \includegraphics{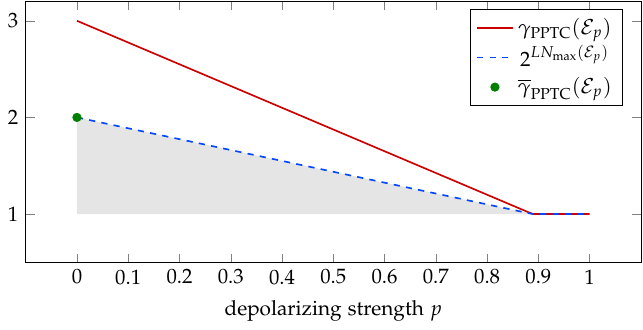}
  \caption{Numerical evaluation of $\gamma_{\pptc}(\cE_p)$ and $2^{\maxlogneg(\cE_p)}$ for the depolarized CNOT gates $\cE_p$}
  \label{fig:noisy_cnot_regularized}
\end{figure}

\subsection{Exploiting strict sub-multiplicativity in the black box setting}\label{sec:submult}
The simplest approach for circuit knitting is to separately find space-like cuts for each individual nonlocal operation in a circuit, and then combine these into a QPD of the total circuit.
However, treating all the nonlocal components individually is generally not optimal, and in some instances it is possible to find better QPDs of the circuit.
For instance, we observed that the quasiprobability extents $\gamma_{\sep},\gamma_{\lo^{\star}}$ and $\gamma_{\locc^{\star}}$ exhibit strictly sub-multiplicative behavior.
Therefore, cutting multiple \emph{parallel} instances of some operation (i.e. operations that occur in the same time slice of the circuit) is strictly cheaper than cutting them individually.
Unfortunately, in practical situations, the nonlocal gates/states may not occur in the same time slice, but rather at possibly distant locations in the circuit.
In this section, we will explore more refined QPS protocols that operate in this more realistic regime and how they can sometimes still achieve a sampling overhead that is the same as if the operations were all performed in parallel.

From now on, we will call this setup with non-parallel gates the \emph{black box cut setting} to differentiate it from the \emph{parallel cut setting}.
The two settings, exemplified on three two-qubit gates, are depicted in~\Cref{fig:gate_cutting_settings}.
The name black box cut is chosen to reflect that the circuit knitting protocol should work independently of what the operations $\cB_1$ and $\cB_2$ do, since we assume that we generally do not have any further information about them.
\begin{figure}
    \centering
    \begin{subfigure}[b]{0.49\textwidth}
        \centering
        \includegraphics{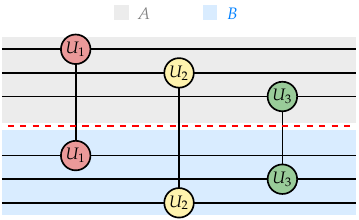}
        \caption{Parallel cut}
        \label{fig:parallelCut}
    \end{subfigure}        
    \begin{subfigure}[b]{0.49\textwidth}
        \centering
        \includegraphics{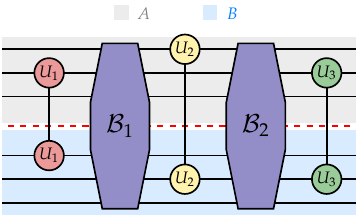}
        \caption{Black box cut}
        \label{fig:blackBoxCut}
    \end{subfigure}
\caption{Graphical depiction of the two cutting scenarios for multiple space-like cuts. The dashed line represents the cut. The unitaries $U_i$ are two-qubit gates acting on $A$ and $B$. In the parallel cut setting, the unitaries act on distinct qubits and can be considered to occur in the same time slice (not depicted here for visual clarity). In the black box setting, $\cB_1$ and $\cB_2$ denote arbitrary unknown CPTP maps, which represent all operations in the circuit that occur between $U_1,U_2$ and between $U_2,U_3$ respectively.}
\label{fig:gate_cutting_settings}
\end{figure}

We now mathematically formalize the notion of the black box cut.
Consider a quantum channel $\cG\in\cptp(AB)$ that is a composition of $n$ nonlocal gates $\cE_i\in\cptp(A_iB_i)$, where $A_i,B_i$ are subsystems of $A,B$, interleaved by some black box operations $\cB_j\in\cptp(AB)$
\begin{equation}
  \cG\coloneqq \cE_n \circ \cB_{n-1} \circ \cE_{n-1} \circ \cB_{n-2} \circ \cdots \circ \cE_{2} \circ \cB_{1} \circ \cE_{1} \, .
\end{equation}
We are interested in finding QPDs of $\cG$ of the form
\begin{equation}\label{eq:bb_cut_qpd}
  \cG(\rho) = \sum_{i=1}^m a_i \tr_{E_AE_B} \left[ \cF_n^{(i)}\circ \cB_{n-1} \circ \cF_{n-1}^{(i)} \circ \cB_{n-2} \circ \dots \circ \cB_{1} \circ \cF_1^{(i)} (\rho\otimes\proj{0}_{E_AE_B}) \right]
\end{equation}
where the $\smash{\cF_j^{(i)}}$ are either in $\lo^{\star}(A_iE_A;B_iE_B)$ or in $\locc^{\star}(A_iE_A;B_iE_B)$\footnote{We briefly note that choosing $\cF_j^{(i)}$ to be either in $\lo^{\star}$ or $\locc^{\star}$ effectively means that the weighting of the classical side information is done independently for each gate. A priori, this is not the most general scheme imaginable, as the weighting could be done in a correlated fashion across the individual gates. However, this does not affect the lowest possible overhead, as such a correlated weighting scheme could also be effectively achieved by passing classical information between the operations $\cF_j^{(i)}$ through the ancilla systems $E_A,E_B$.}.
Here, the systems $E_A$ and $E_B$ capture the notion that one might use ancilla systems on both partitions to store information between the nonlocal gates $\cE_i$.
$E_A$ and $E_B$ are initialized in some arbitrary product state $\proj{0}_{E_AE_B}$.
Graphically, this can be depicted as follows for $n=3$
\begin{equation}
  \includegraphics{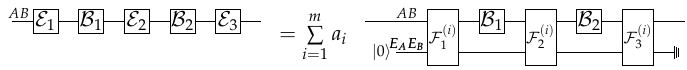}
  \, .
\end{equation}
Crucially, we are interested in QPDs such that this equality holds for any choice of the black box channels $\cB_j$.
In summary, the optimal sampling overhead in the black box setting is captured by following quantity.
\begin{definition}{}{gamma_bb}
  Let $n\in\mathbb{N}$, $\cE_i\in\hp(A_iB_i)$ for $i=1,\dots,n$ and let $\ds$ be either $\lo^{\star}$ or $\locc^{\star}$.
  We define the \emph{black box quasiprobability extent} $\gammabb_{\ds}(\cE_1,\cE_2,\dots,\cE_n)$ to be the infimum of the 1-norm $\norm{a}_1$ over all valid black box QPDs with $\cF_j^{(i)}\in\ds(A_iE_A;B_iE_B)$ for which~\Cref{eq:bb_cut_qpd} holds for every system $AB$, for every embedding of $A_iB_i$ into $AB$ and for every choice of $\cB_1,\dots,\cB_{n-1}\in\cptp(AB)$.
\end{definition}
We briefly note two trivial bounds for the black box quasiprobability extent.
\begin{lemma}{}{gamma_bb_bounds}
  Let $n\in\mathbb{N}$, $\cE_i\in\hp(A_iB_i)$ for $i=1,\dots,n$ and let $\ds$ be either $\lo^{\star}$ or $\locc^{\star}$.
  Then,
  \begin{equation}
    \gamma_{\ds}\left( \bigotimes\limits_{i=1}^n\cE_i \right)
    \leq 
    \gammabb_{\ds}(\cE_1,\cE_2,\dots,\cE_n) 
    \leq
    \prod\limits_{i=1}^n\gamma_{\ds}(\cE_i) \, .
  \end{equation}
\end{lemma}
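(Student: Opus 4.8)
The plan is to prove the two inequalities separately, both via explicit constructions.

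\textbf{Upper bound.} For the right inequality, I would start from optimal (or near-optimal) QPDs of each individual gate: for every $i$, pick $\cE_i = \sum_{j} a_j^{(i)} \cF_j^{(i)}$ with $\cF_j^{(i)}\in\ds(A_i;B_i)$ and $\norm{a^{(i)}}_1$ arbitrarily close to $\gamma_{\ds}(\cE_i)$. By~\Cref{lem:ds_closed_under_tp}, each such $\cF_j^{(i)}$ remains in $\ds(A_iE_A;B_iE_B)$ after tensoring with the identity on the ancilla systems $E_AE_B$ (this is where the tensor product structure of $\lo$ and $\locc$ enters). Substituting these QPDs into the definition of $\cG$ and distributing the sums, one obtains a black box QPD of the form~\Cref{eq:bb_cut_qpd} indexed by the multi-index $(j_1,\dots,j_n)$, with $\cF_{j_k}^{(\cdot)} \coloneqq \cF_{j_k}^{(k)}\otimes\idchan_{E_AE_B}$ and coefficient $\prod_k a_{j_k}^{(k)}$. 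The resulting $1$-norm factorizes as $\prod_k \norm{a^{(k)}}_1$, so taking the infimum over all choices of individual QPDs yields $\gammabb_{\ds} \le \prod_i\gamma_{\ds}(\cE_i)$. Here the ancilla systems $E_A,E_B$ are not even needed and can be taken trivial; they are only present to allow competing constructions more freedom.

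\textbf{Lower bound.} For the left inequality, the idea is that a black box QPD, being required to work for \emph{every} choice of the interleaving maps $\cB_j$, in particular works for a specific choice that lets us recover a QPD of the parallel tensor product $\bigotimes_i\cE_i$. Concretely, I would embed $n$ disjoint copies of the pairs $(A_i,B_i)$ into one large bipartite system, so that $A = E_A\, A_1 A_2 \cdots A_n$ and $B = E_B\, B_1 B_2 \cdots B_n$ (with the ancillas absorbed), and then choose each black box map $\cB_j$ to be a fixed local permutation (SWAP-type) channel that, after composing all $n$ gates, routes the $i$-th gate onto the $i$-th block --- i.e.\ the composition $\cE_n\circ\cB_{n-1}\circ\cdots\circ\cB_1\circ\cE_1$ becomes $\bigotimes_i\cE_i$ up to fixed local unitaries. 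Since these $\cB_j$ are themselves local (hence free), conjugating the black box QPD by the inverse local unitaries and tracing out $E_AE_B$ (preparing them in and discarding a fixed product state are both free operations, using~\Cref{lem:qrt_free_state_prep} and~\Cref{def:qrt}) turns the QPD~\Cref{eq:bb_cut_qpd} into a genuine QPD of $\bigotimes_i\cE_i$ with respect to $\ds$, with the same $1$-norm. Invoking the chaining property and closure under tensoring with identity (\Cref{lem:gamma_chaining}, \Cref{lem:gamma_tp_idchan}) to absorb the local unitaries and ancillas without increasing $\gamma_{\ds}$, we conclude $\gamma_{\ds}(\bigotimes_i\cE_i)\le\gammabb_{\ds}(\cE_1,\dots,\cE_n)$.

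\textbf{Main obstacle.} The upper bound is essentially bookkeeping. The delicate part is the lower bound: I need to verify carefully that one can actually pick the black box maps $\cB_j$ so that the full composition collapses to the parallel tensor product --- the routing has to move each $A_iB_i$ block out of the ``active'' slot after the $i$-th gate and bring the next one in, using only \emph{local} (product) channels on $A$ and on $B$, so that the resulting operations remain free and the $1$-norm is unchanged. A clean way is to interleave local cyclic-shift permutations acting on the $A$-registers and (in parallel) on the $B$-registers; one must check that such shifts are genuine CPTP product channels and that the bookkeeping of which block is ``active'' works out, and finally peel off the residual fixed local unitaries using~\Cref{cor:gamma_invariant_reversible} without changing $\gamma_{\ds}$.
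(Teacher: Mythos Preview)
Your proposal is correct, and the upper bound matches the paper's argument exactly.

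For the lower bound, however, you are making life much harder than necessary. The definition of $\gammabb_{\ds}$ (see~\Cref{def:gamma_bb}) requires the black box QPD to hold for \emph{every} embedding of the $A_iB_i$ into $AB$ and for \emph{every} choice of $\cB_j$. The paper simply picks the embedding so that all the $A_iB_i$ sit as \emph{disjoint} subsystems of $AB$ (i.e.\ $A=A_1\cdots A_n$, $B=B_1\cdots B_n$) and then takes every $\cB_j=\idchan_{AB}$. With this choice the composition collapses immediately to $\cG=\bigotimes_i\cE_i$, with no routing, no SWAPs, and no residual local unitaries to peel off. The black box QPD then directly yields a QPD of $\bigotimes_i\cE_i$ with the same $1$-norm, since preparing $\proj{0}_{E_AE_B}$, composing the $\cF_j^{(i)}$, and tracing out $E_AE_B$ are all free by~\Cref{lem:ds_closed_under_concat,lem:ds_closed_under_tp} and the QRT axioms.

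Your cyclic-shift construction would also work, but the ``main obstacle'' you identify --- verifying that the shifts are genuine local CPTP maps and that the bookkeeping is consistent --- simply evaporates once you realize the disjoint embedding with identity black boxes is an admissible choice. There is no need to invoke~\Cref{cor:gamma_invariant_reversible} at all.
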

\begin{proof}
  Clearly, optimally cutting the individual gates separately of each other induces a valid black box cut.
  This implies the upper bound.
  For the lower bound, notice that by appropriately choosing the embedding and fixing the black box channels to be the identity, we can obtain $\cG = \otimes_{i=1}^n\cE_i$.
  Therefore, any black box cut directly induces a QPD for $\otimes_{i=1}^n\cU_i$ with identical 1-norm.
\end{proof}
To illustrate the simplest instance of how circuit knitting in the black box setting can operate, we start by considering that the nonlocal gates that we want to cut are all $\cnot$ gates.
Furthermore, we will for the moment consider the $\locc$ setting.
In this case, the $\cnot$ gates can be realized through the gate teleportation circuit (recall~\Cref{fig:cnot_teleportation}) and the task reduces to the preparation of a shared Bell state, which can itself be simulated using QPS.
In a circuit with multiple CNOT gates, these bell states could all be prepared at the same time at the very beginning of the circuit, as depicted in~\Cref{fig:joint_ebit_prep}.
This in turns allows us to harvest the strictly sub-multiplicative behavior of $\gamma_{\sep}$ to jointly prepare the Bell states more efficiently.
This is fully analogous to the Clifford+T setting discussed in~\Cref{ex:magic_state_submult}, where we used a magic state injection circuit and utilized the sub-multiplicative behavior of the quasiprobability extent to simulate the joint preparation of multiple magic states more efficiently.

\begin{figure}
  \centering
  \includegraphics{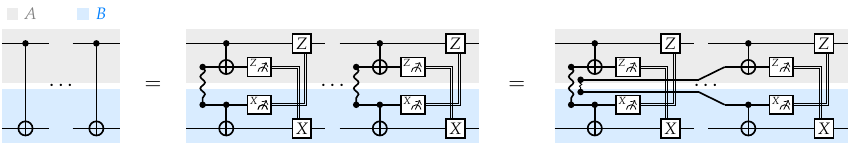}
  \caption{Two distant $\cnot$ gates can be realized through gate teleportation under LOCC by each consuming a Bell pair (depicted as wavy line). By generating the two Bell pairs simultaneously (instead generating a single Bell pair twice), we can utilize the sub-multiplicativity of the quasiprobability extent under the tensor product to reduce the total sampling overhead.}
  \label{fig:joint_ebit_prep}
\end{figure}

Clearly, the strategy outlined above can be straightforwardly extended to any gate that can be teleported from a state, such as all Clifford gates.
Together with~\Cref{cor:gamma_locc_clifford} and~\Cref{lem:gamma_bb_bounds}, we obtain following result.
\begin{proposition}{}{}
  The black-box quasiprobability extent of $n$ bipartite Clifford gates $\cU_1,\dots,\cU_n$ is given by
  \begin{equation}\label{eq:bb_optimal_gamma}
    \gammabb_{\locc^{\star}}(\cU_1,\cU_2,\dots,\cU_n) = \gamma_{\locc^{\star}}\left(\bigotimes\limits_{i=1}^n\cU_i\right) \, .
  \end{equation}
\end{proposition}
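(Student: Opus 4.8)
The plan is to prove the two inequalities in \Cref{eq:bb_optimal_gamma} separately. The inequality ``$\geq$'' is immediate: it is exactly the lower bound in \Cref{lem:gamma_bb_bounds}, which holds because choosing all black-box channels $\cB_j$ to be identities and embedding the $A_iB_i$ into $AB$ on disjoint qubits turns the target channel $\cG$ into $\bigotimes_{i=1}^n\cU_i$, so any valid black-box QPD restricts to a QPD of $\bigotimes_i\cU_i$ with the same $1$-norm. All the work is in the ``$\leq$'' direction.

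For ``$\leq$'', the key idea is the gate-teleportation construction of \Cref{prop:gate_teleportation}. Each bipartite Clifford $\cU_i$ acting across $A_iB_i$ can be realized by an LOCC protocol $\cE_i\in\locc(A_iA_i'A_i'';B_iB_i'B_i''\rightarrow A_i;B_i)$ that consumes the Choi state $(\choi{\cU_i})_{A_i'B_i'A_i''B_i''}$, which is itself a pure bipartite state across the partition. The crucial point is that this LOCC protocol can be invoked \emph{at the moment} in the circuit where $\cU_i$ appears, while the Choi state it consumes can be prepared \emph{at the very beginning} of the circuit and simply carried along in the ancilla registers $E_A,E_B$ (together with the correction-gate bookkeeping, which is local). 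Concretely, one sets $E_A \coloneqq A_1'A_1''\cdots A_n'A_n''$ and $E_B\coloneqq B_1'B_1''\cdots B_n'B_n''$, prepares the product of Choi states $\bigotimes_{i=1}^n \choi{\cU_i}$ in these registers at the start, and then at step $i$ applies the local parts of the gate-teleportation protocol for $\cU_i$ acting on $A_iE_A;B_iE_B$. Since each such local-teleportation map lies in $\locc^{\star}(A_iE_A;B_iE_B)$ and the black-box channels $\cB_j$ act only on $AB$ (hence commute with the idle ancillas), this yields a valid black-box decomposition of the required form in \Cref{eq:bb_cut_qpd} for \emph{every} choice of $\cB_j$. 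What remains is to decompose the \emph{single} initial state-preparation $\bigotimes_i\choi{\cU_i}$ into separable states; any such QPD, plugged into the $\locc^{\star}$ protocols via \Cref{lem:gamma_chaining} (chaining) and \Cref{lem:ds_closed_under_tp}, produces a black-box QPD of $\cG$ with $1$-norm equal to that of the state QPD. Optimizing over state QPDs gives $\gammabb_{\locc^{\star}}(\cU_1,\dots,\cU_n)\leq \gamma_{\sep}\!\big(\bigotimes_i\choi{\cU_i}\big)$, and by \Cref{cor:gamma_locc_clifford} (applied to the parallel gate $\bigotimes_i\cU_i$, which is again Clifford) the right side equals $\gamma_{\locc^{\star}}(\bigotimes_i\cU_i)$, completing the argument.

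A few technical points I would want to check carefully. First, one must make sure the gate-teleportation protocol of \Cref{prop:gate_teleportation} genuinely splits into a ``global state preparation'' piece plus ``local at-the-gate'' pieces: the Bell-basis measurements on $A_i,A_i''$ and $B_i,B_i''$ and the Pauli corrections $U_i(Q\otimes Q)U_i^\dagger$ are all local operations performed when $\cU_i$ is encountered, and they only touch $A_iE_A$ on one side and $B_iE_B$ on the other, so they are in $\locc^{\star}(A_iE_A;B_iE_B)$ — this is where Cliffordness is essential, since it keeps the correction local. Second, one needs the measurement outcomes of the teleportation to be available to the correction step; this is fine since they can be stored in (and routed through) the classical parts of $E_A,E_B$, which matches the ancilla-based description of black-box QPDs in \Cref{def:gamma_bb}. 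Third, the chaining step requires \Cref{lem:ds_closed_under_concat} and \Cref{lem:ds_closed_under_tp}, both of which apply since $\locc^{\star}$ comes from a QRT with tensor product structure.

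The main obstacle I anticipate is the bookkeeping needed to make the ``prepare all Choi states up front, consume them at gate time'' picture rigorous as a single QPD satisfying \Cref{eq:bb_cut_qpd} uniformly over \emph{all} black-box channels $\cB_j$ and \emph{all} embeddings — i.e., verifying that the randomly-sampled separable product state at the start, threaded through the fixed $\locc^{\star}$ teleportation maps and the arbitrary $\cB_j$, reproduces $\cG$ exactly for each sample with the correct sign/weight. This is conceptually routine (it is the same mechanism as magic-state injection in \Cref{ex:magic_state_submult}) but is the place where one could slip, especially in tracking that the ancillas $E_A,E_B$ remain idle during the $\cB_j$ and that the final partial trace over $E_AE_B$ in \Cref{eq:bb_cut_qpd} correctly discards the post-teleportation junk. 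Once that is set up, the rest is just invoking \Cref{lem:gamma_bb_bounds}, \Cref{cor:gamma_locc_clifford}, and the chaining lemmas.
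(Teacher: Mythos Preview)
Your proposal is correct and follows essentially the same approach as the paper: the lower bound via \Cref{lem:gamma_bb_bounds}, and the upper bound by preparing all Choi states $\bigotimes_i\choi{\cU_i}$ jointly at the start in the ancilla registers, consuming them one-by-one via the gate-teleportation protocol of \Cref{prop:gate_teleportation} (which is LOCC for Cliffords), and then invoking \Cref{cor:gamma_locc_clifford} on the tensor-product Clifford $\bigotimes_i\cU_i$ to identify $\gamma_{\sep}\big(\bigotimes_i\choi{\cU_i}\big)$ with $\gamma_{\locc^{\star}}\big(\bigotimes_i\cU_i\big)$. The paper's own treatment is terser (it simply points to the preceding discussion and the two cited results), but your more explicit bookkeeping of the ancilla registers and the uniformity over all $\cB_j$ is exactly what that discussion is sketching.
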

\begin{remark}{}{}
  The definition of the black box quasiprobability extent does not put any limit on the size of the ancilla systems $E_A$ and $E_B$.
  This can become problematic for practical applications of circuit knitting, as the small number of qubits is the main motivation for circuit knitting in the first place.
  Indeed, we run into this issue with the black box cutting strategy for CNOT gates discussed above.
  The systems $E_A$ and $E_B$ contain the ancilla qubits that store the simulated Bell pairs, so their size grows linearly in the number of nonlocal CNOT gates.

  In practice, this problem can be addressed by picking a fixed number of ancilla qubits $K$ on each side of the partition that are used to jointly generate $K$ shared Bell pairs.
  Once all $K$ Bell pairs are consumed by gate teleportations, the same qubits are re-used in another round of (simulated) preparation of $K$ Bell pairs.
  The size $K$ of such an ``entanglement factory'' allows for a trade-off: Increasing $K$ means that more ancilla qubits are required, but also lowers the effective quasiprobability extent per $\cnot$ to $(2^{K+1}-1)^{1/K}$.
  We note that a very small $K$ is already sufficient to get very close to an effective extent of $2$.
\end{remark}

It is natural to ask at this point, whether optimal black box cutting (as given by~\Cref{eq:bb_optimal_gamma}) is also achievable for non-Clifford unitaries and without the usage of classical communication.
At first, this might seem unlikely, given the crucial role of gate teleportation in the previous scheme.
Somewhat surprisingly, the answer is positive for both questions, at least for general two-qubit unitaries.
\begin{theorem}{}{gamma_bb_twoqubit}
  Let $\cU_1,\cU_2,\dots,\cU_n$ be two-qubit unitary channels.
  Then
  \begin{equation}
    \gammabb_{\lo^{\star}}(\cU_1,\cU_2,\dots,\cU_n) = 
    \gammabb_{\locc^{\star}}(\cU_1,\cU_2,\dots,\cU_n) = 
    \gamma_{\lo^{\star}}(\cU_1\otimes \cU_2\otimes \dots \cU_n ) \, .
  \end{equation}
\end{theorem}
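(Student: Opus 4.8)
The plan is to establish the two nontrivial inequalities separately, since the middle equality and the chain of ``$\leq$'' relations come essentially for free. By \Cref{lem:gamma_bb_bounds} we already know
\begin{equation}
  \gamma_{\lo^{\star}}\!\left(\bigotimes_{i=1}^n \cU_i\right) \leq \gammabb_{\lo^{\star}}(\cU_1,\dots,\cU_n) \leq \gammabb_{\locc^{\star}}(\cU_1,\dots,\cU_n) \nonumber
\end{equation}
where the second inequality uses $\lo^{\star}\subset\locc^{\star}$, so it remains only to show $\gammabb_{\locc^{\star}}(\cU_1,\dots,\cU_n) \leq \gamma_{\lo^{\star}}(\cU_1\otimes\cdots\otimes\cU_n)$. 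In other words, I need to exhibit, for each $i$, an $\lo^{\star}$-operation $\cF^{(i)}_j$ acting on $A_iE_A$ and $B_iE_B$ (with ancillas that persist across the black-box channels $\cB_j$) whose combined $1$-norm equals the parallel extent, and such that the identity \eqref{eq:bb_cut_qpd} holds for \emph{every} choice of the $\cB_j$.

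The key idea is to mimic the gate-teleportation strategy from \Cref{fig:joint_ebit_prep}, but replace the Bell state by whatever resource state realizes $\cU_i$ via a \emph{local-operation} protocol. Concretely, for a two-qubit unitary $U_i$ with KAK/operator-Schmidt form $U_i = \sum_j u^{(i)}_j L^{(i)}_j \otimes R^{(i)}_j$, the explicit QPD constructed in \Cref{lem:gamma_unitary_upperbound} (and refined in \Cref{thm:gamma_kaklike_unitary}) decomposes $\cU_i$ into $\lo^{\star}$-operations using only \emph{one ancilla qubit on each side} — see the reference to \Cref{app:optimal_circuits}. I would argue that this single-ancilla implementation can be ``front-loaded'': the ancilla state needed to apply $\cU_i$ at its location in the circuit can instead be prepared at the very beginning on a dedicated pair of ancilla qubits $E_A^{(i)},E_B^{(i)}$, teleported into place, and consumed when needed, all using local operations and classical side information only. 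Since the preparation of all $n$ ancilla resources happens in parallel (in the first time slice), the total $1$-norm of the combined QPD is exactly $\gamma_{\lo^{\star}}$ of the tensor product of the $n$ resource-preparation channels, which — by the chaining property \Cref{lem:gamma_chaining} and \Cref{lem:gamma_tp_idchan}, together with the fact that each $\cU_i$ and its resource are inter-convertible under local operations plus classical side information — coincides with $\gamma_{\lo^{\star}}(\cU_1\otimes\cdots\otimes\cU_n)$. The resulting QPD is manifestly a valid black-box QPD: the teleportation and consumption steps commute past the black-box channels $\cB_j$ (which act on the original wires $AB$, not the ancillas), so \eqref{eq:bb_cut_qpd} holds for all $\cB_j$.

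The main obstacle — and the step I expect to require the most care — is verifying that the single-ancilla $\lo^{\star}$ implementation of a two-qubit gate can genuinely be ``teleported'' without classical communication, i.e.\ that the required correction operations after teleporting the ancilla resource are \emph{local} (like in the Clifford case of \Cref{prop:gate_teleportation}) or can otherwise be absorbed into the classical-side-information weighting. For a non-Clifford two-qubit unitary the naive teleportation correction $U(Q_i\otimes Q_j)U^\dagger$ is not Pauli and hence not locally implementable, so the argument cannot simply invoke \Cref{prop:gate_teleportation}. Instead I would need to show that the correction can be handled at the level of the QPD itself: one expands over the (finitely many) teleportation outcomes and, for each, writes the outcome-dependent corrected channel as an $\lo^{\star}$-combination, verifying that the $1$-norm bookkeeping still telescopes to the parallel extent. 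This is plausible because the operator-Schmidt structure of $U_i$ is preserved under the Pauli conjugations appearing in teleportation (the $L^{(i)}_j, R^{(i)}_j$ get multiplied by Paulis, staying unitary with the same Schmidt coefficients $u^{(i)}_j$), so each corrected branch has the same extent $2(\sum_j u^{(i)}_j)^2 - 1$ as the original; the content is in checking that mixing over outcomes with the appropriate signs does not inflate the $1$-norm beyond the product/parallel value, which should follow from \Cref{lem:gamma_convexity} and absolute homogeneity (\Cref{lem:gamma_homogeneous}). Once this is in place, combining the two inequalities yields the claimed equalities.
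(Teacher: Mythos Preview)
Your overall strategy---front-loading a joint resource preparation and then consuming it gate-by-gate via a teleportation-like $\lo^{\star}$ protocol---is exactly the right shape, and it is essentially what the paper does. However, there are two issues, one minor and one substantive.

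\textbf{Minor (inequality direction).} Since $\lo^{\star}\subset\locc^{\star}$, a larger decomposition set makes the black-box problem \emph{easier}, so $\gammabb_{\locc^{\star}}\leq\gammabb_{\lo^{\star}}$, not the reverse. The correct chain is
\[
  \gamma_{\lo^{\star}}\!\Big(\bigotimes_i\cU_i\Big)\;\leq\;\gammabb_{\locc^{\star}}(\cU_1,\dots,\cU_n)\;\leq\;\gammabb_{\lo^{\star}}(\cU_1,\dots,\cU_n),
\]
so what remains is the $\lo^{\star}$ upper bound $\gammabb_{\lo^{\star}}\leq\gamma_{\lo^{\star}}(\bigotimes_i\cU_i)$. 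Your construction is aimed at $\lo^{\star}$ anyway, so this only affects the write-up, not the idea.

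\textbf{Substantive (the teleportation step).} You correctly identify the obstacle: for non-Clifford $U_i$ the gate-teleportation correction $U_i(Q_a\otimes Q_b)U_i^{\dagger}$ is not a product operator, hence not implementable by local operations. Your proposed fix---expand over Bell-measurement outcomes and observe that each branch realizes a Pauli-conjugate of $\cU_i$ with the same operator-Schmidt coefficients, then invoke convexity/homogeneity---does not close the gap. After outcome $(a,b)$ you have implemented $\cU_i\circ(\mathcal{Q}_a\otimes\mathcal{Q}_b)$ on the data, not $\cU_i$; no sign-weighted combination of these Pauli-smeared channels reproduces $\cU_i$, and decomposing each branch separately into $\lo^{\star}$ operations just multiplies the $1$-norm by an extra factor of the single-gate extent per outcome, destroying the parallel gain. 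So ``front-loading the ancilla'' in your sense does not by itself give the joint QPD.

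The paper's key move is not to teleport $\cU_i$ (or its Choi state) directly, but to first expand $\cU_i=\sum_{k,l}u^{(i)}_k u^{(i)}_l e^{i(\phi^{(i)}_k-\phi^{(i)}_l)}\cD_{k,l}$ via KAK into the Pauli-built superoperators $\cD_{k,l}(\rho)=Q_{k,k}\,\rho\,Q_{l,l}$, and then observe that \emph{each} $\cD_{k,l}$ can be gate-teleported using only $\lo^{\star}$: the correction collapses to a scalar $(-1)^{\symplip{Q_a}{Q_k}+\symplip{Q_b}{Q_k}+\symplip{Q_a}{Q_l}+\symplip{Q_b}{Q_l}}\in\{\pm 1\}$, which is absorbed into the classical-side-information weighting. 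Front-loading the Choi states $\choi{\cD_{k,l}}$ and regrouping the resulting double sum over $\bm{k},\bm{l}$ then reproduces exactly the optimal parallel QPD of $\bigotimes_i\cU_i$ from \Cref{lem:gamma_unitary_upperbound}, with $1$-norm $2(\sum_{\bm{k}}\bm{u}_{\bm{k}})^2-1$. The point is that working at the level of the $\cD_{k,l}$ (rather than the $\cU_i$) is what makes the teleportation correction local; this is the missing idea in your sketch.
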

We note that the proof is constructive, i.e., we specify an explicit $\lo^{\star}$ protocol that achieves optimal black box cutting.
The essential ingredient is the utilization of a gate teleportation-like protocol that works for non-Clifford gates and doesn't require any sort of classical communication.
Instead, our protocol relies purely on the classical side information generated by mid-circuit measurements.
\begin{proof}
  By~\Cref{thm:gamma_kaklike_unitary} and~\Cref{lem:gamma_bb_bounds}, we directly have
  \begin{equation}
    \gamma_{\lo^{\star}}(\cU_1\otimes \cU_2\otimes \dots \cU_n )
    \leq \gammabb_{\locc^{\star}}(\cU_1,\cU_2,\dots,\cU_n)
    \leq \gammabb_{\lo^{\star}}(\cU_1,\cU_2,\dots,\cU_n) \, .
  \end{equation}
  As such, it remains to show $\gammabb_{\lo^{\star}}(\cU_1,\cU_2,\dots,\cU_n)\leq \gamma_{\lo^{\star}}(\cU_1\otimes \cU_2\otimes \dots \cU_n )$, which we will do by explicitly constructing an $\lo^{\star}$ black box cut.

  Let us denote the single-qubit Pauli strings by $Q_0=\id,Q_1=X,Q_2=Y,Q_3=Z$ and $n$-qubit Pauli strings by $Q_{i_1,i_2,\dots,i_n}\coloneqq Q_{i_1}\otimes Q_{i_2}\otimes\dots\otimes Q_{i_n}$.
  The unitary channel induced by a Pauli will be denoted $\mathcal{Q}_i(\rho)\coloneqq Q_i\rho Q_i$.
  The two-qubit Bell basis will be denoted by $\ket{\Psi^{i}}\coloneqq (\sigma_i\otimes\id)\ket{\Psi^0}$ where $\ket{\Psi^0}\coloneqq (\ket{00}+\ket{11})/\sqrt{2}$.
  Let us furthermore define the two-qubit superoperator $\cD_{\alpha,\beta}$ as follows
  \begin{equation}
    \cD_{\alpha,\beta}(\rho)\coloneqq Q_{\alpha,\alpha}\rho Q_{\beta,\beta} \, .
  \end{equation}
  Recall the gate teleportation protocol that we discussed in~\Cref{fig:gate_teleportation}.
  This protocol allows us to realize any unitary channel from its Choi state, and in case the unitary is Clifford, then the protocol is LOCC.
  In the even more specialized case where the gate is a Pauli channel $\mathcal{Q}_{\alpha}\otimes\mathcal{Q}_{\beta}$, then the protocol is even in LO.
  We will extend this result by showing that $\cD_{\alpha,\beta}$ can also be realized from its Choi state using an $\lo^{\star}$ protocol.

  For this purpose, it is useful to first recapitulate the teleportation protocol for a Pauli channel $\mathcal{Q}_{\alpha}\otimes\mathcal{Q}_{\beta}$.
  We can write
  \begin{equation}\label{eq:gate_teleport_qq}
    (\mathcal{Q}_{\alpha}\otimes\mathcal{Q}_{\beta})(\rho_{\tilde{A}\tilde{B}}) = \sum_{i,j} T_{i,j}\left( \rho_{\tilde{A}\tilde{B}} \otimes (\choi{\mathcal{Q}_{\alpha}\otimes\mathcal{Q}_{\beta}})_{ABA'B'} \right)
  \end{equation}
  where $(T_{i,j})_{i,j}\in\instr{\lo}(AA'\tilde{A};BB'\tilde{B} \rightarrow A;B)$ is the quantum instrument describing the teleportation protocol which consists of
  \begin{enumerate}
    \item measuring $\tilde{A}A'$ in the Bell basis, obtaining outcome $i\in\{0,1,2,3\}$,
    \item measuring $\tilde{B}B'$ in the Bell basis, obtaining outcome $j\in\{0,1,2,3\}$,
    \item applying the operation $Q_{i,j}$ on $AB$.
  \end{enumerate}

  Now, we turn our attention to the gate teleportation of $\cD_{\alpha,\beta}$.
  Consider applying the same three steps as above, but now on the initial state $\rho_{\tilde{A}\tilde{B}}\otimes(\choi{\cD_{\alpha,\beta}})_{ABA'B'}$.
  By a standard state teleportation argument, the post-measurement state after obtaining the outcome $i,j$ is given by $\cD_{\alpha,\beta}( Q_{i,j}\rho Q_{i,j} )$.
  After applying the correction operation $Q_{i,j}$ we hence have
  \begin{align}
    Q_{i,j}\cD_{\alpha,\beta}( Q_{i,j}\rho Q_{i,j} ) Q_{i,j}
    &= Q_{i,j}Q_{\alpha,\alpha}Q_{i,j}\rho Q_{i,j}Q_{\beta,\beta}Q_{i,j} \\
    &= (-1)^{\symplip{Q_i}{Q_\alpha} + \symplip{Q_j}{Q_\alpha} + \symplip{Q_i}{Q_\beta} + \symplip{Q_j}{Q_\beta} } \cD_{\alpha,\beta}(\rho) \, .
  \end{align}
  The factor $f(i,j,\alpha,\beta)\coloneqq (-1)^{\symplip{Q_i}{Q_\alpha} + \symplip{Q_j}{Q_\alpha} + \symplip{Q_i}{Q_\beta} + \symplip{Q_j}{Q_\beta} }$ can be compensated by appropriately weighting with $\pm 1$ depending on $i$ and $j$ (recall that we are allowing $\lo^{\star}$ operations, not just LO).
  In summary, we have shown that
  \begin{equation}\label{eq:gate_teleport_D}
    \cD_{\alpha,\beta}(\rho_{\tilde{A}\tilde{B}}) = \sum_{i,j}  f(i,j,\alpha,\beta) T_{i,j}\left( \rho_{\tilde{A}\tilde{B}} \otimes (\choi{\cD_{\alpha,\beta}})_{ABA'B'} \right) \, .
  \end{equation}
  Notice that $f(i,j,\alpha,\beta)$ is symmetric under exchange of $\alpha,\beta$.

  We now turn to proving the statement of the theorem.
  The KAK decomposition allows us to write the two-qubit unitaries $U_m$ as
  \begin{equation}
    U_m = \sum_{k=0}^3 v_k^{(m)} Q_k\otimes Q_k
  \end{equation}
  for coefficients $v^{(m)}\in\mathbb{C}^4$.
  We separate the absolute value and the phase $v^{(m)}_k = u^{(m)}_k e^{i\phi^{(m)}_k}$ for $u^{(m)}_k\geq 0$, $\phi^{(m)}_k\in[0,2\pi)$.
  The induced channels are therefore
  \begin{equation}
    \mathcal{U}_m(\rho) = \sum_{k,l} u^{(m)}_ku^{(m)}_l e^{i(\phi^{(m)}_k-\phi^{(m)}_l)} \cD_{k,l} \, .
  \end{equation}
  Using the gate teleportation protocol in~\Cref{eq:gate_teleport_D}, we can rewrite this as
  \begin{align}
    \mathcal{U}_m(\rho) &= \sum_{i,j,k,l} u^{(m)}_ku^{(m)}_le^{i(\phi^{(m)}_k-\phi^{(m)}_l)} f(i,j,k,l)  T_{i,j}\left(\rho \otimes \choi{\cD_{k,l}} \right) \\
    &= \sum_{k,l} u^{(m)}_ku^{(m)}_le^{i(\phi^{(m)}_k-\phi^{(m)}_l)} \mathcal{T}^{(k,l)}\left(\rho \otimes \choi{\cD_{k,l}} \right)  \, .
  \end{align}
  where $\mathcal{T}^{(k,l)}\coloneqq \sum_{i,j} f(i,j,k,l) T_{i,j} \in \lo^{\star}(AA'\tilde{A};BB'\tilde{B}\rightarrow A;B)$.

  The main idea for our black box cutting strategy is to apply the gate teleportation to every nonlocal gate $\cU_m$ in the black box circuit
  \begin{equation}
    \cG = \cU_n \circ \cB_{n-1} \circ \cU_{n-1} \circ \cB_{n-2} \circ \cdots \circ \cU_{2} \circ \cB_{1} \circ \cU_{1} 
  \end{equation}
  and then use the ancilla system to prepare all the Choi states of the form $\choi{\cD_{k,l}}$ together at the beginning of the circuit.
This allows us to write
\begin{equation}
  \cG = 
  \sum_{\bm{k},\bm{l}\in\mathbb{F}_4^n} \bm{u}_{\bm{k}}\bm{u}_{\bm{l}} e^{i\theta(\bm{k},\bm{l})}
  \Big( \mathcal{T}^{k_n,l_n}\circ\cB_{n-1}\circ\mathcal{T}^{k_{n-1},l_{n-1}}\circ\dots\circ\cB_1\circ\mathcal{T}^{k_1,l_1} \Big)
  (\rho \otimes \choi{\bm{D}_{\bm{k},\bm{l}}})
\end{equation}
where $\mathbb{F}_4\coloneqq\{0,1,2,3\}$ and
\begin{equation}
  \bm{u}_{\bm{k}}\coloneqq u^{(1)}_{k_1}u^{(2)}_{k_2}\cdots u^{(n)}_{k_n} \, ,
\end{equation}
\begin{equation}
  \theta(\bm{k},\bm{l}) \coloneqq \left(\phi^{(1)}_{k_1}-\phi^{(1)}_{l_1}\right) + \left(\phi^{(2)}_{k_2}-\phi^{(2)}_{l_2}\right) + \dots + \left(\phi^{(n)}_{k_n}-\phi^{(n)}_{l_n}\right) \, ,
\end{equation}
\begin{equation}
  \bm{D}_{\bm{k},\bm{l}} \coloneqq D_{k_1,l_1}\otimes D_{k_2,l_2}\otimes \dots \otimes D_{k_n,l_n} \, .
\end{equation}
We re-arrange the sum to obtain
\begin{align}\label{eq:bbcut_proof}
  \cG = 
  & \sum_{\bm{k}\in\mathbb{F}_4^n} \bm{u}_{\bm{k}}^2 
  \Big( \mathcal{T}^{k_n,k_n}\circ\cB_{n-1}\circ\mathcal{T}^{k_{n-1},k_{n-1}}\circ\dots\circ\cB_1\circ\mathcal{T}^{k_1,k_1} \Big)
  (\rho \otimes \choi{\bm{D}_{\bm{k},\bm{k}}}) \nonumber\\
  & + \sum_{\bm{k}<\bm{l}} 4\bm{u}_{\bm{k}}\bm{u}_{\bm{l}}
  \Big( \mathcal{T}^{k_n,l_n}\circ\cB_{n-1}\circ\mathcal{T}^{k_{n-1},l_{n-1}}\circ\dots\circ\cB_1\circ\mathcal{T}^{k_1,l_1} \Big)
  (\rho \otimes \choi{\bm{\tilde{D}}_{\bm{k},\bm{l}}})
\end{align}
where we used that $\mathcal{T}^{k,l}=\mathcal{T}^{l,k}$, $\bm{k}<\bm{l}$ should be understood in lexicographic order and
\begin{equation}
  \bm{\tilde{D}}_{\bm{k},\bm{l}} \coloneqq \frac{1}{4}\left( e^{i\theta(\bm{k},\bm{l})}\bm{D}_{\bm{k},\bm{l}} + e^{-i\theta(\bm{k},\bm{l})}\bm{D}_{\bm{l},\bm{k}} \right) \, .
\end{equation}
Clearly, $\bm{D}_{\bm{k},\bm{k}}\in\lo$, and further below, we will also show that $\bm{\tilde{D}}_{\bm{k},\bm{l}}\in\lo^{\star}$.
This implies that~\Cref{eq:bbcut_proof} provides a valid $\lo^{\star}$ black box cut with a 1-norm of
\begin{equation}
  \sum_{\bm{k}} \bm{u}_{\bm{k}}^2 + \sum_{\bm{k}<\bm{l}} 4\bm{u}_{\bm{k}}\bm{u}_{\bm{l}}
  = 2\left(\sum_{\bm{k}}\bm{u}_{\bm{k}}\right)^2  - \sum_{\bm{k}} \bm{u}_{\bm{k}}^2
  = \gamma_{\lo^{\star}}(\cU_1\otimes \cU_2\otimes \dots \cU_n ) 
\end{equation}
where we used~\Cref{thm:gamma_kaklike_unitary} and the fact that the Schmidt coefficients of $\otimes_{i=1}^n\cU_i$ are precisely the $\bm{u}_{\bm{k}}$.

It now remains to show that $\bm{\tilde{D}}_{\bm{k},\bm{l}}\in\lo^{\star}$.
Notice that we can write the operator Schmidt decomposition of $\otimes_{i=1}^nU_i$ as
\begin{equation}
  \bigotimes_{i=1}^nU_i = \sum_{\bm{k}} \bm{u}_{\bm{k}} \prod_{j=1}^n e^{i\phi_{k_j}^{(j)}} Q_{\bm{k}}\otimes Q_{\bm{k}} \, .
\end{equation}
Following the construction in the proof of~\Cref{lem:gamma_unitary_upperbound}, the optimal $\lo^{\star}$ QPD of $\otimes_{i=1}^n\cU_i$ is precisely
\begin{equation}
  \bigotimes_{i=1}^n\cU_i
   = \sum_{\bm{k}} \bm{u}_{\bm{k}}^2 \mathcal{Q}_{\bm{k}}\otimes \mathcal{Q}_{\bm{k}}
   + \sum_{\bm{k}<\bm{l}} 4\bm{u}_{\bm{k}}\bm{u}_{\bm{l}} \tilde{D}_{\bm{k},\bm{l}} \, .
\end{equation}
Therefore, $\bm{\tilde{D}}_{\bm{k},\bm{l}}\in\lo^{\star}$ is a direct consequence from the proof of~\Cref{lem:gamma_unitary_upperbound}.
\end{proof}

\subsection{Applications}\label{sec:gate_cut_app}
While our main focus has been the fundamental characterization of the overhead of circuit knitting, we will briefly touch upon some potential applications of space-like cuts in this section.
It is not easy to find computational tasks that are well-suited for circuit knitting, as the involved quantum circuit needs to be separable into weakly interacting partitions.
Asymptotically speaking, the number of space-like cuts needs to be logarithmic in the problem size, at least when the entangling strength of the cut gates is not vanishing.

Another consideration when trying to demonstrate some form of quantum advantage with circuit knitting is that the individual partitions of the circuit should be out of reach of classical simulation algorithms.
Otherwise, the original computation can be classically simulated by simply simulating the circuit knitting procedure.

First, we will study the quantum Fourier transform, which is a central building block for some of the most important quantum algorithms such as phase estimation~\cite{kitaev1995_quantum} and Shor's algorithms for prime factorization and the discrete logarithm~\cite{shor1997_polynomial}.
Its action on an $n$-qubit computational basis state $\ket{x}$ for $x\in\{0,\dots,2^n-1\}$ is defined as
\begin{equation}
  \ket{x} \mapsto \frac{1}{\sqrt{2^n}} \sum_{y=0}^{2^n-1} \omega_{2^n}^{xy} \ket{y}
\end{equation}
where $\omega_N\coloneqq e^{\frac{2\pi i}{N}}$.
It can be efficiently realized by the circuit in~\Cref{fig:qft} followed by layer of SWAP gates $\swap_{Q_1Q_n}\swap_{Q_2Q_{n-1}}\swap_{Q_3Q_{n-3}}\dots$ that inverts the order of the qubits.
\begin{figure}
  \centering
  \includegraphics{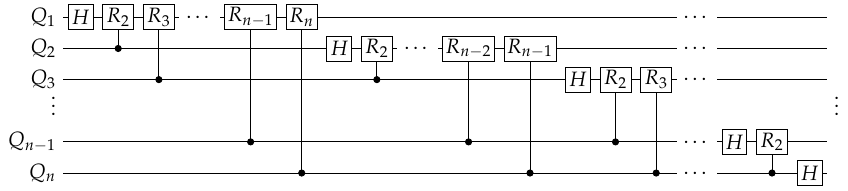}
  \caption{Circuit implementing quantum Fourier transform without qubit order inversion at the end. Here, $R_i$ denotes the rotation gate $R_Z(\frac{2\pi}{2^i})$.}
  \label{fig:qft}
\end{figure}
While the QFT is known to be very strongly entangling~\footnote{The QFT is maximally entangling in the sense that its operator Schmidt decomposition has uniform singular values~\cite{tyson2003_operatorschmidt,nielsen2003_quantum}.}, it is entirely due to the qubit order reversal at the end~\cite{chen2023_quantum}.
In fact, if we take the circuit from~\Cref{fig:qft} and individually cut the various controlled rotation gates, we get a quasiprobability extent that does not diverge as a function of $n$.

\begin{proposition}{}{gamma_qft}
  The quasiprobability extent of the $n$-qubit quantum Fourier transform without qubit order reversal $\mathrm{QFT}_n$ is asymptotically constant in $n$
  \begin{equation}
    \gamma_{\lo^{\star}(A;B)}(\mathrm{QFT}_n) = \mathcal{O}(1)
  \end{equation}
  w.r.t to the bipartition $A=Q_1\dots Q_i$ and $B=Q_{i+1},\dots,Q_n$ for any $1\leq i\leq n-1$.
\end{proposition}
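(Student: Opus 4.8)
The plan is to express $\mathrm{QFT}_n$ as a composition of its elementary gates, apply the chaining property of the quasiprobability extent, and show that only an absolutely convergent sum of contributions can arise because the angles of the relevant controlled rotations decay geometrically in the control--target distance.

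First I would recall the structure of the circuit in~\Cref{fig:qft}: $\mathrm{QFT}_n$ (without the final qubit-order-reversal SWAP layer) is built from single-qubit Hadamards and from controlled rotations of the form $R_Z\!\left(2\pi/2^{m-j+1}\right)$ applied to target qubit $Q_j$ controlled on qubit $Q_m$ with $j<m$. With respect to the bipartition $A=Q_1\cdots Q_i$, $B=Q_{i+1}\cdots Q_n$, a Hadamard is always local, and a controlled rotation is local unless its target lies in $A$ and its control in $B$, i.e.\ unless $j\le i<m$; in that case it acts on two qubits straddling the cut, tensored with the identity on the remaining qubits. Writing $\mathrm{QFT}_n$ as the composition of these operations and applying~\Cref{lem:gamma_chaining} repeatedly, every local gate contributes a factor $\gamma_{\lo^{\star}}\le 1$ (it lies in $\lo(A;B)\subseteq\lo^{\star}(A;B)$), while by~\Cref{lem:gamma_tp_idchan} together with~\Cref{ex:gamma_2qb_gates} each nonlocal controlled rotation with parameters $j\le i<m$ contributes exactly $1+2\sin\!\left(\pi/2^{m-j+1}\right)$. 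Hence
\begin{equation}
  \gamma_{\lo^{\star}(A;B)}(\mathrm{QFT}_n) \le \prod_{j=1}^{i}\prod_{m=i+1}^{n}\left(1+2\sin\!\left(\frac{\pi}{2^{m-j+1}}\right)\right).
\end{equation}

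Next I would take the (base-$2$) logarithm and bound crudely, using $\log(1+x)\le x/\ln 2$ and $\sin x\le x$:
\begin{align}
  \log\gamma_{\lo^{\star}(A;B)}(\mathrm{QFT}_n)
  &\le \frac{2\pi}{\ln 2}\sum_{j=1}^{i}\sum_{m=i+1}^{n}\frac{1}{2^{m-j+1}} \nonumber\\
  &= \frac{2\pi}{\ln 2}\left(\sum_{j=1}^{i}2^{\,j-1}\right)\!\left(\sum_{m=i+1}^{n}2^{-m}\right)
  < \frac{2\pi}{\ln 2}\cdot 2^{i}\cdot 2^{-i}=\frac{2\pi}{\ln 2},
\end{align}
which is independent of both $n$ and the cut location $i$. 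Therefore $\gamma_{\lo^{\star}(A;B)}(\mathrm{QFT}_n)\le 2^{2\pi/\ln 2}=e^{2\pi}=\mathcal{O}(1)$, as claimed (in fact uniformly over all bipartitions of this linear form).

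I do not expect a genuine obstacle: the argument is just the chaining and sub-multiplicativity properties combined with the explicit value of $\gamma_{\lo^{\star}}$ for a controlled rotation. The only points requiring a little care are the bookkeeping (identifying the straddling gates as exactly those with $j\le i<m$) and verifying that the resulting double sum converges uniformly in $i$, which holds precisely because the rotation angle $2\pi/2^{m-j+1}$ decays geometrically in the control--target distance $m-j$. It is also worth stating explicitly that dropping the final qubit-reversal SWAP layer is essential: with it reinstated the transform is maximally entangling across the cut, and the extent would instead grow exponentially in $\min(i,n-i)$.
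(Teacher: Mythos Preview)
Your proof is correct and follows the same strategy as the paper: decompose $\mathrm{QFT}_n$ into its elementary gates, apply the chaining property so that only the controlled rotations straddling the cut contribute, plug in the known value $\gamma_{\lo^{\star}}(\mathrm{CR}(\theta))=1+2\sin(\theta/2)$ from~\Cref{ex:gamma_2qb_gates}, and bound the resulting product. The only cosmetic difference is bookkeeping: the paper groups the straddling gates by their distance $k=m-j$ and uses that at most $k$ copies of $CR_{k+1}$ cross the cut, whereas you sum directly over the pairs $(j,m)$ with $j\le i<m$ and factor the double sum; your route in fact yields the slightly sharper constant $e^{2\pi}$ in place of the paper's $e^{4\pi}$.
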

\begin{proof}
  We prove this statement by taking the circuit in~\Cref{fig:qft} and separately cutting every controlled rotation gate which acts across the bipartition.
  This does most likely not produce the optimal QPD of $\mathrm{QFT}_n$, but it provides us an upper bound to the quasiprobability extent.

  A simple counting argument shows that no matter the value of $n$ and the choice of the partition, the number of controlled $R_{k+1}$ gates that we cut is at most $k$ where $k\geq 1$.
  Recalling the result in~\Cref{ex:gamma_2qb_gates}, we get
  \begin{align}
    \gamma_{\lo^{\star}}(\mathrm{QFT}_n)
    &\leq \prod_{k=1}^{\infty}\gamma_{\lo^{\star}}(CR_{k+1})^k \\
    &= \prod_{k=1}^{\infty}\left( 1 + 2\sin\frac{2\pi}{2^{k+1}} \right)^k \\
    &\leq \prod_{k=1}^{\infty}\left( 1 + 2\frac{2\pi}{2^{k+1}} \right)^k \\
    &= \prod_{k=1}^{\infty}\left( \frac{2^k + 2\pi}{2^k} \right)^k \\
    &= \exp_2\left[ \sum_{k=1}^{\infty} k \left( \log(2^k + 2\pi) - \log(2^k) \right) \right]
  \end{align}
  where $\exp_2(x)$ denotes $2^x$.
  Since $\log$ is concave, we can upper bound it with a first-order Taylor approximation at any point, so
  \begin{align}
    \gamma_{\lo^{\star}}(\mathrm{QFT}_n)
    &\leq \exp_2\left[ \sum_{k=1}^{\infty} k2\pi\frac{1}{\ln(2)2^k} \right] \\
    &= \exp_2\left[ \frac{4\pi}{\ln{2}} \right] \\
    &= e^{4\pi} \\
    &\approx 286751 
  \end{align}
  where we used $\sum\limits_{k=1}^{\infty}\frac{k}{2^k} = 2$.
\end{proof}
Beside the upper bound in the proof, we also numerically evaluated the quantity $\prod_{k=1}^{\infty}\gamma_{\lo^{\star}}(CR_{k+1})^k$ and obtained a value of roughly $\approx 26498$, which is more than an order of magnitude smaller than $e^{4\pi}$.

It is an interesting open question whether better QPDs of the quantum Fourier transform (or at least better estimates of its quasiprobability extent) can be obtained.
Related results in literature have shown that the singular values of the operator Schmidt decomposition of the quantum Fourier transformation (without qubit reversal) are exponentially decaying~\cite{chen2023_quantum}.
Unfortunately, this does not directly provide us with some means of upper bounding the quasiprobability extent, as the operator Schmidt decomposition is not known to be of the unitary form required for~\Cref{thm:gamma_kaklike_unitary}.

Next, we briefly consider Hamiltonian simulation as a second application of space-like cuts, which has been thoroughly studied by Harrow and Lowe.
We informally summarize their result as follows.
\begin{theorem}{\cite[Theorem 3.12]{harrow2024_optimal} (informal, simplified)}{hamiltonian_sim}
  Consider a bipartite Hamiltonian $H_{\mathrm{tot}}\in\herm(AB)$ that can be separated into two local parts and a weak interaction term
  \begin{equation}
    H_{\mathrm{tot}} = H_A\otimes\id_B + \id_A\otimes H_B + H_{\mathrm{int}}
  \end{equation}
  where $H_A,H_B$ and $H_{\mathrm{int}}$ are sums of Pauli terms acting on $\mathcal{O}(1)$ qubits.
  Circuit knitting can simulate the time evolution operator $\mathcal{T}(\rho)\coloneqq e^{-iH_{\mathrm{tot}}t}\rho e^{iH_{\mathrm{tot}}t}$ up to error $\epsilon$ with efficient circuits and a sampling overhead $\mathcal{O}(e^{8\eta t}/\epsilon^2)$ where $\eta\leq \norm{H_{\mathrm{int}}}$.
\end{theorem}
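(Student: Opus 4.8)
The plan is to reduce the Hamiltonian simulation problem to the black-box gate cutting result established in \Cref{thm:gamma_bb_twoqubit} and the two-qubit gate extent formula in \Cref{thm:gamma_kaklike_unitary}. First I would apply a Trotter--Suzuki decomposition to the time evolution operator $e^{-iH_{\mathrm{tot}}t}$, splitting it into $r$ time slices of length $\delta = t/r$, each approximated by a product of local evolution gates $e^{-i\delta H_A}$, $e^{-i\delta H_B}$ (which are free under $\lo^{\star}$ and cost nothing) and the interaction evolution $e^{-i\delta H_{\mathrm{int}}}$. Since $H_{\mathrm{int}}$ is a sum of $\mathcal{O}(1)$-local Pauli terms that act across the bipartition, a further layer of Trotterization decomposes $e^{-i\delta H_{\mathrm{int}}}$ into a product of elementary two-qubit (or $\mathcal{O}(1)$-qubit) Pauli rotation gates $e^{-i\theta_j P_j^A\otimes P_j^B}$, each of which crosses the cut. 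The choice of $r$ is dictated by the desired accuracy $\epsilon$ via the standard Trotter error bound, giving $r = \mathrm{poly}(t,\norm{H_{\mathrm{tot}}},1/\epsilon)$ and hence efficient circuits.

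Next I would count the cost. Each crossing Pauli rotation $e^{-i\theta_j P_j^A\otimes P_j^B}$ has, by \Cref{ex:gamma_2qb_gates}, quasiprobability extent $\gamma_{\lo^{\star}} = 1 + 2\abs{\sin\theta_j} \leq 1 + 2\abs{\theta_j}$, and more importantly its regularized extent satisfies $\log\gammareg_{\lo^{\star}} = \log(1+\abs{\sin\theta_j}) \leq \abs{\theta_j}/\ln 2$ up to lower-order corrections. The angles $\theta_j$ appearing in one Trotter slice sum to $\mathcal{O}(\delta\,\eta)$ where $\eta$ bounds the interaction strength (e.g.\ $\eta\leq\norm{H_{\mathrm{int}}}$, or a tighter entangling-strength quantity). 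The crucial point is that all these crossing gates occur at distant time slices throughout the circuit, so naively multiplying their individual extents would give an overhead growing like $\prod_j(1+2\abs{\theta_j})$, but \Cref{thm:gamma_bb_twoqubit} tells us that for two-qubit unitaries the black-box quasiprobability extent equals $\gamma_{\lo^{\star}}$ of their \emph{tensor product}, i.e.\ we pay only the jointly-cut cost $\gammabb_{\lo^{\star}}(\cU_1,\dots,\cU_N) = \gamma_{\lo^{\star}}(\bigotimes_i \cU_i)$. By \Cref{thm:gamma_kaklike_unitary} this joint cost is $2(\sum_{\bm k}\bm u_{\bm k})^2 - \sum_{\bm k}\bm u_{\bm k}^2$ where the $\bm u_{\bm k}$ are products of the individual Schmidt coefficients, which exponentiates the \emph{sum} of the angles rather than their product of $(1+2|\theta_j|)$ factors. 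Summing over all Trotter slices, $\sum_j\abs{\theta_j} = \mathcal{O}(\eta t)$, so the total extent scales like $e^{\mathcal{O}(\eta t)}$, and the sampling overhead is the square of this (by \Cref{prop:qps_overhead}), yielding $\mathcal{O}(e^{8\eta t}/\epsilon^2)$ after tracking constants.

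I would assemble the pieces as follows: (i) state the Trotter decomposition with explicit error control so that the approximated circuit is within $\epsilon/2$ of $\mathcal{T}$ in diamond norm and uses only efficient local gates plus crossing Pauli rotations; (ii) invoke the black-box cutting protocol of \Cref{thm:gamma_bb_twoqubit} to realize all crossing rotations jointly, observing that the Trotter error and the cutting are compatible (the cutting is exact, so the only error is Trotterization); (iii) compute the joint extent via \Cref{thm:gamma_kaklike_unitary}, bound $\prod_{\bm k}$-type expressions by exponentials of angle-sums using concavity of $\log$ exactly as in the proof of \Cref{prop:gamma_qft}; (iv) combine with \Cref{prop:qps_overhead} to get the shot count. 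The main obstacle I anticipate is bookkeeping the interplay between the Trotter step size $\delta$, the number of crossing gates per slice, and the constant $8$ in the exponent: one must verify that the angle-sum bound $\sum_j|\theta_j|\leq \eta t$ holds with the right constant (this depends on how the interaction Hamiltonian's Pauli terms are grouped and on whether $\eta$ is the operator norm or a finer entangling rate), and that the Trotter error does not force $\delta$ so small that additional multiplicative factors creep into the overhead. A secondary subtlety is ensuring the crossing gates really are two-qubit (or reducing $\mathcal{O}(1)$-qubit crossing gates to two-qubit ones by absorbing same-side qubits into the black boxes), which is needed for \Cref{thm:gamma_bb_twoqubit} to apply verbatim; the generalization to $\mathcal{O}(1)$-qubit crossing gates should follow by treating each as a tensor factor in the joint cut but would need the operator-Schmidt-decomposition hypothesis of \Cref{thm:gamma_kaklike_unitary} to be checked, which holds since any $\mathcal{O}(1)$-qubit gate decomposes via iterated KAK into unitary-operator-Schmidt form.
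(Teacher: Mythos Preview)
Your proposal is broadly correct but substantially over-engineered compared to the paper's sketch. The paper does not prove this theorem in detail (it defers to the cited reference), but the idea it presents is much simpler than yours: it cuts each Trotterized crossing gate \emph{individually}, not via the black-box joint-cutting machinery. The key observation you miss is that the naive product $\prod_j(1+2|\sin\theta_j|)$ is \emph{already} bounded independently of the number of Trotter steps $N$: since the angles scale as $\theta_j = \mathcal{O}(\eta t / N)$ and there are $\mathcal{O}(N)$ of them, one has $\prod_j(1+2|\theta_j|) \leq \exp\bigl(2\sum_j|\theta_j|\bigr) = e^{\mathcal{O}(\eta t)}$. The paper illustrates exactly this with the $Z\otimes Z$ toy model, showing $(1+2\sin(\theta/N))^N \to e^{2\theta}$ as $N\to\infty$. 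So there is nothing to ``fix'' by invoking \Cref{thm:gamma_bb_twoqubit}; the chaining property (\Cref{lem:gamma_chaining}) together with the single-gate extent from \Cref{ex:gamma_2qb_gates} already suffices. Your sentence ``naively multiplying their individual extents would give an overhead growing like $\prod_j(1+2|\theta_j|)$, but\ldots'' misdiagnoses the situation: that product is precisely what works.

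Your black-box route is not wrong, but it buys at most a constant in the exponent while introducing a real complication: \Cref{thm:gamma_bb_twoqubit} is proved only for two-qubit unitaries. Your claim that ``any $\mathcal{O}(1)$-qubit gate decomposes via iterated KAK into unitary-operator-Schmidt form'' is false in general---iterated KAK does not yield a unitary operator Schmidt decomposition for arbitrary multi-qubit bipartite unitaries. For the specific crossing gates here (Pauli rotations $e^{-i\theta P_A\otimes P_B} = \cos\theta\,\id\otimes\id - i\sin\theta\,P_A\otimes P_B$) the unitary Schmidt form is immediate, so \Cref{thm:gamma_kaklike_unitary} applies; but extending the black-box \emph{protocol} of \Cref{thm:gamma_bb_twoqubit} beyond two qubits would require reworking its proof. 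The individual-cut argument sidesteps all of this.
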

For the full formal statement, we refer the reader to \reference~\cite{harrow2024_optimal}.
The main idea is to use the standard Trotterization approach to approximate $\mathcal{T}$ with interleaved local and interaction operations.
Trotterized Hamiltonian simulation has 2 freely choosable parameters: The simulation time $t$ and the number of steps $N$.
Remarkably, the quasiprobability extent of the Trotterized circuit only grows with $t$, but not with $N$.
The reason for this is that, while the number of nonlocal gates increases linearly with $N$, their interaction strength also decreases with $N$.

As a simple model to illustrate this effect, consider the two-qubit Pauli operator $Z\otimes Z$ as our interaction term.
The time evolution induced by this interaction is of the form of a rotation gate $\mathrm{R}_{ZZ}(\theta) \coloneqq e^{-i\frac{\theta}{2}Z\otimes Z}$.
We previously showed in~\Cref{ex:gamma_2qb_gates} that $\gamma_{\lo^{\star}}(\mathrm{R}_{ZZ}(\theta)) = 1 + 2\abs{\sin\theta}$.
If we sub-divide the rotation into $N$ smaller rotations which we cut individually, then the resulting sampling overhead converges to some fixed value as $N$ grows to infinity
\begin{align}
  \gamma_{\lo^{\star}}(\mathrm{R}_{ZZ}(\theta / N))^N
  &= \left(1 + 2\sin\frac{\abs{\theta}}{N} \right)^N \\
  &\approx \left(1 + \frac{2\theta}{N} \right)^N \\
  & \xrightarrow{N\rightarrow\infty} \exp(2\theta) .
\end{align}

In this section, we have focused on applications of circuit knitting where we partitioned a large circuit into two or more smaller circuits.
There is also a much more mundane, but perhaps more widely applicable usage of the technique.
Many experimental quantum computing platforms exhibit strong locality constraints.
For instance, superconducting quantum chips typically only allow for two-qubit interactions between neighboring qubits.
As such, two-qubit gates between distant qubits require expensive swap chains that scale in depth with the distance.
This difficulty remains in the fault-tolerant settings, at least when the logical qubits are encoded in a topological code like the surface code or color code.
As such, it could be useful to locally simulate a few of these nonlocal operations, foregoing the need for swap chains at a cost of an increased sampling overhead.
If quantum resources are more limited than the available time (i.e. number of shots), then this trade-off can potentially be advantageous.

\section{Time-like cuts}\label{sec:timelike_cuts}
We have previously seen that space-like cuts are deeply connected to the resource theory of entanglement.
Remarkably, there is a similar connection for time-like cuts, with the lesser known \emph{resource theory of quantum memories}.
This is a dynamical resource theory where the non-free channels are the ones with the capability of preserving entanglement.
Correspondingly, the free operations are called \emph{entanglement breaking}.
\begin{definition}{}{}
  A map $\cE\in\cptp(A\rightarrow A')$ is called entanglement breaking if $(\cE\otimes\idchan_B)(\rho_{AB})\in\sep(A';B)$ for any system $B$ and density operator $\rho\in\dops(AB)$.
  We denote the set of such entanglement breaking channels by $\ebc(A\rightarrow A')$.
\end{definition}
In some ways, the resource theory of quantum memories can be considered the counterpart to the resource theory of entanglement as it studies quantum correlations in time rather than spatial correlations.
It has found applications for the certification and benchmarking of quantum memories~\cite{rosset2018_resource,yuan2021_universal}.

In the context of circuit knitting, we will study decompositions of the identity channel
\begin{equation}
  \idchan = \sum_i a_i \cF_i \, .
\end{equation}
A major question is what decomposition set we will allow for the channels $\cF_i$ to lie in.
Clearly, the $\cF_i$ must be entanglement breaking, as otherwise they could generate entanglement across the circuit partitions.
In fact, the set $\ebc$ is a very natural decomposition set, as it precisely contains the channels that can be realized with a measurement, followed by the preparation of a state that depends on the measurement outcome.
\begin{proposition}{\cite[Theorem 2]{horodecki2003_entanglement}}{}
  A channel $\cE\in\cptp(A\rightarrow A')$ is entanglement breaking if and only if it can be written in the form
  \begin{equation}
    \cE(\rho) = \sum_k \tr[M_k\rho] \sigma_k
  \end{equation}
  for some $(M_k)_k\in\povm(A)$ and $\sigma_k\in\dops(A')$.
\end{proposition}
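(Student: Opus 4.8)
The plan is to prove the standard characterization of entanglement breaking channels, which is the statement $\cE\in\ebc(A\rightarrow A')$ if and only if $\cE(\rho) = \sum_k \tr[M_k\rho]\sigma_k$ for some POVM $(M_k)_k$ and density operators $\sigma_k$. The proof naturally splits into the two directions, and I would tackle the easy direction first.

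\textbf{The ``measure-and-prepare'' direction ($\Leftarrow$).} Suppose $\cE(\rho) = \sum_k\tr[M_k\rho]\sigma_k$. I would directly verify that $(\cE\otimes\idchan_B)(\rho_{AB})$ is separable for any $\rho_{AB}$. Writing it out,
\begin{equation}
  (\cE\otimes\idchan_B)(\rho_{AB}) = \sum_k \sigma_k \otimes \tr_A[(M_k\otimes\id_B)\rho_{AB}] \, ,
\end{equation}
and since each $\tr_A[(M_k\otimes\id_B)\rho_{AB}]$ is a positive operator on $B$, this is manifestly a separable (unnormalized) decomposition, hence lies in $\sep(A';B)$ after normalization. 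One should also check $\cE$ is genuinely CPTP: complete positivity follows since each term $\rho\mapsto\tr[M_k\rho]\sigma_k$ is CP (its Choi operator is $M_k^T\otimes\sigma_k\loewnergeq 0$), and trace preservation follows from $\sum_k M_k = \id_A$ and $\tr[\sigma_k]=1$.

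\textbf{The converse direction ($\Rightarrow$).} This is the part requiring a genuine idea. Suppose $\cE\in\ebc(A\rightarrow A')$. The key trick is to apply the entanglement breaking condition to the maximally entangled state, which converts the dynamical statement into a statement about the Choi operator. Concretely, consider $\choi{\cE} = (\idchan_A\otimes\cE)(\proj{\Psi}_{AA'})$ (up to normalization conventions). By assumption this is a separable state, so we can write $\choi{\cE} = \sum_k p_k \proj{\phi_k}_A \otimes \proj{\eta_k}_{A'}$ for some pure product states and probabilities $p_k$. Now I would translate this Choi decomposition back into an action on states: using the identity $\cE(\rho) = \dimension(A)\tr_A[\choi{\cE}(\rho^T\otimes\id_{A'})]$ (the same formula used elsewhere in the excerpt), one obtains
\begin{equation}
  \cE(\rho) = \dimension(A)\sum_k p_k \braopket{\phi_k}{\rho^T}{\phi_k}\, \proj{\eta_k} \, .
\end{equation}
Setting $M_k \coloneqq \dimension(A)\, p_k \overline{\proj{\phi_k}}$ (the complex conjugate, to absorb the transpose) and $\sigma_k \coloneqq \proj{\eta_k}$ gives the desired form, with $\braopket{\phi_k}{\rho^T}{\phi_k} = \tr[\overline{\proj{\phi_k}}\rho] = \tr[M_k\rho]/(\dimension(A)p_k)$ after rescaling. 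Finally I would verify $(M_k)_k$ is a valid POVM: positivity is clear, and $\sum_k M_k = \id_A$ follows from the trace-preservation condition $\tr_{A'}[\choi{\cE}] = \id_A/\dimension(A)$ established in the Choi-Jamio\l{}kowski proposition earlier.

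\textbf{Main obstacle.} The only subtle point is bookkeeping the normalization and transpose conventions consistently between the Choi isomorphism as defined in the excerpt (which uses the \emph{normalized} maximally entangled state $\ket{\Psi} = \dimension(A)^{-1/2}\sum_i\ket{i}\otimes\ket{i}$) and the measure-and-prepare formula — a stray factor of $\dimension(A)$ or a missing conjugate will break the POVM normalization. I would handle this by carefully fixing the convention at the outset and tracking it through. Everything else is routine; the conceptual content is entirely the observation that separability of the Choi state is equivalent to the measure-and-prepare structure, which is the well-known result of Horodecki, Shor and Ruskai that the excerpt already cites.
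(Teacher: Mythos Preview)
Your proof is correct and is precisely the standard Horodecki--Shor--Ruskai argument. The paper does not actually supply its own proof of this proposition; it simply cites \cite{horodecki2003_entanglement}. That said, the computational core of your converse direction---writing the separable Choi state as $\sum_k p_k\proj{\phi_k}\otimes\proj{\eta_k}$ and reading off the POVM $M_k=\dimension(A)p_k\overline{\proj{\phi_k}}$---reappears almost verbatim in the paper's proof of \Cref{prop:wirecut_ds_choi_characterization}, where the Choi characterization of $\ebc$ is established. So your approach is exactly in line with what the paper implicitly uses later. Your flagged ``main obstacle'' (the transpose/conjugate bookkeeping) is handled more carefully in your write-up than in the paper's own related computation, where the transpose is silently absorbed by relabelling the pure product factors.
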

As previously discussed in~\Cref{fig:example_wire_cut}, decomposing the identity into such channels allows us to realize circuit knitting.
The choice of $\ebc$ as our decomposition set necessitates that our circuit partitions can exchange classical communication during the circuit execution.
This is a requirement that we might not want to impose.
Similarly to the study of the decomposition sets $\lo$ and $\locc$ for space-like cuts, we also introduce a second decomposition set that reflects the constraint that the two involved circuit partitions cannot exchange classical communication.
It consists of measurement and prepare channels $\cE\in\cptp(A\rightarrow A')$, which do not allow for the prepared state to depend on the measurement result
\begin{equation}
  \cE(\rho) = \sum_k \tr[M_k\rho] \sigma
\end{equation}
for some $(M_k)_k\in\povm(A)$ and $\sigma\in\dops(A')$.
We denote the set of such channels by $\mpc(A\rightarrow A')$.
At first glance, these channels might seem completely useless, as any information from their input is completely lost, as $\cE(\rho)=\tr[\rho]\sigma$.
We will see in the next section that these channels can become useful if we allow for classical side information that is generated from the measurement.
So while $\mpc(A\rightarrow A')$ lacks the expressibility required for circuit knitting, the set $\mpc^{\star}(A\rightarrow A')$ does not.

The main goal of this section is to identify the quasiprobability extent of the identity channel $\gamma_{\ebc^{\star}}(\idchan)$ and $\gamma_{\mpc^{\star}}(\idchan)$, which in turn characterizes the sampling overhead of circuit knitting with time-like cuts.
We will also briefly touch upon the quasiprobability extent of other channels, but it will be a lesser focus since they are operationally not as important.

We note that time-like cuts can be considered a special case of space-like cuts by choosing the output system on one partition and the input system on the other partition to be trivial.
As such, one can readily verify that $\ebc(A\rightarrow B')=\locc(A;B\rightarrow A'B'),\mpc(A\rightarrow B')=\lo(A;B\rightarrow A';B')$, and correspondingly $\ebc^{\star}(A\rightarrow B')=\locc^{\star}(A;B\rightarrow A';B'),\mpc^{\star}(A\rightarrow B')=\lo^{\star}(A;B\rightarrow A';B')$, in the case that $\dimension(B)=\dimension(A')=1$.
This will however prove to be of limited use to us, as our results on space-like cuts are not powerful enough to cover the time-like cuts that we will study in this section.

Astute readers might also have noticed that $\mpc$ and $\ebc$ do not actually constitute valid QRTs as per~\Cref{def:qrt}, as the identity channel is not free.
This issue can be fixed by identifying the channels in $\mpc$ and $\ebc$ with their associated states through the Choi isomorphism.
Under this new perspective, channels get promoted (or rather \emph{demoted}) to states, and superchannels to channels.
As such, the identity channel (the demoted identity superchannel) is now free, providing us with a valid quantum resource theory.
This subtlety will be of no further importance in this section, besides the fact that we need to be a bit careful when applying results from~\Cref{sec:gamma_qrt}.

\subsection{Decomposition sets}\label{sec:timelike_ds}
Here, we introduce the decomposition sets of interest with full mathematical rigor.
To capture the notion of classical side information through intermediate measurements, we start by defining the associated sets of quantum instruments.
Note that in principle, both the sender and receiver have the capability of generating classical side information.
In the case without classical communication, the most general setup consists of the sender doing a POVM measurement on the incoming state, and the receiver randomly preparing some state $\sigma_j$ with probability $p_j$.
The index of the chosen state $j$ can serve as classical side information as well.
We thus define $\instr{\mpc}(A\rightarrow A')\subset\qi(A\rightarrow A')$ as the set of all instruments $(\cE_{i,j})_{i,j}$ of the form\footnote{More axiomatically, the sets $\instr{\mpc}(A\rightarrow B')$ and $\instr{\ebc}(A\rightarrow B')$ could equivalently also be defined as $\instr{\lo}(A;B\rightarrow A';B')$ and $\instr{\locc}(A;B\rightarrow A';B')$ where $\dimension(A')=\dimension(B)=1$.}
\begin{equation}
  \cE_{i,j}(\rho)=\tr[M_i\rho]p_j\sigma_j
\end{equation}
where $(M_i)_i\in\povm(A)$, $(p_j)_j$ is a probability vector and $\sigma_j\in\dops(A')$.
In the setup with classical communication, the randomly prepared state of the receiver may additionally depend on the measurement outcome of the POVM.
As such, we define $\instr{\ebc}(A\rightarrow A')\subset\qi(A\rightarrow A')$ as all instruments $(\cE_i)_i$ of the form
\begin{equation}
  \cE_i(\rho)=\tr[M_i\rho]\sigma_i
\end{equation}
as well as all coarse-grainings of such instruments, where $(M_i)_i\in\povm(A)$ and $\sigma_i\in\dops(A')$.
Note that without loss of generality, we could assume that the receiver prepares a deterministic state $\sigma_i$ given the POVM outcome $i$, as the case of a randomly chosen state can be captured by fine-graining the POVM elements $M_i$.

We can directly recover $\ebc(A\rightarrow A')=\qichan[\instr{\ebc}(A\rightarrow A')]$ and $\mpc(A\rightarrow A')=\qichan[\instr{\mpc}(A\rightarrow A')]$.
For practical purposes, we are more interested in the decomposition sets
\begin{align}\label{eq:mpc_star}
  \mpc^{\star}(A\rightarrow A') &\coloneqq \qids[\instr{\mpc}(A\rightarrow A')] \\
  &= \left\{ \rho\mapsto \sum_{i,j}\beta_{i,j} \tr[M_i\rho]p_j\sigma_j \middle\vert
  \begin{array}{l}
    \abs{\beta_{i,j}}\leq 1, (M_i)_i\in\povm(A), \\
    p_j \text{ probabilities }, \sigma_j\in\dops(A')
  \end{array}
  \right\}
\end{align}
and
\begin{align}
  \ebc^{\star}(A\rightarrow A') &\coloneqq\qids[\instr{\ebc}(A\rightarrow A')] \\
  &= \{\rho\mapsto \sum_i\beta_i \tr[M_i\rho]\sigma_i | \abs{\beta_i}\leq 1, (M_i)_i\in\povm(A), \sigma_i\in\dops(A')\}
\end{align}
which capture the possibility of classical side information.
Since $\instr{\ebc}(A\rightarrow A')$ is closed under mixture (the sender can randomly choose between two POVMs to perform and broadcast that choice to the receiver), $\ebc^{\star}$ is absolutely convex by~\Cref{lem:decompset_absolutely_convex}.

We note that our definition of $\mpc^{\star}$ is slightly more general than the decomposition basis
\begin{equation}
  \underline{\mpc}(A\rightarrow A')\coloneqq \{\rho\mapsto \tr[O\rho]\sigma | O\in\herm(A),\sigma\in\herm(A'),\opnorm{O}\leq 1,\norm{\sigma}_1\leq 1\}
\end{equation}
introduced in \reference~\cite{brenner2023_optimal}.
Indeed, our construction of $\mpc^{\star}$ utilizes the full generality of classical side information by allowing the weighting to be jointly dependent on the classical side information of both the sender and receiver, whereas
\begin{equation}
  \underline{\mpc}(A\rightarrow A') = \{\rho \mapsto \sum_{i,j}  \tr[\alpha_i\rho] \beta_jp_j\sigma_j | \abs{\alpha_i}\leq 1, \abs{\beta_j}\leq 1\}
\end{equation}
doesn't allow for correlated weighting.
This discrepancy is essentially the same as the one between $\lo^{\star}$ and~\Cref{eq:old_lostar_def} that we already mentioned previously.
Clearly, $\underline{\mpc}(A\rightarrow A')\subset \mpc^{\star}(A\rightarrow A')$, but the two sets are generally not equal.
Fortunately, as we will see, the discrepancy is of no further consequence for the study of the quasiprobability extent.
In fact, it will at times be more convenient to work with $\underline{\mpc}(A\rightarrow A')$ for technical reasons.

Working directly with the channel representations of these various sets can become somewhat unwieldy.
It can be useful to consider the associated Choi representations instead.

\begin{proposition}{}{wirecut_ds_choi_characterization}
  Let $\cE\in\hp(A\rightarrow A')$.
  Then
  \begin{itemize}
    \item $\cE\in\mpc(A\rightarrow A') \Longleftrightarrow \choi{\cE}=\frac{\id_A}{\dimension(A)}\otimes \sigma$ for some $\sigma\in\dops(A')$.
    \item $\cE\in\underline{\mpc}(A\rightarrow A') \Longleftrightarrow \choi{\cE} = \frac{O}{\dimension(A)}\otimes\sigma$ \\
    for some $O\in\herm(A),\opnorm{O}\leq 1$ and $\sigma\in\herm(A'),\norm{\sigma}_1\leq 1.$
    \item
    $\begin{aligned}[t]
        \cE\in\mpc^{\star}(A\rightarrow A') \Longleftrightarrow \choi{\cE}
        &= \frac{1}{\dimension(A)} \sum_i \left( M_i \otimes \left(\sum_j\beta_{i,j}p_j\sigma_j\right)\right) \\
        &= \frac{1}{\dimension(A)} \sum_j p_j \left(\sum_i \beta_{i,j}M_i\right) \otimes \sigma_j
    \end{aligned}$ \\
    for some $(M_i)_i\in\povm(A)$, probabilities $(p_j)_j$, $\abs{\beta_{i,j}}\leq 1$ and $\sigma_j\in\dops(A')$.
    \item $\cE\in\ebc(A\rightarrow A') \Longleftrightarrow \choi{\cE}\in\sep(A;A')$ and $\tr_{A'}[\choi{\cE}]=\frac{\id_A}{\dimension(A)}.$
    \item $\cE\in\ebc^{\star}(A\rightarrow A') \Longleftrightarrow \choi{\cE}=\rho^+ - \rho^-$ \\
    where $\rho^{\pm}\in[0,1]\cdot\sep(A;A')$ are sub-normalized separable states and $\tr_{A'}[\rho^++\rho^-]=\frac{\id_A}{\dimension(A)}$.
  \end{itemize}
\end{proposition}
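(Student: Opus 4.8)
\textbf{Proof proposal for \Cref{prop:wirecut_ds_choi_characterization}.}

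The plan is to prove each of the five equivalences by unfolding the definitions of the relevant channel sets from~\Cref{sec:timelike_ds} and applying the defining properties of the Choi isomorphism (recall that $\choi{\cE}$ is completely positive iff $\choi{\cE}\loewnergeq 0$, Hermitian iff $\cE$ is Hermitian-preserving, and that $\cE$ is trace-preserving iff $\tr_{A'}[\choi{\cE}]=\id_A/\dimension(A)$). The common mechanism throughout is that the Choi operator linearizes the channel: for $\cE(\rho)=\tr[O\rho]\sigma$ one computes directly
\begin{equation}
  \choi{\cE} = (\idchan_A\otimes\cE)(\proj{\Psi}_{AA'}) = \frac{1}{\dimension(A)}\sum_{i,j}\ketbra{i}{j}_A \tr[O\ketbra{j}{i}]\sigma = \frac{O^{T}}{\dimension(A)}\otimes\sigma
\end{equation}
where the transpose is w.r.t.\ the fixed basis defining $\ket{\Psi}$; absorbing the transpose into the (basis-dependent) naming of $O$ and $M_i$, this reads $\choi{\cE}=\frac{O}{\dimension(A)}\otimes\sigma$. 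All five statements are then obtained by taking linear combinations of this elementary identity.

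First I would handle the easy cases. For $\mpc(A\rightarrow A')$: an element is $\cE(\rho)=\tr[\rho]\sigma=\tr[\id_A\rho]\sigma$ with $\sigma\in\dops(A')$, so the elementary identity gives $\choi{\cE}=\frac{\id_A}{\dimension(A)}\otimes\sigma$; conversely such a Choi operator is positive with $\tr_{A'}$ equal to $\id_A/\dimension(A)$ and manifestly comes from a measure-and-prepare channel. For $\underline{\mpc}$ and $\mpc^{\star}$ the argument is identical after expanding their definitions (\Cref{eq:mpc_star} for $\mpc^{\star}$): linearity of the Choi map turns the sum $\rho\mapsto\sum_{i,j}\beta_{i,j}\tr[M_i\rho]p_j\sigma_j$ into $\frac{1}{\dimension(A)}\sum_i M_i\otimes(\sum_j\beta_{i,j}p_j\sigma_j)$, and the two displayed rewritings are just reordering the double sum. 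The converse directions require reading off a valid $(M_i)_i\in\povm(A)$, probabilities $(p_j)_j$, coefficients $\abs{\beta_{i,j}}\leq1$ and $\sigma_j\in\dops(A')$ from a given Choi operator of the stated tensor-sum form, which one does by inspection since the data appearing in the tensor decomposition is exactly the data parametrizing the channel. For $\ebc(A\rightarrow A')$: by~\cite[Theorem 2]{horodecki2003_entanglement}, $\cE$ is entanglement breaking iff $\choi{\cE}\in\sep(A;A')$ (up to normalization); since $\ebc$ channels are trace-preserving by definition, the extra constraint $\tr_{A'}[\choi{\cE}]=\id_A/\dimension(A)$ is automatic, and conversely a separable Choi operator with that marginal defines an entanglement-breaking channel. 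Alternatively this is the special case $\dimension(B)=\dimension(A')=1$ of the $\locc$ Choi characterization noted in~\Cref{sec:spacelike_ds}.

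The one step that needs slightly more care---and which I expect to be the main (mild) obstacle---is the last equivalence for $\ebc^{\star}(A\rightarrow A')$. By~\Cref{lem:characterization_expanded_decomposition_set}, using that $\instr{\ebc}$ is coarse-grainable, trivially fine-grainable and closed under mixture, any $\cE\in\ebc^{\star}$ can be written as $\cE=\cE^+-\cE^-$ with $(\cE^+,\cE^-)\in\instr{\ebc}$; the pair $(\cE^+,\cE^-)$ being an $\ebc$ instrument means $\cE^{\pm}$ are completely positive with $\cE^++\cE^-\in\cptp$ and $\choi{\cE^{\pm}}/\tr[\choi{\cE^{\pm}}]\in\sep(A;A')$ (by the Choi characterization of separable instruments from~\Cref{sec:spacelike_ds}). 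Setting $\rho^{\pm}\coloneqq\choi{\cE^{\pm}}$ gives sub-normalized separable operators, i.e.\ $\rho^{\pm}\in[0,1]\cdot\sep(A;A')$, with $\choi{\cE}=\rho^+-\rho^-$ and $\tr_{A'}[\rho^++\rho^-]=\tr_{A'}[\choi{\cE^++\cE^-}]=\id_A/\dimension(A)$ from trace-preservation of $\cE^++\cE^-$. Conversely, given $\rho^{\pm}\in[0,1]\cdot\sep(A;A')$ with $\tr_{A'}[\rho^++\rho^-]=\id_A/\dimension(A)$, the maps $\cE^{\pm}\coloneqq \choi^{-1}(\rho^{\pm})$ are completely positive with separable Choi operators and $\cE^++\cE^-$ trace-preserving, hence $(\cE^+,\cE^-)\in\instr{\ebc}$ and $\rho^+-\rho^-$ represents an element of $\ebc^{\star}$. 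The only point to watch is the bookkeeping for coarse-grainings in the definition of $\instr{\ebc}$ and the degenerate cases where $\cE^+$ or $\cE^-$ vanishes; these are handled exactly as in the proof of~\Cref{lem:characterization_expanded_decomposition_set}, so no new idea is required.
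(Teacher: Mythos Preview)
Your proposal is correct and follows essentially the same route as the paper: both compute the elementary Choi identity $\choi{\rho\mapsto\tr[O\rho]\sigma}=\frac{O^T}{\dimension(A)}\otimes\sigma$, extend by linearity to the three $\mpc$ variants, use the Horodecki--Shor--Ruskai result for $\ebc$, and invoke \Cref{lem:characterization_expanded_decomposition_set} for $\ebc^{\star}$. The only difference is that the paper writes out the converse directions for $\ebc$ and $\ebc^{\star}$ explicitly (taking a separable decomposition $\rho^{\pm}=\sum p_i^{\pm}\proj{f_i^{\pm}}\otimes\proj{g_i^{\pm}}$ and verifying that $\{\dimension(A)p_i^+\proj{f_i^+}\}\cup\{\dimension(A)p_j^-\proj{f_j^-}\}$ is a POVM from the partial-trace constraint), whereas you defer to the cited theorem and to the separable-instrument characterization; you correctly flag this as routine bookkeeping.
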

The statement about the Choi characterization of $\ebc$ is originally due to Horodecki, Shor and Ruskai~\Cite[Theorem 4]{horodecki2003_entanglement}.
\begin{proof}
  Consider a quantum instrument $(\cE_{i,j})_{i,j}\in \instr{\mpc}(A\rightarrow A')$ of the form $\cE_{i,j}(\rho)=\tr[M_i\rho]p_j\sigma_j$.
  Its Choi representation is as follows
  \begin{align}
    (\idchan\otimes\cE_{i,j})\left(\frac{1}{\dimension(A)}\sum_{k,l}\ketbra{kk}{ll}\right)
    &= \frac{1}{\dimension(A)}\sum_{k,l} \ketbra{k}{l} \otimes \left( \tr[M_i\ketbra{k}{l}]p_j\sigma_j\right) \\
    &= \frac{p_j}{\dimension(A)}\left( \sum_{k,l}\tr[M_i\ketbra{k}{l}] \ketbra{k}{l} \right) \otimes \sigma_j \\
    &= \frac{p_j}{\dimension(A)} M_i^{T} \otimes \sigma_j \, .
  \end{align}
  As such, the Choi representations of a map $\cE\in\mpc(A\rightarrow A')$ is given by
  \begin{equation}
    \sum_{i,j} \frac{p_j}{\dimension(A)} M_i^{T} \otimes \sigma_j = \frac{\id_A}{\dimension(A)}\otimes \left(\sum_jp_j\sigma_j\right) \, .
  \end{equation}
  Conversely, every state $\frac{\id_A}{\dimension(A)}\otimes\sigma$ can be written in this form by appropriately choosing $p_j,\sigma_j$.
  Analogously, the Choi representation of a map $\cE\in\underline{\mpc}(A\rightarrow A')$ is given by
  \begin{equation}
    \sum_{i,j} \alpha_i\beta_j \frac{p_j}{\dimension(A)} M_i^T \otimes \sigma_j =\frac{1}{\dimension(A)}\left(\sum_i \alpha_i M_i^T\right)\otimes \left(\sum_jp_j\beta_j\sigma_j\right) \, .
  \end{equation}
  This can be made equal to $\frac{1}{\dimension(A)}O\otimes\sigma$ for any $O,\sigma$ such that $\opnorm{O},\norm{\sigma}_1\leq 1$ by appropriately choosing the $M_i,\alpha_i,p_j,\beta_j$ and $\sigma_j$.
  Finally, the Choi representation of a map $\cE\in\mpc^{\star}(A\rightarrow A')$ is precisely
  \begin{equation}
    \sum_{i,j} \beta_{i,j} \frac{p_j}{\dimension(A)} M_i^T \otimes \sigma_j = \sum_i \left(\frac{M_i^T}{\dimension(A)} \otimes \left(\sum_j\beta_{i,j}p_j\sigma_j\right)\right) \, .
  \end{equation}

  Next, we look at a quantum instrument $(\cE_{i})_{i}\in \instr{\ebc}(A\rightarrow A')$ of the form $\cE(\rho)=\tr[M_i\rho]\sigma_i$.
  By a similar calculation as above, its Choi representation is
  \begin{align}
    (\idchan\otimes\cE_i)\left(\frac{1}{\dimension(A)}\sum_{k,l}\ketbra{kk}{ll}\right)
    &= \frac{1}{\dimension(A)}\sum_{k,l} \ketbra{k}{l} \otimes \left( \tr[M_i\ketbra{k}{l}]\sigma_i\right) \\
    &= \frac{M_i^T}{\dimension(A)} \otimes \sigma_i \, .
  \end{align}
  As such the maps $\cE\in\ebc(A\rightarrow A')$ are of the form
  \begin{equation}
    \sum_i \frac{M_i^T}{\dimension(A)} \otimes \sigma_i = \sum_i \frac{\tr[M_i]}{\dimension(A)} \frac{M_i^T}{\tr[M_i]}\otimes \sigma_i \, .
  \end{equation}
  These maps are clearly separable and have partial trace $\sum_i \frac{M_i^T}{\dimension(A)} = \frac{\id_A}{\dimension(A)}$.
  Conversely, a general separable state in $\sep(A;A')$ can be written as $\sum_i p_i \proj{f_i}\otimes\proj{g_i}$.
  If the state additionally fulfills the partial trace constraint, then $\sum_i p_i\proj{f_i} = \frac{\id_A}{\dimension(A)}$.
  The above calculation shows that this state is the Choi representation of the map
  \begin{equation}
    \cE'(\rho)\coloneqq \sum_i\tr[\dimension(A)p_i\proj{f_i}\rho]\proj{g_i} \, .
  \end{equation}
  This map $\cE'$ lies in $\ebc(A\rightarrow A')$ since $\sum_i \dimension(A)p_i\proj{f_i}=\id_A$.

  Finally, the characterization of $\ebc^{\star}(A\rightarrow A')$ is a direct consequence of \Cref{lem:characterization_expanded_decomposition_set}: Any $\cE\in\ebc^{\star}(A\rightarrow A')$ can be written as $\cE^+-\cE^-$ for some instrument $(\cE^+,\cE^-)\in\instr{\ebc}$.
  That is, there exists a POVM $(M_i)_{i=1,\dots,m}$ and states $(\sigma_i)_{i=1,\dots,m}$ and some $k\in\{0,\dots,m-1\}$ such that
  \begin{equation}
    \cE^+(\rho) = \sum_{i=1}^k \tr[M_i\rho]\sigma_i \quad \text{ and } \quad \cE^-(\rho) = \sum_{i=k+1}^m \tr[M_i\rho]\sigma_i \, .
  \end{equation}
  By the same argument as for the Choi characterization of $\ebc(A\rightarrow A')$, $\choi{\cE^+}$ and $\choi{\cE^-}$ are sub-normalized separable states and $\choi{\cE^+ + \cE^-}\in\sep(A;A')$ fulfills the partial trace constraint.
  For the converse, consider some general sub-normalized states $\rho^{\pm}$ which are separable
  \begin{equation}
    \rho^+ = \sum_i p_i^+ \proj{f_i^+}\otimes\proj{g_i^-} \quad \text{ and } \quad \rho^- = \sum_j p_j^- \proj{f_j^-}\otimes\proj{g_j^-}
  \end{equation}
  and fulfill the partial trace constraint
  \begin{equation}
    \sum_i p_i^+\proj{f_i^+} + \sum_j p_j^-\proj{f_i^-} = \frac{\id_A}{\dimension(A)} \, .
  \end{equation}
  Then $\rho^+-\rho^-$ is the Choi representation of the channel
  \begin{align}
    \cE'(\rho) &\coloneqq \sum_i \tr[\dimension(A)p_i^+\proj{f_i^+}\rho]\proj{g_i^+} \\
    &\quad + \sum_j (-1) \tr[\dimension(A)p_j^-\proj{f_j^-}\rho]\proj{g_j^-}
  \end{align}
  which lies in $\ebc^{\star}$ since 
  \begin{equation}
    \sum_i\dimension(A)p_i^+\proj{f_i^+} + \sum_j\dimension(A)p_j^-\proj{f_j^-} = \id_A \, .
  \end{equation}
\end{proof}

The Choi characterization of $\mpc^{\star}$ is more complicated than the other sets.
Luckily, $\mpc^{\star}$ and $\underline{\mpc}$ are equivalent in terms of their convex hulls.
\begin{corollary}{}{equivalence_mpc_def}
  For any systems $A,A'$ one has
  \begin{equation}
    \conv(\underline{\mpc}(A\rightarrow A')) = \conv(\mpc^{\star}(A\rightarrow A')) \, .
  \end{equation}
\end{corollary}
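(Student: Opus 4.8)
The plan is to prove the two inclusions $\conv(\underline{\mpc})\subseteq\conv(\mpc^{\star})$ and $\conv(\mpc^{\star})\subseteq\conv(\underline{\mpc})$ separately, working throughout with the Choi representations provided by \Cref{prop:wirecut_ds_choi_characterization}. The first inclusion is essentially immediate: given any $\cE\in\underline{\mpc}(A\rightarrow A')$ with $\choi{\cE}=\frac{1}{\dimension(A)}O\otimes\sigma$, where $\opnorm{O}\leq 1$ and $\norm{\sigma}_1\leq 1$, I would write $O=\sum_i\alpha_iM_i$ as a convex-type combination of a POVM with $\pm 1$-type coefficients (e.g.\ via the spectral decomposition of $O$, padding to a POVM with $\id-\abs{O}$) and $\sigma=\sum_j p_j\beta_j\sigma_j$ analogously, so that $\choi{\cE}=\frac{1}{\dimension(A)}\sum_{i,j}\alpha_i\beta_j p_j M_i\otimes\sigma_j$. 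Setting $\beta_{i,j}\coloneqq\alpha_i\beta_j$ (which satisfies $\abs{\beta_{i,j}}\leq 1$) shows $\cE\in\mpc^{\star}(A\rightarrow A')$, hence $\underline{\mpc}\subseteq\mpc^{\star}\subseteq\conv(\mpc^{\star})$. This direction needs essentially no work beyond bookkeeping.

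The substantive direction is $\mpc^{\star}(A\rightarrow A')\subseteq\conv(\underline{\mpc}(A\rightarrow A'))$. Here I would take $\cE\in\mpc^{\star}$ with Choi operator $\choi{\cE}=\frac{1}{\dimension(A)}\sum_j p_j\big(\sum_i\beta_{i,j}M_i\big)\otimes\sigma_j$, using the second of the two equivalent forms in \Cref{prop:wirecut_ds_choi_characterization}. The idea is that this is already displayed as a convex combination (with weights $p_j$) of operators of the form $\frac{1}{\dimension(A)}O_j\otimes\sigma_j$ where $O_j\coloneqq\sum_i\beta_{i,j}M_i$ and $\sigma_j\in\dops(A')$, so $\norm{\sigma_j}_1=1\leq 1$. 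It remains only to check $\opnorm{O_j}\leq 1$ for each $j$: since $(M_i)_i$ is a POVM and $\abs{\beta_{i,j}}\leq 1$, we have $-\id_A\preccurlyeq-\sum_iM_i\preccurlyeq\sum_i\beta_{i,j}M_i\preccurlyeq\sum_iM_i\preccurlyeq\id_A$, whence $\opnorm{O_j}\leq 1$. Therefore each summand $\rho\mapsto\tr[O_j^T\rho]\sigma_j$ (the channel with Choi operator $\frac{1}{\dimension(A)}O_j\otimes\sigma_j$) lies in $\underline{\mpc}(A\rightarrow A')$, and $\cE=\sum_j p_j\cE_j$ is a genuine convex combination of them, giving $\cE\in\conv(\underline{\mpc})$. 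Combining both inclusions yields the claimed equality.

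The main obstacle is not conceptual but a matter of being careful with two small points: first, making the passage from a bounded Hermitian $O$ to a POVM-plus-coefficients decomposition (and back) precise, including the transpose that appears in the Choi operator of a measure-and-prepare channel — one must make sure the POVM is on the right system and the $\opnorm{O}\leq 1$ constraint is genuinely preserved under transposition, which it is since $\opnorm{\cdot}$ is transpose-invariant. Second, one should note that $\conv$ here is the ordinary convex hull (not absolute convex hull), so no negative overall weights are introduced; the ``negativity'' lives entirely inside the coefficients $\beta_{i,j}$ of a single element of $\mpc^{\star}$ or inside $O$ for $\underline{\mpc}$. I would also remark, as the surrounding text does, that this corollary is exactly the analogue for time-like cuts of the earlier observation that $\lo^{\star}$ is strictly larger than the set in \Cref{eq:old_lostar_def} yet has the same convex hull, so that by \Cref{lem:gamma_hulls} the quasiprobability extent is unaffected by which of the two one uses.
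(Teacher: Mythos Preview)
Your proposal is correct and follows exactly the approach the paper takes: use $\underline{\mpc}\subset\mpc^{\star}$ for one inclusion, and for the other read off from the second form of the Choi characterization in \Cref{prop:wirecut_ds_choi_characterization} that $\choi{\cE}=\frac{1}{\dimension(A)}\sum_j p_j O_j\otimes\sigma_j$ with $O_j=\sum_i\beta_{i,j}M_i$ satisfying $\opnorm{O_j}\leq 1$. The paper's proof is simply the one-line remark ``evident from the Choi characterization,'' and you have correctly unpacked precisely the details it elides.
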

Since the quasiprobability extent only depends on the convex hull of the decomposition set (recall~\Cref{lem:gamma_hulls}), this corollary implies that it doesn't matter whether we work with $\gamma_{\underline{\mpc}}$ or $\gamma_{\mpc^{\star}}$.
\begin{proof}
  Since $\underline{\mpc}(A\rightarrow A')\subset \mpc^{\star}(A\rightarrow A')$, it suffices to show that $\mpc^{\star}(A\rightarrow A')\subset \conv(\underline{\mpc}(A\rightarrow A'))$.
  This is evident from the Choi characterization in~\Cref{prop:wirecut_ds_choi_characterization}.
\end{proof}

\subsection{Utility of classical side information}\label{sec:wire_cut_side_info}
In this section, we will study the importance of intermediate measurements and classical side information for the achievability of time-like cuts.
This discussion will mirror the one about space-like cuts in~\Cref{sec:gate_cut_side_info}.
As alluded to previously, the channels $\mpc(A\rightarrow A')$ without classical side information are useless for the task for circuit knitting.
\begin{lemma}{}{span_mpc_ebc}
  For any systems $A,A'$ one has
  \begin{equation}\label{eq:span_mpc}
    \spn_{\mathbb{R}}(\mpc(A\rightarrow A')) = \{\rho\mapsto \tr[\rho]\sigma | \sigma\in\herm(A') \}
  \end{equation}
  \begin{equation}
    \spn_{\mathbb{R}}(\ebc(A\rightarrow A')) = \mathbb{R}\hptp(A\rightarrow A')
  \end{equation}
  and
  \begin{equation}
    \spn_{\mathbb{R}}(\mpc^{\star}(A\rightarrow A')) = \spn_{\mathbb{R}}(\ebc^{\star}(A\rightarrow A'))= \hp(A\rightarrow A') \, .
  \end{equation}
\end{lemma}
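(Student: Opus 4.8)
The plan is to prove the three identities in turn, each time passing to the Choi picture and invoking results already in hand. For \eqref{eq:span_mpc}: by the Choi characterization in \Cref{prop:wirecut_ds_choi_characterization}, $\cE\in\mpc(A\rightarrow A')$ exactly when $\choi{\cE}=\frac{\id_A}{\dimension(A)}\otimes\sigma$ for some $\sigma\in\dops(A')$. Since the density operators linearly span $\herm(A')$, the real span of these Choi operators is $\{\frac{\id_A}{\dimension(A)}\otimes\tau:\tau\in\herm(A')\}$, and under the Choi isomorphism this is precisely the set of maps $\rho\mapsto\tr[\rho]\tau$ with $\tau\in\herm(A')$ (the Choi operator of $\rho\mapsto\tr[\rho]\tau$ being $\frac{\id_A}{\dimension(A)}\otimes\tau$). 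That settles the first line.

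Next I would prove $\spn_{\mathbb{R}}(\ebc(A\rightarrow A'))=\mathbb{R}\hptp(A\rightarrow A')$. The inclusion ``$\subseteq$'' is immediate from $\ebc(A\rightarrow A')\subseteq\cptp(A\rightarrow A')$ together with $\spn_{\mathbb{R}}(\cptp(A\rightarrow A'))=\mathbb{R}\hptp(A\rightarrow A')$ from \Cref{lem:cptp_span}. For the reverse it is enough, by the same lemma, to show every $\cE\in\cptp(A\rightarrow A')$ lies in $\spn_{\mathbb{R}}(\ebc(A\rightarrow A'))$. Here I would use the maximally mixed state $\tau_0\coloneqq\id_{AA'}/(\dimension(A)\dimension(A'))$, which is the Choi operator of the fully depolarizing channel and hence entanglement breaking. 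Since $\tau_0$ lies in the norm-interior of the separable states and $\tr_{A'}\big[(1-t)\tau_0+t\choi{\cE}\big]=\frac{\id_A}{\dimension(A)}$ for all $t$, there is a $t_0\in(0,1]$ with $\rho_{t_0}\coloneqq(1-t_0)\tau_0+t_0\choi{\cE}$ both a state and separable, hence an $\ebc$ Choi operator by \Cref{prop:wirecut_ds_choi_characterization}; then $\choi{\cE}=\tfrac1{t_0}\rho_{t_0}-\tfrac{1-t_0}{t_0}\tau_0$ writes $\cE$ as a real combination of two entanglement breaking channels, so $\cptp(A\rightarrow A')\subseteq\spn_{\mathbb{R}}(\ebc(A\rightarrow A'))$ and therefore $\mathbb{R}\hptp(A\rightarrow A')=\spn_{\mathbb{R}}(\cptp(A\rightarrow A'))\subseteq\spn_{\mathbb{R}}(\ebc(A\rightarrow A'))$.

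For the last line, note first that every element of $\mpc^{\star}(A\rightarrow A')$ and of $\ebc^{\star}(A\rightarrow A')$ is a real linear combination of completely positive maps, hence Hermitian-preserving, so both spans sit inside $\hp(A\rightarrow A')$. Moreover $\mpc^{\star}(A\rightarrow A')\subseteq\ebc^{\star}(A\rightarrow A')$: a map $\rho\mapsto\sum_{i,j}\beta_{i,j}\tr[M_i\rho]\,p_j\sigma_j$ is of the $\ebc^{\star}$ form using the POVM $(p_jM_i)_{i,j}$, which indeed sums to $\id_A$. Hence it suffices to show $\hp(A\rightarrow A')\subseteq\spn_{\mathbb{R}}(\mpc^{\star}(A\rightarrow A'))$. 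By \Cref{cor:equivalence_mpc_def} the right-hand side equals $\spn_{\mathbb{R}}(\underline{\mpc}(A\rightarrow A'))$, and by \Cref{prop:wirecut_ds_choi_characterization} the Choi operators of $\underline{\mpc}(A\rightarrow A')$ are exactly the simple tensors $\tfrac1{\dimension(A)}O\otimes\sigma$ with $O\in\herm(A)$, $\sigma\in\herm(A')$ (the norm bounds being irrelevant after rescaling); taking orthonormal Hermitian bases of $\herm(A)$ and $\herm(A')$ produces a basis of $\herm(AA')$ made of such tensors, so the real span is all of $\herm(AA')$, i.e.\ all of $\hp(A\rightarrow A')$ under the Choi isomorphism. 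Combining this with $\mpc^{\star}\subseteq\ebc^{\star}\subseteq\{\text{HP maps}\}$ gives both equalities. As an alternative, one can bypass the Choi computation entirely and cite \Cref{lem:span_lo,lem:span_locc} through the identifications $\mpc^{\star}(A\rightarrow A')=\lo^{\star}(A;\mathbb{C}\rightarrow\mathbb{C};A')$ and $\ebc^{\star}(A\rightarrow A')=\locc^{\star}(A;\mathbb{C}\rightarrow\mathbb{C};A')$.

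I expect the only step that is more than bookkeeping to be the reverse inclusion in the second identity: unlike the others it cannot be read off a characterization and really requires exhibiting a decomposition of an arbitrary quantum channel into entanglement breaking channels, which is exactly what the ``mix with the maximally mixed Choi state'' argument provides.
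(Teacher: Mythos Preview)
Your proof is correct and follows essentially the approach the paper sketches (using the Choi characterizations of \Cref{prop:wirecut_ds_choi_characterization}, and noting the alternative of specializing \Cref{lem:span_lo,lem:span_locc}). The paper does not spell out the details.

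One genuine difference worth noting is your argument for the reverse inclusion in the second identity. Specializing the paper's proof of \Cref{lem:span_locc} would amount to a Stinespring-plus-gate-decomposition argument: dilate to a unitary, write it as a product of CNOTs and single-qubit gates, and decompose each CNOT into LOCC (hence EBC in the degenerate bipartition). Your ``mix with the maximally mixed Choi state'' argument is considerably more direct and transparent in this setting, exploiting that $\tau_0$ lies in the interior of $\sep(A;A')$ and already satisfies the partial-trace constraint. Both work; yours is the more economical route here and makes no detour through circuit decompositions.
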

The proof of this statement follows quite directly from the Choi characterizations in~\Cref{prop:wirecut_ds_choi_characterization}.
We refrain from stating the full proof, as it essentially follows the identical arguments as~\Cref{lem:span_lo} and~\Cref{lem:span_locc}.
In fact,~\Cref{lem:span_mpc_ebc} can be considered a special instance of these two results.

To summarize, we have shown that classical side information through intermediate measurements is absolutely crucial for time-like cuts without classical communication.
In contrast, when classical communication is available, circuit knitting remains possible even without classical side information.
In fact, one can even show that in this case intermediate measurements provide no advantage in reducing the overhead.
\begin{proposition}{}{negtrick_ebc}
  Let $\cE\in\mathbb{R}\hptp(A\rightarrow A')$.
  Then $\gamma_{\ebc}(\cE)=\gamma_{\ebc^{\star}}(\cE)$.
\end{proposition}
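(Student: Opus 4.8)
The plan is to mimic closely the argument of \Cref{prop:utility_cptp_interm_mmts} (and its analogue \Cref{prop:negtrick_sepc_pptc}): one inclusion is immediate, and the whole content is the reverse one, which amounts to rewriting a single‑element QPD with respect to $\ebc^{\star}$ as a two‑element QPD with respect to $\ebc$ of no larger $1$‑norm. Since $\ebc\subset\ebc^{\star}$, \Cref{lem:gamma_ds_bound} gives $\gamma_{\ebc}(\cE)\geq\gamma_{\ebc^{\star}}(\cE)$ for free, so it remains to prove $\gamma_{\ebc}(\cE)\leq\gamma_{\ebc^{\star}}(\cE)$.

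For the reverse inequality I would first recall that $\ebc^{\star}$ is absolutely convex ($\ebc^{\star}=\qids[\instr{\ebc}]$ with $\instr{\ebc}$ closed under mixture, so \Cref{lem:decompset_absolutely_convex} applies), hence by \Cref{lem:one_element_qpd} it suffices to start from a QPD of the form $\cE=\kappa\cF$ with $\kappa\geq 0$ and $\cF\in\ebc^{\star}(A\rightarrow A')$, and to produce from it a QPD of $\cE$ into $\ebc$ with $1$‑norm at most $\kappa$; taking the infimum over such $\kappa$ then yields the claim. The case $\kappa=0$ (equivalently $\cE=0$) is trivial, so assume $\kappa>0$. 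By the Choi characterization of $\ebc^{\star}$ in \Cref{prop:wirecut_ds_choi_characterization}, write $\choi{\cF}=\rho^{+}-\rho^{-}$ with $\rho^{\pm}$ sub‑normalized separable states and $\tr_{A'}[\rho^{+}+\rho^{-}]=\id_A/\dimension(A)$, and let $\cF^{\pm}\in\cp(A\rightarrow A')$ be the completely positive maps with $\choi{\cF^{\pm}}=\rho^{\pm}$, so that $\cF=\cF^{+}-\cF^{-}$ and $\cF^{+}+\cF^{-}\in\ebc(A\rightarrow A')$.

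The key step is to observe that each of $\cF^{+}$ and $\cF^{-}$ is individually proportional to an element of $\ebc$. Indeed, $\cF$ is a positive scalar multiple of $\cE\in\mathbb{R}\hptp(A\rightarrow A')$, so $\tr_{A'}[\choi{\cF}]=\tr_{A'}[\rho^{+}]-\tr_{A'}[\rho^{-}]$ is proportional to $\id_A$; combined with $\tr_{A'}[\rho^{+}]+\tr_{A'}[\rho^{-}]=\id_A/\dimension(A)$, both $\tr_{A'}[\rho^{+}]$ and $\tr_{A'}[\rho^{-}]$ are proportional to $\id_A$, i.e. $\cF^{\pm}$ are proportional to trace‑preserving maps. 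Since moreover $\rho^{\pm}$ are separable, the normalized Choi operators $\rho^{\pm}/\tr[\rho^{\pm}]$ are separable states whose partial trace equals $\id_A/\dimension(A)$ (using $\tr[\rho^{\pm}]=\tr[\tr_{A'}[\rho^{\pm}]]$ together with $\tr_{A'}[\rho^{\pm}]\propto\id_A$), hence $\cF^{\pm}/\tr[\choi{\cF^{\pm}}]\in\ebc(A\rightarrow A')$ by the Choi characterization of $\ebc$. Treating the degenerate cases $\cF^{+}=0$ or $\cF^{-}=0$ separately (where $\cF$ already lies in $\pm\,\ebc$, and these coincide with $\tr[\choi{\cF^{\pm}}]=0$ since $\cF^{\pm}$ are CP), I then obtain the valid QPD
\[
  \cE \;=\; \kappa\,\tr[\choi{\cF^{+}}]\,\frac{\cF^{+}}{\tr[\choi{\cF^{+}}]} \;-\; \kappa\,\tr[\choi{\cF^{-}}]\,\frac{\cF^{-}}{\tr[\choi{\cF^{-}}]}
\]
with $1$‑norm $\kappa\bigl(\tr[\choi{\cF^{+}}]+\tr[\choi{\cF^{-}}]\bigr)=\kappa\,\tr[\rho^{+}+\rho^{-}]=\kappa$, where the last equality uses $\tr_{A'}[\rho^{+}+\rho^{-}]=\id_A/\dimension(A)$. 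Since the argument is an essentially routine adaptation of \Cref{prop:utility_cptp_interm_mmts} and \Cref{prop:negtrick_sepc_pptc}, I do not anticipate a genuine obstacle; the only points that warrant care are the degenerate cases just mentioned and the bookkeeping showing that ``proportional to a trace‑preserving map'' plus ``separable Choi operator'' together force membership in $\ebc$ after normalization.
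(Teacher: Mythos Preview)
Your proof is correct and follows essentially the same approach as the paper's own proof: reduce to a one-element QPD $\cE=\kappa\cF$ using absolute convexity of $\ebc^{\star}$, use the Choi characterization to write $\choi{\cF}=\rho^{+}-\rho^{-}$, observe that both the sum and difference of the partial traces $\tr_{A'}[\rho^{\pm}]$ are proportional to $\id_A$ (the latter because $\cE\in\mathbb{R}\hptp$), and then normalize to obtain a two-element $\ebc$ QPD with the same $1$-norm $\kappa$. The paper handles the degenerate case $\tr[\rho^{\pm}]=0$ by the same convention you suggest.
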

The proof follows the same chain of arguments as the one of~\Cref{prop:negtrick_sepc_pptc}.
\begin{proof}
  By~\Cref{lem:gamma_ds_bound} we know $\gamma_{\ebc}(\cE)\geq\gamma_{\ebc^{\star}}(\cE)$, so it remains to show the converse.
  By the absolute convexity of $\ebc^{\star}$ and~\Cref{lem:one_element_qpd}, we can without loss of generality assume all QPDs w.r.t. $\ebc^{\star}$ to only have one element.
  Let $\cE=\kappa\cF$, $\cF\in\ebc^{\star}$ be such a QPD.
  By the Choi characterization in~\Cref{prop:wirecut_ds_choi_characterization} we can write $\choi{\cF}=\sigma^+-\sigma^-$ for two sub-normalized separable states $\sigma^+,\sigma^-$ fulfilling
  \begin{equation}
    \tr_{A'}[\sigma^++\sigma^-] = \frac{\id_A}{\dimension(A)} \, .
  \end{equation}
  At the same time,
  \begin{equation}
    \tr_{A'}[\sigma^+-\sigma^-] = \frac{1}{\kappa}\tr_{A'}[\choi{\cE}] = \frac{\tr[\choi{\cE}]}{\kappa}\frac{\id_A}{\dimension(A)}.
  \end{equation}
  This implies that both $\tr_{A'}[\sigma^+]$ and $\tr_{A'}[\sigma^-]$ are proportional to $\frac{\id_A}{\dimension(A)}$.
  As such, $\frac{\sigma^+}{\tr[\sigma^+]}$ and $\frac{\sigma^-}{\tr[\sigma^-]}$ are the Choi representations of two channels in $\ebc(A\rightarrow A')$ (in the case $\tr[\sigma^{\pm}]=0$, we define $\frac{\sigma^+}{\tr[\sigma^+]}$ to be zero) and we have a valid QPD w.r.t $\ebc$
  \begin{equation}
    \choi{\cE} = \kappa\tr[\sigma^+]\frac{\sigma^+}{\tr[\sigma^+]} - \kappa\tr[\sigma^-]\frac{\sigma^-}{\tr[\sigma^-]}
  \end{equation}
  with 1-norm of its coefficients $\kappa$.
\end{proof}

\subsection{Quasiprobability extent of the identity channel}\label{sec:id_wire_cut}
In this section, we study the quasiprobability extent of the most important channel in the context of circuit knitting with time-like cuts: the identity channel.
We will study this quasiprobability extent both w.r.t. $\ebc^{\star}$ and w.r.t. $\mpc^{\star}$.
\begin{theorem}{}{gamma_mpc_id}
  Let $A$ and $A'$ be $n$-qubit systems.
  Then
  \begin{equation}
    \gamma_{\mpc^{\star}(A\rightarrow A')}(\idchan) = 4^n \, .
  \end{equation}
\end{theorem}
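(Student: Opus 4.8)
My plan is to establish matching upper and lower bounds. The upper bound should follow from an explicit QPD of the identity channel into elements of $\underline{\mpc}$. The idea is to use the single-qubit result: on one qubit, the identity channel has Pauli transfer matrix $\idchan$, and the ``measure-and-prepare'' building blocks $\rho\mapsto\tr[Q_i\rho]\cdot\tfrac{1}{2}Q_i$ (for the four Paulis $Q_i\in\{\id,X,Y,Z\}$) are exactly the elements of $\underline{\mpc}(\mathbb{C}^2\to\mathbb{C}^2)$ with operator norm bounds $\opnorm{Q_i}=1$ and $\norm{\tfrac{1}{2}Q_i}_1=1$. Writing $\idchan(\rho)=\tfrac{1}{2}\sum_{i}\tr[Q_i\rho]Q_i = \sum_i \tr[Q_i\rho]\cdot(\tfrac12 Q_i)$ gives a single-qubit QPD with $1$-norm $4$ (one can view this through \Cref{prop:wirecut_ds_choi_characterization}: $\choi{\idchan}$ is the Bell state $\tfrac12\proj{\Phi}$, and decomposing it into the four terms $\tfrac{1}{2}Q_i\otimes Q_i/2$ has $1$-norm $4$, exactly the quasiprobability extent $\gamma_{\mpc^{\star}}$ for $n=1$). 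Taking the $n$-fold tensor product and invoking sub-multiplicativity (\Cref{cor:gamma_submult}) — recalling from \Cref{sec:timelike_ds} that $\mpc$-type sets have a tensor product structure — yields $\gamma_{\mpc^{\star}}(\idchan_{(\mathbb{C}^2)^{\otimes n}}) \le 4^n$. Alternatively, just expand directly: $\idchan^{\otimes n}(\rho) = \tfrac{1}{2^n}\sum_{\bm{i}\in\{0,1,2,3\}^n} \tr[Q_{\bm{i}}\rho]Q_{\bm{i}}$, and each term $\rho\mapsto \tr[Q_{\bm i}\rho]\cdot\tfrac{1}{2^n}Q_{\bm i}$ lies in $\underline{\mpc}$, giving $1$-norm $4^n$.

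For the lower bound, the natural approach is to exhibit a dual witness. Since $\mpc^{\star}$ is not trace-preserving but is (up to convex hull, by \Cref{cor:equivalence_mpc_def}) generated by $\underline{\mpc}$, whose elements have Choi operators of the form $\tfrac{1}{d}O\otimes\sigma$ with $\opnorm{O}\le1$, $\norm{\sigma}_1\le1$, the quasiprobability extent $\gamma_{\underline{\mpc}}(\idchan)$ equals the gauge (Minkowski functional) of $\aconv(\underline{\mpc})$ evaluated at $\idchan$. By \Cref{lem:gamma_hulls} and \Cref{lem:compact_ds}, this is $\min\{\kappa : \choi{\idchan}/\kappa \in \aconv(\{\tfrac{1}{d}O\otimes\sigma\})\}$. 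Dually, one maximizes $\tr[W\choi{\idchan}]$ over Hermitian $W$ subject to $\abs{\tr[W\cdot\tfrac{1}{d}O\otimes\sigma]}\le 1$ for all admissible $O,\sigma$. Here $d=2^n$ and $\choi{\idchan}=\tfrac{1}{d}\proj{\Phi}$ where $\ket{\Phi}=\sum_k\ket{kk}$. The constraint, after optimizing over $\sigma$ with $\norm{\sigma}_1\le1$ and $O$ with $\opnorm{O}\le1$, becomes $\norm{\operatorname{Choi}\text{-partial-contraction of }W}\le 1$ in an appropriate mixed norm; choosing $W = \proj{\Phi}$ itself (suitably normalized) should saturate things, since $\tr[\proj{\Phi}\cdot\tfrac1d \proj{\Phi}] = d$ while the constraint $\abs{\tfrac{1}{d}\tr[\proj{\Phi}(O\otimes\sigma)]} = \tfrac{1}{d}\abs{\tr[O^T\sigma]}\le \tfrac1d\opnorm{O}\norm{\sigma}_1 \le \tfrac1d$ holds — wait, this gives the witness value $d\cdot\tfrac{1}{1/d}$... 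I need to rescale: take $W$ with the constraint normalized so $\sup_{O,\sigma}\abs{\tr[W\cdot\tfrac1d O\otimes\sigma]}=1$, i.e. $W = d\proj{\Phi}$ gives $\sup = \sup\abs{\tr[O^T\sigma]} = 1$, and then $\tr[W\choi{\idchan}] = d\cdot\tr[\proj{\Phi}\tfrac1d\proj{\Phi}] = d\cdot\tfrac{d}{1}\cdot\tfrac1d$... the $1$-norm bookkeeping needs care, but the point is that plugging the maximally entangled projector as witness should return exactly $d^2 = 4^n$.

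The main obstacle will be getting the lower bound rigorously: the dual SDP for $\gamma_{\mpc^{\star}}$ is not one we derived explicitly in the excerpt (unlike $\gamma_{\cptp}$ or $\gamma_{\ppt}$), so I would either (i) derive it from scratch via Lagrangian duality using the Choi characterization of \Cref{prop:wirecut_ds_choi_characterization}, carefully tracking the POVM and probability-vector structure, or (ii) use the simpler route through $\underline{\mpc}$ and \Cref{cor:equivalence_mpc_def}: since $\underline{\mpc}$'s Choi images are precisely the balanced set $\{\tfrac{1}{d}O\otimes\sigma : \opnorm{O}\le 1,\ \norm{\sigma}_1\le1\}$, the quantity $\gamma_{\underline{\mpc}}(\idchan)$ is a cross-norm-type functional, and I would identify it via the duality between operator norm and trace norm. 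Concretely, for a product operator $X = \tfrac1d O\otimes\sigma$ normalized to $\opnorm{O}\le1$, $\norm\sigma_1\le1$, the relevant statement is that $\min\{\sum_i\abs{\lambda_i} : \choi{\idchan} = \sum_i\lambda_i X_i\}$ equals $d\cdot\norm{\proj\Phi/d}$ in the appropriate dual norm, which a direct SVD-type argument on $\proj{\Phi}$ (whose ``matrix reshaping'' across the tensor factors is rank one with singular value $d$) shows equals $d \cdot d = d^2 = 4^n$. I would present this SVD/reshaping computation as the heart of the lower bound, checking that the witness $W = d\,\proj{\Phi}$ is feasible for the dual and achieves objective value $4^n$, thereby closing the gap with the upper bound.
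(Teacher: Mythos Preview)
Your proposal is essentially correct and follows the same approach as the paper. The upper bound via the Pauli expansion $\idchan(\rho)=\sum_{Q}\tr[Q\rho]\tfrac{Q}{2^n}$ is exactly what the paper does. For the lower bound, your dual witness $W=d\proj{\Phi}$ is the right choice, and the constraint verification $|\tr[O^T\sigma]|\le\opnorm{O}\norm{\sigma}_1\le 1$ via H\"older is precisely the paper's argument; the paper just phrases it more cleanly by working directly with the superoperator trace: writing $4^n=\tr[\idchan]=\sum_i a_i\tr[\cF_i]$ and showing $|\tr[\cF_i]|=|\tr[O_i\tau_i]|\le 1$ via the SWAP trick $\tr[O_i\tau_i]=\tr[(O_i\otimes\tau_i)\swap]$ together with H\"older, which sidesteps your normalization bookkeeping entirely (the superoperator trace is exactly the linear functional $\choi{\cF}\mapsto d\,\bra{\Phi}\choi{\cF}\ket{\Phi}$, so your witness and theirs coincide).
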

\begin{proof}
  Due to~\Cref{cor:equivalence_mpc_def} and~\Cref{lem:gamma_hulls}, we can restrict ourselves to showing that $\gamma_{\underline{\mpc}(A\rightarrow A')}(\idchan) = 4^n$.
  We denote by $P_n$ the set of $n$-qubit Pauli strings $P_n\coloneqq \{\id,X,Y,Z\}^{\otimes n}$.
  We enumerate the elements of $P_n$ as $P_n=\{Q_1,Q_2,\dots,Q_{4^n}\}$.
  The set $P_n$ forms an orthonormal basis of $\herm(A)$ under the Hilbert-Schmidt inner product.

  We first show $\gamma_{\underline{\mpc}(A\rightarrow A')}(\idchan) \leq 4^n$ by finding an explicit QPD that fulfills the desired upper bound.
  We have
  \begin{align} 
    \idchan(\rho) &= \rho \\
    &= \sum_{Q\in P_n} \langle Q,\rho\rangle Q \\
    &= \sum_{Q\in P_n} \tr[Q\rho]\frac{Q}{2^n} \, .
  \end{align}
  Since for each $Q\in P_n$ one has $\norm{\frac{Q}{2^n}}_1=1$ and $\opnorm{Q}=1$, the maps
  \begin{equation} 
    \rho \mapsto \tr[Q\rho]\frac{Q}{2^n}
  \end{equation}
  are elements of $\underline{\mpc}(A\rightarrow A')$.
  As such, we have found a valid QPD with 1-norm $\abs{P_n}=4^n$.

  It remains to show the optimality of this decomposition.
  For this purpose, consider an arbitrary decomposition $\idchan=\sum_i a_i\cF_i$ where $\cF_i\in\underline{\mpc}(A\rightarrow A')$.
  Taking the trace on both sides, we get
  \begin{equation}
    4^n = \tr[\idchan] = \tr[\sum_i a_i\cF_i] = \sum_i a_i \tr[\cF_i] \,
  \end{equation}
  The operation $\cF_i$ is of the form $\rho\mapsto \tr[O_i\rho]\tau_i$ with $\opnorm{O_i}\leq 1$ and $\norm{\tau_i}_1\leq 1$.
  The Pauli transfer matrix representation (i.e. the representation in the orthonormal basis of Pauli strings) $M^{(i)}$ of $\cF_i$ is given by
  \begin{align}
    M^{(i)}_{\alpha,\beta} &= \langle Q_{\alpha}, \cF_i(Q_{\beta})\rangle \\
    &= \frac{1}{2^n}\tr[Q_{\alpha} \tr[O_iQ_{\beta}]\tau_i] \\
    &= \frac{1}{2^n}\tr[Q_{\alpha}\tau_i]\tr[O_iQ_{\beta}] \, .
  \end{align}
  Since $\tr[\cF_i]=\tr[M^{(i)}]$, we get
  \begin{align}
    4^n &= \sum_i a_i \tr[\cF_i] \\
    &= \sum_{i,\alpha} \frac{a_i}{2^n} \tr[Q_{\alpha}\tau_i] \tr [O_iQ_{\alpha}] \\
    &= \sum_{i,\alpha} \frac{a_i}{2^n} \tr[(Q_{\alpha}\otimes Q_{\alpha})(O_i\otimes\tau_i)] \, .
  \end{align}
  Let us next introduce the $\swap$ operator $\swap\coloneqq \frac{1}{2^n}\sum_{Q\in P_n}Q\otimes Q$.
  A well-known technique called the \emph{SWAP trick} states that for any two operators $A_1,A_2$, we can write $\tr[A_1A_2]=\tr[(A_1\otimes A_2)\swap]$~\cite{ando1979_concavity,carlen2009_trace}.
  As such, we obtain
  \begin{align}
    4^n &= \sum_{i} a_i \tr[O_i\tau_i] \\
    &\leq \sum_{i} \abs{a_i} \abs{\tr[O_i\tau_i]} \\
    &\leq \sum_{i} \abs{a_i} \norm{O_i\tau_i}_1 \\
    &\leq \sum_{i} \abs{a_i} \opnorm{O_i}\norm{\tau_i}_1 \\
    &\leq \sum_{i} \abs{a_i} 
  \end{align}
  where we used Hölder's inequality for Schatten norms.
  As such, no QPD can have a 1-norm smaller than $4^n$.
\end{proof}
Due to its practical importance, we briefly take the time to explicitly write down the achieving QPD in the $n=1$ case.
\begin{example}\label{ex:wirecut_no_cc}
  Let $A,A'$ be single-qubit systems.
  By~\Cref{thm:gamma_mpc_id}, we know that $\gamma_{\mpc^{\star}(A\rightarrow A')}(\idchan)=4$ which is achieved by the QPD $\idchan =\sum_ia_i\cF_i$ where
  \begin{equation}
    \cF_i(\rho) = \tr[O_i\rho] \sigma_i
  \end{equation}
  where the $a_i,O_i$ and $\sigma_i$ are given in~\Cref{tab:identity_cut}.
  This decomposition can be attributed to Peng \etal~\cite{peng2020_simulating}.
  \begin{table} 
  \centering
  \begin{tabular}{c|c|c}
     $O_i$ & $\sigma_i$  & $a_i$ \\ \hline
     $\id$ & $\proj{0}$ & $+1/2$ \\ \hline
     $\id$ & $\proj{1}$ & $+1/2$ \\ \hline
     $X$ & $\proj{+}$ & $+1/2$ \\ \hline
     $X$ & $\proj{-}$ & $-1/2$ \\ \hline
     $Y$ & $\proj{i}$ & $+1/2$ \\ \hline
     $Y$ & $\ket{-i}\bra{-i}$ & $-1/2$ \\ \hline
     $Z$ & $\proj{0}$ & $+1/2$ \\ \hline
     $Z$ & $\proj{1}$ & $-1/2$ 
  \end{tabular}
  \caption{Operators, states and weights required for a single time-like cut in a circuit.}
  \label{tab:identity_cut}
  \end{table}
\end{example}

Next, we consider the situation where classical communication is allowed for.
\begin{theorem}{\cite[Proposition 4.2]{brenner2023_optimal} and \cite[Lemma 1 (supplementary materials)]{yuan2021_universal}}{gamma_ebc_id}
  Let $A,A'$ be $d$-dimensional systems.
  Then
  \begin{equation}
    \gamma_{\ebc^{\star}(A\rightarrow A')}(\idchan) = 2d-1 \, .
  \end{equation}
\end{theorem}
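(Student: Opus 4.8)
The plan is to establish matching upper and lower bounds, $\gamma_{\ebc^{\star}(A\rightarrow A')}(\idchan) \leq 2d-1$ and $\gamma_{\ebc^{\star}(A\rightarrow A')}(\idchan) \geq 2d-1$. For the upper bound, I would exploit the connection between the identity channel's Choi state and the maximally entangled state: recall from \Cref{prop:wirecut_ds_choi_characterization} that $\cE\in\ebc(A\rightarrow A')$ if and only if $\choi{\cE}\in\sep(A;A')$ with $\tr_{A'}[\choi{\cE}]=\frac{\id_A}{d}$, and that an analogous Choi characterization holds for $\ebc^{\star}$. The Choi operator of $\idchan$ is precisely $\choi{\idchan}=\proj{\Psi}$, the (normalized) maximally entangled state across $A$ and $A'$. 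The key observation is that the optimal separable QPD of $\proj{\Psi}$ from \Cref{ex:maximally_entangled_state}, which achieves $\gamma_{\sep}(\proj{\Psi})=2d-1$, can be chosen so that both $\sigma^+$ and $\sigma^-$ satisfy the required partial-trace constraint $\tr_{A'}[\sigma^{\pm}]\propto\frac{\id_A}{d}$. Indeed, the construction in \Cref{thm:optimal_qpd_pure} applied to the maximally entangled state (with all Schmidt coefficients equal to $1/\sqrt{d}$) produces $\sigma^-$ as a uniform mixture $\frac{1}{d(d-1)}\sum_{i\neq j}\proj{f_i}\otimes\proj{g_j}$, which has maximally mixed marginal on $A$, and $\sigma^+$ inherits this property since $\proj{\Psi}$ and $\sigma^-$ both do. Hence this QPD lifts to a valid QPD of $\idchan$ into $\ebc^{\star}$ channels with $1$-norm $2d-1$.

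For the lower bound, the cleanest route is to relax the decomposition set. Since $\ebc^{\star}(A\rightarrow A')$ channels have Choi operators of the form $\rho^+-\rho^-$ with $\rho^{\pm}$ sub-normalized separable states and $\tr_{A'}[\rho^++\rho^-]=\frac{\id_A}{d}$, any QPD of $\idchan$ w.r.t. $\ebc^{\star}$ induces a QPD of $\choi{\idchan}=\proj{\Psi}$ w.r.t. $\sep(A;A')$ with the same $1$-norm of coefficients — simply by applying the Choi isomorphism and forgetting the partial-trace constraint. Therefore $\gamma_{\ebc^{\star}}(\idchan) \geq \gamma_{\sep(A;A')}(\proj{\Psi}) = 2d-1$ by \Cref{ex:maximally_entangled_state} (or directly \Cref{thm:optimal_qpd_pure}). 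Alternatively, one can invoke \Cref{cor:gamma_state_from_channel} together with the fact that $\idchan$ applied to (half of) a maximally entangled resource state produces $\proj{\Psi}$, giving the same bound; I would likely present the Choi-isomorphism argument as it is more direct and self-contained.

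Combining the two bounds yields $\gamma_{\ebc^{\star}(A\rightarrow A')}(\idchan)=2d-1$. The main technical point to verify carefully — and the step where a little genuine work is needed — is that the optimal pure-state QPD of $\proj{\Psi}$ from \Cref{thm:optimal_qpd_pure} genuinely respects the partial-trace constraint required for membership in $\ebc^{\star}$, i.e. that $\tr_{A'}[\sigma^{\pm}]$ is (proportional to) $\frac{\id_A}{d}$. This is where the specific structure of the maximally entangled state is essential: for a generic pure state the optimal $\sigma^{\pm}$ would \emph{not} have maximally mixed marginals, so the upper-bound argument genuinely uses $\ket{\Psi}$ being maximally entangled, not just pure. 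Everything else is routine manipulation of the Choi isomorphism and the definitions of $\ebc$ and $\ebc^{\star}$ from \Cref{prop:wirecut_ds_choi_characterization}. I would also remark that, as for other channels, \Cref{prop:negtrick_ebc} already gives $\gamma_{\ebc}(\idchan)=\gamma_{\ebc^{\star}}(\idchan)$, so the same value $2d-1$ holds without classical side information.
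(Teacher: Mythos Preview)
Your proposal is correct and follows essentially the same approach as the paper's proof: both exploit the Choi characterization from \Cref{prop:wirecut_ds_choi_characterization}, invoke the optimal pure-state QPD of $\proj{\Psi}$ from \Cref{thm:optimal_qpd_pure}, verify explicitly that $\sigma^{\pm}$ satisfy the partial-trace constraint $\tr_{A'}[\sigma^{\pm}]=\frac{\id_A}{d}$, and obtain the lower bound from $\gamma_{\sep}(\proj{\Psi})=2d-1$. The only cosmetic difference is that the paper first invokes \Cref{prop:negtrick_ebc} to reduce to $\gamma_{\ebc}$ and then works there, whereas you work directly with $\ebc^{\star}$ and mention the reduction at the end.
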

\begin{proof}
  Thanks to~\Cref{prop:negtrick_ebc}, we can restrict our attention to $\gamma_{\ebc}$ instead of $\gamma_{\ebc^{\star}}$.
  The Choi representation of the identity channel is simply given by maximally entangled state $\ket{\Psi}_{AA'}$.
  We know from~\Cref{thm:optimal_qpd_pure} that the optimal QPD of $\proj{\Psi}$ w.r.t. $\sep(A;A')$ has 1-norm $2d-1$ and is given by $\proj{\Psi} = d\sigma^+ - (d-1)\sigma^-$ where
  \begin{equation}
    \sigma^- = \frac{1}{d(d-1)}\sum_{i\neq j} \proj{i}\otimes\proj{j} \in \sep(A;A')
  \end{equation}
  and
  \begin{equation}
    \sigma^+ = \frac{1}{d}\left( \proj{\Psi} + (d-1) \sigma^- \right) \in \sep(A;A').
  \end{equation}
  Fortunately, these two states happen to also fulfill the partial trace constraints
  \begin{equation}
    \tr_{A'}[\sigma^-]
    = \frac{1}{d(d-1)}\sum_{i\neq j} \proj{i}
    = \frac{1}{d}\sum_{i} \proj{i}
    = \frac{\id_A}{d}
  \end{equation}
  and
  \begin{equation}
    \tr_{A'}[\sigma^+] = \frac{1}{d}\left( \tr_{A'}[\proj{\Psi}] + (d-1) \tr_{A'}[\sigma^-] \right) = \frac{\id_A}{d} \, .
  \end{equation}
  Because of~\Cref{prop:wirecut_ds_choi_characterization}, $\sigma^{\pm}$ are the Choi representations of two channels in $\ebc(A\rightarrow A')$, hence giving us a valid QPD of the identity channel.
  Clearly, $\gamma_{\ebc}(\idchan)$ cannot be any lower than $2d-1$, as that would contradict $\gamma_{\sep}(\proj{\Psi})=2d-1$.
\end{proof}
The above proof hints at a deeper connection between time-like cuts of the identity channel and space-like cuts of maximally entangled states $\ket{\Psi}$.
While the proof is quite succinct, it gives little insight into this connection.
We explore this by providing a second argument that takes a much more operational approach to proving~\Cref{thm:gamma_ebc_id}.
For this, we will restrict ourselves to $A$ and $A'$ being $n$-qubit systems.

The argument for $\gamma_{\ebc^{\star}}(\idchan)\leq\gamma_{\sep}(\proj{\Psi})$ in the case of $n=1$ is graphically summarized in~\Cref{fig:state_teleportation}.
The first equality depicts the famous state teleportation protocol~\cite{bennett1993_teleporting}, which allows for the transfer of a qubit (i.e. the realization of the identity channel) from one party to another using a pre-shared Bell pair $\ket{\Psi}$ and one round of classical communication.
By taking taking a QPD of $\proj{\Psi}$ into product states $\proj{\Psi}=\sum_i a_i \rho_i\otimes \sigma_i$, we can write the protocol as a mixture of entanglement breaking channels, which lie in $\ebc^{\star}(A\rightarrow A')$.
The quasiprobability extent $\gamma_{\sep}(\proj{\Psi})$ precisely characterizes the lowest achievable 1-norm $\sum_i\abs{a_i}$ through this means.

\begin{figure}
  \centering
  \includegraphics{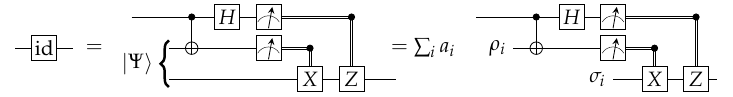}
  \caption{The state teleportation protocol (center) allows the transmission of one qubit through a round of classical communication by consuming one Bell pair $\ket{\Psi}$ of pre-shared entanglement. By using a QPD of the Bell state into product states $\proj{\Psi}=\sum_i a_i\rho_i\otimes\sigma_i$, we can write this protocol as a quasiprobabilistic mixture of entanglement breaking channels.}
   \label{fig:state_teleportation}
\end{figure}

On the converse side, a Bell-pair shared between two parties can also be prepared using one instance of an identity channel from one side to the other.
As such, any QPD of the identity channel into $\ebc^{\star}$ channels automatically induces a QPD of $\proj{\Psi}$ into separable states.

The above arguments for the $n=1$ case straightforwardly generalize to arbitrary $n\geq 1$ by using $n$ parallel instances of gate teleportation, or identity channels respectively.
Hence, $\gamma_{\ebc^{\star}}(\idchan^{\otimes n}) = \gamma_{\sep}(\proj{\Psi}^{\otimes n}) = 2^{n+1}-1$.
The equivalence between nonlocal identity channels and Bell pairs that we have found is strongly reminiscent of the equivalence of Clifford gates and their Choi states that we previously explored in~\Cref{sec:gamma_clifford}.

We also note that the sub-multiplicativity of $\gamma_{\ebc^{\star}}(\idchan)$ can be practically harnessed for time-like cuts that do not occur in the same time slice of the circuit.
We refrain from an in-depth discussion of this approach, since it exactly mirrors the technique utilized for black box cuts of Clifford gates in~\Cref{sec:submult}.
The main idea is to realize all space-like cuts using the state teleportation protocol, and to jointly simulate the preparation of all Bell pairs to harness the sub-multiplicative behavior of $\gamma_{\sep}$.
To avoid having to use a large number of ancilla qubits, an \emph{entanglement factory} can be utilized, which re-uses ancillas for Bell pair simulation.
In fact, the same entanglement factory can be shared between space-like and time-like cuts.

In a black box cut setting, it seems unlikely that one could entirely avoid such ancilla qubits.
However, in the parallel cut setting, there exist constructions that do not require any ancilla qubits in their implementation, which allows for a significantly simpler practical realization.
We refer the reader to \reference~\cite{harada2023_doublyoptimal,pednault2023_alternative} for details on these constructions.

As a somewhat surprising conclusion, this chapter demonstrated that classical communication can significantly reduce the overhead of circuit knitting with time-like cuts.
Indeed, to cut one single wire in a quantum circuit, the associated extent is $\gamma_{\mpc^{\star}}(\idchan)=4$ and $\gamma_{\ebc^{\star}}(\idchan)=3$ respectively.
When we cut many wires in parallel, the gap widens even more, as the regularized gamma factors are $\gammareg_{\mpc^{\star}}(\idchan)=4$ and 
\begin{equation}
  \gammareg_{\ebc^{\star}}(\idchan) = \lim\limits_{n\rightarrow\infty}(2^{n+1}-1)^{1/n} = 2 \, .
\end{equation}
This is especially remarkable, considering that for space-like cuts, we were unable to find an advantage provided by classical communication for a large variety of unitary gates.

\subsection{Going beyond the identity channel}\label{sec:general_timelike_cuts}
In this section, we briefly study the quasiprobability extents $\gamma_{\mpc^{\star}}(\cE)$ and $\gamma_{\ebc^{\star}}(\cE)$ for channels $\cE\in\cptp(A\rightarrow A')$ beyond the identity.
A first insight is that the identity channel is in some sense the maximally difficult channel to cut.
\begin{corollary}{}{wirecut_id_hardest}
  Let $\cE\in\cptp(A\rightarrow A')$ and either $\ds=\mpc^{\star}$ or $\ds=\ebc^{\star}$.
  Then
  \begin{equation}
    \gamma_{\ds(A\rightarrow A')}(\cE)\leq \min\{ \gamma_{\ds(A,A)}(\idchan_A), \gamma_{\ds(A',A')}(\idchan_{A'}) \} \,. 
  \end{equation}
\end{corollary}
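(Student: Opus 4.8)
The plan is to exploit the two trivial factorizations $\cE = \cE\circ\idchan_A = \idchan_{A'}\circ\cE$ together with the observation that pre- or post-composing a $\mpc^{\star}$ (resp. $\ebc^{\star}$) operation with an arbitrary CPTP map keeps it inside the same decomposition set, without changing any QPD coefficients. Each factorization then yields one of the two bounds whose minimum is the claimed inequality; no new ideas beyond the structure of the decomposition sets are needed.

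First I would prove $\gamma_{\ds(A\rightarrow A')}(\cE)\leq\gamma_{\ds(A,A)}(\idchan_A)$ by post-composition. Take any QPD $\idchan_A=\sum_i a_i\cF_i$ with $\cF_i\in\ds(A\rightarrow A)$; then $\cE=\sum_i a_i(\cE\circ\cF_i)$ has the very same coefficients, so it suffices to check $\cE\circ\cF_i\in\ds(A\rightarrow A')$. For $\ds=\mpc^{\star}$, write $\cF_i(\rho)=\sum_{k,l}\beta_{k,l}\tr[M_k\rho]p_l\sigma_l$ as in~\Cref{eq:mpc_star}; post-composing with $\cE$ simply replaces each output state $\sigma_l\in\dops(A)$ by $\cE(\sigma_l)\in\dops(A')$, leaving the POVM $(M_k)_k$, the probabilities $(p_l)_l$, and the weights $\beta_{k,l}$ untouched, so $\cE\circ\cF_i$ is again of the form~\Cref{eq:mpc_star}. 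For $\ds=\ebc^{\star}$ the identical computation works starting from $\cF_i(\rho)=\sum_k\beta_k\tr[M_k\rho]\sigma_k$ (and it commutes with the coarse-grainings implicit in $\ebc^{\star}$); alternatively, using the Choi characterization of~\Cref{prop:wirecut_ds_choi_characterization}, post-composition acts as $(\idchan_A\otimes\cE)$ on the output leg, which preserves (sub-normalized) separability and, since $\cE$ is trace-preserving, the partial-trace constraint. Taking the infimum over all QPDs of $\idchan_A$ (which is attained by~\Cref{lem:compact_ds}) gives the bound.

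Next I would prove $\gamma_{\ds(A\rightarrow A')}(\cE)\leq\gamma_{\ds(A',A')}(\idchan_{A'})$ by pre-composition: from a QPD $\idchan_{A'}=\sum_i a_i\cG_i$ with $\cG_i\in\ds(A'\rightarrow A')$ we get $\cE=\sum_i a_i(\cG_i\circ\cE)$, and we must check $\cG_i\circ\cE\in\ds(A\rightarrow A')$. Writing $\cG_i(\rho)=\sum_k\beta_k\tr[M_k\rho]\sigma_k$ (or the $\mpc^{\star}$ analogue with the extra $p_l\sigma_l$ structure), one has $\cG_i(\cE(\rho))=\sum_k\beta_k\tr[\cE^{\dagger}(M_k)\rho]\sigma_k$, where $\cE^{\dagger}$ is the adjoint characterized by $\tr[\cE^{\dagger}(O)\rho]=\tr[O\cE(\rho)]$. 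Since $\cE$ is CPTP, $\cE^{\dagger}$ is completely positive and unital, so $(\cE^{\dagger}(M_k))_k$ is again a POVM on $A$, and the prepared states $\sigma_k\in\dops(A')$ (together with the $p_l,\beta$) are unchanged; hence $\cG_i\circ\cE\in\ds(A\rightarrow A')$ with the same coefficients. Taking the infimum yields the second bound, and combining the two gives the corollary. The main obstacle is purely the bookkeeping in these two closure checks — verifying that the $[-1,1]$-valued weights, the POVM normalization under the adjoint, and the (sub-)normalization and trace constraints on the state side all survive composition with a CPTP map on the appropriate side; once this is in place, the factorization identities and the definition of $\gamma$ do the rest. (One could instead phrase the argument as an instance of the chaining property, \Cref{lem:gamma_chaining}, after passing to the Choi-state picture in which $\mpc$ and $\ebc$ become genuine QRTs with $\cE$ acting as a free pre-/post-processing superchannel; I would mention this but carry out the direct argument above, since $\mpc$ and $\ebc$ are not literally QRTs in the sense of~\Cref{def:qrt}.)
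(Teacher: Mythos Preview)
Your proposal is correct and matches the paper's approach essentially line for line: the paper derives the corollary from a slightly more general composition lemma (\Cref{lem:gamma_timelike_composition}), whose proof is precisely your closure check that pre- or post-composing an element of $\ds$ with a CPTP map stays in $\ds$, using $\cE(\sigma_l)\in\dops(A')$ for post-composition and the fact that $\cE^{\dagger}$ is completely positive and unital (so $(\cE^{\dagger}(M_k))_k$ is again a POVM) for pre-composition. Your remark about the chaining-property/Choi-state alternative is also apt and mirrors the paper's framing.
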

This follows from a relatively simple insight, namely that a cut of $\cE$ can be straightforwardly realized by first cutting the identity channel and then running $\cE$ on the receiver's system, or analogously, running $\cE$ on the sender's system and then cutting the identity.
This insight is more generally captured by a stronger version of the chaining property.
\begin{lemma}{}{gamma_timelike_composition}
  Let $\cE\in\cptp(A\rightarrow A')$, $\cF\in\cptp(A'\rightarrow A'')$ and either $\ds=\mpc^{\star}$ or $\ds=\ebc^{\star}$.
  Then
  \begin{equation}
    \gamma_{\ds(A\rightarrow A'')}(\cF\circ\cE)\leq \min\{ \gamma_{\ds(A\rightarrow A')}(\cE), \gamma_{\ds(A',A'')}(\cF) \} \,. 
  \end{equation}
\end{lemma}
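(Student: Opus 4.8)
The plan is to split \Cref{lem:gamma_timelike_composition} into the two one-sided bounds $\gamma_{\ds(A\to A'')}(\cF\circ\cE)\le\gamma_{\ds(A\to A')}(\cE)$ and $\gamma_{\ds(A\to A'')}(\cF\circ\cE)\le\gamma_{\ds(A'\to A'')}(\cF)$, after which the claim follows by taking the minimum. Both bounds use the same termwise-composition trick that underlies the chaining property (\Cref{lem:gamma_chaining}): from a QPD $\cE=\sum_i a_i\cG_i$ with $\cG_i\in\ds(A\to A')$ one forms $\cF\circ\cE=\sum_i a_i\,(\cF\circ\cG_i)$, and it suffices to verify that each $\cF\circ\cG_i$ again lies in $\ds(A\to A'')$; symmetrically, from a QPD $\cF=\sum_i b_i\cH_i$ with $\cH_i\in\ds(A'\to A'')$ one forms $\cF\circ\cE=\sum_i b_i\,(\cH_i\circ\cE)$ and checks $\cH_i\circ\cE\in\ds(A\to A'')$. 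So the whole statement reduces to two closure properties of the decomposition set $\ds$: closure under post-composition with an arbitrary quantum channel, and closure under pre-composition with an arbitrary quantum channel.

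First I would prove these closure properties at the level of the quantum instruments $\instr{\mpc}$ and $\instr{\ebc}$ from \Cref{sec:timelike_ds}. A representative element of $\instr{\ebc}(A\to A')$ sends $\rho\mapsto\tr[M_i\rho]\sigma_i$ with $(M_i)_i$ a POVM and $\sigma_i\in\dops(A')$, and the set also contains all coarse-grainings of such families. Post-composing with a channel $\cG\in\cptp(A'\to A'')$ yields $\rho\mapsto\tr[M_i\rho]\,\cG(\sigma_i)$, which is of the same form with prepared states $\cG(\sigma_i)\in\dops(A'')$, and coarse-grainings are preserved since $\cG$ passes through the sum in \Cref{def:qi_coarse_graining}. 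Pre-composing a representative element $\rho'\mapsto\tr[M_i\rho']\sigma_i$ of $\instr{\ebc}(A'\to A'')$ with a channel $\cG\in\cptp(A\to A')$ yields $\rho\mapsto\tr[M_i\cG(\rho)]\sigma_i=\tr[\cG^\dagger(M_i)\rho]\sigma_i$, and since $\cG$ is trace-preserving its adjoint $\cG^\dagger$ is unital and completely positive, so $(\cG^\dagger(M_i))_i$ is again a POVM. The identical computations apply verbatim to $\instr{\mpc}$, whose representative elements read $\rho\mapsto\tr[M_i\rho]\,p_j\sigma_j$. Passing from instruments to the induced decomposition sets $\qids[\cdot]$ is then immediate, because those merely attach real weights $\beta\in[-1,1]$ to the instrument components and composition is linear; for $\mpc^\star$ one may alternatively work with the smaller set $\underline{\mpc}$ of maps $\rho\mapsto\tr[O\rho]\sigma$ with $\opnorm{O}\le1$, $\norm{\sigma}_1\le1$ — which has the same convex hull by \Cref{cor:equivalence_mpc_def} and hence the same quasiprobability extent by \Cref{lem:gamma_hulls} — and simply invoke that channels are trace-norm contractions (for post-composition) and that channel adjoints are operator-norm contractions (for pre-composition).

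With the closure properties established, the final step is routine bookkeeping: for any $\varepsilon>0$, choose a QPD $\cE=\sum_i a_i\cG_i$ with $\cG_i\in\ds(A\to A')$ and $\norm{a}_1\le\gamma_{\ds(A\to A')}(\cE)+\varepsilon$; then $\cF\circ\cE=\sum_i a_i\,(\cF\circ\cG_i)$ is a QPD with respect to $\ds(A\to A'')$ of the same $1$-norm, so $\gamma_{\ds(A\to A'')}(\cF\circ\cE)\le\gamma_{\ds(A\to A')}(\cE)+\varepsilon$, and $\varepsilon\to0$ gives the first bound; starting instead from a near-optimal QPD of $\cF$ and pre-composing with $\cE$ gives the second. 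I do not expect a genuine obstacle — this is essentially a closure argument — but the two points that need attention are (i) that trace-preservation of the channel being composed is exactly what makes the pulled-back POVM $\cG^\dagger(M_i)$ sum to the identity, and (ii) carefully threading both the coarse-graining clause of $\instr{\ebc}$ and the $\beta$-weights in $\qids[\cdot]$ through the composition, so that the composed terms land inside $\ds$ itself and not merely inside its real span.
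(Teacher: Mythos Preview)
Your proposal is correct and follows essentially the same approach as the paper: both reduce the statement to the closure of $\ds$ under pre- and post-composition with arbitrary CPTP maps, and both verify this closure via the same computation (post-composition sends $\sigma_i\mapsto\cG(\sigma_i)$, pre-composition sends $M_i\mapsto\cG^\dagger(M_i)$ with $\cG^\dagger$ unital and completely positive). The only cosmetic difference is that the paper works directly with the weighted form $\sum_i\beta_i\tr[M_i\rho]\sigma_i$ of $\ebc^\star$, whereas you first establish closure at the instrument level and then thread the $\beta$-weights through; the content is the same.
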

\begin{proof}
  It suffices to show that any QPD of $\cE$ w.r.t. $\ds(A\rightarrow A')$ and any QPD of $\cF$ w.r.t. $\ds(A'\rightarrow A'')$ both can induce a QPD of $\cF\circ\cE$ w.r.t. $\ds(A\rightarrow A'')$ with the same 1-norm of coefficients.
  This follows immediately from the fact that if you prepend or append a CPTP map $\cG$ to any map in $\ds(A\rightarrow A')$, one again obtains an element in $\ds(A\rightarrow A')$.
  For $\ds=\ebc^{\star}$ this is quite evident as
  \begin{equation}
    \cG\left(\sum_i \beta_i \tr[M_i\rho]\sigma_i \right) = \sum_i \beta_i \tr[M_i\rho] \cG(\sigma_i)
  \end{equation}
  and
  \begin{equation}
    \sum_i \beta_i \tr[M_i\cG(\rho)]\sigma_i = \sum_i \beta_i \tr[\cG^{\dagger}(M_i)\rho]\sigma_i 
  \end{equation}
  where $\cG^{\dagger}$ is the adjoint channel of $\cG$ and $\cG^{\dagger}(M_i)\geq 0,\sum_i\cG(M_i)=\id$ since $\cG$ is completely positive and unital (see e.g.~\cite[Proposition 2.18,Theorem 2.26]{watrous2018_theory}.
  An analogous argument applies for $\ds=\mpc^{\star}$.
\end{proof}
Another consequence of~\Cref{lem:gamma_timelike_composition} is that unitaries are exactly as difficult to cut as the identity channel.
\begin{corollary}{}{}
 Let $\cU\in\cptp(A)$ be a unitary channel and either $\ds=\mpc^{\star}$ or $\ds=\ebc^{\star}$.
 Then
 \begin{equation}
   \gamma_{\ds}(\cU) = \gamma_{\ds}(\idchan) \, .
 \end{equation}
\end{corollary}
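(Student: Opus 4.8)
The plan is to obtain the corollary as an immediate consequence of the strengthened chaining property \Cref{lem:gamma_timelike_composition}, which asserts that for composable CPTP maps the quasiprobability extent of the composition is bounded above by the minimum of the two individual extents. The strategy is to prove the two inequalities $\gamma_{\ds}(\cU)\le\gamma_{\ds}(\idchan)$ and $\gamma_{\ds}(\cU)\ge\gamma_{\ds}(\idchan)$ separately, each by exhibiting a trivial two-factor composition.

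First I would establish $\gamma_{\ds}(\cU)\le\gamma_{\ds}(\idchan_A)$ by writing $\cU=\cU\circ\idchan_A$ and applying \Cref{lem:gamma_timelike_composition} with $\cE=\idchan_A$ and $\cF=\cU$, which gives
\begin{equation}
  \gamma_{\ds}(\cU)\le\min\{\gamma_{\ds}(\idchan_A),\gamma_{\ds}(\cU)\}\le\gamma_{\ds}(\idchan_A)\, .
\end{equation}
For the converse I would use that a unitary channel is invertible with a CPTP inverse: if $\cU=\indsupo{U}$ then $\cU^{-1}=\indsupo{U^{\dagger}}$ is again a unitary channel and in particular lies in $\cptp(A)$. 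Writing $\idchan_A=\cU^{-1}\circ\cU$ and applying \Cref{lem:gamma_timelike_composition} with $\cE=\cU$ and $\cF=\cU^{-1}$ yields
\begin{equation}
  \gamma_{\ds}(\idchan_A)\le\min\{\gamma_{\ds}(\cU),\gamma_{\ds}(\cU^{-1})\}\le\gamma_{\ds}(\cU)\, .
\end{equation}
Combining the two bounds gives the claimed equality.

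The only point requiring any attention beyond bookkeeping is the observation that $\cU^{-1}$ is a genuine CPTP map, so that \Cref{lem:gamma_timelike_composition} is applicable to the factorization $\idchan_A=\cU^{-1}\circ\cU$; for unitary channels this is immediate. Consequently I expect no real obstacle — the statement is a two-line corollary of the composition lemma, valid uniformly for $\ds\in\{\mpc^{\star},\ebc^{\star}\}$ since that lemma already covers both cases.
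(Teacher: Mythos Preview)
Your proof is correct and essentially identical to the paper's: both directions come from \Cref{lem:gamma_timelike_composition}, using $\cU=\cU\circ\idchan_A$ for one inequality and $\idchan_A=\cU^{-1}\circ\cU$ for the other. The only cosmetic difference is that the paper cites \Cref{cor:wirecut_id_hardest} (itself an immediate consequence of the same lemma) for the bound $\gamma_{\ds}(\cU)\le\gamma_{\ds}(\idchan_A)$, whereas you invoke the lemma directly.
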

\begin{proof}
  By~\Cref{cor:wirecut_id_hardest}, we know that $\gamma_{\ds}(\cU) \leq \gamma_{\ds}(\idchan)$.
  For the converse, we use~\Cref{lem:gamma_timelike_composition}
  \begin{equation}
    \gamma_{\ds}(\idchan) = \gamma_{\ds}(\cU\cU^{-1}) \leq \gamma_{\ds}(\cU) \, .
  \end{equation}
\end{proof}

So if we want to find some interesting physics beyond the identity channel, we necessarily need to consider non-unitary channels.
As for space-like cuts, non-unitarity significantly increases the complexity of evaluating the quasiprobability extent, so in the following we content ourselves with summarizing various techniques for finding upper and lower bounds.

As a first result, we note that the proof technique of~\Cref{thm:gamma_mpc_id} can be straightforwardly generalized to arbitrary channels, leading to following lower and upper bounds to $\gamma_{\mpc^{\star}}$.
\begin{proposition}{}{gamma_mpc_generalization}
  Let $A$ and $A'$ be $n$-qubit systems and $\cE\in\hp(A\rightarrow A')$.
  Then
  \begin{equation}
    \tr[M] \leq \gamma_{\mpc^{\star}(A\rightarrow A')}(\cE) \leq \sum_{Q\in P_n} \opnorm{\cE^{\dagger}(Q)}
  \end{equation}
  where $M$ is the Pauli transfer matrix representation of $\cE$, $P_n$ denotes the set of $n$-qubit Pauli strings and $\cE^{\dagger}$ is the adjoint map of $\cE$ (in the sense that $\langle \cE(X),Y)\rangle = \langle X,\cE^{\dagger}(Y)\rangle$).
\end{proposition}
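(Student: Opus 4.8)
The plan is to follow the proof of \Cref{thm:gamma_mpc_id} almost verbatim, since the identity channel is exactly the special case in which the Pauli transfer matrix equals the $4^n\times 4^n$ identity. By \Cref{cor:equivalence_mpc_def} and \Cref{lem:gamma_hulls} we may replace $\mpc^{\star}(A\rightarrow A')$ by the simpler set $\underline{\mpc}(A\rightarrow A')=\{\rho\mapsto\tr[O\rho]\sigma\mid O\in\herm(A),\,\sigma\in\herm(A'),\,\opnorm{O}\le1,\,\norm{\sigma}_1\le1\}$ without changing the quasiprobability extent, so it suffices to prove $\tr[M]\le\gamma_{\underline{\mpc}}(\cE)$ and $\gamma_{\underline{\mpc}}(\cE)\le\sum_{Q\in P_n}\opnorm{\cE^{\dagger}(Q)}$, where $P_n=\{\id,X,Y,Z\}^{\otimes n}$ is the orthonormal Pauli basis of $\herm(A')$.

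For the upper bound, I would expand the \emph{output} of $\cE$ in the Pauli basis of $A'$: for every input $\rho$,
\[
  \cE(\rho)=\sum_{Q\in P_n}\frac{1}{2^n}\tr\!\big[Q\,\cE(\rho)\big]\,Q=\sum_{Q\in P_n}\tr\!\big[\cE^{\dagger}(Q)\,\rho\big]\,\frac{Q}{2^n},
\]
where the adjoint is fixed by $\tr[\cE^{\dagger}(Q)\rho]=\tr[Q\,\cE(\rho)]$ (the dimension-normalized inner products coincide here because $\dimension(A)=\dimension(A')=2^n$). Since $\cE\in\hp(A\rightarrow A')$ implies $\cE^{\dagger}\in\hp(A'\rightarrow A)$, the operator $\cE^{\dagger}(Q)$ is Hermitian, so (discarding the terms with $\cE^{\dagger}(Q)=0$) the normalized operators $O_Q\coloneqq\cE^{\dagger}(Q)/\opnorm{\cE^{\dagger}(Q)}$ and states $\sigma_Q\coloneqq Q/2^n$ satisfy $\opnorm{O_Q}=1$ and $\norm{\sigma_Q}_1=1$; hence each $\cF_Q\colon\rho\mapsto\tr[O_Q\rho]\sigma_Q$ lies in $\underline{\mpc}(A\rightarrow A')$. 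The displayed equation is therefore the genuine superoperator identity $\cE=\sum_{Q}\opnorm{\cE^{\dagger}(Q)}\,\cF_Q$, a valid QPD of $1$-norm $\sum_{Q\in P_n}\opnorm{\cE^{\dagger}(Q)}$.

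For the lower bound, I would take an arbitrary QPD $\cE=\sum_i a_i\cF_i$ with $\cF_i\in\underline{\mpc}(A\rightarrow A')$, $\cF_i(\rho)=\tr[O_i\rho]\tau_i$, $\opnorm{O_i}\le1$, $\norm{\tau_i}_1\le1$. Computing the diagonal of the Pauli transfer matrix $M^{(i)}$ of $\cF_i$ and using $\swap=\frac{1}{2^n}\sum_{Q\in P_n}Q\otimes Q$ together with the SWAP trick $\tr[AB]=\tr[(A\otimes B)\swap]$ gives $\tr[M^{(i)}]=\tr[O_i\tau_i]$, exactly as in the proof of \Cref{thm:gamma_mpc_id}. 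By linearity of the Pauli transfer representation, $\tr[M]=\sum_i a_i\tr[M^{(i)}]=\sum_i a_i\tr[O_i\tau_i]$, and Hölder's inequality for Schatten norms yields $\tr[M]\le\sum_i|a_i|\,\opnorm{O_i}\,\norm{\tau_i}_1\le\norm{a}_1$. Taking the infimum over all QPDs gives $\tr[M]\le\gamma_{\underline{\mpc}}(\cE)=\gamma_{\mpc^{\star}}(\cE)$, the inequality being vacuous when $\tr[M]\le0$.

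I do not expect any genuine obstacle: the argument is a routine transcription of \Cref{thm:gamma_mpc_id}. The only points requiring mild care are verifying that $\cE^{\dagger}$ is again Hermitian-preserving, so that $\opnorm{\cE^{\dagger}(Q)}$ is the correct coefficient, and confirming that passing from $\mpc^{\star}$ to $\underline{\mpc}$ is harmless via \Cref{cor:equivalence_mpc_def} and \Cref{lem:gamma_hulls} --- both already invoked in \Cref{thm:gamma_mpc_id}.
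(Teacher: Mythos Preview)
Your proposal is correct and follows essentially the same approach as the paper: the paper explicitly states that the lower bound ``is achieved by the exact same steps as in the proof of~\Cref{thm:gamma_mpc_id}'', and for the upper bound it writes out the identical Pauli-basis expansion $\cE(\rho)=\sum_{Q}\opnorm{\cE^{\dagger}(Q)}\tr\bigl[\tfrac{\cE^{\dagger}(Q)}{\opnorm{\cE^{\dagger}(Q)}}\rho\bigr]\tfrac{Q}{2^n}$ that you give. Your additional remarks (that $\cE^{\dagger}$ is Hermitian-preserving, that terms with $\cE^{\dagger}(Q)=0$ are dropped, and that the lower bound is vacuous when $\tr[M]\le0$) are minor clarifications the paper leaves implicit.
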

\begin{proof}
  The lower bound is achieved by the exact same steps as in the proof of~\Cref{thm:gamma_mpc_id}.
  For the upper bound, we use
  \begin{align} 
    \cE(\rho)
    &= \sum_{Q\in P_n} \langle Q,\cE(\rho)\rangle Q \\
    &= \sum_{Q\in P_n} \langle \cE^{\dagger}(Q),\rho\rangle Q \\
    &= \sum_{Q\in P_n} \opnorm{\cE^{\dagger}(Q)}\tr\left[\frac{\cE^{\dagger}(Q)}{\opnorm{\cE^{\dagger}(Q)}}\rho\right]\frac{Q}{2^n}
  \end{align}
  which provides us with a QPD of $\cE$ w.r.t. the decomposition set $\underline{\mpc}(A\rightarrow A')$ with 1-norm $\sum_{Q\in P_n}\opnorm{\cE^{\dagger}(Q)}$.
\end{proof}
\begin{example}
  Consider the $1$-qubit depolarizing channel
  \begin{equation}
    \cD_p(\rho) = (1-p)\rho + \tr[\rho]\frac{1}{2}\id \, .
  \end{equation}
  It is easy to verify that it is self-adjoint $\cD_p^{\dagger}=\cD_p$ and that its Pauli transfer matrix representation is $M=\diag(1,1-p,1-p,1-p)$ since 
  \begin{align}
    \cD_p(\id) &= \id \, , & \cD_p(X) &= (1-p)X \, , \\
    \cD_p(Y) &= (1-p)Y \, , & \cD_p(Z) &= (1-p)Z  \, .
  \end{align}
  As such, we get matching lower and upper bounds from~\Cref{prop:gamma_mpc_generalization} and
  \begin{equation}
    \gamma_{\mpc^{\star}}(\cD_p) = 4 - 3p \, .
  \end{equation}
\end{example}

Next, we focus purely on the quasiprobability extent w.r.t. $\ebc$, which is the same as the one w.r.t. $\ebc^{\star}$ as shown in~\Cref{prop:negtrick_ebc}.
Through the Choi characterization of~\Cref{prop:wirecut_ds_choi_characterization}, $\gamma_{\ebc(A\rightarrow A')}(\cE)$ is almost the same quantity as $\gamma_{\sep(A;A')}(\choi{\cE})$, except that we impose an additional partial trace constraint on the decomposition set.
As such, many techniques utilized for space-like cutting can be directly applied to bound $\gamma_{\ebc}$.
In fact, we can even consider $\ebc(A\rightarrow B')$ to be a special case of the decomposition set $\sepc(A;B\rightarrow A';B')$ where $A'$ and $B$ are chosen to be of dimension $1$.
Therefore, the techniques from~\Cref{sec:gamma_nonlocal_channels} to estimate $\gamma_{\sepc}$ are directly applicable here.
We briefly summarize them below.

\paragraph{State conversion bounds}
For any bipartite state $\rho_{AB}\in\dops(AB)$ and any $\cE\in\cptp(A\rightarrow A')$ one has
\begin{equation}
  \gamma_{\ebc^{\star}}(\cE) \geq \gamma_{\sep}(\cE\otimes\idchan_B(\rho)) \, .
\end{equation}
This can be verified by checking that any QPD of $\cE$ induces one of $\cE\otimes\idchan_B(\rho)$ with equal 1-norm.
In the special case where $\rho_{AB}$ is chosen as the maximally entangled state, we get
\begin{equation}
  \gamma_{\ebc^{\star}}(\cE) \geq \gamma_{\sep}(\choi{\cE}) \, .
\end{equation}

\paragraph{PPT relaxation bound}
By using a PPT relaxation on the separability constraint in the Choi characterization of $\ebc$, we obtain a lower bound for the quasiprobability extent of $\cE\in\cptp(A\rightarrow A')$
\begin{equation}
  \gamma_{\ebc^{\star}(A\rightarrow A')}(\cE) \geq \left \lbrace
  \begin{array}{r l}
    \min\limits_{\Lambda^{\pm}\in\pos(AA')} & \tr[\Lambda^+] + \tr[\Lambda^-] \\
    \textnormal{s.t.} & \choi{\cE} = \Lambda^+ - \Lambda^- , \\
    & \tr_{A'}[\Lambda^{\pm}] = \tr[\Lambda^{\pm}]\frac{1}{\dimension(A)}\id_{A} , \\
    & (\idchan_{A}\otimes\transpose_{A'})(\Lambda^{\pm}) \loewnergeq 0
  \end{array} \right .
\end{equation}
which is an SDP that can be efficiently evaluated.

\paragraph{SDP hierarchy}
As a refinement of the above PPT relaxation, we can lower bound $\gamma_{\ebc(A\rightarrow A')}(\cE)$ through the SDP $\gamma_{\ebc_k(A\rightarrow A')}(\cE)$ where
\begin{equation}
  \ebc_k(A\rightarrow A') \coloneqq \{ \cE\in\cptp(A\rightarrow A') | \choi{\cE}\in\sep_k(A;A') \}
\end{equation}
is a converging outer approximation $\ebc(A\rightarrow A')=\bigcap\limits_{k=1}^{\infty}{\ebc_k(A\rightarrow A')}$.

\paragraph{Asymptotic characterization}
Let us define the set of entanglement breaking-preserving superchannels $\ebpsc(A,A'\rightarrow B,B')$ to be the superchannels which map elements from $\ebc(A\rightarrow A')$ to $\ebc(B\rightarrow B')$.
The single-shot dilution cost $E_{C,\ebpsc}^{(1)\epsilon}(\rho)$ is characterized by the smooth log-robustness
\begin{equation}
  \log(1+R_{\ebc}^{\epsilon}(\cE)) \leq E_{C,\ebpsc}^{(1),\epsilon}(\cE) \leq \log(1+R_{\ebc}^{\epsilon}(\cE)) + 2 \, .
\end{equation}
This was independently proven in \reference~\cite[Theorem 1 (supplementary materials)]{yuan2021_universal}, though it can also be seen as a direct consequence of~\Cref{lem:oneshot_seppsc_cost}.
As such, the regularized and asymptotic quasiprobability extent w.r.t. $\ebc^{\star}$ can be expressed as
\begin{equation}
  \log \gammareg_{\ebc^{\star}}(\cE) = E_{C,\ebpsc}^{\mathrm{exact}}(\cE)
\end{equation}
\begin{equation}
  \log \gammasreg_{\ebc^{\star}}(\cE) = E_{C,\ebpsc}(\cE) \, .
\end{equation}

\section{Further reading}
The idea of extending the size of a quantum computer was first proposed by Bravyi \etal~\cite{bravyi2016_trading}, specifically in the context of sparse quantum circuits and Pauli-based computation.
The idea is to simulate an $n+k$-qubit quantum circuit with an $n$-qubit quantum computer by using a hybrid classical-quantum scheme that simulates $k$ ``virtual qubits''.
The overhead of the technique is exponential in $k$, which is generally less favorable compared to circuit knitting.

This shortcoming was addressed in the first circuit knitting scheme by Peng \etal~\cite{peng2020_simulating}.
Their approach uses the time-like QPD of the identity channel in~\Cref{ex:wirecut_no_cc} to decompose the total circuit into sub-circuits that fit on a small quantum device, though their approach is phrased in the language of tensor networks.
The individual circuit clusters are either combined through QPS or via a tensor network contraction.

The first circuit knitting scheme with space-like cuts was introduced by Mitarai and Fujii in \reference~\cite{mitarai2021_constructing,mitarai2021_overhead}.
The authors propose explicit QPDs of arbitrary two-qubit gates in terms of a decomposition set that contains single-qubit rotations and projective measurements (including classical side information derived from those).
This decomposition set is a strict subset of $\lo^{\star}$, and the authors explicitly define the quasiprobability extent of a channel w.r.t. that set (which they call \emph{channel robustness of nonlocality}), however without being able to evaluate it.

In a later paper by Piveteau and Sutter~\cite{piveteau2024_circuit}, the decomposition sets $\lo^{\star}$ and $\locc^{\star}$ were introduced for the first time, though their definition of $\lo^{\star}$ did not encompass the full generality of classical side information (an inconsequential technicality which we discussed around~\Cref{eq:old_lostar_def}).
The QPD from~\cite{mitarai2021_overhead} was also proven to be optimal w.r.t. $\lo^{\star}$ and $\locc^{\star}$ for two-qubit gates with at most two non-zero KAK parameters.
In that work, the connection to the robustness of entanglement was first established.
The authors also considered for the first time the possibility to utilize real-time classical communication between the two parties.
They proposed teleportation-based circuit knitting in the black box setting and proved the optimal quasiprobability extent of Clifford gates w.r.t. $\locc^{\star}$.

Two subsequent papers, one by Schmitt, Piveteau and Sutter~\cite{schmitt2024_cutting} and the other by Harrow and Lowe~\cite{harrow2024_optimal}, simultaneously and independently established the quasiprobability extent for unitaries with unitary operator Schmidt decompositions (and as such arbitrary two-qubit gates).
The former paper additionally extended the black-box results to arbitrary two-qubit unitaries.
The latter instead applied the techniques to clustered Hamiltonian simulation and also provided an improved wire cut-based circuit kniting scheme that can further reduce the sampling overhead given a certain sparsity assumption on the observable.
A third paper by Ufrecht \etal~\cite{ufrecht2024_optimal} also appeared simultaneously which describes an optimal sub-multiplicative black box cutting method for the more restricted set of two-qubit rotation gates.

The $\pptc$-based lower bound for the quasiprobability extent $\gamma_{\locc}$ (without intermediate measurements) of channels was first proposed by Jing \etal~\cite{jing2024_circuit}.
They also established the lower bounds for the regularized quasiprobability extent $\gammareg_{\pptc}$ and $\gammareg_{\sepc}$ in terms of the entanglement cost for the first time.

We briefly note that the quasiprobability extent $\gamma_{\sepc}(\cU)$ of unitary channels $\cU$ with unitary operator Schmidt decompositions was introduced and characterized long before the advent of circuit knitting by Harrow and Nielsen in \reference~\cite{harrow2003_robustness}.


Focusing on wire cutting, the quasiprobability extent $\gamma_{\ebc}$ (without classical side information) of the identity channel was first characterized in \reference~\cite{yuan2021_universal}.
This work focuses more on resource theoretic aspects instead of circuit knitting and the authors were seemingly not aware of the previous work by Peng \etal~\cite{peng2020_simulating}.
Similarly, many subsequent circuit knitting papers were not aware of~\cite{yuan2021_universal} and did not refer to it.

Lowe \etal~\cite{lowe2023_fast} proposed the first circuit knitting scheme based on time-like cuts that utilizes  classical communication and exhibits strictly sub-multiplicative scaling.
While their approach does achieve the optimal asymptotic scaling, its finite-size extent is strictly sub-optimal.

Later work by Brenner, Piveteau and Sutter~\cite{brenner2023_optimal} established the notions of $\ebc^{\star}$ and $\underline{\mpc}$ and they provided and exact characterization of $\gamma_{\ebc^{\star}}(\idchan)$ and $\gamma_{\underline{\mpc}}(\idchan)$ with explicit QPDs achieving the two extents.
They also introduced teleportation-based time-like cutting which allowed for its application in the black-box setting.

The optimal $\ebc^{\star}$ QPD in \reference~\cite{brenner2023_optimal} requires the utilization of ancilla qubits for its practical realization.
Two subsequent works by Pednault~\cite{pednault2023_alternative} and by Harada, Wada and Yamamoto\cite{harada2023_doublyoptimal} provide alternative optimal QPDs with explicit circuit implementations that do not require these ancilla qubits.


To date, multiple experimental demonstrations of QPS-based circuit knitting have been realized~\cite{yamato2023_errorsuppresion,singh2024_experimental,vazquez2024_scaling}.

\section{Contributions}
This chapter almost exclusively contains novel contributions by the author.
Some of them were previously presented in \reference~\cite{piveteau2024_circuit,brenner2023_optimal,schmitt2024_cutting} while many others are new and previously unpublished.
We briefly summarize the contributions section-by-section.
\begin{description}
  \item[\Cref{sec:spacelike_ds}]
  The formal definition of the quasiprobability extent for circuit cutting was introduced by the author in \reference~\cite{piveteau2024_circuit} for LO and LOCC.
  In this section, we establish connections with the novel formalism from~\Cref{chap:qpsim} and consider the extension to the relaxations SEPC and PPTC.
  
  \item[\Cref{sec:gate_cut_side_info}]
  This section only contains previously unpublished results.
  
  \item[\Cref{sec:gamma_nonlocal_states}]
  The characterization of $\gamma_{\sep}$ for pure states is originally due to Vidal and Tarrach~\cite{vidal1999_robustness}.
  Though we note that the convex programming-based proof presented in~\Cref{sec:gamma_nonlocal_pure} is new, arguably simpler and as a side result also provides a characterization of $\gamma_{\ppt}$ for pure states.

  All results in~\Cref{sec:gamma_sep_singleshot,sec:gamma_sep_asymptotic} are new and previously unpublished.
  
  \item[\Cref{sec:gamma_nonlocal_channels}]
  The~\Cref{sec:gamma_clifford,sec:gamma_kaklike} summarize the authors' results published in~\reference~\cite{piveteau2024_circuit,schmitt2024_cutting}.
  During the preparation of the second publication, we became aware of efforts by Lowe and Harrow who had independently discovered a proof of~\Cref{thm:gamma_kaklike_unitary}~\cite{harrow2024_optimal}.
  While their work was published later than ours, we acknowledge an equal claim to the discovery of the result.

  The work in~\Cref{sec:sepc_conversion_bound,sec:sdp_hierarchy_channels,sec:overview_channel_bounds,sec:asymptotic_channel_extent} is new and previously unpublished.

  We note that the idea of using the PPTC relaxation for lower bounds, as presented in~\Cref{sec:pptc_lower_bound}, is due to Jing, Zhu and Wang~\cite{jing2024_circuit}.
  Furthermore, the characterization of the regularized quasiprobability extent in~\Cref{sec:asymptotic_channel_extent} can be seen as a refinement of the work in \reference~\cite{jing2024_circuit}, where the authors showed that $\gammareg_{\locc}$ can be lower bounded by an entanglement cost.

  \item[\Cref{sec:submult}]
  This section summarizes results by the author from \reference~\cite{piveteau2024_circuit,schmitt2024_cutting}.

  \item[\Cref{sec:gate_cut_app}]
  To our knowledge, the analysis of the quantum Fourier transform is new and previously unpublished.
  The results on clustered Hamiltonian simulation are due to Harrow and Lowe~\cite{harrow2024_optimal}.

  \item[\Cref{sec:timelike_ds}]
  The decomposition sets $\underline{\mpc}$ and $\ebc^{\star}$ were initially introduced by the author in \reference~\cite{brenner2023_optimal}.
  This section extends this work by formally introducing $\mpc^{\star}$ and determining the Choi characterization of the various decomposition sets.

  \item[\Cref{sec:wire_cut_side_info}]
  This section only contains previously unpublished results.
 
  \item[\Cref{sec:id_wire_cut}]
  The two main results that characterize $\gamma_{\mpc^{\star}}(\idchan)$ and $\gamma_{\ebc^{\star}}(\idchan)$ (\Cref{thm:gamma_mpc_id,thm:gamma_ebc_id}) were both introduced by the author in \reference~\cite{brenner2023_optimal}.
  After publishing this paper, we were made aware of a previous work that had independently obtained a characterization of $\gamma_{\ebc}(\idchan)$, though with a different proof technique~\cite[Lemma 1 (supplementary materials)]{yuan2021_universal}.
  Together with our insight from~\Cref{prop:negtrick_ebc}, this also leads to a characterization of $\gamma_{\ebc^{\star}}(\idchan)$.

  \item[\Cref{sec:general_timelike_cuts}]
  This section only contains previously unpublished results, except for the PPT lower bound, which was proposed in \reference~\cite{yuan2021_universal}.

\end{description}

\chapter{Simulating magic computation}\label{chap:magic}
In this chapter, we discuss the simulation of non-Clifford circuits using a Clifford quantum computer.
This application of QPS was already briefly covered as a motivational example in~\Cref{sec:motivational_example} and the discussion here provides a more in-depth treatment.

The famous Gottesman-Knill theorem states that any quantum circuit can be efficiently simulated classically if it only consists of Clifford gates as well as preparations and measurements of qubits in the computational basis~\cite{gottesman1998_heisenberg,aaronson2004_improved}.
Clifford gates also play an important role in quantum error correction, as the naturally fault-tolerant (transversal) gates often form a subgroup of the Clifford group.
The Clifford gates are not universal, and in order to reach universality, they need to be complemented by states or gates which are in some sense ``non-classical''.
The most common such choice is the T gate, or equivalently the magic state $\ket{H}\coloneqq \left(\ket{0} + e^{i\pi / 4} \ket{1} \right)/\sqrt{2}$ which can realize a T gate using the injection circuit in~\Cref{fig:magic_state_injection}.

The resource theory of magic aims to quantify the amount of non-classicality (called ``magic'') that non-Clifford channels or non-stabilizer states contain.
This chapter will formally introduce this resource theory and study the induced quasi\-pro\-ba\-bi\-li\-ty extent.
We will restrict ourselves to qubits (as opposed to higher-dimensional qudits) since that represents the computationally most relevant setting.
Intriguingly, the set of free states is finite (respectively convex with finitely many extremal points).
At first, this might suggest that this resource theory could be simpler to work with compared to others like the QRT of entanglement.
Alas, the opposite is the case: While it is indeed possible to numerically evaluate the quasiprobability extent of any state for small systems through the linear program in~\Cref{sec:qps_conv_programming}, it is very hard to find analytic closed-form expressions, and we know very little about the asymptotic behavior of the quasiprobability extent, even for the simplest cases.
Very little is also known about the utility of classical side information through intermediate measurements in this theory - this topic has not previously been studied to our knowledge.

The obvious practical application of QPS w.r.t. the QRT of magic is classical simulation of non-Clifford circuits.
However, it is important to note that there exist other approaches that can simulate near-Clifford circuits more efficiently.
The state-of-the-art approach is instead based on the stabilizer rank simulation algorithm proposed in \reference~\cite{bravyi2016_improved,bravyi2019_simulation} and generally seems to exhibit roughly a square root speedup over the QPS approach.
The main difference of the stabilizer rank simulation approach is that it is based on quasiprobability decompositions of magic states and gates in the pure state quantum mechanics formalism instead of the mixed state formalism. 
That is, one decomposes non-Clifford unitaries (or non-stabilizer states) into Clifford unitaries (or stabilizer states) instead of working with channels (or density matrices).
The runtime of the algorithm is then related to the number of elements in this decomposition, which interestingly can also be related to the 1-norm of the coefficients~\cite[Theorem 1]{bravyi2019_simulation}.
However, due to the nature of the algorithm, stabilizer-rank simulators cannot incorporate mixed states and non-unitary channels the same way as QPS does, at least not without modifications.

Another simulation algorithm that outperforms QPS w.r.t. the QRT of magic is the \emph{dyadic frame simulator}~\cite{seddon2021_quantifying}.
This technique can actually also be considered an instance of QPS itself, but with an extended decomposition set.

In summary, there exist better simulation algorithms for the task of simulating near-Clifford circuits with a classical computer.
However, these improved techniques explicitly rely on using a \emph{classical} computer that allows for strong simulation.
We framed QPS as a method for simulating quantum computers using a \emph{quantum} computer itself (a Clifford quantum computer in this case), so these improved techniques are not on the same footing.
Still, the practical implications of this chapter are less clear compared to the other ones.
For this reason, we will go into less detail and content ourselves with sketching some basic properties of the decomposition sets and the quasiprobability extent.

\section{Decomposition sets}
The core idea behind the Gottesman-Knill theorem is that any intermediate state of a Clifford circuit necessarily belongs to the class of \emph{stabilizer states}, a finite subset of all $n$-qubit quantum states which can be represented with only $\mathrm{poly}(n)$ memory.
Conversely, these states precisely represent the set of reachable states from a computational basis state using a Clifford unitary.
\begin{definition}{}{}
  Let $S\subset \mathrm{P}_n$ be an Abelian subgroup of the Pauli group which does not contain $-(\id^{\otimes n})$.
  The $n$-qubit \emph{stabilizer code} $\mathcal{C}(S)$ associated to $S$ is defined to be the simultaneous $+1$ eigenspace of the operators in $S$.
\end{definition}
It can easily be shown that the dimension of a stabilizer code is precisely $\dimension(\mathcal{C}(S))=2^k$ where $n-k$ is the minimal number of generators of $S$.
We say that $\mathcal{C}(S)$ is a $[n,k]$ stabilizer code.
A $[n,0]$ stabilizer code only represents a single state (up to a complex scaling factor).
States that can be represented as a $[n,0]$ stabilizer code are called \emph{stabilizer states}, and we denote the set of stabilizer states of an $n$-qubit system $A$ by $\stab(A)$.

Having defined the free states of our QRT, we next turn to defining the free channels.
There are at least two reasonable choices that can be made here.
The first choice is very operational in nature, as it precisely entails the set of operations that can be efficiently simulated through the Gottesman-Knill theorem.
\begin{definition}{}{}
  Let $A$ and $A'$ be multi-qubit systems.
  We say that a quantum channel from $A$ to $A'$ is a \emph{stabilizer operation} if it is a composition of
  \begin{itemize}[noitemsep]
    \item preparations of stabilizer states,
    \item Clifford gates,
    \item Pauli measurements,
    \item discarding of qubits,
  \end{itemize}
  which may include classical control (based on previous measurement outcomes) that can involve randomness.
  We denote the set of such stabilizer operations by $\stabops(A\rightarrow A')$.
\end{definition}
Note that we do not impose any sort of computational restrictions on the classical control logic - it could in principle entail very long and expensive classical calculations.
This is a simplifying assumption akin to the one we made for circuit knitting with classical communication, where we allowed LOCC protocols with an unbounded number of rounds.

Another possible choice for the free operations is the set of completely resource non-generating channels.
\begin{definition}{}{}
  Let $A$ and $A'$ be multi-qubit systems.
  The set of completely stabilizer-preserving channels $\csp(A\rightarrow A')\subset\cptp(A\rightarrow A')$ is defined to contain all channels $\cE$ such that for any ancillary system $E$
  \begin{equation}
    \forall \rho\in\stab(AE): (\cE\otimes \idchan_E)(\rho) \in \stab(A'E) \, .
  \end{equation}
\end{definition}
While $\csp(A\rightarrow A')$ is a more axiomatic choice of free operations, the operationally meaningful decomposition set for the purposes of QPS is clearly given by $\stabops(A\rightarrow A')$.
The relation between $\stabops(A\rightarrow A')$ and $\csp(A\rightarrow A')$ was recently studied in detail by Heimendahl~\etal~\cite{heimendahl2022_axiomatic}.
Interestingly, they observed that $\stabops(A\rightarrow A')$ is a strict subset, i.e., there exist completely stabilizer-preserving operations which lie outside of $\stabops(A\rightarrow A')$.
It can still be useful to consider $\csp(A\rightarrow A')$ as a well-behaved relaxation of $\stabops(A\rightarrow A')$, since it exhibits a rather simple characterization in terms of the Choi representation.
\begin{proposition}{\cite[Theorem 3.1]{seddon2019_quantifying}}{csp_characterization}
  Let $A$ and $A'$ be multi-qubit systems and $\cE\in\supo(A\rightarrow A')$.
  Then
  \begin{equation}
    \cE\in\csp(A\rightarrow A') \Longleftrightarrow \choi{\cE} \in \conv(\stab(AA')) \text{ and } \tr_{A'}[\choi{\cE}]=\frac{1}{\dimension(A)}\id_A \, .
  \end{equation}
\end{proposition}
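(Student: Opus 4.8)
The plan is to prove both implications directly, using the Choi--Jamiołkowski dictionary to pass between statements about $\cE$ and statements about $\choi{\cE}$, and exploiting that $\conv(\stab(AA'))$ is a polytope, so that every element is a \emph{finite} convex combination of pure stabilizer states.

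For the forward implication, suppose $\cE\in\csp(A\rightarrow A')$. Since $\csp$ channels are by definition CPTP, the identity $\tr_{A'}[\choi{\cE}]=\frac{1}{\dimension(A)}\id_A$ is immediate from the Choi characterization of trace preservation. For the membership $\choi{\cE}\in\conv(\stab(AA'))$, the key observation is that the maximally entangled state $\ket{\Psi}$ on $2n$ qubits is itself a pure stabilizer state; applying the defining property of $\csp$ with the ancillary register $E$ taken to be the reference copy $A''$ of $A$ gives $\choi{\cE}=(\idchan_{A''}\otimes\cE)(\proj{\Psi}_{A''A})\in\conv(\stab(A''A'))=\conv(\stab(AA'))$. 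This direction is essentially just unpacking the definitions.

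The converse is the substantive part. Assume $\choi{\cE}\in\conv(\stab(AA'))$ and $\tr_{A'}[\choi{\cE}]=\frac{1}{\dimension(A)}\id_A$. Positivity $\choi{\cE}\loewnergeq 0$ holds because $\choi{\cE}$ is a convex combination of rank-one projectors, so together with the trace condition we get $\cE\in\cptp(A\rightarrow A')$, and it remains to verify complete stabilizer preservation. Write $\choi{\cE}=\sum_k p_k\proj{s_k}$ with $\ket{s_k}\in\stab(AA')$, and let $\cF_k$ be the completely positive (generally trace-decreasing) map with $\choi{\cF_k}=\proj{s_k}$, so that $\cE=\sum_k p_k\cF_k$; since $\proj{s_k}$ has rank one, $\cF_k$ has a single Kraus operator $K_k$, with $\ket{s_k}\propto(\id_A\otimes K_k)\ket{\Psi}$ (note this operator--vector dictionary is transpose-free, unlike the channel-from-Choi formula). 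Fix any ancilla $E$ and any $\rho\in\conv(\stab(AE))$; by convexity it suffices to take $\rho=\proj{\sigma}$ pure, and then $(\cF_k\otimes\idchan_E)(\proj{\sigma})=\proj{(K_k\otimes\id_E)\sigma}$ is a (possibly sub-normalised, possibly zero) rank-one operator. The crux is the structural claim that $(K_k\otimes\id_E)\ket{\sigma}$ is proportional to a stabilizer state of $A'E$; granting it, $(\cE\otimes\idchan_E)(\proj{\sigma})=\sum_k p_k\,(\cF_k\otimes\idchan_E)(\proj{\sigma})$ is a nonnegative combination of stabilizer-state projectors whose coefficients sum to $\tr[(\cE\otimes\idchan_E)(\proj{\sigma})]=1$, hence lies in $\conv(\stab(A'E))$.

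The main obstacle is therefore the structural fact about "stabilizer Kraus operators": any $K$ whose vectorization $(\id\otimes K)\ket{\Psi}$ is a pure stabilizer state sends every stabilizer state to a scalar multiple of a stabilizer state, and this survives tensoring $K$ with an identity on an arbitrary register. I would establish this by bringing $\ket{s_k}$ into its stabilizer Schmidt normal form (following the route of \reference~\cite{seddon2019_quantifying}): after suitable Clifford unitaries $W$ on $A$ and $V$ on $A'$, the state becomes $\ket{\Phi}^{\otimes e}\otimes\ket{0\cdots 0}\otimes\ket{0\cdots 0}$, i.e.\ $e$ Bell pairs (each linking one qubit of $A$ to one of $A'$) tensored with computational-basis zeros on the excess qubits. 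The operator obtained by unvectorizing this normal form is, up to a scalar, the composition of a computational-basis measurement post-selecting outcome $0$ on the excess input qubits, the identity on the $e$ Bell-paired qubits, and the preparation of fresh $\ket{0}$ states on the excess output qubits; each of these manifestly maps stabilizer states to scalar multiples of stabilizer states and commutes with $\otimes\,\idchan_E$. Pre- and post-composing $\cF_k$ with the Clifford channels induced by $W$ and $V$ preserves this property, which finishes the argument; the only care needed is routine bookkeeping of the Cliffords appearing in the operator picture (the Clifford unitaries are closed under transpose and conjugation, so nothing new arises).
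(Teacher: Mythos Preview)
The paper does not supply its own proof of this proposition; it is stated with a citation to \cite[Theorem 3.1]{seddon2019_quantifying} and immediately followed by the next statement. Your argument is correct and, as you yourself note, follows the route of that cited reference: the forward implication is just the observation that the maximally entangled state is a stabilizer state, and the converse reduces to the structural fact that a Kraus operator with stabilizer vectorization maps stabilizer states to scalar multiples of stabilizer states, which you establish via the bipartite stabilizer Schmidt normal form. Your handling of the transpose/conjugation bookkeeping (passing from $(W\otimes V)\ket{s_k}$ to $K_k = V^\dagger \tilde K \bar W$ and noting that the Clifford group is closed under complex conjugation) is correct, and the final normalization step---that the trace of $(\cE\otimes\idchan_E)(\proj{\sigma})$ is $1$ because $\cE$ is trace-preserving, so the nonnegative combination is actually convex---is exactly what is needed. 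One cosmetic remark: the paper's definition of $\csp$ literally reads $(\cE\otimes\idchan_E)(\rho)\in\stab(A'E)$ with $\stab$ denoting \emph{pure} stabilizer states, which cannot be meant verbatim; you correctly interpret the output condition as membership in $\conv(\stab(A'E))$.
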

Similarly, there exists a ``normal form'' of stabilizer operations, though it is not quite as simple.
\begin{proposition}{\cite[Theorem 4]{heimendahl2022_axiomatic}}{stabops_standard_form}
  Let $A$ and $A'$ be $n$-qubit and $m$-qubit systems.
  Every $\cE\in\stabops(A\rightarrow A')$ can be written as a convex mixture of stabilizer operations of the type
  \begin{equation}\label{eq:stabops_standard_form}
    \cE(\rho) = \tr_{m+1,\dots,n+r} \left[ \sum_i U_i\left( \Pi_i\rho\Pi_i^{\dagger}\otimes \proj{0}^{\otimes r} \right) U_i^{\dagger} \right]
  \end{equation}
  for some $r\geq 0$, $n+r$-qubit Clifford unitaries $U_i$ and a projective measurement $\{\Pi_i\}_i$ of mutually orthogonal stabilizer code projectors $\Pi_i$.
\end{proposition}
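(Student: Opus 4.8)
The plan is to combine a reduction of the internal classical randomness with a circuit-compilation argument, and then to isolate the one genuinely delicate step. First I would peel off the coin flips. By the definition of $\stabops$, an element $\cE\in\stabops(A\rightarrow A')$ is built from preparations of stabilizer states, Clifford gates, Pauli measurements and discards, together with classical control that may itself invoke randomness; since the probabilities of the coin outcomes do not depend on the input state, $\cE$ is literally a convex combination $\cE=\sum_k p_k\cE_k$ with $\rho$-independent weights $p_k$, each $\cE_k$ being a stabilizer operation whose only nondeterminism comes from the Pauli measurements. Because the form claimed in \Cref{prop:stabops_standard_form} is stable under convex mixtures, it then suffices to put an arbitrary \emph{randomness-free} stabilizer operation into the shape $\cE(\rho)=\tr_{m+1,\dots,n+r}\!\left[\sum_i U_i\big(\Pi_i\rho\Pi_i\otimes\proj{0}^{\otimes r}\big)U_i^\dagger\right]$ (a one-outcome measurement $\Pi_0=\id$ is permitted, so the stated form is in fact slightly more generous than what one strictly needs).

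Second I would compile such a randomness-free operation by standard manipulations. Every stabilizer state $\ket{\psi}$ equals $V\ket{0}^{\otimes k}$ for some Clifford $V$, so each mid-circuit preparation can be replaced by adjoining fresh $\ket{0}$ ancillas at the very start and applying the corresponding Clifford later; collecting all of these yields a single ancilla block $\proj{0}^{\otimes r}$. Dually, a discarded qubit is ignored by everything after it, so every discard can be postponed (after relabelling) into a single partial trace at the end, which I relabel as a trace over $\{m+1,\dots,n+r\}$. This brings the operation to the form: input $\rho$, one block $\proj{0}^{\otimes r}$, an alternation of Clifford unitaries (some classically controlled by earlier outcomes) and Pauli measurements, then the final trace. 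I would then push every Pauli measurement toward the front by conjugating it through the Clifford unitaries preceding it; since a Pauli conjugated by a Clifford is again a Pauli, the whole chain of measurements turns into an adaptive measurement of stabilizer codes performed at the start, while the classically controlled corrections become an outcome-dependent Clifford applied afterwards. Finally, a chain of stabilizer-code measurements followed by outcome-dependent Cliffords consolidates into a single projective measurement $\{\Pi_i\}$ by mutually orthogonal stabilizer-code projectors together with a single family $\{U_i\}$: after the first measurement the state lies in a definite codeword subspace, so each subsequent measurement restricted there is, up to a Clifford determined by the earlier outcomes, equivalent to measuring a fixed refinement, and these Cliffords can all be absorbed into $U_i$. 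Composing with the first step gives the stated convex-mixture normal form.

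The main obstacle is the phrase ``classical control'' in the definition of $\stabops$: it allows \emph{arbitrary} Clifford gates (and even the choice of which primitive to apply next) to be conditioned on earlier measurement outcomes, so one cannot simply invoke the textbook principle of deferred measurement, because controlled-Clifford is not Clifford in general (e.g. controlled-$\mathrm{S}$ lies in the third level of the Clifford hierarchy). The careful route is to work with the finite tree of measurement-outcome histories: along each branch the applied operation is a fixed sequence of primitives, so $\cE$ acquires a Kraus representation in which, after moving all unitaries to the front as above, each Kraus operator is a Clifford unitary composed with a product of stabilizer-code projectors; the substance of the proof is then to show that summing these branch contributions reorganises \emph{exactly} into the single-measurement-plus-outcome-dependent-Clifford form, including the accounting that commuting measurements merge into one stabilizer-code measurement while non-commuting ones merge at the price of an outcome-dependent Clifford. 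The remaining ingredients — the ancilla/index bookkeeping, the identity ``measure a qubit then discard it $=$ discard it'', and the absorption of Clifford layers into $U_i$ — are routine, and I would not spell them out in detail.
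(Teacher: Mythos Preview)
The paper does not prove this statement; it is quoted from \cite{heimendahl2022_axiomatic} without argument, so there is no ``paper proof'' to compare against. Your overall strategy---peel off classical randomness, move all ancilla preparations to the start and all discards to the end, conjugate Pauli measurements through the intervening Cliffords toward the front, then consolidate the resulting adaptive measurement tree into a single stabilizer-code projective measurement followed by outcome-dependent Cliffords---is sound and is essentially the route taken in the cited reference.

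There is one point you pass over that deserves explicit treatment. After your compilation, the conjugated Pauli measurements sit on the full $(n{+}r)$-qubit register $\rho\otimes\proj{0}^{\otimes r}$, whereas the target form \eqref{eq:stabops_standard_form} requires the projectors $\Pi_i$ to act on the $n$-qubit input $\rho$ \emph{alone}, with the fresh ancillas tensored in only afterward. This reduction is not automatic: a Pauli pulled back through the Cliffords may carry $X$ or $Y$ factors on ancilla wires, in which case measuring it on $\rho\otimes\proj{0}^{\otimes r}$ yields a uniformly random bit (more classical randomness to be absorbed into the outer convex combination) and leaves the ancillas in a different stabilizer state, which must then be rotated back to $\proj{0}^{\otimes r}$ by an extra Clifford absorbed into $U_i$. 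Only the Paulis whose ancilla part is a $Z$-string restrict to genuine measurements on $\rho$. Your last paragraph gestures at ``the accounting that \ldots\ non-commuting ones merge at the price of an outcome-dependent Clifford,'' but the specific interaction with the ancilla block---and hence the passage from $(n{+}r)$-qubit projectors to $n$-qubit projectors $\Pi_i$---is exactly where most of the work in the proof lives and should be made explicit.
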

In fact, the proof of~\Cref{prop:stabops_standard_form} in \reference~\cite{heimendahl2022_axiomatic} implies even a slightly stronger statement.
The free quantum instrument $\instr{\stabops}(A\rightarrow A')$ from $A$ to $A'$ (recall~\Cref{def:free_qi}) are precisely given by coarse-grainings of convex mixtures of instruments $(\cE_i)_i\in\qi(A\rightarrow A')$ of the form
\begin{equation}\label{eq:stabops_qi}
  \cE_i(\rho) =
  \tr_{m+1,\dots,n+r} \left[ U_i \left( \Pi_i\rho\Pi_i^{\dagger}\otimes \proj{0}^{\otimes r} \right) U_i^{\dagger} \right]
\end{equation}
for some $r\geq 0$, $n+r$-qubit Clifford unitaries $U_i$ and a projective measurement $\{\Pi_i\}_i$ of mutually orthogonal stabilizer code projectors $\Pi_i$.
Therefore, we can write the decomposition set $\stabops^{\star}(A\rightarrow A') \coloneqq \qids[\instr{\stabops}(A\rightarrow A')]$, which captures the notion of intermediate measurements, as the absolute convex hull of all superoperators of the form
\begin{equation}
    \rho \mapsto \tr_E \left[ \sum_i \beta_i U_i\left( \Pi_i\rho\Pi_i^{\dagger}\otimes \proj{0}^{\otimes r} \right) U_i^{\dagger} \right]
\end{equation}
for some $\beta_i\in[-1,1]$.

We note that $\stabops$ and $\csp$ are convex sets and $\stabops^{\star}$ is even absolutely convex (recall~\Cref{lem:qrt_convexity}).
As such, many properties from~\Cref{sec:qps_basic_properties} directly apply to the associated quasiprobability extents, such as the relation to a robustness measure and the reduction to QPDs with at most two (respectively one) elements.
Note also that any $\cptp$ map can be simulated using the decomposition sets $\stabops$ and $\csp$.
\begin{lemma}{}{}
  Let $A$ and $A'$ be multi-qubit systems.
  Then
  \begin{equation}
    \spn_{\mathbb{R}}(\stabops(A\rightarrow A')) = \spn_{\mathbb{R}}(\csp(A\rightarrow A')) = \mathbb{R}\hptp(A\rightarrow A') \, .
  \end{equation}
\end{lemma}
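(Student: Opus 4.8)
The plan is to establish the three-way equality by proving two inclusions, using the already-established facts about $\cptp$ decompositions. Since $\stabops(A\rightarrow A')\subset\csp(A\rightarrow A')\subset\cptp(A\rightarrow A')$, we immediately get $\spn_{\mathbb{R}}(\stabops)\subset\spn_{\mathbb{R}}(\csp)\subset\spn_{\mathbb{R}}(\cptp)$. By~\Cref{lem:cptp_span} (specifically~\Cref{eq:cptp_span_1}), $\spn_{\mathbb{R}}(\cptp(A\rightarrow A'))=\mathbb{R}\hptp(A\rightarrow A')$, so all three spans are contained in $\mathbb{R}\hptp(A\rightarrow A')$. It thus remains to prove the reverse inclusion $\mathbb{R}\hptp(A\rightarrow A')\subset\spn_{\mathbb{R}}(\stabops(A\rightarrow A'))$, since everything else is sandwiched between these two.

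First I would reduce to showing that every quantum channel $\cE\in\cptp(A\rightarrow A')$ lies in $\spn_{\mathbb{R}}(\stabops(A\rightarrow A'))$; this suffices because, by the proof of~\Cref{lem:cptp_span}, any element of $\mathbb{R}\hptp$ can be written as a real linear combination of CPTP maps. Next, by the Stinespring dilation theorem (and embedding $A,A'$ into large enough multi-qubit systems, as in the proof of~\Cref{lem:span_locc}), I would write $\cE$ as a unitary conjugation on an enlarged Hilbert space followed by a partial trace. Since tracing out qubits and preparing $\ket{0}$ states are stabilizer operations, it is enough to decompose an arbitrary unitary channel $\indsupo{U}$ into a real linear combination of stabilizer operations. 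Using the standard fact that every unitary decomposes into CNOT gates and single-qubit unitaries, and that CNOT is itself a Clifford (hence a stabilizer operation), the task reduces to decomposing an arbitrary single-qubit unitary channel $\indsupo{V}$ as a real linear combination of single-qubit Clifford channels. Here I would invoke the basis described in~\Cref{ex:modifiedendobasis}: the $16$ superoperators $\mathcal{B}_1,\dots,\mathcal{B}_{16}$ form a basis of the full $16$-dimensional space of single-qubit Hermitian-preserving superoperators and are each implementable via single-qubit Clifford gates, computational-basis measurements, classical randomness and classical side information --- that is, they lie in $\stabops^{\star}(A)$, not necessarily in $\stabops(A)$ itself.

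The main obstacle is precisely this distinction between $\stabops$ and $\stabops^{\star}$: the basis from~\Cref{ex:modifiedendobasis} uses the trick $\mathcal{M}=\indsupo{\proj{0}}-\indsupo{\proj{1}}$ which is only realized with classical side information, so it witnesses $\spn_{\mathbb{R}}(\stabops^{\star})=\hp$ rather than $\spn_{\mathbb{R}}(\stabops)=\mathbb{R}\hptp$. To stay within $\stabops$ (without side information), I would instead argue as follows: any single-qubit unitary channel $\indsupo{V}$ lies in $\mathbb{R}\hptp$, and by the argument in~\Cref{lem:cptp_span} it can be written as a real linear combination of \emph{CPTP} single-qubit channels; each such CPTP channel in turn needs to be decomposed into stabilizer operations. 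A cleaner route is to exploit that the single-qubit Clifford channels together with measure-and-prepare stabilizer operations (preparing a fixed stabilizer state after a Pauli measurement, keeping the outcome as classical control but with a \emph{fixed} positive weight) already span all of $\mathbb{R}\hptp(\mathbb{C}^2)$ --- one checks this dimension count by noting $\dim \mathbb{R}\hptp = 12$ for a qubit and exhibiting twelve linearly independent trace-preserving stabilizer operations (e.g. the four single-qubit Cliffords modulo phases give the unitary part, and the Pauli measure-and-prepare channels $\rho\mapsto\tr[\pi_Q\rho]\sigma$ supply the rest, their affine combinations closing up the trace-preserving subspace). Verifying linear independence of a concrete spanning set of twelve stabilizer operations is the routine-but-fiddly computational step I would defer; once it is in hand, tensoring gives the multi-qubit case via the CNOT-plus-single-qubit decomposition, completing the reverse inclusion and hence all three equalities.
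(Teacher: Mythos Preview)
Your strategy is sound and differs from the paper's. The paper's proof is non-constructive: it fixes an arbitrary CPTP basis $\{\cE_i\}$ of $\mathbb{R}\hptp(A\rightarrow A')$, observes that each $\cE_i$ can be \emph{approximated} arbitrarily well within $\spn_{\mathbb{R}}(\stabops)$ (via Stinespring dilation, Solovay--Kitaev, and the exact Clifford QPD of the T gate in~\Cref{eq:optimal_T_qpd}), and then invokes continuity of the determinant to conclude that sufficiently good approximations $\tilde\cE_i\in\spn_{\mathbb{R}}(\stabops)$ remain linearly independent and hence still span $\mathbb{R}\hptp$. Your route avoids Solovay--Kitaev and the perturbation step at the price of a concrete single-qubit computation. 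Two corrections: (i) $\dim_{\mathbb{R}}\mathbb{R}\hptp(\mathbb{C}^2)=13$, not $12$ (the $12$-dimensional affine subspace $\hptp$ gains one dimension under scalar multiples); (ii) you don't need to span all of $\mathbb{R}\hptp(\mathbb{C}^2)$ anyway --- your reduction only requires single-qubit \emph{unitary} channels, which are unital, so it suffices that the $24$ Clifford channels span the $10$-dimensional unital subspace of $\mathbb{R}\hptp(\mathbb{C}^2)$. This follows from Burnside's theorem (the $3$-dimensional representation of the rotational octahedral group on the Bloch ball is absolutely irreducible), making the measure-and-prepare channels and the detour through $\stabops^{\star}$ unnecessary.
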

Recall that $\mathbb{R}\hptp(A\rightarrow B) \coloneqq \{r\cdot \cE | r\in\mathbb{R}, \cE\in\hptp(A\rightarrow B)\}$.
\begin{proof}
  Since $\stabops(A\rightarrow A')\subset\csp(A\rightarrow A')\subset\hptp(A\rightarrow A')$ it suffices to show $\mathbb{R}\hptp(A\rightarrow A')\subset\spn_{\mathbb{R}}(\stabops(A\rightarrow A'))$.
  For this purpose, consider an arbitrary basis $\{\cE_1,\cE_2,\dots,\cE_M\}\subset\cptp(A\rightarrow A')$ of $\hptp(A\rightarrow A'))$ (recall from~\Cref{lem:cptp_span} that $\cptp$ spans $\hptp$).
  Denote by $D\neq 0$ the determinant of the matrix obtained by stacking the vectorized representations of the $\cE_i$.
  Because of the Stinespring dilation and Solovay-Kitaev theorems and the fact that the T gate can be decomposed in terms of Clifford gates (see~\Cref{eq:optimal_T_qpd}), all the channels $\cE_i$ can be approximated arbitrarily well by linear superpositions of elements in $\stabops(A\rightarrow A')$.
  By choosing the approximation error to be small enough, we can ensure that the change in determinant is smaller than $\abs{D}$ such that the approximated channels still remain linearly independent (since the determinant is continuous).
\end{proof}
Similarly, one can verify that 
\begin{equation}
  \spn_{\mathbb{R}}(\stabops^{\star}(A)) = \hp(A)
\end{equation}
which immediately follows from the fact that the basis of $\hp(A)$ presented in~\Cref{ex:modifiedendobasis} lies in $\stabops^{\star}(A)$.

\section{Quasiprobability extent of states}
Due to unfortunate historical circumstances, the quasiprobability extent $\gamma_{\stab}$ is commonly called the ``robustness of magic'' in literature.
This clashes with the more commonly used notion of robustness $R_{\stab}$ (as introduced in~\Cref{def:robustness}) which relates to the extent by $\gamma_{\stab}(\rho)=1+2R_{\stab}(\rho)$ for $\rho\in\dops(A)$ by~\Cref{lem:robustness1}.
In this manuscript, we will stick to the nomenclature used in other resource theories and call $\gamma_{\stab}$ the ``quasiprobability extent'' and $R_{\stab}$ the ``robustness''.

Since the decomposition set is finite, we can express the quasiprobability extent $\gamma_{\stab(A)}(\rho)$ in terms of a linear program (recall~\Cref{sec:qps_conv_programming}).
As such, we can numerically evaluate it for small systems with a runtime of $\mathrm{poly}(\abs{\stab(A)})$.
Unfortunately, in contrast to the resource theory of entanglement, we do not know an explicit analytical expression for the quasiprobability extent of product states $\gamma_{\stab(A)}(\rho^{\otimes m})$ for arbitrary $m$, even for simple states like the magic state $\ket{H}$.
This makes the estimation of the regularized extent $\gammareg_{\stab}$ very difficult.
To make matters worse, $\smash{\abs{\stab(A)}=2^n\prod_{k=1}^{n}(2^{k}+1)}\geq 2^{n(n+3)/2}$ grows super-exponentially in the number of qubits $n$~\cite{aaronson2004_improved,singal2023_counting}, making the cost of numerically evaluating $\gamma_{\stab}$ grow out of control extremely quickly.
By exploiting the symmetry structure of the stabilizer polytope~\cite{heinrich2019_robustness} as well as algorithmic improvements in calculating stabilizer state overlaps~\cite{hamaguchi2024_handbook}, state-of-the-art techniques are able to compute the quasiprobability extent of arbitrary states up to $8$-qubits.
By considering states with certain symmetries or by allowing for approximations, even larger systems can be handled - see~\cite[Table 1]{hamaguchi2024_handbook} for a detailed overview.
We especially note an interesting approach in \reference~\cite{heinrich2019_robustness} which proposes a hierarchy of inner approximations of the stabilizer polytope by restricting the decomposition set to stabilizer states which are at most $k$-partite entangled for $1\leq k\leq n$.
This is somewhat reminiscent of the SDP hierarchy we proposed in~\Cref{sec:gamma_sep_singleshot}, and allows for efficiently computable upper bounds to the quasiprobability extent together with explicit QPDs that achieve these upper bounds.

A characteristic that $\gamma_{\stab}$ does share with its entanglement counterpart $\gamma_{\sep}$ is that it exhibits strict sub-multiplicativity.
\begin{example}
  The magic state $\ket{H}\coloneqq \left(\ket{0} + e^{i\pi / 4} \ket{1} \right)/\sqrt{2}$ has a quasiprobability extent $\gamma_{\stab}(\proj{H})=\sqrt{2}$ which is achieved by the QPD
  \begin{equation}
    \proj{H} = \frac{1}{2}\proj{+} + \frac{\sqrt{2}}{2}\proj{i+} - \frac{\sqrt{2}-1}{2}\proj{-}
  \end{equation}
  where $\ket{\pm}\coloneqq (\ket{0}\pm\ket{1})/\sqrt{2}$ and $\ket{i\pm}\coloneqq (\ket{0}\pm i\ket{1})/\sqrt{2}$.
  The normalized quasiprobability extent of $m$ copies of the magic state $\gamma_{\stab}(\proj{H}^{\otimes m})^{1/m}$ was evaluated in \reference~\cite{heinrich2019_robustness} and is depicted in~\Cref{fig:gamma_magic_state}.
\end{example}
\begin{figure}
  \centering
  \includegraphics{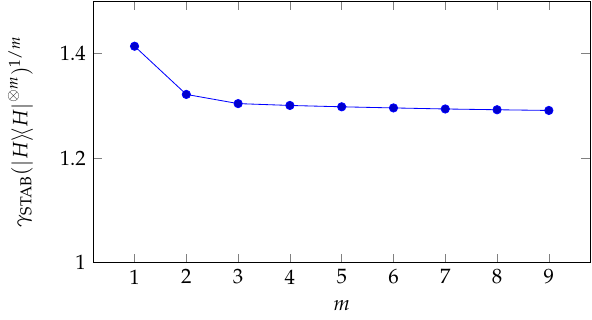}
  \caption{Normalized quasiprobability extent $\gamma_{\stab}(\proj{H}^{\otimes m})^{1/m}$ of $m$ magic states. Data taken from \reference~\cite{heinrich2019_robustness}.}
  \label{fig:gamma_magic_state}
\end{figure}

The sub-multiplicative behavior is very important as the asymptotic simulation overhead of a Clifford+T circuit with $t$ T gates can be reduced from $\mathcal{O}(\gamma_{\stab}(\proj{H})^{2t})$ to $\mathcal{O}(\gammareg_{\stab}(\proj{H})^{2t})$ where the regularized quasiprobability extent is estimated to be around $\gammareg_{\stab}(\proj{H})\approx 1.283 \pm 0.002$ in \reference~\cite{heinrich2019_robustness}.

\section{Quasiprobability extent of channels}
Generally, finding the quasiprobability extent $\gamma_{\stabops^{\star}}(\cE)$ of a channel $\cE\in\cptp(A\rightarrow A')$ is very difficult.
In the case where there is a one-to-one correspondence between $\cE$ and a state, we can relate $\gamma_{\stabops^{\star}}(\cE)$ to the extent of the state.
\begin{example}\label{ex:T_magic_correspondence}
  A T gate and a magic state $\ket{H}$ are equivalent under stabilizer operations, as one can be transformed into the other (recall the magic state injection in~\Cref{fig:magic_state_injection} and $\ket{H}=T\ket{+}$).
  As such, we have for all $m\in\mathbb{N}$
  \begin{equation}
    \gamma_{\stabops^{\star}}(T^{\otimes m}) = \gamma_{\stabops}(T^{\otimes m}) = \gamma_{\stab}(\proj{H}^{\otimes m})
  \end{equation}
  by~\Cref{lem:gamma_chaining,lem:negativity_state_qpd}
\end{example}
Unfortunately, such a state-channel correspondence is generally not available.
Due to the difficulty of estimating the quasiprobability extent w.r.t. $\stabops^{\star}$, we will from now on focus on estimating $\gamma_{\stabops}(\cE)$.
Below, we summarize a few strategies for finding lower and upper bounds, the latter through possibly non-optimal QPDs.

To find explicit QPDs for upper bounds, a general approach consists of finding inner approximations of $\stabops$ and use those as decomposition sets.
For instance, one could pick the finite set of all Clifford unitary channels (respectively its convex hull).
Interestingly those are sufficient for achieving an optimal QPD for the T gate as
\begin{equation}
  \indsupo{T} = \frac{1}{2}\idchan + \frac{1}{\sqrt{2}}\indsupo{S} - \frac{\sqrt{2}-1}{2}\indsupo{Z} 
\end{equation}
achieves the 1-norm of $\sqrt{2}$.
A more refined choice would be all stabilizer operators that require only a bounded number of ancillas $r$ in~\Cref{eq:stabops_standard_form}.
This is also a finite decomposition set, and in the limit of $r\rightarrow\infty$ the quasiprobability extent will achieve $\gamma_{\stabops}$.

To find lower bounds, we instead turn to outer approximations of $\stabops$.
A natural candidate in this regard is $\csp(A\rightarrow A')$, which is precisely the intersection of the polytope $\stab(AA')$ with the hyperplane describing the partial trace constraint.
It is itself a convex polytope and hence has finitely many extremal points.
In fact, in some cases this partial trace constraint can even be dropped, which reduces the problem to finding a QPD of the Choi state w.r.t. $\stab$.
\begin{proposition}{\cite[Theorem 3.2]{seddon2019_quantifying}}{}
  Let $A,A'$ be multi-qubit systems.
  If $\cE$ is a unitary channel corresponding to the third level of the Clifford hierarchy, then $\gamma_{\csp}(\cE)=\gamma_{\stab}(\choi{\cE})$.
\end{proposition}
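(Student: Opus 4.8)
The plan is to establish the two inequalities $\gamma_{\csp}(\cE) \leq \gamma_{\stab}(\choi{\cE})$ and $\gamma_{\csp}(\cE) \geq \gamma_{\stab}(\choi{\cE})$ separately. The first direction is trivial: by~\Cref{prop:csp_characterization}, any QPD $\choi{\cE} = \sum_i a_i \tau_i$ of the Choi state into stabilizer states $\tau_i \in \stab(AA')$ automatically yields channels $\cE_i$ with $\choi{\cE_i} = \tau_i$, but these $\cE_i$ need not be trace-preserving, so a priori this only gives a QPD with respect to $\spn_{\mathbb{R}}(\stab(AA'))$-representable maps, not $\csp$. The crucial observation that makes this work is that the Choi operator of a channel in the third level of the Clifford hierarchy is special: $\cE = \indsupo{U}$ for $U \in \mathrm{CH}_n^{(3)}$, so $\choi{\cE} = \frac{1}{\dimension(A)}(\id_A \otimes U) \proj{\Psi'} (\id_A \otimes U^\dagger)$ where $\ket{\Psi'}$ is the unnormalized maximally entangled state. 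I would use the fact that any stabilizer decomposition of $\choi{\cE}$ can be \emph{symmetrized} to consist only of Choi operators of trace-preserving maps.

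The key idea for the converse (and also for fixing the first direction) is a symmetrization argument over the group of operations that fix $\choi{\cE}$ but act transitively enough on the stabilizer polytope. Concretely, for a unitary $U$ in the third level of the Clifford hierarchy, conjugation by Paulis $UQU^\dagger$ lands in the Clifford group $\mathrm{CH}_n^{(2)}$. One then considers the ``twirl'' channel that averages a candidate QPD over the group $\{ \indsupo{V} : V = UQU^\dagger, Q \in \mathrm{P}_n\}$ acting on the output, combined with the matching Pauli action $\indsupo{Q}$ on the input, applied to $\choi{\cE}$. Since this group action fixes $\choi{\cE}$ (because $U Q \ket{+}$-type states get permuted consistently), but it acts on an arbitrary stabilizer state by sending it to another stabilizer state, averaging a QPD $\choi{\cE} = \sum_i a_i \tau_i$ over this group produces a new QPD with the same $1$-norm whose terms are group-averages of stabilizer states. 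The point is that such a group-average automatically satisfies the partial trace constraint $\tr_{A'}[\cdot] = \frac{1}{\dimension(A)}\id_A$, because the Pauli twirl on system $A$ depolarizes the reduced state. Hence any stabilizer QPD of $\choi{\cE}$ can be promoted to a $\csp$-valid QPD, giving $\gamma_{\csp}(\cE) \leq \gamma_{\stab}(\choi{\cE})$; and conversely $\gamma_{\stab}(\choi{\cE}) \leq \gamma_{\csp}(\cE)$ holds trivially since $\csp$ maps have Choi operators in $\conv(\stab(AA'))$ by~\Cref{prop:csp_characterization}, so a $\csp$ QPD of $\cE$ immediately gives a $\stab$ QPD of $\choi{\cE}$ with the same $1$-norm.

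So the actual structure is: (i) the inequality $\gamma_{\stab}(\choi{\cE}) \leq \gamma_{\csp}(\cE)$ is immediate from the Choi characterization of $\csp$; (ii) for the reverse, take an optimal QPD $\choi{\cE} = \sum_i a_i \tau_i$ with $\tau_i \in \stab(AA')$, apply the Clifford twirl associated to $U$, verify the twirl fixes $\choi{\cE}$, verify the twirled stabilizer states still lie in $\conv(\stab(AA'))$ and now satisfy the trace-preservation constraint, conclude the twirled decomposition exhibits $\cE$ as a quasiprobabilistic mixture of $\csp$ maps with $1$-norm at most $\norm{a}_1$. The main obstacle I expect is step (ii): carefully checking that the relevant group action genuinely fixes $\choi{\cE}$ --- this uses precisely the defining property of the third Clifford level, namely $UQU^\dagger \in \mathrm{CH}_n^{(2)}$, together with the fact that $\choi{\indsupo{U}}$ is a Clifford-image of a maximally entangled state so Pauli corrections can be ``pushed through'' $U$ --- and simultaneously checking that the twirled operators are still valid (normalized) elements of $\conv(\stab)$ rather than merely Hermitian operators in its span. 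One should also be careful that the twirl does not increase the $1$-norm, which follows from convexity / the triangle inequality, i.e.~from~\Cref{lem:gamma_convexity} applied to the twirl as a convex combination of free operations. A secondary subtlety is handling the case where $\cE$ is only a channel (allowing mixtures), but since we restricted to unitary $\cE$, the Choi state is pure and this complication does not arise.
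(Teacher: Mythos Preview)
The paper does not give its own proof of this proposition; it simply cites \cite[Theorem 3.2]{seddon2019_quantifying}. Your proposal is correct and matches the argument in that reference: the inequality $\gamma_{\stab}(\choi{\cE})\le\gamma_{\csp}(\cE)$ is immediate from \Cref{prop:csp_characterization}, and the reverse inequality follows from the Pauli twirl $\tau\mapsto\frac{1}{4^n}\sum_Q (Q\otimes UQU^\dagger)\tau(Q\otimes UQU^\dagger)^\dagger$, which (i) fixes $\choi{\cE}$ because $(Q\otimes UQU^\dagger)(\id\otimes U)\ket{\Psi}=\pm(\id\otimes U)\ket{\Psi}$, (ii) maps stabilizer states to convex mixtures of stabilizer states since $UQU^\dagger$ is Clifford for third-level $U$, and (iii) forces the partial trace to be maximally mixed, so the twirled decomposition is a valid $\csp$ QPD with the same $1$-norm.

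One small organizational remark: your opening paragraph labels the wrong direction as ``trivial'' and then backtracks. The genuinely trivial direction is $\gamma_{\stab}(\choi{\cE})\le\gamma_{\csp}(\cE)$; the direction requiring the twirl is $\gamma_{\csp}(\cE)\le\gamma_{\stab}(\choi{\cE})$. You do sort this out by the end, but leading with the correct attribution would make the write-up cleaner.
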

It is a simple exercise to show that the Choi state of the T gate is equivalent to $\ket{0}\otimes\ket{H}$ up to Clifford gates.
From~\Cref{ex:T_magic_correspondence}, we can see that the $\csp$ and Choi-based lower bounds are therefore tight for the T gate
\begin{equation}
  \gamma_{\stabops}(T^{\otimes m}) \geq \gamma_{\csp}(T^{\otimes m}) = \gamma_{\stab}(\choi{T}^{\otimes m}) = \gamma_{\stab}(\proj{H}^{\otimes m}) = \gamma_{\stabops}(T^{\otimes m}) \, .
\end{equation}

\section{Further reading}
Historically, he resource theory of magic has was first developed for odd-dimensional quantum systems, since they exhibit well-behaved finite-dimensional analogs of the Wigner function~\cite{gross2006_hudson}.
For instance, Veitch~\etal~\cite{veitch2014_resource} introduced a magic monotone called \emph{mana}, which has a striking resemblance to the quasiprobability extent, as it is defined to be the logarithm of the 1-norm of the Wigner function.

The first instance of QPS for magic simulation (or as a matter of fact of QPS in general) dates back to Pashayan~\etal~\cite{pashayan2015_estimating}.
They frame their method in terms of quasiprobability representations w.r.t. some frame and its dual frame.
As such, this formulation lends itself very naturally to odd-dimensional qudit systems.
Later work by Howard~\etal~\cite{howard2017_application} extended QPS to qubit systems and studied QPDs of qubit magic states to be used in injection circuits.
They introduced the quasiprobability extent of states and characterized some of its properties, like the strict sub-multiplicativity.
Furthermore, they also used the quasiprobability extent to find bounds on optimal rates of synthesis for certain gates.
It should be noted that their definition of robustness slightly differs from the canonical notion introduced by Vidal and Tarrach~\cite{vidal1999_robustness}.
They denote by ``robustness of magic'' what we call the quasiprobability extent.

The first notion of QPS of magic channels was proposed shortly thereafter in \reference~\cite{bennink2017_unbiased}.
The authors apply their technique to the simulation of noisy quantum circuits, which allows them to simulate quantum error correction systems with non-Clifford errors --- a regime that is out of reach for the more conventional pure-state stabilizer rank simulators.
For the decomposition set, they consider the set of stabilizer-preserving operations.
The quasiprobability extent of channels was further developed in \reference~\cite{seddon2019_quantifying}.
The authors prove the Choi characterization of the set of completely stabilizer-preserving operation as well as various properties of the induced quasiprobability extent.

The difference between the completely stabilizer-preserving operations and and the more operational stabilizer operations was investigated in \reference~\cite{heimendahl2022_axiomatic}.
Interestingly, the authors prove an explicit separation between these two sets, and they provide a normal form of general stabilizer operations.

Due to the strictly sub-multiplicative behavior and the lack of analytic description for $\gamma_{\stab}$, a significant amount of research has gone into improving the efficiency of its numerical evaluation.
Notably, Heinrich \etal~demonstrated how the complexity of computing the quasiprobability extent can be reduced super-polynomially by exploiting symmetries of the stabilizer polytope and the state in question~\cite{heinrich2019_robustness}.
As such, they can exactly estimate $\gamma_{\stab}(\proj{H}^{\otimes m})$ for up to $m=9$.
Using approximations, they can estimate almost tight upper bounds of the quasiprobability extent for even larger $m$.
Further algorithmic improvements by Hamaguchi \etal~\cite{hamaguchi2024_handbook} allow the evaluation of the quasiprobability extent of arbitrary states for up to $8$ qubits.
We refer to~\cite[Table 1]{hamaguchi2024_handbook} for a nice succinct overview of the different approaches.


\section{Contributions}
This chapter mostly surveys recent developments in QPS for magic simulation from other authors in literature.
It is included it in this thesis for the sake of completeness.
The author's contributions are mainly focused on contextualizing these results within the QPS framework developed in previous chapters.
We also note that to our knowledge, classical side information from intermediate measurements has not been explored in the context of the QRT of magic before.

\chapter{Simulating noise-free computation}\label{chap:noisefree}
Experimental realizations of quantum computers suffer from a large amount of inherent noise, which makes their practical utilization difficult. 
In this chapter, we will study a method to remove the effect of noise by using QPS to simulate ideal noise-free quantum gates with a noisy quantum computer.
More concretely, we consider a decomposition set $\ds$ that contains a selection of noisy quantum operations that our computer can execute, such as noisy gates, noisy measurements and combinations thereof.
This noise suppression is relatively straightforward to implement in practical experiments, at least compared to full-blown quantum error correction (QEC).
It avoids the need for encoding the quantum information into a larger system as well as all the other difficulties that come with syndrome extraction and decoding.
However, it also entails a sampling overhead that scales exponentially in the circuit size.
This is perhaps not too surprising, given that the coherence time of the qubits is not physically prolonged.

As such, QPS of noise-free computation is an instance of quantum error mitigation (QEM).
In fact, it is one of the prototypical QEM methods and it commonly goes by the name \emph{probabilistic error cancellation}.
For near-term devices, probabilistic error cancellation holds the promise to systematically remove the effect of noise in a quantum computation without the need for any additional quantum resources.
It has been at the heart of some of the most impressive experimental quantum computing demonstrations that have come out in recent times~\cite{kim2023_evidence}.
Even as quantum hardware is poised to improve in the future, the overhead of error correction in early fault-tolerant computers is likely to make any practical utilization very difficult.
As such, some recent research has focused on finding ways to seamlessly combine QEC and QEM to split up the burden of suppressing noise and therefore interpolate between the advantages and disadvantages of the two approaches.
We will touch upon some of these ideas in this chapter.

Contrary to other applications of QPS that we explored in previous chapters, probabilistic error cancellation does unfortunately not exhibit a comparatively rich mathematical formalism, as the decomposition set is defined in a more ad-hoc fashion and a priori lacks any sort of structure.
Many of the challenges in probabilistic error cancellation are of more practical nature.
The goal of this chapter will be to briefly outline some of these difficulties as well as a powerful technique based on randomized compiling, which makes it possible to address these challenges while effectively enforcing some strict structure on the decomposition set.

\begin{remark}{}{}
  Since the decomposition set in probabilistic error cancellation is not derived from a QRT, it is a priori not clear that QPDs of smaller circuit elements can be combined into a QPD of the total circuit (recall the discussion in~\Cref{rem:combining_qpds}).
  Indeed, if an imperfect quantum computer can implement the noisy operations $\cE$ and $\cF$, it might not be able to implement $\cE\otimes\idchan$ or $\cE\circ\cF$ due to cross-talk and temporally correlated noise.
  This is an additional difficulty that practical QEM techniques need to take into account.
\end{remark}

\section{Basic setup of probabilistic error cancellation}
In the following, we describe a simple procedure how to simulate a unitary quantum channel $\cU$ using probabilistic error cancellation.
\begin{enumerate}
  \item Pick some finite set of (noisy) operations that the noisy quantum computer can natively execute.
  \item Characterize these noisy operations, for instance with a quantum process tomography protocol. The resulting quantum channels constitute the decomposition set $\ds$.
  \item Find the optimal QPD of $\cU$ into the decomposition set $\ds$ using the linear program introduced in~\Cref{sec:qps_conv_programming}.
  \item Perform QPS of $\cU$ w.r.t. the decomposition set $\ds$.
\end{enumerate}
There are some potential difficulties with this type of approach.
For instance, the choice of the operations for the decomposition set is generally unclear, as the hardware typically supports the execution of a continuum of gates and the choice of the decomposition set can strongly impact the resulting quasiprobability extent of $\cU$.
To achieve a small quasiprobability extent, $\ds$ should typically include the noisy realization $\cU_{\mathrm{noisy}}$ of the gate $\cU$. 
As such, we would need to pick a different decomposition set for every different type of gate.
This is not only cumbersome from a theoretical perspective, but more importantly also impractical, as the number of noisy operations that need to be characterized is very large.

As a solution to this problem, many protocols instead tackle the problem from a noise inversion perspective\footnote{This is essentially the same noise inversion procedure we encountered as an application of nonphysical computation in~\Cref{sec:nonphysical_applications}, but with a noisy decomposition set.}.
A noisy gate $\cU_{\mathrm{noisy}}$ can always be seen as a composition of the ideal gate with a noise channel $\cN\circ \cU$ where $\cN\coloneqq \cU_{\mathrm{noisy}}\circ\cU^{-1}$.
So instead of quasiprobabilistically simulating $\cU$, one can instead physically execute $\cU_{\mathrm{noisy}}$ and then immediately thereafter quasiprobabilistically simulate \footnote{Note that we assume the noise channel to be invertible here. The inverse is not required to be a physical CPTP map itself.} $\cN^{-1}$.
Since (at least for weak noise) $\cN^{-1}$ is close to the identity no matter the choice of gate $\cU$, it is much more reasonable to use the same decomposition set $\ds$ for every type of gate.
This approach is sometimes called the \emph{inverse method} to distinguish it from the \emph{compensation} method that we introduced initially.

\begin{example}
  As a simple toy model, consider a quantum computer that can only execute noisy single-qubit quantum gates, which consist of the ideal gate followed by the depolarizing channel
  \begin{equation}
    \cN_{\epsilon}(\rho) \coloneqq (1-{\epsilon})\rho + {\epsilon}\tr[\rho]\frac{\id}{2}
  \end{equation}
  for some noise rate ${\epsilon}$.
  For our decomposition set, we simply pick the four noisy single-qubit Pauli channels
  \begin{equation}
    \ds = \{\cN_{\epsilon}, \cN_{\epsilon}\circ\mathcal{X}, \cN_{\epsilon}\circ\mathcal{Y}, \cN_{\epsilon}\circ\mathcal{Z} \}
  \end{equation}
  where $\mathcal{X},\mathcal{Y},\mathcal{Z}$ are the unitary channels induced by $X,Y,Z$.
  Then the sampling overhead associated with inverting the noise of any quantum gate is given by
  \begin{equation}
    \gamma_{\ds}(\cN_{\epsilon}^{-1}) = \frac{1+\epsilon-\epsilon^2}{1 - 2\epsilon + 2\epsilon^2}
  \end{equation}
  which is achieved by the following QPD\footnote{One way to verify this is by multiplying both sides by $\mathcal{N}_{\epsilon}$ from the left and then applying~\Cref{ex:noise_inv_depol}.}
  \begin{equation}
    \cN_{\epsilon}^{-1} = \frac{4-\delta}{4(1-\delta)}\cN_{\epsilon} - \frac{\delta}{4(1-\delta)}\left(  \cN_{\epsilon}\circ\mathcal{X} + \cN_{\epsilon}\circ\mathcal{Y} + \cN_{\epsilon}\circ\mathcal{Z} \right)
  \end{equation}
  where $\delta\coloneqq 2\epsilon(1-\epsilon)$.

  This example nicely demonstrates an important property of probabilistic error cancellation: as the physical noise rate of the hardware improves, the sampling overhead decreases.
  Moreover, the quasiprobability extent converges to $1$ as the physical error rate vanishes.
  This is nicely illustrated by the first-order Taylor expansion for small $\epsilon$
\begin{equation}
  \gamma_{\ds}(\cN_{\epsilon}^{-1}) \approx 1 + 3\epsilon \, .
\end{equation}

  In order to achieve a crude theoretical handle on what sort of quasiprobability extents are achievable through the inverse method, we can look at the maximal possible extension of the decomposition set
  \begin{equation}
  \gamma_{\ds}(\cN^{-1}) \geq \gamma_{\cptp^{\star}}(\cN^{-1}) = \dnorm{\cN^{-1}} = \frac{1+\nicefrac{\epsilon}{2}}{1-\epsilon} \approx 1 + \frac{3}{2}\epsilon
  \end{equation}
  where the inequality follows from~\Cref{lem:gamma_ds_bound} and the equalities were shown in~\Cref{ex:noise_inv_depol}.
  As such, the first-order factor in the overhead differs by a factor of two from the theoretically optimal value.
\end{example}

\section{Tailoring noise with randomized compilation}\label{sec:randomized_compiling_pec}
There are a few technical difficulties which need to be addressed for an experimental realization of probabilistic error cancellation.
The most pressing one is that one requires exact knowledge of the underlying noise of the hardware.
This is in stark contrast to QEC, which is typically quite robust w.r.t. the underlying noise model.
For a simple back-of-the-envelope calculation, assume that we erroneously characterize the noisy operations $\cE_1,\dots,\cE_n$ and instead use the wrong decomposition set $\ds=\{\tilde{\cE}_1,\dots,\tilde{\cE}_n\}$ with  an error of about $\norm{\cE_i-\tilde{\cE}_i}\approx \epsilon$.
If we choose a QPD w.r.t. the erroneous basis $\cU=\sum_ia_i\tilde{\cE}_i$ and try to simulate $\cU$, we effectively cause an error
\begin{equation}
  \norm{\cU - \sum_i a_i\cE_i} = \norm{\sum_i a_i (\tilde{\cE}_i - \cE_i)} \lessapprox \norm{a}_1\epsilon \, .
\end{equation}
In summary, the quality of the simulation is at most as good as the characterization of the involved noisy operations.

The standard method for characterizing quantum channels is through the quantum process tomography protocol.
However, in practice it is inherently unreliable due to state preparation and measurement (SPAM) errors.
There exist modifications to process tomography that allow for the elimination of systematic errors caused by SPAM noise (see e.g. \emph{gate-set tomography}~\cite{greenbaum2015_introduction} and its utilization in QEM discussed by Endo~\etal~in \reference~\cite{endo2018_practical}), but they are very expensive and typically impractical.
To make matters worse, typical quantum devices suffer from cross-talk and correlated noise.
Modelling noisy gates by a simple quantum channel acting \emph{only} on the involved qubits is generally a bad approximation, and one needs to also account for the effect on surrounding qubits.
This further exacerbates the difficulties of tomography.

In this section, we briefly present an elegant solution to these problems due to van den Berg~\etal~\cite{vandenberg2023_pec}.
The foundational idea is based on the technique of \emph{randomized compilation}~\cite{wallman2016_noisetailoring} which enforces the noise in the circuit to effectively be Pauli channels.
We only present the main principles and refer the reader to \reference~\cite{wallman2016_noisetailoring,vandenberg2023_pec} for more details.

The main idea is to separate the gates into two categories: ``easy gates'' and ``hard gates'', which can correspondingly be implemented with more or less noise.
For the purposes of QEM, it is natural to choose the easy gates to be single-qubit gates and the hard gates to be $\cnot$ gates.
The circuit is modelled as alternating layers of easy and hard gates.
Randomized compiling effectively twirls the noise to enforce it to be of Pauli type.
This is achieved by pushing the Pauli twirl gates through the CNOT layer (which is Clifford) and integrating them into the previous and subsequent easy gate layers.
This noise tailoring is exact as long as the noise of the easy gates can be assumed to be gate-independent~\cite[Theorem 1]{wallman2016_noisetailoring}.

Having effectively tailored the noise to be purely of Pauli type, it remains to characterize the resulting Pauli channel.
Luckily, Pauli channels are much easier to estimate than general channels~\cite{flammia2020_efficient} and expensive error-prone process tomography can be avoided.
The main idea is to repeat the $\cnot$ gate layer multiple times and measure the corresponding Pauli expectation values at every depth.
The noise parameters can be fully extracted from the exponential decay rate, making them robust against any type of SPAM errors\footnote{This procedure is reminiscent of randomized benchmarking~\cite{emerson2005_scalable}.}.
Still, the number of parameters to learn is generally exponentially large in the number of qubits.
This can in term be avoided by assuming some sparsity structure in the noise.
Sparse Pauli-Lindblad models have proven very effective in the context of superconducting quantum processors~\cite{vandenberg2023_pec}.

Since the resulting noise channels $\cN$ are Pauli, $\cN^{-1}$ can itself be decomposed into a QPD of Pauli channels.
In fact, this QPD is even optimal in the sense $\gamma_{\mathcal{P}}(\cN^{-1}) = \gamma_{\cptp^{\star}}(\cN^{-1})$ where $\mathcal{P}$ denotes the set of Pauli channels (recall~\Cref{prop:gamma_cptp_paulisupo}).

QPS based on randomized compilation and sparse Pauli-Lindblad models has been at the heart of some of the most impressive experimental demonstrations of quantum computing in recent years~\cite{kim2023_evidence}.
Though it should be noted that in \reference~\cite{kim2023_evidence}, QPS is not used to remove the effect of noise, but rather to extrapolate its strength in order to combine the method with another well-known QEM technique called \emph{zero-noise noise extrapolation}~\cite{temme2017_errormitigation}.

\section{Combining QEM and QEC}\label{sec:qemqec}
Early implementations of QEC will likely be unable to implement any useful quantum algorithms due the restricted number of logical qubits and limited ability to sufficiently suppress the logical error rates.
Recent research is thus increasingly focused on finding meaningful ways to combine QEC and QEM to split the burden of suppressing the noise.
This prospect is very enticing, as one could interpolate between the drawbacks of QEC and QEM.
When the QEC capabilities of the hardware are more limited, QEM would do most of the heavy lifting, and once the hardware improves, QEC would become more dominant and thus reduce the sampling overhead.

The most straightforward way to achieve this goal is by simply performing error mitigation on the logical level, i.e., simply treat the logical qubits as if they were physical qubits and then run any QEM technique on top of that.
This is unlikely to be the most resource efficient approach, and more refined techniques should more specifically take into account and address the specific aspects of QEC that are especially difficult to realize in early fault-tolerant systems.
In this section, we will present precisely such a technique.

Logical Clifford gates are typically easier to realize fault-tolerantly than non-Clifford gates.
In fact, for some CSS codes they can be implemented transversally, i.e., by only coupling equivalent qubits across code blocks.
Transversal gates have a manageable overhead, because they do not significantly spread errors and are hence fault-tolerant by definition.
The Eastin-Knill theorem proves that no error correcting code can transversally implement a universal set of gates~\cite{eastin2009_restrictions}. 
To achieve universal computation, an additional fault-tolerant non-Clifford gate is thus needed.
The T gate is usually considered for this purpose.
A common technique to implement a T gate is through through a magic state injection, as previously discussed in~\Cref{fig:magic_state_injection}.
This approach reduces the problem of implementing a fault-tolerant T gate to the task of a fault-tolerant preparation of an encoded magic state.
The latter task can be achieved by magic state distillation~\cite{bravyi2005_universal}, where several noisy magic states are repeatedly transformed into fewer magic states of better fidelity.
While magic state distillation is a very elegant approach for achieving universal fault-tolerant quantum computing, the distillation process leads to a considerable overhead in practice.
It is meaningful to investigate the possibility of using error mitigation techniques to reduce this overhead.

In this section, we show two possible techniques how to perform probabilistic error cancellation on the logical level to simulate the execution of noise-free logical T gates using a noisy T gate, thus circumventing the need for expensive magic state distillation.
In both cases, the quasiprobability extent of the logical T gate will be $\gamma_{\epsilon}=1+2\kappa\epsilon+\mathcal{O}(\epsilon^2)$ where $\epsilon$ is the physical error rate and $\kappa$ is some constant.
The first method achieves a value of $\kappa\approx 2/5$, however it requires an additional logical ancilla qubit.
In contrast, the second method is easier to implement but incurs a larger overhead $\kappa\approx 30$.
When simulating a Clifford+T circuit with this technique, the sampling overhead scales as $\gamma_{\epsilon}^{2t}$ where $t$ is the number of T gates.
This overhead is much smaller than the best currently known classical simulation algorithm, which scales as $1.3161^t$~\cite{qassim2021_improved}, even for modestly low physical error rates $\epsilon$.

In the following, we present the two methods in detail.

\paragraph{Error-mitigated logical T gates via noisy magic states}
We consider an implementation of a T gate through magic state injection, but with a noisy magic state instead of a distilled noise-free one.
This will lead to a faulty T gate.
In the second step we use probabilistic error cancellation to suppress the error.
Let $\rho$ be the noisy encoded magic state.
We define the logical error rate as
\begin{align} \label{eq:eps_bar}
\bar\epsilon:=1-\bra{H}\rho \ket{H}   \, . 
\end{align}
Recall that $\ket{H}\coloneqq \left(\ket{0} + e^{i\pi / 4} \ket{1} \right)/\sqrt{2}$.
For simplicity, assume that the chosen error correction code enables a noiseless implementation of logical gates S, $X$, $\cnot$, preparation of logical $\ket{0}$ and $\ket{+}$ states, and measurements of a logical qubit in the Z and X basis~\footnote{The noise can be exponentially suppressed by choosing a large enough distance. These gates are Clifford gates that feature a transversal and hence fault-tolerant implementation for self-dual CSS codes.}.
We can twirl the noisy magic state $\rho$ with respect to the group $\{\id,A\}$, where $A$ is the Clifford gate defined as
\begin{align}
    A:=e^{-i\pi/4}SX = \proj{H}-\proj{\omega} \, ,
\end{align}
an $\ket{\omega}:=Z\ket{H}$ is a state orthogonal to $\ket{H}$.
The twirled state thus becomes
\begin{align} \label{eq:twirled_state}
    \tau:=\frac{1}{2}(\rho + A\rho A^\dagger) = (1-\bar\epsilon)\proj{H} + \bar\epsilon \proj{\omega} \, ,
\end{align}
where the second equality follows from the identity $A\ketbra{H}{\omega}A^{\dagger} + \ketbra{H}{\omega}=0$. 
Therefore, we can prepare the twirled state $\tau$ by first preparing $\rho$ and then applying the logical $A$ gate with probability $1/2$. The twirled state $\tau$ can be considered as the ideal magic state $\ket{H}$ that suffers from a random Pauli $Z$ error applied with probability $\bar \epsilon$, since $\ket{\omega}= Z\ket{H}$.
Note that a Pauli $Z$ error on the magic state $\ket{H}$ has no effect on the measurement in~\Cref{fig:magic_state_injection}.
Using the twirled state $\tau$ instead of the ideal magic state $\ket{H}$ inside the T gate gadget therefore leads to the noisy T gate
\begin{align}\label{eq:noisyT}
    \mathcal{T}_{\bar \epsilon} = (1-\bar \epsilon) \mathcal{T} + \bar \epsilon \, \mathcal{Z} \circ \mathcal{T} \, ,
\end{align}
where $\mathcal{T}$ and $\mathcal{Z}$ denote the quantum channels implementing the ideal T and Z gate, respectively, i.e., $\mathcal{T}(X)=TXT^\dagger$ and $\mathcal{Z}(X)=ZXZ^\dagger$.
To use error mitigation to simulate a perfect T gate using $\mathcal{T}_{\bar \epsilon}$, we need a good estimate for the logical error rate $\bar \epsilon$. This could be done via standard tomography techniques, however we present a more robust method further below.

It is crucial to understand how the logical error rate $\bar \epsilon$ defined in~\Cref{eq:eps_bar} depends on the physical error rate $\epsilon$.
It has been shown that there exist noisy state preparation circuits for the surface code that achieve logical error rates that are asymptotically constant w.r.t. to the code distance~\cite{horodecki2015_simple}.
In fact, by choosing the right protocol, $\bar\epsilon$ can be in the same order as the physical error rate $\epsilon$
\begin{align} \label{eq:logical_er}
    \bar \epsilon = \kappa \epsilon + \mathcal{O}(\epsilon^2) \, ,
\end{align}
where $\kappa$ is a conversion constant that depends on the details of the noise model.
For example, in the case of perfect single-qubit gates and no initialization errors, one can achieve $\kappa=2/5$~\cite{li2015_magicstate}.
The second order term in~\Cref{eq:logical_er} has numerically been shown to be negligible for realistic error rates using the protocol from \reference~\cite{li2015_magicstate}.



We now discuss how to experimentally determine the logical error rate $\bar\epsilon$. Denote the phase flip channel by $\cN_{\bar \epsilon}:= (1-\bar \epsilon) \idchan + \bar \epsilon \mathcal{Z}$.
Since $\mathcal{T}$ and $\mathcal{Z}$ commute, we have for any $p \in \mathbb{N}$
\begin{equation} \label{eq_commute}
    \mathcal{T}_{\bar \epsilon}^p = (\cN_{\bar \epsilon} \circ \mathcal{T})^p = \cN_{\bar \epsilon}^p \circ \mathcal{T}^p\, .
\end{equation}
For $p\equiv 0 \,\, (\!\!\!\!\mod 8)$ we have $T^p=\mathds{1}$ and $\mathcal{T}^p=\idchan$.
Hence, \Cref{eq_commute} ensures that $\mathcal{T}_{\bar \epsilon}^p=\cN_{\bar \epsilon}^p$.
We can now learn $\bar \epsilon$ by preparing a logical state $\ket{+}$ fault-tolerantly, applying $\mathcal{T}^p_{\bar \epsilon}$ and measuring the output in the $\{\ket{+},\ket{-}\}$ basis, which gives us an outcome $1$ or $0$.
Hence we can estimate the expectation value of the outcome 
\begin{align} \label{eq_tomography}
    f(p)
    := \bra{+} \mathcal{T}^p_{\bar \epsilon}(\proj{+}) \ket{+}
    =\frac{1}{2}\Big(1+(1-2\bar\epsilon)^{p}\Big) \, ,
\end{align}
with an approximation error that decreases in the number of circuit shots~\footnote{The final step in~\Cref{eq_tomography} can be checked with a simple recursive argument.}.
By measuring $f(p)$ for $p=8k$ with $k \in \mathbb{N}$, we can obtain a good estimate for $\bar \epsilon$ using exponential fitting.

The following QPD optimally decomposes the ideal T gate into its noisy variant with and without a subsequent Z gate
\begin{align}
    \mathcal{T} = \left( \frac{1-\bar \epsilon}{1-2\bar \epsilon} \right) \mathcal{T}_{\bar \epsilon} - \left( \frac{\bar \epsilon}{1-2\bar \epsilon} \right) \mathcal{Z} \circ \mathcal{T}_{\bar \epsilon} \, ,
\end{align}
with a 1-norm of $\gamma_{\epsilon} = 1/(1-2\bar \epsilon)$.
Expanding this term around $\bar \epsilon = 0$ gives 
\begin{align} \label{eq_gamma_approx}
    \gamma_{\epsilon}= 1+2\bar \epsilon + \mathcal{O}(\bar \epsilon^2) = 1 + 2\kappa \epsilon + \mathcal{O}(\epsilon^2) \, .
\end{align}

\paragraph{Error-mitigated logical T gates via code switching}
The implementation of a T gate via magic states has the drawback of using two encoded qubits.
It also includes a logical CNOT gate which may require more logical qubits for certain implementations~\cite{horsman2012_surface}.
One may ask if a logical T gate can be implemented using only one logical qubit.
In this section we show how to accomplish this task by combining probabilistic error cancellation with the code switching method pioneered by Paetznick and Reichardt~\cite{paetznick2013_universal}.

Suppose $\mathcal{S}_1$ and $\mathcal{S}_2$ are CSS-type~\cite{calderbank1996_good,steane1996_multiple} quantum codes with one logical qubit.
Our goal is to implement a logical T gate on a qubit encoded by $\mathcal{S}_1$.
We assume that $\mathcal{S}_1$ has a large distance for both $X$ and $Z$ errors and enables a fault-tolerant noise-free implementation of the S gate.
We assume that $\mathcal{S}_2$ is an asymmetric code with a large distance for $X$ errors and a distance $1$ for $Z$ errors such that a logical-$Z$ operator of $\mathcal{S}_2$ can be chosen as a single-qubit Pauli $Z$ on some physical qubit.
We shall denote this local logical-$Z$ operator as $\overline{Z}_{\mathrm{loc}}$.
This allows us to perform a logical T gate on a qubit encoded by $\mathcal{S}_2$ simply by applying a physical T gate on the qubit acted upon by $\overline{Z}_{\mathrm{loc}}$.
Indeed, since the T gate is a linear combination of the identity and the Pauli $Z$, the physical and the logical T gates become equivalent.
The code switching method enables a conversion between the codes $\mathcal{S}_1$ and $\mathcal{S}_2$ by measuring stabilizers of $\mathcal{S}_2$ on a logical state encoded by $\mathcal{S}_1$ or vice versa.
In certain cases this conversion can be performed fault-tolerantly~\cite{paetznick2013_universal}.
This is achieved by identifying stabilizers present in both codes and using the measured syndromes of such stabilizers to diagnose and correct errors. 
The error suppression  typically scales exponentially with  the code distance.
In our case the code switching protects the encoded qubit exponentially well only from logical $X$ errors.
This ensures that the effective noise channel acting on the logical qubit is dominated by $Z$-type (possibly coherent) errors.
Let $\mathcal{T}_n=\mathcal{N}\circ\mathcal{T}$ be the noisy logical T gate.
We can tailor the noise $\mathcal{N}$ to become stochastic Pauli-Z noise using by appropriate twirling
\begin{align}
  \frac{1}{2}(\mathcal{N} + \mathcal{X}\circ\mathcal{N}\circ\mathcal{X}) = (1-\bar\epsilon)\idchan + \bar\epsilon \mathcal{Z} =:  \mathcal{N}_{\bar \epsilon} \, ,
\end{align}
for some logical error rate $\bar \epsilon$ where $\mathcal{X}(B)=XBX^\dagger$.
Using the identity $XT=TA$ we get
\begin{align}\label{eq:deformation_twirling}
    \frac{1}{2}(\mathcal{T}_n + \mathcal{X}\circ\mathcal{T}_n\circ\cA) =
    \mathcal{N}_{\bar \epsilon}\circ\mathcal{T} =\mathcal{T}_{\bar \epsilon} \, ,
\end{align}
where $\cA(B)=ABA^\dagger$ and $\mathcal{T}_{\bar \epsilon}$ is defined as in~\Cref{eq:noisyT}.
By assumption, the code $\mathcal{S}_1$ enables a fault-tolerant implementation of the Pauli $X$ and the $S$ gate on the encoded qubit.
Since $A=SX$ up to an overall phase, the twirling in~\Cref{eq:deformation_twirling} can be implemented fault-tolerantly on a qubit encoded by $\mathcal{S}_1$ by applying either $\mathcal{T}_n$ or $\mathcal{X}\circ\mathcal{T}_n\circ\cA$ with the probability $1/2$ each. 
Since at this point we have implemented a noisy T gate with Pauli-Z noise, we can follow the same procedure as for the first method in order to determine the parameter $\bar \epsilon$ efficiently and perform probabilistic error cancellation to mitigate the noise.

In~\Cref{app:codeswitch} we specialize the code switching method to Kitaev's surface code.
It is shown that the logical error rate scales as $\bar \epsilon \approx 30 \epsilon$ for realistic error models. 
The code switching procedure is depicted in~\Cref{fig:kitaev_codeswitch}.

\begin{figure}
  \centerline{
  \includegraphics[width=0.90\textwidth]{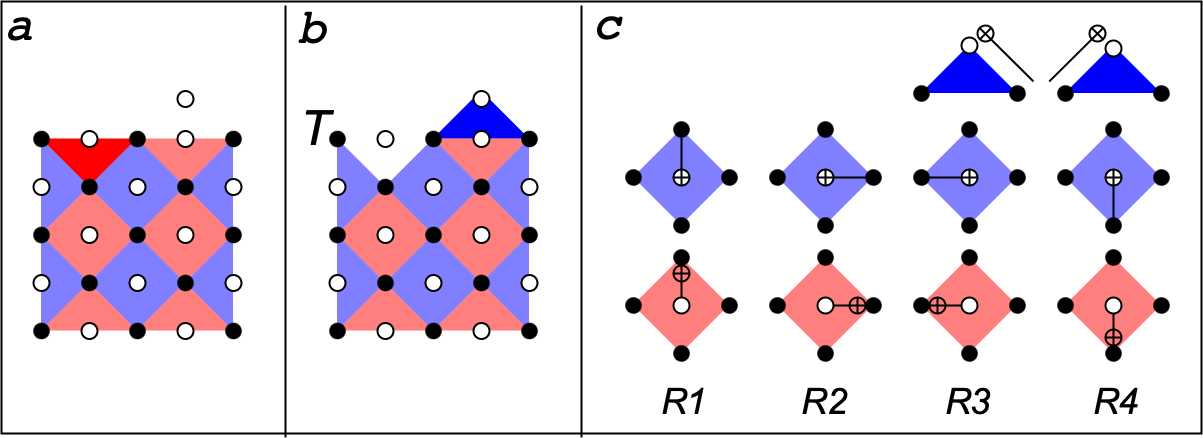}}
  \caption{Logical $T$-gate by code switching.
  (a) Surface code $\mathcal{S}_1$ with the distance $d=3$.
  Black and white circles indicate data and syndrome
  qubits. Red and blue faces indicate $X$ and $Z$ stabilizers.
  (b) Asymmetric surface code $\mathcal{S}_2$ exhibiting a single-qubit
  logical $T$-gate at the north-west corner of the lattice.
  Code switching $\mathcal{S}_1\to \mathcal{S}_2$ is performed by  
  turning off $X$ stabilizer $F_1$ (dark red triangle)
  and turning on $Z$ stabilizer $G_1$ (dark blue triangle).
  (c) The syndrome extraction cycle consists of four rounds of CNOTs R1, R2, R3, R4. To avoid clutter we only show a local schedule of CNOTs for each stabilizer.
  Schedules for weight-3 stabilizers are properly
  truncated. 
  The schedule extends to the full lattice
  in a translation invariant fashion.}
  \label{fig:kitaev_codeswitch}
\end{figure}

\section{Further reading}
Probabilistic error cancellation was originally introduced by Temme, Bravyi and Gambetta~\cite{temme2017_errormitigation} together with another important QEM technique, \emph{zero noise extrapolation}.
Their approach is formulated in terms of the compensation method, and they also formulated the linear program in~\Cref{eq:gamma_lp} expressing the optimal quasiprobability extent w.r.t. of a finite decomposition set.
Shortly thereafter, \reference~\cite{endo2018_practical} proposed several improvements to the technique, including the inverse method, an explicit decomposition set of single-qubit Clifford operations that spans the space of Hermitian-preserving superoperators as well as the usage of gate-set tomography to address the SPAM problem.

To minimize the sampling overhead of quantum error mitigation, it is of high interest to choose the best possible decomposition basis as to minimize the quasiprobability extent.
This is generally very difficult, unless one uses the randomized compiling approach from~\Cref{sec:randomized_compiling_pec}.
In \reference~\cite{piveteau2022_quasiprobability}, a heuristic method is proposed to iteratively construct the decomposition set in a greedy fashion.
Importantly, no assumptions on the hardware noise is required, since it is considered in each iteration through an instance of gate tomography.

A different approach in \reference~\cite{takagi2021_optimal} entirely avoids the choice of a finite decomposition set by modelling all gates to have identical noise $\cN$.
As such, the authors model their decomposition set as the union of all noisy gates $\{\cN\circ\indsupo{U} | U\in\uni(A) \}$ together with (noise-free) state preparation channels.
While this model is likely a strong oversimplification, it makes the computation of the quasiprobability extent more tractable and allowed the authors to find bounds or even closed-form formulas in some instances.

PEC based on randomized compilation was first proposed in \reference~\cite{vandenberg2023_pec}.
To characterize the twirled Pauli noise, the authors utilize a sparse Pauli-Lindblad model that can capture correlated noise across neighboring qubits on a superconducting quantum processor.
With an extensive amount of theoretical work, the authors show how this characterization can be done efficiently and reliably.
The methods are based on a broader line of research that shows that characterizing sparse Pauli noise can be done efficiently~\cite{flammia2020_efficient}.


The idea of integrating QEM with QEC was first formulated in \reference~\cite{suzuki2022_qem} and in \reference~\cite{piveteau2021_errormitigation,lostanglio2021_error}.
The former considers the possibility of using probabilistic error cancellation on the logical level to correct errors that either stem from erroneous decoding or from approximation errors in the Solovay-Kitaev decomposition.
The latter independently developed similar ideas, which broadly correspond to the discussion in~\Cref{sec:qemqec} on mitigating noisy non-transversal gates.
Many subsequent works were dedicated to further study the interplay between QEC and QEM~\cite{wahl2023_zne,gonzales2024_ftqem,dutkiewicz2024_error,tsubouchi2024_symmetric}.

Finally, we briefly mention a few other QEM techniques for the interested reader.
\emph{Zero-noise extrapolation} was introduced by \reference~\cite{temme2017_errormitigation,li2017_efficient} and tries to remove the bias from expectation values by extrapolating the circuit result while varying the noise strength.
A class of techniques called \emph{symmetry verification} uses post-selection to verify that the quantum state fulfills certain symmetries that should be conserved in the absence of noise~\cite{bonetmonroig2018_lowcost,mcardle2019_errormitigated}.
Another approach called \emph{subspace expansion} can mitigate noise in the task of eigenvalue estimation by classically solving the eigenvalue problem on a small subspace that is spanned by noisy states prepared by the hardware~\cite{mcclean2017_hybrid}.
We refer to the review in \reference~\cite{cai2023_qem} for an extensive overview of the state of quantum error mitigation.

\section{Contributions}
Most of this section serves as an overview of probabilistic error cancellation.
The method presented in~\Cref{sec:qemqec} is due to the author and was originally published in \reference~\cite{piveteau2021_errormitigation}.
We briefly note that the author also developed a heuristic algorithm to find near-optimal decomposition sets for probabilistic error cancellation using the compensation method, which was published in \reference~\cite{piveteau2022_quasiprobability}.
We chose not to include this work in this thesis.

\chapter{Conclusion}\label{chap:conclusion}
In this thesis, we have explored the application of QPS to four different settings, each corresponding to a different choice of decomposition set.
Looking back at these results, we revisit some of the important goals and questions that we originally stated in the introduction.

\paragraph{Utility of classical side information}
Since any linear combination of CPTP maps is itself proportional to an HPTP map, it is absolutely necessary to use classical side information for the simulation of any non-TP map.
However, we saw only very few practical examples where such non-TP maps actually occur, as even non-physical maps of interest are usually HPTP.
One exception is the imaginary time evolution discussed in~\Cref{sec:nonphysical_applications}.

For the simulation of CPTP maps, the utility of classical side information becomes a bit more murky and strongly depends on the underlying QRT.
For space-like and time-like cuts without classical communication, we saw that classical side information is absolutely crucial, as nontrivial channels cannot even be represented as a QPD without it (see~\Cref{lem:span_lo,lem:span_mpc_ebc}).
However, for many other decomposition sets we showed that side information is useless as the quasiprobability extent of a CPTP map $\cE$ remains unchanged without it, i.e., $\gamma_{\cptp}(\cE)=\gamma_{\cptp^{\star}}(\cE)$ (\Cref{prop:utility_cptp_interm_mmts}), $\gamma_{\sepc}(\cE)=\gamma_{\sepc^{\star}}(\cE)$, $\gamma_{\pptc}(\cE)=\gamma_{\pptc^{\star}}(\cE)$ (\Cref{prop:negtrick_sepc_pptc}) and $\gamma_{\ebc}(\cE)=\gamma_{\ebc^{\star}}(\cE)$ (\Cref{prop:negtrick_ebc}).
For space-like cuts with classical communication and for Clifford operations, it remains unknown whether $\gamma_{\locc^{\star}}(\cE)$ and $\gamma_{\stabops^{\star}}(\cE)$ can be strictly smaller than $\gamma_{\locc}(\cE)$ and $\gamma_{\stabops}(\cE)$.
We highlight that currently known constructions for QPDs that achieve $\gamma_{\locc^{\star}}(\cU)$ for non-Clifford unitary channels $\cU$ heavily utilize classical side information, hinting at a possible separation $\gamma_{\locc}(\cU)\neq \gamma_{\locc^{\star}}(\cU)$.

Remarkably, we do not know of any example where side information can reduce the quasiprobability extent, but is not absolutely required to find a valid QPD in the first place.

\paragraph{Submultiplicative behavior of quasiprobability extent}
For the resource theories of entanglement and magic, we saw that the associated quasiprobability extents exhibit a strictly sub-multiplicative behavior for many states and channels of interest.
Remarkably, in the QRT of entanglement we were able to evaluate $\gammareg_{\sep}$ and $\gammasreg_{\sep}$ for all pure states and show that they strictly differ in almost all instances.
A similar feat seems out of reach for the QRT of magic, as even for the simplest magic state $\ket{H}$ there is no known closed-form expression of the regularized quasiprobability extent.

In stark contrast, the quasiprobability extent $\gamma_{\cptp^{\star}}$ is multiplicative and hence $\gamma_{\cptp^{\star}}=\gammareg_{\cptp^{\star}}$.
This result also translates to some instances of probabilistic error cancellation, e.g., when using randomized compilation for noise tailoring as discussed in~\Cref{sec:randomized_compiling_pec}.

\paragraph{State of circuit knitting theory}
With the results presented in this thesis, we arguably have a mostly complete understanding of the practically relevant aspects of QPS-based circuit knitting.
Concretely, we know the optimal overheads for cutting bipartite states, two-qubit unitaries and wire cuts, as well as explicit protocols that achieve these overheads.
There is little room left to improve the sampling overhead in practical scenarios, unless a different circuit knitting paradigm emerges that operates in a different manner.

\paragraph{Practical utility of QPS}
Both circuit knitting and error mitigation have a clear practical appeal as they provide a potential avenue to overcome some of the stringent limitations of early quantum devices.
Correspondingly, they have received a lot of attention in the last few years and have been experimentally demonstrated multiple times.
However, we stress that it remains unclear if they can be used to demonstrate a useful quantum advantage on near-term devices, since the exponential sampling overhead is very steep and kicks in rather quickly.
Especially for circuit knitting, it remains a crucial open problem to identify suitable quantum computations which are ``nearly local'' with only a small amount of nonlocality across the partitions.
The clustered Hamiltonian simulation scheme by Harrow and Lowe~\cite{harrow2024_optimal}, briefly discussed in~\Cref{sec:gate_cut_app}, likely presents the most promising candidate in this regard.
At least for QEM, there exists some experimental evidence that it enables current devices to start solving non-trivial problems at useful scales~\cite{kim2023_evidence}, though a convincing quantum advantage has not been demonstrated to date.

Due to the steep exponential sampling overhead, it seems plausible that error mitigation and circuit knitting could find practical utility in combination with more traditional approaches that scale better asymptotically.
For example, we explored in \Cref{sec:qemqec} the possibility to combine error correction with QPS to simulate ideal logical T gates with noisy logical T gate.
This avoids costly magic state distillation and comes with a sampling overhead $\mathcal{O}(\gamma_{\epsilon}^{2t})$ exponential in the number of $t$ gates where the quasiprobability extent $\gamma_{\epsilon}$ converges to $1$ as the physical error rate vanishes.
To further reduce the sampling overhead, one could perform a small number of rounds of magic state distillation (insufficient for the desired logical accuracy) and then use QEM to mitigate the remaining noise of the intermediate magic state\footnote{This combination of QEM with conventional magic state distillation only works with the noisy magic state approach, but not with the code switching method.}.

Similar approaches could also be envisaged for circuit knitting: Instead of simulating non-local computation purely with local operations, one could instead simulate it with local operations and noisy nonlocal resources.
As the fidelity of this nonlocal resource improves, the resulting sampling overhead would decrease.
Some recent work has explored ideas in this direction~\cite{bechtold2024_jointwirecutting}.

The idea of using QPS to simulate nonphysical operations has only been sparsely explored to date.
In \Cref{sec:nonphysical_applications}, we briefly discussed some basic applications like noise inversion and entanglement detection, but further research is necessary.
We highlight the simulation of imaginary time evolution as a particularly interesting idea in this direction.

Finally, we briefly re-iterate that QPS for magic simulation only has limited practical utility, as there exist significantly more efficient classical algorithms to simulate near-Clifford circuits.
One advantage of QPS-based classical simulators is that they naturally allow for the inclusion of non-unitary channels, such as noise.
This can be particularly useful for the simulation of quantum error correction protocols under weak non-Pauli noise.

\paragraph{Utility of classical communication for circuit knitting}
Classical communication seems to have only limited practical utility for space-like cuts.
First, we saw that for states, $\gamma_{\locc^{\star}}$ and $\gamma_{\lo^{\star}}$ both reduced to $\gamma_{\sep}$ (\Cref{lem:gammas_overview_nonlocal_state}).
Furthermore, we saw that for two-qubit gates, $\gamma_{\locc^{\star}}$ and $\gamma_{\lo^{\star}}$ were also identical (\Cref{thm:gamma_kaklike_unitary}).
This equality even remains when considering tensor products of multiple two-qubit unitaries and also in the black box setting (\Cref{thm:gamma_bb_twoqubit}).
It remains an open question if a gap between the two arises for general unitaries.

Quite surprisingly, this picture is completely different for time-like cuts.
Cutting a single wire is already cheaper with classical communication as $\gamma_{\ebc^{\star}}(\idchan)=3$ (\Cref{thm:gamma_ebc_id}) and $\gamma_{\mpc^{\star}}(\idchan)=4$ (\Cref{thm:gamma_mpc_id}).
Asymptotically, this gap widens even more as $\gammareg_{\ebc^{\star}}(\idchan)=2$, but $\gammareg_{\mpc^{\star}}(\idchan)=4$.

\paragraph{Similar techniques across different QRTs}
We briefly review various broad techniques that have found applications in multiple different instances of QPS.

Generally, expressing quasiprobability extents as convex programs proved to be very fruitful, both for numerical evaluation and also to analytically prove certain properties.
In some cases, the desired quasiprobability extent can directly be expressed as a linear or semidefinite program (e.g. $\gamma_{\cptp^{\star}}$ for nonphysical simulation, $\gamma_{\stab}$ for magic states and also in the setting of probabilistic cancellation).
Whenever this is not possible, we instead reverted to finding outer approximations of the decomposition set, for which the inclusion does exhibit a semidefinite expression (e.g. $\ppt$ for $\sep$, $\pptc$ for $\locc$ and $\csp$ for $\stabops$).

Generally speaking, evaluating the quasiprobability extent tends to be easier for states than for channels.
Interestingly, we observed that the completely resource non-generating channels of our QRTs were characterized as having free Choi states, e.g., SEPC channels have separable Choi states (\Cref{sec:spacelike_ds}) and CSP channels have stabilizer Choi states (\Cref{prop:csp_characterization}).
Hence, the relaxation to this larger set (e.g. $\sepc$ instead of $\locc$ or $\csp$ instead of $\stabops$) simplifies the problem of finding the quasiprobability extent of a channel as it is now almost the same task as finding the extent of a state.

Another useful insight is that for any channel $\cE$ with a diagonal Pauli transfer matrix, the quasiprobability extent w.r.t. to the ``largest'' decomposition set $\gamma_{\cptp^{\star}}$ can also be achieved by restricting the decomposition set to the Pauli channels (recall \Cref{prop:gamma_cptp_paulisupo}).
Since the Pauli channels are included in virtually every decomposition set we considered (they are local Clifford operations and typically also available in probabilistic error cancellation), the associated quasiprobability extent of $\cE$ is hence also directly known.

\section{Open questions}
We summarize various open questions that we have left unanswered and that could be explored in future work.

\paragraph{General}
\begin{itemize}[noitemsep]
  \item
  We explored QPS applied to the QRTs of entanglement, magic and the ``maximal'' QRT.
  What other resource theories can QPS be applied to in a practically meaningful manner?
  A natural candidate would for example be the resource theory of non-Gaussianity for continuous-variable quantum systems~\cite{weedbrook2012_gaussian}.
  This QRT mirrors the resource theory of magic, as an analog of the Gottesman-Knill theorem exists for Gaussian information processing~\cite{bartlett2002_efficient}.
  It would be particularly interesting to investigate the implications of switching from a finite-dimensional to an infinite-dimensional Hilbert space.

  \item
  It would be interesting to further investigate the properties of the smooth quasiprobability extent.
  Can any meaningful statements be made for a finite decomposition set (respectively a polytope)?
  How do the precise details of the smoothing matter?
\end{itemize}

\paragraph{Nonphysical simulation}
\begin{itemize}[noitemsep]
  \item What interesting algorithmic applications (such as the imaginary time evolution in~\Cref{sec:nonphysical_applications}) can be achieved using QPS of nonphysical operations?
  \item How well can QPS emulate a universal quantum cloner? Does it provide an advantage over approximate CPTP universal quantum cloners?
  \item How does the overhead of entanglement detection with QPS compare to other approaches?
\end{itemize}

\paragraph{Nonlocal simulation}
\begin{itemize}[noitemsep]
  \item What are suitable computational tasks for which circuit knitting can be used?

  \item Does there exist a bipartite CPTP map $\cE\in\cptp(A;B\rightarrow A';B')$ for which $\gamma_{\locc^{\star}}$ is strictly smaller than $\gamma_{\locc}$?

  \item 
  Is there a simple closed-form expression for the quasiprobability extent of arbitrary unitary channels $\cU\in\cptp(AB)$?
  Does there exist a bipartite unitary channel $\cU$ for which classical communication reduces the overhead of space-like cutting, i.e., $\gamma_{\locc^{\star}}(\cU)\lneqq \gamma_{\lo^{\star}}(\cU)$?
  This question is difficult to address with current techniques, as essentially all of our proposed lower bounds are valid for both $\gamma_{\lo^{\star}}$ and $\gamma_{\locc^{\star}}$, making them useless to demonstrate a separation between the two.

  \item Is there an adapted de Finetti theorem which can prove the convergence speed of the SDP hierarchy $\gamma_{\sepc_k}$ introduced in~\Cref{sec:sdp_hierarchy_channels}?

  \item
  Our SDP hierarchies for $\gamma_{\sep}$ and $\gamma_{\sepc}$ discussed in \Cref{sec:gamma_sep_singleshot,sec:sdp_hierarchy_channels} are based on the original version of the Doherty-Parrilo-Spedalieri (DPS) hierarchy introduced in \reference~\cite{doherty2004_complete}.
  There exist improved versions of the DPS hierarchy that exhibit better convergence behavior, see for example \reference~\cite{harrow2017_improved}.
  It would be interesting to adapt our hierarchies for the quasiprobability extent with these improvements.

  \item Can a separation between the regularized and asymptotic quasiprobability extent be proven for arbitrary two-qubit gates, in analogy to the result for states in~\Cref{cor:gamma_reg_pure,cor:gamma_sep_asymptotic}?

  \item Can the exact PPT entanglement cost of a channel $E_{C,\pptsc}^{\mathrm{exact}}(\cE)$ be efficiently evaluated analogously to the state counterpart?

  \item Can the finite-shot smoothed quasiprobability extent of a channel w.r.t. PPTC be characterized by the smoothed max-logarithmic negativity, to lift~\Cref{lem:oneshot_pptsc_cost} to a statement analogous to~\Cref{lem:oneshot_pptc_cost}?
  This would allow us to relate the asymptotic extent $\gammasreg_{\pptc}(\cE)$ with the PPTSC entanglement cost $E_{C,\pptsc}(\cE)$.

  \item
  Can we find bounds on the quasiprobability extent of the quantum Fourier transform that improve upon the proof of \Cref{prop:gamma_qft}?

  \item Space-like cuts can be straightforwardly generalized to the $n$-partite settings for $n>2$.
  Can the quasiprobability extent of pure states and unitary channels be evaluated in this more general setting?

\end{itemize}

\paragraph{Magic simulation}
\begin{itemize}[noitemsep]
  \item Does approximate quasiprobability simulation with a vanishing error have an asymptotically lower overhead?
  More concretely, is $\gammasreg_{\stab}(\proj{H})$ strictly smaller than $\gammareg_{\stab}(\proj{H})$?
  \item Is there any utility in including classical side information in the QRT of magic, i.e., can $\gamma_{\stabops^{\star}}$ be strictly smaller than $\gamma_{\stabops}$ for certain channels?
\end{itemize}

\paragraph{Noise-free simulation}
\begin{itemize}[noitemsep]
  \item What is the most resource efficient way to combine QEC with QEM?
  \item Can classical side information be meaningfully used in some instance of probabilistic error cancellation?
\end{itemize}

\cleardoublepage
\phantomsection
\addcontentsline{toc}{chapter}{Bibliography}
\printbibliography

\appendix
\chapter{Hoeffding's inequality}\label{app:hoeffding}
\begin{theorem}{Hoeffding's inequality}{hoeffding}
  Let $X_1,\dots X_n$ be i.i.d random variables with $|X_i|\leq K$.
  Define $S_n\coloneqq \sum_{i=1}^n X_i$.
  Then for any $t>0$
  \begin{equation}
    \mathbb{P}\left[\abs{S_n-\mathbb{E}[S_n]}\geq t\right] \leq 2 \exp\left(-\frac{t^2}{2nK^2}\right) \, .
  \end{equation}
\end{theorem}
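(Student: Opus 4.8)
The plan is to follow the classical Chernoff-bound argument combined with Hoeffding's lemma. First I would reduce to the centered, one-sided case. Set $Y_i \coloneqq X_i - \mathbb{E}[X_i]$, so that the $Y_i$ are i.i.d.\ with $\mathbb{E}[Y_i] = 0$, and since $X_i \in [-K, K]$ each $Y_i$ takes values in an interval of length $2K$. It then suffices to bound $\mathbb{P}[\sum_i Y_i \geq t]$ by $\exp(-t^2/(2nK^2))$; applying the same bound to the variables $-X_i$ and taking a union bound over the two tails yields the factor $2$ and the absolute-value statement.

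For the one-sided bound I would invoke the exponential Markov (Chernoff) inequality: for any $s > 0$,
\begin{equation}
  \mathbb{P}\Big[\textstyle\sum_i Y_i \geq t\Big] \leq e^{-st}\,\mathbb{E}\Big[e^{s\sum_i Y_i}\Big] = e^{-st}\prod_{i=1}^n \mathbb{E}\big[e^{sY_i}\big],
\end{equation}
where the last equality uses independence. The key input is Hoeffding's lemma: if $Y$ has $\mathbb{E}[Y]=0$ and takes values in an interval of length $L$, then $\mathbb{E}[e^{sY}] \leq \exp(s^2 L^2/8)$ for all $s\in\mathbb{R}$. Applying it with $L = 2K$ gives $\mathbb{E}[e^{sY_i}] \leq \exp(s^2 K^2/2)$, hence $\mathbb{P}[\sum_i Y_i \geq t] \leq \exp(-st + n s^2 K^2/2)$. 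Minimizing the exponent over $s > 0$ — the optimum is $s = t/(nK^2)$ — yields exactly $\exp(-t^2/(2nK^2))$.

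The main obstacle is establishing Hoeffding's lemma itself, which is where the boundedness constraint is consumed. I would prove it via the cumulant generating function $\psi(s) \coloneqq \log \mathbb{E}[e^{sY}]$: one checks $\psi(0)=0$ and $\psi'(0) = \mathbb{E}[Y] = 0$, and that $\psi''(s)$ equals the variance of $Y$ under the exponentially tilted law with density proportional to $e^{sy}$ with respect to the original distribution. Since that tilted law is still supported in an interval of length $L$, Popoviciu's inequality bounds this variance by $L^2/4$, so $\psi''(s)\le L^2/4$ for all $s$. A second-order Taylor expansion with Lagrange remainder then gives $\psi(s) \leq s^2 L^2/8$. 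The remaining ingredients — the centering reduction, the Chernoff step, the optimization over $s$, and the union bound over the two tails — are routine and I would not dwell on them.
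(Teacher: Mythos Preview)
The paper states Hoeffding's inequality in Appendix~A without proof, treating it as a standard result to be quoted; there is no argument in the paper to compare against. Your proposal is the classical textbook proof via the Chernoff bound and Hoeffding's lemma, and it is correct as written.
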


It will be informative for us to look at a simple example in which Hoeffding's bound is almost achieved, up to some constant factors.
\begin{lemma}{}{hoeffding_tightness}
  Consider i.i.d. random variables $X_1,\dots,X_n$ that take the values $\pm K$ with probability $1/2$ each and let $S_n\coloneqq \sum_{i=1}^n X_i$.
  Then for any $t\in [0,nK/4]$
  \begin{equation}
    \mathbb{P}\left[\abs{S_n} \geq t\right] \geq \frac{2}{15}\exp\left(-\frac{4t^2}{nK^2}\right) \, .
  \end{equation}
\end{lemma}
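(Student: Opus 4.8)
The plan is to reduce to a symmetric Bernoulli walk and then split into a ``central'' regime, handled by a second--moment argument, and a ``moderate deviation'' regime, handled by summing a window of central binomial coefficients.

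\textbf{Reductions.} Write $S_n=\sum_{i=1}^n X_i$. Rescaling $X_i\mapsto X_i/K$ affects both sides of the claimed bound in the same way, so assume $K=1$. If $t=0$ the right--hand side is $\tfrac2{15}<1=\mathbb P[|S_n|\ge 0]$; so assume $t>0$. Then $\{S_n\ge t\}$ and $\{S_n\le -t\}$ are disjoint, and since $S_n$ and $-S_n$ have the same law they are equiprobable, whence $\mathbb P[|S_n|\ge t]=2\,\mathbb P[S_n\ge t]$, and it suffices to prove $\mathbb P[S_n\ge t]\ge\tfrac1{15}e^{-4t^2/n}$. Also $S_n=2B_n-n$ with $B_n\sim\mathrm{Bin}(n,\tfrac12)$, so $\mathbb P[S_n=2k-n]=\binom nk 2^{-n}$.

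\textbf{Central regime $t^2\le c\,n$ with $c$ a fixed constant $<1$ (say $c=\tfrac78$).} Here $\mathbb E[S_n^2]=n$ and $\mathbb E[S_n^4]=3n(n-1)+n\le 3n^2$. Applying Paley--Zygmund to the nonnegative variable $S_n^2$ at threshold $\theta\,\mathbb E[S_n^2]$ with $\theta:=t^2/n\in[0,1)$ gives
\[
  \mathbb P[|S_n|\ge t]=\mathbb P[S_n^2\ge t^2]\ \ge\ (1-\theta)^2\,\frac{(\mathbb E[S_n^2])^2}{\mathbb E[S_n^4]}\ \ge\ \frac{(1-\theta)^2}{3}\,.
\]
It then remains to check the one--variable inequality $\tfrac13(1-\theta)^2\ge\tfrac2{15}e^{-4\theta}$, equivalently $\tfrac52(1-\theta)^2e^{4\theta}\ge1$, on $[0,c]$: the left side equals $\tfrac52$ at $\theta=0$, is unimodal on $[0,1)$ with maximum at $\theta=\tfrac12$, and a single numerical evaluation at $\theta=c$ closes the case (this forces $c\lesssim 0.89$).

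\textbf{Moderate regime $c\,n<t^2\le n^2/16$} (so $n\ge 15$ and $t>0.93\sqrt n$). Put $k_0:=\lceil(n+t)/2\rceil$ and window width $W:=\lceil n/(4t)\rceil$; then $\{B_n\in\{k_0,\dots,k_0+W\}\}\subseteq\{S_n\ge t\}$, with $k_0+W\le n$ once $n$ is not too small. Since $\binom nk$ is non--increasing for $k\ge n/2$ and $k_0\ge n/2$, every term in the window is at least $\binom{n}{k_0+W}2^{-n}$, so
\[
  \mathbb P[S_n\ge t]\ \ge\ (W+1)\binom{n}{k_0+W}2^{-n}\ \ge\ \frac n{4t}\,\binom{n}{k_0+W}2^{-n}\,.
\]
Stirling's formula with explicit remainder gives $\binom nm 2^{-n}\ge\tfrac{c_1}{\sqrt n}e^{-nD(m/n\,\|\,1/2)}$ for a universal $c_1$ (e.g.\ $c_1=0.7$), where $D(p\|\tfrac12)=p\ln(2p)+(1-p)\ln(2(1-p))$; a short convexity argument shows $D(\tfrac12+\delta\,\|\,\tfrac12)\le4\delta^2$ for $\delta\in[0,\tfrac14]$. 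With $\delta:=\tfrac{k_0+W}n-\tfrac12\le\tfrac t{2n}+\tfrac1{4t}+\tfrac2n$, which lies in $[0,\tfrac14]$ for $n\ge N_0$ (a fixed integer of order a few hundred), this yields $nD\le4n\delta^2\le\tfrac{t^2}n+C$ with an explicit constant $C$ (one keeps $C\le 3.5$ by enlarging $N_0$). Hence
\[
  \mathbb P[S_n\ge t]\ \ge\ \frac{c_1\sqrt n}{4t}\,e^{-C}\,e^{-t^2/n}\,,
\]
and $\mathbb P[|S_n|\ge t]=2\,\mathbb P[S_n\ge t]$ is at least $\tfrac2{15}e^{-4t^2/n}$ once $\tfrac1v e^{3v^2}\ge c_2$ with $v:=t/\sqrt n$ and $c_2:=\tfrac{4e^{C}}{15c_1}$; this holds at $v=\sqrt c$ and, since $6v-1/v>0$ there, for all larger $v$. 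The crucial point is the $\sqrt n/t$ prefactor — obtained by summing $\Theta(n/t)$ coefficients rather than one — which for $t\in(\sqrt n,n/4]$ is bounded below by a constant. Finally, the two regimes together cover all $t\in[0,nK/4]$ for $n\ge N_0$, and the finitely many $n<N_0$ (for which the relevant range of $t$ is a compact interval and the same binomial estimate applies, with slightly worse constants) are checked directly.

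\textbf{Main obstacle.} Away from $t\sim\sqrt n$ there is enormous slack (the true tail decays like $e^{-t^2/(2n)}$ while the target only asks for $e^{-4t^2/n}$), so the real work is at the crossover, where the proof margins are only a few percent: one must push the Paley--Zygmund regime to the largest $c<1$ for which the one--variable check survives, and in the binomial regime one must (i) retain the $\sqrt n/t$ prefactor, hence choose $W\asymp n/t$, (ii) use the explicit Stirling constant $c_1$, and (iii) use the bound $D(\tfrac12+\delta\|\tfrac12)\le4\delta^2$ on $[0,\tfrac14]$ rather than a coarser estimate — all of which have to combine so that $c_2=4e^{C}/(15c_1)$ stays below $e^{3c}/\sqrt c$. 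Getting these constants (and the threshold $N_0$) to line up honestly is the delicate part; everything else is routine.
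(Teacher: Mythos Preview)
Your approach is sound but far more elaborate than the paper's. The paper simply writes $X_i = 2KY_i - K$ with $Y_i$ i.i.d.\ Bernoulli$(1/2)$, cites a standard binomial lower tail bound $\mathbb{P}[\sum_i Y_i \geq n/2 + t'] \geq \tfrac{1}{15}e^{-16t'^2/n}$ for $t' \in [0, n/8]$ from the literature, sets $t'=t/(2K)$, and uses symmetry --- five lines total. You have instead essentially reproved that cited bound from scratch, splitting into a Paley--Zygmund regime near the mean and a Stirling window-sum in the tail. This buys self-containedness at the cost of heavy bookkeeping, and as you yourself acknowledge, the crossover margins are thin (with your numbers, roughly $14.8$ versus $12.6$ at $v=\sqrt{7/8}$). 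The residual finite check for $n < N_0$ is asserted rather than performed, which is a genuine if minor gap given that the lemma's entire content lies in the specific constants. One refinement that helps: for $n \leq 14$ the moderate regime is vacuous (since $n^2/16 \leq 7n/8$), so Paley--Zygmund alone already covers those cases, and your own estimates give $N_0 \approx 50$ rather than ``a few hundred,'' shrinking the unchecked range to $n \in \{15, \dots, 49\}$.
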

\begin{proof}
  We can write $X_i = 2K\cdot Y_i - K$ where $Y_i$ are i.i.d. random variables taking values $0$ and $1$.
  Clearly, $\sum_i Y_i$ follows a Binomial distribution on which we can use standard tail bounds, such as~\cite[Proposition 7.3.2]{matouek2009_the}
  \begin{equation}
    \mathbb{P}[\sum_i Y_i\geq \frac{n}{2}+t'] \geq \frac{1}{15}e^{-16t'^2 / n} .
  \end{equation}
  for $t'\in [0,n/8]$.
  The desired statement follows directly from
  \begin{align}
    \mathbb{P}\left[\abs{\sum_i X_i} \geq t\right] 
    &=\mathbb{P}\left[\sum_i X_i \geq t\right] + \mathbb{P}\left[\sum_i X_i < -t\right] \\
    &=2\mathbb{P}\left[\sum_i X_i \geq t\right] \\
    &=2\mathbb{P}\left[\sum_i Y_i - \frac{n}{2} \geq \frac{t}{2K}\right]
  \end{align}
  where we used the symmetry of the distribution $\sum_i X_i$.
\end{proof}

\chapter{Circuits for QPDs of unitaries}\label{app:optimal_circuits}
In this appendix, we provide an explicit circuit implementation for the QPD of a gate with unitary operator Schmidt decomposition that was constructed in~\Cref{lem:gamma_unitary_upperbound}.
Let $U\in\uni(AB)$ be a bipartite unitary with operator Schmidt decomposition
\begin{equation}
  U = \sum_j u_j L_j \otimes R_j
\end{equation}
for some unitaries $L_j\in\uni(A)$ and $R_j\in\uni(B)$.
Adopting the notation in the proof of~\Cref{lem:gamma_unitary_upperbound}, the QPS of $U$ involves three types of operations.
\begin{itemize}
  \item $\mathcal{L}_j\otimes\mathcal{R}_j$ with quasiprobability coefficient $u_j^2$.
  \item $\mathcal{A}_{j,k}^0\otimes\mathcal{B}_{j,k}^{0}$ with quasiprobability coefficient $2u_ju_k$.
  \item $\mathcal{A}_{j,k}^{\pi/2}\otimes\mathcal{B}_{j,k}^{\pi/2}$ with quasiprobability coefficient $-2u_ju_k$.
\end{itemize}
The circuit implementation of the first one is trivial.
In~\Cref{fig:nonlocal_kaklike_decomp_circuit}, we depict a circuit implementation of $\mathcal{A}_{j,k}^{\phi}\otimes\mathcal{B}_{j,k}^{\phi}$ which covers both the second and third types by appropriately choosing $\phi$.
Most importantly, this circuit implementation only requires one additional ancilla on each bipartition.

\begin{figure}
  \centering
  \includegraphics{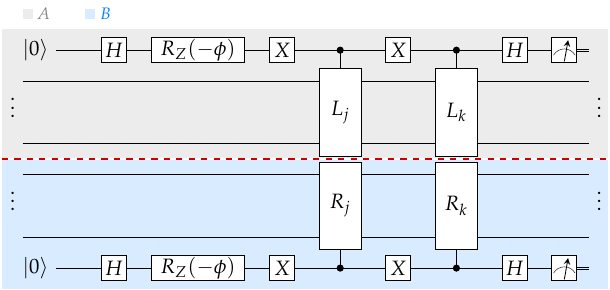}
  \caption{Circuit implementation for the simulation the operation $\mathcal{A}_{j,k}^{\phi}\otimes\mathcal{B}_{j,k}^{\phi}$ used in the QPD constructed in~\Cref{lem:gamma_unitary_upperbound}.}
  \label{fig:nonlocal_kaklike_decomp_circuit}
\end{figure}

The two ancilla qubits are each is prepared in the $\left(\ket{0} + e^{-i\phi}\ket{1}\right)/\sqrt{2}$ state.
Since the two-circuit halves do not interact, it suffices to show that the upper half acts correctly --- the argument for the lower half is analogous.
Consider an input state $\ket{\psi}$ on the bipartition $A$.
Controlled on the ancilla qubit, we either apply the gate $L_j$ or $L_k$, thus resulting into the state
\begin{equation}
  \frac{1}{\sqrt{2}}\left( \ket{0}\otimes L_j\ket{\psi} + e^{-i\phi}\ket{1}\otimes L_k\ket{\psi} \right) \, .
\end{equation}
Next, we apply a Hadamard gate on the ancilla qubit, yielding
\begin{equation}
  \frac{1}{2}\left( \ket{0}\otimes (L_j+e^{-i\phi}L_k)\ket{\psi} + \ket{1}\otimes (L_j-e^{-i\phi}L_k)\ket{\psi} \right) \, .
\end{equation}
If we next measure the ancilla qubit in the computation basis, the total process is thus described by the quantum instrument $(\mathcal{A}_{j,k}^{+,\phi},\mathcal{A}_{j,k}^{-,\phi})$.
In the QPS simulation procedure, we will thus need to weight the final measurement outcome by $+1$ or $-1$ depending on the outcome of the ancilla measurement.

\chapter{Code switching on Kitaev's surface code}\label{app:codeswitch}
In~\Cref{sec:qemqec} we described how to realize error-mitigated logical T-gates via code switching.
In this appendix, we present a detailed construction for Kitaev's surface code depicted in~\Cref{fig:kitaev_codeswitch}.
The content of this appendix is taken and adapted from the author's previous work in \reference~\cite{piveteau2021_errormitigation}.

A distance-$d$ surface code is defined on a square lattice of linear size $2d-1$.
Code qubits live at sites indicated by black circles in~\Cref{fig:kitaev_codeswitch}.
Let $\mathcal{S}_1$ be the stabilizer group of the surface code generated by $X$ and $Z$ stabilizers located on red ($X$) and blue ($Z$) faces of the lattice.
We choose logical Pauli operators $\overline{X}$ and $\overline{Z}$ as products of single-qubit $X$ and $Z$ along  the left ($X$) and the top ($Z$) boundary.

Next, let us define an asymmetric version of the surface code with a small distance for $Z$ errors.
Its stabilizer group $\mathcal{S}_2$ is obtained from $\mathcal{S}_1$ by introducing  some new $Z$ stabilizers and removing some $X$  stabilizers. This modification only affects stabilizers located near the top boundary, as shown on Figure~\ref{fig:kitaev_codeswitch}.
To describe this formally, let $d=2t+1$ be the code distance. 
Label code qubits located at the top boundary by integers $1,2,\ldots,d$ in the order from the left to the right such that $\overline{Z}=Z_1Z_2\cdots Z_d$.
Define Pauli operators 
\begin{align*}
 G_i := Z_{2i} Z_{2i+1}.
\end{align*}
The group $\mathcal{S}_2$ is obtained from $\mathcal{S}_1$ by adding $t$ new stabilizers $G_1,\ldots,G_t$ and removing every second $X$ stabilizer adjacent to the top boundary.
An $X$ stabilizer of $\mathcal{S}_1$ is removed if it anti-commutes with some operator $G_i$.
Let $q_{\mathrm{loc}}:= 1$ be the data qubit located at the north-west corner of the lattice.
The code $\mathcal{S}_2$ exhibits a local logical-$Z$ operator acting only on the qubit $q_{\mathrm{loc}}$, namely
\begin{equation}
\label{deform0}
\overline{Z}_{\mathrm{loc}} =
\overline{Z}\prod_{i=1}^t  G_i = Z_{q_{\mathrm{loc}}}.
\end{equation}
Note that $\overline{Z}_{\mathrm{loc}}$ and $\overline{Z}$ differ by stabilizers of $\mathcal{S}_2$ and thus have the same action on any logical state of $\mathcal{S}_2$.
Thus, the logical $T$-gate on a qubit encoded by $\mathcal{S}_2$ can be implemented by either of the following operators:
\begin{equation}
\label{deform1}
\overline{T} = e^{-i \frac{\pi}{8} \overline{Z}}
\quad \mbox{and}
\quad \overline{T}_{\mathrm{loc}} = e^{-i \frac{\pi}{8} \overline{Z}_{\mathrm{loc}}} \, .
\end{equation}
Here we ignore the overall phase.
Note that $\overline{T}$ also implements a logical $T$-gate for the code $\mathcal{S}_1$ since both codes have the same logical Pauli operators.
Let $\mathcal{G}=\langle G_1,\ldots,G_t\rangle$ be the group generated by $G_1,\ldots,G_t$.
We claim that a logical $T$-gate on a qubit encoded by the surface code $\mathcal{S}_1$ can be implemented as follows. 
\begin{algorithm}[H]
	\caption{Logical $T$-gate by code switching}
	\label{algo_1}
	\begin{algorithmic}[1]
    \State{Initialize code qubits in a logical state of the surface code $\mathcal{S}_1$}
    \State{Measure the syndrome of the asymmetric surface code $\mathcal{S}_2$}
    \State{Let $\sigma_i \in \{-1,1 \}$ be the measured syndrome of $G_i$}
    \State{Compute $\sigma=\prod_{i=1}^t \sigma_i$}
    \State{Apply a single-qubit operator $(\overline{T}_{\mathrm{loc}})^\sigma$
    to the post-measurement state}
    \State{Measure the syndrome of the surface code $\mathcal{S}_1$}
    \State{Compute a Pauli operator $R\in \mathcal{G}$ consistent with the measured syndrome of $\mathcal{S}_1$}
    \State{Apply $R$ to the post-measurement state}
    \end{algorithmic}
\end{algorithm}
Indeed, let $\ket{\psi_j}$ be a state obtained after executing the first $j$ steps of this algorithm. 
We need to show that $\ket{\psi_8} = \overline{T} \ket{\psi_1}$.
Let $\Pi$ be a projector describing the measurement at step~2 such that $\ket{\psi_2}=\Pi \ket{\psi_1}$, ignoring the normalization.
From $G_i\ket{\psi_2}=\sigma_i\ket{\psi_2}$ and~\eqref{deform0},\eqref{deform1} one infers $(\overline{T}_{\mathrm{loc}})^\sigma \ket{\psi_2}=
\overline{T} \ket{\psi_2}$.
Since $\overline{Z}$ is a logical operator for both codes $\mathcal{S}_i$, the logical gate $\overline{T}$ commutes with the syndrome measurement of $\mathcal{S}_2$.
Thus ,
\begin{align*}
\ket{\psi_5}=
(\overline{T}_{\mathrm{loc}})^\sigma \ket{\psi_2} =
\overline{T} \ket{\psi_2}=\Pi \overline{T} \ket{\psi_1} \, .
\end{align*}
Let $\Lambda$ be a projector describing the measurement at step~6.
In the absence of errors the measured syndrome is non-trivial only for stabilizers of $\mathcal{S}_1$ that anti-commute with some element of $\mathcal{S}_2$.
One can check that there are exactly $t$ such stabilizers which we denote $F_1,\ldots,F_t$.
The stabilizer $F_i$ acts on data qubits $2i-1,2i$ located at the top boundary as well as the data qubit located at the second row between $2i-1$ and $2i$, see Figure~\ref{fig:kitaev_codeswitch}.
Let $\lambda_i\in \{1,-1\}$ be the syndrome of $F_i$ measured at step~6. The operator $R$ used at step~7 is defined as
\begin{align*}
R=\prod_{i\, : \, \lambda_i=-1} 
\; \prod_{a=i}^t G_a \, . 
\end{align*}
Using the commutation rules between the operators $F_i$ and $G_a$ one can easily check that $R$ has the syndrome $\lambda$, that is, $RF_i=\lambda_i F_i R$ for all $i=1,\ldots,t$.
Write $\Lambda = R \Gamma R$, where $\Gamma$ is a projector onto the logical subspace of the surface code $\mathcal{S}_1$.
The above shows that 
\begin{align*}
\ket{\psi_8} = R\Lambda \ket{\psi_5}=
(R\Lambda R)R \ket{\psi_5}=\Gamma R \ket{\psi_5}=
\Gamma R \Pi \overline{T} \ket{\psi_1}=
\Gamma (R\Pi) \Gamma \overline{T} \ket{\psi_1} \, .
\end{align*}
To obtain the last equality we noted that $\Gamma \ket{\psi_1}=\ket{\psi_1}$ and that $\overline{T}$ commutes with $\Gamma$.
The operator $R\Pi$ is a linear combination of Pauli operators from the group $\mathcal{S}_1\cdot \mathcal{G}$.
One can easily check that this group contains only stabilizers of the surface code and detectable errors.
Thus, $\Gamma (R\Pi) \Gamma$ is proportional to $\Gamma$.
The latter has trivial action on the logical state $\overline{T} \ket{\psi_1}$.
This proves that $\ket{\psi_8}=\overline{T} \ket{\psi_1}$, as claimed.

We make Algorithm~\ref{algo_1} partially fault-tolerant by repeating syndrome measurements at step~2 and step~6 sufficiently many times.
The observed syndromes are used to compute an error-corrected version of the phase $\sigma$ at step~4 and to perform the final error correction after step~8. 
Accordingly, the algorithm can fail in two distinct ways.
First, the final state could be $(\overline{T})^{-1} \ket{\psi_1}$ instead of $\overline{T} \ket{\psi_1}$ because some syndrome $\sigma_i$
has been flipped due to an error.
Such error results in a wrong phase $\sigma$ computed at step~4.
We shall refer to such events as Pauli frame errors since they can be viewed as applying the $T$-gate in a wrong Pauli frame.
Likewise, an undetected $X$ error that occurs before step~5 on the qubit $q_{\mathrm{loc}}$ would result in a Pauli frame error, as can be seen from the identity $\overline{T}_{\mathrm{loc}}X_1 =X_1(\overline{T}_{\mathrm{loc}})^{-1}$.
Secondly, the algorithm may fail if the final error correction performed after step~8 resulted in a logical error.
We would like to achieve an exponential suppression for logical $X$ errors and, at the same time, ensure that Pauli frame and logical $Z$ errors occur with probability at most $\kappa \varepsilon$, where $\varepsilon$ is the physical error rate and $\kappa$ is a constant prefactor.

To measure the syndrome of the surface code $\mathcal{S}_1$ we use a well-known quantum circuit proposed in \reference~\cite{dennis2002_topological,fowler2009_high}.
It requires one ancillary syndrome qubit per each stabilizer, see Figure~\ref{fig:kitaev_codeswitch}.
The circuit applies four rounds of CNOTs that compute the syndrome of each stabilizer into the respective
syndrome qubit.
A syndrome measurement cycle begins by resetting each syndrome qubit to $\ket{0}$ or $\ket{+}$ state for $Z$ and $X$ stabilizers respectively. 
The cycle ends by measuring each syndrome qubit in the $Z$- or $X$-basis.
A circuit measuring the syndrome of $\mathcal{S}_2$ requires only a few minor modifications, as described in the figure caption.

For numerical simulations we chose the depolarizing noise model~\cite{fowler2009_high}. 
It depends on a single error rate parameter $\varepsilon$ such that each operation in the circuit (a gate, a measurement, or a qubit reset) becomes faulty with the probability $\varepsilon$.
Faults on different operations are independent. 
The model can be described by stochastic Pauli errors enabling efficient simulation using the stabilizer formalism~\cite{aaronson2004_improved}.
A faulty CNOT gate is modelled as the ideal CNOT followed by a Pauli error drawn uniformly at random from the two-qubit Pauli group.
A faulty measurement is modelled as the ideal measurement whose outcome is flipped.
A faulty qubit reset prepares a basis state orthogonal to the ideal one. 
A faulty idle qubit suffers from a Pauli error X, Y, or Z.

\begin{figure}
\centerline{
\includegraphics[width=0.65\textwidth]{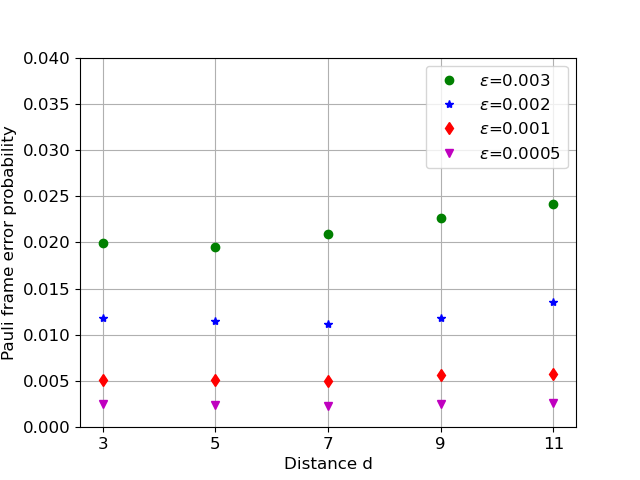}}
\caption{\textbf{Probability of a Pauli frame error} $P_F$ for the
code switching $\mathcal{S}_1\to \mathcal{S}_2$ 
with $L=3$ syndrome measurement cycles for the asymmetric
surface code $\mathcal{S}_2$. The initial state is chosen
as an ideal logical state of the surface code $\mathcal{S}_1$.
A Pauli frame error in the code switching
results in the implementation
of the logical gate $(\overline{T})^{-1}$ instead of $\overline{T}$. 
Our data suggest a scaling
$P_F\approx 6\varepsilon + O(d \varepsilon^2)$. 
\label{fig:plot1}
}
\end{figure}

\begin{figure}
\centerline{
\includegraphics[width=0.65\textwidth]{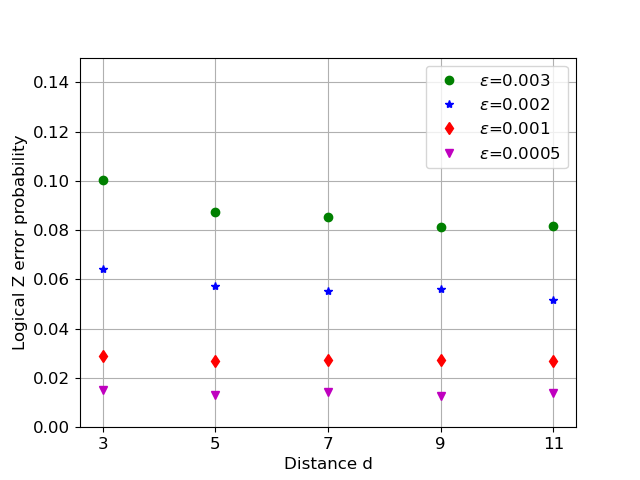}}
\caption{\textbf{Probability of a logical $Z$ error} 
$P_Z$
for the code switching $\mathcal{S}_1\to \mathcal{S}_2\to \mathcal{S}_1$
with $L=3$ syndrome  cycles for the asymmetric
surface code $\mathcal{S}_2$ and $d$ syndrome cycles for the
regular surface code $\mathcal{S}_1$.
To enable efficient simulation we skipped the
$T$-gate application in Algorithm~\ref{algo_1}. 
Our data suggests a scaling
$P_Z\approx 26 \varepsilon$ 
for a large code distance.
The large constant prefactor stems from leaving the
logical qubit unprotected from $Z$ errors for $L+1$
syndrome cycles when the full syndrome information for
$X$ stabilizers of the code $\mathcal{S}_1$ is unavailable.
\label{fig:plot2}
}
\end{figure}

Let us first discuss how to correct Pauli frame errors.
Suppose $U$ is a quantum circuit describing the first two steps of Algorithm~\ref{algo_1} with $L$ syndrome measurement cycles at step~2. 
The circuit outputs a quantum state $\ket{\psi_2}$ and a classical syndrome history $h$ specifying the syndrome of each stabilizer of $\mathcal{S}_2$ measured in each of $L$ syndrome cycles. 
In the absence of errors all stabilizers except for $G_1,\ldots,G_t$ have trivial syndromes.
Furthermore, the syndromes of $G_i$
measured in different cycles are the same. 
Let $\mathcal{F}(U)$ be the set of all possible faulty implementations of $U$,
as prescribed by the depolarizing noise model.
Each circuit $\tilde{U}\in \mathcal{F}(U)$ differs from $U$ by inserting Pauli errors
at some space-time locations (measurement and reset errors
can be modeled by inserting suitable Pauli errors before the ideal measurement
and after the ideal reset). 
We pick a faulty circuit $\tilde{U} \in \mathcal{F}(U)$ randomly, according to the depolarizing noise model.
A simulation of the circuit $\tilde{U}$ yields a noisy syndrome history  
$\tilde{h}$.
To perform error correction we used 
the maximum-weight matching (MWM) decoder proposed in \reference~\cite{bravyi2013_simulation}
which is closely related to the one used in \reference~\cite{dennis2002_topological,fowler2009_high}.
The MWM decoder
takes the syndrome history $\tilde{h}$ as an input
and outputs a candidate faulty circuit $\tilde{U}_{\mathrm{dec}}\in \mathcal{F}(U)$
consistent with  $\tilde{h}$.
By propagating all Pauli errors contained in 
$\tilde{U}_{\mathrm{dec}}$ towards the final time step, we 
compute error-corrected syndromes $\sigma_i$
and a Pauli correction $C$ acting on the data qubits
such that $CU\ket{\psi_1}$ is our guess of the noisy
state $\tilde{U}\ket{\psi_1}$, based on the available syndrome
information. Step~4 of Algorithm~\ref{algo_1} computes an error-corrected
phase $\sigma=(-1)^a \sigma_1 \cdots \sigma_t$,
where $a=1$ if $C$ contains an $X$ error
on the qubit $q_{\mathrm{loc}}$
and $a=0$ otherwise. To avoid the actual application
of the $T$-gate (which would require non-stabilizer
simulators) we chose the initial logical state
$\ket{\psi_1}$ as an eigenvector of $\overline{Z}$,
that is, $\ket{\psi_1}=\ket{\overline{b}}$, where
$b\in \{0,1\}$ is a random bit.
Error correction succeeds if $\tilde{U}\ket{\psi_1}$
is an eigenvector of $(\overline{T}_{\mathrm{loc}})^\sigma$
with the eigenvalue $(-1)^b$.
Otherwise, a Pauli frame error is declared.

Our numerical results for the probability of a Pauli frame
error $P_F$ are shown in Figure~\ref{fig:plot1}.
Each data point represents an empirical
estimate of $P_F$ obtained by the Monte Carlo method
(we used between $50,000$ and $200,000$
Monte Carlo trials per data point). 
We chose $L=3$ syndrome measurement cycles to ensure that 
any single fault in the measurement of $\sigma_i$
is correctable.
Recall that the MWM decoder detects errors by examining 
differences between syndromes measured in subsequent cycles.
Thus, $L=3$ provides two parity checks to diagnose errors
in the three measured values of $\sigma_i$,
in a way analogous to the $3$-bit repetition code.
In contrast, choosing $L=2$ would only allow us to detect
a faulty bit $\sigma_i$ but not to correct it
(if the two measured values of $\sigma_i$ disagree, there
is no way to select the correct value). 
Choosing $L>3$ is undesirable as this leaves the logical qubit
unprotected from $Z$ errors for a longer time.
Our data suggests a scaling
$P_F\approx 6\varepsilon + O(d \varepsilon^2)$.
We numerically observed that the dominant contribution
to $P_F$ stems from
$X$ errors occurring on the qubit $q_{\mathrm{loc}}$
in the last syndrome cycle as well as syndrome measurement errors on the $Z$ stabilizer acting
on the qubit $q_{\mathrm{loc}}$ in the last syndrome cycle.
The term $O(d\varepsilon^2)$ can be understood as a result
of a weight-two error affecting the measurement of some 
syndrome bit $\sigma_i$. Here $i=1,\ldots,t$
can be arbitrary. Such weight-two error cannot be
corrected using only $L=3$ syndrome cycles. 

Finally, we estimated the probability of logical $X$ and $Z$ errors  by simulating a simplified version of Algorithm~\ref{algo_1} without steps~3,4,5. In other words, we skipped the
application of the physical $T$-gate. This enables
efficient simulation using the stabilizer formalism.
The MWM decoder takes as input the combined syndrome
history measured at steps~2,6 and calculates a Pauli correction
to be applied after step~8. 
The standard implementation
for MWM decoder~\cite{bravyi2013_simulation}
was properly modified to
take into account that some stabilizers are turned off and on
during the code switching. 
Namely, syndromes of the operators
$G_i$ measured in the first cycle of step~2 as well as syndromes of the operators $F_i$ measured in the first cycle of
step~6 are not used to diagnose errors.  
As commonly done in the literature, we 
chose the number of syndrome cycles for the surface
code $\mathcal{S}_1$ equal to the code distance $d$
and added a noiseless syndrome cycle at the end of the protocol.

Our numerical results for the probability of 
logical $Z$ and $X$ errors $P_Z$ and $P_X$ are
shown on Figures~\ref{fig:plot2} and~\ref{fig:plot2X}, respectively.
Our data suggests a scaling
$P_{Z} \approx 26\varepsilon$ for a large
code distance. The large constant prefactor 
can be seen as the price we pay for
leaving the logical qubit unprotected from $Z$ errors during
$L$ syndrome cycles at step~2. In fact, since the syndromes
of $X$ stabilizers $F_i$ measured in the first cycle
of step~6 are random, these syndrome provide no information
about $Z$ errors that occurred in this cycle. 
Thus, the logical qubit is unprotected
from $Z$ errors for $L+1=4$ cycles. 
Since each cycle is a depth-$6$ circuit (four CNOT rounds,
reset round, and measurement round), the error correction
for $Z$ errors is effectively turned off for $24$ time steps.
This is in a good agreement with the scaling $P_{Z} \approx 26\varepsilon$.
Finally, we observed that the probability of logical $X$ errors is exponentially suppressed as one increases the code distance,
see Figure~\ref{fig:plot2X}.
Thus, the noise in the implemented logical $T$-gate is dominated
by (coherent) $Z$ errors, as expected. 

We anticipate the scaling of $P_Z$ can be improved by optimizing the code switching protocol.
For example, one can reduce $P_Z$ roughly by the factor $2/3$
by choosing the number of syndrome measurement cycles $L$ for the code $\mathcal{S}_2$ adaptively such that $L=2$ if the syndromes $\sigma_i$ measured in the first and the second cycles are the same for all $i=1,\ldots,t$ and $L=3$ otherwise. This would ensure that a single fault in the measurement of $\sigma_i$ can be corrected, similar to the $L=3$ implementation described above. However, the logical qubit remains unprotected from logical $Z$ errors for a shorter time (two cycles instead of three cycles, in the limit $\varepsilon\to 0$).

Under the assumption that we use a code with a sufficiently large distance, the logical $X$ errors are negligible. Hence, the logical error rate $\bar \epsilon$ is determined by the logical $Z$ and Pauli frame errors.
The latter are described by the channel
$(1-P_F){\idchan} + P_F \mathcal{S}^\dag$, where 
$\mathcal{S}^\dag(B)=S^\dag BS$ for any single qubit operator $B$.
Twirling over the $X$ gate gives
a $Z$-type noise channel $(1-P_F/2){\idchan} + (P_F/2){\mathcal{Z}} $. Assuming that Pauli frame and logical $Z$ errors are independent, the combined logical error rate is
therefore $\bar{\varepsilon}\approx P_Z+P_F/2\approx 30\varepsilon$
in the limit $\varepsilon\ll 1$.

\begin{figure}
\centerline{
\includegraphics[width=0.65\textwidth]{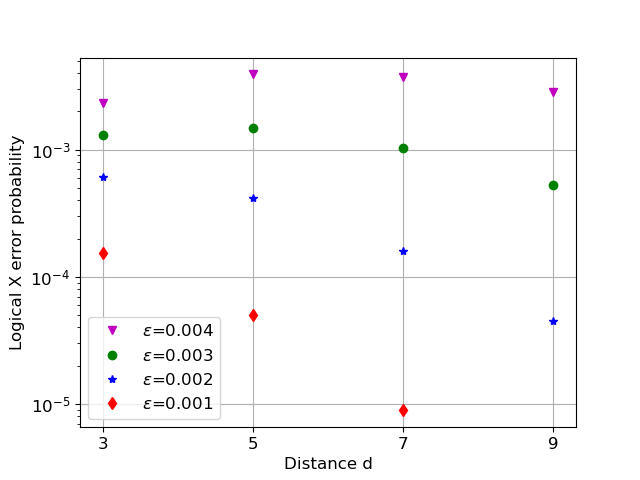}}
\caption{\textbf{Probability of a logical $X$ error} 
$P_X$
for the code switching $\mathcal{S}_1\to \mathcal{S}_2\to \mathcal{S}_1$
with $L=3$ syndrome  cycles for the asymmetric
surface code $\mathcal{S}_2$ and $d$ syndrome cycles for the
regular surface code $\mathcal{S}_1$.
To enable efficient simulation we skipped the
$T$-gate application in Algorithm~\ref{algo_1}. 
Our data suggests that logical $X$ errors
are exponentially suppressed as one increases
the code distance.
\label{fig:plot2X}
}
\end{figure}

\end{document}